\newif\ifmonochrome
\ifmonochrome\usepackage[monochrome]{color}
\else\usepackage{color}\fi
\renewcommand{\afterpartskip}{}
\newcommand{\dotsfparastyle}[1]{\normalfont\sffamily #1.}
\numberwithin{equation}{chapter}
\definecolor{hypercolor}{rgb}{0,0.2,0.7}
\def\tempb{article}%
      \OR\ifentrytype{incollection}\OR\ifentrytype{inproceedings}%
      \OR\ifentrytype{inreference}\OR\ifentrytype{thesis}%
      \OR\ifentrytype{masterthesis}\OR\ifentrytype{phdthesis}}
\newcommand{\hyp}{-\penalty0\hskip0pt\relax}
\newcommand{\allowhyphenation}{\nobreak\hskip\z@skip}
\newcommand{\optword}[1]{(#1\allowhyphenation\discretionary{-)}{}{)}\allowhyphenation}
\newcommand{\optwordhy}[1]{(#1\allowhyphenation\discretionary{-)}{}{-)}\allowhyphenation}
\newcommand{\latin}{\textit}
\newcommand{\cf}{\latin{cf.}}
\newcommand{\eg}{\latin{e.g.}}
\newcommand{\etc}{\latin{etc.}}
\newcommand{\ie}{\latin{i.e.}}
\newcommand{\viz}{\latin{viz.}}
\theoremstyle{plain}
\newtheorem{theorem}{Theorem}[chapter]
\newtheorem{proposition}[theorem]{Proposition}
\newtheorem{lemma}[theorem]{Lemma}
\newtheorem{corollary}[theorem]{Corollary}
\newtheorem{remark}[theorem]{Remark}
\newtheorem{definition}[theorem]{Definition}
\newtheorem{conjecture}[theorem]{Conjecture}
\theoremstyle{nonumberplain}
\newtheorem{proof}{Proof}
\crefname{chapter}{Chap.}{Chaps.}
\crefname{section}{Sect.}{Sects.}
\crefname{figure}{Fig.}{Figs.}
\crefname{theorem}{Thm.}{Thms.}
\crefname{proposition}{Prop.}{Props.}
\crefname{definition}{Def.}{Defs.}
\crefname{lemma}{Lem.}{Lems.}
\crefname{corollary}{Cor.}{Cors.}
\title{The Semiclassical Einstein Equation on Cosmological Spacetimes}
\author{Daniel Siemssen}
\begin{document}

\frontmatter

\begin{titlingpage}
  \renewcommand*{\thepage}{title-\arabic{page}}
  \newlength{\covermargin}
  \calccentering{\covermargin}

  \begin{adjustwidth}{\covermargin-8mm}{-\covermargin-8mm}
    \raggedright

    {\Huge \theauthor}\\

    \vspace{40mm}

    \begin{Spacing}{1.1}
      \Huge\bfseries\thetitle
    \end{Spacing}

    \vspace{15mm}

    {\LARGE
      Ph.D. Thesis\\[\onelineskip]
      Dipartimento di Matematica\\
      Università degli Studi di Genova
    }
  \end{adjustwidth}
\ifmonochrome\end{titlingpage}
\else\clearpage\fi

\thispagestyle{empty}

\null\vfill

\begin{flushleft}
  {\large\bfseries \thetitle}\\
  \smallskip
  Ph.D. thesis submitted by \href{mailto:siemssen@dima.unige.it}{\theauthor}\\
	Genova, February 2015

	\bigskip

	\begin{tikzpicture}
		\node[anchor=west,inner sep=0] at (0,0)
			{\ifmonochrome\includegraphics[width=1cm]{unige-grey.pdf}\else\includegraphics[width=1cm]{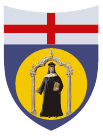}\fi};
		\node[anchor=west,inner sep=0] at (1.3,0) [align=left]{Dipartimento di Matematica\\ Università degli Studi di Genova};
	\end{tikzpicture}

  \bigskip

  Supervisor: Prof. Dr. Nicola Pinamonti\\
  Examiner: Prof. Dr. Valter Moretti
\end{flushleft}
\ifmonochrome\else\end{titlingpage}\fi

\cleardoublepage

\thispagestyle{empty}

\pdfbookmark{Abstract}{sec:abstract}
\begin{abstract}
  The subject of this thesis is the coupling of quantum fields to a classical gravitational background in a semiclassical fashion.
  It contains a thorough introduction into quantum field theory on curved spacetime with a focus on the stress-energy tensor and the semiclassical Einstein equation.
  Basic notions of differential geometry, topology, functional and microlocal analysis, causality and general relativity will be summarised, and the algebraic approach to quantum field theory on curved spacetime will be reviewed.
  The latter part contains an introduction to the framework of locally covariant quantum field theory and relevant quantum states: Hadamard states and, on cosmological spacetimes, adiabatic states.
  Apart from these foundations, the original research of the author and his collaborators will be presented:

  Together with Fewster, the author studied the up and down structure of circular and linear permutations using their decomposition into so-called atomic permutations.
  The relevance of these results to this thesis is their application in the calculation of the moments of quadratic quantum fields in the quest to determine their probability distribution.

  In a work with Pinamonti, the author showed the local and global existence of solutions to the semiclassical Einstein equation in flat cosmological spacetimes coupled to a massive conformally coupled scalar field by solving simultaneously for the quantum state and the Hubble function in an integral-functional equation.
  The theorem is proved with the Banach fixed-point theorem using the continuous functional differentiability and boundedness of the integral kernel of the integral-functional equation.

  Since the semiclassical Einstein equation neglects the quantum nature of the stress-energy tensor by ignoring its fluctuations, the author proposed in another work with Pinamonti an extension of the semiclassical Einstein equations which couples the moments of a stochastic Einstein tensor to the moments of the quantum stress-energy tensor.
  In a toy model of a Newtonianly perturbed exponentially expanding spacetime it is shown that the quantum fluctuations of the stress-energy tensor induce an almost-scale-invariant power spectrum for the perturbation potential and that non-Gaussianties arise naturally.
\end{abstract}

\cleardoublepage

\pdfbookmark{Contents}{sec:contents}
\microtypesetup{protrusion=false}
\tableofcontents*
\microtypesetup{protrusion=true}


\mainmatter


\chapter*{Introduction}
\addcontentsline{toc}{chapter}{Introduction}
\markboth{Introduction}{Introduction}
\label{cha:introduction}

The subject of this thesis is the interplay between quantum matter and gravity, \ie, the coupling of quantum fields to a classical gravitational background in a \emph{semiclassical} fashion.
Semiclassical gravity describes physics midway between the classical regime covered by the Einstein equation and a full-fledged quantum gravity.
However, while no theory of quantum gravity is universally accepted, quantum field theory on curved spacetimes offers an approach to the `low'-energy and `small'-curvature regime based on the firm foundation of quantum field theory and Lorentzian geometry.
Despite of the employed approximation, it has made several successful and relevant predictions like the Fulling--Davies--Unruh effect \cite{fulling:1973,davies:1975,unruh:1976}, the Hawking effect \cite{hawking:1975,fredenhagen:1990}, cosmological particle creation \cite{parker:1968} and the generation of curvature fluctuations during inflation \cite{mukhanov:1981,guth:1982,hawking:1982,starobinsky:1982}.
It is generally expected that a successful theory of quantum gravity also describes these phenomena and, in fact, they are used as criteria to select candidate theories.

Although quantum gravity is one motivation for studying quantum field theory on curved spacetime, it is not the only reason.
While quantum field theory is typically formulated on a Minkowski background, the Universe appears well-described by a curved spacetime and Minkowski spacetime provides only a local approximation.
However, even the slightest gravitational interaction causes many of the basic assumptions of `standard' quantum field theory on Minkowski spacetime to fail and in important situations, like inflation, the departure from a flat background is not small but causes important effects that cannot be neglected.
From this point of view it would be conceptually very unsatisfying if it was not possible to successfully formulate quantum field theory on a curved spacetime in such a way that it reduces to standard QFT in the case of a flat background.

Attempts to formalize quantum field theory in a mathematically exact manner have led to many significant insights into the structure of quantum fields: the CPT theorem, the spin-statistics connection, and superselection sectors to name a few, see \eg\ \cite{streater:1964,haag:1996}.
By studying aspects of semiclassical gravity and quantum field theory on curved spacetimes in the rigorous framework of algebraic quantum field theory, one hopes to gain deep and novel insights into the subtle nature of quantum fields on curved spacetimes and at the same time often prove theorems that have also a purely mathematical value.
Moreover, as a consequence of the correspondence principle, it is highly plausible that a careful investigation of the semiclassical theory gives us further hints about the structure of an eventual theory of quantum gravity.
In particular, one can expect that observations in cosmology are already described to high precision within semiclassical Einstein gravity and that tight limits can be placed on the creation of extreme objects such as wormholes in generic spacetimes.

In the formulation of quantum field theory on Minkowski spacetime one usually starts the with the unique Poincaré-invariant vacuum state as the ground state in a Fock space motivated by the particle interpretation.
On a generic spacetime, due to the absence of any symmetries, no such distinguished state can exist and, as illustrated in the Unruh and the Hawking effect, no unique particle interpretation is available.
This suggests that the starting point for a quantum field theory on curved spacetimes should be a formulation that does not require a preferred state.
For this reason, rigorous quantum field theory on curved spacetimes is often discussed within the \emph{algebraic approach} to quantum field theory \cite{haag:1996,haag:1964,haag:1984}.
In the algebraic approach one begins by considering an abstract algebra of quantum fields, which respects conditions of locality (quantum fields only depend on the local structure of the spacetime) and causality (causally separated quantum fields (anti)commute).

A modern formulation of QFT on curved spacetimes is the framework of \emph{locally covariant quantum field theory} \cite{brunetti:2003}.
It helped the development of several important contributions to QFT on curved spacetimes like renormalization and perturbative algebraic quantum field theory \cite{brunetti:2009a,fredenhagen:2012,fredenhagen:2013}, superselection sectors on curved spacetimes \cite{brunetti:2007a}, abstract and concrete results on gauge theories \cite{fewster:2013a,benini:2014,benini:2014a,benini:2014b,dappiaggi:2012,sanders:2014} and many other results.
In this framework one considers quantum field theories as covariant functors from a category of background structures to a category of physical systems.
In most simple examples the background structure is given by a category of globally hyperbolic spacetimes with so-called hyperbolic embeddings as morphisms, but the background structure can be replaced by anything reasonable that allows for a categorical formulation, see \cite{pinamonti:2009,fewster:2014} for examples of alternative choices.
A suitable category representing physical systems for algebraic quantum field theory is a category of ${}^*$-algebras so that a quantum field theory maps a spacetime to an algebra of observables in that spacetime.

However, while states, \viz, positive linear functionals on a ${}^*$-algebra, are not necessary for the formulation of the theory, they are indispensable if one wants to make quantitative predictions.
Given a state it is possible to return to a Hilbert space picture as a representation of the ${}^*$-algebra of observables via the Gel'fand--Naimark--Segal theorem.
Not all possible states on the algebra of quantum fields are of equal physical importance.
Physically and mathematically preferred states are the so-called Hadamard states, which have an ultraviolet behaviour analogous to that of the Minkowski vacuum.
Hadamard states are for example required for a reasonable semiclassical Einstein equation; otherwise the fluctuations of the quantum stress-energy tensor are not even distributions and the semiclassical Einstein equation becomes physically meaningless, because we equate a quantity with a probabilistic interpretation and `diverging' fluctuations with a classical non-fluctuating quantity.
Major advances in quantum field theory on curved spacetime were achieved after it was realized in~\cite{radzikowski:1996} that all Hadamard states satisfy a constraint on the wavefront set of the $n$-point functions of the state.
This constraint was called microlocal spectrum condition in allusion to the condition from Wightman quantum field theory on Minkowski spacetime.
In particular, this discovery led to the formulation of a rigorous theory of renormalization and a concept of normal ordering on curved spacetimes \cite{brunetti:1996,hollands:2001,hollands:2002,brunetti:2000}.

The developments of quantum field theory on curved spacetimes were often driven by problems related to the \emph{semiclassical Einstein equation}.
In the semiclassical Einstein equation contains instead of a classical stress-energy tensor the expectation value of a quantum stress-energy tensor~$\norder{T_{ab}}$ in a certain state~$\omega$:
\begin{equation*}
  G_{ab} + \Lambda g_{ab} = \frac{8 \uppi \mathrm{G}}{\mathrm{c}^4} \omega(\norder{T_{ab}}).
\end{equation*}
The quantum stress-energy tensor may be obtained by replacing the products of classical fields in the classical stress-energy tensor by normally ordered products of quantum fields.
This requires the notion of normal ordering on curved spacetimes mentioned above.
The resulting quantum stress-energy is not uniquely fixed but, due to the non-uniqueness of the normal ordering prescription, subject to a renormalization freedom, which is a polynomial of local geometric quantities, whose coefficients are called renormalization constants.

Quantum field theory on curved spacetimes is best understood in a few special cases of highly symmetric spacetimes.
In particular quantum fields on Friedmann--Lemaître--Robertson--Walker spacetimes are well-studied as they are important in quantum cosmology.
Nevertheless, already in this simplified case many interesting effects occur, for example the creation of particles in an expanding spacetime \cite{parker:1968}.
Due to the developments discussed above, in recent years computations of quantum field theoretic effects in cosmological spacetimes and their backreaction to the spacetime via the semiclassical Einstein equation have come into the reach of the algebraic approach to QFT on curved spacetimes.

A first step towards doing cosmology in algebraic QFT on curved spacetimes is often the construction of appropriate states.
Noteworthy recent works are the holographic (or bulk-to-boundary) construction \cite{dappiaggi:2006,dappiaggi:2009a,dappiaggi:2009b,dappiaggi:2011a,moretti:2006a,moretti:2008,dappiaggi:2011c,dappiaggi:2011d,dappiaggi:2013} and the states of low energy \cite{olbermann:2007,them:2013} (see also \cite{degner:2010,degner:2013}) which are a Bogoliubov transformation of adiabatic states \cite{luders:1990,junker:2002,parker:1969}.
Given a state, one can study the semiclassical Einstein equation to study the backreaction effects of quantum matter fields; this has been done, for example, in \cite{hack:2010,hack:2013,dappiaggi:2008,dappiaggi:2010}.
Going one step further, one can attempt to solve this semiclassical Einstein equation, \ie, finding a spacetime and a state on that spacetime so that the equation holds.
This problem was analyzed for cosmological spacetimes in \cite{pinamonti:2011,eltzner:2011,pinamonti:2015}.
Other works studied linearized gravity \cite{fewster:2012d,hack:2014}, inflation~\cite{eltzner:2013} (see also \cite{pinamonti:2015a} for a non-standard approach) and other cosmological models \cite{zschoche:2014} in the algebraic framework.
Furthermore, several researches have studied thermal aspects of quantum fields on curved spacetime, \cite{schlemmer:2010,eltzner:2013a,buchholz:2007,schlemmer:2008} to name a few, which are arguably of importance to quantum cosmology.

In this thesis several aspects of the works cited above will be summarized and, when necessary, developed further.
To give this work a clearer structure, it is divided into three parts.

The first part is mostly intended to lay the foundations of the remaining two parts.
In \cref{cha:differential_geometry} a rapid summary of subjects from differential geometry relevant to QFT on curved spacetimes is presented but it also contains a few sections and remarks on subjects which are usually not covered in standard text books on differential geometry, \eg\ bitensors.
\Cref{cha:lorentzian_geometry} focuses on the particular case of Lorentzian geometry including notions of causality, the classical Einstein equation and cosmology.
Analysis, in the broadest sense, will be the subject of \cref{cha:analysis} and in that chapter various results on topology, ${}^*$-algebras, functional derivatives and their relation to the Banach fixed-point theorem, microlocal analysis and wave equations will be summarised.
In favour of not jumping back and forth between different subjects in these three sections I chose a rather unpedagogical order and the reader should be aware that there are many interrelations between the various sections.
This should, however, not be a too large an obstacle for the reader.
\Cref{cha:combinatorics} concerns the enumerative combinatorics of permutations and appears somewhat unrelated to most of this thesis.
However, combinatorics is very important in many applications of quantum field theory and the results presented in this chapter are important in the moment problem for quadratic quantum fields \cite{fewster:2012b}.
The contents of this last chapter represent work by the author in collaboration with Fewster and were published in~\cite{fewster:2014a}.

In the second part of this thesis several aspects of quantum field theory on curved spacetimes will be discussed.
It begins with an introduction to the categorical framework of locally covariant quantum field theory with an emphasize on the field algebra of Klein--Gordon-like quantum fields in \cref{cha:locally_covariant}.
In \cref{cha:states} we discuss quantum states and in particular the construction of adiabatic states of cosmological spacetimes and the holographic construction of Hadamard states on asymptotically flat spacetimes.

The third and last part of this thesis represents the largest portion of novel research done during the authors Ph.D. studies; here the semiclassical Einstein equation will be analyzed in detail.
The basic notions including the stress-energy tensor for the scalar field and its renormalization will be introduced in \cref{cha:einstein}.
In \cref{cha:einstein_solutions} the proof of the author and Pinamonti \cite{pinamonti:2015} on the local and global existence of solutions to the semiclassical Einstein equation will be presented.
Finally, in the last chapter of this thesis (\cref{cha:fluctuations}), the fluctuations of the stress-energy tensor will be analyzed and how their backreaction to the metric may be accounted for; this work in collaboration with Pinamonti was already presented in~\cite{pinamonti:2015a}.

This thesis will be concluded with some final remarks on the research presented above in the \hyperref[cha:conclusions]{conclusions}.
Following this the cited references may be found in the \hyperref[cha:bibliography]{bibliography}, starting with the works (co)authored by the present author.

\renewcommand{\afterpartskip}{
\vspace*{8\onelineskip}
\begin{quotation}\begin{otherlanguage}{german}
  \textit{Zudem ist es ein Irrtum zu glauben, daß die Strenge in der Beweisführung die Feindin der Einfachheit wäre. An zahlreichen Beispielen finden wir im Gegenteil bestätigt, daß die strenge Methode auch zugleich die einfachere und leichter faßliche ist. Das Streben nach Strenge zwingt uns eben zur Auffindung einfacherer Schlußweisen; auch bahnt es uns häufig den Weg zu Methoden, die entwickelungsfähiger sind als die alten Methoden von geringerer Strenge.}
  \sourceatright{--- David Hilbert, ``Mathematische Probleme'' (1900), p. 257.}
\end{otherlanguage}\end{quotation}
}
\part{Foundations}

\chapter{Differential geometry}
\label{cha:differential_geometry}

\IdxRanBegin{diff-geo}


\section*{Summary}

This chapter is \emph{mostly} a summary of some common definitions and standard results on differential geometry and most of its content can be safely skipped by a reader well acquainted with the topic.
Proofs are omitted everywhere except in the last section and may be found in any text book on differential geometry.
Nevertheless, the author has attempted to present the material in such a way that many statements should become self-evident, although, as always, care should be taken.

In the first section (\cref{sec:differentiable_manifolds}) the basic theory of differentiable manifolds and vector bundles is summarized.
Here the notions of coordinates, maps between manifolds, vector bundles and sections, the (co)tangent bundle, (co)vectors and (co)vector fields, curves, tensor and exterior tensor product bundles, bundle metrics, frames, differential operators, and the index notation are explained.
The second section (\cref{sec:connections}) is concerned with the definition of connections on vector bundles and the objects that follow from this.
That is, it discusses the notions of curvature, geodesics, and the slightly unrelated concept of Killing vector fields.
Differential forms and integration are introduced in the third section (\cref{sec:differential_forms}).
In particular we will introduce the de Rham cohomology, the Hodge star and the dual of the exterior derivative, the codifferential, which leads to the definition of the Laplace--de Rham operator, and close with a short discussion of integral manifolds.

In the presentation of these three sections the author follows partially that of~\cite{lee:2013} and also~\cite{abraham:1988}, but these standard definitions may be found in many places in the literature.

The fourth and last section (\cref{sec:bitensors}) treats a more obscure topic: bitensors.
Bitensors are already introduced in an abstract manner in \cref{sub:dg_tensors}; a simple, yet important, example are biscalars: functions on the product $M \times M$ of a manifold~$M$.
In this last section concrete and important cases of bitensors such as Synge's world function and the van Vleck--Morette determinant are discussed.
It will also form the foundation for the discussion of the Hadamard coefficients in \cref{sub:hadamard_parametrix}.
An excellent resource on bitensors is the review article \cite{poisson:2011}, which contains most of the first part of this section.
The second part of this section is concerned with computational methods that help the calculation of coincidence limits of bitensors.
Here we will discuss the semi-recursive Avramidi method developed in~\cite{ottewill:2011}.
We close this section with a recursive method to calculate the coefficients of an asymptotic expansion of Synge's world function in coordinate separation.
To the authors knowledge, this simple and efficient method has never been fully developed but traces of it may be found in~\cite{ottewill:2009}.

\section{Differentiable manifolds and vector bundles}
\label{sec:differentiable_manifolds}

A \IdxMain{top-man}\emph{topological manifold of dimension~$n$} is a second-countable Hausdorff space~$M$ that is locally homeomorphic to~$\RR^n$ (\ie, each point of~$M$ has a neighbourhood that is homeomorphic to an open subset of~$\RR^n$).
We often omit the dimension of the manifold and simply say: $M$ is a \emph{topological manifold}.

Since a topological manifold $M$ is locally homeomorphic to~$\RR^n$, we can assign coordinates to points of~$M$ in each open neighbourhood $U \subset M$:
A \IdxMain{chart}\emph{(coordinate) chart} of~$M$ is a pair~$(U, \varphi)$ which gives exactly such a homeomorphism
\begin{equation*}
  \varphi : U \to \varphi(U) \subset \RR^n;
\end{equation*}
we call~$U$ its \IdxMain{coord-neighbourhood}\emph{coordinate neighbourhood}.
The component functions ${(x^1, \dotsc, x^n)} = \varphi$ are called \emph{(local) coordinates} on~$U$.

An \IdxMain{atlas}\emph{atlas}~$A$ of~$M$ is a family of charts~$(U_i, \varphi_i)_{i \in \NN}$ which cover~$M$.
If any two overlapping charts~$(U, \varphi), (V, \psi)$ in an atlas are \IdxMain{smoothly-compat}\emph{smoothly compatible}, \viz, the \IdxMain{trans-map}\emph{transition map}
\begin{equation*}
  \psi \circ \varphi^{-1} : \varphi(U \cap V) \to \psi(U \cap V)
\end{equation*}
is a smooth, bijective map with a smooth inverse (\cref{fig:smooth_atlas}), we say that the atlas is \IdxMain{smooth-atlas}\emph{smooth}.
We further say that a smooth atlas~$A$ on~$M$ is \IdxMain{max-atlas}\emph{maximal} if it is not properly contained in any larger smooth atlas so that any chart which is smoothly compatible with the charts of~$A$ is already contained in~$A$.

\begin{figure}
  \centering
  \includegraphics{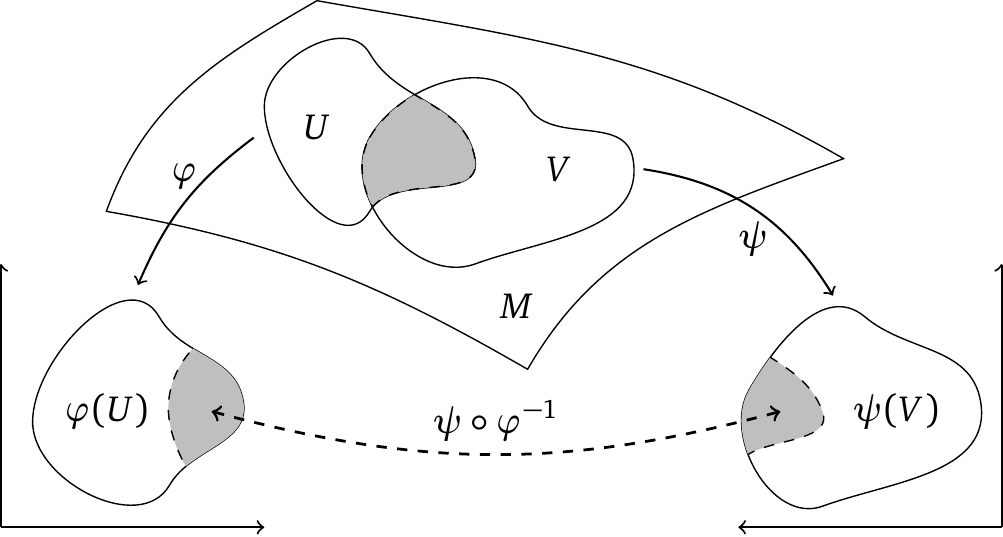}
  \caption{Two overlapping charts and their transition map.}
  \label{fig:smooth_atlas}
\end{figure}

Finally, a \IdxMain{smooth-man}\emph{smooth manifold} is a pair~$(M, A)$, where~$M$ is a topological manifold and~$A$ a maximal smooth atlas.
A maximal smooth atlas might not exist and, if it exists, it is not necessarily unique as shown, \eg, by the existence of exotic $\RR^4$.
Nevertheless, usually a canonical smooth atlas is understood from context.
Then we omit the explicit mention of the maximal smooth atlas~$A$ and say: $M$ is a smooth manifold.
One can replace the requirement of the transition maps in an atlas to be smooth by requiring that the transition maps are~$C^k$, \optwordhy{real}analytic or complex-analytic (if $\dim M = 2n$, we have $\RR^{2n} \simeq \CC^n$) thus arriving at the notions of~$C^k$, \optwordhy{real}analytic and complex-analytic manifolds.

\subsection{Smooth maps}
\label{sub:dg_maps}

A real-valued function $f : M \to \RR$ on a smooth manifold~$M$ is a \IdxMain{fun-on-man}\emph{$C^k$, smooth or analytic function} if there exists a chart~$(U, \varphi)$ containing~$x$ at every~$x \in M$ such that the composition ${f \circ \varphi^{-1}}$ is~$C^k$, smooth or analytic on the image~$\varphi(U)$; the spaces of these function are denoted $C^k(M)$, $C^\infty(M)$ and~$C^\omega(M)$ respectively.

More generally, a map $f : M \to N$ between two smooth manifolds~$M$ and~$N$ is~$C^k$, smooth or analytic, if there exist charts~$(U, \varphi)$ at~$x \in M$ and~$(V, \psi)$ at~$f(x) \in N$ such that the composition
\begin{equation*}
  \psi \circ f \circ \varphi^{-1} : \varphi(U) \to \psi(V)
\end{equation*}
is~$C^k$, smooth or analytic (\cref{fig:manifold_map}).
If $M$ and~$N$ have equal dimension and~$f$ is a homeomorphism such that $f$ and its inverse~$f^{-1}$ are smooth, we call $f : M \to N$ a \IdxMain{diffeomorph}\emph{diffeomorphism}.
Whenever there exists such a diffeomorphism between, they are \emph{diffeomorphic}; in symbols $M \simeq N$.

\begin{figure}
  \centering
  \includegraphics{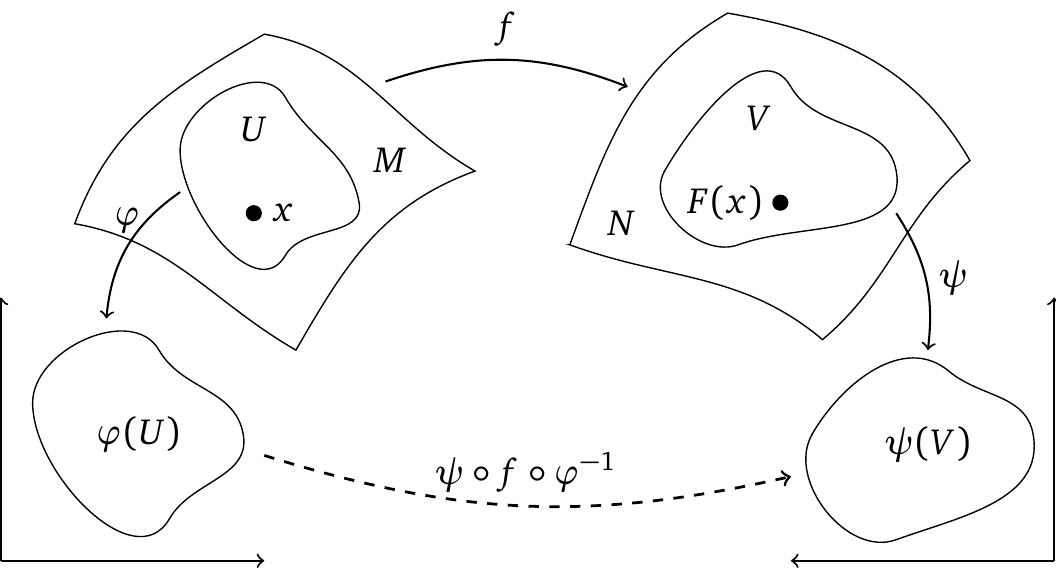}
  \caption{A map between two manifolds.}
  \label{fig:manifold_map}
\end{figure}

\subsection{Vector bundles}
\label{sub:dg_bundles}

\begin{figure}[b]
  \centering
  \includegraphics[scale=1.02]{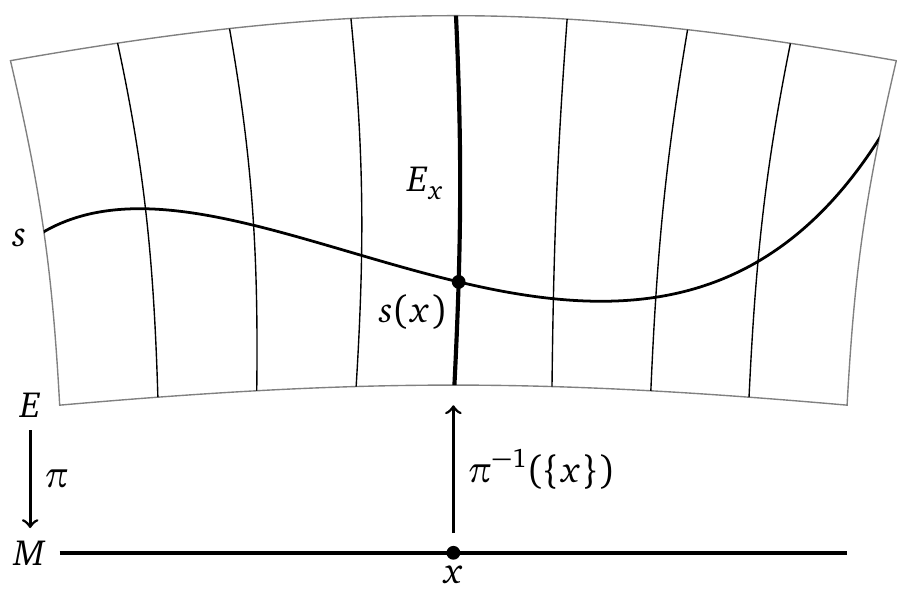}
  \caption{A vector bundle and a section.}
  \label{fig:vector_bundle}
\end{figure}

A (smooth) \IdxMain{vec-bund}\emph{$\KK$-vector\footnote{$\KK$ will always be either $\RR$ or $\CC$. In particular, $\KK$ is a field of characteristic $0$.} bundle} of dimension~$n$
\begin{equation*}
  \pi : E \to M
\end{equation*}
consists of two smooth manifolds~$E$, the \IdxMain{total-space}\emph{total space}, and $M$, the \IdxMain{base-space}\emph{base (space)}, and a smooth surjection~$\pi$, the \IdxMain{bund-proj}\emph{bundle projection}, that associates to every~$x \in M$ a $n$-dimensional $\KK$-vector space~$E_x = \pi^{-1}(\{x\})$, the \IdxMain{fibre}\emph{fibre} of~$E$ at~$x$ (\cref{fig:vector_bundle}).
Moreover, we require that around every~$x$ there exists an open neighbourhood~$U \subset M$ and a diffeomorphism $\varphi: \pi^{-1}(U) \to U \times E_x$ such that its projection to the first factor gives the bundle projection: $\pr_1 \circ\, \varphi = \pi$; $\varphi$ is called \IdxMain{loc-triv}\emph{local trivialization} of the vector bundle.
A trivialization of a vector bundle over its whole base is called a \IdxMain{glob-triv}\emph{global trivialization}.

A smooth map $f : E \to F$ between two $\KK$-vector bundles~$\pi_M : E \to M$ and~$\pi_N : F \to N$ is a \IdxMain{bund-morph}\emph{(vector) bundle homomorphism} if there exists a smooth map $g : M \to N$ such that $\pi_N \circ f = g \circ \pi_M$ and the restriction of the map to each fibre $f \restriction_{E_x} : E_x \to F_{g(x)}$ is $\KK$-linear.
In other words we require that the diagram
\begin{center}
  \includegraphics{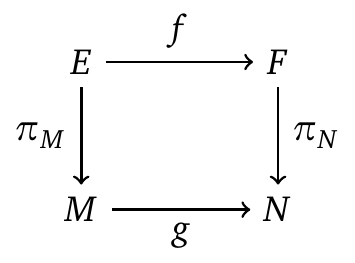}
\end{center}
commutes and that $f$ is $\KK$-linear map on each fibre.
By definition, if $f$ is a bundle homomorphism, then $g$ is given by $g = \pi_N \circ f \circ \pi_M^{-1}$.

Given an open subset~$U \subset M$, we can restrict $\pi : E \to M$ to a vector bundle~$\pi_{U} : E_{U} \to U$ by setting $E_{U} \defn \pi^{-1}(U)$ and $\pi_{U} \defn \pi \restriction_{U}$.
More generally, a subset~$E' \subset E$ such that $\pi \restriction_{E'} : E' \to M$ is a vector bundle and $E' \cap \pi^{-1}(x)$ is a vector subspace in~$\pi^{-1}(x)$ for all $x \in M$ is called a \IdxMain{vec-subbund}\emph{vector subbundle} of~$E$.
If each fibre of~$E'$ has dimension $k$, we say that $E'$ is a rank-$k$ subbundle of~$E$.
For a subbundle $\pi' : E' \to M$ of $E$ we define the \IdxMain{quot-bund}\emph{quotient bundle}~$E / E'$ as the disjoint union $\bigsqcup_{x \in M} E_x / E'_x$ of the quotient spaces of the fibres.

If $M, N$ are smooth manifolds with a smooth map~$\psi: M \to N$ and a vector-bundle~$E \to N$, we can define on~$M$ the \IdxMain{pb-bund}\emph{pullback bundle}~$\pb{\psi} E$ as the bundle whose fibres over~$M$ are given by $(\pb{\psi} E)_x \defn E_{\psi(x)}$ for each~$x \in M$.

A \IdxMain{section}\emph{section} of a vector bundle~$E$ is a continuous map~$f : M \to E$ such that $\pi \circ f = \id_M$ (\cref{fig:vector_bundle}); the space of sections of a vector bundle~$E$ is denoted by~$\Gamma(E)$.
We denote by~$\Gamma^n(E)$ the space of~$C^n$ sections ($C^n$ maps~$f : M \to E$) of the vector bundle~$E$, while the spaces of compactly supported sections are indicated by a subscript $0$, \eg, $\Gamma_0^\infty(E)$.
Furthermore, a \IdxMain{loc-sect}\emph{local section} over an open subset~$U \subset M$ is a section of the vector bundle~$E_U$.

If $N$ is another smooth manifold with a vector-bundle~$F \to N$ and there exists a smooth map~$\psi: M \to N$, the \IdxMain{pb-sect}\emph{pullback section} $\pb{\psi} f \in \pb{\psi} F$ of~$f \in \Gamma(F)$ is defined as the section $\pb{\psi} f \defn f \circ \psi$.
The opposite of the pullback is achieved by the pushforward if $\psi$ is a diffeomorphism.
Namely, we will say that $\pf{\psi} h \defn h \circ \psi^{-1}$ is the \IdxMain{pf-sect}\emph{pushforward section} of~$h \in \Gamma(E)$.
If $h$ is compactly supported in a region~$U \subset M$, it suffices that $\psi$ is a diffeomorphism onto the image of $U$; outside of $\psi(U)$ we set $\pf{\psi} h$ identical to zero.

\subsection{Tangent bundle}
\label{sub:dg_tangent}

Let $M$ be a smooth manifold.
A \IdxMain{vector}\emph{(tangent) vector at~$x \in M$} is a linear map $v : C^\infty(M) \to \RR$ that satisfies the Leibniz `product' rule
\begin{equation*}
  v(f g) = f(x)\, v g + g(x)\, v f
\end{equation*}
for all $f, g \in C^\infty(M)$.
The set of all tangent vectors constitutes a vector space~$T_x M$ called the \IdxMain{tan-space}\emph{tangent space to~$M$ at~$x$}; it has the same dimension as the base manifold~$M$ for each~$x \in M$.
Note that, if a smooth manifold~$M$ is also a vector space, then we can identify it with its tangent space, \ie, $M \simeq T_x M$ at each point~$x$, which justifies to the geometric visualization of the tangent space (\cref{fig:tangent_space}).

\begin{figure}
  \centering
  \includegraphics{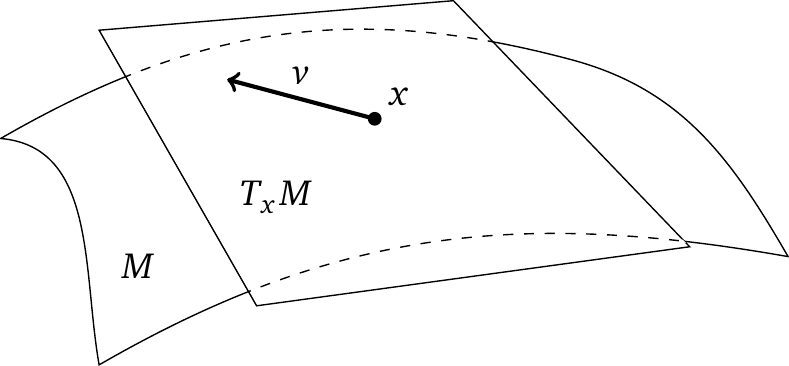}
  \caption{The tangent space of a manifold at a point and a tangent vector.}
  \label{fig:tangent_space}
\end{figure}

The vector bundle $TM \to M$ with fibres $T_x M$ at $x$ is the \IdxMain{tan-bund}\emph{tangent bundle of $M$}.
In the special situation where the $n$-dimensional smooth manifold~$M$ can be covered by a single chart, $TM$ is diffeomorphic to~$M \times \RR^n$.

The sections $\Gamma(TM)$ of the vector bundle are called \IdxMain{vec-field}\emph{vector fields}.
Applying a vector field~$v \in \Gamma(TM)$ to a function~$f \in C^\infty(M)$, we obtain a new function $(v f)(x) \mapsto v_x f$, \viz, a vector field defines a linear automorphism on the smooth functions called a \emph{derivation}.
This is exactly the \IdxMain{lie-deriv}\emph{Lie derivative} $\lieD_v f$ of a function $f \in C^\infty(M)$ along a vector field~$v$: $(\lieD_v f)(x) = (v f)(x)$.
The Lie derivative $\lieD_v w$ of a differentiable vector field~$w$ with respect to another differentiable vector field~$v$ is another vector field such that
\begin{equation*}
  \lieD_v w \defn [v, w] \defn v \circ w - w \circ v
\end{equation*}
when applied to smooth functions.
It satisfies the Leibniz rule and the Jacobi identity
\begin{equation*}
  \lieD_v (f w) = (\lieD_v f) w + f \lieD_v w,
  \quad
  \lieD_u [v, w] = [\lieD_u v, w] + [v, \lieD_u w]
\end{equation*}
for all $f \in C^\infty(M)$ and vector fields $u, v, w \in \Gamma^\infty(TM)$.

Given two smooth manifolds~$M, N$ and a smooth map $F : M \to N$, we can define at each point~$x \in M$ the \IdxMain{tan-map}\emph{tangent map} or \emph{differential of~$F$ at~$x$} as the linear map
\begin{equation*}
  T_x F : T_x M \to T_{F(x)} N,
\end{equation*}
see also \cref{fig:differential}, which is for every~$v \in T_x M$ and~$f \in C^\infty(M)$ the derivation
\begin{equation*}
  T_x F (v)(f) = v(f \circ F).
\end{equation*}
The tangent maps of~$F$ at all points taken together form $T F : T M \to T N$, the \IdxMain{glob-tan-map}\emph{(global) tangent map} or \emph{differential of~$F$}.
If $N = \KK$, \ie, $F$ is a smooth function on a manifold, we note that $\dif F \defn T F$ is the usual differential.

If $F$ is even a diffeomorphism, then $T F$ defines a bijection between the vector fields on~$M$ and~$N$.
In this case one can define the \IdxMain{pf-vec}\emph{pushforward} $\pf{F} v$ of a vector field~$v$ on~$M$ by~$F$ as the vector field on~$N$ given at each $x \in N$ by
\begin{equation*}
  (\pf{F} v)_x \defn T_{F^{-1}(x)} F \big( v_{F^{-1}(x)} \big).
\end{equation*}
Clearly, this is generally not well-defined if $F$ is not a diffeomorphism.
Using that $F$ is invertible, a \IdxMain{pb-vec}\emph{pullback} $\pb{F} w$ of a vector field~$w$ on~$N$ can be defined as the inverse of the pushforward, namely,
\begin{equation*}
  \pb{F} w \defn \pf{(F^{-1})} w.
\end{equation*}

\begin{figure}
  \centering
  \includegraphics{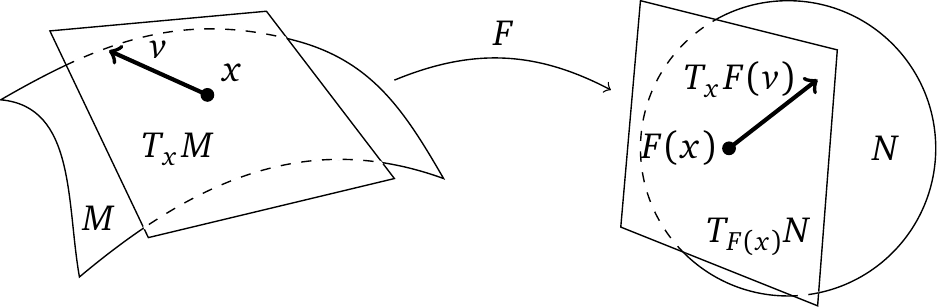}
  \caption{The tangent map at a point.}
  \label{fig:differential}
\end{figure}

The tangent map allows us to single out an important type of maps between manifolds:
\IdxMain{immers}\emph{Immersions} are maps $F : M \to N$ such that $T F$ is injective; if in addition $F$ is injective and a homeomorphisms onto its image, then it is called an \IdxMain{embed}\emph{embedding}.
Consequently we say that a subset~$S \subset M$ is an \IdxMain{imm-subman}\emph{immersed submanifold} if it is a topological manifold and the inclusion~$S \hookrightarrow M$ is an immersion; if the topology of~$S$ is the subspace topology and the inclusion is an embedding, then $S$ is called an \IdxMain{emb-subman}\emph{embedded submanifold}.
Observe that the tangent space~$T_x S$ of a submanifold~$S \subset M$ is a subspace of~$T_x M$ at every point~$x \in S$.
We can then say that a vector field~$v$ is \IdxMain{tan-space-subman}\emph{tangent to~$S$} if $v_x \in T_x S$ at every point~$x \in S$.

\subsection{Curves}
\label{sub:dg_curves}

A \IdxMain{param-curve}\emph{parametrized curve} on a smooth manifold~$M$ is a map $\gamma : I \to M$ from a connected, usually open, interval~$I \subset R$ into the manifold.

A curve $\gamma : [a, b) \to M$ is called \IdxMain{inext-curve}\emph{inextendible} if there exists a sequence $t_n$ converging to $b$ such that $\gamma(t_n)$ does not converge.
This notion readily extends to left-open domains and open domains.

The \IdxMain{velocity}\emph{velocity} $\dot\gamma(t)$ at $t$ of a differentiable parametrized curve $\gamma : I \to M$, where $I \subset \RR$ is an interval, is the vector
\begin{equation*}
  \dot\gamma(t) \defn T_t\gamma\bigg(\od{}{t} \bigg|_t \bigg) \in T_{\gamma(t)}M.
\end{equation*}
Working in the opposite direction, we can try to find a curve~$\gamma$, whose velocity at every point is determined by a given vector field~$v$:
\begin{equation}\label{eq:integral_curve}
  \dot\gamma(t) = v_{\gamma(t)}.
\end{equation}
Such a curve~$\gamma$ is called an \IdxMain{int-curve}\emph{integral curve} of~$v$.

For a sufficiently small interval~$I$ around~$0$ a unique integral curve starting at a point $x \in M$ can always be found by solving the differential equation~\eqref{eq:integral_curve} in a coordinate neighbourhood of~$x$.
The domain of an integral curve cannot necessarily be extended to the entire real line.
We say that a vector field is \IdxMain{compl-vec}\emph{complete} if the domain of all of its \IdxMain{max-int-curve}\emph{maximal integral curves}, \ie, the integral curves whose domain cannot be extended, is the entire real line.

The \IdxMain{flow}\emph{flow} $\psi_t$ of a complete vector field~$v$ is
\begin{equation*}
  \psi_t(x) = \gamma_x(t),
\end{equation*}
where $\gamma_x$ is the maximal integral curve starting at $x \in M$.
This defines for every $t$ a diffeomorphism $\psi_t : M \to M$ and the collection of all these diffeomorphisms is a group $\{ \psi_t \}_{t \in \RR}$ with unit $\psi_0$, multiplication $\psi_s \circ \psi_t = \psi_{s + t}$ and inverse $\psi_t^{-1} = \psi_{-t}$.
If $v$ is not complete, then one can still define a \IdxMain{loc-flow}\emph{local flow} $\psi_t$ around a point~$x$ with domain $U$, where $U$ is a neighbourhood of $x$, and $t \in I$ is restricted to an interval around~$0$.

\subsection{Cotangent bundle}
\label{sub:dg_cotangent}

At each point~$x$ of a smooth manifold~$M$, we define the \IdxMain{cotan-space}\emph{cotangent space to~$M$ at~$x$}, denoted $T_x^* M$, as the dual space of tangent space at the same point, namely,
\begin{equation*}
  T_x^* M \defn (T_x M)^*.
\end{equation*}
The elements of the cotangent space~$T_x^* M$ are called \IdxMain{covector}\emph{(tangent) covectors}; naturally they are linear functionals on the tangent space.
Taking the cotangent spaces at each point as fibres, we obtain $T^*\!M$, the \IdxMain{cotan-bund}\emph{cotangent bundle of $M$}.
The sections of the cotangent bundle are called \IdxMain{covec-field}\emph{covector fields} or \emph{one-forms} (see \cref{sec:differential_forms}).
The differentials (at a point) of functions discussed above are examples of covectors resp. covector fields.

A smooth map $F : M \to N$ between smooth manifolds $M, N$ induces at each point $x \in M$ the map $T_x F : T_x M \to T_{F(x)} N$ between the tangent spaces.
By duality one can find the transpose map $T^*_x F : T_{F(x)}^* N \to T_x^* M$, called the \IdxMain{cotan-map}\emph{cotangent map of~$F$ at~$x$}, between the cotangent spaces at~$F(x)$ and~$x$, which is given for each $v \in T_x M$ and $\omega \in T^*_{F(x)} N$ as
\begin{equation*}
  \big(T^*_x F(\omega)\big)(v) = \omega\big(T_x F(v)\big).
\end{equation*}
This map then gives the \IdxMain{glob-cotan-map}\emph{(global) cotangent map} $T^*\!F : T^*\!N \to T^*\!M$ and thus facilitates the definition of the \IdxMain{pb-covector}\emph{pullback} $\pb{F} \omega$ of a covector field~$\omega$ on~$N$ by~$F$ as the covector field on~$M$ given at each $x \in M$ by
\begin{equation*}
  (\pb{F} \omega)_x = T^*_x F \big( \omega_{F(x)} \big).
\end{equation*}
Note that, different than the pushforward, the pullback is even defined if $F$ is not a diffeomorphism.
However, if $F$ is a diffeomorphism, we can define the \IdxMain{pf-covector}\emph{pushforward} of a covector field as the pullback via the inverse $F^{-1}$.
That is,
\begin{equation*}
  \pf{F} \eta \defn \pb{(F^{-1})} \eta,
\end{equation*}
where $\eta$ is a covector field on~$M$.
The pushforward of a compactly supported covector field can also be defined if $F$ is an embedding \emph{and} $\dim M = \dim N$ by setting the $\pf{F} \eta$ to zero outside the image of~$F$.

\subsection{Tensors product bundles}
\label{sub:dg_tensors}

Given two $\KK$-vector bundles~$\pi : E \to M$ and~$\rho : F \to M$ over the same smooth manifold~$M$, we can define the \IdxMain{tensor-prod-bund}\emph{tensor product bundle} $\pi \otimes \rho : E \otimes F \to M$, which is just the fibrewise tensor product of vector spaces.
Namely, the fibre of $E \otimes F$ at $x \in M$ is $(E \otimes F)_x = (\pi \otimes \rho)^{-1}(\{x\}) = E_x \otimes F_x$.

In particular, we denote by~$T^p_q(E)$ the \emph{tensor bundle of type~$(p, q)$} of a vector bundle~$E$:
\begin{equation*}
  T^p_q(E) \defn E^{\otimes p} \otimes (E^*)^{\otimes q};
\end{equation*}
the tensor bundle of the \optword{co}tangent bundle is simply denoted by~$T^p_q M$ instead of~$T^p_q (T M)$.
The sections of the tensor product bundle~$T^p_q M$ are called \IdxMain{tensor}\emph{tensor (fields)}, if $p = 0$, we say that a tensor is \IdxMain{covariant}\emph{covariant}, while we say that it is \IdxMain{contravariant}\emph{contravariant} if $q = 0$.

The \Idx{lie-deriv}Lie derivative defined in \cref{sub:dg_tangent} generalizes to covariant tensor fields by duality in the following way:
Given a covariant tensor field $S \in \Gamma^\infty(T^0_q M)$ and vector fields $v, w_1, \dotsc, w_q \in \Gamma^\infty(TM)$, the $\lieD_v S$ of $S$ along $v$ can be defined by
\begin{equation*}\begin{split}
  (\lieD_v S)(w_1, \dotsc, w_q) & \defn v\big( S(w_1, \dotsc, w_q) \big) - S(\lieD_v w_1, w_2, \dotsc, w_q) \\&\quad - \dotsb - S(w_1, \dotsc, w_{q-1}, \lieD_v w_q).
\end{split}\end{equation*}
Note that it satisfies the Leibniz rule
\begin{equation*}
  \lieD_v (S \otimes T) = (\lieD_v S) \otimes T + S \otimes \lieD_v T,
\end{equation*}
where $T$ is any other differentiable covariant tensor field.
Since the Lie derivative on covariant tensor fields defines a Lie derivative on covector fields, this may be used to define a Lie derivative on mixed tensor field.

The pullback or pushforward of a mixed tensor field $S \in \Gamma(T^p_q N)$ or $T \in \Gamma(T^p_q M)$ by a diffeomorphism $F : M \to N$, is defined as the $(p+q)$-fold tensor product of the pullback or pushforward map for (co)vector fields.
If $S$ is covariant (\ie, $p=0$), then the pullback is also defined if $F$ is not a diffeomorphism.

If $F = E$, we can define the \IdxMain{symm-ten}\emph{symmetric} and \IdxMain{antisymm-ten}\emph{antisymmetric tensor product bundle} $E \odot E$ and $E \wedge E$ as the quotient bundles under the fibrewise equivalence relations $v \otimes w \sim \pm v \otimes w$ for all $v, w \in E_x$.
More generally, we denote by~$\mathrm{S}^p(E)$ and~$\bigwedge^p(E)$ the $p$-th \optword{anti}symmetric tensor product bundle which satisfy fibrewise the relation
\begin{align*}
  v(x_1, \dotsc, x_p) & = v(x_{\sigma(1)}, \dotsc, x_{\sigma(p)})
  \qquad \text{or} \\
  v(x_1, \dotsc, x_p) & = (\sign \sigma) v(x_{\sigma(1)}, \dotsc, x_{\sigma(p)})
\end{align*}
for $v \in E_x^{\otimes p}$ and all $\sigma \in \perms_p$, the symmetric group of $p$~elements, \cf\ \cref{sub:perm_linear}.
That is, the fibres of~$\mathrm{S}^p(E)$ and~$\bigwedge^p(E)$ are the $p$-th symmetric (resp. exterior) power of the fibres of~$E$.
Maps $\Sym : T^p_0(E) \to \mathrm{S}^p(E)$ and $\Alt : T^p_0(E) \to \bigwedge^p(E)$ extend from the fibrewise maps
\begin{align*}
  \Sym \big(v(x_1, \dotsc, x_p)\big) & = \frac{1}{p!} \sum_{\sigma \in \perms_p} v(x_{\sigma(1)}, \dotsc, x_{\sigma(p)})
  \qquad \text{and} \\
  \Alt \big(v(x_1, \dotsc, x_p)\big) & = \frac{1}{p!} \sum_{\sigma \in \perms_p} (\sign \sigma) v(x_{\sigma(1)}, \dotsc, x_{\sigma(p)}),
\end{align*}
where $v \in E_x^{\otimes p}$.
Moreover, fibrewise products $\odot : \mathrm{S}^p(E)_x \times \mathrm{S}^q(E)_x \to \mathrm{S}^{p+q}(E)_x$ and $\wedge : \bigwedge^p(E)_x \times \bigwedge^q(E)_x \to \bigwedge^{p+q}(E)_x$ are defined as
\begin{equation*}
  v \odot w = \Sym(v \otimes w)
  \quad \text{and} \quad
  v \wedge w = \frac{(p+q)!}{p!\, q!} \Alt(v \otimes w).
\end{equation*}

Another possibility to combine two vector bundles~$E$ and~$F$ is the \IdxMain{ext-ten-prod-bund}\emph{exterior tensor product} $E \boxtimes F \to M \times M$.
It is defined as the vector bundle over~$M \times M$ with fibre
\begin{equation*}
  (E \boxtimes F)_x = \pi^{-1}(\{x\}) \otimes \rho^{-1}(\{x'\}) = E_x \otimes F_{x'}
\end{equation*}
over the point~$(x, x')$.
The sections of the exterior tensor product~$T^p_q M \boxtimes T^r_s M$ are called \IdxMain{bitensor}\emph{bitensor (fields)}.

\subsection{Metrics}
\label{sub:dg_metrics}

Every vector bundle~$E$ has a \IdxMain{dual-bund}\emph{dual bundle}~$\pi^* : E^* \to M$ which has as its fibre~$E^*_x$, the dual vector spaces of the fibres~$E_x$.
We call the natural pairing $f(v)$ of an element $f \in E^*_x$ in the dual fibre on an element $v \in E_x$ in the corresponding fibre a \IdxMain{contract}\emph{contraction}.

A canonical isomorphism between~$E$ and~$E^*$ can be constructed if $E$ carries a \IdxMain{bund-metric}\emph{(bundle) metric}, \ie, a map
\begin{equation*}
  \metricp{\cdot\,}{\cdot} : E \times_M E \to \KK
\end{equation*}
such that the restriction~$\metricp{\cdot\,}{\cdot}_x$ to each fibre~$E_x$ is a fibrewise non-degenerate bilinear form; a (positive-definite) bundle metric can always be constructed.
In other words, a metric on~$E$ is a section in the tensor product bundle $\Gamma(E^* \otimes E^*)$.
Thus a metric induces a \IdxMain{metric-contract}\emph{metric contraction} between two elements of the same fibre~$E_x$.

The \IdxMain{dual-metric}\emph{dual metric} $\metricp{\cdot\,}{\cdot}^*$ is the unique metric on~$E^*$ such that
\begin{equation*}
  \metricp{\omega}{\eta}^* = \metricp{v}{w}
  \quad \text{with} \quad
  \omega = \metricp{v}{\cdot}, \eta = \metricp{w}{\cdot}
\end{equation*}
for all $v, w \in \Gamma(E)$.
Moreover, given metrics~$\metricp{\cdot\,}{\cdot}^E$ and $\metricp{\cdot\,}{\cdot}^F$ on~$E$ and~$F$, they induce metrics on the tensor product bundle ${E \otimes F}$ and the exterior tensor product bundle ${E \boxtimes F}$.
For the tensor product bundle it is defined fibrewise for all $v \in E_x, w \in F_x$ by
\begin{equation*}
  \metricp{v \otimes w}{v \otimes w}^{E \otimes F}_x = \metricp{v}{v}^E_x \metricp{w}{w}^F_x
\end{equation*}
and can be extended to arbitrary pairings by polarization and linearity; an analogous construction works for the exterior tensor product.

In pseudo-Riemannian geometry we find the tangent bundle equipped with a continuous symmetric metric usually denoted
\begin{equation*}
  g : TM \times_M TM \to \RR
\end{equation*}
with dual metric~$g^*$ on the cotangent bundle.
The canonical isomorphism induced by~$g$ between~$TM$ and~$T^*\!M$ is given by the \IdxMain{music}\emph{musical isomorphisms} \IdxMain{music-flat}\emph{`flat'} $\flat : TM \to T^*\!M$ and its inverse \IdxMain{music-sharp}\emph{`sharp'} $\sharp : T^*\!M \to TM$:
\begin{equation*}
  v^\flat \defn g(v, \cdot), \qquad
  \omega^\sharp \defn g(\omega, \cdot).
\end{equation*}
A tuple~$(M, g)$ of a smooth manifold $M$ with a metric~$g$ on its tangent bundle is called a \IdxMain{pseudo-riem-man}\emph{pseudo-Riemannian manifold}.

The maximal dimension of subspaces of~$T_x M$ where $g_x$ is negative-definite is called the \IdxMain{metric-ind}\emph{index}~$\Ind(g)$ of~$g$; since $g$ is continuous and non-degenerate, the index constant over the manifold.
We distinguish in particular two cases:
\begin{enumerate}[(a)]
  \item If $\Ind(g) = 0$ or, in other words, $g$ is pointwise positive-definite, we say that $g$ is a \IdxMain{metric-riem}\emph{Riemannian metric}.
  \item If $\Ind(g) = 1$ (and the manifold at least two-dimensional), we say that $g$ is a \IdxMain{metric-lor}\emph{Lorentzian metric}.\footnote{This \emph{choice} corresponds to the $\mathord{-}\mathord{+}\mathord{+}\mathord{+}$ convention, which we will use. In particle physics one often adopts the opposite convention $\mathord{+}\mathord{-}\mathord{-}\mathord{-}$, \ie, $(M, g)$ is Lorentzian if $\Ind(g) = n-1$.}
\end{enumerate}
We say that $(M, g)$ is a \IdxMain{riem-man}\emph{Riemannian} (\IdxMain{lor-man}\emph{Lorentzian}) \emph{manifold} if the metric~$g$ is Riemannian (Lorentzian).

The two prototypical examples for a Riemannian and a Lorentzian manifold are \IdxMain{euclid-space}\emph{Euclidean space} and \IdxMain{mink-spacetime}\emph{Minkowski space(time)}:
$n$-dimensional Euclidean space is the smooth manifold over $\RR^n$ with a the global \IdxMain{euclid-chart}chart $(\RR^n, \id)$, coordinate functions $(x^1, \dotsc, x^n)$ and with the \IdxMain{euclid-metric}\emph{Euclidean metric}
\begin{equation*}
  \delta \defn \sum_{i = 1}^n \dif x^i \otimes \dif x^i.
\end{equation*}
In the conventions chosen here, $(1+n)$-dimensional Minkowski spacetime is a smooth manifold modelled on $\RR^{1+n}$ with the single \IdxMain{mink-chart}chart $(\RR^{1+n}, \id)$, coordinate functions $(t, x^1, \dotsc, x^n)$ and with the \IdxMain{mink-metric}\emph{Minkowski metric}
\begin{equation}\label{eq:minkowski}
  \eta \defn - \dif t \otimes \dif t + \sum_{i = 1}^n \dif x^i \otimes \dif x^i
\end{equation}
in the coordinate frame.
Occasionally one uses the shorthand~$\MM$ to denote four\hyp{}dimensional Minkowski space $(\RR^4, \eta)$.

Let $(M, g_M)$ and $(N, g_N)$ be pseudo-Riemannian manifolds and $\psi: M \to N$ a diffeomorphism.
The map $\psi$ is called a \IdxMain{conf-iso}\emph{conformal isometry} if
\begin{equation*}
  \pb{\psi} g_N = \Omega^2 g_M
\end{equation*}
for some positive function $\Omega \in C^\infty_+(M)$.
Conformal isometries preserve the angles between vectors; in particular $g_N(v, w)_{\psi(x)} \gtreqlessslant 0$ implies $g_M(\pb{\psi} v, \pb{\psi} w)_(x) \gtreqlessslant 0$ for all $v, w \in \Gamma(TN)$ and $x \in M$.
In the special case that $\Omega \equiv 1$, we say that $\psi$ is an \IdxMain{isometry}\emph{isometry}.
These notions generalize straightforwardly to immersions and embeddings; and we call these maps \IdxMain{emb-conf}\IdxMain{imm-conf}\emph{conformal immersions/embeddings} and \IdxMain{emb-iso}\IdxMain{imm-iso}\emph{isometric immersions/embeddings}.

\subsection{Frames}
\label{sub:dg_frames}

A \IdxMain{loc-frame}\emph{local frame} of a vector bundle~$E$ on~$M$ is a set~$\{ e_\mu \}$ of smooth local sections~$\Gamma^\infty(E_{U})$ on a domain~$U \subset M$ contained in a neighbourhood of a local trivialization such that~$\{ e_\mu(x) \}$ forms a basis for each fibre~$E_x$ over~$U$.
The \IdxMain{dual-frame}\emph{dual frame} is the set~$\{ e^\mu \}$ in $\Gamma^\infty(E_{U}^*)$ that satisfies $e^\mu(e_\nu) = \delta^\mu_\nu$, \ie, the $e^\mu$ are the dual basis to~$e_\nu$.
Naturally, given frames~$\{ e_\mu(x) \}$ and~$\{ f_{\nu}(x) \}$ of vector bundles~$E, F$ on~$M$, they induce frames $\{ e_\mu(x) \otimes f_{\nu}(x) \}$ on the tensor product bundle $E \otimes F$; the generalization to the exterior tensor product bundle $E \boxtimes F$ is immediate.

If $E$ is equipped with a metric~$\metricp{\cdot\,}{\cdot}$, we say that the frame~$\{ e_\mu \}$ is \IdxMain{ortho-frame}\emph{orthogonal} ($\KK = \RR$) or \IdxMain{unit-frame}\emph{unitary} ($\KK = \CC$) if it forms an orthonormal basis for each fibre over~$U$, \ie, if
\begin{equation*}
  \metricp{e_\mu}{e_\nu} = \delta_{\mu \nu} \varepsilon_\mu
  \quad \text{with} \quad
  \varepsilon_\mu \defn \metricp{e_\mu}{e_\mu} = \pm 1.
\end{equation*}

Frames allow us to perform calculations in component form, \viz, given a section~$s \in \Gamma(E)$ its \IdxMain{components}\emph{components} in the frame~$\{ e_\mu \}$ on~$U$ are given via the dual frame~$\{ e^\mu \}$ as $s^\mu = s(e^\mu)$ such that $s = s^\mu e_\mu$.
This is the first instance where we used the \IdxMain{sum-convent}\emph{summation convention}: Unless otherwise noted, summation over \emph{balanced} indices (one upper and one lower) is always implied.

If $s$ is the section of a (exterior) tensor product bundle and frames on the single bundles are given, we use multiple indices to denote the sections.
For example, if $g \in \Gamma(T^*\!M \otimes T^*\!M)$, we can write
\begin{equation*}
  g = g_{\mu\nu}\, \dif x^{\mu} \otimes \dif x^{\nu}
\end{equation*}
in terms of the coordinate covectors.

The atlas of a manifold induces natural local frames on the tangent and the cotangent bundle.
If $(U, \varphi)$ is a smooth chart on~$M$ in a neighbourhood of~$x$, then the \IdxMain{coord-vec}\emph{coordinate vectors}\footnote{Often we will use the shorthand~$\partial_\mu$ for~$\partial/\partial x^\mu$.}
\begin{equation*}
  \partial_\mu|_x \defn (T_x \varphi)^{-1} \left( \partial_\mu|_{\varphi(x)} \right)
\end{equation*}
define a basis on~$T_x M$ because $T_x \varphi : T_x M \to T_{\varphi(x)} \RR^n \simeq \RR^n$ is an isomorphism.
Together with the coordinate functions these coordinate vectors at every point induce natural coordinates on~$(TM)_U$.
A local frame for $(T^*\!M)_U$ is then simply the dual frame $\{ \dif x^{\mu} \}$.

Note that the coordinate vector fields~$\{ \partial_\mu \}$ associated to a coordinate chart~$\{ x^\mu \}$ form a very special local frame of the tangent bundle.
Whereas the commutator~$[\partial_\mu, \partial_\nu]$ vanishes, this is no longer true in every frame~$\{ e_\mu \}$, where
\begin{equation*}
  [e_\mu, e_\nu] = c\indices{^\rho_{\!\mu \nu}} e_\rho
\end{equation*}
has in general non-vanishing \IdxMain{comm-coeff}\emph{commutation coefficients}~$c\indices{^\rho_{\!\mu \nu}}$.

\subsection{Differential operators}
\label{sub:df_differential_operators}

Given a vector bundle $E \to M$ of $n$-dimensional smooth manifold $M$, a \IdxMain{diff-op}\emph{linear differential operator of order~$m$} (with smooth coefficients) is a linear map $\mathrm{P} : \Gamma^\infty(E) \to \Gamma^\infty(E)$ which, in local coordinate $\{x^\mu\}$ on $U$, is given by
\begin{equation*}
  \mathrm{P} \restriction_U = \sum_{\abs{\alpha} \leq m} a_\alpha(x) \partial^\alpha,
\end{equation*}
where $\alpha = (\alpha_1, \dotsc, \alpha_n)$ are multi-indices with $\partial^\alpha = \partial^{\alpha_1}_1 \dotsm\, \partial^{\alpha_n}_n$ and the coefficients $a_\alpha : \Gamma^\infty(E) \to \Gamma^\infty(E)$ are linear maps.\footnote{Although we will not explicitly state this, sometimes we will use differential operators with non-smooth coefficients. In that case the coefficients map into $C^k$ sections.}
That is, $\mathrm{P}$ is locally defined as a polynomial in the partial derivatives~$\{\partial_\mu\}$.

The polynomial
\begin{equation*}
  p(x, \xi) = \sum_{\abs{\alpha} \leq m} a_\alpha(x) \xi^\alpha,
\end{equation*}
where $\xi^\alpha = \xi^{\alpha_1}_1 \dotsm\, \xi^{\alpha_n}_n$ and $\xi$ is a covector field with components $\xi = \xi_\mu \dif x^\mu$, is called the \IdxMain{tot-sym}\emph{total symbol} of~$\mathrm{P}$.
The leading term of $p(x, \xi)$,
\begin{equation*}
  \sigma_{\mathrm{P}}(x, \xi) = \sum_{\abs{\alpha} = m} a_\alpha(x) \xi^\alpha,
\end{equation*}
is the \emph{principal part} or \IdxMain{pri-sym}\emph{principal symbol} of~$\mathrm{P}$.
While this is not true for the total symbol, one can check that the principal symbol is covariantly defined as a function on the cotangent bundle: $\sigma_{\mathrm{P}} : \Gamma^\infty(\mathrm{S}^m(T^*\!M) \otimes E) \to \Gamma^\infty(E)$.

Suppose that $(M, g)$ is a pseudo-Riemannian manifold.
If the principal symbol~$\sigma_{\mathrm{P}}$ of a differential operator~$\mathrm{P}$ is given by the metric
\begin{equation*}
  \sigma_{\mathrm{P}}(x, \xi) = - g_x(\xi, \xi) \id_{E_x},
\end{equation*}
we say that $\mathrm{P}$ is \IdxMain{norm-hyp}\emph{normally hyperbolic} or, alternatively, that it is a \IdxMain{wave-op}\emph{wave operator}.
Normally hyperbolic operators on globally hyperbolic spacetime have a well-posed Cauchy problem and therefore play an important role in quantum field theory on curved spacetime.

Given a second differential operator~$\mathrm{Q}$ on the same vector bundle~$E$, the composition $\mathrm{P} \circ \mathrm{Q}$ is also a differential operator and the principal symbol of the composed operator is given by the composition of the principal symbols: $\sigma_{\mathrm{P} \circ \mathrm{Q}} = \sigma_{\mathrm{P}} \circ \sigma_{\mathrm{Q}}$.
We say that $\mathrm{P}$ is \IdxMain{pre-norm-hyp}\emph{pre-normally hyperbolic} if there exists $\mathrm{Q}$ such that $\mathrm{P} \circ \mathrm{Q}$ is normally hyperbolic.
One can show that also $\mathrm{Q} \circ \mathrm{P}$ is normally hyperbolic and thus $\mathrm{Q}$ is pre-normally hyperbolic too \cite{muhlhoff:2011}.

\subsection{Index notation}
\label{sub:dg_index_notation}

We just saw that in a frame $\{ e_\mu \}$ of a vector bundle $E \to M$ we can calculate the components of sections with respect to that frame.
For example, given two vectors fields $v, w \in \Gamma(TM)$ on a pseudo-Riemannian manifold~$(M, g)$, we can write their metric contraction in terms of their components with respect to the coordinate \optword{co}vector fields~$\{ \partial_\mu \}$ and~$\{ \dif x^\mu \}$ in a coordinate neighbourhood:
\begin{equation*}
  g(v, w) = g_{\mu\nu} v^\mu w^\nu.
\end{equation*}
Repeated indices imply summation by the Einstein summation convention as usual.

Usually the frame is not explicitly mentioned but instead implicitly given by a selection of letters for the indices.
Henceforth the small Greek letters $\mu, \nu, \lambda, \rho, \sigma$ will always be indices for a coordinate frame of the \optword{co}tangent bundle in the \IdxMain{concrete-ind}\emph{concrete index notation}.

When calculating contractions between more complicated tensors the notation in terms of indices is often over the abstract index-free notation which quickly becomes unwieldy.
Moreover, if the horizontal position of indices is kept fixed, we can use a metric to lower and raise indices, \eg, returning to our example, we write
\begin{equation*}
  g(v, w) = v_\mu w^\mu.
\end{equation*}
That is, we identify the components of the vector field $v$ with the components of the associated covector field $g(v, \cdot)$.
Contracting a tensor $S \in \Gamma(T^2_1 M)$ with the vector fields $v, w$, we see the advantage of this notation
\begin{equation*}
  S\big(g(v,\cdot), g(w,\cdot)\big)
  = g_{\nu\rho} g_{\smash\lambda\sigma} S\indices{_\mu^{\!\smash\rho\sigma}} v^\mu v^\nu w^\lambda
  = S_{\mu\nu\smash\lambda} v^\mu v^\nu w^\lambda
\end{equation*}
The `natural' position of the indices of the tensor $S$ must, however, be agreed upon beforehand.

Note that the formal aspects of this notation do not necessitate the existence of a frame.
This leads to Penrose's \IdxMain{penrose-ind}\emph{abstract index notation}.
Even in the absence of a concrete frame, we write for example
\begin{equation*}
  S\big(g(v,\cdot), g(w,\cdot)\big)
  = g_{bd} g_{\vphantom{b}ce} S_{\vphantom{b}a}{}^{de} v^a v^b w^c
  = S_{abc} v^a v^b w^c
\end{equation*}
Now, an index only labels a slot in the index-free expression and does not carry any numerical value.
In particular, Einstein summation convention does not apply to abstract indices -- it would not even make sense -- and double indices only imply (metric) contractions.
We will often use the small Latin letters $a, b, c, d, e$ as abstract indices for the \optword{co}tangent bundle.

Both for abstract and concrete index notation it is useful to introduce some shorthands.
Symmetrization and antisymmetrization of tensors are denoted by parentheses and brackets:
\begin{equation*}
  S_{(ab)} \defn \tfrac{1}{2}(S_{ab} + S_{ba}),
  \quad
  S_{[ab]} \defn \tfrac{1}{2}(S_{ab} - S_{ba})
\end{equation*}
with the obvious generalization to higher-order tensors.
Partial and covariant derivatives (see below) are sometimes indicated by comma and semicolon:
\begin{equation*}
  (\cdots)_{,a} \defn \partial_a (\cdots),
  \quad
  (\cdots)_{;a} \defn \nabla_a (\cdots).
\end{equation*}

\section{Connections and curvature}
\label{sec:connections}

A (Koszul) \IdxMain{connection}\emph{connection}~$\nabla^E$ on a $m$-dimensional $\KK$-vector bundle~$E \to M$ is a $\KK$-linear map $\nabla^E : \Gamma^\infty(E) \to \Gamma^\infty(E \otimes T^*\!M)$ that satisfies the Leibniz rule
\begin{equation*}
  \nabla^E(\varphi f) = \varphi\, \nabla^E f + f \otimes \dif \varphi
\end{equation*}
for all $\varphi \in C^\infty(M)$ and $f \in \Gamma^\infty(E)$.
Every vector bundle admits a connection.
Henceforth we will often drop the superscript indicating the vector bundle that the connection acts on and simply denote it by~$\nabla$.
Given a vector field~$v$, the connection $\nabla$ defines the \IdxMain{cov-deriv-vec}\emph{covariant derivative along~$v$} as $\nabla_v : \Gamma^1(E) \to \Gamma(E)$ with
\begin{equation*}
  \nabla_v\,\cdot \defn (\nabla\,\cdot)(v).
\end{equation*}
If, in addition, the vector bundle~$E$ is equipped with a $C^\infty$ bundle metric~$\metricp{\cdot\,}{\cdot}$, we say that $\nabla$ is a \IdxMain{metric-conn}\emph{metric connection} if
\begin{equation*}
  v \metricp{f}{h} = \metricp{\nabla_v f}{h} + \metricp{f}{\nabla_v h}
\end{equation*}
holds for all $f, h \in \Gamma^1(E)$.
A connection~$\nabla$ on~$E = TM$ is \IdxMain{torsion-free-conn}\emph{torsion-free} if the Lie bracket of two vector fields~$v, w$ is given by~$[v, w] = \nabla_v w - \nabla_w v$.

Let $E, F$ be two vector bundles with connections~$\nabla^E, \nabla^F$ and sections $f \in \Gamma^1(E)$, $h \in \Gamma^1(F)$ and $u \in \Gamma^1(E^*)$.
A connection on the the tensor product bundle~$E \otimes F$ is defined by
\begin{equation*}
  \nabla^{E \otimes F}_v (f \otimes h) = (\nabla^E_v f) \otimes h + f \otimes (\nabla^F_v h).
\end{equation*}
Moreover, a \IdxMain{dual-conn}\emph{dual connection} is obtained from
\begin{equation*}
  \big(\nabla^{E^*}_v u\big)(f) = v \big(u(f)\big) - u \big(\nabla^E_v f \big).
\end{equation*}
This can be used to extend a connection~$\nabla$ on a vector bundle~$E$ to its dual bundle~$E^*$ and more generally to the tensor bundle~$T^p_q E$.

If $\psi : M \to N$ is a diffeomorphism between two manifolds and $E \to N$ a vector bundle with a connection $\nabla^E$, then we automatically find a unique connection $\pb{\psi} \nabla$ on the pullback bundle $\pb{\psi} E$.
The \IdxMain{pb-conn}\emph{pullback connection} is given by
\begin{equation*}
  (\pb{\psi} \nabla)_v (\pb{\psi} f) \defn \pb{\psi} \big( \nabla^E_{\dif\psi(v)} f \big)
\end{equation*}
for all $f \in \Gamma^1(E)$ and $v \in \Gamma(TM)$.

\subsection{Levi-Civita connection}
\label{sub:conn_levi_civita}

The \IdxMain{levi-civita}\emph{Levi-Civita connection} is the unique metric connection~$\nabla$ on~$TM$ with~$C^\infty$ metric~$g$ that is torsion free.
Therefore it satisfies the \IdxMain{koszul-formula}\emph{Koszul formula}
\begin{equation}\begin{split}\label{eq:koszul_formula}
  2 g(\nabla_u v, w) & = u g(v, w) + v g(u, w) - w g(u, v) \\&\quad - g(u, [v, w]) - g(v, [u, w]) - g(w, [u, v]).
\end{split}\end{equation}
Usually we will call the covariant derivative associated to the Levi-Civita connection of~$(M, g)$ simply \emph{the} covariant derivative.

In a chart~$(U, \varphi)$ of~$M$, we have
\begin{equation*}
  \nabla_{\partial_\mu} \partial_\nu = \Gamma\indices{^\rho_{\!\mu \nu}} \partial_\rho
\end{equation*}
and we call $\Gamma\indices{^\rho_{\!\mu \nu}}$ the \IdxMain{christoffel}\emph{Christoffel symbols}\footnote{
Note that $\nabla$ is not a tensor and thus the Christoffel symbols do not transform as tensors.} of the Levi-Civita connection~$\nabla$ for the chart~$(U, \varphi)$.
The Christoffel symbols are symmetric in their lower indices $\Gamma\indices{^\rho_{\!\mu \nu}} = \Gamma\indices{^\rho_{\!\nu \mu}}$ because $[\partial_\nu, \partial_\nu] = 0$.
Given two vector fields~$v, w \in \Gamma^1(TM)$, written in the coordinate basis as $v = v^\mu \partial_\mu$ and $w = w^\mu \partial_\mu$, $w$ has covariant derivative
\begin{equation*}
  \nabla_v w = v^\mu \bigg(\! \pd{w^\rho}{x^\mu} + \Gamma\indices{^\rho_{\!\mu \nu}} w^\nu \!\bigg) \partial_\rho
\end{equation*}
along~$v$.
The Kozul formula \eqref{eq:koszul_formula} yields a formula for the Christoffel symbols
\begin{equation*}
  \Gamma\indices{^\rho_{\!\mu \nu}} = \frac{1}{2} g^{\rho\lambda} \bigg(\! \pd{g_{\mu\smash\lambda}}{x^\nu} + \pd{g_{\nu\smash\lambda}}{x^\mu} - \pd{g_{\mu \nu}}{x^\lambda} \!\bigg).
\end{equation*}

\subsection{Curvature of a connection}
\label{sub:conn_curvature}

Different from ordinary second derivatives, covariant second derivatives
\begin{equation*}
  (\nabla^2 f)(v, w) = \nabla_v \nabla_w f - \nabla_{\nabla_v w} f,
\end{equation*}
where~$v, w \in \Gamma^1(TM)$ and $f \in \Gamma^2(E)$, do not commute in general.
The \IdxMain{curvature}\emph{curvature} $F \in \Gamma(T^*\!M \otimes T^*\!M \otimes E \otimes E^*)$ of a vector bundle~$E$ with connection $\nabla$ quantifies this failure of the covariant second derivative to commute and we define it as
\begin{equation*}
  F(v, w) \defn \nabla_v \nabla_w - \nabla_w \nabla_v - \nabla_{[v, w]}.
\end{equation*}
We say that the connection is \IdxMain{flat-conn}\emph{flat} its curvature $F$ vanishes.
If $E = TM$ and $\nabla$ is the Levi-Civita connection of a pseudo-Riemannian manifold $(M, g)$ with smooth metric~$g$, we denote the curvature by~$R(v, w)$ instead and call it \IdxMain{riem-curv}\emph{Riemann curvature (tensor)} of~$(M, g)$.

Let $u, v, w \in \Gamma^2(TM)$ be vector fields and $f \in \Gamma^2(E)$ a section of~$E$.
By definition the curvature is skew-hermitian: $F(v, w) = - F(w, v)$.
If the connection~$\nabla$ is compatible with a metric ${\metricp{\cdot\,}{\cdot}}$ on~$E$, then we also have that~$F$ is skew-adjoint, $\metricp{F(v, w) f}{f} = - \metricp{f}{F(v, w) f}$, and that it satisfies the \IdxMain{fst-bianchi}\emph{first Bianchi identity}
\begin{equation*}
  (\nabla_u F)(v, w) + (\nabla_v F)(w, u) + (\nabla_w F)(u, v) = 0.
\end{equation*}
Furthermore, if $E = TM$ and $\nabla$ the Levi-Civita connection, also the \IdxMain{snd-bianchi}\emph{second Bianchi identity}
\begin{equation*}
  R(u, v) w + R(v, w) u + R(w, u) v = 0
\end{equation*}
holds.

Several other curvature tensors can be derived from the Riemann curvature.
The \IdxMain{ricci-curv}\emph{Ricci curvature (tensor)} is defined as the symmetric $(0, 2)$ tensor $\mathrm{Ric}(v, w) = {\tr (u \to R(v, u)(w))}$, where $u, v, w \in \Gamma^2(TM)$; if it vanishes, we say that $(M, g)$ is \emph{Ricci-flat}.
Furthermore, we obtain the \IdxMain{ricci-scalar}\emph{Ricci scalar} as the contraction of the Ricci tensor $\mathrm{Scal} \defn \tr_g \mathrm{Ric} = \tr \mathrm{Ric}^\sharp$.
Using the coordinate \optword{co}vectors, we can express the Riemann tensor in component form as:
\begin{equation*}
  R\indices{^\sigma_{\!\mu\nu\smash\lambda}} = \partial^{\vphantom{\sigma}}_{\nu} \Gamma\indices{^\sigma_{\!\mu\smash\lambda}} - \partial^{\vphantom{\sigma}}_{\smash\lambda} \Gamma\indices{^\sigma_{\!\mu\nu}} + \Gamma\indices{^\sigma_{\!\rho\smash\lambda}} \Gamma\indices{^{\smash\rho}_{\!\mu\nu}} - \Gamma\indices{^\sigma_{\!\rho\nu}} \Gamma\indices{^{\smash\rho}_{\!\mu\smash\lambda}}.
\end{equation*}
The Ricci tensor and Ricci scalar are then written as $R_{\mu\nu} = -\delta^\lambda_\sigma R\indices{^\sigma_{\!\mu\nu\smash\lambda}}$ and $R = g^{\mu\nu} R_{\mu\nu}$.

\subsection{Geodesics}
\label{sub:conn_geodesics}

Let $\gamma : I \to M$ be a smooth curve parametrized by an interval $I \subset \RR$ and $f \in \Gamma^1(\pb{\gamma} E)$ a section in the pullback of the vector bundle~$E \to M$.
The \IdxMain{cov-deriv-curve}\emph{covariant derivative of~$f$ along~$\gamma$} is given by
\begin{equation*}
  \frac{\nabla f}{\dif t} \defn (\pb{\gamma} \nabla_{\dot\gamma(t)}) f,
\end{equation*}
\ie, it is a special case of a pullback connection.
If the covariant derivative of~$f$ vanishes along~$\gamma$, we say that~$f$ is \emph{parallel} to~$\gamma$.
Therefore connections facilitate the notion of \IdxMain{parallel-trans}\emph{parallel transport} along curves (\cref{fig:parallel_transport}, see also \cref{sub:bi_parallel}).
Namely, given a $s_0 \in E_{\gamma(t_0)}$ at the point~$\gamma(t_0)$, the parallel transport of~$f_0$ along~$\gamma$ is the unique solution~$f$ of the ordinary differential equation $\nabla (\pb{\gamma} f) / \dif t = 0$ with initial condition $(f \circ \gamma)(t_0) = f_0$.

\begin{figure}
  \centering
  \includegraphics{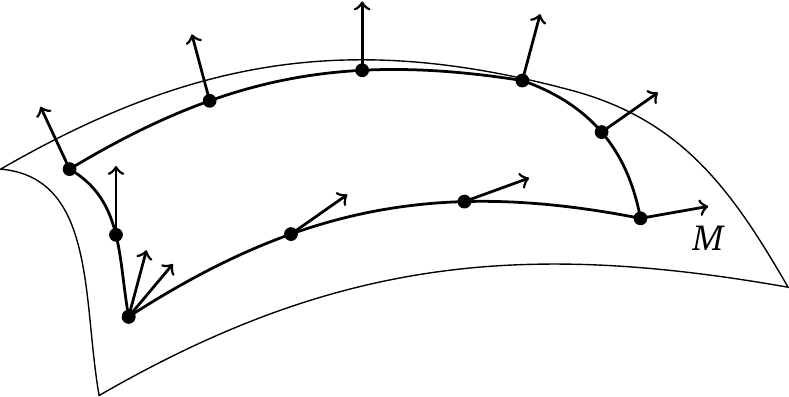}
  \caption{The parallel transport of a vector along a closed path.}
  \label{fig:parallel_transport}
\end{figure}

Auto-parallel curves, \ie, curves that satisfy
\begin{equation*}
  \frac{\nabla}{\dif t}\, \pb{\gamma} \dot{\gamma}(t) = 0,
\end{equation*}
are called \IdxMain{geodesic}\emph{geodesics of the connection $\nabla$}.
These are the usual geodesics (local minimizers of arc length if the metric is Riemannian) with respect to a metric~$g$ of a pseudo-Riemannian manifold~$(M, g)$ if $\nabla$ is the Levi-Civita connection with respect to $g$.

It follows from the theory of ordinary differential equations that, given a point~$x \in M$ and any vector~$v \in T_x M$, there exists a unique geodesic~$\gamma_v$ such that $\gamma_v(0) = x$ and $\dot\gamma_v(0) = v$.
Let $\Upsilon_x$ be the set of vectors~$v$ at~$x$ that give an inextendible geodesic~$\gamma_v$ defined (at least) on the interval~$[0, 1]$.
We say that $M$ is \IdxMain{geodesic-compl}\emph{geodesically complete} if $\Upsilon_x = T_x M$ at every $x \in M$.
Nevertheless, if there is a vector~$v \nin \Upsilon_x$, there exists $\varepsilon \in \RR$ such that $\varepsilon v \in \Upsilon_x$.

\begin{figure}[t]
  \centering
  \includegraphics{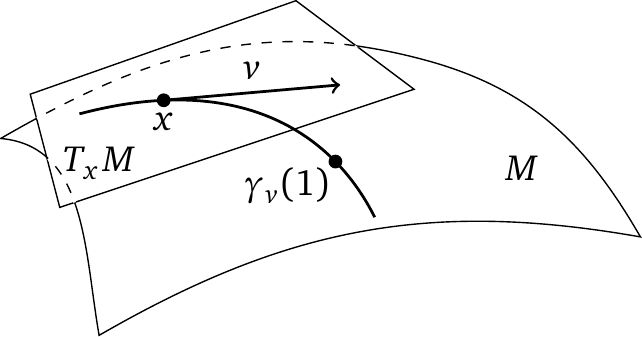}
  \caption{The exponential map applied to a vector.}
  \label{fig:exponential_map}
\end{figure}

The \IdxMain{exp-map}\emph{exponential map} at~$x$ is defined as the map (\cref{fig:exponential_map})
\begin{equation*}
  \exp_x : \Upsilon_x \to M, \quad v \mapsto \gamma_v(1).
\end{equation*}
That is, remembering that geodesics can be linearly reparametrized, the exponential map $\exp_x$ maps vectors at~$x$ to geodesics through~$x$.
For each~$x \in M$ there exists an open neighbourhood~$U' \subset T_x M$ of the origin on which $\exp_x$ is a diffeomorphism into an open neighbourhood~$U \subset M$ of $x$.
If $U'$ is starshaped\footnote{A \emph{starshaped neighbourhood} $S$ of a vector space $V$ is an open neighbourhood~$S \subset V$ of the origin such that $t v \in S$ for all~$t \in [0, 1]$ and~$v \in S$.}, then we say that $U$ is \IdxMain{geodesic-star}\emph{geodesically starshaped} ($U$ is a \emph{normal neighbourhood}) with respect to~$x$.
Moreover, if $U \subset M$ is geodesically starshaped with respect to all of its points, it is called \IdxMain{geodesic-convex}\emph{geodesically convex}.

\subsection{Killing vector fields}
\label{sub:conn_killing}

Given an $n$-dimensional pseudo-Riemannian manifold $(M, g)$, a \IdxMain{killing}\emph{Killing vector field} is a vector field~$\kappa$ such that
\begin{equation*}
  \lieD_\kappa g = 0.
\end{equation*}
In terms of the Levi-Civita connection on $(M, g)$, this equation may also be written as $\nabla_a \kappa_b - \nabla_b \kappa_a = 0$ in the abstract index notation.
More generally, a \IdxMain{killing-conf}\emph{conformal Killing vector field} is a vector field~$\kappa$ such that
\begin{equation*}
  \lieD_\kappa g = \omega g
  \quad \text{with} \quad
  \omega = \frac{2}{n} \tr(\nabla \kappa).
\end{equation*}

An equivalent definition of the Lie derivative of a covariant tensor $S \in \Gamma^1(T^0_q M)$ along a differentiable vector field~$\kappa$ is given by
\begin{equation*}
  \lieD_\kappa S = \lim_{t \to 0} \frac{1}{t} \big( \pb{\psi_t} S - S \big)
\end{equation*}
where $\psi_t$ is the (local) flow of~$\kappa$.
Therefore, a vector field $\kappa$ is (conformal) Killing vector field if and only if the flow that it generates is a family of local (conformal) isometries.
In other words, $\kappa$ encodes a \IdxMain{symm}\emph{(conformal) symmetry} of $(M, g)$.

Now, if $\gamma$ is a geodesic and $\kappa$ a Killing vector field, then it holds that $g(\kappa, \dot\gamma)$ is constant along $\gamma$.
That is, the geodesics of $(M, g)$ correspond to conserved quantities under the symmetry given by~$\kappa$.
If $\kappa$ is a \emph{conformal} Killing vector field, then $g(\kappa, \dot\gamma)$ is constant only if $g(\dot\gamma, \dot\gamma) = 0$.

\section{Differential forms and integration}
\label{sec:differential_forms}

As already noted above, sections of the cotangent bundle~$T^*\!M$ are called \IdxMain{diff-form}\emph{(differential) $1$-forms}.
More generally, sections of the $p$-th antisymmetric tensor bundle~$\bigwedge^p(T^*\!M)$ are called \emph{(differential) $p$-forms}.
The set of all smooth $p$-forms on~$M$ is usually denoted $\Omega^p(M) \defn \Gamma^\infty(\bigwedge^p T^*\!M)$.

\subsection{Exterior derivative}
\label{sub:df_exterior_derivative}

The \IdxMain{ext-deriv}\emph{exterior derivative} $\dif: \Omega^p(M) \to \Omega^{p+1}(M)$ is the unique generalization of the differential of functions such that:
\begin{enumerate}[(a)]
  \item $\dif f$ for $0$-forms (\ie, functions) $f \in \Omega^0(M)$ is the usual differential,
  \item $\dif$ is a $\wedge$-antiderivation, \ie, it satisfies the product rule
    \begin{equation*}
      \dif(\omega \wedge \eta) = \dif \omega \wedge \eta + (-1)^p \omega \wedge \dif \eta,
    \end{equation*}
    where $\omega \in \Omega^p(M)$ and $\eta \in \Omega^l(M)$,
  \item $\dif^2 = \dif \circ \dif = 0$,
  \item $\dif$ commutes with pullbacks.
\end{enumerate}
We say that a form $\omega$ is \IdxMain{form-closed}\emph{closed} if $\dif \omega = 0$ and \IdxMain{form-exact}\emph{exact} if $\omega = \dif \eta$ for some form $\eta$.
While a closed form is in general not exact, the opposite is obviously always true.
The extend to which closed forms fail to be exact is measured by the \IdxMain{de-rham}\emph{de Rham cohomology groups}~$H_{\mathrm{dR}}^p(M)$ of the smooth manifold~$M$
\begin{equation*}
  H_{\mathrm{dR}}^p(M)
  \defn \frac{\{\omega \in \Omega^p(M) \mid \omega \; \text{closed}\}}
             {\{\omega \in \Omega^p(M) \mid \omega \; \text{exact}\}}.
\end{equation*}
Replacing $p$-forms with compactly supported forms in the definition above, we obtain the related notion of the \IdxMain{de-rham-compact}\emph{de Rham cohomology group with compact support}~$H_{\mathrm{dR},0}^p(M)$.

It is not difficult to show that the de Rham cohomology is a homotopy invariant and thus a topological invariant.
This is quite astonishing considering that its definition relies on the smooth structure of the manifold.
Note that, if $M$ is \IdxMain{contractible}\emph{contractible}, \ie, homotopy equivalent to a point, then all its de Rham cohomology groups vanish.

\subsection{Integration}
\label{sub:df_integration}

A smooth $n$-form $\mu \in \Omega^n(M)$ on a smooth $n$-dimensional manifold~$M$ is called a \IdxMain{volume}\emph{volume form} if $\mu(x) \neq 0$ for all $x \in M$.
If such a volume form~$\mu$ exists, we say that $M$ is \IdxMain{orientable}\emph{orientable} because $\mu$ assigns a consistent \IdxMain{orientation}\emph{orientation} to all of~$M$.
Here, we say that a basis $v_1, \dotsc, v_n \in T_x M$ is \IdxMain{orient-pos}\IdxMain{orient-neg}\emph{positively (negatively) oriented} at $x$ with respect to $\omega \in \Omega^n(M)$ if $\omega(x)(v_1 \otimes \dotsc \otimes v_n) \gtrless 0$.
If $M$ is an orientable smooth manifold equipped with a metric tensor~$g$, there exists a unique volume form~$\mu_g$, the \IdxMain{volume-metric}\emph{$g$-volume}, which satisfies
\begin{equation*}
  \mu_g(x)(v_1 \otimes \dotsm \otimes v_n) = \sqrt{\abs{g}} \defn \sqrt{\abs{\det[g_x(v_i, v_j)]}}
\end{equation*}
for a positively oriented basis $v_1, \dotsc, v_n \in T_x M$.

Once integration over~$\RR^n$ is defined, \ie, given a domain~$U \subset \RR^n$
\begin{equation*}
  \int_U \nu = \int_U f\, \dif x^1 \wedge \dotsm \wedge \dif x^n = \int_U f\, \dif x^1 \dotsm \dif x^n
\end{equation*}
for some $f \in C^\infty(\RR^n)$ such that $\nu = f\, \dif x^1 \wedge \dotsm \wedge \dif x^n$, we can use the a partition of unity, local charts and linearity of the integral to extend the notion to general smooth manifold.
That is, if $\omega \in \Omega^n(M)$ is a form of maximal degree on an orientable smooth manifold~$M$ which is compactly supported in the chart~$(U, \varphi)$, then
\begin{equation*}
  \int_M \omega = \pm\! \int_{\varphi(U)} \pf{\varphi} \omega,
\end{equation*}
where the sign depends on the orientation of the chart~$(U, \varphi)$ with respect to $\omega$.
Thus we obtained a method of \IdxMain{int-on-man}\emph{integration on manifolds} that is invariant under orientation-preserving diffeomorphism invariant.

Integrating a metric~$\metricp{\cdot\,}{\cdot}$ on a vector bundle~$E$ over an orientable pseudo\hyp{}Riemannian manifold~$(M, g)$, yields a natural inner product~$\innerp{\cdot\,}{\cdot}$ on the sections of~$E$:
Given $f, h \in \Gamma(E)$, we define
\begin{equation*}
  \innerp{f}{h} \defn \int_M \metricp{f}{h}\, \mu_g,
\end{equation*}
whenever the integral exists.

Arguably the most important result on integration on manifolds is \IdxMain{stokes}\emph{Stokes' theorem} -- a generalization of the fundamental theorem of calculus.
It states that the integral of an exact form $\dif \omega \in \Omega^n(M)$ over a relatively compact open subset $U \subset M$ of a $n$-dimensional oriented manifold $M$ is given by the integral of $\omega$ over the $C^1$-boundary $\partial U$:
\begin{equation*}
  \int_U \dif \omega = \int_{\partial U} \pb{\iota} \omega,
\end{equation*}
where $\iota : \partial U \hookrightarrow M$ denotes the inclusion map.
Note that the classical theorems of Gauss (also called the divergence theorem) and Green are special cases of Stokes' theorem.

\subsection{Hodge star and codifferential}
\label{sub:df_hodge}

Let~$(M, g)$ be an oriented pseudo-Riemannian $n$-dimensional manifold for the remainder of this section.

We can now introduce a smooth bundle homomorphism $\hodge : \bigwedge^p(T^*\!M) \to \bigwedge^{n-p}(T^*\!M)$, the \IdxMain{hodge-star}\emph{Hodge star (operator)}; it is the unique bijection such that
\begin{equation*}
  \omega \wedge \hodge \eta = g(\omega, \eta) \mu_g
\end{equation*}
for all $\omega, \eta \in \Omega^p(M)$.
This implies the properties
\begin{equation*}
  \hodge 1 = \mu_g,
  \quad
  \hodge \mu_g = (-1)^{\Ind(g)},
  \quad
  \hodge \hodge \omega = (-1)^{\Ind(g) + p (n - p)} \omega.
\end{equation*}
It also follows that we can rewrite the inner product $\innerp{\cdot\,}{\cdot}$ between differential forms induced by the metric as
\begin{equation*}
  \innerp{\omega}{\eta} = \int_M \omega \wedge \hodge \eta.
\end{equation*}
for all $\omega, \eta \in \Omega^p(M)$ for which the integral is defined.

The \Idx{formal-adj}formal adjoint of the exterior derivative with respect to this pairing is the \IdxMain{codiff}\emph{codifferential} $\codif : \Omega^{p+1}(M) \to \Omega^p(M)$:
\begin{equation*}
  \innerp{\omega}{\codif \eta} \defn \innerp{\dif \omega}{\eta}
\end{equation*}
or, equivalently,
\begin{equation*}
  \codif \omega \defn (-1)^{n p + 1 + \Ind(g)} \hodge \dif \hodge \omega.
\end{equation*}
The codifferential is \emph{not} a derivation and thus it does not satisfy the Leibniz rule.
Note that for one-forms~$\eta$ the codifferential satisfies $\codif \eta = - \div \eta^\sharp$, \ie, it is equal to minus the divergence of the related vector field.

Analogously to the case of the exterior derivative, a form~$\omega$ is called \IdxMain{diff-coclosed}\emph{coclosed} if $\codif \omega = 0$ and \IdxMain{diff-coexact}\emph{coexact} if $\omega = \codif \eta$ for some form~$\eta$.
As a consequence of the bijectivity of the Hodge star one finds\Idx{de-rham}
\begin{equation*}
  H_{\mathrm{dR}}^{n-p}(M)
  \cong \frac{\{\omega \in \Omega^p(M) \mid \omega \; \text{coclosed}\}}
             {\{\omega \in \Omega^p(M) \mid \omega \; \text{coexact}\}},
\end{equation*}
and, in particular,\Idx{de-rham-compact}
\begin{equation*}
  H_{\mathrm{dR}}^p(M) \cong H_{\mathrm{dR,0}}^{n-p}(M)
\end{equation*}
if $H_{\mathrm{dR}}^p(M)$ is finite-dimensional; the latter relation is a consequence of the \IdxMain{poincare-dual}\emph{Poincarè duality} theorem.

A normally hyperbolic differential operator generalizing the usual d'Alembert operator, the \IdxMain{laplace-de-rham}\emph{Laplace--de Rham operator}, can be obtained by composition of the codifferential and exterior derivative; we define it as
\begin{equation*}
  \Box \defn \dif \circ \codif + \codif \circ \dif.
\end{equation*}
In abstract tensor notation the Laplace--de Rham operator acting on a (smooth) $p$-form $\omega$ is given by the Weitzenböck type formula \cite[Eq. (10.2)]{lichnerowicz:1961}
\begin{equation*}
  (\Box \omega)_{a_1 \dotsm a_p}
  = - \nabla^b \nabla_b \omega_{a_1 \dotsm a_p} + \sum_m R_{a_m b} \omega_{a_1 \dotsm}{}^b{}_{\dotsm a_p} + \sum_{m \neq n} R_{a_m b a_n c} \omega_{a_1 \dotsm}{}^b{}_{\dotsm}{}^c{}_{\dotsm a_p}.
\end{equation*}
Consequently the Laplace--de Rham operator on scalar functions ($0$-forms), which we will also call the \IdxMain{dAlembert}\emph{d'Alembert operator}, is defined as
\begin{equation*}
  \Box \defn \codif \dif = -\nabla^a \nabla_a,
\end{equation*}
which differs from the definition in some publications of the author~\cite{pinamonti:2015a,pinamonti:2015} by a sign.

\subsection{Integral submanifolds}
\label{sub:df_integral_manifolds}

While a vector field can always be locally integrated to give an integral curve, it is not true that to every rank $k>1$ subbundle of the tangent bundle $TM$ there exists a $m$-dimensional submanifold of a smooth manifold~$M$.

Let $D \subset TM$ be a smooth rank-$k$ subbundle, a \IdxMain{plane-field}\emph{plane field}, and $N \subset M$ an immersed submanifold.
$N$ is an \IdxMain{int-man}\emph{integral manifold} of $D$ if
\begin{equation*}
  T_x N = D_x
\end{equation*}
for every $x \in N$ and we say that $D$ is \IdxMain{int-plane-field}\emph{integrable}.

The plane field~$D$ is locally spanned by smooth vector fields $v_1, \dotsc, v_k$, \ie, each local frame around a point $x \in M$ is given by the vector fields $v_i$ such that $v_1 \restriction_x, \dotsc, v_k \restriction_x$ is a basis of~$D_x$.
\IdxMain{frobenius}\emph{Frobenius' theorem} states that $D$ is integrable if and only if $[v_i, v_j]_x \in D_x$ for all $i,j$ at every point~$x$.

A plane field~$D$ is equivalently locally specified in a neighbourhood $U \subset M$ by a collection of covector fields $\omega_1, \dotsc, \omega_{n-k}$ such that
\begin{equation*}
  D_x = \ker \omega_1 \restriction_x \cap \;\dotsb \cap \ker \omega_{n-k} \restriction_x
\end{equation*}
for every $x \in U$.
The dual to Frobenius theorem is then: $D$ is integrable if and only if there exist smooth covector fields $\{ \eta_{i,j} \mid i,j = 1, \dotsc, n-k \}$ such that
\begin{equation*}
  \dif \omega_i = \sum_{j=1}^{n-k} \omega_j \wedge \eta_{i,j},
\end{equation*}
in other words, $D$ is \IdxMain{invol-plane-field}\emph{involutive}.
For example, a smooth covector field~$\omega$ specifies a codimension~$1$ integral manifold of~$M$ if and only if
\begin{equation*}
  \dif \omega \wedge \omega = 0.
\end{equation*}

A \IdxMain{foliation}\emph{foliation of dimension~$k$} of a smooth manifold~$M$ is a collection~$\mathcal{F} = (N_i)$ of disjoint, connected, (non-empty) immersed $k$-dimensional submanifolds of~$M$, the \IdxMain{foliation-leaves}\emph{leaves of the foliation}, such that their union is the entire manifold~$M$ and each point $x \in M$ has a chart $(U, \varphi)$ with coordinates $(x_1, \dotsc, x_n)$ such that for every leaf $N_i$ that intersects $U$ the connected components of the image $\varphi(U \cap N_i)$ are given by the equations $x_{k+1} = \const, \dotsc, x_n = \const$.

If $\mathcal{F}$ is a foliation of~$M$, then the tangent spaces of the leaves form a plane field of~$M$ which is involutive.
Conversely, the \IdxMain{global-frobenius}\emph{global Frobenius theorem} states that the maximal integral manifolds of an involutive plane field~$D$ on~$M$ form a foliation of~$M$.

\section{Bitensors}
\label{sec:bitensors}
\IdxRanBegin{bitensor}

Let $(M, g)$ be a pseudo-Riemannian $n$-dimensional manifold for the remainder of this section.
Henceforth we will use unprimed $a, b, \dotsc$ and primed $a', b', \dotsc$ abstract indices to distinguish between objects that transform like tensors at $x$ and $x'$ respectively.
For example, a bitensor $T \in \Gamma(T^2_0 M \boxtimes T^0_1 M)$ can be denoted by~$T\indices{^{ab}_{c'}}(x, x')$, where relative position of primed and unprimed indices is arbitrary.
Note that $T\indices{^{ab}_{c'}}$ transforms like a contravariant $2$-tensor at~$x$ and as a covector at~$x'$.
While many of the operations presented in this section generalize to sections of arbitrary exterior tensor bundles, we limit the discussion to bitensors, \ie, sections of $T^p_q M \boxtimes T^r_s M$, in favour of concreteness rather than generality.

Taking covariant derivatives of bitensors, we further notice that derivatives with respect to $x$ and $x'$ commute with each other.
That is, every (sufficiently regular) bitensor~$T(x, x')$ satisfies the identity
\begin{equation*}
  T_{; a b'} = T_{; b' a},
\end{equation*}
where we have suppressed any other indices (we will do the same in the next two paragraphs).
Partial derivatives commute as always.

Often one is interested in the limiting behaviour $x' \to x$.
This limit is called the \emph{coincidence} or \IdxMain{coincidence-lim}\emph{coinciding point limit} and can be understood as a section of a tensor bundle of~$M$ whenever the limit exists and is independent of the path $x' \to x$.
If a unique limit exists, we adopt \IdxMain{synge-bracket}\emph{Synge's bracket notation} \cite[Chap.~II.2]{synge:1960}
\begin{equation*}
  [T](x) \defn \lim_{x' \to x} T(x, x').
\end{equation*}

An important result on coincidence limits of bitensors is \IdxMain{synge-rule}\emph{Synge's rule}:
Let $T(x, x')$ be a bitensor as above, then \cite[Chap.~I.4.2]{poisson:2011}
\begin{equation}\label{eq:synges_rule}
  [T]_{; a} = [T_{; a}] + [T_{; a'}]
  \quad \text{or} \quad
  [T_{; a'}] = [T]_{; a} - [T_{; a}]
\end{equation}
whenever the limits exist and are unique.
The second equality is a useful tool to turn unprimed derivatives into primed ones and vice-versa.

\subsection{Synge's world function}
\label{sub:bi_synge}

In a geodesically convex neighbourhood~$U$ we can define the \IdxMain{geodesic-dist}\emph{geodesic distance} between two points~$x, x' \in U$ as the arc length of the unique geodesic~$\gamma$ joining $x = \gamma(t_0)$ and $x' = \gamma(t_1)$.
It is given by
\begin{equation*}
  d(x, x')
  \defn \int_{t_0}^{t_1} g_{\gamma(t)}(\dot\gamma, \dot\gamma)^{1/2} \, \dif t
  = (t_1 - t_0) g_x(\dot\gamma, \dot\gamma)^{1/2}
\end{equation*}
because the integrand is constant along the geodesic as a consequence of the geodesic equation.
A slightly more useful function is \IdxMain{world-function}\emph{Synge's world function}, introduced in \cite{ruse:1930,ruse:1931,synge:1931}, which gives half the squared geodesic distance between two points (thus it is sometimes also called the \emph{half, squared geodesic distance}).
Namely,
\begin{equation}\label{eq:world_function}
  \sigma(x, x')
  \defn \tfrac{1}{2} d(x, x')^2
  = \tfrac{1}{2} (t_1 - t_0)^2 g_x(\dot\gamma, \dot\gamma).
\end{equation}
In terms of the exponential map the geodesic distance and the world function can be expressed as
\begin{align*}
  d(x, x') & = g_x\big( \exp_x^{-1}(x'), \exp_x^{-1}(x') \big)^{1/2} \\
  \sigma(x, x') & = \tfrac{1}{2} g_x\big( \exp_x^{-1}(x'), \exp_x^{-1}(x') \big).
\end{align*}
Note that geodesic distance~$d$ and Synge's world function~$\sigma$ are invariant under linear reparametrizations of the geodesic~$\gamma$.
Furthermore, they are both examples of bitensors -- in fact, \emph{biscalars}.

For the covariant derivatives of the world function it is common to write
\begin{equation*}
  \sigma_{a_1 \dotsm a_p b_1' \dotsm b_q'} \defn \nabla_{a_1} \, \dotsm \, \nabla_{a_p} \nabla_{b_1'} \, \dotsm \, \nabla_{b_q'} \sigma,
\end{equation*}
\ie, we always omit the semicolon on the left-hand side.
From \eqref{eq:world_function} one can compute
\begin{equation*}
  \sigma_a(x, x') = (t_1 - t_0) g_{a b} \dot\gamma^b
  \quad \text{and} \quad
  \sigma_{a'}(x, x') = (t_0 - t_1) g_{a' b'} \dot\gamma^{b'},
\end{equation*}
where the metric and the tangent vector are evaluated at $x$ and $x'$ respectively.
Consequently we have $[\sigma_a] = [\sigma_{a'}] = 0$.

According to \eqref{eq:world_function}, the norm of these covectors is given by the fundamental relation\IdxMain{trans-synge}
\begin{equation}\label{eq:world_function_tangent}
  \sigma_{a} \sigma^{a} = 2 \sigma = \sigma_{a'} \sigma^{a'}.
\end{equation}
Therefore, $\sigma_{a}$ and $\sigma_{a'}$ are nothing but tangent vectors at $x$ and $x'$ to the geodesic $\gamma$ with length equal to the geodesic distance between these points.
Actually, \eqref{eq:world_function_tangent} together with the initial conditions
\begin{equation}\label{eq:coincidence_sigma}
  [\sigma] = 0
  \quad \text{and} \quad
  [\sigma_{a b}] = [\sigma_{a' b'}] = g_{a b}
\end{equation}
can be taken as the definition of the world function.
Coincidence limits for higher derivatives of~$\sigma$ can be obtained by repeated differentiation of~\eqref{eq:world_function_tangent} in combination with Synge's rule~\eqref{eq:synges_rule}.
In particular one finds \cite{poisson:2011}
\begin{equation}\label{eq:coincidence_sigma3}
  [\sigma_{a b c}] = [\sigma_{a b c'}] = [\sigma_{a b' c'}] = [\sigma_{a' b' c'}] = 0.
\end{equation}

An important biscalar that can be constructed from the world function is the \IdxMain{van-vleck-morette}\emph{van Vleck--Morette determinant} defined as \cite{moretti:1999a}
\begin{equation}\label{eq:vanvleck}
  \Delta(x, x') \defn \sign(\det g_x) (\det g_x \det g_{x'})^{-1/2} \det\big( - \sigma_{a b'}(x, x') \big) > 0.
\end{equation}
The van Vleck--Morette determinant expresses geodesic \optword{de}focusing: $\Delta > 1$ implies that geodesics near~$x$ and~$x'$ undergo focusing whereas $\Delta < 1$ implies that these geodesics undergo defocusing \cite{poisson:2011}.

From~\eqref{eq:vanvleck} one can derive the transport equation\IdxMain{trans-vanvleck}
\begin{equation*}
  n = \sigma^a (\ln \Delta)_{,a} + \sigma^a{}_a = \sigma^{a\mathrlap{'}} (\ln \Delta)_{,a'} + \sigma^{a\mathrlap{'}}{}_{a'}.
\end{equation*}
This, together with the initial condition $[\Delta] = 1$, can also be used as an alternative definition of the van Vleck--Morette determinant.

\subsection{Parallel propagator}
\label{sub:bi_parallel}

Another important biscalar is the \IdxMain{parallel-prop}\emph{parallel propagator}.
It does exactly what its name suggests: in a geodesically convex neighbourhood~$U$ it transports a vector of a fibre at a point~$x' \in U$ to a vector at~$x \in U$ along the unique geodesic~$\gamma$ joining the two points.
We will denote it by the same symbol as the metric.
Here we define it only for the tangent bundle, where we write
\begin{equation*}
  v^{a} \defn g\indices{^{a}_{a'}} v^{a'}
\end{equation*}
with $v^a \in T_x M$ and $v^{a'} \in T_{x'} M$.

The parallel propagator satisfies the transport equation\IdxMain{trans-parallel}
\begin{equation*}
  g\indices{^{a}_{b' ; c}} \sigma^{c} = 0 = g\indices{^{a}_{b' ; c'}} \sigma^{c'}
\end{equation*}
because $\sigma^{a}$ and $\sigma^{a'}$ are tangent to the geodesic~$\gamma$ at~$x$ and~$x'$ respectively.
Together with the initial condition
\begin{equation}\label{eq:coincidence_parallel}
  [g\indices{^{a}_{b'}}] = g\indices{^a_b} = \delta^a_b
\end{equation}
this equation may be taken as the definition of the parallel propagator.
With the parallel propagator we can raise and lower the \optword{un}primed index in~$g\indices{^{a}_{b'}}$ with the usual metric and use it to transport \optword{un}primed indices of tensors to their opposite.

\subsection{Covariant expansion}
\label{sub:bi_covariant_expansion}
\IdxMain{covar-exp}

It is possible to generalize Taylor's series expansion method to bitensors (and thereby also to tensors).
The covariant expansion was originally developed for ordinary tensors \cite{ruse:1930,ruse:1931} but can be easily extended to bitensors, see for example \cite[Chap. I.6]{poisson:2011}.

While Taylor's method is used to expand a function~$f(x')$ around a point~$x$ in terms of powers of the distance~$x'-x$ with series coefficients given by the derivatives of~$f$ at~$x$, the covariant expansion method replaces functions with bitensors, distance with geodesic distance as given by~$\sigma_a(x, x')$ and ordinary differentiation with the covariant derivative.
That is, given a bitensor~$T_{\underline{a}}(x, x')$, where $\underline{a} = a_1 \dotsm a_m$ is a list of unprimed indices, we perform the expansion
\begin{equation}\label{eq:covariant_expansion}
  T_{\underline{a}}(x, x') = \sum_{k=0}^\infty \frac{(-1)^k}{k!} t_{\underline{a} b_1 \dotsm b_k}(x) \sigma^{b_1} \dotsm \sigma^{b_k}.
\end{equation}
We can then solve for the expansion coefficients by repeated covariant differentiation and taking the coinciding point limit.
Namely, it follows from~\eqref{eq:covariant_expansion} that
\begin{equation*}
  t_{\underline{a}} = [T_{\underline{a}}], \quad
  t_{\underline{a} b_1} = [T_{\underline{a} ; b_1}] - t_{\underline{a} ; b_1}, \quad
  t_{\underline{a} b_1 b_2} = [T_{\underline{a} ; b_1 b_2}] - t_{\underline{a} ; b_1 b_2} - t_{\underline{a} b_1 ; b_2} - t_{\underline{a} b_2 ; b_1},
\end{equation*}
\etc\ If $T$ is a bitensor that also has primed indices, one first has to transport the primed indices to unprimed ones using the parallel propagator~$g\indices{^a_{b'}}$.

Of course, just like any other Taylor expansion, the covariant expansion of a smooth bitensor does in general not converge and if it converges it is not guaranteed to converge to the bitensor that is being expanded.
Only if the bitensor and the metric are analytic, we will not encounter these difficulties.
However, even then the covariant expansion can be a very useful \emph{asymptotic} expansion and we usually truncate it after a finite number of terms.
Therefore, in following we will work with the covariant series in a formal way and whenever we write an infinite sum we implicitly mean that the series is to be truncated and a finite remainder term is to be added.

\subsection{Semi-recursive Avramidi method}
\label{sub:bi_avramidi}
\IdxMain{avramidi}

As described, \eg, in~\cite{decanini:2006,ottewill:2011}, the `naïve' recursive approach to calculate the expansion coefficients of a covariant expansion, as briefly sketched above, is inefficient and does not scale well to higher orders in the expansion because the calculation of coincidence limits becomes computationally prohibitive.
An alternative, non-recursive and elegant method for the calculation of these coefficient was proposed by Avramidi~\cite{avramidi:1986,avramidi:2000}.
From the computational perspective, however, also this approach is not necessarily optimal as it does not always make good use of intermediate result leading to an algorithm which is space but not time efficient.
A middle way, that we will present here, was implemented in~\cite{ottewill:2011} using a `semi-recursive' method.

Avramidi's approach rests on the power series solution approach to solving differential equations.
Therefore it can only be applied where the bitensor solves a differential equation, the \IdxMain{trans-eq}\emph{transport equations}.

With the \IdxMain{trans-op}\emph{transport operators}
\begin{equation}\label{eq:transport_operators}
  \nabla_\sigma^{\vphantom{'}} \defn \sigma^a \nabla_a
  \quad \text{and} \quad
  \nabla_\sigma' \defn \sigma^{a'} \nabla_{a'}.
\end{equation}
the world function~\eqref{eq:world_function_tangent} can be expressed as $(\nabla_\sigma^{\vphantom{'}} - 2) \sigma = 0$ or $(\nabla_\sigma' - 2) \sigma = 0$.
One can construct additional transport equations by differentiating these equations and commuting covariant derivatives at the cost of introducing curvature tensors.
In particular we find \cite{ottewill:2011}
\begin{subequations}\label{eq:avramidi_transport}
  \begin{align}
    \nabla_\sigma' \xi\indices{^{a\mathrlap{'}}_{b'}} & = \xi\indices{^{a\mathrlap{'}}_{b'}} - \xi\indices{^{a\mathrlap{'}}_{c'}} \xi\indices{^{c\mathrlap{'}}_{b'}} - R\indices{^{a\mathrlap{'}}_{c' b' d'}} \sigma^{c'} \sigma^{d'},
    \\
    \nabla_\sigma' \Delta^{1/2} & = \tfrac{1}{2} \Delta^{1/2} \big( n - \xi\indices{^{a\mathrlap{'}}_{a'}} \big),
  \end{align}
\end{subequations}
where we have defined $\xi\indices{^{a\mathrlap'}_{b'}} \defn \sigma\indices{^{a\mathrlap'}_{b'}}$ to avoid confusion later on.

In the next step we take the formal covariant expansion of the bitensor and use the transport equation to find relations between the expansion coefficients.
Let us again take a bitensor~$T_{\underline{a}}(x, x')$ with the expansion~\eqref{eq:covariant_expansion}.
Applying to it the transport operator~$\nabla_\sigma'$, we obtain formally
\begin{equation*}\begin{split}
  \nabla_\sigma' T_{\underline{a}}(x, x')
  & = \sum_{m=0}^\infty \frac{(-1)^m}{m!} \nabla_\sigma' t_{\underline{a} b_1 \dotsm b_m}(x) \sigma^{b_1} \dotsm \sigma^{b_m} \\
  & = \sum_{m=1}^\infty \frac{(-1)^m}{m!} m t_{\underline{a} b_1 \dotsm b_m}(x) \sigma^{b_1} \dotsm \sigma^{b_m},
\end{split}\end{equation*}
\ie, applying $\nabla_\sigma'$ to the $m$-th term is equivalent to multiplying it by~$m$:
\begin{equation*}
  (\nabla_\sigma' T)_{(m)} = k T_{(m)}.
\end{equation*}
Contracting two bitensors~$S_{\underline{a}}(x, x')$ and~$T_{\underline{b}}(x, x')$, we further find
\begin{equation*}
  U_{\underline{a} \underline{b}} = S_{\underline{a} c \vphantom{b}} T^c{}_{\!\underline{b}} = \sum_{m=0}^\infty \frac{(-1)^m}{m!} \sum_{k=0}^k \binom{m}{k}\, s_{\underline{a} c d_1 \dotsm d_k} t^c{}_{\!\underline{b} d_{k+1} \dotsm d_m} \sigma^{d_1} \dotsm \sigma^{d_m},
\end{equation*}
\ie, the $m$-th expansion coefficient of~$U$ is obtained from the lower order coefficients of~$S$ and~$T$:
\begin{equation*}
  U_{\underline{a} \underline{b} \,(m)} = \sum_{k=0}^m \binom{m}{k}\, S_{\underline{a} c \,(k)} T^c{}_{\!\underline{b} \,(m-k)}.
\end{equation*}
Moreover, we follow Avramidi and introduce
\begin{equation*}
  \mathcal{K}\indices{^a_{b \,(m)}} \defn R\indices{^a_{\!(c_1 | b | c_2 ; c_3 \dotsm c_m)}} \sigma^{c_1} \dotsm \sigma^{c_m}
\end{equation*}
so that we can write the covariant expansion of $R\indices{^{a\mathrlap{'}}_{c' b' d'}} \sigma^{c\mathrlap{'}} \sigma^{d\mathrlap{'}}$ as
\begin{equation*}
  g\indices{^a_{a'}} g\indices{_b^{b'}} R\indices{^{a\mathrlap{'}}_{c' b' d'}} \sigma^{c\mathrlap{'}} \sigma^{d\mathrlap{'}}
  = \sum_{m=2}^\infty \frac{(-1)^m}{(m-2)!} \mathcal{K}\indices{^a_{b \,(m)}}.
\end{equation*}

These formal manipulations and definitions can now be applied to find the covariant expansion coefficients of a bitensor.
Here we will apply the method to~$\xi\indices{^{a\mathrlap{'}}_{b'}}$ and~$\Delta^{1/2}$ with their transport equations \eqref{eq:avramidi_transport}.
It follows from~\eqref{eq:coincidence_sigma} and~\eqref{eq:coincidence_sigma3} that $\xi\indices{^a_{b \,(0)}} \!= \delta^a_b$ and $\xi\indices{^a_{b \,(1)}} \!= 0$.
Then, for $m \geq 2$, we can easily find the relation
\begin{equation*}
  -(m+1) \xi\indices{^a_{b \,(0)}} \!= \sum_{k=2}^{m-2} \binom{m}{k}\, \xi\indices{^a_{c \,(k)}} \xi\indices{^c_{b \,(m-k)}} + m (m-1) \mathcal{K}_{(m)}.
\end{equation*}
For the coefficients of the square-root of the van Vleck--Morette determinant one gets $\Delta^{1/2}_{(0)} = 1$ from~\eqref{eq:coincidence_sigma} and~\eqref{eq:coincidence_parallel} and
\begin{equation*}
  \Delta^{1/2}_{(m)} = - \frac{1}{2m} \sum_{k=0}^{m-2} \binom{m}{k}\, \Delta^{1/2}_{(k)} \xi\indices{^a_{a \,(m-k)}}
\end{equation*}
for $m > 0$.
Equivalent relations were found in~\cite{ottewill:2011} and implemented in the Mathematica\texttrademark\ package \emph{CovariantSeries}.

\subsection{Coordinate expansion of the world function}
\label{sub:bi_coordinate_synge}

Our aim in this section is to find an expansion of Synge's world function~$\sigma$ in a coordinate neighbourhood.
There are different ways to achieve this (see for example \cite{eltzner:2011} for an approach using Riemannian normal coordinates).
Here we use a (formal) power series Ansatz and write
\begin{equation*}
  \sigma(x, x') = \sum_{m=0}^\infty \frac{1}{m!} \varsigma_{\mu_1 \dotsm \mu_m}(x) \delta x^{\mu_1} \dotsm \delta x^{\mu_m},
\end{equation*}
where we denote by $\delta x^\mu = (x' - x)^\mu$ the coordinate separation of the points $x, x'$ in a chart.

The transport equation~\eqref{eq:world_function_tangent} can then be applied to obtain relations between the coefficients $\varsigma_{\mu_1 \dotsm \mu_m} = \varsigma_{(\mu_1 \dotsm \mu_m)}$.
As consequence of~\eqref{eq:coincidence_sigma} and $[\sigma_\mu] = 0$, the first three coefficients are
\begin{equation*}
  \varsigma = 0,
  \quad
  \varsigma_{\mu} = 0
  \quad \text{and} \quad
  \varsigma_{\mu \nu} = g_{\mu \nu}.
\end{equation*}
This can be used to derive the relation
\begin{align}
  2(1-m) \varsigma_{\mu_1 \dotsm \mu_m} & = \sum_{k=2}^{m-2} \binom{m}{k}\, g^{\nu \rho} (\varsigma_{(\mu_1 \dotsm \mu_k|, \nu} - \varsigma_{(\mu_1 \dotsm \mu_k| \nu}) (\varsigma_{\mu_{k+1} \dotsm \mu_m), \rho} - \varsigma_{\mu_{k+1} \dotsm \mu_m) \rho}) \notag\\&\quad\label{eq:coordinate_sigma} - 2m \varsigma_{(\mu_1 \dotsm \mu_{m-1}, \mu_m)}
\end{align}
for $m > 2$ after a cumbersome but straightforward calculation.

It is not difficult to implement \eqref{eq:coordinate_sigma} efficiently in a modern computer algebra system; it involves multiplication, transposition and symmetrization of multidimensional arrays and partial derivatives on the components of these arrays.
Making use of the symmetry of the coefficients, can reduce computation time and memory usage for higher order coefficients, especially for `complicated' metrics, significantly.
Even if these symmetries are not used, this method appears to be more efficient than the method of~\cite{eltzner:2011}, as it does not require to compute Riemann normal coordinates first.
In fact, because the Riemann normal coordinates are given by the derivative of the world function, the coefficients $\varsigma_{\mu_1 \dotsm \mu_m}$ can also be used to compute the expansion of Riemann normal coordinates:
\begin{equation*}
  \sigma_\mu(x, x') = g_{\mu \nu} \delta x^\nu + \tfrac{1}{2} (\varsigma_{\nu \rho, \mu} + \varsigma_{\mu \nu \rho}) \delta x^\nu \delta x^\rho + \tfrac{1}{3!} (\varsigma_{\nu \rho \lambda, \mu} + \varsigma_{\mu \nu \rho \lambda}) \delta x^\nu \delta x^\rho \delta x^\lambda + \dotsb
\end{equation*}
Again, this method appears more direct and faster than the one described in~\cite{brewin:2009}.

For example, for the Friedmann\hyp{}Lemaître\hyp{}Robertson\hyp{}Walker metric in cosmological time (see \cref{sub:cos_flrw} for more details),\IdxMain{world-fun-flrw}
\begin{equation*}
  g = - \dif t \otimes \dif t + a(t)^2 \delta_{ij}\, \dif x^i \otimes \dif x^j,\end{equation*}
one finds with $H = \dot{a} / a$
\begin{equation*}\begin{split}
  2 \sigma(x, x') & = -\delta t^2 + a^2 \delta\vec{x}^2 \big( 1 + H \delta t + \tfrac{1}{3} (H^2 + \dot{H}) \delta t^2 + \tfrac{1}{12} a^2 H^2 \delta\vec{x}^2 \\&\quad + \tfrac{1}{12} (2 H \dot{H} + \ddot{H}) \delta t^3 + \tfrac{1}{12} a^2 H (2 H^2 + \dot{H}) \delta t\, \delta\vec{x}^2 \\&\quad + \tfrac{1}{180} (-4 H^4 - 8 H^2 \dot{H} + 2 \dot{H}^2 + 6 H \ddot{H} + 3 \dddot{H}) \delta t^4 \\&\quad + \tfrac{1}{360} a^2 (48 H^4 + 74 H^2 \dot{H} + 8 \dot{H}^2 + 9 H \ddot{H}) \delta t^2 \delta\vec{x}^2 \\&\quad + \tfrac{1}{360} a^4 H^2 (4 H^2 + 3 \dot{H}) \big) \delta\vec{x}^4 + \bigO(\delta x^7)
\end{split}\end{equation*}
in agreement with \cite{eltzner:2011}.
Above we denoted by a dot derivatives with respect to cosmological time~$t$ and defined the coordinate separation $\delta t = (x' - x)^0$ and $\delta\vec{x}^2 = \delta_{ij} (x' - x)^i (x' - x)^j$.
Transforming to conformal time, so that
\begin{equation*}
  g = a(\tau)^2 \big( - \dif\tau \otimes \dif\tau + \delta_{ij}\, \dif x^i \otimes \dif x^j \big),
\end{equation*}
one finds with $\mathcal{H} = a' / a$
\begin{equation*}\begin{split}
  2 a^{-2} \sigma(x, x') & = - \delta\tau^2 + \delta\vec{x}^2 - \mathcal{H} \delta\tau^3 + \mathcal{H} \delta\tau \delta\vec{x}^2 - \tfrac{1}{12} (7 \mathcal{H}^2 + 4 \mathcal{H}') \delta\tau^4 \\&\quad + \tfrac{1}{6} (3 \mathcal{H}^2 + 2 \mathcal{H}') \delta\tau^2 \delta\vec{x}^2 + \tfrac{1}{12} \mathcal{H}^2 \delta\vec{x}^4 - \tfrac{1}{12} (3 \mathcal{H}^3 + 5 \mathcal{H} \mathcal{H}' + \mathcal{H}'') \delta\tau^5 \\&\quad + \tfrac{1}{12} (2 \mathcal{H}^3 + 4 \mathcal{H} \mathcal{H}' + \mathcal{H}'') \delta\tau^3 \delta\vec{x}^2 + \tfrac{1}{12} \mathcal{H} (\mathcal{H}^2 + \mathcal{H}') \delta\tau \delta\vec{x}^4 \\&\quad - \tfrac{1}{360} (31 \mathcal{H}^4 + 101 \mathcal{H}^2 \mathcal{H}' + 28 \mathcal{H}'{}^2 + 39 \mathcal{H} \mathcal{H}'' + 6 \mathcal{H}''') \delta\tau^6 \\&\quad + \tfrac{1}{360} (15 \mathcal{H}^4 + 61 \mathcal{H}^2 \mathcal{H}' + 20 \mathcal{H}'{}^2 + 30 \mathcal{H} \mathcal{H}'' + 6 \mathcal{H}''') \delta\tau^4 \delta\vec{x}^2 + \\&\quad + \tfrac{1}{360} (15 \mathcal{H}^4 + 37 \mathcal{H}^2 \mathcal{H}' + 8 \mathcal{H}'{}^2 + 9 \mathcal{H} \mathcal{H}'') \delta\tau^2 \delta\vec{x}^4 \\&\quad + \tfrac{1}{360} (\mathcal{H}^4 + 3 \mathcal{H}^2 \mathcal{H}') \delta\vec{x}^6 + \bigO(\delta x^7),
\end{split}\end{equation*}
where we denoted by a prime derivatives with respect to conformal time~$\tau$ and defined $\delta\tau = (x' - x)^0$ and $\delta\vec{x}$ as above.

The method described above is sufficiently fast to be applied to more complicated spacetimes such as Schwarzschild or Kerr spacetimes.
Nevertheless, for such spacetimes the expansions become so long, that they easily fill more than a page if printed up to sixth order.
Therefore we will omit them here and suggest that the interested reader implements \eqref{eq:coordinate_sigma} in a computer algebra system.

\IdxRanEnd{bitensor}
\IdxRanEnd{diff-geo}


\chapter{Lorentzian geometry}
\label{cha:lorentzian_geometry}

\IdxRanBegin{lor-geo}

\section*{Summary}

Around the turn of the 20th century it became clear that several physical phenomena could not be satisfactorily described in the framework of Newtonian physics.
After insights by Lorentz, Poincaré and others, a solution to some of these problems was found and published by Einstein in 1905 and coined special relativity.
In its modern form the changes introduced by Einstein can be seen to arise from to the union of space and time in a single spacetime.
In the absence of gravity or in the weak gravitational limit our universe appears to be well described by Minkowski spacetime, its causal structure and its symmetries.

The changes introduced by Minkwoski spacetime are generalized by the concept of Lorentzian spacetimes.
This class of pseudo-Riemannian manifolds forms the basis of general relativity, introduced by Einstein in 1916, which takes the lessons learned from special relativity to formulate a geometric theory of gravitation.
In Einstein's theory of gravitation the geometrical background of the Universe is itself dynamical and accounts for gravity via Einstein's equation.
The total lack of static background with respect to which length and time measurements can be made, makes general relativity significantly more difficult than Newton's classical theory of gravitation.

Nevertheless, general relativity has made many predictions like gravitational redshift, gravitational lensing and, in a sense, the Big Bang, that have been confirmed with modern observations.
It also forms the basis of modern cosmology which attempts to describe the evolution of the Universe on large scales from the time of the Big Bang until present times and predict its `fate'.

This chapter will give a short introduction to the causal structure of Lorentzian spacetimes, to general relativity and to cosmology in three separate sections.
The aim of the first section (\cref{sec:causality}) is the definition of globally hyperbolic spacetimes and Cauchy surfaces.
Therefore it will define notions such as spacetimes, time functions, the causal past and future of a set, causality conditions, and the splitting of a spacetime into its temporal and spatial part.
In the second section (\cref{sec:general_relativity}) we give an overview of general relativity including the stress-energy tensor, the classical energy conditions, Einstein's equation and its fluid formulation, and the special case of de Sitter spacetime. 
Topics related to theoretical cosmology will be addressed in the third section (\cref{sec:cosmology}), where we discuss Friedmann--Lemaître--Robertson--Walker spacetimes and perturbations around it.

As we are presenting standard material, we will again mostly refrain from giving proofs or even sketching them.
Excellent references for this chapter are the books \cite{oneill:1983,wald:1984,beem:1996,ellis:2012}:
While \cite{oneill:1983} and in particular \cite{beem:1996} contain detailed discussions of causal structure of Lorentzian spacetimes, \cite{wald:1984} gives an good overall account of general relativity.
The recent book \cite{ellis:2012} is, as the title already suggests, mostly focused on cosmology, and may be taken as a reference for the last section.

\section{Causality}
\label{sec:causality}

\IdxRanBegin{causality}

\subsection{Causal structure}
\label{sub:cs_structure}

\IdxRanBegin{caus-struc}

The metric~$g$ of an arbitrary Lorentzian manifold $(M,g)$ distinguishes three regions in the tangent space: a nonzero $v \in T_x M$ is\IdxMain{spacelike}\IdxMain{lightlike}\IdxMain{timelike}\IdxMain{causal}
\begin{align*}
  \text{\emph{spacelike}} \quad \text{if} \quad & g_x(v, v) > 0, \\
  \text{\emph{lightlike} or \emph{null}} \quad \text{if} \quad & g_x(v, v) = 0, \\
  \text{\emph{timelike}}  \quad \text{if} \quad & g_x(v, v) < 0, \\
  \text{\emph{causal}}    \quad \text{if} \quad & g_x(v, v) \leq 0.
\end{align*}
Furthermore, we define the zero vector $v=0$ as spacelike, as \eg\ in \cite{oneill:1983}.

Given a \IdxMain{time-orient}\emph{time-orientation}~$u$ on~$M$, \viz, a smooth unit timelike vector field such that $g(u, u) = -1$, a causal vector $v \in T_x M$ is
\begin{align*}
 \text{\emph{future-directed}} \quad \text{if} \quad & g_x\big(u(x), v\big) < 0, \\
 \text{\emph{past-directed}}   \quad \text{if} \quad & g_x\big(u(x), v\big) > 0.
\end{align*}
One often chooses the time-orientation to be the velocity of a physical fluid.
Not every Lorentzian manifold can be given a time-orientation and we say that a Lorentzian manifold is \emph{time-orientable} is such a vector field exists.

All these notions extend naturally to the cotangent bundle~$T^*\!M$, sections of both $TM$ and $T^*\!M$ and curves.
That is, as seen in \cref{fig:lightcone}, to each point $x \in M$ we can draw a double cone inside~$M$ whose surface is generated by the lightlike curves passing through~$x$; points lying inside the double cone are lightlike to~$x$, while points lying outside are spacelike to~$x$.

A continuous function $t : M \to \RR$ that is strictly increasing along every future-directed causal curve is called a \IdxMain{time-fun}\emph{time function}.
Clearly, if $t$ is a time function, then its gradient yields a time-orientation which is orthogonal to the level surfaces
\begin{equation*}
  t^{-1}(c) = \{ x \in M \mid f(x) = c \}, \quad c \in \RR.
\end{equation*}
These level surfaces give a foliation of the manifold such that each leaf is spacelike and intersected at most once by every causal curve.

\begin{figure}
  \centering
  \includegraphics{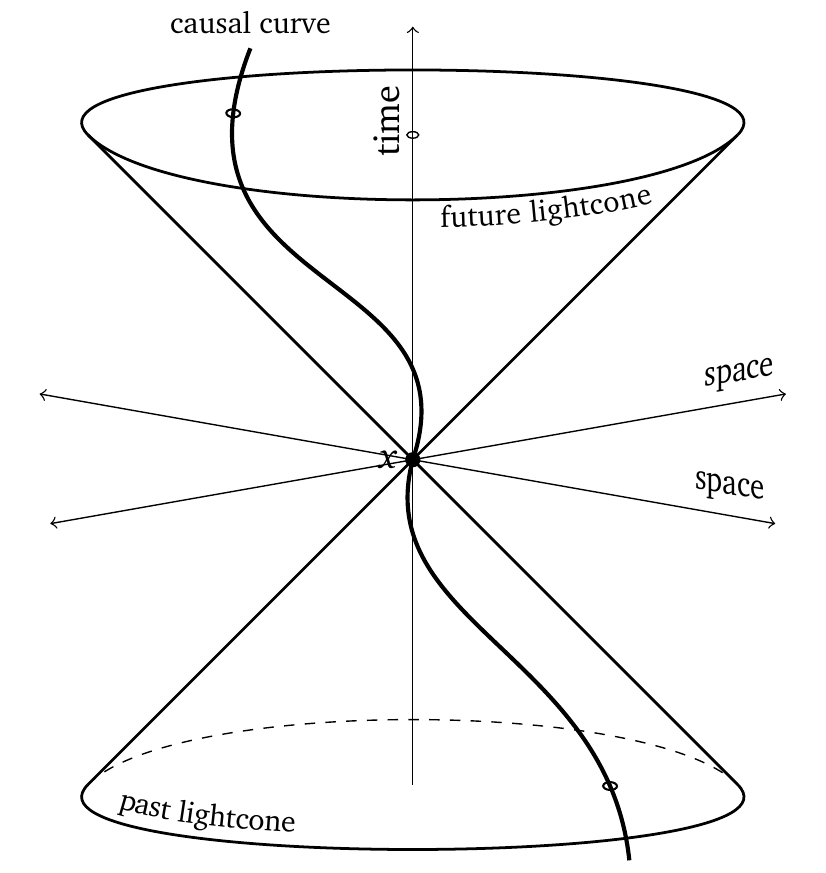}
  \caption{Lightcone of a point in a three-dimensional Minkowski spacetime.}
  \label{fig:lightcone}
\end{figure}

The \IdxMain{chrono-fut}\emph{chronological future}~$I^+(U)$ (\IdxMain{chrono-past}\emph{chronological past}~$I^-(U)$) of a subset $U \subset M$ is defined as the set of points which can be reached from~$U$ by future-directed (past-directed) timelike curves.
Similarly, the \IdxMain{causal-fut}\emph{causal future}~$J^+(U)$ (\IdxMain{causal-past}\emph{causal past}~$J^-(U)$) of a subset $U \subset M$ is defined as the set of points which can be reached from~$U$ by future-directed (past-directed) causal curves; their union $J(U) \defn J^+(U) \cup J^-(U)$ is the \IdxMain{causal-shadow}\emph{causal shadow} of~$U$.
If the set~$U$ consists of only one point $U = \{x\}$, we write $I^\pm(x)$ and $J^\pm(x)$.
Note that $I^\pm(x)$ is always open, whereas $J^\pm(x)$ is not necessarily closed.\footnote{In globally hyperbolic spacetimes (to be defined below) $J^\pm(x)$ is always closed.}
We say that two subsets~$U$ and~$V$ are \IdxMain{causal-sep}\emph{causally separated} (in symbols: $U \mathbin{\raisebox{1pt}{$\bigtimes$}} V$) if $U \cap J(V) = \emptyset$.
These definitions are illustrated in \cref{fig:causality}.
\begin{figure}
  \centering
  \includegraphics{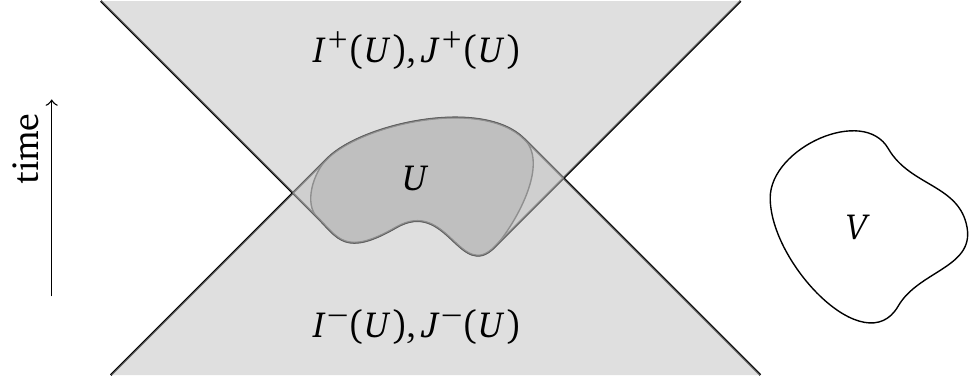}
  \caption{Causal past and future of a set and a causally separated set.}
  \label{fig:causality}
\end{figure}

A \IdxMain{worldline}\emph{world line} or \emph{observer} is a timelike future-directed curve $\tau \mapsto \gamma(\tau)$ such that $g(\dot\gamma, \dot\gamma) = -1$ and the curve parameter~$\tau$ is called the \IdxMain{proper-time}\emph{proper time} of~$\gamma$.
The observer $\gamma$ is said to be freely falling, \ie, moving only under the influence of gravity, if $\gamma$ is a geodesic.

The causal structure of a Lorentzian manifold is physically significant because it is used to imply that two \IdxMain{event}\emph{events}, \ie, two points on the manifold, can only influence each other if one is in the lightcone of the other.
In other words, the Lorentzian causal structure encodes the finiteness of signalling speeds.
This is distinctively different from Newtonian physics, where one event at an absolute time $t_0$ influences all other events at a later time $t > t_0$, no matter the spatial separation.

\IdxRanEnd{caus-struc}

\subsection{Covariant splitting}

\IdxRanBegin{cov-split}

Let $(M,g)$ be a $(1+n)$-dimensional Lorentzian manifold with a time\hyp{}orientation~$u$.
Observe that the integral curves of~$u$ can be understood to define a preferred direction of motion.
It is clear that the integral curves of~$u$ can be parametrized such that they are world lines.
A local frame such that the components of $u$ are
\begin{equation*}
  u^\mu = (1, 0, \dotsc, 0)
\end{equation*}
is called a \IdxMain{comov-frame}\emph{comoving frame}.

Orthogonal to~$u$ at each point are the \IdxMain{rest-space}\emph{rest spaces} of the associated observer.
An induced Riemannian metric tensor for these $n$-spaces is given by the projected tensor
\begin{equation*}
  h \defn g + u^\flat \otimes u^\flat,
\end{equation*}
which has the following properties: $h_{ab} u^b = 0$, $h\indices{_a^c} h\indices{_c^b} = h\indices{_a^b}$, $h\indices{_a^a} = n$.

By means of the projection~$h$ and the time-orientation~$u$, we can decompose any tensor into its temporal and spatial parts with respect to the observer.
In particular, we can define the \IdxMain{temp-cov-deriv}\emph{temporal} and \IdxMain{spat-cov-deriv}\emph{spatial covariant derivatives} of any tensor~$S\indices{^{a\cdots}_{b\cdots}}$ by
\begin{align*}
  \dot S\indices{^{a\cdots}_{b\cdots}} & \defn u^c \nabla_c S\indices{^{a\cdots}_{b\cdots}}, \\
  \wbar{\nabla}_c S\indices{^{a\cdots}_{b\cdots}} & \defn h\indices{^a_p} \dotsm h\indices{_b^q} \dotsm  h\indices{_c^{\smash r}} \nabla_r S\indices{^{p\cdots}_{q\cdots}}.
\end{align*}
Moreover, define the \IdxMain{pstf}\emph{projected symmetric trace-free (PSTF) parts} of a vector field~$v^a$ and a tensor~$S_{ab}$ by
\begin{equation*}
  v^{\langle a \rangle} \defn h\indices{^a_b} v^b,
  \quad
  S_{\langle ab \rangle} \defn \big( h\indices{_{(a}^c} h\indices{_{b)}^d} - \tfrac{1}{n} h_{ab} h^{cd} \big) S_{cd}.
\end{equation*}

The kinematics of the world lines given by~$u$ is characterized by $\nabla_b u_a$ which can be decomposed as
\begin{equation}\label{eq:worldline_kinematics}
  \nabla_b u_a
  = \wbar{\nabla}_b u_a - \dot u_a u_b
  = \omega_{ab} + \Theta_{ab} - \dot u_a u_b
  = \omega_{ab} + \sigma_{ab} + \tfrac{1}{n} \Theta h_{ab} - \dot u_a u_b,
\end{equation}
with
\begin{equation*}
  \omega_{ab} \defn \wbar{\nabla}_{[b} u_{a]},
  \quad
  \sigma_{ab} \defn \wbar{\nabla}_{(a} u_{b)},
  \quad
  \Theta      \defn \Theta\indices{^a_a} = \nabla_a u^a,
  \quad
  \Theta_{ab} \defn \wbar{\nabla}_{\langle a} u_{b \rangle},
\end{equation*}
where $\omega_{ab}$ denotes the \IdxMain{vorticity}\emph{vorticity (twist) tensor}, $\Theta_{ab}$ the \IdxMain{expansion}\emph{expansion tensor}, $\Theta$ the \IdxMain{expansion-scalar}\emph{(volume) expansion scalar}, $\sigma_{ab}$ the \IdxMain{shear}\emph{shear tensor}, and $\dot u_a$ the \IdxMain{acceleration}\emph{acceleration}.
The expansion scalar measures the separation of neighbouring observers and can be used to introduce a length scale $a$ via the definition
\begin{equation*}
  \frac{\dot{a}}{a} \defn \frac{\Theta}{n}.
\end{equation*}
Furthermore, we shall define the magnitudes $\omega$ and $\sigma$ of the vorticity and the shear tensor by
\begin{equation*}
  \omega^2 \defn \tfrac{1}{2} \omega_{ab} \omega^{ab},
  \quad
  \sigma^2 \defn \tfrac{1}{2} \sigma_{ab} \sigma^{ab}.
\end{equation*}

Using the definition of the Ricci tensor, we find $2 \nabla_{[a} \nabla_{b]} u^a u^b = R_{ab} u^a u^b$ which, together with~\eqref{eq:worldline_kinematics}, yields an equation describing the dynamics of world line
\begin{equation}\label{eq:worldline_dynamics}
  - R_{ab} u^a u^b = \dot \Theta + \tfrac{1}{n} \Theta^2 + 2 (\sigma^2 - \omega^2) - \dot u\indices{^a_{;a}},
\end{equation}
where $R_{ab} u^a u^b$ is sometimes called the \IdxMain{ray-scalar}\emph{Raychaudhuri scalar}.

If the vorticity tensor vanishes, then, by Frobenius' theorem, $u$ will be orthogonal to a foliation of the spacetime by $n$-dimensional Euclidean hypersurfaces $\Sigma$ whose metric is given by (the pullback of) $h$.
If the manifold is also simply connected so that its first de Rham cohomology group vanishes, then there exists even a time function.\footnote{A local time function will always exist if the vorticity is vanishing because every manifold is locally contractible.}
Furthermore, in the case of vanishing vorticity, $\wbar{\nabla}$ can be directly identified with the Levi-Civita connection on the spatial slices~$\Sigma$ with metric tensor~$h$.
Therefore it can be used to derive relations between the Riemann curvature~$\wbar{R}_{abcd}$ of $(\Sigma, h)$ and the Riemann curvature~$R_{abcd}$ of the whole manifold~$(M, g)$.
The \IdxMain{gauss-codacci}\emph{Gauss--Codacci equation}~\cite[Chap.~10]{wald:1984} is
\begin{equation}\label{eq:spatial_curvature}
  \wbar{R}_{abcd} = h\indices{_a^{\smash p}} h\indices{_{\smash b}^{\smash q}} h\indices{_c^r} h\indices{_{\smash d}^s} R_{pqrs} - 2 \Theta_{a[c} \Theta_{d]b},
\end{equation}
where we have expressed the extrinsic curvature via by the expansion tensor.
Upon contraction with the projection $h$ the spatial Riemann tensor leads to the projected Ricci tensor and Ricci scalar.

\IdxRanEnd{cov-split}

\subsection{Cauchy surfaces}
\label{sub:cs_cauchy}

Although many results hold in greater generality, for physical reasons we will from now restrict our attention to a subclass of four-dimensional Lorentzian manifolds:

\begin{definition}
  A \IdxMain{spacetime}\emph{spacetime} is a connected, oriented (positively or negatively oriented with respect to the volume form induced by the metric) and time-oriented, four\hyp{}dimensional smooth\footnote{On some occasions we will work with spacetimes whose metric is not smooth. In all these cases we will mention the regularity of the metric explicitly.} Lorentzian manifold $(M, g, \pm, u)$.
  Usually we omit orientation and time-orientation and identify a spacetime with the underlying Lorentzian manifold $(M, g)$; sometimes, when no confusion can arise, we will even drop the metric and say that $M$ is a spacetime.
\end{definition}

Not all spacetimes are of physical significance because they admit features like closed timelike curves (\ie, `time machines') that are usually considered unphysical.
A spacetime that does not have any closed timelike curves (\ie, $x \nin I^+(x)$ for all $x \in M$) is said to satisfy the \IdxMain{chrono-cond}\emph{chronological condition}; no compact spacetime satisfies the chronological condition.
A slightly stronger notion is the \IdxMain{causal-cond}\emph{causality condition}, which forbids the existence of closed \emph{causal} curves.

\begin{figure}
  \centering
  \subbottom[]{
    \includegraphics[page=1]{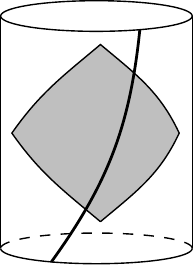}
    \label{fig:causally_convex1}}
  \hspace{1.5cm}
  \subbottom[]{
    \includegraphics[page=2]{fig_einstein_cylinder_1.pdf}
    \label{fig:causally_convex2}}
  \caption{Sets on the two-dimensional Einstein cylinder: (a) is causally convex, while (b) is not, as illustrated by the same causal curve in both figures.} \label{fig:causally_convex}
\end{figure}

Given an open set $U \subset M$, it is called \IdxMain{causal-convex}\emph{causally convex} if the intersection of any causal curve with $U$ is a connected set (possibly the empty set).
That is, a causal curve intersecting $U$ cannot leave the set and enter it again (\cref{fig:causally_convex}).
If for every neighbourhood $V$ of $x \in M$ there is a causally convex neighbourhood $U \subset V$ of~$x$, then the \IdxMain{strong-causal-point}\emph{strong causality condition} holds at~$x$.
A spacetime $(M, g)$ is called \IdxMain{strong-causal-man}\emph{strongly causal} if it is strongly causal at every point.
Finally, if $(M, g)$ is strongly causal and $J^+(x) \cap J^-(y)$ is compact for every pair $x, y \in M$, it is called \IdxMain{glob-hyp}\emph{globally hyperbolic}.

A \IdxMain{cauchy-surf}\emph{Cauchy surface} is a subset $\Sigma \subset M$ which is intersected exactly once by every inextendible timelike curve (and at least once by every inextendible causal curve).
Therefore, the causal shadow of a Cauchy surface is the entire spacetime.
Note that a Cauchy surface may be non-spacelike and non-smooth.
Our ability to pose a Cauchy problem on a spacetime requires the existence of Cauchy surfaces.

The existence of a Cauchy surface imposes strong conditions on the causality of the spacetime.
Given a spacetime, the following statements are equivalent \cite{bernal:2005,oneill:1983}: the spacetime
\begin{enumerate}[(a)]
  \item is globally hyperbolic,
  \item admits a Cauchy surface,
  \item admits a smooth time function compatible with the time-orientation.
\end{enumerate}

Therefore we will be mostly interested in globally hyperbolic spacetimes in the following.
Every globally hyperbolic spacetime $(M, g)$ is diffeomorphic to $\RR \times \Sigma$, where $\Sigma$ is diffeomorphic to a smooth spacelike Cauchy surface of~$M$ \cite{bernal:2003}.
If we denote by~$t$ the time function on~$M$, then the level sets of~$t$ are isometric to~$(\Sigma, g_t)$, where $g_t$ is a Riemannian metric on~$\Sigma$ depending smoothly on~$t$.
In fact, $(M, g)$ is isometric to $\RR \times \Sigma$ with the metric
\begin{equation*}
  - \beta\, \dif t \otimes \dif t + g_t,
\end{equation*}
where $\beta$ is a smooth function on $M$.

\IdxRanEnd{causality}

\section{General relativity}
\label{sec:general_relativity}

\IdxRanBegin{gen-rel}

Throughout this section let $(M, g, \pm, u)$ be an arbitrary spacetime, unless otherwise specified.

\subsection{The stress-energy tensor}

General relativity describes the interaction of classical matter with the geometrical structure of the Universe.
Giving a precise definition of matter is difficult if not impossible.
Mathematically matter is described in general relativity as a covariant or contravariant symmetric $2$-tensor field~$T_{ab}$ or~$T^{ab}$, the \IdxMain{stress-energy}\emph{energy-momentum} or \emph{stress-energy tensor} which is covariantly conserved
\begin{equation*}
  \nabla^b T_{ab} = 0.
\end{equation*}

The physical content of the stress-energy tensor~$T_{ab}$ becomes clearer once we perform a covariant splitting relative to~$u$ and decompose $T_{ab}$ as
\begin{equation}\label{eq:imperfect_fluid}
  T_{ab} = \rho u_a u_b + p h_{ab} + 2 q_{(a} u_{b)} + \pi_{ab},
\end{equation}
with
\begin{equation}\label{eq:T_ab_components}
  \rho \defn T_{ab} u^a u^b,
  \quad
  p \defn \tfrac{1}{3} T_{ab} h^{ab},
  \quad
  q_a \defn - T_{\langle a \rangle b} u^b,
  \quad
  \pi_{ab} \defn T_{\langle ab \rangle},
\end{equation}
where $\rho$ denotes the \IdxMain{energy-density}\emph{energy density}, $p$ the \IdxMain{pressure}\emph{pressure}, $q_a$ the \IdxMain{momentum-density}\emph{momentum density} (\viz, the dissipation relative to $u^a$) and $\pi_{ab}$ the \IdxMain{aniso-stress}\emph{anisotropic stress}.
By definition of these quantities, the \IdxMain{stress-energy-trace}\emph{trace of the stress-energy tensor} is given by
\begin{equation*}
  T = T\indices{^a_a} = - \rho + 3 p.
\end{equation*}

Instead of working with the stress-energy tensor it is thus possible to work with these four quantities and the equations of state relating them.
For example, in this decomposition the conservation equation $T\indices{^{ab}_{;b}} = 0$ splits into the \IdxMain{energy-conserv}\emph{energy conservation} and the \IdxMain{momentum-conserv}\emph{momentum conservation equation}
\begin{subequations}
  \label{eq:conservation}
  \begin{align}
    \label{eq:energy_conservation}
    \dot \rho + (\rho + p) \Theta + \pi^{ab} \sigma_{ab} + \wbar{\nabla}_a q^a + 2 \dot u_a q^a = 0,\\
    \label{eq:momentum_conservation}
    \wbar{\nabla}_a p + (\rho + p) \dot u_a + \pi_{ab} \dot u^b + \wbar{\nabla}^b \pi_{ab} + \dot q_a + \tfrac{4}{3} \Theta q_a + (\sigma_{ab} + \omega_{ab}) q^b = 0,
  \end{align}
\end{subequations}
which are the familiar equations for an inertial observer on Minkowski spacetime.

If the anisotropic terms in~\eqref{eq:imperfect_fluid} vanish ($q_a = 0 = \pi_{ab}$), the stress-energy takes the \IdxMain{perfect-fluid}\emph{perfect fluid} form
\begin{equation}
  \label{eq:perfect_fluid}
  T_{ab} = \rho u_a u_b + p h_{ab} = (\rho + p) u_a u_b + p g_{ab}
\end{equation}
and the conservation equations~\eqref{eq:conservation} reduce to
\begin{subequations}
  \label{eq:perfect_conservation}
  \begin{align}
    \label{eq:perfect_energy_conservation}
    \dot \rho + (\rho + p) \Theta = 0, \\
    \label{eq:perfect_momentum_conservation}
    \wbar{\nabla}_a p + (\rho + p) \dot u_a = 0.
  \end{align}
\end{subequations}
One further distinguishes between different forms of the general equation of state $p = p(\rho, s)$, where $s$ is the medium's specific entropy.
Namely, if $p = p(\rho)$, one speaks of a \IdxMain{barotropic-fluid}\emph{barotropic fluid}, and if $p = 0$, we have \IdxMain{pressure-free-matter}\emph{pressure-free matter}, also called \emph{`dust'}.

While general relativity imposes no \latin{a priori} constraints on the form of the matter content, many possibilities can be considered unphysical in classical physics.
The most common \IdxMain{energy-cond}\emph{energy conditions} are:
\begin{enumerate}[leftmargin=*,widest=(WEC)]
  \item[(NEC)] \IdxMain{NEC}\emph{null energy condition}
    \begin{equation*}
      T_{a b} v^a v^b \geq 0
      \qquad
      \text{for all lightlike $v^a$},
    \end{equation*}
    \ie, no negative energy densities along any lightray;
  \item[(WEC)] \IdxMain{WEC}\emph{weak energy condition}
    \begin{equation*}
      T_{a b} v^a v^b \geq 0
      \qquad
      \text{for all timelike $v^a$},
    \end{equation*}
    \ie, no observer detects negative energy densities;
  \item[(DEC)] \IdxMain{DEC}\emph{dominant energy condition}
    \begin{equation*}
      T_{a b} v^a w^b \geq 0
      \qquad
      \text{for all future-pointing timelike $v^a, w^b$},
    \end{equation*}
    \ie, the stress-energy flux is causal;
  \item[(SEC)] \IdxMain{SEC}\emph{strong energy condition}
    \begin{equation*}
      T_{a b} v^a v^b - \tfrac{1}{2} T \geq 0
      \qquad
      \text{for all unit timelike $v^a$},
    \end{equation*}
    \ie, gravity is attractive if $\Lambda = 0$, see \eqref{eq:raychaudhuri}.
\end{enumerate}
By continuity, WEC implies NEC and it is also not difficult to see that DEC implies WEC.
Moreover, the SEC \emph{does not} imply the WEC but only the NEC.
The reverse implications are generally not true.
Note that the strong energy condition is too strong for many physically relevant scenarios.

A generic feature of quantum field theory is that none of the energy conditions above will hold, even in an averaged sense, because of the Reeh--Schlieder theorem \cite{sanders:2009,haag:1996}.
Instead one finds lower bounds on the averaged energy density, called \emph{quantum energy inequalities}, see \cite{fewster:2006a,fewster:2012c} for a review of the subject.

\subsection{Einstein's equation}

\IdxMain{einstein-eq}\emph{Einstein's equation} with a cosmological constant are
\begin{equation}\label{eq:einstein}
  R_{ab} - \frac{1}{2} R g_{ab} + \Lambda g_{ab} \defn G_{ab} + \Lambda g_{ab} = \frac{8 \uppi \mathrm{G}}{\mathrm{c}^4} T_{ab},
\end{equation}
where $G_{ab}$ is called the \IdxMain{einstein-curv}\emph{Einstein (curvature) tensor} and $\Lambda$ the \IdxMain{cosmo-const}\emph{cosmological constant}.
The constants on the right-hand side are \emph{Newton's gravitational constant}~$\mathrm{G}$ and the \emph{speed of light}~$c$; we shall always choose units such that $8 \uppi \mathrm{G} = c = 1$.
Often we will also work with the trace of~\eqref{eq:einstein}:
\begin{equation}\label{eq:einstein_trace}
  -R + 4 \Lambda = T.
\end{equation}
Sometimes one absorbs the cosmological constant into the stress-energy tensor to emphasize its non-geometric nature.

Note that Einstein's tensor is covariantly conserved and symmetric, \ie,
\begin{equation}
  \nabla^b G_{ab} = 0
  \quad \text{and} \quad
  G_{ab} = G_{(ab)},
\end{equation}
so that the left-hand side of~\eqref{eq:einstein} is consistent with the right-hand side and gives a second-order differential equation in the metric.
In fact, we can derive~\eqref{eq:einstein} from the assumption that the stress-energy tensor of a matter field should be the source of a gravitational potential (the metric tensor) in a second-order differential equation.
Since the stress-energy tensor is conserved and symmetric, \eqref{eq:einstein}~is the only possibility.

Combining~\eqref{eq:einstein} with~\eqref{eq:einstein_trace}, Einstein's equations can be recast into the equivalent form $R_{ab} - \Lambda g_{ab} = T_{ab} - \tfrac{1}{2} T g_{ab}$.
Together with the imperfect fluid form~\eqref{eq:imperfect_fluid} of the stress-energy tensor this equation yields
\begin{equation*}
  R_{ab} - \Lambda g_{ab} = \tfrac{1}{2} (\rho + 3 p) u_a u_b + \tfrac{1}{2} (\rho - p) h_{ab} + 2 q_{(a} u_{b)} + \pi_{ab}.
\end{equation*}
Contractions with the time-orientation $u^a$ and the associated projector $h_{ab}$ to the orthogonal surfaces then give the three equation
\begin{subequations}\label{eq:contracted_ricci}
  \begin{align}
    \label{eq:contracted_ricci1}
    R_{ab} u^a u^b & = \tfrac{1}{2} (\rho + 3 p) - \Lambda, \\
    \label{eq:contracted_ricci2}
    R_{bc} h\indices{_a^b} u^c & = - q_a, \\
    \label{eq:contracted_ricci3}
    R_{cd} h\indices{_a^c} h\indices{_b^d} & = \tfrac{1}{2} (\rho - p) h_{ab} + \Lambda h_{ab} + \pi_{ab}.
  \end{align}
\end{subequations}

The Raychaudhuri scalar~\eqref{eq:worldline_dynamics} attains the physical meaning of the active gravitational energy by~\eqref{eq:contracted_ricci1} and thus we can state the \IdxMain{ray-eq}\emph{Raychaudhuri equation} as:
\begin{equation}\label{eq:raychaudhuri}
  \dot \Theta + \tfrac{1}{3} \Theta^2 + 2 (\sigma^2 - \omega^2) - \wbar{\nabla}_a \dot u^a - \dot u_a \dot u^a + \tfrac{1}{2} (\rho + 3 p) - \Lambda = 0.
\end{equation}
Hence we see that expansion, shear and matter (satisfying the strong energy condition) promote gravitational collapse, whereas a positive cosmological constant, vorticity and positive acceleration (due to non-gravitational forces inside the medium) oppose gravitational collapse.
One can also derive differential equations for the shear and the vorticity~\cite{tsagas:2008}.

\subsection{De Sitter spacetime}

A Lorentzian manifold $(M, g)$ is called a \IdxMain{vacuum-sol}\emph{vacuum solutions} to Einstein's equation~\eqref{eq:einstein} if both $T_{ab}$ and $\Lambda$ vanish globally.
That is, such a solution satisfies
\begin{equation*}
  R_{ab} = 0.
\end{equation*}
The vacuum Einstein equation is the most studied special case of the Einstein equation and many important and instructive examples fall into this class.
The most basic solution is \Idx{mink-spacetime}Minkowski spacetime $(M, \eta)$, see~\eqref{eq:minkowski}, which describes a featureless empty universe.
A rotation-symmetric vacuum solution is given by the famous Schwarzschild solution.

If we also allow for a cosmological constant but keep $T_{ab} = 0$, we need to solve
\begin{equation*}
  R_{ab} = \Lambda g_{ab}.
\end{equation*}
The maximally symmetric solutions of this equation fall into three classes depending on the sign of $\Lambda$.
We focus here on the case $\Lambda > 0$ called \IdxMain{de-sitter}\emph{de Sitter spacetime}; for $\Lambda = 0$ one obtains Minkowski spacetime and for $\Lambda < 0$ the so-called \IdxMain{anti-de-sitter}\emph{anti-de Sitter spacetime}.

Four-dimensional de Sitter spacetime is the hyperboloidal submanifold of five\hyp{}dimensional Minkowski spacetime with coordinates $(y^0, y^1, \dotsc, y^4)$ that satisfies the equation
\begin{equation}\label{eq:hyperboloid}
  -(y^0)^2 + (y^1)^2 + \dotsb + (y^4)^2 = H^{-2},
  \quad
  \Lambda = 3 H^2,
\end{equation}
where $H > 0$ is called the \IdxMain{hubble-const}\emph{Hubble constant}; thus de Sitter space is topologically $\RR \times S^3$.
The pullback of the Minkowski metric to this space yields a Lorentzian metric.

A global \IdxMain{de-sitter-chart}coordinate chart with coordinates $(t, \chi, \theta, \varphi)$ can be defined by
\begin{equation*}
  y^0  = H^{-1} \sinh(H t),
  \quad
  y^i  = H^{-1} \cosh(H t) z^i,
\end{equation*}
where $z^i = z^i(\chi, \theta, \varphi)$, $i = 1,2,3,4$, are the usual spherical coordinates on $S^3$ with unit radius.
In these coordinates the induced metric on de Sitter spacetime reads\IdxMain{de-sitter-metric}
\begin{align*}
  g & = - \dif t \otimes \dif t + H^{-2} \cosh(H t) \big( \dif\chi \otimes \dif\chi + \sin^2\!\chi\, ( \dif\theta \otimes \dif\theta + \sin^2\!\theta\, \dif\varphi \otimes \dif\varphi) \big) \\
  & = - \dif t \otimes \dif t + H^{-2} \cosh(H t)\, g_{\mathbb{S}^3},
\end{align*}
where $g_{\mathbb{S}^3}$ denotes the standard metric on the $3$-sphere.
It is clear, that de Sitter spacetime is globally hyperbolic and the constant time hypersurfaces are Cauchy surfaces.

Another coordinate chart with coordinates $(t, x^1, x^2, x^3)$ is given by
\begin{align*}
  y^0 & = H^{-1} \sinh(H t) + \tfrac{1}{2} H \e^{H t} r^2, \\
  y^i & = \e^{H t} x^i, \\
  y^4 & = H^{-1} \cosh(H t) - \tfrac{1}{2} H \e^{H t} r^2,
\end{align*}
where $i = 1,2,3$ and $r^2 = (x^1)^2 + (x^2)^2 + (x^3)^2$.
It covers only the half of de Sitter spacetime which satisfies $y^0 + y^4 \geq 0$ and is called the \emph{cosmological patch} or \IdxMain{de-sitter-cosmo}\emph{cosmological chart}; it is diffeomorphic to four-dimensional Euclidean space.
Within the cosmological chart, the metric reads
\begin{equation*}
  g = - \dif t \otimes \dif t + \e^{2 H t} \delta_{ij}\, \dif x^i \otimes \dif x^j = (H \tau)^{-2} ( - \dif\tau \otimes \dif\tau + \delta_{ij}\, \dif x^i \otimes \dif x^j ),
\end{equation*}
where we define the \IdxMain{conf-time-sitter}\emph{conformal time} $\tau \in (0, \infty]$ via
\begin{equation*}
  t \mapsto \tau(t) = - \int_t^\infty a(t')^{-1}\, \dif t'.
\end{equation*}
Note that this metric is a special case of a Friedmann\hyp{}Lemaître\hyp{}Robertson\hyp{}Walker metric, to be discussed in more generality in \cref{sec:cosmology}.

On de Sitter spacetime, \IdxMain{world-fun-sitter}Synge's world function is known in closed form and is closely related to the geodesic distance on five\hyp{}dimensional Minkowski spacetime.
In fact, since the chord length between two points $x, x'$ on de Sitter space considered as the hyperboloid~\eqref{eq:hyperboloid} is
\begin{equation}\label{eq:deSitter_Z}
  Z(x, x') \defn H^2 \eta_{\alpha\beta} y^\alpha(x) y^\beta(x'),
\end{equation}
Synge's world function on de Sitter spacetime is given by
\begin{equation}\label{eq:synge_desitter}
  \cos\big( H \sqrt{2 \sigma(x, x')} \big) = Z(x, x')
\end{equation}
for $\abs{Z} \leq 1$, \ie, for $x'$ are not timelike to~$x$.
\Cref{eq:synge_desitter} can be analytically continued to timelike separated points $x, x'$, whence we find
\begin{equation*}
  \cosh\big( H \sqrt{-2 \sigma(x, x')} \big) = Z(x, x') \quad \text{for }\abs{Z} > 1.
\end{equation*}
The possible values for the chord length $Z$ are illustrated in a conformal diagram of de Sitter spacetime in \cref{fig:de_sitter_Z}.
In the cosmological chart, the function $Z$ attains the simple form
\begin{equation}\label{eq:deSitter_Z_cosmo}
  Z(x, x') = 1 + \frac{(\tau - \tau')^2 - (\vec{x} - \vec{x}')^2}{2 \tau \tau'} = \frac{\tau^2 + \tau'{}^2 - (\vec{x} - \vec{x}')^2}{2 \tau \tau'},
\end{equation}
where $x = (\tau, \vec{x})$ and $x' = (\tau', \vec{x}')$.
Note that the fraction on the right-hand side is a rescaling of Synge's world function on Minkowski spacetime by $-2 \tau \tau'$.

\begin{figure}
  \centering
  \includegraphics{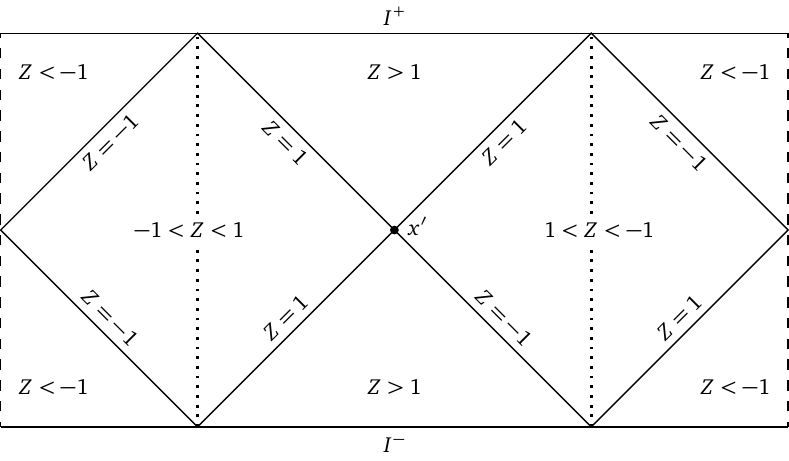}

  \caption{De Sitter spacetime conformally mapped to a cylinder. The cylinder is unwrapped and the left and right dotted edges must be identified. $x'$ is any (fixed) point of de Sitter spacetime and $Z(x, x')$ is shown for all choices of $x$. The dotted line represents $Z=0$. See also \cite[Fig.~2]{allen:1985}.}
  \label{fig:de_sitter_Z}
\end{figure}

\IdxRanEnd{gen-rel}

\section{Cosmology}
\label{sec:cosmology}

\IdxRanBegin{cosm}

When studying cosmological problems on usually describes the Universe by a homogeneous and isotropic spacetime, \ie, one assumes the Friedmann\hyp{}Lemaître\hyp{}Robertson\hyp{}Walker model.
This standard prescription underpins the so called standard model of cosmology, the $\upLambda$CDM model of cold dark matter with a cosmological constant.
However, from the presence of structure in the Universe (\eg, galaxy clusters, galaxies, stars, \etc) we can directly deduce that the Universe is \emph{neither} homogeneous \emph{nor} isotropic.
Instead it is believed that one can describe the Universe as nearly homogeneous and isotropic on cosmological scales so that observed Universe can be modelled as a perturbation around a FLRW spacetime and homogeneity and isotropy hold in an averaged sense.

The cosmic microwave background (CMB) and the galaxy distribution are often believed to give a direct justification of this idea, but since all observations are along the past light cone and do not measure an instantaneous spatial surface one can only directly observe isotropy.
The link to homogeneity is less clear but if \emph{all} observers measure an isotropic CMB, it can be shown that the spacetime is also homogeneous.
In this respect, the current cosmological model is heavily influenced by the philosophical paradigm in cosmology, the Copernican principle, that we live in no distinguished region of the Universe and that other observers would observe the same.
But even under this additional assumption it is not completely clear that near homogeneity follows since the CMB is not exactly but only nearly isotropic; see \cite{ehlers:1968,ellis:2012,maartens:1995,stoeger:1995} for a discussion of this issue.

Ignoring these shortcomings, we follow the standard approach and assume that we live in an `almost'-FLRW spacetime, \ie, a universe which is correctly described by a perturbation around a FLRW background.
Because such an assignment of a background spacetime to the physical perturbed spacetime is not unique, one has to deal with a gauge problem.

Accordingly, we will begin by introducing the FLRW model and its properties.
Then we will study the general gauge problem and its application in the case of a perturbation around a FLRW spacetime.

\subsection{Friedmann\hyp{}Lemaître\hyp{}Robertson\hyp{}Walker spacetimes}
\label{sub:cos_flrw}

One of the simplest solutions of the Einstein equation is the \IdxMain{flrw-spacetime}\emph{Friedmann\hyp{}Lemaître\hyp{}Robertson\hyp{}Walker (FLRW) solution}, which describes a \emph{homogeneous} and \emph{isotropic}, expanding or contracting universe.
Let $(M, g)$ be a spacetime, which is still to be determined, with a preferred flow (time-orientation)~$u$.
As a consequence of (spatial) isotropy, \ie, the absence of any preferred (spatial) direction, vorticity, shear and acceleration have to vanish:
\begin{equation*}
  \omega_{ab} = 0, \quad \sigma_{ab} = 0, \quad \dot u_a = 0.
\end{equation*}
Also the anisotropic terms of the stress-energy tensor have to vanish, \ie,
\begin{equation*}
  q_a = 0, \quad \pi_{ab} = 0
\end{equation*}
so that the stress-energy tensor takes the perfect fluid form \eqref{eq:perfect_fluid}.
Consequently the FLRW model is completely determined by its energy density~$\rho$, pressure~$p$ and the expansion~$\Theta$.

Since vorticity and acceleration vanish, the spacetime $(M, g)$ is foliated by surfaces $\Sigma$ orthogonal to~$u$, which are required to be homogeneous by assumption and thus $\rho$, $p$ and $\Theta$ are constant on these surfaces.
For the same reason there exists locally a time function~$t$, called \IdxMain{cosm-time}\emph{cosmological time}, that measures proper time, defined up to a constant shift, such that $u_a = - \nabla_a t$.
Henceforth we will always assume the time function exists globally so that the resulting spacetime is stably causal and, in particular, globally hyperbolic.

One can show that the projected Ricci tensor, \ie, the Ricci tensor for the spatial slices $\Sigma$ (\cf~\eqref{eq:spatial_curvature}), simplifies significantly to
\begin{equation*}
  \wbar{R}_{ab}
  = \tfrac{1}{3} \wbar{R} h_{ab}
  = \tfrac{2}{3} (\rho + \Lambda - \tfrac{1}{3} \Theta^2) h_{ab}.
\end{equation*}
We may recast this equation into the more familiar form of the \IdxMain{fst-friedmann}\emph{first Friedmann equation}
\begin{equation}\label{eq:1st_friedmann}
  H^2 = \tfrac{1}{3} \big( \rho + \Lambda - \tfrac{1}{2} \wbar{R} \big).
\end{equation}
with the famous \IdxMain{hubble-fun}\emph{Hubble parameter} or \emph{Hubble function} $H \defn \Theta / 3 = \dot a / a$.
The \IdxMain{snd-friedmann}\emph{second Friedmann equation} is a special case of the Raychaudhuri equation \eqref{eq:raychaudhuri} and reads
\begin{equation}\label{eq:2nd_friedmann}
  \dot H + H^2 + \tfrac{1}{6} (\rho + 3 p) - \tfrac{1}{3} \Lambda = 0.
\end{equation}
These two equations can be complemented with the energy conservation equation for the perfect fluid \eqref{eq:perfect_energy_conservation} to show that $\wbar{R} a^2$ is a constant.

We define $K \defn \wbar{R} a^2 / 3$ and notice that the initial value for the scale factor $a > 0$ is arbitrary so that we can restrict its value to $K = -1, 0, +1$.
The sign of $K$ determines the local geometry of the spatial sections:
$K = -1, 0, +1$ correspond respectively to a hyperbolic, a flat and a elliptic geometry.
The topology of the spatial sections is not completely determined by~$K$ and, in fact, there are many possibilities.
While $K = +1$ implies that the spatial sections are compact, both compact and non-compact spatial section are possible for $K = -1, 0$.

Which of these three distinct values for~$K$ is realized depends on the energy density contained in the universe.
If the energy density takes on the critical value $\rho = \rho_c \defn 3 H^2 - \Lambda$, the spatial surfaces will be flat, while $\rho > \rho_c$ leads to a spherical and $\rho < \rho_c$ leads to a hyperbolical geometry.
Furthermore, according to \eqref{eq:2nd_friedmann}, an accelerating expansion ($\ddot a > 0$) occurs when $\rho + 3 p < 0$ (assuming $\Lambda = 0$), \ie, when the strong energy condition is violated.

A metric tensor realizing the FLRW universe in a local comoving coordinate frame $x^\mu = (t, r, \theta, \phi)$ with respect to~$u^a$ (\ie, $u = \partial_t$) is given (locally) by the \IdxMain{flrw-metric}\emph{FLRW metric}
\begin{equation}\label{eq:flrw_metric_comoving}
  g = - \dif t \otimes \dif t + a(t)^2 \big( \dif r \otimes \dif r + f_K(r)^2\, (\dif\theta \otimes \dif\theta + \sin^2 \theta\, \dif\phi \otimes \dif\phi) \big),
\end{equation}
where
\begin{equation*}
  f_K(r) \defn
  \begin{cases}
    \sin r  & \text{for} \; K = +1  \\
    r       & \text{for} \; K = 0   \\
    \sinh r & \text{for} \; K = -1.
  \end{cases}
\end{equation*}
Therefore, a spacetime is a FLRW spacetime if and only if its metric attains locally the form~\eqref{eq:flrw_metric_comoving} in some coordinate system and the time-orientation is given by~$\dif t$.

Let us define \IdxMain{conf-time-flrw}\emph{conformal time}~$\tau$ via $\dif\tau = \dif t / a$.
That is we set
\begin{equation*}
  \tau(t) = \tau_0 - \int_{t_0}^t \frac{1}{a(t')}\, \dif t'
  \quad \text{or} \quad
  \tau(t) = \int_t^{t_1} \frac{1}{a(t')}\, \dif t' - \tau_1
\end{equation*}
with arbitrary $\tau_0, \tau_1$ and (possibly infinite) $t_0, t_1$ such that the integral converges.
Rewriting \eqref{eq:flrw_metric_comoving} with respect to conformal time, we obtain the alternative metric tensor, the \IdxMain{flrw-conf-metric}\emph{conformal FLRW metric},
\begin{equation}
  \label{eq:flrw_metric_conformal}
  g = a(\tau)^2 \big( - \dif\tau \otimes \dif\tau + \dif r \otimes \dif r + f_K(r)^2\, (\dif\theta \otimes \dif\theta + \sin^2 \theta\, \dif\phi \otimes \dif\phi) \big)
\end{equation}
and thus we notice that a flat FLRW universe is locally conformally isometric to Minkowski space.

Throughout this thesis we mostly work with flat FLRW universes and will fix $M \simeq \RR^4$ in that case.
Therefore we can choose globally Cartesian coordinates for the spatial sections so that we have
\begin{equation*}
  g = - \dif t \otimes \dif t + a(t)^2 \delta_{ij}\, \dif x^i \otimes \dif x^j = a(\tau)^2 \big( - \dif\tau \otimes \dif\tau + \delta_{ij}\, \dif x^i \otimes \dif x^j \big),
\end{equation*}
where the coordinate functions $x^i$ range over the entire real line.

\subsection{Gauge problem}
\label{sub:cos_gauge}
\IdxMain{gauge-prop}

The correspondence of a \IdxMain{background-st}\emph{background spacetime} $(M, g_0)$ to the \IdxMain{physical-st}\emph{physical spacetime} $(M, g)$ is equivalent to the specification of a diffeomorphism $\psi : M \to M$.
Given any other diffeomorphism $\psi'$ between the background spacetime and the physical spacetime, we can construct $\varphi = \psi^{-1} \circ \psi'$.
For a tensor field~$S$ on~$M$ to be \IdxMain{gauge-inv}\emph{gauge invariant} we require that $\pb{\varphi} S = S$.
Then it holds that
\begin{equation*}
  \delta S = S - \pb{\psi} S = S - \pb{\psi'} S
\end{equation*}
and we say that the perturbation~$\delta S$ is a gauge invariant quantity.
Otherwise, the perturbation~$\delta S$ is completely dependent on the mapping~$\psi$ and even if $\psi$ is specified it will not be an observable quantity unless the correspondence~$\psi$ itself has been specified via an observational procedure (\eg\ via an averaging approach, see the discussion in~\cite{ellis:1989}).
Therefore the only possibilities for a tensor field~$\delta S$ to be gauge invariant is that $S$ is a constant scalar, a zero tensor or a product of Kronecker deltas~\cite[Lem.~2.2]{stewart:1974}).

A slightly different picture which clarifies the perturbation aspect of the gauge problem is sometimes more helpful.
Since the local flow of any vector field~$v$ on~$M$ is a diffeomorphism $\psi_\varepsilon : M \to M$, where $\varepsilon$ is contained in a sufficiently small interval around~$0$, we can describe the gauge problem alternatively in terms of vector fields.
Given a (differentiable) tensor field~$S$, we have by definition
\begin{equation*}
  \lieD_v S = \lim_{\varepsilon \to 0} \frac{1}{\varepsilon} (\pb{\psi_\varepsilon} S - S)
\end{equation*}
or, in other words,
\begin{equation*}
  \pb{\psi_\varepsilon} S = S + \varepsilon \lieD_v S + \bigO(\varepsilon^2).
\end{equation*}
The gauge choice is now encoded in~$v$, which is completely arbitrary, and we see that $S$ is gauge invariant to first order if only if $\lieD_v S = 0$ for all~$v$.
For $S$ to be exactly gauge invariant (as discussed above) it must hold that $\lieD^n_v S = 0$ for all~$n$.

In the light of this discussion, two approaches to perturbations of FLRW spacetimes seem expedient:
Since any quantity describing the inhomogeneity or anisotropy of the perturbation of the FLRW spacetime must vanish on the background, it will be gauge-invariant.
This leads to the ``$1+3$ covariant and gauge-invariant'' approach of~\cite{hawking:1966,lyth:1988,ellis:1989}.
The alternative and more commonly used approach due to~\cite{bardeen:1980,kodama:1984} constructs gauge invariant quantities directly from the perturbed metric and stress-energy tensor.

\subsection{Decomposition of tensor fields}
\label{sub:cos_decomposition}
\IdxMain{tensor-decomp}

Before we can discuss metric perturbations, we need to investigate the decomposition of vector and rank-$2$ tensor fields into their `scalar', `vector' and `tensor' parts~\cite{stewart:1990}.

Consider a non-compact,\footnote{The decomposition works also for compact manifold but is non-unique in that case.} boundaryless, orientable $3$-dimensional Riemannian manifold $(\Sigma, \gamma)$ with covariant derivative denoted by a vertical bar (\eg, $\phi_{|i}$).
Using the Hodge decomposition theorem, we can uniquely decompose any sufficiently fast decaying smooth one-form~$B$ as
\begin{equation}
  \label{eq:vector_decomposition}
  B_i = \phi_{|i} + S_i,
\end{equation}
where $\phi$ is a scalar function and $S_i$ is divergence-free.

Any $(0,2)$~tensor field $C$ can be decomposed as
\begin{equation*}
  C_{ij} = \tfrac{1}{3} \delta_{ij} \gamma^{kl} C_{kl} + C_{[ij]} + C_{\langle ij \rangle},
\end{equation*}
\ie, into its trace, its antisymmetric part (equivalent to a vector field via the Hodge operator $\hodge$) and its trace-free symmetric part~$C_{\langle ij \rangle}$.
According to~\cite[Thm.~4.3]{cantor:1981}, as a consequence of the Fredholm alternative and an application of~\eqref{eq:vector_decomposition}, a sufficiently fast decaying $C$ can be further uniquely decomposed so that
\begin{equation}
  \label{eq:tensor_decomposition}
  C_{\langle ij \rangle} = \phi_{|\langle ij \rangle} + S_{(i|j)} + h_{ij},
\end{equation}
where $\phi$ and $S$ are as before and $h$ is a trace- and divergence-free $(0,2)$~tensor field.

\subsection{Metric perturbations}
\label{sub:cos_perturbations}
\IdxMain{metric-pert}

As background spacetime we take $(M, g_0)$ with $g_0$ the flat FLRW metric in conformal time, \ie,
\begin{equation*}
  g_0 = a^2 (- \dif\tau \otimes \dif\tau + \delta_{ij}\, \dif x^i \otimes \dif x^j),
\end{equation*}
which will be used throughout this section to raise and lower indices.
Then we define the \IdxMain{perturbed-flrw-metric}\emph{perturbed FLRW metric}
\begin{equation}\begin{split}
  \label{eq:flrw_metric_perturbed}
  a^{-2} g & = -(1 + 2 \phi)\, \dif\tau \otimes \dif\tau + B_i\, (\dif x^i \otimes \dif\tau + \dif\tau \otimes \dif x^i) \\&\quad + \big( (1 - 2 \psi) \delta_{ij} + 2 C_{ij} \big)\, \dif x^i \otimes \dif x^j
\end{split}\end{equation}
where the scalar fields $\phi$, $\psi$, the 3-vector field~$B$ and the trace-free $3$-tensor field~$C$ are considered `small', \ie, \eqref{eq:flrw_metric_perturbed} should be understand as a $1$-parameter family of metrics $g_\varepsilon$ and each of the perturbation variables as multiplied with a small parameter $\varepsilon$.
However, to avoid cluttering the equations unnecessarily, one usually omits the $\varepsilon$.

The ten degrees of freedom encoded in these four quantities exhibit the full gauge dependence.
Before studying the behaviour of~$g$ under gauge transformations, let us rewrite \eqref{eq:flrw_metric_perturbed} using the decompositions \eqref{eq:vector_decomposition} and \eqref{eq:tensor_decomposition}:
\begin{equation*} 
  B_i = B_{,i} - S_i, \quad C_{ij} = E_{,ij} + F_{(i,j)} + \tfrac{1}{2} h_{ij}
\end{equation*}
with two scalar fields $E, B$, two divergence-free 3-vector fields $S, F$ and a trace-free, transverse 3-tensor field $h$.
Therefore,
\begin{equation}
  \label{eq:flrw_metric_perturbation}
  \begin{split}
    a^{-2} g =
    & -(1 + 2 \phi)\, \dif\tau \otimes \dif\tau + (B_{,i} - S_i)\, (\dif x^i \otimes \dif\tau + \dif\tau \otimes \dif x^i) + {} \\
    & + \big( (1 - 2 \psi) \delta_{ij} + 2 E_{,ij} + 2 F_{(i,j)} + h_{ij} \big)\, \dif x^i \otimes \dif x^j.
  \end{split}
\end{equation}
This decomposition allows us to consider three types of perturbations separately, namely, the \IdxMain{scalar-pert}\emph{scalar perturbations} caused by $\phi, \psi, E, B$, the \IdxMain{vector-pert}\emph{vector perturbations} due to $S, F$ and the \IdxMain{tensor-pert}\emph{tensor perturbations} caused by $h$.
The inverse of the perturbed metric \eqref{eq:flrw_metric_perturbed} up to first order is
\begin{equation*} 
  a^2 g^{-1} = -(1 - 2 \phi)\, \partial_\tau \otimes \partial_\tau + B^i\, (\partial_i \otimes \partial_\tau + \partial_\tau \otimes \partial_i) + \big( (1 + 2 \psi) \delta^{ij} - 2 C^{ij} \big)\, \partial_i \otimes \partial_j.
\end{equation*}

Let us now determine the transformation behaviour of these perturbation variables by calculating the gauge dependence of $g$ up to linear order, \ie, the dependence of $\lieD_\xi {g_{ab}} = 2 \nabla_{(a} \xi_{b)} = 2 {}^0\nabla_{(a} \xi_{b)} + \bigO(\varepsilon)$ on $\xi^\mu = (\xi^0, \xi^{,i} + \xi^{i})$ with $\xi\indices{^i_{,i}} = 0$, where ${}^0\nabla$ is the covariant derivative on the background spacetime.
Hence we have the transformations:
\begin{align*}
  g_{00} & \to g_{00} - 2 a^2 ( \xi^0{}' + \mathcal{H} \xi^0 ), \\
  g_{0i} & \to g_{0i} + a^2 ( \xi_{,i}' + \xi_i' - \xi\indices{^0_{\smash{,i}}} ), \\
  g_{ij} & \to g_{ij} + 2 a^2 ( \xi_{,(ij)} + \xi_{(i,j)} + \mathcal{H} \delta_{ij} \xi^0 ),
\end{align*}
where we define the \IdxMain{conf-hubble-fun}\emph{conformal Hubble parameter} $\mathcal{H} \defn a H = a'/a$ and a prime denotes a derivative with respect to the conformal time (\eg, $a' = \partial_\tau a$).
It follows that the perturbation variables transform as
\begin{align*}
  \phi & \to \phi + \xi^0{}' + \mathcal{H} \xi^0 &
  \psi & \to \psi - \mathcal{H} \xi^0 \\
  B    & \to B + \xi' - \xi^0 &
  E    & \to E + \xi \\
  S_i  & \to S_i - \xi_i'&
  F_i  & \to F_i + \xi_i
\end{align*}
and $h_{ij} \to h_{ij}$.

Constructing linear combinations of the perturbation variables, we can now construct several first-order gauge invariant quantities.
Two simple (first-order) gauge-invariant functions characterizing the scalar perturbations are the \IdxMain{bardeen}\emph{Bardeen potentials} $\Phi, \Psi$ \cite{bardeen:1980}
\begin{subequations}
  \label{eq:bardeen_potentials}
  \begin{align}
    \label{eq:bardeen_phi}
    \Phi & \defn \phi - \mathcal{H} \sigma - \sigma', \\
    \label{eq:bardeen_psi}
    \Psi & \defn \psi + \mathcal{H} \sigma
  \end{align}
\end{subequations}
in terms of the shear potential $\sigma \defn E' - B$.

\IdxRanEnd{cosm}
\IdxRanEnd{lor-geo}


\chapter{Analysis}
\label{cha:analysis}

\section*{Summary}

The contents of the present chapter may be generously subsumed under ``analysis'', hence the title.
We begin with a review of topological vector spaces (\cref{sec:topological_vector_spaces}) starting from the basics of topology and locally convex topological spaces with the particular example of function spaces, and finishing with a discussion duality pairings and tensor products.
The material presented in this section is a excerpt of the results found in the books \cite{kelley:1955,jarchow:1981,treves:1967,schaefer:1971,pietsch:1972}.
As always, although we refer the reader to the books cited above for proofs of the various statements, care has been taken to present the material in a structured way so that no results should appear surprising.

The second section (\cref{sec:algebra}) concerns the theory of ${}^*$-algebras and thus forms the foundation of the algebraic approach to quantum field theory to be discussed in the later chapters.
Here we will discuss the general features of ${}^*$-algebras and $C^*$-algebras, states with an emphasize on the Gel'fand--Naimark--Segal reconstruction theorem, and the Weyl $C^*$-algebra.
For this section we refer the reader to the books \cite{inoue:1998,fragoulopoulou:2005}.
Details on the Weyl algebra can be found in \cite{bar:2007,moretti:2013,slawny:1972,manuceau:1973}.

Functional derivatives have always played an important role in physics but on infinite dimensional spaces, which appear naturally in quantum field theory, they are very subtle.
We introduce two different notions of functional derivatives in the third section (\cref{sec:derivatives}): the directional, or Gâteaux, derivative and the Fréchet derivative.
We will show that the two derivatives are closely related.
Proofs and more information on the directional derivative can be found in \cite{hamilton:1982,neeb:2005}.

In the fourth section (\cref{sec:fixed_point}) we will pick apart the Banach fixed-point theorem and prove several statements on the existence and uniqueness of fixed-points and their properties.
These results will form the basis of the proof existence of solutions to the semiclassical Eisntein equation to be presented in \cref{cha:einstein_solutions}.
Some of the results presented in the fourth section are already shown in~\cite{pinamonti:2015} by Pinamonti and the author.

The theory of distributions plays a fundamental role in quantum field theory: quantum fields can be understood as distributions.
Therefore, we will discuss in detail distributions and microlocal analysis in the fifth section (\cref{sec:microlocal}).
That is, we will review the basic definition of distributions and distributional sections on manifolds, the nuclearity property of the associated function spaces and the Schwartz kernel theorem, the Fourier transform and Schwartz functions and distributions, the wavefront set of distribution and distributional sections, including its behaviour under various operations such as pullbacks, and finally the important propagation of singularities theorem.
Good references for this section are the books by Hörmander \cite{hormander:1985,hormander:1985a,hormander:1990,hormander:1983} and also \cite{treves:1967,strohmaier:2009}.
Several recent results on the properties of spaces of distributions may be found in~\cite{dabrowski:2014}.
An excellent introduction to the wavefront set with several examples is \cite{brouder:2014} by the same author.

In the last section of this chapter (\cref{sec:wave}) we discuss wave equations.
Since bosonic quantum field typically satisfy a wave equation and the Dirac-type equations satisfied by fermionic quantum fields are closely related, an understanding of the solutions of these equations is very important.
With some minor modifications we mostly follow \cite{bar:2012,bar:2007} to introduce the advanced, retarded and causal propagators of normally and pre-normally hyperbolic differential operators and their relation to the Cauchy problem.
Several extensions of the results in \cite{bar:2012,bar:2007}, that are also partially stated here, can be found in~\cite{wrochna:2013,muhlhoff:2011}.

\section{Topological vector spaces}
\label{sec:topological_vector_spaces}

\IdxRanBegin{topology}

\subsection{Topology}
\label{sub:tvs_topology}

A \IdxMain{top-space}\emph{topological space} is a set~$X$ of points with a notion of neighbourhoods.
More concretely, besides~$X$ it consists of a collection~$\tau$ of subsets, the \IdxMain{open-set}\emph{open sets}, such that
\begin{enumerate}[(a)]
  \item both $\emptyset$ and $X$ are open,
  \item the union of any collection of open sets is open,
  \item any finite intersection of open sets is open.
\end{enumerate}
We call the collection~$\tau$ the \emph{topology} of~$X$; examples are illustrated in \cref{fig:hausdorff}.

There are two topologies that can be defined for every set.
The \IdxMain{discr-top}\emph{discrete topology} of a set contains all its subsets, whereas the \IdxMain{triv-top}\emph{trivial topology} consists only of the empty set and the set itself.

Given two topologies~$\tau$ and~$\tau'$ on the same set, we can compare them:
If $\tau \subset \tau'$, we say that $\tau'$ is \IdxMain{fine-top}\emph{finer} than $\tau$ and that $\tau$ is \IdxMain{coarse-top}\emph{coarser} than $\tau'$.
It follows that for any set the trivial topology and the discrete topology are respectively the coarsest and finest possible topology.

The complements of the open sets are the \IdxMain{closed-set}\emph{closed sets}.
By the de Morgan laws, they have the following properties: both $\emptyset$ and $X$ are close, the intersection of any collection of closed sets is closed and any finite union of closed sets is closed.
It is possible for a set to be both closed and open or neither.
If the only sets in~$X$ that are both open and closed are $\emptyset$ and $X$, then $X$ is \IdxMain{conn-set}\emph{connected}.

The \IdxMain{set-closure}\emph{closure} $\closure U$ of a set $U \subset X$ is the intersection of all closed set that contain $U$.
The subset $U$ is called \IdxMain{set-dense}\emph{dense} in $X$ if its closure is $X$: $\closure U = X$.

A \IdxMain{neighbourhood}\emph{neighbourhood} of a point~$x \in X$ is an open set~$U \in \tau$ that contains~$x$.
If for each pair of distinct points~$x, y$ in a topological space~$X$ there exist disjoint neighbourhoods~$U$ and~$V$ of~$x$ and~$y$, then it is called a \IdxMain{hausdorff}\emph{Hausdorff space} (\cref{fig:hausdorff}).
Obviously, endowing a set with the discrete topology turns it into a Hausdorff space.

\begin{figure}
  \centering
  \includegraphics{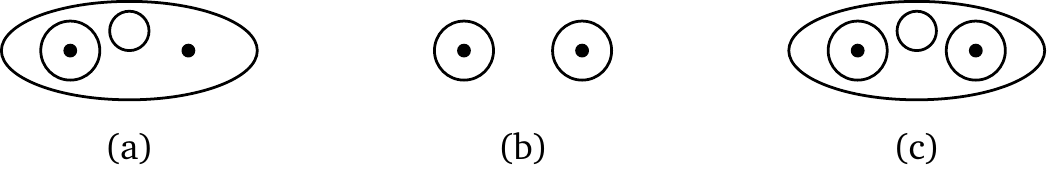}
  \caption{Collections of subsets of two points: (a) is a topology, (b) is not a topology (the empty set and the whole set are missing), (c) is the discrete topology (thus a Hausdorff topology).}
  \label{fig:hausdorff}
\end{figure}

A \IdxMain{top-basis}\emph{basis} (or \emph{base}) of a topological space~$X$ is a collection~$\mathfrak{B}$ of open sets in~$X$ such that every open set in~$X$ can be written as the union of elements of~$\mathfrak{B}$; one can say that the topology of~$X$ is generated by~$\mathfrak{B}$.
If the basis of $X$ is countable, we say that $X$ is \IdxMain{snd-count}\emph{second-countable}.
A \IdxMain{top-basis-loc}\emph{local basis} $\mathfrak{B}(x)$ of a point~$x \in X$ is defined as a collection of neighbourhoods of~$x$ such that every neighbourhood of~$x$ is a superset of an element of the local basis.
The union of all local bases is a basis.
If every point has a countable basis, we say that $X$ is \IdxMain{fst-count}\emph{first-countable}.
Clearly, a second-countable space is first-countable but the implication cannot be reversed.

We say that a map $f : X \to Y$ between two topological spaces~$X, Y$ is \IdxMain{continuous}\emph{continuous} if for all~$x \in X$ and all neighbourhoods~$V$ of~$f(x)$ there is a neighbourhood~$U$ of~$x$ such that $f(U) \subset V$.
The \IdxMain{cont-fun-space}\emph{space of all continuous maps} between~$X$ and~$Y$ is denoted $C(X, Y)$ or $C^0(X, Y)$ and by $C(X) = C^0(X)$ if $Y = \RR$.
A bijective map between two topological spaces~$X, Y$ is a \IdxMain{homeomorph}\emph{homeomorphism} is both $f$ and $f^{-1}$ are continuous.
$X$ and $Y$ are then called \emph{homeomorphic}, \ie, they are topologically equivalent.

Let $X$ be a set and $Y_i, i \in I$ a family of topological spaces with topologies~$\tau_i$.
Given maps $f_i : X \to Y_i$, the \IdxMain{init-top}\emph{initial topology} on~$X$ is the \emph{coarsest} topology such that the $f_i$ are continuous.
It is generated by the finite intersections of $\{ f_i^{-1}(U) \mid U \in \tau_i \}$.
Examples of the initial topology are the \IdxMain{sub-top}\emph{subspace topology}, \ie, the topology induced on a subspace~$X \subset Y$ by the inclusion map $\iota : X \hookrightarrow Y$, and the \IdxMain{prod-top}\emph{product topology}, \ie, the topology induced on a product space $X = \prod_{i \in I} Y_i$ by the projections $\pi_i : X \to Y_i$.
Conversely, given maps $f_i : Y_i \to X$, the \IdxMain{fin-top}\emph{final topology} on~$X$ is the \emph{finest} topology such that each $f_i$ is continuous.
It is given as
\begin{equation*}
  \tau = \big\{ U \subset X \;\big|\; f_i^{-1}(U) \in \tau_i\; \forall i \in I \big\}.
\end{equation*}
Important examples are the \IdxMain{quot-top}\emph{quotient topology} on a quotient space $X = Y / \sim$ with the map given by the canonical projection $Y \to Y / \sim$ and the \IdxMain{dsum-top}\emph{direct sum topology} on the direct sum $X = \sum_{i \in I} Y_i$ given by the canonical injections $Y_i \to X$.

Another application for the initial topology is the topology induced by a pseudometric:
A \IdxMain{pseudometric}\emph{pseudometric} on a set~$X$ is a map~$\mapsto d(\cdot\,,\cdot) : X \times X \to \RR$ such that for all $x, y, z \in X$:
\begin{enumerate}[(a)]
  \item $d(x, x) = 0$,
  \item $d(x, y) = d(y, x)$ (symmetry),
  \item $d(x, z) \leq d(x, y) + d(y, x)$ (triangle inequality);
\end{enumerate}
the set $X$ together with $d$ is a \IdxMain{pseudom-space}\emph{pseudometric space}.
If the pseudometric satisfies $d(x, y) > 0$ (positivity) for all $x \neq y$, then it is a \IdxMain{metric}\emph{metric} and $X$ is a \IdxMain{metric-space}\emph{metric space}.
The pseudometric on~$X$ induces the initial topology on~$X$ which is generated by the open balls around each point:
\begin{equation*}
  \mathfrak{B}(y) = \{ x \in X \mid d(x, y) < r \}.
\end{equation*}
A map $f : X \to Y$ between two pseudometric spaces with pseudometrics $d_X, d_Y$ is called an \IdxMain{isometry}\emph{isometry} if $d_X(x, y) = d_Y(f(x), f(y))$ for all $x, y \in X$.

Given a topological space~$X$, a sequence of points~$(x_n)$ in~$X$ \IdxMain{converge}\emph{converges} to a point~$x \in X$ if every neighbourhood~$U$ of~$x$ contains all but finitely many elements of the sequence.
If $X$ is a pseudometric space, it is called \IdxMain{complete}\emph{complete} if every Cauchy sequence with respect to its pseudometric converges to a point in~$X$.
Every pseudometric space~$X$ that is not complete can be completed.
Namely, the \IdxMain{completion}\emph{completion} of~$X$ is the (essentially unique) complete pseudometric space $\overline{X}$ of which $X$ is a dense isometric subspace.

A topological (sub)space $X$ is called \IdxMain{compact}\emph{compact} if each of its \emph{open covers}, \viz, a collection of subsets whose union contains $X$ as a subset, has a finite subcollection that also contains $X$.
If $X$ is compact, then every subset $Y \subset X$ is also compact in the subspace topology.
The space $X$ is \IdxMain{locally-compact}\emph{locally compact} if every point in $X$ has a compact neighbourhood.
Furthermore, $X$ is called \IdxMain{sigma-compact}\emph{$\sigma$-compact} if it is the union of countably many compact subsets.
Any compact space is locally compact and $\sigma$-compact; the converse is, however, false.
One can also show that every second-countable and locally compact Hausdorff space (thus, in particular, every topological manifold) is $\sigma$-compact.
We say that a map $f : X \to Y$ between two topological spaces $X, Y$ is \IdxMain{proper-map}\emph{proper} if the preimage of every compact set in $Y$ is compact in $X$.

A related concept is that of boundedness.
A subset $U \subset X$ of a pseudometric space~$X$ is \IdxMain{bounded}\emph{bounded} if for each $x,y \in U$ there exists a $r$ such that $d(x, y) \leq r$.

\subsection{Locally convex topological vector spaces}
\label{sub:tvs_locally_convex}

A \IdxMain{tvs}\emph{topological vector space} is a $\KK$-vector space~$X$ such that the vector space operations of addition and scalar multiplication are (jointly) continuous with respect to the topology of~$X$.
Since addition is continuous and the topology therefore translation-invariant, the topology of~$X$ is completely determined by the local basis~$\mathfrak{B}(0)$ at the origin.
For the same reason one can show that $X$ is Hausdorff if and only if $\{0\}$ is closed.
Note that in a Hausdorff topological vector space every complete subset is also closed.
The (topological) vector space~$X$ is called \IdxMain{convex-tvs}\emph{convex} if $x, y \in X$ implies $\lambda x + (1-\lambda) y \in X$ for all $\lambda \in [0,1]$.

Let $X, Y$ be topological vector spaces and $W \subset X$ a subset.
A mapping $f : W \to Y$ is \IdxMain{uniform-cont}\emph{uniformly continuous} if to every neighbourhood~$V$ of zero in~$Y$ there exists a zero neighbourhood $U \subset X$ such that for all $x, y \in W$
\begin{equation*}
  x - y \in U \implies f(x) - f(y) \in V.
\end{equation*}
Every uniformly continuous map is already continuous but the converse is not true.
However, if $f$ is linear and also $W$ is a vector subspace, then continuity also implies uniform continuity.
Moreover, if $W$ is a dense subset, then to every uniformly continuous map $f : W \to Y$ there exists a unique continuous map $\wbar{f} : X \to Y$ which extends $f$.
The extension of a linear map from a dense vector subspace is even uniformly continuous and linear.

It is usually desirable to have topological vector spaces with additional structures:
A \IdxMain{seminorm}\emph{seminorm} on a $\KK$-vector space~$X$ is a map~$\norm{\,\cdot\,} : X \to \RR$ such that for all $x, y \in X$ and $\lambda \in \KK$:
\begin{enumerate}[(a)]
  \item $\norm{x} \geq 0$ (positive-semidefiniteness),
  \item $\norm{\lambda x} = \abs{\lambda} \norm{x}$ (absolute homogeneity),
  \item $\norm{x + y} \leq \norm{x} + \norm{y}$ (triangle inequality).
\end{enumerate}
If the seminorm satisfies $\norm{x} > 0$ (positivity) for all $x \neq 0$, then it is a \IdxMain{norm}\emph{norm} and the vector space is called \IdxMain{normed-space}\emph{normed space}.
Note that each \optword{semi}norm induces a (translation-invariant) \optword{pseudo}metric $d(x, y) = \norm{x - y}$.

Analogously to the pseudometrics, given a \IdxMain{family-seminorms}\emph{family of seminorms}~$\{\norm{\,\cdot\,}_i\}_{i \in I}$ on a vector space~$X$, they induce the initial topology on $X$.
More explicitly, the topology is generated by all finite intersections of $\{ x \in X \mid \norm{x}_i < r \}$, the open balls around the origin.\footnote{Conversely, given a basis of the origin it is possible to construct a family of seminorms from its elements.}
If $I$ is countable, we can assume that $I \subset \NN$ and the topology above is the same topology as the one induced by the metric
\begin{equation*}
  d(x, y) = \sum_{k \in I} \frac{1}{2^k} \frac{\norm{x - y}_k}{1 + \norm{x - y}_k},
\end{equation*}
where the factors $2^{-k}$ may be replaced by the coefficients of any convergent series.

We say that a family of seminorms~$\{\norm{\,\cdot\,}_i\}$ on a vector space~$X$ is \IdxMain{seminorm-sep}\emph{separating} if for every nonzero~$x \in X$ there exists an~$i$ such that $\norm{x}_i > 0$.
It is immediate that a vector space with topology induced by a family of seminorms is Hausdorff if and only if the seminorms are separating.

Let $X$ be vector space endowed with a family of seminorms~$\{\norm{\,\cdot\,}_i\}_{i \in I}$. Then we can define the following topological vector spaces in order of generality:
\begin{enumerate}
  \item if the family of seminorms is separating, then $X$ is a \IdxMain{lcs}\emph{locally convex (topological vector) space};
  \item if, in addition, $I$ is countable and $X$ is complete with respect to each of its seminorms, then $X$ is a \IdxMain{frechet}\emph{Fréchet space};
  \item if, in addition, the family of seminorms consists of only one \emph{norm}, then $X$ is a \IdxMain{banach}\emph{Banach space}.
\end{enumerate}

Finally note that on all these spaces above the \IdxMain{hahn-banach}\emph{Hahn--Banach theorem} can be applied.
That is, given a $\KK$-vector space $X$ with a seminorm $\norm{\,\cdot\,}$ and a linear form $f : U \to \KK$ on a vector subspace $U$ such that $\abs{f(x)} \leq \norm{x}$ for all $x \in U$, there exists a (generally non-unique) linear form $\wbar{f} : X \to \KK$ which extends $f$ such that $\abs{\wbar{f}(x)} \leq \norm{x}$.

\subsection{Topologies on function spaces}
\IdxRanBegin{fun-tops}

Important vector spaces are subspaces of the \IdxMain{fun-space}\emph{space of functions}
\begin{equation*}
  Y^X \defn \{ f : X \to Y \}
\end{equation*}
between a set~$X$ and a topological space~$Y$.
$Y^X$ can be equipped with the \IdxMain{pointwise-top}\emph{topology of pointwise convergence}, which is just the product topology with the projections $\pi_x : Y^X \to Y, f \mapsto f(x)$.
In this topology a sequence of functions~$(f_n)$ converges to some~$f$ if and only if each~$f_n(x)$ converges to~$f(x)$ at all~$x$.

If $(Y, d)$ is a pseudometric space, the space of functions can be equipped another topology, the \IdxMain{uniform-top}\emph{topology of uniform convergence}.
In this case, a subspace $Z \subset Y^X$ can be endowed with the pseudometric
\begin{equation*}
  d(f, g) = \sup_{x \in X} d\big(f(x), g(x)\big)
\end{equation*}
for all $f, g \in Z$, which induces a topology for $Z$.
Note, however, that $Z$ with this topology is not really a topological vector space because multiplication will fail to be continuous unless $Z$ is a subset of the bounded functions.
A sequence of functions~$(f_n)$ converges to some~$f$ if and only if for every $\varepsilon > 0$ there exists a $N$ such that $d(f_n, f) < \varepsilon$ for all $n \geq N$.

If $X$ is a topological space, yet another topology on subspaces $Z \subset Y^X$ is the \IdxMain{c-o-top}\emph{compact-open topology}.
For all compact $K \subset X$ and open $U \subset Y$ it is generated by the finite intersections of
\begin{equation*}
  \{ f \in Z \mid f(K) \subset U \},
\end{equation*}
\ie, the set of functions that carry compact subsets into open subsets.
The compact-open topology is finer than the topology of pointwise convergence.
If $(Y, d)$ is a pseudometric space, the compact-open topology is the initial topology induced by the pseudometric on compact subsets, \ie, it is generated by the finite intersections of
\begin{equation*}
  \big\{ f \in Z \mid \sup\nolimits_{x \in K} d\big(f(x), 0\big) < r \big\}.
\end{equation*}
Therefore it is also called the \IdxMain{uni-compacta-top}\emph{topology of uniform convergence in compacta} and, if $X$ is compact, it is the same as the topology of uniform convergence.
It follows that a sequence of functions~$(f_n)$ converges to some~$f$ if and only if for every $\varepsilon > 0$ and compact $K \subset X$ there exists a $N$ such that $d(f_n(x), f(x)) < \varepsilon$ for all $n \geq N$ and $x \in K$.

Again, suppose that $X, Y$ are topological vector spaces and $Z \subset Y^X$.
$Z$ is called \IdxMain{equicontinuous}\emph{equicontinuous} if for every neighbourhood of the origin $U \subset X$ there exists a neighbourhood of the origin $V \subset X$ such that $f(U) \subset V$ for every $f \in Z$.
Equicontinuity for a set of one element is of course the same as continuity.

\IdxRanEnd{fun-tops}

\subsection{Duality}

A \IdxMain{duality}\emph{duality} or \emph{dual pairing} $\innerp{Y}{X}$ is a triple $(X, Y, \innerp{\cdot\,}{\cdot})$ of two vector spaces $X, Y$ and a \emph{non-degenerate} bilinear form $\innerp{\cdot\,}{\cdot} : Y \times X \to \KK$, \ie,
\begin{align*}
  \innerp{y}{x} & = 0 \text{ for all $y \in Y$ implies } x = 0 \\
  \innerp{y}{x} & = 0 \text{ for all $x \in X$ implies } y = 0.
\end{align*}
The standard example of a duality is that between a vector space~$X$ and its algebraic dual~$X^*$, where the pairing is given by the \IdxMain{can-bil-form}\emph{canonical bilinear form}
\begin{equation*}
  \innerp{\cdot\,}{\cdot} : X^* \times X \to \KK, (f, x) \mapsto \innerp{f}{x} \defn f(x).
\end{equation*}
A more important example is that of the \IdxMain{top-dual}\emph{topological dual}~$X' \subset X^*$ of a topological space~$X$, which consists of all continuous linear maps.
Note that the pairing $\innerp{X'}{X}$ is not a proper duality unless $X$ is Hausdorff because the restriction of the canonical bilinear form to $X' \times X$ is non-degenerate if and only if $X$ is Hausdorff.

For each $x \in X$ the map $x \mapsto \innerp{y}{x}$ gives an injective map of~$X$ into $Y^*$ and an analogous construction embeds $Y$ into~$X^*$.
In the following, the identification of~$X$ with a subspace of~$Y^*$ and~$Y$ with a subspace of~$X^*$ will always be tacitly assumed unless otherwise noted.

In particular, $X$ is a subspace of~$\KK^Y$ and can therefore be equipped with pointwise topology.
This locally convex Hausdorff topology is called the \IdxMain{weak-top}\emph{weak topology} on~$X$ with respect to $\innerp{Y}{X}$; statements for weak topology will often be indicated by the adjective ``weakly''.
The weak topology is the coarsest topology such that $x \mapsto \innerp{y}{x}$ is continuous for all $y \in Y$ and one finds $Y = X'$ with respect to the weak topology on~$X$.
Moreover, if $Y$ is locally convex, the seminorms on~$Y$ yield \IdxMain{dual-seminorm}\emph{dual seminorms} on~$X$ given by
\begin{equation*}
  \norm{x}_i \defn \sup \big\{ \abs{\innerp{y}{x}} \;\big|\; y \in Y \text{ with } \norm{y}_i \leq 1 \big\},
\end{equation*}
which also induce the weak topology.

The statements above can also be made with the role of~$X$ and~$Y$ interchanged to introduce the weak topology on~$Y$ with respect to $\innerp{Y}{X}$.
In particular, if $Y = X'$ and $\innerp{\cdot\,}{\cdot}$ the canonical bilinear form, then $X'$ with the weak topology is called the \IdxMain{weak-dual}\emph{weak dual}.
Furthermore, given a subset $Y \subset X^*$, then the induced pairing between~$Y$ and~$X$ is non-degenerate if and only if $Y$ is weakly dense in~$X^*$.
Thus any~$Y$ is weakly complete if and only if $Y = X^*$.

Another (locally convex Hausdorff) topology on $X$ induced by a duality $\innerp{Y}{X}$ with a locally convex space~$Y$ is the \IdxMain{strong-top}\emph{strong topology}, which is the topology of uniform convergence on the bounded subsets of~$Y$; statements for strong topology will be often be indicated by the adjective ``strongly''.
It is induced by the family of seminorms
\begin{equation*}
  \norm{x}_B = \sup_{y \in B}\, \abs{\innerp{y}{x}}
\end{equation*}
for each bounded set $B \subset Y$.
Again, we can interchange the role of~$X$ and~$Y$ and call $X'$ endowed with the strong topology induced by the canonical pairing the \IdxMain{strong-dual}\emph{strong dual}.

If we equip $X$ with the strong topology with respect to $\innerp{Y}{X}$, then the map $x \mapsto \innerp{y}{x}$ will not be continuous for any $y \in Y$.
The finest topology on~$X$ such that this map is continuous is called the \Idx{mackey-top}\emph{Mackey topology} but it will not concern us here any further.

Finally, note that the dual of a \Idx{banach-dual}Banach space is \emph{always} a Banach space, but the dual of a \Idx{frechet-dual}Fréchet space that is not Banach is \emph{never} a Fréchet space.

\subsection{Tensor products on locally convex spaces}
\label{sub:tvs_nuclear}
\IdxRanBegin{tensor-tops}

Given two locally convex topological vector spaces~$X, Y$ there are many different ways to define a family of seminorms for the space~$X \otimes Y$.
Therefore there is no natural topology for~$X \otimes Y$ if $X$ or $Y$ is infinite-dimensional, whence one speaks of different topological tensor products.
The most common topological tensor products are the projective and injective tensor product introduced below.

The \IdxMain{proj-tensor}\emph{projective tensor product topology} equips the algebraic tensor product~$X \otimes Y$ with the finest topology such that $\otimes : X \times Y \to X \otimes Y$ is \emph{jointly} continuous.
That is, it is the final topology defined by the projections $\pi_X : X \to X \otimes Y$, $\pi_Y : Y \to X \otimes Y$.
Equivalently, the topology is induced by the seminorms
\begin{equation*}
  \norm{u}_{i,j} = \inf \Big\{ \sum\nolimits_k \norm{x_k}_i \norm{y_k}_j \;\Big|\; u = \sum\nolimits_k x_k \otimes y_k \Big\}
\end{equation*}
for all $u \in X \otimes Y$ and where the infimum runs over all representations of~$u$.
The resulting locally convex space is usually denoted $X \otimes_\pi Y$ and its completion $X \mathbin{\widehat{\otimes}_\pi} Y$.

A coarser topology is defined by the \IdxMain{inj-tensor}\emph{injective tensor product topology}; it is the finest topology such that $\otimes : X \times Y \to X \otimes Y$ is \emph{separately} continuous.
Let $X', Y'$ be the weak duals of~$X, Y$.
Note that $X \otimes Y$ can be embedded into the space of bilinear \emph{separately} continuous maps $X' \times Y' \to \KK$, denoted $B(X, Y)$, with the topology of uniform convergence $U \times V$ on all equicontinuous sets $U \subset X$ and $V \subset Y$.
That is, the topology is generated for all $U \times V$ and open $I \subset \KK$ by the finite intersections of
\begin{equation*}
  \{ f \in B(X, Y) \mid f(U \times V) \subset I \}.
\end{equation*}
$X \otimes Y$ can now be endowed with the corresponding subspace topology.
Seminorms that induce this topology are given by
\begin{equation*}
  \norm{u}_{i,j} = \sup \big\{ \abs{(f \otimes g)(u)} \;\big|\; f \in X', g \in Y' \text{ such that } \norm{f}_i = \norm{g}_j = 1 \big\}.
\end{equation*}
The space~$X \otimes Y$ equipped with the injective topology is usually denoted $X \otimes_\varepsilon Y$ and its completion is denoted $X \mathbin{\widehat{\otimes}_\varepsilon} Y$.
Observe that $C(X, Y) \simeq C(X) \mathbin{\widehat{\otimes}_\varepsilon} Y$ if $Y$ is complete.

Locally convex spaces on which the injective and projective tensor product agree are called nuclear.
More precisely, we say that a locally convex space~$X$ is \IdxMain{nuclear-space}\emph{nuclear} if
\begin{equation*}
  X \otimes_\varepsilon Y = X \otimes_\pi Y
  \quad \text{or, equivalently,} \quad
  X \mathbin{\widehat{\otimes}_\varepsilon} Y = X \mathbin{\widehat{\otimes}_\pi} Y
\end{equation*}
for every locally convex space\footnote{Actually it is sufficient to check equality for $Y = \ell_1$, see below.}~$Y$ in which case we simply write $X \otimes Y$.
If both $X$ and $Y$ are nuclear, then also $X \otimes Y$ is nuclear.
Moreover, if a \Idx{nuclear-sub}subspace of a nuclear space is nuclear and the \Idx{nuclear-quot}quotient space of a nuclear space by a closed subspace is nuclear.

A more useful characterisation of nuclear spaces is in terms of summable sequences.
Denote by~$\ell_1(X)$ the $X$-valued \IdxMain{summable}\emph{summable sequences}, \ie, the set of sequences $(x_n)$ in~$X$ such that all unordered partial sums $\sum_{n \in I \subset \NN} x_n$ converge in~$X$.
Further, denote by~$\ell_1\{X\}$ the $X$-valued \IdxMain{abs-summable}\emph{absolutely summable sequences}, \ie, the set of sequences $(x_n)$ in~$X$ such that $\sum_n \norm{x_n}_i < \infty$ for all seminorms $\norm{\,\cdot\,}_i$ of $X$.
Then $X$ is nuclear if and only if
\begin{equation*}
  \ell_1(X) = \ell_1\{X\}
\end{equation*}
and hence, by the above observation, both sides are equal to $\ell_1 \otimes X = \ell_1 \otimes_\varepsilon X = \ell_1 \otimes_\pi X$.
In other words, $X$ is nuclear if and only if every summable sequence in $X$ is already absolutely summable.
Nuclear spaces are therefore very similar to finite-dimensional spaces and, while \emph{every} finite-dimensional locally convex space is nuclear, \emph{no} infinite-dimensional normed space is.

We finish this section by stating what can be called the \IdxMain{abstract-ker-thm}\emph{abstract kernel theorem} for Fréchet spaces:
\begin{equation*}
  (X \otimes Y)' \simeq X' \otimes Y'
  \quad \text{and} \quad
  (X \mathbin{\widehat\otimes} Y)' \simeq X' \mathbin{\widehat\otimes} Y'
\end{equation*}
for every $X, Y$ such that $X$ (or $Y$) is nuclear and where all duals are strong duals.

\IdxRanEnd{tensor-tops}
\IdxRanEnd{topology}

\section{Topological \texorpdfstring{${}^*$-algebras}{*-algebras}}
\label{sec:algebra}

\IdxRanBegin{star-algebra}

A \emph{topological ${}^*$-algebra} is a topological algebra $\alg{A}$ over $\CC$, \ie, a topological $\CC$-vector space with a separately continuous ring multiplication, together with a continuous \IdxMain{star-invo}\emph{involution} ${}^*$.
That is, there is an automorphism
\begin{equation*}
  {}^* : \alg{A} \to \alg{A},\;\; x \mapsto x^*,
\end{equation*}
which is antilinear and involutive such that
\begin{enumerate}[(a)]
  \item $(a x + b y)^* = \conj{a} x^* + \conj{b} y^*$,
  \item $(x y)^* = y^* x^*$,
  \item $(x^*)^* = x$
\end{enumerate}
for all $x, y \in \alg{A}$ and $a, b \in \CC$.
If, in addition, $\alg{A}$ has a multiplicative unit $\one$, we say that $\alg{A}$ is a \Idx{unital-alg}\emph{unital} ${}^*$-algebra.
Elements $x, y$ of the algebra $\alg{A}$ are called\IdxMain{elem-adjoint}\IdxMain{elem-self-adjoint}\IdxMain{elem-normal}\IdxMain{elem-unitary}
\begin{alignat*}{2}
  \text{\emph{adjoint}}
  & \quad \text{if} \quad &
  & x^* = y,
  \\
  \text{\emph{self-adjoint}}
  & \quad \text{if} \quad &
  & x^* = x,
  \\
  \text{\emph{normal}}
  & \quad \text{if} \quad &
  & x^* x = x x^*,
  \\
  \text{\emph{unitary}}
  & \quad \text{if} \quad &
  & x^* x = \one = x x^*,
\end{alignat*}
where unitarity obviously requires the existence of a unit element.
Note that $\mathbf{1}$ is always self-adjoint.

A ${}^*$-subalgebra $\alg{I} \subset \alg{A}$ is called a left (right) \IdxMain{star-ideal}\emph{${}^*$-ideal} if $y x$ (resp. $x y$) is in $\alg{I}$ for all $y \in \alg{I}$ and $x \in \alg{A}$.
If the subalgebra is both a left and right ${}^*$-ideal, it is just called a (two-sided) ${}^*$-ideal.
It follows that an ideal $\alg{I}$ of $\alg{A}$ is a ${}^*$-ideal if and only if $\alg{I}^* = \alg{I}$.

The homomorphisms that arise between ${}^*$-algebras, called \IdxMain{star-hom}\emph{${}^*$\hyp{}homomorphisms}, are those that preserve in addition to the multiplicative also the involutive structure, \ie, a map $\alpha : \alg{A} \to \alg{B}$ is a ${}^*$\hyp{}homomorphisms if it is an algebra-homomorphism and $\alpha(x^*) = \alpha(x)^*$ for all $x \in \alg{A}$.
If the ${}^*$-algebras are unital, we also demand that ${}^*$-homomorphisms be unit-preserving.

Often one needs a ${}^*$-algebra which also has the structure of a normed vector space.
In the case of ${}^*$-algebras, it makes sense to require the norm to satisfy an additional property:
A norm $\norm{\,\cdot\,} : \alg{A} \to \RR$ is said to be a \IdxMain{cstar-norm}\emph{$C^*$-norm} if
\begin{equation*}
  \norm{x^* x} = \norm{x}^2
\end{equation*}
for all $x \in \alg{A}$.
This differs from some definitions of $C^*$-norms because, in fact, every $C^*$-norm is automatically a ${}^*$-isomorphism and submultiplicative, \ie,
\begin{equation*}
  \norm{x^*} = \norm{x}
  \quad \text{and} \quad
  \norm{x y} \leq \norm{x} \norm{y}
\end{equation*}
for all $x, y \in \alg{A}$ \cite{sebestyen:1979}.

If a ${}^*$-algebra $\alg{A}$ comes equipped with such a $C^*$-norm $\norm{\,\cdot\,}$ that turns $\alg{A}$ into a Banach space, then it is called a \IdxMain{cstar-alg}\emph{$C^*$-algebra}.
In a $C^*$-algebra, ring multiplication and inversion are continuous operations with respect to the norm; the continuity of addition, scalar multiplication and involution are obvious.
The condition for a unital $C^*$-algebra to have a $C^*$-norm imposes such strong conditions on its algebraic structure that the algebra uniquely determines the norm.
Namely,
\begin{equation*}
  \norm{x}^2 = \norm{x^* x} = \sup\, \big\{ \abs{\lambda} \;\big|\; x^* x - \lambda\one \text{ is not invertible} \big\}
\end{equation*}
for every $x \in \alg{A}$.
A ${}^*$-homomorphism $\alpha : \alg{A} \to \alg{B}$ between two unital $C^*$-algebras is thus always norm-decreasing: $\norm{\alpha(x)} \leq \norm{x}$.

\subsection{States}
\label{sub:alg_states}

Given a ${}^*$-algebra~$\alg{A}$, one can consider its algebraic dual, the space of linear functionals on~$\alg{A}$.
A linear functional $\omega \in \alg{A}^*$ on a unital \mbox{${}^*$-al}gebra~$\alg{A}$ is \emph{positive} if it satisfies
\begin{equation*}
  \omega(x^* x) \geq 0
\end{equation*}
for all $x \in \alg{A}$.
If, $\alg{A}$ is unital and $\omega(\one) = 1$, we say that $\omega$ is \emph{normalized}.
A functional $\omega$ that is both positive and normalized is called a \IdxMain{state}\emph{state}.
If the ${}^*$-algebra~$\alg{A}$ comes equipped with a topology, we always assume that $\omega$ is continuous, \ie, we consider the topological dual $\alg{A}'$ instead of the algebraic one; an algebraic state on a $C^*$-algebra is automatically continuous with respect to the $C^*$-norm.

Given positive $\omega \in \alg{A}'$, it follows that for all $x, y \in \alg{A}$
\begin{align*}
  \omega(x^* y) & = \conj{\omega(y^* x)}, \\
  \abs{\omega(x^* y)}^2 & \leq \omega(x^* x) \omega(y^* y),
\end{align*}
where the second line is called the \emph{Cauchy--Schwarz inequality}.
If $\alg{A}$ is unital, the first equation implies that every positive $\omega$ is hermitian: $\omega(x^*) = \conj{\omega(x)}$.

A state $\omega$ is \IdxMain{pure-state}\emph{pure} if every other state $\eta$ that is majorized by it, $\omega(x^* x) \geq \eta(x^* x)$, is of the form $\eta = \lambda \omega$ with $\lambda \in [0,1]$.
Consequently, a pure state cannot be written as the convex sum of two other states.
States that are not pure are called \IdxMain{mixed-state}\emph{mixed}.

The positive linear functionals equips $\alg{A}$ with a degenerate inner product via the antilinear pairing $\innerp{x}{y} = \omega(x^* y)$ which can be turned into a pre-Hilbert space by taking the quotient by the degenerate elements.
This is the essential content of the famous \IdxMain{gns}\emph{Gel'fand--Naimark--Segal construction}, usually abbreviated as \emph{GNS construction}, which we will state after the following definition.

A \IdxMain{star-rep}\emph{${}^*$-representation}~$\pi$ of a ${}^*$-algebra~$\alg{A}$ is a ${}^*$-homomorphism into the $C^*$-algebra of linear operators on a common dense (with respect to the norm $\norm{\,\cdot\,} = \innerp{\cdot\,}{\cdot}^{1/2}$ on the Hilbert space) domain~$\mathcal{D}$ of a Hilbert space~$\mathcal{H}$.
Note that the ${}^*$-representation~$\pi$ is continuous with respect to the \IdxMain{uni-op-top}\emph{uniform operator topology}, \ie, the topology induced by the \IdxMain{op-norm}\emph{operator norm}
\begin{equation*}
  \norm{T}_{\mathrm{op}} = \sup\, \big\{ \norm{T x} \;\big|\; x \in \mathcal{D} \text{ with } \norm{x} \leq 1 \big\}.
\end{equation*}
Moreover, if the domain~$\mathcal{D}$ is complete in the \IdxMain{graph-top}\emph{graph topology} induced by the family of seminorms $\norm{\,\cdot\,}_x = \norm{\pi(x)\,\cdot\,}$, we say the the ${}^*$-representation~$\pi$ is \IdxMain{closed-star-rep}\emph{closed}.

If there exists a vector $\Omega \in \mathcal{H}$ such that $\pi(\alg{A}) \Omega = \{ \pi(x) \Omega \mid x \in \alg{A} \}$ is dense in~$\mathcal{H}$, then the ${}^*$-representation is called \IdxMain{cyclic-star-rep}\emph{cyclic} and $\Omega$ \IdxMain{cyclic-vec}\emph{cylic vector}.
If $\pi(\alg{A}) \Omega$ is even dense in~$\mathcal{D}$ in the graph topology, then $\pi$ and $\Omega$ are called \IdxMain{strong-cyclic-star-rep}\emph{strongly cyclic}.
The state $\omega(x) = \innerp{\pi(x) \Omega}{\Omega}$ defined by a cyclic vector~$\Omega$ of a cyclic ${}^*$-representation~$\pi$ is pure if and only if the only subspaces left invariant by~$\pi(A)$ are the trivial ones.

\begin{theorem}[GNS construction]\label{thm:gns}
  Let $\omega$ be a state\footnote{Actually it is enough for $\omega$ to be a positive; the normalization is not necessary for the theorem to hold.} on a unital topological ${}^*$-algebra~$\alg{A}$.
  Then there exists a closed (weakly continuous) strongly cyclic \mbox{${}^*$-re}presentation $\pi$ of~$\alg{A}$ on a Hilbert space~$\mathcal{H}$ with inner product $\innerp{\cdot\,}{\cdot}$ and strongly cyclic vector~$\Omega$ such that
  \begin{equation*}
    \omega(x^* y) = \innerp{\pi(x) \Omega}{\pi(y) \Omega}
  \end{equation*}
  for all $x, y \in \alg{A}$.
  The representation~$\pi$ is unique up to unitary equivalence.
\end{theorem}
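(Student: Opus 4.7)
The plan is to follow the classical GNS construction, building the Hilbert space directly from the state $\omega$ and then verifying the additional topological requirements. First I would introduce the Gelfand--Naimark sesquilinear form on $\alg{A}$ defined by $\innerp{x}{y}_\omega \defn \omega(x^* y)$, which is positive semidefinite by the positivity of $\omega$ and hermitian by the identity $\omega(x^* y) = \conj{\omega(y^* x)}$ noted above. The degenerate subspace $\alg{N} \defn \{ x \in \alg{A} \mid \omega(x^* x) = 0 \}$ is a left ideal: the Cauchy--Schwarz inequality $\abs{\omega(y^* x)}^2 \leq \omega(y^* y) \omega(x^* x)$ shows that $\alg{N}$ is a subspace, and for $x \in \alg{N}$ and $a \in \alg{A}$ a second application of Cauchy--Schwarz to $\omega((ax)^*(ax)) = \omega(x^*(a^*a x))$ gives $ax \in \alg{N}$. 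The quotient $\alg{D} \defn \alg{A} / \alg{N}$ then carries a well-defined positive-definite inner product, and its completion is the candidate Hilbert space $\mathcal{H}$.

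Next I would define the representation itself on the dense domain $\alg{D}$ by left multiplication: $\pi(x)[y] \defn [x y]$, which is well-defined precisely because $\alg{N}$ is a left ideal. A direct calculation using the definition of~$\omega$ shows that $\pi$ is a ${}^*$-homomorphism, $\innerp{\pi(x^*)[y]}{[z]} = \omega(y^* x z) = \innerp{[y]}{\pi(x)[z]}$, and the distinguished vector $\Omega \defn [\one]$ satisfies $\omega(x^* y) = \innerp{\pi(x) \Omega}{\pi(y) \Omega}$ and $\pi(\alg{A}) \Omega = \alg{D}$, so $\Omega$ is cyclic in the most obvious possible way. For uniqueness up to unitary equivalence, given another such triple $(\pi', \mathcal{H}', \Omega')$, one defines $U : \pi(\alg{A}) \Omega \to \pi'(\alg{A}) \Omega'$ by $U \pi(x) \Omega \defn \pi'(x) \Omega'$; this is well-defined and isometric by the common reproduction of $\omega$, extends by density to a unitary $\mathcal{H} \to \mathcal{H}'$, and intertwines the two representations by construction.

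The more delicate part, and what I expect to be the main obstacle, is the topological half of the statement: that $\pi$ is closed, strongly cyclic, and weakly continuous, rather than just an algebraic ${}^*$-representation. By construction $\alg{D} = \pi(\alg{A}) \Omega$ equals $\alg{A}/\alg{N}$, so strong cyclicity of $\Omega$ with respect to the graph topology induced by the seminorms $\norm{\,\cdot\,}_x = \norm{\pi(x)\,\cdot\,}$ is automatic once we arrange the domain correctly; the issue is that $\alg{D}$ need not already be graph-complete. To obtain a closed representation I would replace $\alg{D}$ by its completion $\wtilde{\alg{D}}$ in the graph topology inside $\mathcal{H}$ and extend each $\pi(x)$ by continuity, which is possible because the seminorms satisfy $\norm{\pi(a) \xi}_x = \norm{\pi(xa) \xi}$, so $\pi(a)$ is graph-continuous on $\alg{D}$; one then checks that the extended operators still form a ${}^*$-representation and that $\Omega$ is strongly cyclic for it. Finally, weak continuity of $\pi$ follows from the identity $\innerp{\pi(x) \xi}{\eta} = \omega(y^* x z)$ for $\xi = \pi(y) \Omega$, $\eta = \pi(z) \Omega$, combined with the assumed continuity of $\omega$ and the separate continuity of the product on~$\alg{A}$, after which the weak continuity on general $\xi, \eta$ follows by a standard density argument.
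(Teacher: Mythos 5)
Your proposal is the standard GNS construction for unital topological ${}^*$-algebras and is correct, including the way you handle the unbounded-operator subtleties (passing to the graph-topology completion of $\alg{A}/\alg{N}$ to obtain a closed, strongly cyclic representation). The thesis itself does not prove \cref{thm:gns} but defers to the literature (Powers, 1971), and the argument given there is essentially the one you describe, so there is nothing to compare beyond noting that your sketch matches the cited proof.
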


This is a standard theorem and a proof can be found in many places, \eg, in~\cite{powers:1971}.
Working with general ${}^*$-algebras, we have not excluded the case of ${}^*$\hyp{}representations onto unbounded operators.
For that reason it is not possible to uniquely extend the representation to the whole Hilbert space, and hence self-adjoint elements of the algebra might not be represented by self-adjoint operators but only symmetric operators.
These problems do not occur if one applies the GNS construction to $C^*$-algebras.

\subsection{Weyl algebra}
\label{sub:alg_weyl}

Set $V$ to be a $\RR$-vector space and $\sigma : V \times V \to \RR$ an antisymmetric bilinear form (\ie, a pre-symplectic form).\footnote{We follow \cite{manuceau:1973} and will not assume that $\sigma$ is non-degenerate. In fact it is sufficient to assume that $\sigma$ is linear in its first \emph{or} second argument.}
A \IdxMain{weyl-star-alg}\emph{Weyl ${}^*$-algebra}~$\alg{W}$ for $(V, \sigma)$ is a unital involutive algebra generated by (nonzero) \IdxMain{weyl-gen}\emph{Weyl generators}~$W$, \ie, symbols~$W(\cdot)$ labelled by the vectors in~$V$, which satisfy, for all $v, w \in V$, the relations
\begin{enumerate}[(a)]
  \item $W(v) W(w) = \exp\big( \tfrac{\im}{2} \sigma(v, w) \big) W(v + w)$,
  \item $W(v)^* = W(-v)$.
\end{enumerate}
Therefore the Weyl generators also have the following properties:
\begin{enumerate}[resume*]
  \item $W(0) = \one$,
  \item $W(v)^* = W(-v) = W(v)^{-1}$,
  \item $\big( W(v) \big)_{v \in V}$ are linearly independent.
\end{enumerate}
Since $\alg{W}$ is generated by unitaries, every ${}^*$-representation is necessarily by bounded operators.
Moreover, between two Weyl ${}^*$-algebras generated by Weyl generators~$W$ and~$W'$ for $(V, \sigma)$ there exists a unique ${}^*$-isomorphism $\alpha$ completely determined by $\alpha \circ W = W'$.

One can endow~$\alg{W}$ with a $C^*$-norm, the \IdxMain{min-reg-norm}\emph{minimal regular norm}
\begin{equation*}
  \norm{x} = \sup \big\{ \omega(x^* x)^{1/2} \;\big|\; \omega \text{ is a state on } \alg{W} \}
\end{equation*}
for all $x \in \alg{W}$.
If the bilinear form $\sigma$ is non-degenerate, one can show that all $C^*$-norm over~$\alg{W}$ are equal.
We call the completion~$\overline{\alg{W}}$ of a Weyl ${}^*$-algebra~$\alg{W}$, with respect to the minimal regular norm, the \IdxMain{weyl-cstar-alg}\emph{Weyl $C^*$-algebra}.
It is unique up to ${}^*$-isomorphism and, in particular, \IdxMain{simple-alg}\emph{simple}, \viz, it has no non-trivial closed ${}^*$-ideals, if and only if $\sigma$ is non-degenerate \cite{manuceau:1973}.

The map $\RR \ni \lambda \mapsto W(\lambda v)$ is not continuous in~$\overline{\alg{W}}$, because $\norm{W(v) - W(w)} = 2$ for all distinct $v, w \in V$ as a consequence of the spectral radius formula.
A ${}^*$\mbox{-re}presentation~$\pi$ of~$\overline{\alg{W}}$ on a Hilbert space~$\mathcal{H}$ is called \IdxMain{reg-star-rep}\emph{regular} if the one-parameter unitary groups
\begin{equation*}
  \lambda \mapsto (\pi \circ W)(\lambda v), \quad v \in V,
\end{equation*}
are strongly continuous.
If the ${}^*$-representation induced by a state on~$\overline{\alg{W}}$ is regular, we also call the state regular.
Invoking \emph{Stone's theorem} $(\pi \circ W)(\lambda v)$, we can find a family of self-adjoint operators $F(\cdot)$ on~$\mathcal{H}$, labelled by vectors in~$V$, such that
\begin{equation*}
  (\pi \circ W)(\lambda v) = \exp\big(\im \lambda F(v)\big);
\end{equation*}
the map~$F$ is called the \IdxMain{field-op}\emph{field operator} and is generally unbounded.

A \IdxMain{strong-reg-state}\emph{strongly regular state} \cite{bar:2012} is a regular state for which the operators $F(v)$, $v \in V$, have a common dense domain $\mathcal{D} \subset \mathcal{H}$, which is closed under the action of~$F$, and for which $v \mapsto F(v) w$ is continuous for fixed $w \in V$.
For strongly regular states the field operator is linear in its argument and thus a self-adjoint operator-valued distribution.

\IdxRanEnd{star-algebra}

\section{Derivatives}
\label{sec:derivatives}

Various different notions of derivatives on topological vector spaces exist in the literature, see~\cite{averbukh:1968} for a survey and history of the topic.
On infinite-dimensional spaces these derivatives are inequivalent and care must be taken to make precise which derivative is meant.
On Banach spaces there exists the notable example of the Fréchet derivative.
However, many spaces of interest in physics are not normed and so one must work with derivatives on more general spaces.
Below we will define a directional derivative in the sense of Gâteaux and later compare it with the Fréchet derivative on Banach spaces.

\subsection{The directional derivative}
\label{sub:der_directional}

Let $X, Y$ be two topological vector spaces and $U \subset X$ open.
The \IdxMain{dir-deriv}\emph{(directional) derivative} of a function $f : U \to Y$ at $x \in U$ in the direction $h \in X$ is defined as the map $\dirD f: U \times X \to Y$,
\begin{equation}\label{eq:directional}
  \dirD_h f(x) \defn \dirD f(x; h)
  \defn \lim_{\varepsilon \to 0} \frac{1}{\varepsilon} \big( f(x + \varepsilon h) - f(x) \big)
  = \left. \od{}{\varepsilon}\, f(x + \varepsilon h) \right|_{\varepsilon=0},
\end{equation}
if the limit exists.
In particular, if $f$ is a continuous linear function, then its derivative is $\dirD f(x; h) = f(h)$.

Note that the nomenclature here follows that of~\cite{hamilton:1982,neeb:2005} and differs from that in~\cite{pinamonti:2015}, where Pinamonti and the author called the same derivative \emph{Gâteaux derivative}.
The reason for this choice of a more neutral name is that the name ``Gâteaux derivative'' has sometimes been used for slightly different derivatives.
However, all definitions known to the author agree whenever the derivative is both linear and continuous in the direction of the derivative.

It should be clear from the definition of the directional derivative, that the ordinary and partial derivative are special cases of the directional derivative for functions from Euclidean~$\RR^n$ to~$\RR$ or~$\CC$.
Consequently the directional derivative is also closely related to the local form of the covariant derivative given by a connection on a vector bundle.

The function~$f$ is called \emph{differentiable at~$x$} if the limit exists for all $h \in X$ and simply \emph{differentiable} if it is differentiable at every $x \in U$.
Moreover, $f$ is \emph{continuously differentiable} or $\dirC^1$ on~$U$ if the map~$\dirD f$ is continuous (in the induced topology on~$U \times X$).
Higher derivatives may be defined recursively by
\begin{equation*}
  \dirD_{h_n} \dotsm\, \dirD_{h_1} f(x)
  \defn \dirD_{h_1, \dotsc, h_n}^n f(x)
  \defn \lim_{\varepsilon \to 0} \frac{1}{\varepsilon} \big( \dirD_{h_1, \dotsc, h_{n-1}}^{n-1} f(x + \varepsilon h_n) - \dirD_{h_1, \dotsc, h_{n-1}}^{n-1} f(x) \big)
\end{equation*}
and we say that $f$ is $\dirC^n$ if $\dirD f$ is $\dirC^{n-1}$; if $f$ is $\dirC^n$ for all $n \in \NN$, then we say that $f$ is $\dirC^\infty$ or \emph{smooth}.

Continuity of the derivative already implies many other properties if the involved vector spaces are locally convex.
Hence, let $X, Y$ be locally convex spaces and $f : X \supset U \to Y$ be continuously differentiable.
Then it can be shown that \cite{hamilton:1982,neeb:2005}:
\begin{enumerate}[(a)]
  \item
    the \emph{fundamental theorem of calculus}
    \begin{equation*}
      f(x + h) - f(x) = \int_0^1 \dirD f(x + t h; h)\, \dif t
    \end{equation*}
    holds if $x + [0,1] h \subset U$,
  \item $f$ is locally constant if and only if $\dirD f = 0$,
  \item the map $h \mapsto \dirD f(x; h)$ is linear,
  \item $f$ is continuous (not necessarily true if $X, Y$ are not locally convex!),
  \item
    if $f \in \dirC^n$, the map $(h_1, \dotsc, h_n) \mapsto \dirD^n f(x; h_1, \dotsc, h_n)$ is symmetric and multilinear and we can use yet another notation
  \begin{equation*}
    \innerp{\dirD^n f(x)}{h_1 \otimes \dotsm \otimes h_n} \defn \dirD^n f(x; h_1, \dotsc, h_n),
  \end{equation*}
  \item
    if $f \in \dirC^{n+1}$, \emph{Taylor's formula}
    \begin{equation*}\begin{split}
      f(x + h) & = f(x) + \dirD f(x; h) + \dotsb + \frac{1}{n!} \dirD^n f(x; h, \dotsc, h) \\&\quad + \frac{1}{n!} \int_0^1 (1-t)^n \dirD^{n+1} f(x + t h; h, \dotsc, h)
    \end{split}\end{equation*}
    holds for $x + [0,1]h \subset U$.
\end{enumerate}

Moreover, given locally convex $X, Y, Z$, open subsets $U \subset X$, $V \subset Y$ and $\dirC^n$ maps $f : U \to V$, $g : V \to Z$, the \emph{chain rule} holds for the composition $g \circ f$, \ie, also the composition $g \circ f$ is $\dirC^n$ \cite{hamilton:1982,neeb:2005}.

In ordinary calculus one can show that a continuously differentiable function function is locally Lipschitz.
An analogous result holds for the directional derivative on normed spaces (see also \cite{pinamonti:2015}):
\begin{proposition}\label{prop:lipschitz}
  Let $f : X \to Y$ be a continuously differentiable map between the two normed spaces $(X, \norm{\,\cdot\,}_X)$ and $(Y, \norm{\,\cdot\,}_Y)$.
  Then $f$ is \IdxMain{loc-lipschitz}\emph{locally Lipschitz}, \ie, for every convex neighbourhood~$U$ of $x_0 \in X$ there exists a $K \geq 0$ such that for all $x_1, x_2 \in U$
  \begin{equation*}
    \norm{f(x_1) - f(x_2)}_W \leq K\, \norm{x_1 - x_2}_V.
  \end{equation*}
\end{proposition}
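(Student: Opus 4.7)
The natural tool is the fundamental theorem of calculus listed among the properties of continuously differentiable maps between locally convex spaces. Since $U$ is convex, for any $x_1, x_2 \in U$ the segment $x_1 + t(x_2 - x_1)$, $t \in [0,1]$, lies entirely in $U$, so
\begin{equation*}
  f(x_2) - f(x_1) = \int_0^1 \dirD f\big(x_1 + t(x_2 - x_1);\, x_2 - x_1\big)\, \dif t.
\end{equation*}
Taking norms under the integral reduces the problem to bounding $\norm{\dirD f(x; h)}_Y$ in terms of $\norm{h}_X$ uniformly for $x$ in some convex neighbourhood of $x_0$.

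The main step is therefore the derivation of a uniform operator bound on $\dirD f(x)$ near $x_0$. Recall that $\dirD f : X \times X \to Y$ is continuous (as $f \in \dirC^1$) and that $h \mapsto \dirD f(x; h)$ is linear. Applying joint continuity of $\dirD f$ at the point $(x_0, 0) \in X \times X$ to $\varepsilon = 1$, I would obtain $\delta > 0$ such that $\norm{x - x_0}_X < \delta$ and $\norm{h}_X < \delta$ imply $\norm{\dirD f(x; h)}_Y < 1$. Rescaling via linearity in $h$ (replace $h$ by $\delta h / (2 \norm{h}_X)$) promotes this to the uniform operator bound
\begin{equation*}
  \norm{\dirD f(x; h)}_Y \leq \tfrac{2}{\delta}\, \norm{h}_X
  \qquad \text{for all } x \in B_\delta(x_0) \text{ and } h \in X.
\end{equation*}

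The open ball $B_\delta(x_0)$ is convex, and on any convex subneighbourhood $U \subset B_\delta(x_0)$ the previous two ingredients combine: substituting $h = x_2 - x_1$ into the operator bound and integrating gives
\begin{equation*}
  \norm{f(x_2) - f(x_1)}_Y \leq \int_0^1 \norm{\dirD f\big(x_1 + t(x_2 - x_1);\, x_2 - x_1\big)}_Y\, \dif t \leq K\, \norm{x_2 - x_1}_X
\end{equation*}
with $K = 2/\delta$, as required. The only delicate point is the second paragraph: turning the \emph{joint} continuity of $\dirD f$ at $(x_0, 0)$ into a \emph{uniform} operator bound on a neighbourhood of $x_0$ — this is where the linearity in the direction variable, which follows from local convexity via the properties of $\dirC^1$ maps, is essential and cannot be replaced by mere continuity.
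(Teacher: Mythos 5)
Your proof is correct and follows essentially the same route as the paper's: a uniform operator-norm bound on $\dirD f$ near $x_0$ combined with the fundamental theorem of calculus along the segment joining $x_1$ and $x_2$. The only difference is that you spell out how the joint continuity of $\dirD f$ at $(x_0,0)$ plus linearity in the direction yields the uniform bound $K = 2/\delta$, a step the paper merely asserts, so if anything your version is the more complete one.
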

\begin{proof}
  Since the derivative $\dirD f(x; h)$ is continuous and linear in $h \in X$, there exists a convex neighbourhood~$U$ of~$x_0$ such that
  \begin{equation*}
    \norm{\dirD f(x; h)}_Y \leq \norm{\dirD f(x)}_{\mathrm{op}} \norm{h}_X \leq K\, \norm{h}_X
  \end{equation*}
  for all $x \in U$.
  As \IdxMain{lipschitz-const}\emph{Lipschitz constant}~$K$ we can choose the supremum of $x \mapsto \norm{\dirD f(x)}_{\mathrm{op}}$ in~$U$.
  By the fundamental theorem of calculus we have for $x_1, x_2 \in U$
  \begin{equation*}
    f(x_1) - f(x_2) = \int_0^1 \dirD f\big( x_2 + t\, (x_1 - x_2); x_1 - x_2 \big)\, \dif t.
  \end{equation*}
  Hence, taking the norm on both sides, the previous equation yields
  \begin{equation*}
    \norm{f(x_1) - f(x_2)}_Y
    \leq \int_0^1 \norm*{\dirD f\big( x_2 + t\, (x_1 - x_2); x_1 - x_2 \big)}_Y\, \dif t
    \leq K\, \norm{x_1 - x_2}_X.
  \end{equation*}
\end{proof}

Later on we will often encounter spaces of differentiable and smooth functions and thus need an appropriate topology on these space:
Let $X, Y$ be a topological vector space such that $Y$ is locally convex and $U \subset X$ open.
We can equip the vector space $\dirC^n(U, Y)$ of all $n$-times continuously differentiable maps between~$X$ and~$Y$ with the seminorms
\begin{equation*}
  \norm{f}_{i, k, K} = \sup_{x \in K}\, \norm{\dirD^k f(x)}_{i,\mathrm{op}}
\end{equation*}
for all $f \in \dirC^n(U, Y)$, every compact $K \subset U$ and $0 \leq k \leq n$.
These seminorms induce an initial topology on $\dirC^n(U, Y)$ turning it into a locally convex space.\Idx{lcs}
This is another example of a compact-open topology or topology of uniform convergence on compacta.
Note that, if $Y$ is a Fréchet space and $U$ is $\sigma$-compact, $\dirC^n(U, Y)$ becomes a Fréchet space.\Idx{frechet}

\subsection{The Fréchet derivative}
\label{sub:der_frechet}

On Banach spaces it is possible to define another directional derivative, the Fréchet derivative.
Given Banach spaces~$X, Y$ and an open subset $U \subset X$, a map $f : U \to Y$ is called \emph{Fréchet differentiable at $x \in U$} if there exists a bounded linear operator $\freD f(x) : X \to Y$, the \IdxMain{frechet-deriv}\emph{Fréchet derivative} of~$f$ at~$x$, such that
\begin{equation}\label{eq:frechet}
  \lim_{\norm{h}_X \to 0} \norm{h}_X^{-1} \big( f(x + h) - f(x) - \freD f(x) h \big) = 0.
\end{equation}
The operator $\freD f(x)$ is unique if it exists.
In analogy to the directional derivatives that we encountered so far, we also write $\freD_h f(x) \defn \freD f(x; h) \defn \freD f(x) h$.
We call $f$ \emph{Fréchet differentiable} if the Fréchet derivative exists for all $x \in U$.
If the Fréchet derivative is continuous in~$x$, then $f$ is \emph{continuously Fréchet differentiable}.

The Fréchet derivative is closely related to the directional derivative defined above (see also \cite{pinamonti:2015}):
\begin{proposition}\label{prop:derivative_equivalence}
  Let $X, Y$ be Banach spaces, $U \subset X$ open and $f : U \to Y$ a map.
  $f$ is Fréchet differentiable if and only if $f$ is continuously differentiable.
  In that case the two derivatives agree.
\end{proposition}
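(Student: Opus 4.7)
The strategy is to identify the two derivatives, $\freD f(x) h = \dirD f(x; h)$, in both directions of the equivalence; once this identification is established the claim that the derivatives agree is automatic.

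For the easy direction (continuously Fréchet differentiable $\Rightarrow$ $\dirC^1$), I would substitute $\varepsilon h$ into the defining limit~\eqref{eq:frechet} and send $\varepsilon \to 0$ to obtain $\dirD f(x; h) = \freD f(x) h$. Joint continuity of $(x, h) \mapsto \freD f(x) h$ on $U \times X$ then follows from the triangle estimate
\begin{equation*}
  \norm{\freD f(x') h' - \freD f(x) h}_Y \leq \norm{\freD f(x')}_{\mathrm{op}} \norm{h' - h}_X + \norm{\freD f(x') - \freD f(x)}_{\mathrm{op}} \norm{h}_X,
\end{equation*}
combined with the local boundedness of $\norm{\freD f(\cdot)}_{\mathrm{op}}$, which is itself a consequence of operator-norm continuity of $\freD f$.

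For the converse, I would set $T(x) \defn \dirD f(x; \cdot)$. Property~(c) of $\dirC^1$-maps listed in the paper gives linearity of $T(x)$, while joint continuity of $\dirD f$ at $(x, 0)$ combined with linearity yields boundedness, so $T(x) \in L(X, Y)$. The fundamental theorem of calculus from the paper then produces
\begin{equation*}
  f(x+h) - f(x) - T(x) h = \int_0^1 \bigl( T(x + th) - T(x) \bigr) h\, \dif t,
\end{equation*}
whence $\norm{f(x+h) - f(x) - T(x) h}_Y \leq \norm{h}_X \sup_{t \in [0,1]} \norm{T(x+th) - T(x)}_{\mathrm{op}}$. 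Fréchet differentiability of $f$ together with continuity of $\freD f$ therefore reduces to establishing operator-norm continuity of $y \mapsto T(y)$.

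The hard part will be precisely this operator-norm continuity of $T$. Joint continuity of $\dirD f$ at $(x_0, 0)$ together with linearity in the direction slot gives, after rescaling, a local uniform bound of the form $\norm{T(y)}_{\mathrm{op}} \lesssim \varepsilon/\delta$ for $\norm{y - x_0}_X < \delta$; upgrading this pointwise control to honest op-norm continuity at $x_0$ requires matching $\varepsilon$ to $\delta$ compatibly, and this is the essential place where the Banach structure (scale invariance of the norms on $X$ and $Y$) is used to transfer continuity at $(x_0, 0)$ to uniform control over unit vectors in the direction slot. Once this step is carried out, Fréchet differentiability and operator-norm continuity of $\freD f$ follow immediately, and the identification $\freD f = T$ shows the two derivatives coincide.
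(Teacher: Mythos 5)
Your overall strategy is the same as the paper's: identify $\freD f(x)$ with $\dirD f(x;\cdot)$ by restricting the Fr\'echet limit to rays in the forward direction, and control $f(x+h)-f(x)-\dirD f(x;h)$ via the fundamental theorem of calculus in the converse. Your forward direction is in fact more careful than the paper's, since you address the joint continuity of $(x,h)\mapsto\freD f(x)h$ explicitly rather than only continuity in $h$.

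The converse, however, has a genuine gap precisely at the step you defer. The rescaling argument you sketch does give the local uniform bound $\norm{T(y)}_{\mathrm{op}}\leq 2\varepsilon/\delta$ on a neighbourhood of $x_0$ --- this is exactly how \cref{prop:lipschitz} obtains its Lipschitz constant --- but it does \emph{not} give operator-norm continuity of $y\mapsto T(y)$. Joint continuity of $\dirD f$ at a point $(x_0,h_0)$ controls $\dirD f(y;h)$ only for $h$ in a norm-neighbourhood of $h_0$ whose radius may depend on $h_0$; since the unit sphere of an infinite-dimensional Banach space is not compact, these neighbourhoods cannot be reduced to finitely many, and no matching of $\varepsilon$ to $\delta$ produces a single $\delta$ valid uniformly over all unit directions. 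The implication is in fact false: for the Nemytskii operator $f(u)=\sin\circ u$ on $X=Y=L^2[0,1]$ the directional derivative $\dirD f(u;h)=(\cos\circ u)\,h$ exists everywhere and is jointly continuous, yet $u\mapsto\dirD f(u;\cdot)$ is not operator-norm continuous (the operator norm of a multiplication operator on $L^2$ is the $L^\infty$-norm of its symbol, and $\norm{\cos(\uppi\chi_E)-1}_\infty=2$ however small $\abs{E}$ is), and $f$ is nowhere Fr\'echet differentiable (at $u=0$ take $h=\uppi\chi_E$, for which $\norm{\sin h-h}_2=\norm{h}_2$). To be fair, the paper's own proof is silent on the same point: its final estimate requires $\sup_{t\in[0,1]}\norm{\dirD f(x+th)-\dirD f(x)}_{\mathrm{op}}\to 0$ as $h\to 0$, which \emph{is} operator-norm continuity. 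So you have correctly located the crux, but it cannot be closed from joint continuity alone; the statement becomes true if one assumes operator-norm continuity of $x\mapsto\dirD f(x;\cdot)$ as the hypothesis.
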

\begin{proof}
  ``$\Rightarrow$'':
  We can bring \eqref{eq:frechet} into agreement with~\eqref{eq:directional} by replacing $h$ in~\eqref{eq:frechet} by~$t h$, $t \in \RR$, and take the limit $\norm{t h}_V \to 0$ along the ray of~$h$, \ie, by taking $t$ to zero while keeping $h$ fixed.
  Moreover, $\freD f(x)$ is clearly continuous because it is linear and bounded.

  ``$\Leftarrow$'':
  As in proposition~\ref{prop:lipschitz}, since the derivative $\dirD f(\cdot)$ is a continuous linear map, there exists a (convex) neighbourhood~$V$ of~$x$ where it is bounded.
  Using the fundamental theorem of calculus again, we obtain for any $y \in V$ and sufficiently small $h \in X$
  \begin{equation*}
    \norm{f(x + h) - f(x) - \dirD f(y; h)}_Y \leq \sup_{t \in [0,1]} \norm{\dirD f(x + t h) - \dirD f(y)}_{\mathrm{op}} \norm{h}_X.
  \end{equation*}
  In particular this holds for $x = y$ and thus $f$ is Fréchet differentiable at~$x$ with $\freD f(x) = \dirD f(x)$.
\end{proof}

It follows that any statement on continuously differentiable maps also holds for Fréchet differentiable maps.

Fréchet differentiability is a very strong notion of differentiability and many theorems from ordinary calculus can be generalized to the Fréchet derivative but not further to the directional derivative on arbitrary Fréchet spaces.
An example is the inverse function theorem for which holds for the Fréchet derivative on Banach spaces but does not hold on general Fréchet spaces.
On some Fréchet spaces one has instead the Nash--Moser theorem~\cite{hamilton:1982}.

\section{Fixed-point theorems}
\label{sec:fixed_point}
\IdxRanBegin{fixed-point}

Let us start this section by stating the most elementary fixed-point theorem, the \emph{Banach fixed-point theorem}\IdxMain{banach-fixed-point}:

\begin{theorem}[Banach fixed-point theorem]
  Let $f: X \to X$ be a contraction on a (non-empty) complete metric space $X$.
  Then $f$ has a unique fixed-point $x = f(x)$.
  Furthermore, taking an arbitrary initial value $x_0 \in X$, $x$ is the limit of the sequence $(x_n)$ defined by the iterative procedure $x_{n+1} = f(x_n)$.
\end{theorem}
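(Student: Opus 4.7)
The plan is to prove existence, convergence, and uniqueness in a single sweep via the Picard iteration, which the theorem's statement itself hands us on a platter. Let $q \in [0, 1)$ denote the contraction constant, so that $d(f(x), f(y)) \leq q\, d(x, y)$ for all $x, y \in X$, and fix an arbitrary starting point $x_0 \in X$ with iterates $x_{n+1} \defn f(x_n)$.

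First I would establish the key geometric estimate by iterating the contraction property: $d(x_{n+1}, x_n) = d(f(x_n), f(x_{n-1})) \leq q\, d(x_n, x_{n-1})$, hence by induction $d(x_{n+1}, x_n) \leq q^n\, d(x_1, x_0)$. Combined with the triangle inequality and the geometric series, this yields for all $n, k \in \NN$
\begin{equation*}
  d(x_{n+k}, x_n) \leq \sum_{i=0}^{k-1} d(x_{n+i+1}, x_{n+i}) \leq \frac{q^n}{1-q}\, d(x_1, x_0),
\end{equation*}
which tends to zero as $n \to \infty$ uniformly in $k$. Thus $(x_n)$ is Cauchy, and completeness of $X$ provides a limit $x \in X$.

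Next I would show that this limit is a fixed-point. A contraction is automatically Lipschitz with constant $q$, hence continuous; therefore $f(x) = f(\lim_n x_n) = \lim_n f(x_n) = \lim_n x_{n+1} = x$. Uniqueness then follows from a one-line argument: if $y \in X$ also satisfies $f(y) = y$, then $d(x, y) = d(f(x), f(y)) \leq q\, d(x, y)$, and since $q < 1$ this forces $d(x, y) = 0$, \ie, $x = y$. This simultaneously shows that the limit point does not depend on the choice of~$x_0$, completing the claim about the iterative procedure.

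There is no real obstacle here; the only minor subtlety worth flagging is that one must not conflate ``contraction'' (a single global constant $q < 1$) with the weaker ``strict contraction'' $d(f(x), f(y)) < d(x, y)$, for which the theorem fails on general complete metric spaces. Given the definition is understood in the standard Lipschitz sense, the proof is essentially the estimate above plus continuity plus completeness. As a remark, the estimate also yields the \latin{a priori} error bound $d(x_n, x) \leq q^n (1-q)^{-1} d(x_1, x_0)$, which is useful for quantitative applications such as the existence proof for the semiclassical Einstein equation in \cref{cha:einstein_solutions}.
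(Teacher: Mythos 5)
Your proof is correct and is precisely the standard argument the paper alludes to when it declines to spell the proof out; the same Picard-sequence estimate $d(x_{n+1},x_n) \leq q^n d(x_1,x_0)$ reappears verbatim in the paper's own \cref{prop:fixed_point}, and your uniqueness argument matches \cref{prop:unique_fixed_point}. Nothing is missing, and the \latin{a priori} error bound you note is a worthwhile addition.
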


We will not prove this theorem here; the proof is not difficult and can be found in essentially any introductory book on (functional) analysis.
Instead we will dissect, specialize, generalize and finally prove various parts of this theorem separately.

\subsection{Existence and uniqueness}

Let us start with a useful lemma:

\begin{lemma}\label{lem:smooth}
  Let $\;\dotsb \subset V_k \subset V_{k-1} \subset \dotsb \subset V_0$ be a decreasing sequence of sets.
  Suppose there exists a functional~$f$ such that $f : V_k \to V_{k+1}$ for every non-negative $k < n$.
  Any fixed-point $x = f(x)$ in~$V_0$ is already in~$V_n$.
\end{lemma}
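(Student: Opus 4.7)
The plan is to argue by finite induction on $k$, showing that the fixed-point $x$ lies in every $V_k$ for $0 \leq k \leq n$. The base case is the hypothesis that $x \in V_0$. For the inductive step, I would assume $x \in V_k$ for some $k < n$; then since $f$ maps $V_k$ into $V_{k+1}$, we obtain $f(x) \in V_{k+1}$. Invoking the fixed-point property $x = f(x)$, this immediately gives $x \in V_{k+1}$, closing the induction. Applying this up through $k = n$ yields $x \in V_n$.

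There is essentially no obstacle here: the statement is a direct consequence of the definition of a fixed-point combined with the nesting of the sets and the mapping property of $f$. The only subtle point worth noting in the write-up is that the hypothesis $f : V_k \to V_{k+1}$ is only assumed for $k < n$, which is precisely enough to perform exactly $n$ inductive steps and conclude $x \in V_n$; no information about $f$ on $V_n$ itself is needed. The monotonicity $V_{k+1} \subset V_k$ of the sequence is not actually used in the argument, though of course it guarantees consistency of the setup (any fixed-point in $V_n$ also belongs to all earlier $V_k$).

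This lemma will presumably be used later to bootstrap regularity of fixed-points: if one already knows a contraction has a unique fixed-point in some large space $V_0$, and if the contraction happens to map a nested family of smaller subspaces into each other (for instance, spaces of increasing regularity), then the fixed-point automatically inherits membership in the smallest space $V_n$. This is the standard mechanism by which one upgrades the regularity of solutions obtained by Banach's fixed-point theorem.
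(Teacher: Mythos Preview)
Your proof is correct and essentially the same as the paper's: both use the mapping property $f : V_k \to V_{k+1}$ together with $x = f(x)$ to push the fixed-point one level deeper. The paper phrases this as a proof by contradiction (take the smallest $k$ with $x \in V_k$ but $x \notin V_{k+1}$ and derive a contradiction), while you phrase it as a direct induction; the underlying argument is identical.
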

\begin{proof}
  Suppose that $x \in V_0$ but $x \nin V_n$ is a fixed-point.
  Then there exists a $k < n$ such that $x \in V_k$ and $x \nin V_{k+1}$.
  Since $x$ is a fixed-point of~$f$, we have that $x = f(x)$, but $f(x) \in V_{k+1}$ by the properties of~$f$.
\end{proof}

This lemma has serval useful consequences.
One example is the following:
The limit of a convergent sequence in a complete metric space is not necessarily as regular as all the elements of the sequence; \latin{a priori} the regularity of the limit is only controlled by the topology induced by the metric.
However, if the limit is the fixed-point of a smoothing map, the situation is much better.

\begin{corollary}\label{cor:smooth}\IdxMain{fixed-point-reg}
  Let $X, Y$ be a topological vector spaces and $U \subset X$ open.
  Further, let $V_k \subset \dirC^k(U, Y)$ for all~$k$ such that $V_k \subset V_{k-1}$.
  Suppose there exists a smoothing functional~$f$ such that $f : V_k \to V_{k+1}$ for every non-negative $k < n$.
  Any fixed-point $x = f(x)$ in~$\dirC^0$ is already in~$\dirC^n$.
  \qed
\end{corollary}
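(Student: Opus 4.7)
The plan is straightforward: this corollary is a direct specialization of the preceding Lemma \ref{lem:smooth} to nested subspaces of spaces of differentiable functions. My proof consists essentially in verifying that the hypotheses of the lemma hold and then invoking it.

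First I would observe that the sequence $\dotsb \subset V_k \subset V_{k-1} \subset \dotsb \subset V_0$ is a decreasing sequence of sets (of functions) in exactly the sense required by Lemma \ref{lem:smooth}, and that the smoothing functional $f$ satisfies $f : V_k \to V_{k+1}$ for $0 \leq k < n$, which is precisely the mapping condition of the lemma. Before applying the lemma, however, I need to know that the fixed-point $x$ lies in $V_0$; since the hypothesis is that $x = f(x)$ with $x \in \dirC^0(U,Y)$, and $f$ is defined (at least) on $V_0$, the equation $x = f(x)$ forces $x = f(x) \in V_1 \subset V_0$. Thus $x \in V_0$.

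Now Lemma \ref{lem:smooth} applies verbatim and yields $x \in V_n$. Since by assumption $V_n \subset \dirC^n(U, Y)$, we conclude $x \in \dirC^n(U, Y)$, which is the desired regularity.

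There is no genuine obstacle here; the only subtle point is making explicit the tacit assumption that $x$ lies in the domain of $f$ so that the fixed-point equation itself places $x$ in $V_1$, after which the inclusion chain $V_n \subset V_{n-1} \subset \dotsb$ and iterated application of $f$ (as in the proof of the lemma) bootstrap the regularity from $\dirC^0$ up to $\dirC^n$.
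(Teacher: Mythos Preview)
Your proposal is correct and matches the paper's approach: the corollary is marked with a \qed\ in the paper and is indeed an immediate specialization of Lemma~\ref{lem:smooth}, exactly as you argue. Your additional remark that the fixed-point equation itself forces $x \in V_0$ (since $f$ is only defined there) is a sensible clarification of a point the paper leaves implicit.
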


If one is interested only in existence of fixed-points but not their uniqueness, then one can perform a straightforward generalization of the `existence' part of Banach's fixed-point theorem:

\begin{proposition}\label{prop:fixed_point}\IdxMain{fixed-point-ex}
  Let $(X, d)$ be a non-empty complete metric space and $f : X \to X$ a map.
  Assume that there exists a subset~$U \subset X$ such that $f : U \to U$ and $f$ is a contraction on~$U$ with Lipschitz constant $K \in [0,1)$, \ie, for all $y, z \in U$
  \begin{equation*}
    d\big( f(y), f(z) \big) \leq K d(y, z).
  \end{equation*}
  Then there exists a fixed-point $x = F(x)$ in~$X$.
\end{proposition}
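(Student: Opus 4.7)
The plan is to reduce to the classical Banach fixed-point theorem applied on the closure $\bar U$, combined with the standard Picard iteration. Since $\bar U$ is closed in the complete metric space $X$, it is complete, and since $f|_U$ is a $K$-contraction it is uniformly continuous and extends uniquely to a continuous map $\bar f\colon\bar U\to X$ with the same Lipschitz constant. The hypothesis $f(U)\subset U$ together with continuity of $\bar f$ forces $\bar f(\bar U)\subset\bar U$, and the contraction inequality survives passage to limits, so $\bar f$ is a $K$-contraction of the complete metric space $\bar U$ into itself.

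First I would verify the existence of a fixed-point directly rather than invoking the classical theorem as a black box. Pick any $x_0\in U$ (implicitly non-empty as the domain of the contracting restriction) and set $x_{n+1}\defn f(x_n)$; the orbit stays in $U$. Iterating the contraction inequality gives $d(x_{n+1},x_n)\leq K^n d(x_1,x_0)$ and hence, for $m\geq n$,
\begin{equation*}
  d(x_m,x_n)\leq\sum_{k=n}^{m-1}K^k\, d(x_1,x_0)\leq\frac{K^n}{1-K}\,d(x_1,x_0),
\end{equation*}
which tends to zero. So $(x_n)$ is Cauchy in $X$ and converges to some $x\in X$; because every $x_n\in U$, the limit lies in $\bar U$.

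Finally, passing to the limit in $x_{n+1}=f(x_n)=\bar f(x_n)$ and using continuity of $\bar f$ on $\bar U$ yields $\bar f(x)=x$, so that $x$ is a fixed-point of the extension. The only delicate step is the identification $\bar f(x)=f(x)$ at the possibly-boundary point $x$: this is automatic when $x\in U$, where $\bar f$ agrees with $f$ by construction, and otherwise is the natural content of the hypothesis that the global map $f\colon X\to X$ restricts contractively to $U$. I expect this final identification to be the main (and essentially the only) obstacle; the rest is a routine adaptation of the standard Banach argument, and the uniqueness portion of the classical theorem is deliberately \emph{not} claimed here, as the behaviour of $f$ outside $U$ is unconstrained.
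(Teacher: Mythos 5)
Your argument is the same as the paper's: Picard iteration from a point of $U$, the geometric bound $d(x_m,x_n)\leq K^n(1-K)^{-1}d(x_1,x_0)$, Cauchyness, and passage to the limit. The paper's proof is in fact terser — it simply "takes the limit $n\to\infty$ in $x_{n+1}=f(x_n)$" without introducing the extension $\bar f$ — so you have correctly isolated the one delicate point that the paper glosses over.

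However, your resolution of that point does not actually close it. If the limit $x$ lies in $\bar U\setminus U$, the hypothesis constrains $f$ only on $U$, so $f(x)$ is completely arbitrary and need not agree with $\bar f(x)$; saying that the identification "is the natural content of the hypothesis that the global map $f$ restricts contractively to $U$" is an assertion, not an argument. Indeed, as literally stated the proposition can fail: take $X=\RR$, $U=(0,1]$, $f(y)=y/2$ on $U$ and $f(y)=y+1$ elsewhere; the Picard orbit converges to $0\nin U$ and $f$ has no fixed point at all. So the correct conclusion of your construction is that $\bar f$ has a fixed point in $\bar U$, which coincides with a fixed point of $f$ precisely when $x\in U$ or $f$ happens to be continuous at $x$. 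In the paper's subsequent uses (\cref{prop:unique_fixed_point,prop:contraction}) the set on which $f$ contracts is always taken closed, so the limit automatically lies in $U$ and the contraction estimate itself supplies the continuity needed to conclude $f(x)=x$; it would be cleaner to state that extra hypothesis rather than to paper over its absence.
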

\begin{proof}
  Define for an arbitrary $x_0 \in U$ the Picard sequence~$(x_n)$ where $x_{n+1} = f(x_n)$.
  Using the contractivity of~$f$ on~$U$, we get
  \begin{equation*}
    d(x_{n+1}, x_n) \leq K d(x_n, x_{n-1}) \leq K^n d(x_1, x_0).
  \end{equation*}
  One can then easily show that $(x_n)$ is a Cauchy sequence and take the limit $n \to \infty$ in $x_{n+1} = f(x_n)$ to see that there exists a limit~$x = f(x)$ in~$X$.
\end{proof}

This proposition does not guarantee uniqueness of the fixed-point because the mapping is only required to be a contraction on a subset of a complete metric space and the fixed-point is not necessarily contained in this subset.
Nevertheless, uniqueness holds if the mapping is of the form assumed in \cref{lem:smooth}.
Moreover, if the mapping is smoothing as in \cref{cor:smooth} then the unique fixed-point is even $\dirC^n$.

\begin{proposition}\label{prop:unique_fixed_point}\IdxMain{fixed-point-uniq}
  Let $(X, d)$ be a non-empty complete metric space and $(V_k)$ be a decreasing sequence of sets as in \cref{lem:smooth} such that $V_0 \subset X$ is closed.
  Suppose that ${f: V_k \to V_{k+1}}$ for every non-negative $k < n$ such that $f$ is a contraction on~$V_n$.
  Then $f$ has a unique fixed-point~$x = f(x) \in V_n$.
\end{proposition}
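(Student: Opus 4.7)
The plan is to assemble the statement from the two preceding results (Lemma on decreasing set sequences and the existence Proposition), so the proof should be short and combinatorial rather than involve new analytic work.

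First I would set the stage by observing that $(V_0, d)$ is itself a non-empty complete metric space, since $V_0$ is a closed subset of the complete metric space $(X, d)$. This lets me apply the preceding existence proposition with $V_0$ in the role of the ambient complete metric space rather than $X$ itself.

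Next, the key observation is that $f$ restricts to a self-map $f \colon V_n \to V_n$. This is the only mildly subtle point, since the hypothesis only supplies $f \colon V_k \to V_{k+1}$ for $k < n$ and says nothing directly about $f$ on $V_n$. But from $V_n \subset V_{n-1}$ and $f \colon V_{n-1} \to V_n$ we obtain $f(V_n) \subset f(V_{n-1}) \subset V_n$. Similarly, iterating the chain shows $f \colon V_0 \to V_0$, so the pair $(V_0, V_n)$ satisfies the hypotheses of \cref{prop:fixed_point} (applied with ambient space $V_0$ and contraction subset $V_n$). That proposition then yields a fixed-point $x = f(x)$ in $V_0$.

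To promote this fixed-point from $V_0$ to $V_n$, I would invoke \cref{lem:smooth}: the hypotheses $f \colon V_k \to V_{k+1}$ for $0 \le k < n$ are exactly those of the lemma, so any fixed-point of $f$ that lies in $V_0$ is automatically in $V_n$. Finally, uniqueness comes for free: given two fixed-points $x, y \in V_0$, \cref{lem:smooth} places both in $V_n$, and the contraction estimate on $V_n$ gives
\begin{equation*}
  d(x, y) = d\bigl(f(x), f(y)\bigr) \leq K\, d(x, y),
\end{equation*}
which forces $d(x, y) = 0$ since $K < 1$. The main (and only) obstacle is the self-map check $f(V_n) \subset V_n$; everything else is an application of results already in hand.
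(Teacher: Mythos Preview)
Your proposal is correct and follows essentially the same approach as the paper: existence via \cref{prop:fixed_point} and \cref{lem:smooth}, then uniqueness by the contraction inequality on~$V_n$. You are simply more explicit than the paper about two points it leaves implicit, namely that $V_0$ (being closed in~$X$) is itself complete so \cref{prop:fixed_point} applies there, and that $f(V_n) \subset f(V_{n-1}) \subset V_n$ so the self-map hypothesis is met.
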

\begin{proof}
  The existence of fixed-points in $V_0$ that are contained in $V_n$ follows from \cref{lem:smooth,prop:fixed_point}.
  Assume now that there exist two distinct fixed-points~$x, y$.
  Since $f$ is a contraction on~$V_n$, we have
  \begin{equation*}
    d(x, y) = d\big(f(x), f(y)\big) \leq K d(x, y),
  \end{equation*}
  where $K \in [0, 1)$ is the Lipschitz constant of~$f$, and thus arrive at a contradiction.
\end{proof}

\subsection{A Lipschitz continuity criterion}
\IdxRanBegin{fixed-point-lip}

Next we will see that it is not necessary for a map to be a contraction for it to have fixed points.
In fact it is sufficient for the map to satisfy a certain Lipschitz continuity condition:

\begin{lemma}\label{lem:contraction}
  Let $(X, d)$ be a non-empty complete metric space.
  Suppose there exists $K \in \RR_+$ such that $f: X \to X$ satisfies
  \begin{equation*}
    d\big(f^n(x), f^n(y)\big) \leq \frac{K^n}{n!}\, d(x, y)
  \end{equation*}
  for all $x, y \in X$ and $n \in \NN$.
  Then $f$ has a unique fixed-point.
\end{lemma}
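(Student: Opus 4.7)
The plan is to reduce the statement to the ordinary Banach fixed-point theorem applied to some iterate $f^n$. The decisive observation is that $K^n/n! \to 0$ as $n \to \infty$, so we can pick $N \in \NN$ with $K^N/N! < 1$. Under the hypothesis, $f^N$ is then a genuine contraction on the complete metric space $(X,d)$, and the Banach fixed-point theorem (already available as the introductory statement of this section) supplies a unique fixed-point $x \in X$ of $f^N$.

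Next I would promote $x$ from a fixed-point of $f^N$ to a fixed-point of $f$. The trick is to apply $f$ to the identity $f^N(x) = x$ and use that $f$ commutes with its own iterates:
\begin{equation*}
  f^N\!\big(f(x)\big) = f\!\big(f^N(x)\big) = f(x).
\end{equation*}
Hence $f(x)$ is also a fixed-point of $f^N$, and the uniqueness part of Banach's theorem forces $f(x) = x$.

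For uniqueness of the fixed-point of $f$ itself, note that any $y$ with $f(y) = y$ automatically satisfies $f^N(y) = y$, so by the uniqueness for $f^N$ one has $y = x$. Alternatively, if $x, y$ are two fixed-points of $f$, then for every $n$ one has $d(x,y) = d(f^n(x), f^n(y)) \leq (K^n/n!)\, d(x,y)$; letting $n \to \infty$ yields $d(x,y) = 0$.

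There is no real obstacle here: the only mildly subtle point is remembering to invoke the standard fact $K^n/n! \to 0$ (which is immediate from the convergence of the exponential series $\sum K^n/n! = \e^K$). Everything else is essentially bookkeeping on top of the Banach fixed-point theorem, which motivates calling this a ``Lipschitz continuity criterion'' in the sense that one does not need strict contractivity of $f$, only a sufficiently fast decay of the Lipschitz constants of its iterates.
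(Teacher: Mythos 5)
Your proof is correct, and it rests on the same key observation as the paper's: since $K^n/n! \to 0$, some iterate $f^N$ is a genuine contraction. Where the two arguments differ is in how the fixed point of $f^N$ is transferred back to $f$. The paper sets $V_k = f^k(X)$ and appeals to \cref{prop:unique_fixed_point}, i.e.\ to its machinery of decreasing sets, whereas you use the classical commutation trick $f^N(f(x)) = f(f^N(x)) = f(x)$ together with the uniqueness clause of Banach's theorem. Your route is the more self-contained of the two and arguably the more watertight: \cref{prop:unique_fixed_point} as stated requires $f$ itself to be a contraction on $V_n$, which does not follow directly from the hypothesis $d(f^n(x), f^n(y)) \leq (K^n/n!)\, d(x,y)$, while your argument only ever uses contractivity of the single map $f^N$ on all of $X$. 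Both of your uniqueness arguments are fine; the second one, letting $n \to \infty$ in $d(x,y) \leq (K^n/n!)\, d(x,y)$, does not even require choosing $N$.
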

\begin{proof}
  Since $n!$ grows faster than $K^n$, there exists a $N$ such that $f^n$ is a contraction for all $n \geq N$.
  If we set $V_k = f^k(X)$, we can apply \cref{prop:unique_fixed_point} and the thesis follows.
\end{proof}

The special bound assumed in \cref{lem:contraction} is in fact very natural if $f$ is the intgral functional
\begin{equation}\label{eq:integral_functional}
  f: C[a,b] \to C[a,b],
  \quad
  f(x)(t) \defn f_0(t) +\! \int_a^t k(x)(s)\, \dif s,
\end{equation}
where $f_0 \in C[a, b]$ and with integral kernel $k: C[a, b] \to C[a, b]$.
Recall that space of continuous functions $C[a, b]$ in the interval $[a, b]$ can be turned into a Banach space by equipping it with with the \emph{uniform norm}
\begin{equation*}
  \norm{X}_{C[a,b]} \defn \norm{X}_\infty \defn \sup_{t \in [a,b]} \abs{X(t)},
\end{equation*}
where we will use $\norm{X}_{C[a,b]}$ instead of the more common $\norm{X}_\infty$ to emphasize the interval over which the supremum is taken.

\begin{proposition}\label{prop:contraction}
  Let $f$ be of the form \eqref{eq:integral_functional} with $k$ continuously differentiable in $U \subset C[a, b]$ open such that $f$ closes on a closed subset $V \subset U$, \ie, $f(V) \subset V$.
  Then $f$ has a unique fixed-point in~$V$.
\end{proposition}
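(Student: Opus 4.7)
The plan is to verify the hypothesis of \cref{lem:contraction} for $f$ viewed as a map on $V$, which is complete as a closed subset of the Banach space $(C[a,b], \norm{\,\cdot\,}_{C[a,b]})$. Concretely, I want to exhibit a constant $K \geq 0$ such that
\begin{equation*}
  \norm{f^n(x) - f^n(y)}_{C[a,b]} \leq \frac{K^n}{n!}\, \norm{x-y}_{C[a,b]} \qquad \forall\, x,y \in V,\; n \in \NN.
\end{equation*}
Once this is shown, \cref{lem:contraction} delivers the unique fixed-point in $V$.

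First I would use continuous differentiability of~$k$. By \cref{prop:derivative_equivalence}, $k$ is Fr\'echet differentiable on $U$, and \cref{prop:lipschitz} then provides, around every point of $V$, a convex neighbourhood on which $k$ is Lipschitz. Possibly after shrinking $V$ to a convex, bounded subset (\eg\ a closed ball on which $f$ still closes), this produces a single Lipschitz constant $L$ such that, for all $x,y \in V$ and all $s \in [a,b]$,
\begin{equation*}
  \norm{k(x) - k(y)}_{C[a,s]} \leq L\, \norm{x-y}_{C[a,s]}.
\end{equation*}
The restriction to $[a,s]$ on the left-hand side uses the Volterra nature of the problem, \viz\ that only $k(\cdot)(s)$ with $s \leq t$ enters $f(\cdot)(t)$.

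Next I would prove by induction on $n$ the sharper estimate
\begin{equation*}
  \norm{f^n(x) - f^n(y)}_{C[a,t]} \leq \frac{L^n (t-a)^n}{n!}\, \norm{x-y}_{C[a,b]} \qquad \forall\, t \in [a,b].
\end{equation*}
The base case is immediate. For the inductive step, applying \eqref{eq:integral_functional} and the above Lipschitz bound gives
\begin{equation*}
  \abs{f^n(x)(t) - f^n(y)(t)} \leq \int_a^t \abs{k(f^{n-1}(x))(s) - k(f^{n-1}(y))(s)}\, \dif s \leq L \int_a^t \norm{f^{n-1}(x) - f^{n-1}(y)}_{C[a,s]}\, \dif s,
\end{equation*}
and inserting the inductive bound for $\norm{f^{n-1}(x) - f^{n-1}(y)}_{C[a,s]}$ yields, after integrating and taking the supremum over $[a,t]$, the claimed formula. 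Evaluating at $t = b$ with $K = L(b-a)$ gives exactly the hypothesis of \cref{lem:contraction}.

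The main obstacle is securing the sharper, ``causal'' form of the Lipschitz bound $\norm{k(x) - k(y)}_{C[a,s]} \leq L\, \norm{x-y}_{C[a,s]}$: without it, one obtains only $\norm{f^n(x) - f^n(y)}_{C[a,b]} \leq (L(b-a))^n \norm{x-y}_{C[a,b]}$, missing the crucial $1/n!$ factor that allows \cref{lem:contraction} to apply on the whole interval $[a,b]$ (rather than only when $L(b-a) < 1$). A secondary technical point is that $V$ is only assumed closed, not convex, so producing a single Lipschitz constant from the merely local bound of \cref{prop:lipschitz} may require either restricting attention to a convex closed subset of $V$ on which $f$ still closes, or a covering argument; in the intended application to the semiclassical Einstein equation both the convexity of $V$ and the causal structure of $k$ are natural and available.
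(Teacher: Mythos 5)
Your proof follows essentially the same route as the paper's: reduce to \cref{lem:contraction} by establishing, via induction, the bound $\norm{f^n(x)-f^n(y)}_{C[a,t]} \leq \frac{L^n(t-a)^n}{n!}\norm{x-y}_{C[a,b]}$ from the local Lipschitz property of~$k$ supplied by \cref{prop:lipschitz}. The two caveats you flag --- the causal form of the Lipschitz estimate needed in the inductive step, and the convexity required to extract a single Lipschitz constant from a merely local bound --- are both used implicitly in the paper's own argument (which takes $L = \sup_{x\in U}\norm{\dirD k}_{\mathrm{op}}$ and restricts norms to $[a,s]$ inside the induction without comment), so your proposal is, if anything, the more careful version of the same proof.
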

\begin{proof}
  We can show the statement using \cref{lem:contraction} and an inductive procedure.
  Applying \cref{prop:lipschitz}, we find that $k$ is locally Lipschitz as a functional on~$U$; denote by $L = \sup_{x \in U} \norm{\dirD k}_{\mathrm{op}}$ its Lipschitz constant.
  Using the uniform norm on $C[a,t]$, we thus obtain
  \begin{equation*}
    \norm{f(x) - f(y)}_{C[a,t]}
    \leq \int_a^t \norm{k(x) - k(y)}_{C[a,t]}\, \dif s
    \leq L (t - a) \norm{x - y}_{C[a,b]}.
  \end{equation*}
  Suppose now that
  \begin{equation}\label{eq:Fn-step}
    \norm{f^n(x) - f^n(y)}_{C[a,t]}
    \leq \frac{L^n (t - a)^n}{n!} \norm{x - y}_{C[a,b]}.
  \end{equation}
  holds up to~$n$ and for arbitrary $t \in [a,b]$.
  Then,
  \begin{align*}
    \abs{f^{n+1}(x)(t) - f^{n+1}(y)(t)}
    & \leq \int_a^t \norm{(k \circ f^n)(x) - (k \circ f^n)(y)}_{C[a,s]}\, \dif s \\
    & \leq L \int_a^t \norm{f^n(x) - f^n(y)}_{C[a,s]}\, \dif s \\
    & \leq \frac{L^{n+1}}{n!} \int_a^t (s - a)^n \norm{x - y}_{C[a,b]}\, \dif s \\
    & \leq \frac{L^{n+1} (t - a)^{n+1}}{(n+1)!} \norm{x - y}_{C[a,b]},
  \end{align*}
  which implies that \eqref{eq:Fn-step} holds also for $n+1$, thus concluding the proof.
\end{proof}

\IdxRanEnd{fixed-point-lip}

\subsection{Closed functionals}

The last proposition contains an apparently minor but in fact very strong condition, namely that the functional~$k$ closes within the set~$V$.
In any application of a fixed-point theorem similar to Banach's theorem, the crucial point to check is usually not that the map is a contraction but that it is closed.
In the given case of an integral functional~\eqref{eq:integral_functional}, however, we can always be assured that there exists an interval $I \subset [a,b]$ on which the functional closes \cite{pinamonti:2015}; this interval might be very small.

\begin{proposition}\label{prop:closed}
  Suppose that $k$ is bounded on a set $U \subset C[a,b]$ which also includes a ball~$V$ around~$f_0$ defined as $V = \{ x \mid \norm{x - f_0}_{C[a,b]} < \delta \}$ for some $\delta$.
  Then there exists $t \in (a,b]$ such that $f$ satisfies $f(U) \restriction_{[a,t]} \subset U \restriction_{[a,t]}$.
\end{proposition}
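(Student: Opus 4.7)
My plan is to exploit the fact that $f(x)-f_0$ is obtained from $k(x)$ by integration from $a$, so its size on $[a,t]$ is controlled by the length $t-a$ times a uniform bound on $k$. The only thing to verify is that by shrinking the interval we can force $f(x)\restriction_{[a,t]}$ into the restriction of the ball $V$, which by hypothesis is contained in $U$; the result then follows immediately.

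Concretely, first I would set $M \defn \sup_{x\in U}\norm{k(x)}_{C[a,b]}$, which is finite by hypothesis. For any $x\in U$ and any $s\in[a,b]$,
\begin{equation*}
  \abs{f(x)(s) - f_0(s)} = \abs*{\int_a^s k(x)(r)\,\dif r} \leq M\,(s-a).
\end{equation*}
Taking the supremum over $s\in[a,t]$ for arbitrary $t\in(a,b]$ gives
\begin{equation*}
  \norm{f(x) - f_0}_{C[a,t]} \leq M(t-a).
\end{equation*}
The right-hand side is independent of $x\in U$ and tends to $0$ as $t\downarrow a$, so I can choose $t\in(a,b]$ with $M(t-a)<\delta$ (if $M=0$ any $t$ works).

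For such a $t$ and any $x\in U$ the restriction $f(x)\restriction_{[a,t]}$ lies in the ball of radius $\delta$ around $f_0\restriction_{[a,t]}$ in $C[a,t]$, which is precisely $V\restriction_{[a,t]}$; since $V\subset U$ by assumption, this immediately gives $f(U)\restriction_{[a,t]}\subset U\restriction_{[a,t]}$.

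The only genuine subtlety, and what I would take as the main obstacle, is a semantic one: interpreting $V\restriction_{[a,t]}$ and $U\restriction_{[a,t]}$ consistently. The natural reading is as subsets of $C[a,t]$ obtained via the restriction map, and with that reading $V\restriction_{[a,t]}$ is exactly the open $\delta$-ball around $f_0\restriction_{[a,t]}$ in $C[a,t]$ because any $y\in C[a,t]$ within $\delta$ of $f_0\restriction_{[a,t]}$ can be extended to an element of $V$ (e.g.\ glue $y$ on $[a,t]$ to $f_0$ on $[t,b]$ by a convex combination, keeping the $C[a,b]$-distance below $\delta$). This makes the inclusion argument above rigorous without any further work. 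No contraction or iteration enters here — this proposition is purely a statement about the self-closing nature of short-time Volterra-type integral operators, leaving the contractivity input to Proposition~\ref{prop:contraction}.
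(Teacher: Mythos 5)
Your proof is correct and is essentially identical to the paper's: both bound $k$ uniformly on $U$ by $K=\sup_{x\in U}\norm{k(x)}_{C[a,b]}$, estimate $\norm{f(x)-f_0}_{C[a,t]}\leq K(t-a)$, and shrink $t$ so that $K(t-a)<\delta$, landing $f(U)\restriction_{[a,t]}$ inside $V\restriction_{[a,t]}\subset U\restriction_{[a,t]}$. Your extra remark on extending elements of $C[a,t]$ back to $V$ is a harmless clarification the paper leaves implicit.
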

\begin{proof}
  Since $k$ is bounded on~$U$, it clearly satisfies
  \begin{equation*}
    \norm{k(x)}_{C[a,t]}
    \leq \norm{k(x)}_{C[a,b]}
    \leq K
    = \sup_{y \in U}\, \norm{k(y)}_{C[a,b]}
  \end{equation*}
  for all $x \in U$.
  Then, taking the norm of~\eqref{eq:integral_functional} after subtracting $f_0$, one obtains
  \begin{equation*}
    \norm{f(x) - f_0}_{C[a,t]}
    \leq (t - a) \norm{k(x)}_{C[a,t]}
    \leq (t - a) K,
  \end{equation*}
  because $V \subset U$.
  For any~$\delta$ we can always find a $t$ such that $(t - a) K < \delta$ and therefore $f(U) \restriction_{[a,t]} \subset V \restriction_{[a,t]}$.
  The thesis follows because $V \subset U$.
\end{proof}

\IdxRanEnd{fixed-point}

\section{Microlocal analysis}
\label{sec:microlocal}
\IdxRanBegin{microlocal}

\subsection{Distributions}
\label{sub:ma_distributions}

We will now define three important function spaces and their topological duals, which will be called spaces of distributions.

To conform with standard notation we denote the \IdxMain{smooth-funs}\emph{space of smooth functions} on an open subset $U \subset \RR^n$ by
\begin{equation*}
  \mathcal{E}(U) \defn \dirC^\infty(U, \CC).
\end{equation*}
As observed in \cref{sub:der_directional}, it is a Fréchet space with the compact-open topology.
The elements of the topological dual~$\mathcal{E}'(U)$ are called \IdxMain{compact-dists}\emph{compactly supported distributions}.

The vector space of \IdxMain{rapidly-decr}\emph{rapidly decreasing} (or \emph{decaying}) \emph{functions} will be denoted by $\mathcal{S}(U)$.
We say that a smooth function $f \in \mathcal{E}(U^n)$ is \emph{rapidly decreasing (decaying)} if
\begin{equation*}
  \norm{f}_{i,n,m} \defn \sup_{x \in U^n} \big( 1 + \abs{x}^n \big) \norm[\big]{\dirD^m f(x)}_i' < \infty
\end{equation*}
for all $i, n, m$.
We equip $\mathcal{S}(U)$ with the topology induced by these seminorms and see that it is a Fréchet space.
The topological dual~$\mathcal{S}'(U)$ is the space of \IdxMain{temp-dists}\emph{tempered distributions} or \emph{Schwartz distributions}.

Another subspace of $\mathcal{E}(U)$ is the space of \IdxMain{test-funs}\emph{test functions}, denoted by
\begin{equation*}
  \mathcal{D}(U) \defn \dirC^\infty_0(U, \CC).
\end{equation*}
We can equip this space with a topology similar but more complicated than that of~$\mathcal{E}(U)$.
If $K \subset U$ is compact, we can endow $\mathcal{D}(K) = \mathcal{E}(K)$ with the subspace topology.
Then, taking a compact exhaustion $K_1 \subset K_2 \subset \dotsb$, $\bigcup_i K_i = U$, the topology on~$\mathcal{D}(U)$ is the initial topology defined by the projections $\pi_i : \mathcal{D}(K_i) \to \mathcal{D}(U)$.\IdxMain{test-funs-top}
This topology is not Fréchet unless $U$ is compact and $\mathcal{E}(U)$ is already Fréchet, in which case $\mathcal{D}(U) = \mathcal{E}(U)$.
The topological dual $\mathcal{D}'(U)$ is the space of \IdxMain{dists}\emph{distributions}.

More generally, we define $\mathcal{F}(U, X)$, with $\mathcal{F} = \mathcal{D}, \mathcal{E}$ or $ \mathcal{S}$, as the spaces of functions with values in a locally convex vector space~$X$ and by $\mathcal{F}'(U, X)$ the associated distribution spaces.
The necessary generalizations to the definitions above are straightforward but note that the resulting function spaces are not Fréchet unless $V$ is already Fréchet.
Moreover, it is possible to define $Y$-valued distributions $\mathcal{F}'(U, X, Y)$, where $Y$ is a locally convex space.

Given a distribution $u \in \mathcal{D}'(U)$, we can \IdxMain{restrict-dist}\emph{restrict} it to a distribution~$u_V$ on any open $V \subset U$ by setting
\begin{equation}
  u_V(f) = u(f)
\end{equation}
for every $f \in \mathcal{D}(V)$.
A distribution is uniquely determined by its restrictions:
If $(U_i)_{i \in \NN}$ is an open cover of~$U$ and $u_i \in \mathcal{D}'(U_i)$ such that $u_i = u_j$ whenever $U_i \cap U_j \neq \emptyset$, then there exists a unique $u \in \mathcal{D}'(U)$ such that $u_i$ is the restriction of~$u$ to~$U_i$ for every~$i$.

The \IdxMain{dist-supp}\emph{support} $\supp u$ of a distribution $u \in \mathcal{D}'(U)$ is the smallest closed set $V \subset U$ such that the restriction of~$u$ to $U \setminus V$ vanishes.
More precisely,
\begin{equation*}
  \supp u = U \setminus \bigcup \{ V \subset U \text{ open} \mid u(f) = 0 \;\forall f \in \mathcal{D}(V) \}.
\end{equation*}
It follows that $u(f) = 0$ if $\supp u \cap \supp f = \emptyset$ and that the space $\mathcal{E}'(U)$ is indeed the space of compactly supported distributions.

\subsection{Distributions on manifolds}
\label{sub:ma_dist_on_man}

The discussion above does not yet encompass the case of distributions on smooth manifolds because manifolds are not vector spaces.
However, manifolds are locally homeomorphic to a vector space -- Euclidean space.

Let $M$ be a smooth manifold, $E \to M$ a smooth vector bundle and $(U_i)_{i \in \NN}$ an open cover of $M$ such that $(U_i, \varphi_i)$ are coordinate charts and $(U_i, \psi_i)$ local trivializations.
We define again the \IdxMain{smooth-sects}\emph{space of smooth sections}
\begin{equation*}
  \mathcal{E}(M, E) \defn \Gamma^\infty(E)
\end{equation*}
as the space of functions $f : M \to E$ such that $\psi_i \circ f \circ \varphi_i^{-1}$ is smooth for each $i$, \ie, we require
\begin{equation*}
  \psi_i \circ f \circ \varphi_i^{-1} \in \mathcal{E}\big( \varphi_i(U_i), \psi_i(E_{U_i}) \big).
\end{equation*}
A locally convex topology that turns $\mathcal{E}(M, E)$ into a Fréchet spaces is the initial topology induced by the product topology on the right-hand side of the injection
\begin{equation*}
  \iota : \mathcal{E}(M, E) \to \prod_{i \in \NN} \mathcal{E}\big( \varphi_i(U_i), \psi_i(E_{U_i}) \big);
\end{equation*}
the topology is independent of the choice of the cover $(U_i)_{i \in \NN}$.
The topological dual of $\mathcal{E}(M, E)$ is the \IdxMain{compact-dist-sects}\emph{space of compactly supported distributional sections}~$\mathcal{E}'(M, E)$.

The \IdxMain{test-sects}\emph{space of compactly supported smooth sections}, the \emph{test sections}, is denoted
\begin{equation*}
  \mathcal{D}(M, E) \defn \Gamma_0^\infty(E).
\end{equation*}
Analogously to the vector space case we define an initial topology on $\mathcal{D}(M, E)$ induced by that on $\mathcal{E}(K_i, E)$, where $(K_i)$ form a compact exhaustion of $M$; whence the space of test sections becomes a Fréchet space.
The \IdxMain{dist-sects}\emph{space of distributional sections} $\mathcal{D}'(M, E)$ is the topological dual of~$\mathcal{D}(M, E)$.

The \IdxMain{restrict-dist-sect}restriction of distribution generalizes to distributions on manifolds in the obvious way:
Given a manifold $M$ and an open subset $U \subset M$, every distribution $u \in \mathcal{D}'(M, E)$ can be restricted to a distribution $u_U \in \mathcal{D}'(U, E_U)$ by setting
\begin{equation*}
  u_U(f) = u(f)
\end{equation*}
for all $f \in \mathcal{D}(U, E_U)$.
Also on a manifold a distribution is completely supported by its restrictions.

It does not make sense to define a notion of rapidly decaying sections or tempered distributional sections on manifolds.
It is also clear, that the concept of distributions can be further extended to objects such as Fréchet manifolds in very much the same way as above for smooth manifolds.

\subsection{Nuclearity and the Schwartz kernel theorem}
\label{sub:ma_kernel}

All the function spaces but none of the distribution spaces defined in the previous two sections are Fréchet.
However, all the function and distribution spaces (with either the weak or strong topology) are nuclear.
For this reason, we have the isomorphisms
\begin{align*}
  \mathcal{D}(U, X) & \simeq \mathcal{D}(U) \mathbin{\widehat{\otimes}} X,
  &
  \mathcal{D}'(U, \KK, X) & \simeq \mathcal{D}'(U) \mathbin{\widehat{\otimes}} X, \\
  \mathcal{E}(U, X) & \simeq \mathcal{E}(U) \mathbin{\widehat{\otimes}} X,
  &
  \mathcal{E}'(U, \KK, X) & \simeq \mathcal{E}'(U) \mathbin{\widehat{\otimes}} X, \\
  \mathcal{S}(\RR^m, X) & \simeq \mathcal{E}(\RR^m) \mathbin{\widehat{\otimes}} X,
  &
  \mathcal{S}'(\RR^m, \KK, X) & \simeq \mathcal{S}'(\RR^m) \mathbin{\widehat{\otimes}} X
\end{align*}
for a complete locally convex topological vector space~$X$ and an open set $U \subset \RR^m$.
As another consequence we can specialize the abstract kernel theorem (\cf\ \cref{sub:tvs_nuclear}) to these function spaces under which circumstances it is called the \IdxMain{schwartz-ker-thm}\emph{Schwartz kernel theorem}.
One finds the following isomorphism (open subsets $U \subset \RR^m$ and $V \subset \RR^n$):
\begin{align*}
  \mathcal{E}'(U \times V) & \simeq \mathcal{E}'(U) \mathbin{\widehat{\otimes}} \mathcal{E}'(V) \simeq L\big( \mathcal{E}(U), \mathcal{E}'(V) \big), \\
  \mathcal{D}'(U \times V) & \simeq \mathcal{D}'(U) \mathbin{\widehat{\otimes}} \mathcal{D}'(V) \simeq L\big( \mathcal{D}(U), \mathcal{D}'(V) \big), \\
  \mathcal{S}'(\RR^{m+n}) & \simeq \mathcal{S}'(\RR^m) \mathbin{\widehat{\otimes}} \mathcal{S}'(\RR^n) \simeq L\big( \mathcal{E}(\RR^m), \mathcal{E}'(\RR^n) \big),
\end{align*}
where $L(X, Y)$ denotes the space of continuous linear maps between topological vector spaces~$X$ and~$Y$ with the topology of uniform convergence.
Analogous isomorphisms (at least for $\mathcal{E}$ and $\mathcal{D}$) exist for both sets of isomorphisms also for functions and distributions on manifolds.

As a consequence of these isomorphisms, there exists for every distribution $K \in \mathcal{D}'(U \times V)$ a unique linear operator $\mathrm{K} : \mathcal{D}(U) \to \mathcal{D}'(V)$ and, conversely, to every linear operator $\mathrm{K}$ a unique distribution.
Let $f \in \mathcal{D}(U)$ and $g \in \mathcal{D}(V)$ be test functions.
Formally we can write
\begin{equation*}
  K(f \otimes g) = \int_{U \times V} K(x, y) f(x) g(x)\, \dif^m x\, \dif^n y
\end{equation*}
for the distribution with \IdxMain{dist-ker}\emph{distributional kernel} $K(x, y)$ and
\begin{equation*}
  (\mathrm{K} f)(y) = \int_U K(x, y) f(x)\, \dif^m x,
  \quad
  ({}^t \mathrm{K} g)(x) = \int_V K(x, y) g(y)\, \dif^n y,
\end{equation*}
for the associated operator and its transpose.

The operator $\mathrm{K}$ is called \IdxMain{semiregular-ker}\emph{semiregular} if it continuously\footnote{Continuity is meant with respect to the usual topology of $\mathcal{E}(V)$ and not the subspace topology of $\mathcal{D}'(V)$.} maps $\mathcal{D}(U)$ into $\mathcal{E}(V)$ and, analogously, the transpose ${}^t\mathrm{K}$ is called semiregular if it continuously maps $\mathcal{D}(V)$ into $\mathcal{E}(U)$.
In the case that ${}^t\mathrm{K}$ is semiregular, we can uniquely extend $\mathrm{K}$ to an operator acting on compactly supported distributions $\mathcal{E}'(U)$ by duality:
\begin{equation*}
  (\mathrm{K} u)(g) = ({}^t\mathrm{K} g)(u)
\end{equation*}
for all $u \in \mathcal{E}(U)$ and $g \in \mathcal{D}'(V)$.

If both $\mathrm{K}$ and ${}^t\mathrm{K}$ are semiregular, we say that $\mathrm{K}$ \IdxMain{regular-ker}\emph{regular}.
Moreover, it is called \IdxMain{prop-supp-ker}\emph{properly supported} if the projections from $\supp K \subset U \times V$ onto each factor are proper maps.
A properly supported operator~$\mathrm{K}$ maps $\mathcal{D}(U)$ to $\mathcal{E}'(V)$ and can therefore be extended to an operator $\mathcal{D}(U) \to \mathcal{E}'(V)$.
Since linear differential operators are properly supported and regular, they can be uniquely extended to distributions and they can also be composed.

\subsection{Fourier transformation and convolution}
\label{sub:ma_fourier}

Let us denote by $L^p(\RR^n)$ the \IdxMain{lp-space}\emph{$L^p$ spaces} of functions on $\RR^n$ with values in $\CC$.
That is, $L^p(\RR^n)$ is the space of functions for which the Lebesgue integrals
\begin{equation*}
  \norm{f}_p \defn \left( \int_{\RR^n} \abs{f(x)}^p\, \dif^n x \right)^{1/p}
\end{equation*}
exist and where we identify functions which are equal almost everywhere so that the $L^p$ spaces become Banach spaces.
$L^1$ functions are called \IdxMain{lebesgue-int}\emph{Lebesgue integrable}, while $L^2$ are called \IdxMain{square-int}\emph{square-integrable}.

On the space of Lebesgue integrable functions $L^1(\RR^n)$, the \IdxMain{fourier}\emph{Fourier transform} is defined as the automorphism
\begin{equation*}
  \mathcal{F} : f(x) \mapsto \mathcal{F}(f)(\xi) \defn (2 \uppi)^{-n}\! \int_{\RR^n} f(x) \e^{-\im x \cdot \xi}\, \dif^n x,
\end{equation*}
where $\,\cdot\,$ denotes the Euclidean dot product.
The Fourier transform satisfies
\begin{equation*}
  \mathcal{F}^2(f)(x) = (2 \uppi)^n f(-x)
\end{equation*}
and the \IdxMain{invfourier}\emph{inverse Fourier transform} is therefore given by
\begin{equation*}
  \mathcal{F}^{-1}(f)(x) \defn \int_{\RR^n} f(\xi) \e^{\im x \cdot \xi}\, \dif^n \xi.
\end{equation*}
When no confusion can arise, we usually denote the Fourier transform of a function~$f$ by~$\what{f}$ instead of $\mathcal{F}(f)$.

By the Riemann--Lebesgue lemma, it is clear that $\what{f}(\xi) \to 0$ as $\abs{\xi} \to \infty$.
In fact, the Fourier transform is a linear isomorphism from the subspace of rapidly decaying functions $\mathcal{S}(\RR^n)$ into itself.
Since the space of rapidly decreasing functions is stable under differentiation and multiplication by polynomials, one finds for $f \in \mathcal{S}(\RR^n)$
\begin{equation}\label{eq:fourier_diff_mult}
  \mathcal{F}(\partial_\mu f)(\xi) = \xi_\mu \what{f}(\xi)
  \quad \text{and} \quad
  \mathcal{F}(x_\mu f)(\xi) = \partial_\mu \what{f}(\xi).
\end{equation}
Moreover, given also $g \in \mathcal{S}(\RR^n)$, the \IdxMain{parseval-id}\emph{Plancherel--Parseval identities} are
\begin{align}
  \int_{\RR^n} f(x) \conj{g}(x)\, \dif^n x & = (2 \uppi)^{-n}\! \int_{\RR^n} \what{f}(\xi) \conj{\what{g}}(\xi)\, \dif^n \xi, \label{eq:plancherel} \\
  \int_{\RR^n} \abs{f(x)}^2\, \dif^n x & = (2 \uppi)^{-n}\! \int_{\RR^n} \abs{\what{f}(\xi)}^2\, \dif^n \xi. \notag
\end{align}
As a consequence, the Fourier transform can be extended to an isomorphism of $L^2(\RR^n)$ into itself.

The Plancherel--Parseval formula \eqref{eq:plancherel} guides us to extend the Fourier transformation $\mathcal{F}$ further to the space of tempered distributions $\mathcal{S}'(\RR^n)$ by
\begin{equation*}
  \langle \what{u}, f \rangle \defn \langle u, \what{f} \rangle
\end{equation*}
for all $u \in \mathcal{S}'(\RR^n)$ and $f \in \mathcal{S}(\RR^n)$, \ie, it is the transpose of the Fourier transformation on rapidly decreasing functions.
It follows that $\mathcal{F}$ is a linear isomorphism from $\mathcal{S}'(\RR^n)$ (with the weak topology) into itself and analogues of the relations~\eqref{eq:fourier_diff_mult} hold also for tempered distributions $u \in \mathcal{S}'(\RR^n)$:
\begin{equation*}
  \mathcal{F}(\partial_\mu u)(\xi) = \xi_\mu \what{u}(\xi)
  \quad \text{and} \quad
  \mathcal{F}(x_\mu u)(\xi) = - \partial_\mu \what{u}(\xi).
\end{equation*}

If we restrict to the space of compactly supported distributions, the
Fourier transform of $u \in \mathcal{E}'(\RR^n)$ is equivalently given as the smooth function
\begin{equation*}
  \what{u}(\xi) = \langle u, f : x \mapsto \e^{- \im x \cdot \xi} \rangle.
\end{equation*}
The Fourier transform $\what{u}$ can be directly extended to $\CC^n$ as an entire analytic function.

The \IdxMain{convolution}\emph{convolution} of two Lebesgue integrable functions $f, g \in L^1(\RR^n)$ is defined as
\begin{equation*}
  (f \conv g)(x) \defn \int_{\RR^n} f(y) g(x - y)\, \dif^n y.
\end{equation*}
The product thus defined is dual to the usual product with respect to Fourier transformation.
To wit, the identities
\begin{equation*}
  \mathcal{F}(f \conv g) = \what{f}\, \what{g}
  \quad \text{and} \quad
  \mathcal{F}(f\, g) = (2 \uppi)^{-n} (\what{f} \conv \what{g})
\end{equation*}
hold and are the result of the \IdxMain{convol-thm}\emph{convolution theorem}.
Note that for distributions $u \in \mathcal{S}'(\RR^n)$ and $v \in \mathcal{E}'(\RR^n)$, the convolution $u \conv v$ is a well-defined tempered distribution and its Fourier transform satisfies $\mathcal{F}(u \conv v) = \what{u}\, \what{v}$ as in the convolution theorem.
If also the product $u v$ is well-defined as a (tempered) distribution, \cf\ \cref{sub:ma_pullback}), then other statement of the convolution theorem holds and $\mathcal{F}(u v) = (2 \uppi)^{-n} (\what{u} \conv \what{v})$

\subsection{Singularities and the wavefront set}
\label{sub:ma_wavefront}

Every locally Lebesgue integrable function $u \in L^1_{\mathrm{loc}}(\RR^n)$ can be identified with a distribution in $\mathcal{D}'(\RR^n)$, denoted by the same symbol, via
\begin{equation*}
  u(f) = \langle u, f \rangle = \int_{\RR^n} u(x) f(x)\, \dif^n x
\end{equation*}
for all $f \in \mathcal{D}(\RR^n)$.
We say that a distribution $u \in \mathcal{D}'(\RR^n)$ is \IdxMain{dist-smooth}\emph{smooth}, if it is induced from a smooth function via this duality pairing.\IdxMain{dist-duality}
More specifically, every smooth function corresponds to a distribution in this way and in fact $\mathcal{D}(\RR^n)$ is isomorphic to a dense subset of $\mathcal{D}'(\RR^n)$.
Therefore this pairing uniquely extends to a pairing between $\mathcal{D}(\RR^n)$ and $\mathcal{D}'(\RR^n)$.
Using the Plancherel--Parseval identity~\eqref{eq:plancherel}, it can be written explicitly as
\begin{equation}\label{eq:dist_pairing}
  u(f) = (2 \uppi)^{-n}\! \int_{\RR^n} \wwhat{\chi u}(\xi) \what{f}(-\xi)\, \dif^n\xi,
\end{equation}
where $u \in \mathcal{D}'(\RR^n)$ and $\chi \in \mathcal{D}(\RR)$ such that $\chi = 1$ on a compact neighbourhood of the support of $f \in \mathcal{D}(\RR^n)$.
This pairing may be considered the motivation of the wavefront set to be defined below.

The \IdxMain{sing-supp}\emph{singular support} $\singsupp u$ of a distribution $u \in \mathcal{D}'(\RR^n)$ is then defined as the complement of the union of all open sets on which $u$ is smooth in the sense of the pairing above.
In other words, it is the smallest closed subset $U \subset \RR^n$ such that $u_{\RR^n \setminus U} \in \mathcal{E}(\RR^n \setminus U)$.

The Fourier transform, introduced in the previous section, can be used to give a condition on the smoothness of a compactly supported distribution $u \in \mathcal{E}'(\RR^n)$.
Namely, $u$ is smooth if and only if for each $n \in \NN_0$ there exists a constant $C_n$ such that
\begin{equation*}
  \abs{\what u(\xi)} \leq C_n (1 + \abs{\xi})^{-n}
\end{equation*}
for all $\xi \in \RR^n$.

Checking this condition for certain $\xi$, a \IdxMain{reg-dir}\emph{regular direction} of a compactly supported distribution $u \in \mathcal{E}'(\RR^n)$ is a vector $\xi \in \RR^n \setminus \{0\}$ such that there exists an open conical\footnote{A \emph{cone} in~$\RR^n$ is a subset $\Gamma \subset \RR^n$ such that $\lambda \Gamma = \Gamma$ for all $\lambda > 0$.} neighbourhood~$\Gamma$ of~$\xi$ and such that
\begin{equation*}
  \sup_{\zeta \in \Gamma}\, (1 + \abs{\zeta})^n \abs{\what{u}(\zeta)} < \infty
\end{equation*}
for all $n \in \NN_0$.
Conversely, a~$\xi$ is called a \IdxMain{sing-dir}\emph{singular direction} of~$u$ if it is not a regular direction.
The (closed) \IdxMain{sing-dir-set}\emph{set of all singular directions} of~$u$ is
\begin{equation*}
  \Sigma(u) \defn \big\{ \xi \in \RR^n \setminus \{0\} \;\big|\; \xi \text{ is not a regular direction of } u \big\},
\end{equation*}
\ie, the complement of all regular directions.

We can localize the notion of singular directions and say that $\xi$ is a singular direction of $u \in \mathcal{D}'(U)$ at $x \in U$, where $U \subset \RR^n$ is open, if there exists a $n \in \NN_0$ such that
\begin{equation*}
  \sup_{\zeta \in \Gamma}\, (1 + \abs{\zeta})^n \abs{\mathcal{F}(\chi u)(\zeta)}
\end{equation*}
is not bounded for all $\chi \in \mathcal{D}(U)$ localized at~$x$ (\ie, $\chi(x) \neq 0$).
That is, the \IdxMain{sing-dir-set-point}\emph{set of singular directions at~$x$} is the closed set
\begin{equation*}
  \Sigma_x(u) \defn \bigcap_\chi \Sigma(\chi u),
\end{equation*}
where the intersection is over all $\chi \in \mathcal{D}(U)$ such that $\chi(x) \neq 0$.

This leads to the definition of the \IdxMain{wf-set}\emph{wavefront set} as the set of the singular directions at all points:
\begin{equation*}
  \WF(u) \defn \big\{ (x; \xi) \in U \times (\RR^n \setminus \{0\}) \;\big|\; \xi \in \Sigma_x(u) \big\}.
\end{equation*}
Thus the wavefront set is a refinement of the notion of singular support.
Moreover, it can be used as a practical tool for calculating the singular support because $\singsupp u$ is the projection of~$\WF(u)$ onto the first component.
The wavefront set has the following properties:
\begin{enumerate}[(a)]
  \item $\WF(\chi u) \subset \WF(u)$,
  \item $\WF(u + v) \subset \WF(u) \cup \WF(v)$,
  \item $\WF(\mathrm{P} u) \subset \WF(u)$
\end{enumerate}
for all distributions $u, v \in \mathcal{D}'(U)$, localizing functions $\chi \in \mathcal{D}(U)$ and linear differential operators~$\mathrm{P}$ (with smooth coefficients).

\subsection{Wavefront set in cones}
\label{sub:wf_cones}

Let $U \subset \RR^n$ be open and $\Gamma \subset U \times (\RR^n \setminus \{0\})$ a closed cone, where we have extended the definition of a cone to sets for which the projection to the second component at each point is a cone.
We define \IdxMain{dist-cone}distributions with wavefront set contained in the cone~$\Gamma$ as
\begin{equation*}
  \mathcal{D}'_\Gamma(U) \defn \big\{ u \in \mathcal{D}'(U) \;\big|\; \WF(u) \subset \Gamma \big\},
\end{equation*}
which is not empty for any cone~$\Gamma$.
The \IdxMain{normal-top}\emph{normal topology}\footnote{The normal topology \cite{dabrowski:2014} is finer than the often emplyed \Idx{hormander-top}\emph{Hörmander topology} for these spaces. Nuclearity also holds for the Hörmander topology but not completeness.} turns $\mathcal{D}'_\Gamma(U)$ into a complete nuclear space~\cite{dabrowski:2014}.
It is induced by the seminorms
\begin{equation*}
  \norm{u}_B = \sup_{f \in B}\, \abs{u(f)}
  \quad \text{and} \quad
  \norm{u}_{n,V,\chi} = \sup_{\xi \in V}\, (1 + \abs{\xi})^n \abs{\mathcal{F}(\chi u)(\xi)},
\end{equation*}
for all bounded sets $B \subset \mathcal{D}(U)$, $n \in \NN_0$, localizing functions $\chi \in \mathcal{D}(U)$ and closed cones $V \subset \RR^n \setminus \{0\}$ such that $\supp(\chi) \times V \subset \Gamma$.

Given a closed cone $\Gamma$ as above, define the open cone
\begin{equation*}
  \Lambda = (\Gamma')^c \defn \{ (x; \xi) \in U \times (\RR^n \setminus \{0\}) \;\big|\; (x; -\xi) \nin \Gamma \big\}
\end{equation*}
as the complement of the reflection of $\Gamma$ and
\begin{equation*}
  \mathcal{E}'_\Lambda(\RR^n) \defn \big\{ v \in \mathcal{E}'(\RR^n) \;\big|\; \WF(v) \subset \Lambda \big\}
\end{equation*}
as the \IdxMain{comp-dist-cone}space of compactly supported distributions with wavefront set contained in~$\Lambda$.
Then one can find an analogue to the pairing~\eqref{eq:dist_pairing} for all $u \in \mathcal{D}'_\Gamma(\RR^n)$ and $v \in \mathcal{E}'_\Lambda(\RR^n)$ given by~\cite{dabrowski:2014}
\begin{equation*}
  \langle u, v \rangle \defn (2 \uppi)^{-n} \int_{\RR^n} \wwhat{\chi u}(\xi) \what{v}(-\xi)\, \dif^n\xi,
\end{equation*}
for any $\chi \in \mathcal{D}(\RR)$ such that $\chi = 1$ on a compact neighbourhood of~$\supp v$.
$\mathcal{E}'_\Lambda(\RR^n)$ thus becomes the topological dual of $\mathcal{D}'_\Gamma(\RR^n)$ (with the normal topology) and (equipped with the strong topology) it is also nuclear but not complete unless $\Lambda$ is also closed~\cite{dabrowski:2014}.

\subsection{Pullback of distributions}
\label{sub:ma_pullback}
\IdxMain{pb-dist}

Let $U, V$ be open subsets of $\RR^n$ and $\iota : U \to V$ a diffeomorphism.
The pullback $\pb{\iota} u$ of a distribution $u \in \mathcal{D}'(V)$ is (uniquely) defined for every $f \in \mathcal{D}(U)$ as the transpose of the pushforward (up to the Jacobian determinant)
\begin{equation*}
  \langle \pb{\iota} u, f \rangle = \langle u, \pf{\iota} f\, \abs{\det \dirD \iota} \rangle
\end{equation*}
or, equivalently, as the continuous extension of the pullback on smooth function.
Consequently, for any closed cone $\Gamma \subset V \times (\RR^n \setminus \{0\})$, one obtains
\begin{equation}\label{eq:D_pullback}
  \pb{\iota} \mathcal{D}'_\Gamma(V) = \mathcal{D}'_{\pb{\iota} \Gamma}(U),
  \quad
  \pb{\iota} \Gamma \defn \big\{ \big( x; T^*_x \iota (\xi) \big) \;\big|\; \big( \iota(x); \xi \big) \in \Gamma \big\}
\end{equation}
and hence $\WF(\pb{\iota} u) = \pb{\iota} \WF(u)$.

Trying to generalize this result to cases where $\iota : U \to V$ is not a diffeomorphism but an embedding of an open subset of~$\RR^n$ into an open subset of~$\RR^m$ can fail if there are $(x; \xi)$ such that $T^*_x \iota (\xi) = 0$.
It follows that a distribution $u \in \mathcal{D}'(V)$ can only be pulled back to a distribution $\pb{\iota} u$ if $\WF(u) \cap N = \emptyset$, where
\begin{equation*}
  N = \big\{ \big( \iota(x); \xi \big) \in V \times \RR^m \;\big|\; x \in U, T^*_x \iota (\xi) = 0 \big\}
\end{equation*}
is the set of conormals of~$\iota$.

Given two distributions $u \in \mathcal{D}'(U)$ and $v \in \mathcal{D}'(V)$, where $U \subset \RR^n$ and $V \subset \RR^m$ are open, the tensor product
\begin{equation*}
  u \otimes v : f \otimes h \mapsto u(f) v(h)
\end{equation*}
is a distribution in $\mathcal{D}'(U \times V) \simeq \mathcal{D}'(U) \mathbin{\widehat{\otimes}} \mathcal{D}'(V)$ via Schwartz's kernel theorem.
One can show that its wavefront set satisfies
\begin{equation*}\begin{split}
  \WF(u \otimes v) \subset \big( \WF(u) \times \WF(v) \big) & \cup \big( (\supp u \times \{0\}) \times \WF(v) \big) \\ & \cup \big( \WF(u) \times (\supp v \times \{0\}) \big).
\end{split}\end{equation*}

It is possible to pullback the tensor product $u \otimes v$ of two distributions over the same space (\ie, $U = V$) with the diagonal map
\begin{equation*}
  \Delta : U \times U \to U, (x, x) \mapsto x
\end{equation*}
if $\WF(u \otimes v) \cap N_\Delta = \emptyset$, where $N_\Delta$ is the set of conormals with respect to the map~$\Delta$, which gives the (unique) product $u v$ of the two distributions.\IdxMain{mult-dist}
This requirement of the wavefront set implies that it is possible to multiply two distributions if and only if
\begin{equation}\label{eq:WF_prod_restriction}
  (x, \xi) \in \WF(u) \implies (x, -\xi) \nin \WF(v)
\end{equation}
and then wavefront set of the product is bounded by
\begin{equation}\begin{split}\label{eq:WF_prof}
  \WF(u v) & \subset \big\{ (x; \xi + \zeta) \;\big|\; (x; \xi) \in \WF(u), (x; \zeta) \in \WF(v) \big\} \\&\quad \cup \WF(u) \cup \WF(v).
\end{split}\end{equation}
Note that for $u, v$ that do not satisfy \eqref{eq:WF_prod_restriction}, the singular directions would add up to zero in the first term on the right-hand side of~\eqref{eq:WF_prof}.

\subsection{Wavefront set of distributional sections}

The wavefront set can be extended to distributions on vector-valued functions component\hyp{}wise, \ie, using $\mathcal{D}'(U, \KK^m) \simeq \mathcal{D}'(U) \otimes \KK^m \simeq \mathcal{D}'(U)^{\oplus m}$.
Namely, one defines for $u \in \mathcal{D}'(U, \KK^m)$
\begin{equation*}
  \WF(u) \defn \bigcup_{i=1}^{m} \WF(u_i),
\end{equation*}
where $u_i \in \mathcal{D}'(U)$ are the components of~$u$.
This definition is invariant under a change of basis because such a change only implies a multiplication of~$(u_i)$ by a matrix with smooth components.

Moreover, the wavefront set being a local concept, it generalizes to manifolds and distributional sections in a coordinate neighbourhood via local trivializations.
However, to be meaningful, it needs to transform covariantly under diffeomorphisms.

Let $(U_i)_{i \in \NN}$ be an open cover of a smooth $n$-manifold~$M$ such that $(U_i, \varphi_i)$ are coordinate charts and $(U_i, \psi_i)$ are local trivializations of the vector bundle $E \to M$.
Given a distribution $u \in \mathcal{D}'(M, E)$ with restrictions~$u_i$ to~$U_i$, the wavefront set for every $u_i$ given by
\begin{equation*}
  \WF(u_i) \defn \big\{ \big( x; T^*_x \varphi (\xi) \big) \in U_i \times (\RR^n \setminus \{0\}) \;\big|\; \big( \varphi(x); \xi \big) \in \WF\big( \psi_i \circ u_i \circ \varphi_i^{-1} \big) \big\}.
\end{equation*}
and transforms as a conical subset of the cotangent bundles~$T^*\!U_i$ as seen by~\eqref{eq:D_pullback}.
In particular, $\WF(u_i) \cap T^*\!(U_i \cap U_j) = \WF(u_j) \cap T^*\!(U_i \cap U_j)$ for all $i, j$.

The \IdxMain{wf-set-sect}\emph{wavefront set of distributional sections}~$u$ is then defined as the union of all $\WF(u_i)$.
In other words, it is the set of points
\begin{equation*}
  (x; \xi) \in \dot{T}^*\!M \defn T^*\!M \setminus \{ (y; 0) \in T^*\!M \},
\end{equation*}
the cotangent bundle with the zero section removed, such that $(x; \xi) \in \WF(u_U)$, where $U$ is a coordinate and trivialization neighbourhood of $E$.

\subsection{Some distributions and their wavefront set}

For any $f \in \mathcal{E}(\RR)$, \IdxMain{dirac-delta}\emph{Dirac's $\delta$-distribution} is
\begin{equation*}
  \delta(f) = f(0)
\end{equation*}
and it follows that $\delta$ has support only at the origin.
There it does not decay in any direction because $\what{\delta} = 1$ so that $\WF(\delta) = \{0\} \times (\RR \setminus \{0\})$.
Consequently powers of the $\delta$-distribution cannot be defined.

The Dirac $\delta$-distribution can be decomposed into two distributions
\begin{equation*}
  \delta_\pm(f) \defn \lim_{\varepsilon \to 0^+} \int_\RR \frac{f(x)}{x \pm \im\varepsilon}\, \dif x,
\end{equation*}
for all $f \in \mathcal{S}(\RR)$, such that $-2\uppi\im\delta = \delta_+ + \delta_-$, where $\WF(\delta_\pm) = \{0\} \times \RR_\pm$.
Now, powers of either $\delta_\pm$ are well-defined but the distribution $\delta_+ \delta_-$ does not exist.

The wavefront set of $\delta_+$ (and analogously that of $\delta_-$) can be calculated as follows:
Using the residue theorem, the Fourier transform of $1/(x + \im\varepsilon)$ for $\varepsilon > 0$ is\footnote{$\Heaviside$ denotes the Heaviside step-function.}
\begin{equation*}
  \int_\RR \frac{e^{- \im x \xi}}{x + \im\varepsilon}\, \dif x
  = - 2 \uppi \im \Heaviside(\xi)\, \e^{-\xi \varepsilon}.
\end{equation*}
Taking the limit $\varepsilon \to 0^+$, this gives $\what{\delta}_+(\xi) = -2\uppi\im \Heaviside(\xi)$.
Then, applying the convolution theorem, one obtains the Fourier transform of $\chi \delta_+$ for all $\chi \in \mathcal{D}(\RR)$ as
\begin{equation*}
  \mathcal{F}(\chi \delta_+)
  = \frac{1}{2 \uppi} \big( \what{\chi} * \what{\delta}_+ \big)
  = - \im \int_{-\infty}^{\xi} \what{f}(k)\, \dif k.
\end{equation*}
Since this decays rapidly as $\xi \to -\infty$ and does not decay as $\xi \to \infty$, we get the expected wavefront set.

Related to the diagonal map $\Delta : (x, x) \mapsto x$, we can define for all $f \in \mathcal{D}(\RR^2)$ a \IdxMain{diag-dist}\emph{diagonal distribution}
\begin{equation*}
  \Delta(f) \defn \int_{\RR} f(x, x)\, \dif x.
\end{equation*}
It is clear that the wavefront set of $\Delta$ is
\begin{equation*}
  \WF(\Delta) = N_\Delta = \big\{ (x, x; \xi, -\xi) \in \RR^4 \setminus \{0\} \big\}.
\end{equation*}
Given instead two functions $f_1, f_2 \in \mathcal{D}(\RR)$, we can write
\begin{equation*}
  \Delta(f_1 \otimes f_2) = \int_{\RR} (f_1 \conv \delta)(x) f_2(x)\, \dif x
\end{equation*}
Splitting the $\delta$-distribution into its positive and negative frequency components as above, we can therefore define
\begin{equation*}
  \Delta_\pm(f) \defn \lim_{\varepsilon \to 0^+} \int_{\RR} \frac{f(x, y)}{y - x \pm \im \varepsilon}\, \dif x\, \dif y,
\end{equation*}
which in the case $f = f_1 \otimes f_2$ can be written as
\begin{equation*}
  \Delta_\pm(f_1 \otimes f_2) = \int_{\RR} (f_1 \conv \delta_{\pm})(x) f_2(x)\, \dif x.
\end{equation*}
It is not a difficult exercise to show that \cite[Exmpl.~1.4]{siemssen:2011}
\begin{equation*}
  \WF(\Delta_\pm) = \big\{ (x, x; \xi, -\xi) \in \RR^4 \setminus \{0\} \;\big|\; \pm\xi > 0 \big\}.
\end{equation*}
Moreover, using the Plancherel--Parseval identities it is possible to show that $\Delta_\pm$ is well-defined for all $f_1, f_2 \in L^2(\RR)$.

\subsection{Propagation of singularities}
\label{sub:ma_propagation}

In \cref{sub:ma_wavefront} we already noticed that $\WF(\mathrm{P} u) \subset \WF(u)$.
That is, knowing the wavefront set of the distribution $u$, we can deduce information about the wavefront set of $\mathrm{P} u$, where $\mathrm{P}$ is a differential operator.
The theorem on the \IdxMain{prop-sings}\emph{propagation of singularities} gives us information in the opposite direction.
Namely, $\WF(\mathrm{P} u)$ and the form of $\mathrm{P}$, tell us a lot about $\WF(u)$.

Let $\mathrm{P} : \mathcal{E}(M, E) \to \mathcal{E}(M, E)$ be a differential operator acting on sections of a vector bundle $E \to M$.
Its \IdxMain{char-set}\emph{characteristic set} is the cone
\begin{equation*}
  \charset \mathrm{P} = \big\{ (x; \xi) \in \dot{T}^*\!M \;\big|\; \det \sigma_{\mathrm{P}}(x, \xi) = 0 \big\}
\end{equation*}
on which the principal symbol~$\sigma_{\mathrm{P}}$ of~$\mathrm{P}$ cannot be inverted.\footnote{If the principal symbol of~$\mathrm{P}$ is invertible, we say that $\mathrm{P}$ is \emph{elliptic}.
Example of elliptic operators are the Laplace operator and the Cauchy--Riemann operator.}
An integral curve of $\sigma_{\mathrm{P}}$ in $\charset \mathrm{P}$ is called a \IdxMain{bichar-strip}\emph{bicharacteristic strip}, its projection onto $M$ a \IdxMain{bichar}\emph{bicharacteristic}.

\begin{theorem}[Propagation of singularities]
  Suppose that $\mathrm{P}$ is a differential operator with real homogeneous principal symbol such that no complete bicharacterstic stays in a compact set of~$M$ (\ie, $\mathrm{P}$ is of real principal type) and let $u, f \in \mathcal{D}'(M, E)$ such that $\mathrm{P} u = f$.
  Then
  \begin{equation*}\label{eq:elliptic_regularity}
    \WF(u) \subset \charset \mathrm{P} \cup \WF(f)
  \end{equation*}
  and, if $(x; \xi) \in \WF(u) \setminus \WF(f)$, it follows that $(x'; \xi') \in \WF(u)$ for all $(x'; \xi')$ on the bicharacteristic strip passing through $(x; \xi)$.
\end{theorem}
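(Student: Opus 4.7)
The plan is to prove the two statements by different but related microlocal techniques. For the first (inclusion in the characteristic set), I would prove \emph{microlocal elliptic regularity}: if $(x_0; \xi_0) \notin \charset \mathrm{P}$, then $\sigma_{\mathrm{P}}(x_0, \xi_0)$ is invertible, and by homogeneity and continuity it is invertible on a conic neighbourhood~$\Gamma$ of $(x_0; \xi_0)$. There one can construct a properly supported pseudodifferential parametrix~$\mathrm{Q}$ of order $-m$ (where $m = \ord \mathrm{P}$) whose principal symbol is $\sigma_{\mathrm{P}}^{-1}$ on a slightly smaller cone, and such that $\mathrm{Q} \mathrm{P} = \id - \mathrm{R}$ microlocally on~$\Gamma$ with $\mathrm{R}$ smoothing there. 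Writing $u = \mathrm{Q} f + \mathrm{R} u$ and using that pseudodifferential operators do not enlarge the wavefront set while smoothing operators remove it microlocally, one concludes $(x_0; \xi_0) \notin \WF(u)$ whenever $(x_0; \xi_0) \notin \WF(f)$.

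For the second statement I would use Hörmander's positive commutator method. The idea is to show, along a segment of the bicharacteristic strip~$\gamma$ through $(x_0; \xi_0)$, that the assumption ``$u$ is microlocally $H^s$ at one point of $\gamma$ and $f$ is microlocally $H^{s+m-1}$ on all of $\gamma$'' propagates to ``$u$ is microlocally $H^s$ on all of $\gamma$'' (away from $\WF(f)$). The first step is to reduce, by multiplying by an elliptic scalar factor if $\mathrm{P}$ is a system, to a scalar operator with real principal symbol~$p$; this is possible because by hypothesis $(x; \xi) \in \WF(u) \setminus \WF(f)$ forces $p(x,\xi) = 0$, so only the scalar component of the symbol matters to leading order. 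The second step is to pick a pseudodifferential test operator~$\mathrm{A}$ of appropriate order whose symbol $a$ is supported in a narrow conic tube around~$\gamma$ and whose Poisson bracket with~$p$ satisfies $\{p, a\} = -b^2 + c$, where $b$ is elliptic on the ``downstream'' part of $\gamma$ and $c$ is supported in the ``upstream'' region where regularity is already known.

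The third step is the energy-type estimate: the imaginary part of $\langle \mathrm{P} u, \mathrm{A} u \rangle - \langle u, \mathrm{A} \mathrm{P} u \rangle = \langle [\mathrm{P}, \mathrm{A}] u, u \rangle$ has principal symbol proportional to $\{p, a\}$, so the Gårding inequality applied to the positive term $b^2$ gives control of $\norm{\mathrm{B} u}^2$ in terms of~$\langle \mathrm{A} u, f \rangle$, $\langle \mathrm{C} u, u \rangle$ and lower-order terms, where $\mathrm{B}, \mathrm{C}$ have principal symbols $b, c$. Since~$f$ is smooth microlocally along $\gamma$ and $u$ is smooth microlocally where $c$ is supported, one obtains microlocal regularity of~$u$ on the downstream region, which one then iterates along $\gamma$. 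The real-principal-type hypothesis ensures the bicharacteristic does not get trapped, so the iteration covers the entire maximal bicharacteristic strip.

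The main obstacle I expect is the commutator calculation and the careful construction of the symbol~$a$ with the desired bracket identity $\{p, a\} = -b^2 + c$; this requires flowing a suitably chosen seed symbol along the Hamilton vector field~$H_p$ and cutting it off so that the error terms are supported where regularity is known. Everything else (pseudodifferential calculus, Gårding's inequality, elliptic parametrix construction) is standard once that symbol is in hand, but getting the signs and supports to line up so that the estimate genuinely propagates regularity in the correct direction along~$\gamma$ is where the real work lies.
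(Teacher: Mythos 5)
The paper states this theorem without proof, deferring to the cited references (H\"ormander); your plan is exactly the classical argument from those references --- a microlocal elliptic parametrix giving $\WF(u)\subset\charset\mathrm{P}\cup\WF(f)$, and the positive-commutator energy estimate propagating microlocal $H^s$ regularity along bicharacteristics --- and it is correct as an outline. The one technicality worth flagging when you write it out is the regularization of the test operator (a bounded family $\mathrm{A}_\varepsilon$ of smoothing operators converging to $\mathrm{A}$ in a weaker calculus) needed to justify the pairing $\langle [\mathrm{P},\mathrm{A}]u,u\rangle$ when $u$ is only a distribution, which is what makes the step-by-step gain of one Sobolev order rigorous; for genuine systems the scalar reduction is done by composing with the cofactor matrix of $\sigma_{\mathrm{P}}$ rather than an ``elliptic scalar factor''.
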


\IdxRanEnd{microlocal}

\section{Wave equations}
\label{sec:wave}

Both classical and quantum fields usually satisfy an equation of motion given by a \IdxMain{wave-eq}\emph{wave equation}
\begin{equation}\label{eq:wave_equation}
  \mathrm{P} u = f,
\end{equation}
where $\mathrm{P}$ is a normally hyperbolic differential operator, $u$ is the field and $f$ and an external source.
On globally hyperbolic manifolds the wave equation can be solved, \ie, the Cauchy problem for~\eqref{eq:wave_equation} is well-posed.

\subsection{Retarded and advanced propagators}

Let $(M, g)$ be a spacetime and $\mathrm{P} : \mathcal{E}(M, E) \to \mathcal{E}(M, E)$ a differential operator on sections of a vector bundle $E \to M$.
A linear operator $\mathrm{G}_\vee : \mathcal{D}(M, E) \to \mathcal{E}(M, E)$ such that for all $f \in \mathcal{D}(M, E)$
\begin{equation*}
  \mathrm{P} \mathrm{G}_\vee f = f
  \quad \text{and} \quad
  \mathrm{G}_\vee \mathrm{P} f = f,
\end{equation*}
\ie, $\mathrm{G}_\vee$ is a left- and right-inverse of~$\mathrm{P}$, and
\begin{equation*}
  \supp(\mathrm{G}_\vee f) \subset J^+(\supp f)
\end{equation*}
is called a \IdxMain{ret-prop}\emph{retarded propagator} or \emph{retarded Green's operator} for~$\mathrm{P}$.
Similarly, a linear operator $\mathrm{G}_\wedge$, which is a two-sided inverse of~$\mathrm{P}$ and satisfies
\begin{equation*}
  \supp(\mathrm{G}_\wedge f) \subset J^-(\supp f)
\end{equation*}
for all test sections~$f$, is called a \IdxMain{adv-prop}\emph{advanced propagator} or \emph{advanced Green's operator}.\footnote{Note that our definition of the support of the retarded and advanced propagators is exactly opposite to that in \cite{bar:2007,bar:2012} and also~\cite{dimock:1980}.}
We say that $\mathrm{P}$ is \IdxMain{green-hyp}\emph{Green-hyperbolic} if it admits unique retarded and advanced propagators when restricted to a globally hyperbolic region.

Given a linear differential operator~$\mathrm{Q}$ such that $\mathrm{P} \circ \mathrm{Q} = \mathrm{Q} \circ \mathrm{P}$, \ie, $\mathrm{Q}$ commutes with $\mathrm{P}$, then it also commutes with the propagators of $\mathrm{P}$.
That is, one finds
\begin{equation*}
  \mathrm{G}_\vee \mathrm{Q} f = \mathrm{Q} \mathrm{G}_\vee f
  \quad \text{and} \quad
  \mathrm{G}_\wedge \mathrm{Q} f = \mathrm{Q} \mathrm{G}_\wedge f
\end{equation*}
for all $f \in \mathcal{E}(M, E)$.

If $\mathrm{P}$ is Green-hyperbolic, then the transpose operator ${}^t\mathrm{P}$ on sections of the dual bundle $E^*$ is also Green-hyperbolic; we denote its propagators by~$\mathrm{G}^t_\vee$ and~$\mathrm{G}^t_\wedge$.
They are closely related to the propagators of~$\mathrm{P}$ and one finds
\begin{equation*}
  \mathrm{G}^{\vphantom{t}}_\vee = {}^t(\mathrm{G}^t_\wedge)
  \quad \text{and} \quad
  \mathrm{G}^{\vphantom{t}}_\wedge = {}^t(\mathrm{G}^t_\vee).
\end{equation*}

Since the propagators are regular, they can be uniquely extended to operators $\mathcal{E}'(M, E) \to \mathcal{D}'(M, E)$.
Although the propagators are not properly supported, they can also be defined for some non-compactly supported sections.
The geometry of $(M, g)$ enables us to define further types of `compact' support:
We say that a (distributional) section~$u$ is future or past compact if there exists a Cauchy surface~$\Sigma$ such that
\begin{equation*}
  \supp u \subset J^+(\Sigma)
  \quad \text{or} \quad
  \supp u \subset J^-(\Sigma),
\end{equation*}
respectively.
Denote by the subscripts `fc' and `pc' the subsets of (distributional) sections of future and past compact support.
Via the transpose propagators $\mathrm{G}^t_\vee, \mathrm{G}^t_\wedge$, we can then uniquely extend the retarded propagator to $\mathcal{D}'_{\mathrm{fc}}(M, E) \to \mathcal{D}'(M, E)$ and the advanced propagator to $\mathcal{D}'_{\mathrm{pc}}(M, E) \to \mathcal{D}'(M, E)$.

Let $E$ be endowed with a bundle metric $(\cdot\,,\cdot)$.
The \IdxMain{formal-adj}\emph{formal adjoint}~$\mathrm{P}^*$ of~$\mathrm{P}$ with respect to $(\cdot\,,\cdot)$ is given by
\begin{equation*}
  \int_M (\mathrm{P} f, h)\, \mu_g = \int_M (f, \mathrm{P}^* h)\, \mu_g
\end{equation*}
for all $f, h \in \mathcal{E}(M, E)$ such that $\supp f \cap \supp h$ is compact.
If $\mathrm{P}^* = \mathrm{P}$, the operator is called \IdxMain{formal-self-adj}\emph{formally self-adjoint}.
In that case, it follows from the last paragraph that
\begin{equation*}
  \int_M (\mathrm{G}_\vee f, h)\, \mu_g = \int_M (f, \mathrm{G}_\wedge h)\, \mu_g.
\end{equation*}

As indicated above, wave operators on globally hyperbolic manifolds play an important role and, in fact, they are particularly well-behaved \cite{friedlander:1975,bar:2007}:

\begin{theorem}
  Any normally hyperbolic operator~$\mathrm{P}$ on a globally hyperbolic manifold admits unique retarded~$\mathrm{G}_\vee$ and advanced propagators~$\mathrm{G}_\wedge$.
\end{theorem}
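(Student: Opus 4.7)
The plan is to follow the standard strategy of Leray (as streamlined in \cite{bar:2007}): build the propagators locally using a Hadamard parametrix, then globalize using the geometry of globally hyperbolic spacetimes. First I would establish uniqueness, since it is conceptually simpler and also provides the gluing principle needed to patch local solutions. For uniqueness, suppose $\mathrm{G}_\vee, \mathrm{G}_\vee'$ are two retarded propagators for $\mathrm{P}$. For $f \in \mathcal{D}(M, E)$, the difference $u = (\mathrm{G}_\vee - \mathrm{G}_\vee') f$ satisfies $\mathrm{P} u = 0$ with $\supp u \subset J^+(\supp f)$, hence vanishes in the past of a Cauchy surface lying to the past of $\supp f$. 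A standard energy estimate (or uniqueness of the Cauchy problem for normally hyperbolic operators, which follows from the positivity of the principal symbol on timelike covectors and Grönwall's inequality applied to the stress tensor of $u$) then forces $u \equiv 0$. The argument for the advanced propagator is symmetric.

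For local existence, I would work in a geodesically convex, relatively compact neighbourhood $U$ that is causal and such that $J^+(x) \cap J^-(y)$ is compact and contained in $U$ for all $x,y \in U$. On such a $U$ one constructs the Hadamard parametrix: a formal series $\sum_{k \geq 0} V_k \sigma^k$ multiplied by the Riesz-type distributions $R_\pm(\alpha)$ built from Synge's world function $\sigma$ (cf. \cref{sub:bi_synge}), where the coefficients $V_k$ are determined recursively by transport equations along geodesics. Truncating the series at sufficiently high order and multiplying by a suitable cutoff yields a continuous operator $\tilde{\mathrm{G}}_\vee : \mathcal{D}(U, E) \to \mathcal{E}(U, E)$ with $\mathrm{P} \tilde{\mathrm{G}}_\vee = \id + \mathrm{K}$, where $\mathrm{K}$ is a Volterra-type integral operator with smooth, properly supported kernel supported in $\{(x,x') \mid x \in J^+(x')\}$. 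Because the support properties make $\mathrm{K}$ nilpotent in an appropriate sense on causal diamonds, the Neumann series $\sum_{n \geq 0} (-\mathrm{K})^n$ converges (this is essentially \cref{lem:contraction}/\cref{prop:contraction} from \cref{sec:fixed_point} applied along the causal direction), yielding a local retarded inverse $\mathrm{G}_\vee^U = \tilde{\mathrm{G}}_\vee \circ \sum_{n \geq 0} (-\mathrm{K})^n$.

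Next I would globalize. Using the Bernal--Sánchez splitting $M \simeq \RR \times \Sigma$ with smooth spacelike Cauchy surfaces $\Sigma_t$, cover $M$ by a locally finite family of causally convex neighbourhoods $U_i$ of the type above, each equipped with its local retarded propagator $\mathrm{G}_\vee^{U_i}$. For $f \in \mathcal{D}(M, E)$ write $f = \sum_i \chi_i f$ via a partition of unity subordinate to the $U_i$, solve locally, and glue: on overlaps the uniqueness statement established in the first paragraph guarantees that local solutions agree, so $\mathrm{G}_\vee f \defn \sum_i \mathrm{G}_\vee^{U_i}(\chi_i f)$ is well defined modulo a smoothing remainder, which one then corrects by iterating the construction along a time function. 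Support properties of each $\mathrm{G}_\vee^{U_i}$ combined with global hyperbolicity (compactness of $J^+(x) \cap J^-(y)$) ensure $\supp(\mathrm{G}_\vee f) \subset J^+(\supp f)$. The advanced propagator is constructed analogously, or equivalently obtained from the retarded propagator of the transpose ${}^t\mathrm{P}$.

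The main technical obstacle is the construction of the Hadamard parametrix and the proof that the Neumann-series correction converges to genuinely invert $\mathrm{P}$ on local causal diamonds; this is where the specific form of the principal symbol ($\sigma_{\mathrm{P}}(x,\xi) = -g_x(\xi,\xi)\,\id_{E_x}$) enters essentially, since it is this that produces the transport equations governing the Hadamard coefficients $V_k$ along geodesics and gives the right support behaviour for the Riesz-type building blocks. The passage from local to global is, by comparison, largely bookkeeping once uniqueness is in hand.
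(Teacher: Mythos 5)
The paper does not prove this theorem itself but states it as a standard result with a citation to \cite{friedlander:1975,bar:2007}, and your sketch is precisely the argument of those references: uniqueness via the Cauchy problem and support considerations, local existence via the Riesz/Hadamard parametrix corrected by a Volterra-type Neumann series, and globalization by gluing along a foliation by Cauchy surfaces. Your outline is sound and takes essentially the same route as the proof the paper relies on.
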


It is not difficult to extend this result to pre-normally hyperbolic operators on globally hyperbolic spacetimes.
Namely, given pre-normally hyperbolic operators~$\mathrm{P}$ and~$\mathrm{Q}$ such that $\mathrm{P} \circ \mathrm{Q}$ is normally hyperbolic, $\mathrm{P}$ possesses unique retarded and advanced propagators
\begin{equation*}
  \widetilde{\mathrm{G}}_\vee = Q \circ \mathrm{G}_\vee
  \quad \text{and} \quad
  \widetilde{\mathrm{G}}_\wedge = Q \circ \mathrm{G}_\wedge,
\end{equation*}
where $\mathrm{G}_\vee, \mathrm{G}_\wedge$ are the propagators for the composite operator $\mathrm{P} \circ \mathrm{Q}$.

\subsection{Causal propagator}
\label{sub:causal_propagator}

The \IdxMain{causal-prop}\emph{causal propagator} is defined as the difference of the retarded and advanced propagator
\begin{equation*}
  \mathrm{G} \defn \mathrm{G}_\vee - \mathrm{G}_\wedge.
\end{equation*}
From the support properties of the retarded and advanced propagator it is clear that $\supp(\mathrm{G} f)\linebreak[1] = J(\supp f)$ for all $f \in \mathcal{D}(M, E)$.
In \cref{sub:kg_eq_comm} we will see that the causal propagator, or rather the associated distribution via Schwartz's kernel theorem, may also be called the \IdxMain{comm-dist}\emph{commutator distribution} or \emph{Pauli--Jordan distribution}.

By the regularity of the retarded and advanced propagators, it is clear that $\mathrm{G}$ extends to an operator $\mathcal{E}'(M, E) \to \mathcal{D}'(M, E)$.
Noting the support property of $\mathrm{G}$, this statement can be strengthened to extend the causal propagator to $\mathcal{D}'_{\mathrm{tc}}(M, E) \to \mathcal{D}'(M, E)$.
Here we have denoted by a subscript `tc' the space of (distributional) sections of timelike compact support, \ie, the sections~$u$ such that
\begin{equation*}
  \supp u \subset J^+(\Sigma_1) \cap J^-(\Sigma_2)
\end{equation*}
for two Cauchy surfaces $\Sigma_1, \Sigma_2$.

Every smooth and spacelike compact solution of the homogeneous differential equation $\mathrm{P} u = 0$ propagating on a globally hyperbolic spacetime $(M, g)$ with Green-hyperbolic operator~$\mathrm{P}$ can be obtained by applying $\mathrm{G}$ to a test section~$f$.
In fact, if we denote by $\mathcal{E}_{\mathrm{sc}}(M, E)$ the smooth sections of~$E$ with  spacelike compact support, then we find the exact sequence
\begin{equation*}
  \{0\} \longrightarrow \mathcal{D}(M, E) \overset{\mathrm{P}}{\longrightarrow} \mathcal{D}(M, E) \overset{\mathrm{G}}{\longrightarrow} \mathcal{E}_{\mathrm{sc}}(M, E) \overset{\mathrm{P}}{\longrightarrow} \mathcal{E}_{\mathrm{sc}}(M, E).
\end{equation*}
This sequence also entails the fact that the kernel of~$\mathrm{G}$ is given by $\mathrm{P} \mathcal{D}(M, E)$.
In other words, $f - f' = \mathrm{P} h$ for some $f, f', h \in \mathcal{D}(M, E)$ implies that $\mathrm{G} f = \mathrm{G} f'$.

Closely related to the existence of a causal propagator is the question whether the \Idx{cauchy-prop}\emph{Cauchy problem} is well-posed.
The Cauchy problem for the wave equation $\mathrm{P} u = 0$ on a globally hyperbolic manifold $(M, g)$ is the following:

\textit{Given a Cauchy surface~${\iota : \Sigma \to M}$ with normal vector field~$n$, does there exist a unique section $u \in \mathcal{E}(M, E)$ such that
\begin{equation*}\begin{cases}
  \mathrm{P} u = 0, \\
  \pb{\iota} u = u_0, \\
  \pb{\iota} \nabla_{\!n} u = u_1
\end{cases}\end{equation*}
and the solution~$u$ depends continuously on the data $u_0, u_1 \in \mathcal{E}(\Sigma, \pb{\iota} E)$.}

This question can be answered in the positive for normally hyperbolic operators~$\mathrm{P}$ on globally hyperbolic spacetimes.
With the appropriate modifications, the Cauchy problem can also be formulated for pre-normally hyperbolic operators.
Also in that case Cauchy problem is well-posed~\cite{wrochna:2013}.
For general Green-hyperbolic operators the Cauchy problem is more complicated and it is not obvious whether the Cauchy problem is well-posed.


\chapter{Enumerative combinatorics}
\label{cha:combinatorics}

\IdxRanBegin{combinat}

\section*{Summary}

In this chapter we discuss the results obtained by Fewster and the author in~\cite{fewster:2014a} on the enumeration of the run structures of permutations.
Some of the results stated here will can be applied in the study of the moment problem in quantum field theory and the connection will be discussed briefly in \cref{sec:applications}.

The first section (\cref{sec:permutations}) gives a summary of the elementary definitions for (linear) permutations and circular permutations.
Then, the subsections of the second section (\cref{sub:run_atomic,sub:run_circular,sub:run_linear}) deal, respectively, with the enumeration of the run structure of atomic, circular and linear permutations.
Using a suitable decomposition, this is accomplished in each case by reducing the enumeration problem to that for atomic permutations.
In the third section (\cref{sec:valleys}) we apply and extend the methods developed in the preceeding sections to enumerate the valleys of permutations, thereby reproducing a result of Kitaev~\cite{kitaev:2007}.
Finally, in the last section (\cref{sec:applications}), we discuss the original motivation of the work~\cite{fewster:2014a} and other possible applications.

\section{Permutations}
\label{sec:permutations}

Let us adopt the following notation for integer intervals: $[a \lddots b] \defn [a, b] \cap \NN = \{a, a+1, \dotsc, b\}$ with the special case $[n] \defn [1 \lddots n]$.

\subsection{Linear permutations}
\label{sub:perm_linear}

Given a set~$S$, a \IdxMain{lin-perm}\emph{(linear) permutation} of~$S$ is a bijection $\sigma : S \to S$.
In the \IdxMain{two-line}\emph{two-line notation} of the permutation of a finite set is written as
\begin{equation*}
  \sigma = \left( \begin{array}{cccc}
    a & b & c & \cdots \\
    \sigma(a) & \sigma(b) & \sigma(c) & \cdots
  \end{array} \right),
\end{equation*}
where $a, b, c, \dotsc \in S$.
It is clear that the order of elements in the first line is irrelevant as long as the second line is ordered accordingly.

The set of all bijection on~$S$ forms the \IdxMain{lin-perm-group}\emph{(linear) permutation group}~$\perms_S$ of~$S$; the group operation is the composition~$\,\circ\,$ of functions.
There are $n!$~permutations in~$\perms_S$ if $S$ is a set of $n$~elements.
In the special case that $S = [n]$, one writes $\perms_n \defn \perms_{[n]}$.

Given a (strict) total order on a finite set~$S$, \ie, a binary relation $\,<\,$ that is transitive and trichotomous, the first line will always be ordered in the natural order.
Since every finite ordered set~$S$ is isomorphic to a subset $[n]$ of the natural numbers with the standard ordering, this identification will tacitly be assumed henceforth.
Therefore the first line can be disposed of and one can use instead the \IdxMain{one-line}\emph{one-line notation}
\begin{equation*}
  \sigma = \left( \begin{array}{cccc}
    \sigma(1) & \sigma(2) & \sigma(3) & \cdots
  \end{array} \right).
\end{equation*}
We see that a permutation is equivalent to a change of the linear order of the set $S$.
Further condensing the one-line notation, the permutations of a finite ordered set can be identified with words
\begin{equation*}
  \sigma = \sigma_1 \sigma_2 \sigma_3 \,\dotsm,
\end{equation*}
where the shorthand $\sigma_i = \sigma(i)$, $i \in S$, was used.

\subsection{Circular permutations}
\label{sub:perm_circular}

Instead of considering different orderings of a set along a line, one can study different arrangements of the elements of the set on an oriented circle (turning the circle over produces in general a different permutation).

Let $S$ be a set with a distinguished element~$e$.
The \IdxMain{circ-perm}\emph{circular permutations} of~$S$ are the bijections $\sigma : S \to S$ that preserve $e$, \ie, $\sigma(e) = e$.
The circular permutations of~$S$ also form a group, the \IdxMain{circ-perm-group}\emph{circular permutation group}~$\cperms_S$; if $S = [n]$, define $\cperms_n \defn \cperms_{[n]}$.
Clearly, $\cperms_S$ is a subgroup of~$\perms_S$ and its cardinality is $(n-1)!$ if that of~$S$ is~$n$.

A \IdxMain{cyclic-ord}\emph{cyclic order} is a ternary relation $[\cdot\,, \cdot\,, \cdot]$ on a set~$S$ is a set of triples $T \subset S^{\times 3}$ that satisfies
\begin{enumerate}[(a)]
  \item $[a, b, c] \in T$ implies $[b, c, a] \in T$ (cyclicity),
  \item $[a, b, c] \in T$ implies $[c, b, a] \nin T$ (asymmetry),
  \item $[a, b, c], [a, c, d] \in T$ implies $[a, b, d] \in T$ (transitivity),
  \item $a, b, c$ mutually distinct implies either $[a, b, c] \in T$ or $[c, b, a] \in T$ (totality).
\end{enumerate}
Every (strict) total order $\,<\,$ induces a cyclic order by setting $[a, b, c] \in T$ if and only if $a < b < c$ or $b < c < a$ or $c < b < a$.
Conversely, every cyclic order induces different possible linear orders.
Namely, setting $a < b$ if and only if $[a, b, e]$ for fixed $e \in S$ yields a total order on $S \setminus \{e\}$ which can be extended to a linear order on~$S$ by defining $e$ as either the minimal or maximal element of the set.
Consequently, the natural choice for the distinguished element of a finite ordered set~$S$ in the construction above is the minimal or maximal element of~$S$; we will always choose the minimal element.

The different notations for linear permutations generalize straightforwardly to circular permutations.
Given a circular permutation~$\sigma$ of a finite ordered set~$S$, write
\begin{equation*}
  \sigma
  = \left( \begin{array}{cccc}
    1 & \sigma(2) & \sigma(3) & \cdots
  \end{array} \right).
\end{equation*}
To distinguish circular permutations more clearly from linear ones and to highlight the circular symmetry, we modify the word-notation in the case of a circular permutation~$\sigma$ to
\begin{equation*}
  \sigma = \dot{1} \sigma_2 \sigma_3 \,\dotsm\, \dot{\sigma}_n,
\end{equation*}
in analogy with the notation for repeating decimals when representing rational numbers.
Moreover, for convenience we define $\sigma(n + 1) \defn \sigma(1)$ for all circular permutations $\sigma$ of $n$-element sets.

\subsection{Atomic permutations}
\label{sub:perm_atomic}
\IdxMain{atomic-perm}

Let us introduce a special subgroup of~$\perms_S$ for a finite ordered set~$S$ with minimal element~$e$ maximal element~$m$.

\begin{definition}
  Define the \IdxMain{rise-atomic}\emph{rising atomic permutations}\footnote{The rationale for this naming should become clear later when we see that arbitrary permutations can be decomposed into atomic permutations, but no further.} $\aperms^+_S \subset \perms_S$ as those permutations that satisfy $\sigma(e) = e$ and $\sigma(m) = m$ for all $\sigma \in \aperms^+_S$.
  The \IdxMain{fall-atomic}\emph{falling atomic permutations} $\sigma \in \aperms^-_S$ are the reversed rising atomic permutations, \ie, $\sigma(e) = m$ and $\sigma(m) = e$.
\end{definition}

Naturally, the cardinality of $\aperms^\pm_S$ is $(n-2)!$.
If $S = [n]$, we write $\aperms^+_n$ ($\aperms^-_n$) and see that it is the set of permutations of the form $1\,\dotsm\,n$ ($n\,\dotsm\,1$).

Let us discuss the significance of the atomic permutations.
We say that a permutation $\sigma \in \perms_S$ of~$S$ \emph{contains an atomic permutation} $\pi \in \aperms_T$, $T \subset S$, if $\pi$ can be considered a subword of~$\sigma$.
The atomic permutation~$\pi$ in~$\sigma$ is called \emph{inextendible} if $\sigma$ contains no other atomic permutation $\pi' \in \aperms_{T'}$, $T' \subset S$, such that $T \subsetneq T'$.

In particular, any permutation $\sigma \in \perms_S$ of~$S$ with $\abs{S} \geq 2$ contains an inextendible atomic permutation $\pi \in \perms_T$ of a subset $T \subset S$ that contains both the smallest and the largest element of~$S$.
That is, if $S = [n]$ and we consider $\sigma$ as a word, it contains a subword~$\pi$ of the form $1 \,\dotsm\, n$ or $n \,\dotsm\, 1$.
The permutation $\pi$ will be called the \IdxMain{princ-atom}\emph{principal atom} of~$\sigma$.

\begin{proposition}\label{prop:decomposition}
  Any permutation $\sigma \in \perms_S$ of a finite set $S \subset \NN$ can be uniquely decomposed into a tuple $(\pi^1, \dotsc, \pi^k)$ of inextendible atomic permutations $\pi^i \in \aperms_{T_i}$, $T_i \subset S$ (non-empty) such that $\pi^i_{\abs{T_i}} = \pi^{i+1}_1$ for all $i < k$ and $\cup_i T_i = S$.
  We call $\pi^i$ the \IdxMain{atoms}\emph{atoms} of $\sigma$.
\end{proposition}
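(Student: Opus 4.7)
The plan is to proceed by strong induction on $n \defn \abs{S}$, viewing a decomposition as a splitting of the word $\sigma = \sigma_1 \sigma_2 \dotsm \sigma_n$ into consecutive chunks that overlap by a single letter. The base cases $n \in \{1,2\}$ are immediate: any one- or two-letter word is itself atomic on its letter set (for $n=2$ the two values are automatically the min and max), so $(\sigma)$ is trivially the unique valid decomposition.

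For the inductive step with $n \geq 3$, my strategy is greedy. Let $i^* \in \{2,\dotsc,n\}$ be the largest index such that the initial consecutive subword $\sigma_1 \dotsm \sigma_{i^*}$ is atomic on its letter set $T_1 \defn \{\sigma_1,\dotsc,\sigma_{i^*}\}$; this is well-defined because any two-letter word is atomic. Set $\pi^1 \defn \sigma_1 \dotsm \sigma_{i^*}$. If $i^* = n$ the decomposition is $(\pi^1)$; otherwise the tail word $\tau \defn \sigma_{i^*} \sigma_{i^*+1} \dotsm \sigma_n$ is a permutation of a strictly smaller letter set, so the inductive hypothesis furnishes a unique decomposition $(\tau^1,\dotsc,\tau^k)$ of $\tau$, and $(\pi^1,\tau^1,\dotsc,\tau^k)$ is my candidate decomposition of $\sigma$, with the gluing condition $\pi^1_{\abs{T_1}} = \sigma_{i^*} = \tau^1_1$ and the covering condition $T_1 \cup \bigcup_j T_{\tau^j} = S$ both immediate. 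Uniqueness follows by the same greedy argument: in any valid decomposition $(\rho^1,\dotsc,\rho^\ell)$ the first atom $\rho^1$ must be a consecutive initial subword of $\sigma$ of length exactly $i^*$---shorter lengths are excluded by the inextendibility of $\rho^1$ (since $\pi^1$ would then properly extend it), while longer lengths are ruled out by the maximality of $i^*$---so $\rho^1 = \pi^1$, and the inductive hypothesis applied to $\tau$ pins down the rest.

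The hard part will be verifying that each atom of my candidate decomposition is inextendible inside the full word $\sigma$, since the inductive hypothesis only supplies inextendibility within $\tau$. For $\pi^1$ itself this is straightforward: $\sigma_1$ occurs uniquely at position~$1$, so any consecutive subword of $\sigma$ whose letter set strictly contains $T_1$ must begin at position~$1$, and is then excluded by the maximality of $i^*$. The delicate case is an atomic consecutive subword of $\sigma$ crossing position~$i^*$ whose letter set strictly contains $T_{\tau^j}$ for some $j$. I plan to dispatch this by a short case analysis on whether $\pi^1$ rises or falls: in either case $\sigma_{i^*}$ is an extreme of $T_1$, and a direct calculation shows that the maximality of $i^*$ forces $\sigma_{i^*+1}$ to lie on the same side of $\sigma_{i^*}$ as $\sigma_1$ (equivalently, the atoms of the decomposition must alternate in rising/falling type at the shared boundary letter), which together with the structure of the atoms of $\tau$ obstructs the endpoints of any such enlarged consecutive subword from simultaneously being the extremes of its letter set.
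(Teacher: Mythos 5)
Your proposal is correct, but it takes a genuinely different route from the paper's. The paper inducts by cutting $\sigma$ at the position of its maximum element: writing $\sigma = \alpha \cdot n \cdot \omega$, it notes that any atomic subword containing $n$ must have $n$ at one of its ends, so no atom can straddle that position, and the decompositions of $\alpha \cdot n$ and $n \cdot \omega$ simply concatenate; uniqueness is dismissed as clear from inextendibility. Your greedy leftmost-longest-atom recursion buys two things: it makes uniqueness fully explicit (the first atom of any decomposition is pinned down as the maximal atomic prefix), and it avoids the case distinction the paper's cut silently needs---for a non-atomic word such as $2\,1\,3$ the maximum sits at an end, so one of $\alpha$, $\omega$ is empty and one must cut at the minimum instead. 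What the paper's route buys is that ``no atom crosses the cut'' is immediate, whereas for your cut at position $i^*$ this is exactly the step you flag as hard. It closes, and more directly than your sketched alternation argument suggests: suppose $\pi^1$ is rising and $\mu = \sigma_a \dotsm \sigma_b$ with $a < i^* < b$ is atomic on its letter set $U$. Since $\sigma_a$ lies in $T_1 \setminus \{\sigma_{i^*}\}$ while $\sigma_{i^*} = \max T_1$ belongs to $U$, one has $\sigma_a < \max U$, so $\mu$ cannot be falling and must be rising, that is $\sigma_a = \min U$ and $\sigma_b = \max U$. But then $\sigma_1 = \min T_1 \leq \sigma_a$ and $\sigma_b \geq \sigma_{i^*} = \max T_1$ show that $\sigma_1 \dotsm \sigma_b$ is itself a rising atomic initial segment of length $b > i^*$, contradicting the maximality of $i^*$; the falling case is symmetric. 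Hence no atomic subword of $\sigma$ crosses position $i^*$ at all, every atomic subword lies inside $\pi^1$ or inside $\tau$ (letter-set disjointness outside $\sigma_{i^*}$ rules out the remaining interactions), and inextendibility of the $\tau^j$ in $\tau$ upgrades to inextendibility in $\sigma$ exactly as your plan requires.
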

\begin{proof}
  \emph{Existence:}
  It is clear that any permutation of a set of $1$~or $2$~elements is an atomic permutation.
  Suppose, for some $n\ge 3$, that all permutations of $n-1$ elements or less can be decomposed into inextendible atomic permutations.
  Without loss of generality, we show that any non-atomic permutation $\sigma \in \perms_n$ also has a decomposition into inextendible atomic permutations.
  Regarding $\sigma$ as a word, we can write $\sigma = \alpha \cdot n \cdot \omega$, where $\alpha$ and $\omega$ are non-empty subwords.
  Notice that the permutations $\alpha \cdot n$ and $n \cdot \omega$ have a unique decomposition by assumption.
  Since an atomic permutation begins or ends with the largest element, we find that a decomposition of~$\sigma$ into inextendible atomic permutations is given by the combination of the decompositions of $\alpha \cdot n$ and $n \cdot \omega$.

  \emph{Uniqueness:} This is clear from the definition of inextendibility.
\end{proof}

Because of this property, the atomic permutations will prove to be very useful.

\subsection{Mountaineering}

Given a (linear or circular) permutation~$\sigma$ of an ordered set~$S$ of cardinality~$n$, a position $i < n$ is a \IdxMain{perm-desc}\emph{descent} of~$\sigma$ if $\sigma(i) > \sigma(i+1)$.
Any $i < n$ of a permutation~$\sigma$ that is not a descent is called an \IdxMain{perm-asc}\emph{ascent} of~$\sigma$.
For example, the permutation \perm{52364178} has the descents $1, 4, 5$ and the ascents $2, 3, 6, 7$, whereas the circular permutation \cperm{14536782} has the descents $3, 7, 8$ and the ascents $1, 2, 4, 5, 6$.

All the descents of a (linear or circular) permutation $\sigma$ can be collected in the \IdxMain{desc-set}\emph{descent set}
\begin{equation*}
  D(\sigma) \defn \{ i \mid i \text{ is a descent of } \sigma \}.
\end{equation*}
It is an elementary exercise in enumerative combinatorics to count the number of linear permutations of~$[n]$ whose descent set is given by a fixed $S \subseteq [n-1]$.
Let $S = \{s_1, s_2, \dotsc, s_k \}$ be an ordered subset of~$[n-1]$, then~\cite[Thm.~1.4]{bona:2012}
\begin{equation*}
  \beta(S)
  \defn \big|\{ \sigma \in \perms_n \mid D(\sigma) = S \}\big|
  = \sum_{T \subseteq S} (-1)^{|S-T|} \binom{n}{s_1, s_2 - s_1, s_3 - s_2, \dotsc, n - s_k}.
\end{equation*}
This result can also be adapated to circular permutations.

Related to the notions of ascents and descents are the concepts of \IdxMain{peak}\emph{peaks} and \IdxMain{valley}\emph{valleys}.
A peak occurs at position $i \in [2 \lddots n-1]$ of a linear permutation~$\sigma$ if $\sigma(i-1) < \sigma(i) > \sigma(i+1)$, whereas a valley occurs in the opposite situation $\sigma(i-1) > \sigma(i) < \sigma(i+1)$.
Again, this notion can be generalized to circular permutations, where, additionally, $1$ is always a valley and $n$ is a peak if and only if $\sigma(n) > \sigma(n-1)$.
In the example above, $4$ is a peak $2, 6$ are valleys of \perm{52364178}, whereas $3, 7$ are peaks and $1$ is a valley for \cperm{14536782}, see also \cref{fig:run_graphs}.

\section{Run structures}

\begin{definition}
  A \IdxMain{run}\emph{run}~$r$ of a (linear or circular) permutation $\sigma$ is an interval $[i \lddots j]$ such that $\sigma(i) \gtrless \sigma(i+1) \gtrless \dotsb \gtrless \sigma(j)$ is a monotone sequence, either increasing or decreasing, and so that it cannot be extended in either direction; its length is defined to be $j-i$.
  If $\sigma$ is a permutation of an $n$-element set, the collection of the lengths of all runs gives a partition~$p$ of $n-1$ (linear permutations) or $n$ (circular permutations).
  The partition $p$ is called the \IdxMain{run-struct}\emph{run structure} of~$\sigma$.
\end{definition}

It follows that a run starts and ends at peaks, valleys or at the outermost elements of a permutation.
For example, the permutation \perm{52364178} has runs $[1 \lddots 2]$, $[2 \lddots 4]$, $[4 \lddots 6]$, $[6 \lddots 8]$ with lengths $1, 2, 2, 2$, whereas the circular permutation \cperm{14536782} has runs $[1 \lddots 3]$, $[3 \lddots 4]$, $[4 \lddots 7]$, $[7 \lddots 9]$, of lengths $2, 1, 3, 2$.
Representing these runs by their image under the permutation, they are more transparently written as \perm{52}, \perm{236}, \perm{641}, \perm{178} and \perm{145}, \perm{53}, \perm{3678}, \perm{821} respectively.
The runs of permutations can also be neatly represented as directed graphs as shown in \cref{fig:run_graphs}.
In these graphs the peaks and valleys correspond to double sinks and double sources.
\begin{figure}
  \centering
  \includegraphics{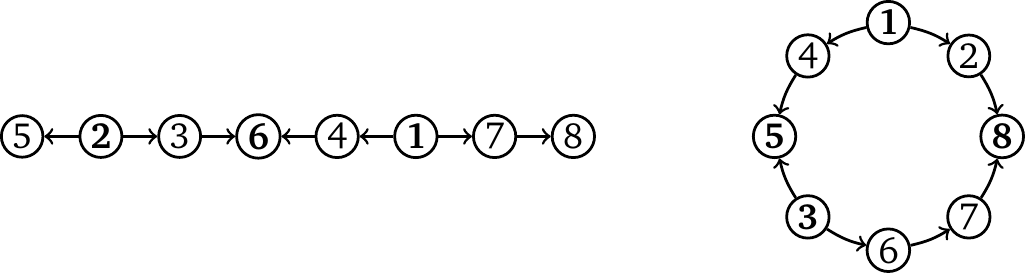}
  \caption{The two directed graphs representing the runs of the linear permutation \protect\perm{52364178} (left) and the circular permutation \protect\cperm{14536782} (right). Peaks and valleys are indicated by boldface numbers.}
  \label{fig:run_graphs}
\end{figure}

Motivated by a problem in mathematical physics \cite{fewster:2012b} (see also \cref{sec:applications}), we are interested in the following issue, which we have not found discussed in the literature.
By definition, the run structure associates each permutation $\sigma \in \cperms_n$ with a partition~$p$ of~$n$.
For example, \cperm{14536782} and \cperm{13452786} both correspond to the same partition $1 + 2 + 2 + 3$ of $8$.
Our interest is in the inverse problem: given a partition $p$ of $n$, we ask for the number~$Z_{\cperms}(p)$ of circular permutations whose run structure is given by~$p$.
One may consider similar questions for other classes of permutations, with slight changes; for example, note that the run structure of a permutation $\sigma \in \perms_n$ is a partition of $n-1$.

To put the research in~\cite{fewster:2014a} in perspective with the existing literature on the enumerative cominatorics of permutations, a short remark is in order:
The enumeration of permutations according to their run structure was already discussed by André~\cite{andre:1895} for alternating permutations, \ie, permutations that alternate between ascents and descents.
In~\cite{brown:2007} the enumeration of linear permutations according to the order and length of their runs was studied, so obtaining a map to compositions, rather than partitions.
In contrast to this approach, the method discussed in~\cite{fewster:2014a} was designed to facilitate computation; for the application in~\cite{fewster:2012b} calculations were taken up to $65$~runs using exact integer arithmetic in Maple\texttrademark~\cite{maple:16}.

\subsection{Atomic permutations}
\label{sub:run_atomic}
\IdxRanBegin{run-atom}

We now begin the enumeration of atomic permutations according to their run structure.
That is, for every partition~$p$ of ${n-1}$ we aim to find the number $Z_{\aperms}(p)$ of atomic permutations~$\aperms^\pm_n$ of length~$n$.

Observe that any $\sigma \in \aperms^+_n$ can be extended to a permutation in $\aperms^+_{n+1}$ by replacing $n$ with $n+1$ and reinserting $n$ in any position after the first and before the last.
Thus, \perm{13425} can be extended to \perm{153426}, \perm{135426}, \perm{134526} or \perm{134256}.
Every permutation in $\aperms^+_{n+1}$ arises in this way, as can be seen by reversing the procedure.
The effect on the run lengths can be described as follows.

\begin{description}[leftmargin=*]
  \item[Case 1:]\phantomsection\label{item:case_1}
    The length of one of the runs can be increased by one by inserting $n$ either at
    \begin{enumerate}[leftmargin=*]
      \item
        the end of an increasing run if it does not end in $n+1$, thereby increasing its length (\eg, \perm{13425} $\to$ \perm{134526})
        \\[.5em]
        \includegraphics[page=1]{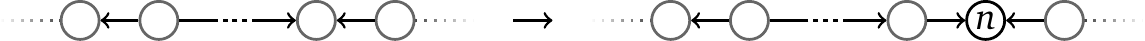}
      \item
        the penultimate position of an increasing run, thereby increasing its own length if it ends in $n+1$ (\eg, \perm{13425} $\to$ \perm{135426}) or increasing the length of the following decreasing run otherwise (\eg, \perm{13425} $\to$ \perm{134256})
        \\[.5em]
        \includegraphics[page=2]{fig_run_lengths.pdf}
    \end{enumerate}
  \item[Case 2:]\phantomsection\label{item:case_2}
    Any run of length $i+j \geq 2$ becomes three run of lengths $1$, $i$ and $j$ if we insert $n$ either after
    \begin{enumerate}[leftmargin=*]
      \item
        $i$ elements of an increasing run (\eg, \perm{13425} $\to$ \perm{153426} exemplifies $i=1$, $j=1$)
        \\[.5em]\hspace*{-10pt}
        \includegraphics[page=3]{fig_run_lengths.pdf}
      \item
        $i+1$ elements of a decreasing run (\eg, \perm{14325} $\to$ \perm{143526} for $i=1$, $j=1$)
        \\[.5em]\hspace*{-10pt}
        \includegraphics[page=4]{fig_run_lengths.pdf}
    \end{enumerate}
\end{description}
An analogous argument can be made for the falling atomic permutations $\aperms^-_n$.

Notice that every partition of a positive integer~$n$ can be represented by monomials in the ring of polynomials\footnote{If one wants to encode also the order of the run (\eg, to obtain a map from permutations of length $n$ to the compositions of $n$), one can exchange the polynomial ring with a noncommutative ring. Alternatively, if one wants to encode the direction of a run, one could study instead the ring $\ZZ[x_1, y_1, x_2, y_2, \dotsc]$, where $x_i$ denotes an increasing run of length $i$ and $y_j$ encodes a decreasing run of length $j$.} $\ZZ[x_1, x_2, \dotsc, x_n]$.
Namely, we can express a partition $p = p_1 + p_2 + \dotsb + p_k$ as $x_{p_1} x_{p_2} \,\cdots\, x_{p_k}$ (for example, the partition $1 + 2 + 2 + 3$ of~$8$ is written as $x_1 x_2^2 x_3$).

Now, let $p$ be a partition and $X$ the corresponding monomial.
To this permutation there correspond $Z_{\aperms}(p)$ permutations in $\aperms^\pm_n$ which can be extended to permutations in $\aperms^\pm_{n+1}$ in the manner described above.
Introducing the (formally defined) differential operator
\begin{equation}\label{eq:diff}
  \mathcal{D}   \defn \mathcal{D}_0 + \mathcal{D}_+
  \quad\text{with}\quad
  \mathcal{D}_0 \defn \sum_{i=1}^\infty x_{i+1} \pd{}{x_i},
  \;\;
  \mathcal{D}_+ \defn \sum_{i,j \,\geq\, 1} x_1 x_i x_j \pd{}{x_{i+j}},
\end{equation}
we can describe this extension in terms of the action of~$\mathcal{D}$ on~ $X$.
We say that $\mathcal{D}_0$ is the \emph{degree-preserving} part of $\mathcal{D}$; it represents the \hyperref[item:case_1]{case 1} of increasing the length of a run: the differentiation $\partial/\partial x_i$ removes one of the runs of length~$i$ and replaces it by a run of length~$i+1$, keeping account of the number of ways in which this can be done. Similarly,
\hyperref[item:case_2]{case 2} of splitting a run into $3$~parts is represented by the \emph{degree-increasing} part $\mathcal{D_+}$.
For example, each of the $7$~atomic permutations corresponding to the partition $1 + 1 + 3$ can be extended as
\begin{equation*}
   \mathcal{D} x_1^2 x_3 = 2 x_1 x_2 x_3 + x_1^2 x_4 + x_1^4 x_2,
\end{equation*}
\ie, each can be extended to two atomic permutations corresponding to the partitions $1 + 2 + 3$, one corresponding to $1 + 1 + 4$ and one to $1 + 1 + 1 + 1 + 2$.

Therefore, starting from the trivial partition $1$ of $1$, represented as $x_1$, we can construct a recurrence relation for polynomials $A_n = A_n(x_1, x_2, \dotsc, x_n)$ which, at every step $n \geq 1$, encode the number of atomic permutations $Z_{\aperms}(p)$ of length $n+1$ with run structure given by a partition~$p$ of~$n$ as the coefficients of the corresponding monomial in $A_n$.
The polynomial~$A_n$, accordingly defined by
\begin{equation}\label{eq:A_rel_Z}
  A_n = \sum_{p \vdash n} Z_{\aperms}(p) \prod_{i=1}^n x_i^{p(i)},
\end{equation}
where the sum is over all partitions $p$ of $n$ and $p(i)$ denotes the multiplicity of $i$ in the partition $p$, can thus be computed from the recurrence relation
\begin{subequations}\label{eq:A_recurrence}
  \begin{align}
    A_1 & \defn x_1, \\
    A_n & \defn \mathcal{D} A_{n-1}, \qquad (n \geq 2).
  \end{align}
\end{subequations}
We say that the polynomials~$A_n$ enumerate the run structure of the atomic permutations.

We summarize these results in the following proposition:
\begin{proposition}\label{prop:Z_A}
  The number $Z_{\aperms}(p)$ of rising or falling atomic permutations of length $n-1$ corresponding to a given run structure (\ie, a partition $p$ of $n$), is determined by the polynomial $A_n$ via~\eqref{eq:A_rel_Z}.
  The polynomials $A_n$ satisfy the recurrence relation~\eqref{eq:A_recurrence}.
\end{proposition}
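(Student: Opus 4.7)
The plan is to prove by induction on $n$ that $A_n$ coincides with the generating polynomial $\tilde{A}_n \defn \sum_{p \vdash n} Z_{\aperms}(p) \prod_i x_i^{p(i)}$; the recurrence part of the statement is immediate from the definition \eqref{eq:A_recurrence} of $A_n$. The falling case reduces to the rising case via the order-reversing involution on $\aperms_n$, which preserves run lengths, so throughout I will work with $\aperms^+_n$. For the base case $n = 1$, the set $\aperms^+_2$ consists of the single permutation \perm{12}, whose only run has length~$1$; hence $\tilde{A}_1 = x_1 = A_1$.

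The core of the inductive step is a bijection $\Phi : \aperms^+_n \times [1, n-1] \to \aperms^+_{n+1}$ sending $(\sigma, k)$ to the permutation obtained by first replacing the value $n$ in $\sigma$ by $n+1$ and then inserting $n$ at position $k+1$. The inverse locates the unique occurrence of $n$ in a given $\tau \in \aperms^+_{n+1}$ (which must lie in $[2, n]$ since $\tau(1) = 1$ and $\tau(n+1) = n+1$), removes it, and renames the terminal entry $n+1$ as $n$; the cardinalities $(n-2)!(n-1) = (n-1)! = |\aperms^+_{n+1}|$ then match.

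The next step is to analyse how the run structure transforms under $\Phi$. Replacing $n$ by $n+1$ preserves all monotonicity relations, so the intermediate word has exactly the runs of $\sigma$. The insertion of $n$ at position $k+1$ splits into two regimes: if $k = n-1$, the local pattern $\tau_{n-1} < n < n+1$ simply prolongs the final increasing run by one edge; if $k \leq n-2$, then $n$ exceeds both $\tau_k$ and $\tau_{k+1}$ and becomes a new local peak. The subcases 1a, 1b and 2 already sketched before the statement then exhaust the possible interactions with the surrounding runs: each insertion either extends a single existing run by one edge, or splits a run of length $L = i + j$ (with $i, j \geq 1$) into three runs of lengths $i$, $1$ and $j$. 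Monomial-wise these two transformations correspond precisely to the actions of $x_{i+1}\,\partial/\partial x_i$ (in $\mathcal{D}_0$) and of $x_1 x_i x_j\,\partial/\partial x_{i+j}$ (in $\mathcal{D}_+$).

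The final step is a multiplicity check: each run of length $i$ in $\sigma$ is extended by exactly one insertion position (producing the factor $p(i)$ supplied by $\partial/\partial x_i$), and each run of length $L \geq 2$ admits exactly one splitting of type $(i, 1, j)$ for each ordered pair $(i, j)$ with $i + j = L$ and $i, j \geq 1$. Summing over all pairs $(\sigma, k)$ then yields $\tilde{A}_{n+1} = \mathcal{D}\tilde{A}_n$ and closes the induction. I expect the main obstacle to lie here: careful bookkeeping is required at the boundary edges shared between adjacent runs (the edges joining a peak of an increasing run to the start of the following decreasing run, and symmetrically at valleys) to confirm that every one of the $n - 1$ insertion positions in $\tau$ is accounted for exactly once, either as the unique extension of some specific run or as a unique split. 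Once this is verified the identification with $\mathcal{D}$ is mechanical.
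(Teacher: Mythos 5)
Your proposal is correct and follows essentially the same route as the paper: the paper likewise establishes the recurrence by observing that every $\sigma\in\aperms^+_{n+1}$ arises uniquely from some $\sigma'\in\aperms^+_n$ by replacing $n$ with $n+1$ and reinserting $n$ in one of the $n-1$ admissible positions, and that each insertion either lengthens one run (the $\mathcal{D}_0$ term) or splits a run of length $i+j$ into runs of lengths $1$, $i$, $j$ (the $\mathcal{D}_+$ term). Your write-up merely makes the bijection and the multiplicity count more explicit than the paper's informal case analysis, which is a fair rendering of the same argument.
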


Note that atomic permutations always contain an odd number of runs and thus $Z_{\aperms}(p)$ is zero for even partitions $p$.

It will prove useful to combine all generating functions $A_n$ into the formal series
\begin{equation*}
  \mathcal{A}(\lambda)
  \defn \sum_{n=0}^\infty A_{n+1} \frac{\lambda^n}{n!}
  = \sum_{n=0}^\infty \mathcal{D}^n A_1 \frac{\lambda^n}{n!},
\end{equation*}
which can be expressed compactly as the exponential
\begin{equation*}
  \mathcal{A}(\lambda) = \exp(\lambda \mathcal{D}) A_1.
\end{equation*}

The first few $A_n$ are given by
\begin{align*}
  A_2 & = x_2 \\
  A_3 & = x_3 + x_1^3 \\
  A_4 & = x_4 + 5 x_2 x_1^2 \\
  A_5 & = x_5 + 7 x_3 x_1^2 + 11 x_2^2 x_1 + 5 x_1^5 \\
  A_6 & = x_6 + 9 x_4 x_1^2 + 11 x_2^3 + 38 x_3 x_2 x_1 + 61 x_2 x_1^4.
\end{align*}
from which we can read off that there is $1$ permutation in $\aperms^\pm_6$ corresponding to the trivial partition $5 = 5$, $7$ corresponding to the partition $5 = 1 + 1 + 3$, $11$ corresponding to $5 = 1 + 2 + 2$ and $5$ corresponding to $5 = 1 + 1 + 1 + 1 + 1$.
As a check, we note that $1 + 7 + 11 + 5 = 24$, which is the total number of elements of $\aperms^\pm_6$; similarly, the coefficients in the expression for $A_6$ sum to $120$, the cardinality of $\aperms^\pm_7$. A direct check that the coefficients in $A_n$ sum to $(n-1)!$ for all $n$ will be given in the last paragraph of \cref{sec:valleys}.

The first degree term $A^{(1)}_n$ of $A_n$ is $x_n$ as can be seen by a trivial induction using  $A^{(1)}_n = \mathcal{D_0} A^{(1)}_{n-1}$, which follows from the recurrence relation~\eqref{eq:A_recurrence}.
Therefore $Z_{\aperms}(n) = 1$.

For $A_n^{(k)}$ with $k > 1$ also the effect of $\mathcal{D}_+$ has to be taken into account, complicating things considerably.
Nevertheless, the general procedure is clear: once $A_m^{(k-2)}$ is known for all $m < n$, $A_n^{(k)}$ can be obtained as
\begin{equation*}
  A_n^{(k)} = \mathcal{D}_0 A_{n-1}^{(k)} + \mathcal{D}_+ A_{n-1}^{(k-2)} = \sum_{m=k-1}^{n-1} \mathcal{D}_0^{n-m-1} \mathcal{D}_+ A_m^{(k-2)}.
\end{equation*}
Here one can make use of the following relation.
Applying $\mathcal{D}_0$ repeatedly to any monomial $x_{i_1} x_{i_2} \,\dotsm\, x_{i_k}$ of degree $k$ yields, as a consequence of the Leibniz rule,
\begin{equation}\label{eq:repeated_D}
  \mathcal{D}_0^n x_{i_1} x_{i_2} \,\dotsm\, x_{i_k} = \sum_{\substack{j_1, j_2, \dotsc, j_k \,\geq\, 0 \\ j_1 + j_2 + \dotsb + j_k = n}} \binom{n}{j_1, j_2, \dotsc, j_k}\, x_{i_1 + j_1} x_{i_2 + j_2} \,\dotsm\, x_{i_k + j_k}.
\end{equation}

This observation provides the means to determine the third degree term $A_n^{(3)}$.
Applying $\mathcal{D}_+$ to any $A_m^{(1)} = x_m$ with $m \geq 2$ produces $x_1 x_p x_q$ with $p+q = m$ and $p, q \geq 1$.
Moreover, the repeated action of $\mathcal{D}_0$ on $x_1 x_p x_q$ is described by~\eqref{eq:repeated_D} and thus
\begin{equation*}
  A_n^{(3)} = \sum_{\substack{p, q, r, s, t \,\geq\, 0 \\ 1 + p + q + r + s + t = n}} \binom{n - p - q - 1}{r, s, t}\, x_{1+r} x_{p+s} x_{q+t}.
\end{equation*}
After some algebra this yields
\begin{proposition}
  The third degree term $A_n^{(3)}$ of the polynomial $A_n, n \geq 3,$ is given by
  \begin{equation}\label{eq:third_degree}
    A_n^{(3)}
    = \sum_{\substack{i, j, k \,\geq\, 1 \\ i + j + k = n}} \sum_{q=1}^{k} \frac{n-q-1}{n-q-j}\, \binom{n - q - 2}{i - 1, j - 1, k - q}\, x_i x_j x_k.
  \end{equation}
\end{proposition}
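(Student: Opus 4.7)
The plan is to unroll the recurrence \eqref{eq:A_recurrence} restricted to degree three, apply the explicit formula \eqref{eq:repeated_D} for powers of $\mathcal{D}_0$, and reduce the resulting inner sum to an instance of the hockey-stick identity. Because the operators $\mathcal{D}_0$ and $\mathcal{D}_+$ are, respectively, degree-preserving and degree-increasing (by two), the recurrence splits by degree into $A_n^{(3)} = \mathcal{D}_0 A_{n-1}^{(3)} + \mathcal{D}_+ A_{n-1}^{(1)}$. Iterating this and using $A_2^{(3)}=0$ together with $A_m^{(1)}=x_m$ (as noted in the text), I would first establish the compact telescoping
\begin{equation*}
  A_n^{(3)} = \sum_{m=2}^{n-1} \mathcal{D}_0^{n-1-m}\, \mathcal{D}_+\, x_m.
\end{equation*}

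Next I would compute $\mathcal{D}_+ x_m = \sum_{p+q=m,\ p,q\ge 1} x_1 x_p x_q$ directly from \eqref{eq:diff}, and apply \eqref{eq:repeated_D} to $x_1 x_p x_q$ to obtain
\begin{equation*}
  \mathcal{D}_0^{n-1-m}(x_1 x_p x_q) = \sum_{j_1+j_2+j_3=n-1-p-q} \binom{n-1-p-q}{j_1,j_2,j_3}\, x_{1+j_1} x_{p+j_2} x_{q+j_3},
\end{equation*}
which is exactly the multi-sum expression for $A_n^{(3)}$ displayed immediately before the proposition. Collecting coefficients of the monomial $x_i x_j x_k$ via $i=1+j_1$, $j=p+j_2$, $k=q+j_3$ (so necessarily $i+j+k=n$, and $p\in[1,j]$, $q\in[1,k]$), the claim reduces to showing
\begin{equation*}
  \sum_{p=1}^{j} \binom{n-p-q-1}{i-1,\ j-p,\ k-q} \;=\; \frac{n-q-1}{n-q-j}\, \binom{n-q-2}{i-1,\ j-1,\ k-q}
\end{equation*}
for each fixed $q\in[1,k]$.

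The main obstacle, and the real content of the proof, is verifying this inner identity. Writing $M=n-q-1$ and factoring out the $p$-independent terms $(i-1)!^{-1}(k-q)!^{-1}$, the left-hand side becomes $\tfrac{1}{(i-1)!(k-q)!}\sum_{p=1}^{j}\tfrac{(M-p)!}{(j-p)!}$. Substituting $\ell=j-p$ converts this into $\sum_{\ell=0}^{j-1}\binom{M-j+\ell}{\ell}$, which by the hockey-stick identity equals $\binom{M}{j-1}$. A short rewrite using $\binom{M}{j-1}=\tfrac{M}{M-j+1}\binom{M-1}{j-1}$, together with the identifications $M-j+1=n-q-j$ and $M-1=n-q-2$, reproduces the factor $(n-q-1)/(n-q-j)$ and the multinomial $\binom{n-q-2}{i-1,j-1,k-q}$, completing the proof.
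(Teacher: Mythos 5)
Your proposal is correct and follows essentially the same route as the paper: unrolling the degree-graded recurrence to $A_n^{(3)}=\sum_{m}\mathcal{D}_0^{n-1-m}\mathcal{D}_+ x_m$, applying \eqref{eq:repeated_D}, and arriving at the same intermediate multi-sum that the paper displays before the proposition. The only difference is that the paper dismisses the final reduction as ``some algebra,'' whereas you carry it out explicitly via the hockey-stick identity --- and your computation checks out.
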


The equation~\eqref{eq:third_degree} for the third degree term $A_n^{(3)}$ can be rewritten into a formula for $Z_{\aperms}(p_1 + p_2 + p_3)$, \ie, the number of permutations of $[n+1]$ that start with $1$, end with $n+1$ and have three runs of lengths $p_1, p_2, p_3$, by changing the first sum to a sum over $i, j, k \in \{ p_1, p_2, p_3 \}$.
In particular, this gives rise to three integer series for the special cases
\begin{equation*}
  Z_{\aperms}(n+n+n), \quad Z_{\aperms}(1+n+n), \quad Z_{\aperms}(1+1+n),
\end{equation*}
with $n \in \NN$.

The first series
\begin{align*}
  Z_{\aperms}(n+n+n)
  & = \sum_{q=1}^n \frac{3n-q-1}{2n-q}\, \binom{3n-q-2}{n-1,n-1,n-q}\\
  & = 1, 11, 181, 3499, 73501, 1623467, \dotsc \qquad (n \geq 1)
\end{align*}
gives the number of atomic permutations with three runs of equal length $n$.
It does not appear to be known in the literature nor can it be found in the OEIS \cite{oeis} and the existence of closed form expression is currently unkown.
For the second series, however, a simple closed form can be found:
\begin{align*}
  Z_{\aperms}(1+n+n)
  & = \sum_{q=1}^n \bigg( \binom{2n-q}{n-1} + \binom{2n-q-1}{n-1} \bigg) + \frac{1}{2}\, \binom{2n}{n}\\
  & = 2\, \binom{2n}{n} - 1
    = 11, 39, 139, 503, 1847, \dotsc, \qquad (n \geq 2)
\end{align*}
is the number of atomic permutations in $\aperms^\pm_{2n+2}$ with two runs of length $n$. One may understand this directly: there are $\binom{2n}{n}$
permutations in which the length $1$ run is between the others and $\binom{2n}{n}-1$ in which it is either first or last.
The third series, $Z_{\aperms}(1+1+n)$, \ie, the number of atomic permutations in $\aperms^\pm_{n+3}$ with two runs of length $1$, is given by the odd numbers bigger than $3$:
\begin{equation*}
  Z_{\aperms}(1+1+n) = 2n + 1 = 5, 7, 9, 11, 13, 15, \dotsc, \qquad (n \geq 2).
\end{equation*}

Observe that terms of the form $x_1^n$ in $A_n$ encode alternating permutations,
which were already investigated by André in the 1880's \cite{andre:1881}.
As a consequence of his results, we find that the alternating atomic permutations are enumerated by the \IdxMain{sec-numbers}\emph{secant numbers} $S_n$, the coefficients of the Maclaurin series of $\sec x = S_0 + S_1 x^2/2! + S_2 x^4/4! + \dotsb$,
\begin{equation*}
  Z_{\aperms}\bigg(\sum_{i=1}^{2n+1} 1\bigg) = S_n = 1, 1, 5, 61, 1385, 50521, \dotsc  \quad \text{($n \geq 0$, OEIS series \oeis{A000364})}.
\end{equation*}
This is due to the fact that all alternating atomic permutations of $[2n]$ can be understood as the reverse alternating permutations of $[2 \lddots 2n-1]$ with a prepended $1$ and an appended $2n$.
Moreover, since any $x_1^{2n+1}$ can only be produced through an application of $\mathcal{D}$ on $x_2 x_1^{2(n-1)}$, we also have $Z_{\aperms}\big(2 + \sum_{i=1}^{2(n-1)} 1\big) = S_n$.

\IdxRanEnd{run-atom}

\subsection{Circular permutations}
\label{sub:run_circular}
\IdxRanBegin{run-circ}

The methods developed in the last section to enumerate atomic permutations can also be applied to find the number of circular permutations~$Z_{\cperms}(p)$ with a given run structure~$p$.
Indeed, any circular permutation in $\cperms_{n-1}$ can be extended to a permutation in $\cperms_{n}$ by inserting $n$ at any position after the first (\eg, \cperm{14532} can be extended to \cperm{164532}, \cperm{146532}, \cperm{145632}, \cperm{145362} or \cperm{145326}).
As in the case of atomic permutations, this extension either increases the length of a run or splits a run into three runs.
Namely, we can increase the length of one run by inserting $n$ at the end or the penultimate position of an increasing run or we can split a run of length $i+j \geq 2$ into three runs of lengths $i, j$ and $1$ by inserting $n$ after $i$ elements of an increasing run or after $i+1$ elements of a decreasing run.

We introduce polynomials $C_n$ representing the run structures of all elements of $\cperms_n$, by analogy with the polynomials $A_n$ in the previous section:
\begin{equation}\label{eq:C_rel_Z}
  C_n = \sum_{p \vdash n} Z_\cperms(p) \prod_{i=1}^n x_i^{p(i)}
\end{equation}
and we say that the polynomials~$C_n$ enumerate the run structure of the circular permutations.
In the last paragraph we saw that we can use the differential operator $\mathcal{D}$ introduced in~\eqref{eq:diff} to find a recurrence relation similar to~\eqref{eq:A_recurrence}.
Namely,
\begin{subequations}\label{eq:C_recurrence}
  \begin{align}
    C_2 & \defn x_1^2, \\
    C_n & \defn \mathcal{D} C_{n-1}, \qquad (n \geq 3)
  \end{align}
\end{subequations}
giving in particular
\begin{align*}
  C_3 & = 2 x_2 x_1 \\
  C_4 & = 2 x_2^2 + 2 x_3 x_1 + 2 x_1^4 \\
  C_5 & = 2 x_4 x_1 + 6 x_3 x_2 + 16 x_1^3 x_2 \\
  C_6 & = 2 x_5 x_1 + 8 x_4 x_2 + 6 x_3^2 + 62 x_1^2 x_2^2 + 26 x_1^3 x_3 + 16 x_1^6
\end{align*}
from which we can read off that there are $2$ permutations in $\cperms_5$ corresponding to $5 = 4 + 1$, $6$ corresponding to the partition $5 = 3 + 2$ and $16$ corresponding to $5 = 2 + 1 + 1 + 1$.
As a check, we note that $6 + 16 + 2 = 24$, which is the total number of elements of $\cperms_5$; similarly, the coefficients in the expression for $C_6$ sum to $120$, the cardinality of $\cperms_6$.
More on this can be found in the last paragraph of \cref{sec:valleys}.

In summary, we have a result analogous to \cref{prop:Z_A}:
\begin{proposition}\label{prop:Z_C}
  The number $Z_\cperms(p)$ of circular permutations of length $n$ corresponding to a given run structure $p$ is determined by the polynomial $C_n$ via~\eqref{eq:C_rel_Z}.
  The polynomials~$C_n$ satisfy the recurrence relation~\eqref{eq:C_recurrence}.
\end{proposition}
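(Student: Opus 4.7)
The plan is to prove the proposition by induction on $n$, in a manner parallel to the proof of \cref{prop:Z_A} for atomic permutations. The only substantive difference lies in the base case; the inductive step rests on the observation that inserting the largest element $n$ into a circular permutation of $\cperms_{n-1}$ has essentially the same local effect on the run structure as in the atomic case.

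For the base case, $\cperms_2 = \{\dot{1}\dot{2}\}$ consists of a single element whose cyclic sequence $1 \to 2 \to 1$ contains one ascent and one descent, both of length one. This corresponds to the partition $1+1$ of $2$ taken with multiplicity $1$, so $C_2 = x_1^2$.

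For the inductive step, I would first set up a bijection $\cperms_n \simeq \cperms_{n-1} \times \{2,3,\dotsc,n\}$: given $(\sigma, i)$, insert $n$ into position $i$ of $\sigma$ (keeping $1$ fixed at position $1$); conversely, given $\tau \in \cperms_n$, delete $n$ and record its former position. Since $\lvert\cperms_{n-1}\rvert \cdot (n-1) = (n-2)!\,(n-1) = (n-1)! = \lvert\cperms_n\rvert$, this is indeed a bijection. Next I would analyse the change in run structure: because $n$ exceeds every other entry, it is flanked in the new cyclic sequence by strictly smaller entries and necessarily becomes a peak. The chosen gap lies within some run of $\sigma$ of length $\ell$, and a case analysis directly analogous to Cases~1 and~2 of \cref{sub:run_atomic} shows that among the $\ell$ gaps of such a run, exactly one yields a degree-preserving extension of a neighbouring run by $1$ (captured by $\mathcal{D}_0$), while the remaining $\ell - 1$ gaps each split the run into three runs of lengths $1, i, j$ with $i + j = \ell$ and $i, j \geq 1$ (captured by the corresponding monomial in $\mathcal{D}_+$). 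Assembling these contributions across all runs of all $\sigma \in \cperms_{n-1}$ yields $C_n = \mathcal{D} C_{n-1}$.

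The main obstacle is the bookkeeping in the degree-preserving case: the gap whose insertion extends a given run $R$ lies not within $R$ itself but within the neighbouring run on the peak side of $R$. One must therefore check that, upon summing over all runs of a fixed run structure $\prod_k x_k^{p(k)}$, the number of extensions producing runs of length $i+1$ from runs of length $i$ is exactly $p(i)$, precisely matching the Leibniz factor in $\mathcal{D}_0 \prod_k x_k^{p(k)} = \sum_i p(i)\, x_{i+1}\, x_i^{p(i)-1} \prod_{k \neq i} x_k^{p(k)}$. The cyclic structure introduces no additional complications, since the ``wrap-around'' gap adjacent to the fixed valley at $1$ is simply the terminal gap of some run and is treated uniformly with the others.
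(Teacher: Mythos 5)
Your proof is correct and follows essentially the same route as the paper: the paper also establishes \cref{prop:Z_C} by inserting the largest element $n$ into each circular permutation of $\cperms_{n-1}$ at any of the $n-1$ positions after the fixed first entry, observing that each insertion either lengthens a run by one (the $\mathcal{D}_0$ part) or splits a run of length $i+j\geq 2$ into runs of lengths $1$, $i$, $j$ (the $\mathcal{D}_+$ part), exactly as in the atomic case of \cref{sub:run_atomic}. Your explicit bookkeeping of which gap extends which run, and of the wrap-around gap at the fixed valley, merely fills in details the paper leaves implicit.
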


Note that circular permutations, exactly opposite to atomic permutations, always contain an even number of runs and thus $Z_\cperms(p)$ is zero for odd partitions~$p$.

The enumeration of circular and atomic permutations is closely related.
In fact, introducing a generating function~$\mathcal{C}$ as the formal series
\begin{equation*}
  \mathcal{C}(\lambda)
  \defn \sum_{n=0}^\infty C_{n+2} \frac{\lambda^n}{n!}
  = \sum_{n=0}^\infty \mathcal{D}^n C_2 \frac{\lambda^n}{n!}
  = \exp(\lambda \mathcal{D}) C_2,
\end{equation*}
one can show the following:
\begin{proposition}\label{prop:square}
  The formal power series $\mathcal{C}$ is the square of a formal series $\mathcal{A}$; namely,
  \begin{equation}\label{eq:square}
    \mathcal{C}(\lambda) = \mathcal{A}(\lambda)^2 = \big( \exp(\lambda \mathcal{D}) A_1 \big)^2,
  \end{equation}
  where $A_1 \defn x_1$.
\end{proposition}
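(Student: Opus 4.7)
The approach is purely algebraic, and rests on the observation that \(C_2 = x_1^2 = A_1^2\). Hence \eqref{eq:square} will follow at once if one can show that the formal operator \(\exp(\lambda \mathcal{D})\) is multiplicative, in the sense that \(\exp(\lambda \mathcal{D})(fg) = \exp(\lambda \mathcal{D})(f) \cdot \exp(\lambda \mathcal{D})(g)\) for \(f, g \in \ZZ[x_1, x_2, \dotsc]\); applying this with \(f = g = x_1\) produces \(\exp(\lambda \mathcal{D})(x_1^2) = (\exp(\lambda \mathcal{D}) x_1)^2\), which is exactly \(\mathcal{C}(\lambda) = \mathcal{A}(\lambda)^2\). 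The problem therefore reduces to a statement about the derivation properties of \(\mathcal{D}\).

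The first step is to check that \(\mathcal{D} = \mathcal{D}_0 + \mathcal{D}_+\) is a derivation of \(\ZZ[x_1, x_2, \dotsc]\). Each summand of \(\mathcal{D}_0\) has the form \(x_{i+1} \partial/\partial x_i\), and each summand of \(\mathcal{D}_+\) has the form \(x_1 x_i x_j \partial/\partial x_{i+j}\); in both cases the operator is a polynomial times a single partial derivative. Since partial derivatives satisfy the Leibniz rule and left-multiplication by a polynomial preserves this property, every summand is itself a derivation, so \(\mathcal{D}\) is a derivation. Note that on any fixed polynomial only finitely many summands act nontrivially, so the infinite sum defining \(\mathcal{D}_+\) presents no convergence issue.

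The second step is the standard higher Leibniz rule, which, by a short induction on \(n\), gives
\begin{equation*}
  \mathcal{D}^n(fg) = \sum_{k=0}^{n} \binom{n}{k} (\mathcal{D}^k f)\, (\mathcal{D}^{n-k} g)
\end{equation*}
for any derivation \(\mathcal{D}\). Multiplying by \(\lambda^n/n!\), summing over \(n \geq 0\), and recognising the right-hand side as the Cauchy product of two exponential generating series yields the desired multiplicativity of \(\exp(\lambda \mathcal{D})\) as a formal power series in \(\lambda\) with polynomial coefficients. The only point requiring mild care is the bookkeeping of this formal manipulation, but since it reduces to a polynomial identity coefficient-by-coefficient in \(\lambda\), there is no genuine analytic subtlety. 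The combinatorial content of \eqref{eq:square} is that cutting a circular permutation at the positions of its minimum and maximum yields an ordered pair of rising atomic permutations sharing those two extreme values, and the present algebraic proof encodes precisely this decomposition via the derivation \(\mathcal{D}\) that governs both recurrences.
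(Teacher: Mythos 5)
Your proof is correct, but it takes a genuinely different route from the paper's. The paper proves \eqref{eq:square} by observing that $\mathcal{C}$ and $\mathcal{A}$ both satisfy the transport equation $\partial_\lambda \mathcal{F} - \mathcal{D}\mathcal{F} = 0$ (with initial data $C_2$ and $A_1$ respectively) and then applying the method of characteristics in infinitely many variables: both series are constant along characteristics, so $\mathcal{C}(\lambda)|_{x_\bullet} = \chi_1(0)^2$ and $\mathcal{A}(\lambda)|_{x_\bullet} = \chi_1(0)$, whence $\mathcal{C} = \mathcal{A}^2$. You instead establish directly that $\mathcal{D}$ is a derivation (each summand being a polynomial times a single partial derivative, with only finitely many summands acting on any fixed polynomial), deduce the higher Leibniz rule $\mathcal{D}^n(fg) = \sum_k \binom{n}{k}(\mathcal{D}^k f)(\mathcal{D}^{n-k}g)$, and conclude that $\exp(\lambda\mathcal{D})$ is multiplicative, which applied to $C_2 = A_1^2$ gives the result. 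Amusingly, the ring-homomorphism property you prove as your key lemma is exactly what the paper extracts \emph{afterwards} as a corollary of its characteristics argument (see the closing remark of the section on circular permutations); your proof simply inverts the logical order. What your approach buys is that it is purely algebraic and coefficient-by-coefficient in $\lambda$, so it sidesteps the question of making the method of characteristics rigorous for a first-order PDE in infinitely many unknowns; what the paper's approach buys is the geometric picture of transport along characteristic curves, which it reuses later (e.g.\ in the enumeration of valleys). Your closing combinatorial remark about cutting a circular permutation at its extremes matches the bijective explanation the paper gives at the start of its treatment of linear permutations.
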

\begin{proof}
  This may be seen in various ways, but the most convenient is to study the first-order partial differential equation (in infinitely many variables)
  \begin{equation}\label{eq:pde}
    \frac{\partial\mathcal{C}}{\partial \lambda} - \mathcal{D}\mathcal{C} = 0,
    \quad
    \mathcal{C}(0) = C_2
  \end{equation}
  satisfied by~$\mathcal{C}$.

  We can now apply the method of characteristics to this problem.
  Since it has no inhomogeneous part, the p.d.e.~\eqref{eq:pde} asserts that $\mathcal{C}$ is constant along its characteristics.
  So, given $\lambda$ and $x_1, x_2, \dotsc$, let $\chi_1(\mu), \chi_2(\mu), \dotsc$ be solutions to the characteristic equations with $\chi_r(\lambda) = x_r$, \ie, $\chi_1(\mu), \chi_2(\mu), \dotsc$ are the characteristic curves which emanate from the point $(\lambda, x_1, x_2, \ldots)$.
  Then,
  \begin{equation*}
    \mathcal{C}(\lambda)|_{x_\bullet} = \mathcal{C}(0)|_{\chi_\bullet(0)} = C_2 \big( \chi_1(0) \big) = \chi_1(0)^2.
  \end{equation*}
  Applying the same reasoning again to $\mathcal{A}$, which obeys the same p.d.e.\ as $\mathcal{C}$ but with initial condition $\mathcal{A}(0) = A_1$,
  \begin{equation*}
    \mathcal{A}(\lambda)|_{x_\bullet} = \mathcal{A}(0)|_{\chi_\bullet(0)} = A_1 \big( \chi_1(0) \big) = \chi_1(0).
  \end{equation*}
  Therefore, \cref{prop:square} follows by patching these two equations together.
\end{proof}

As a consequence also the polynomials $A_n$ and $C_n$ are related via
\begin{equation}\label{eq:relation}
  C_n = \sum_{m=1}^{n-1} \binom{n-2}{m-1}\, A_m A_{n-m}.
\end{equation}
It then follows that the second degree part of $C_n$ is given by
\begin{equation*}
  C^{(2)}_n = \sum_{m=1}^{n-1} \binom{n-2}{m-1}\, x_m x_{n-m}
\end{equation*}
and, applying \eqref{eq:third_degree}, that the fourth degree part can be written as
\begin{equation*}
  C_n^{(4)} = \sum_{\substack{i, j, k, l \,\geq\, 1 \\ i + j + k + l = m}} \sum_{q=1}^{k} 2\, \frac{n-l-q-1}{n-l-q-j}\, \binom{n-2}{n-l-1} \binom{n - l - q - 2}{i - 1, j - 1, k - q}\, x_i x_j x_k x_l.
\end{equation*}

Similar to the atomic permutations, we find that the alternating circular permutations satisfy (\cf\ \cite[\S41]{andre:1895})
\begin{equation*}
  Z_{\cperms}\bigg(\sum_{i=1}^{2n} 1\bigg) = T_n = 1, 2, 16, 272, 7936, 353792, \dotsc \quad \text{($n \geq 1$, OEIS series \oeis{A000182})}
\end{equation*}
and also $Z_{\cperms}\big(2 + \sum_{i=1}^{2n-3} 1\big) = T_n$, where $T_n$ are the \IdxMain{tan-numbers}\emph{tangent numbers}, the coefficients of the Maclaurin series of $\tan x = T_1 x_1 + T_2 x_3/3! + T_3 x_5/5! + \dotsb$.
Furthermore, from~\eqref{eq:relation} we find the relation
\begin{equation*}
  T_{n+1} = \sum_{m=0}^n \binom{2n}{2m}\, S_m S_{n-m},
\end{equation*}
which can be traced back to $\tan' x = \sec^2 x$.

To conclude this section, we note that the argument of \cref{prop:square} proves rather more: namely, that $\exp(\lambda \mathcal{D})$ defines a ring homomorphism from the polynomial ring $\CC[x_1, x_2, \dotsc]$ to the ring of formal power series $\CC[[x_1, x_2, \dotsc]]$.
This observation can be used to accelerate computations: for example, the fact that $A_3 = x_3 +x_1^3$ implies that
\begin{equation*}
\mathcal{A}''(\lambda) = \mathcal{A}(\lambda)^3 + \exp(\lambda\mathcal{D})x_3,
\end{equation*}
which reduces computation of $A_{n+3}=\mathcal{D}^{n+2}x_1$ to the computation of $\mathcal{D}^n x_3$. Once $\mathcal{A}$ is obtained, we may of course determine $\mathcal{C}$ by squaring.

\IdxRanEnd{run-circ}

\subsection{Linear permutations}
\label{sub:run_linear}
\IdxRanBegin{run-lin}

In the last section we studied the run structures of circular permutations $\cperms_n$ and discovered that their run structures can be enumerated by the polynomials $A_n$.
One might ask, what the underlying reason for this is.
Circular permutations of $[n]$ have the same run structure as the linear permutations of the multiset $\{1,1,2,\dotsc,n\}$ which begin and end with $1$.
These permutations can then be split into two atomic permutations at the occurrence of their maximal element.
For example, the circular permutation \cperm{14532} can be split into the two atomic permutations \perm{145} of $\{1,4,5\}$ and \perm{5321} of $\{1,2,3,5\}$.
This also gives us the basis of a combinatorial argument for the fact that $\mathcal{C} = \mathcal{A}^2$.
Similarly it is in principle possible to encode the run structures of any subset of permutations using the polynomials $A_n$.
The goal of this section is to show how this may be accomplished for $\perms_S$ for any $S \subset \NN$.

As in \cref{sub:run_atomic,sub:run_circular}, we want to find polynomials
\begin{equation*}
  L_n = \sum_{p \vdash n} Z_\perms(p) \prod_{i=1}^n x_i^{p(i)}
\end{equation*}
that enumerate the run structure of the permutations~$\perms_{n+1}$.
This may be achieved in a two step procedure.
Since every permutation has a unique decomposition into inextendible atomic permutations, we can enumerate the set of permutations according to this decomposition.
The enumeration of permutations by their run structure follows because the enumeration of atomic permutations has already been achieved in \cref{sub:run_atomic}.

The key to our procedure is to understand the factorisation of the run structure into those of atomic permutations. Considering $\sigma \in \perms_n$ as a word, we can write it as the concatenation $\sigma = \alpha \cdot \pi \cdot \omega$, where $\pi$ is the principal atom of~$\sigma$ (see \cref{sub:perm_atomic}) and $\alpha, \omega$ are (possibly empty) subwords of~$\sigma$.
Since the decomposition of~$\sigma$ into its atoms also decomposes its run structure, the complete runs of~$\sigma$ are determined by the runs of $\alpha \cdot 1$, $\pi$ and $n \cdot \omega$ if $\pi$ is rising, or of $\alpha \cdot n$, $\pi$ and $1 \cdot \omega$ if $\pi$ is falling.

Let $S_\omega$ be the set of letters in $\omega$ and define $\rho: S_\omega \to S_\omega$ to be the involution mapping the $i$'th smallest element of $S_\omega$ to the $i$'th largest, for all $1 \leq i \leq |S_\omega|$.
Then the run structure of $n \cdot \omega$ is identical to that of $1 \cdot \rho(\omega)$, where $\rho(\omega)$ is obtained by applying $\rho$ letterwise to $\omega$.
Furthermore, in the case $\pi = 1 \,\dotsm\, n$, the combined run structures of $\alpha \cdot 1$ and $n \cdot \omega$ are precisely the run structure of $\alpha \cdot 1 \cdot \rho(\omega)$, while, if $\pi = n \,\dotsm\, 1$, the combined run structures of $\alpha \cdot n$ and $1 \cdot \omega$ precisely form the run structure of $\alpha\cdot n \cdot \rho(\omega)$.
We refer to $\alpha\cdot 1\cdot\rho(\omega)$ or $\alpha\cdot n\cdot\rho(\omega)$ as the \emph{residual permutation}.

Summarising, the run structure of~$\sigma$ may be partitioned into that of~$\pi$ and either $\alpha \cdot 1 \cdot \rho(\omega)$ or $\alpha \cdot n \cdot \rho(\omega)$; accordingly, the monomial for~$\sigma$ factorises into that for the principal atom~$\pi$ and that for the residual permutation.
Therefore, the polynomial enumerating linear permutations by run structure can be given in terms of the those enumerating atomic permutations of the same or shorter length and of linear permutations of strictly shorter length.

This argument can be used to give a recursion relation for~$L_n$, which enumerates permutations of $[n+1]$ by their run structure.
Taking into account that the principal atom consists of $m+1$ letters, where $1 \leq m\leq n$, of which $m-1$ may be chosen freely from the set $[2 \lddots n]$, and that it might be rising or falling, and that the residual permutation may be any linear permutation on a set of cardinality $n-m+1$, we obtain the recursion relation
\begin{equation*}
  L_n = 2 \sum_{m=1}^{n} \binom{n-1}{m-1} A_m L_{n-m},
  \qquad
  L_0=1.
\end{equation*}
Passing to the generating function,
\begin{equation*}
 \mathcal{L}(\lambda) \defn \sum_{n=0}^\infty L_n \frac{\lambda^n}{n!},
\end{equation*}
we may deduce that
\begin{equation}\label{eq:linear_de}
  \frac{\partial\mathcal{L}}{\partial\lambda} = 2\mathcal{A}(\lambda)\mathcal{L}(\lambda).
\end{equation}
Our main result in this section is:
\begin{proposition}\label{prop:linear}
  The run structure of all permutations in $\perms_{n+1}$ is enumerated by
  \begin{equation}\label{eq:linear}
    L_n = \sum_{p \vdash n} \frac{2^{\abs{p}}}{\ord p}\, \binom{n}{p} \prod_{i=1}^{\abs{p}} A_{p_i},
    \quad
    L_0 = 1,
  \end{equation}
  where the sum is over all partitions $p = p_1 + p_2 + \dotsb$ of~$n$, $\abs{p}$ is the number of parts of partition, $\ord p$ is the symmetry order of the parts of~$p$ (\eg, for $p = 1+1+2+3+3$ we have $\ord p = 2! 2!$) and $\binom{n}{p}$ is the multinomial with respect to the parts of~$p$.
  The generating function for the~$L_n$ is
  \begin{equation}\label{eq:linear_gen}
    \mathcal{L}(\lambda) \defn \sum_{n=0}^\infty L_n \frac{\lambda^n}{n!}
    = \exp \left( 2 \int_0^\lambda \mathcal{A}(\mu)\, \dif\mu\right).
  \end{equation}
\end{proposition}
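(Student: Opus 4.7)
The key observation is that the exposition preceding the proposition already carries out all the combinatorial work: by decomposing a permutation into its principal atom (rising or falling) and its residual permutation, one obtains the recursion
\[
  L_n = 2 \sum_{m=1}^n \binom{n-1}{m-1} A_m L_{n-m}, \qquad L_0 = 1,
\]
which upon passage to exponential generating functions yields the ODE
\[
  \mathcal{L}'(\lambda) = 2\,\mathcal{A}(\lambda)\,\mathcal{L}(\lambda), \qquad \mathcal{L}(0) = 1.
\]
Thus the only work remaining is to solve this ODE in the ring of formal power series and then extract the coefficient of $\lambda^n/n!$.

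The plan is first to integrate the ODE. Since $\mathcal{L}(0) = 1$ is a unit in $\mathbb{C}[[\lambda]]$ and $\mathcal{A}$ is a formal series, the equation $(\log \mathcal{L})'(\lambda) = 2 \mathcal{A}(\lambda)$ has the unique solution
\[
  \mathcal{L}(\lambda) = \exp\!\left( 2 \int_0^\lambda \mathcal{A}(\mu)\, \dif\mu \right),
\]
which establishes \eqref{eq:linear_gen}. The uniqueness is standard: any two solutions differ by a constant factor, fixed to be $1$ by the initial condition.

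Next, I would unfold this exponential to recover \eqref{eq:linear}. Using the definition of $\mathcal{A}$ and term-by-term integration,
\[
  2 \int_0^\lambda \mathcal{A}(\mu)\, \dif\mu = \sum_{k=1}^\infty 2 A_k\, \frac{\lambda^k}{k!}.
\]
Expanding the exponential of this series as $\sum_{j \geq 0} \frac{1}{j!} \bigl( \sum_{k \geq 1} 2 A_k \lambda^k/k! \bigr)^j$ and collecting terms by total degree $n$, each ordered $j$-tuple $(k_1,\dotsc,k_j)$ with $k_1 + \dotsb + k_j = n$ contributes $\binom{n}{k_1,\dotsc,k_j} \prod_i (2 A_{k_i})$ divided by $j!$. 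Grouping these ordered tuples by the underlying partition $p$ of $n$ with $\abs{p} = j$ parts, the number of orderings producing the same partition is $j!/\ord p$, so that the factors of $j!$ cancel and the coefficient of $\lambda^n/n!$ becomes
\[
  L_n = \sum_{p \vdash n} \frac{2^{\abs{p}}}{\ord p}\, \binom{n}{p} \prod_{i=1}^{\abs{p}} A_{p_i},
\]
which is precisely \eqref{eq:linear}. The only step requiring care is the bookkeeping that turns compositions into partitions via the symmetry factor $\ord p$; this is the standard exponential formula for EGFs and is purely routine once set up correctly, so I do not anticipate any serious obstacle.
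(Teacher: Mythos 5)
Your proof is correct and follows essentially the same route as the paper: both integrate the ODE $\mathcal{L}' = 2\mathcal{A}\mathcal{L}$ with $\mathcal{L}(0)=1$ to obtain \eqref{eq:linear_gen}, and then extract \eqref{eq:linear} from the exponential. The only difference is presentational --- the paper invokes Fa\`a di Bruno's formula by citation where you carry out the equivalent exponential-formula coefficient extraction by hand, and your bookkeeping of the $j!/\ord p$ factor is correct.
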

\begin{proof}
  \Cref{eq:linear_gen} follows immediately from~\eqref{eq:linear_de}, as $\mathcal{L}(0)=1$, whereupon Faà di Bruno's formula~\cite[Eq.~(1.4.13)]{olver:2010} yields~\eqref{eq:linear}.
\end{proof}

To conclude this section, we remark that the first few $L_n$ are given by
\begin{align*}
  L_1 & = 2 A_1 \\
  L_2 & = 4 A_1^2 + 2 A_2 \\
  L_3 & = 8 A_1^3 + 12 A_1 A_2 + 2 A_3 \\
  L_4 & = 16 A_1^4 + 48 A_1^2 A_2 + 12 A_2^2 + 16 A_1 A_3 + 2 A_4 \\
  L_5 & = 32 A_1^5 + 160 A_1^3 A_2 + 120 A_1 A_2^2 + 80 A_1^2 A_3 + 40 A_2 A_3 + 20 A_1 A_4 + 2 A_5\\
  L_6 &= 64 A_1^6 + 480 A_1^4 A_2 + 320 A_1^3 A_3 + 720 A_1^2 A_2^2 + 120 A_1^2 A_4 + 480 A_1 A_2 A_3 + 120 A_2^3 \\&\quad + 24 A_1 A_5 +
60 A_2 A_4 + 40 A_3^2 + 2A_6.
\intertext{Expanding the $A_k$ and writing the $L_n$ instead in terms of $x_i$, we obtain from these}
  L_1 & = 2 x_1 \\
  L_2 & = 4 x_1^2 + 2 x_2 \\
  L_3 & = 10 x_1^3 + 12 x_1 x_2 + 2 x_3 \\
  L_4 & = 32 x_1^4 + 58 x_1^2 x_2 + 12 x_2^2 + 16 x_1 x_3 + 2 x_4 \\
  L_5 & = 122 x_1^5 + 300 x_1^3 x_2 + 142 x_1 x_2^2 + 94 x_1^2 x_3 + 40 x_2 x_3 + 20 x_1 x_4 + 2 x_5\\
  L_6 & = 544 x_1^6 + 1682 x_1^4 x_2 + 568x_1^3 x_3 + 1284 x_1^2 x_2^2 + 138 x_1^2 x_4 + 556 x_1 x_2 x_3 + 142 x_2^3 \\&\quad + 24 x_1 x_5 + 60 x_2 x_4 + 40 x_3^2 + 2x_6,
\end{align*}
which show no obvious structure, thereby making \cref{prop:linear} that much more remarkable.

\IdxRanBegin{run-lin}

\section{Enumeration of valleys}
\label{sec:valleys}
\IdxRanBegin{valley}

Instead of enumerating permutations by their run structure, we can count the number of valleys of a given (circular) permutation.
Taken together, the terms $C_n$ involving a product of $2k$ of the $x_i$ relate precisely to the circular permutations $\cperms_n$ with $k$ valleys.
Since any circular permutation in $\cperms_n$ can be understood as a permutation of $[3 \lddots n+1]$ with a prepended $1$ and an appended $2$ (\cf\ beginning of \cref{sub:run_linear}), $C_n$ may also be used to enumerate the valleys of ordinary permutations of $[n-1]$.
Namely, terms of~$C_{n+1}$ with a product of $2(k+1)$ variables~$x_i$ relate to the permutations of~$\perms_n$ with $k$~valleys (\ie, terms of $L_{n+1}$ which are a product of~$2k$ of the~$x_i$).

Let $V(n, k)$ count the number of permutations of $n$~elements with $k$~valleys.
Then we see that the generating function for $V(n, k)$ for each fixed $n \geq 1$ is
\begin{equation*}
  K_n(\kappa) \defn \sum_{k=1}^n \kappa^k V(n, k)
  = \frac{1}{\kappa} C_{n+1}(\sqrt{\kappa}, \dotsc, \sqrt{\kappa})
\end{equation*}
and we define $K_0(\kappa) \defn 1$.
The first few $K_n$ are
{\savebox\strutbox{$\vphantom{\kappa^2}$}
\begin{align*}
  K_1(\kappa) & = 1 \\
  K_2(\kappa) & = 2 \\
  K_3(\kappa) & = 4 + 2 \kappa \\
  K_4(\kappa) & = 8 + 16 \kappa \\
  K_5(\kappa) & = 16 + 88 \kappa + 16 \kappa^2 \\
  K_6(\kappa) & = 32 + 416 \kappa + 272 \kappa^2,
\end{align*}}%
which coincide with the results in~\cite{rieper:2000}.
In particular, the constants are clearly the powers of $2$, the coefficients of $\kappa$ give the sequence \oeis{A000431} of the OEIS~\cite{oeis} and the coefficients of $\kappa^2$ are given by the sequence \oeis{A000487}.
Likewise, the coefficients of $\kappa^3$ may be checked against the sequence \oeis{A000517}.
In fact, the same polynomials appear in André's work, in which he obtained a generating function
closely related to \eqref{eq:kitaev} below; see \cite[\S158]{andre:1895} (his final formula contains a number of sign errors, and is given in a form in which all quantities are real for $\kappa$ near $0$; there is also an offset, because his polynomial $A_n(\kappa)$ is our $K_{n-1}(\kappa)$).

\begin{proposition}
  The bivariate generating function, \ie, the generating function for arbitrary $n$, is
  \begin{equation*}
    \mathcal{K}(\nu, \kappa) = \sum_{n=0}^\infty K_n(\kappa) \frac{\nu^n}{n!} = 1 + \frac{1}{\kappa} \int_0^\nu \mathcal{C}(\mu)|_{x_1 = x_2 = \dotsb = \sqrt{\kappa}}\; \dif\mu
  \end{equation*}
  and is given in closed form by
  \begin{equation}\label{eq:kitaev}
    \mathcal{K}(\nu, \kappa) = 1 - \frac{1}{\kappa} + \frac{\sqrt{\kappa - 1}}{\kappa} \tan\big( \nu \sqrt{\kappa - 1} + \arctan(1/\sqrt{\kappa - 1}) \big).
  \end{equation}
\end{proposition}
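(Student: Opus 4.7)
The first equality is a direct rearrangement. Using $K_n(\kappa)=\kappa^{-1}C_{n+1}(\sqrt\kappa,\dotsc,\sqrt\kappa)$ for $n\geq 1$ and $K_0=1$, a shift of the summation index turns the defining series of $\mathcal{K}$ into $1+\kappa^{-1}\sum_{m\geq 0}C_{m+2}(\sqrt\kappa,\dotsc)\nu^{m+1}/(m+1)!$, which is exactly $1+\kappa^{-1}\int_0^\nu\mathcal{C}(\mu)|_{x_\bullet=\sqrt\kappa}\,\dif\mu$ by the definition of $\mathcal{C}$.

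For the closed form, the plan is to use Proposition~\ref{prop:square} to write $\mathcal{C}(\mu)|_{x_i=\sqrt\kappa}=a(\mu)^2$ with $a(\mu)=\mathcal{A}(\mu)|_{x_i=\sqrt\kappa}$, and then to determine $a$ via a PDE. The key observation will be that $\mathcal{D}_0$ and $\mathcal{D}_+$ act very simply after the specialization $x_i\mapsto z$: because any monomial $M=x_{i_1}\dotsm x_{i_k}$ occurring in $A_n$ satisfies $i_1+\dotsb+i_k=n$, a direct computation from \eqref{eq:diff} yields $(\mathcal{D}_0 M)|_{x_i=z}=kz^k$ and $(\mathcal{D}_+M)|_{x_i=z}=(n-k)z^{k+2}$. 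Summing over the monomials of $A_n$ and invoking $A_{n+1}=\mathcal{D}A_n$ converts this into the recursion $\alpha_{n+1}(z)=z(1-z^2)\alpha_n'(z)+nz^2\alpha_n(z)$ for $\alpha_n(z):=A_n|_{x_i=z}$. Packaging these relations into the generating function $a(\mu,z):=\sum_{n\geq 0}\alpha_{n+1}(z)\mu^n/n!$ yields the first-order linear PDE
\begin{equation*}
(1-z^2\mu)\,\partial_\mu a=z(1-z^2)\,\partial_z a+z^2 a,
\end{equation*}
which together with $a(0,z)=z$ (from $A_1=x_1$) determines $a$ uniquely as a formal power series in $\mu$.

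The closing step is to verify that $a(\mu,z)=\sqrt{z^2-1}\sec\theta$, with $\theta=\mu\sqrt{z^2-1}+\arctan(1/\sqrt{z^2-1})$, solves this PDE: the initial condition follows from $\sec\arctan(1/\sqrt{z^2-1})=z/\sqrt{z^2-1}$, while differentiating with $\frac{\dif}{\dif z}\arctan(1/\sqrt{z^2-1})=-1/(z\sqrt{z^2-1})$ will make both sides collapse to $(1-z^2\mu)(z^2-1)\sec\theta\tan\theta$. Setting $z=\sqrt\kappa$ and squaring then gives $a(\mu,\sqrt\kappa)^2=(\kappa-1)\sec^2(\mu\sqrt{\kappa-1}+\arctan(1/\sqrt{\kappa-1}))$, and integrating over $[0,\nu]$ yields $\sqrt{\kappa-1}\tan(\nu\sqrt{\kappa-1}+\arctan(1/\sqrt{\kappa-1}))-1$, with the $-1$ coming from the boundary term $\tan\arctan(1/\sqrt{\kappa-1})=1/\sqrt{\kappa-1}$; substituting back into the integral formula from the first paragraph produces the claimed closed form. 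The main obstacle will be the careful combinatorial bookkeeping behind the recursion for $\alpha_n$: one must isolate the degree $k$ of each monomial and exploit the weight constraint $\sum i_r=n$ to reexpress the $\mathcal{D}_+$-action as a combination of $\alpha_n$ and $z\alpha_n'$. Once this recursion (and hence the PDE) is in hand, the trigonometric verification of the candidate and the final integration are routine.
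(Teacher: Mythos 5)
Your argument is correct, but it reaches Kitaev's formula by a genuinely different route than the paper. The paper stays at the level of the $C_n$: it uses the scaling relation $\lambda^{n+1}C_{n+1}(x_1,\dotsc,x_n)=C_{n+1}(\lambda x_1,\dotsc,\lambda^n x_n)$ together with the recurrence \eqref{eq:C_recurrence} to obtain a recursion for $K_n(\kappa)$ directly, packages that into an \emph{inhomogeneous} first-order PDE for $\mathcal{K}(\nu,\kappa)$, and then solves it constructively -- particular integral $1-1/\kappa$, integrating factor $\kappa/\sqrt{\kappa-1}$, method of characteristics -- so that the $\tan$ emerges from the initial condition rather than being guessed. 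You instead descend to $\mathcal{A}$ via \cref{prop:square}, exploit the weight constraint $i_1+\dotsb+i_k=n$ on the monomials of $A_n$ to compute the action of $\mathcal{D}_0$ and $\mathcal{D}_+$ after the specialization $x_i\mapsto z$ (both claims, $k z^k$ and $(n-k)z^{k+2}$, check out, including for monomials with repeated factors), derive the linear PDE $(1-z^2\mu)\partial_\mu a=z(1-z^2)\partial_z a+z^2a$ with $a(0,z)=z$, and verify the secant candidate; uniqueness of the formal power series solution is immediate from the recursion $\alpha_{n+1}=z(1-z^2)\alpha_n'+nz^2\alpha_n$, so verification suffices. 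Your computations all hold: the two terms without $\tan\theta$ cancel in $z(1-z^2)\partial_z a$, both sides reduce to $(1-\mu z^2)(z^2-1)\sec\theta\tan\theta$, and the final integration with the boundary term $\tan\arctan(1/\sqrt{\kappa-1})=1/\sqrt{\kappa-1}$ produces the $-1/\kappa$. What your route buys is the explicit closed form $\mathcal{A}(\mu)|_{x_\bullet=\sqrt{\kappa}}=\sqrt{\kappa-1}\sec\bigl(\mu\sqrt{\kappa-1}+\arctan(1/\sqrt{\kappa-1})\bigr)$, a genuine strengthening which interpolates Andr\'e's secant result (recovered at $\kappa=1$ in the appropriate limit) and from which $\mathcal{C}$ and $\mathcal{L}$ follow by squaring and by \eqref{eq:linear_gen}; the cost is that the $\sec$ ansatz must be supplied rather than derived, whereas the paper's characteristics argument finds the answer without a guess.
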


This result was found by Kitaev~\cite{kitaev:2007} and in the remainder of this section we will show how it may be derived from the recurrence relation~\eqref{eq:C_recurrence} of $C_n$.

To this end, we first note that $C_{n+1}$ satisfies the useful scaling relation
\begin{equation*}
  \lambda^{n+1} C_{n+1}(x_1, x_2, \dotsc, x_n) = C_{n+1}(\lambda x_1, \lambda^2 x_2, \dotsc, \lambda^n x_n).
\end{equation*}
Setting $x_i = x / \lambda = \sqrt{\kappa}$ for all $i$, this implies
\begin{equation*}
  \lambda^{n+1} C_{n+1}(\sqrt{\kappa}, \dotsc, \sqrt{\kappa}) = C_{n+1}(x, \lambda x, \dotsc, \lambda^{n-1} x)
\end{equation*}
and we find, by inserting the recurrence relations~\eqref{eq:C_recurrence} and applying the chain rule, that with this choice of variables
\begin{equation*}
  \frac{1}{x^2} C_{n+1}(x, \lambda x, \dotsc, \lambda^{n-1} x) = \lambda x \pd{C_n}{x} + x^2 \pd{C_n}{\lambda} + 2 \lambda C_n.
\end{equation*}
Hence, in turn, $K_n(\kappa) = \kappa^{-1} C_{n+1}(\sqrt{\kappa}, \dotsc, \sqrt{\kappa})$ satisfies the recurrence relation
\begin{equation}\label{eq:Kn}
  K_n(\kappa) = 2 \kappa (1 - \kappa) K_{n-1}'(\kappa) + \big( 2 + (n - 2) \kappa \big) K_{n-1}(\kappa)
\end{equation}
for $n \geq 2$.
For the bivariate generating function $\mathcal{K}$ this, together with $K_0 = K_1 = 1$, implies the p.d.e.
\begin{equation*}
  (1 - \nu \kappa) \pd{\mathcal{K}}{\nu} + 2 \kappa (\kappa - 1) \pd{\mathcal{K}}{\kappa} + (\kappa - 2) \mathcal{K} = \kappa - 1,
\end{equation*}
which is to be solved subject to the initial condition $\mathcal{K}(0, \kappa) = 1$.

The above equation may be solved as follows: first, we note that there is a particular integral $1 - 1/\kappa$, so it remains to solve the homogeneous equation.
In turn, using an integrating factor, the latter may be rewritten as
\begin{equation}\label{eq:homog}
  (1 - \nu \kappa) \pd{}{\nu} \frac{\kappa \mathcal{K}}{\sqrt{\kappa - 1}} + 2 \kappa (\kappa - 1) \pd{}{\kappa} \frac{\kappa \mathcal{K}}{\sqrt{\kappa - 1}} = 0,
\end{equation}
for which the characteristics obey
\begin{equation*}
  \od{\nu}{\kappa} = \frac{1 - \nu \kappa}{2 \kappa (\kappa - 1)}.
\end{equation*}
Solving this equation, we find that
\begin{equation*}
  \nu \sqrt{\kappa - 1} + \arctan \frac{1}{\sqrt{\kappa - 1}} = \text{const}
\end{equation*}
along characteristics; as~\eqref{eq:homog} asserts that $\kappa \mathcal{K}/\sqrt{\kappa - 1}$ is constant on characteristics, this gives
\begin{equation*}
  \mathcal{K}(\nu, \kappa) = 1 - \frac{1}{\kappa} + \frac{\sqrt{\kappa - 1}}{\kappa} f\big( \nu \sqrt{\kappa - 1} + \arctan(1/\sqrt{\kappa - 1}) \big)
\end{equation*}
for some function $f$.
Imposing the condition $\mathcal{K}(0, \kappa) = 1$, it is plain that $f = \tan$, and we recover Kitaev's generating function~\eqref{eq:kitaev}.

To close this section, we note that \eqref{eq:Kn} has the consequence that $K_n(1) = n K_{n-1}(1)$ for all $n \geq 2$ and hence that $C_{n+1}(1,\dotsc,1) = K_n(1) = n!$ for such $n$, and indeed all $n\geq 1$, because $C_2(1, 1) = K_1(1) = 1$. The generating function obeys
\begin{equation*}
  \mathcal{C}(\lambda)|_{x_\bullet=1}
  = \sum_{n=0}^\infty (n+1)! \frac{\lambda^n}{n!}
  = (1-\lambda)^{-2}
\end{equation*}
for all non-negative $\lambda < 1$ from which it also follows that
\begin{equation}\label{eq:geom_A}
  \mathcal{A}(\lambda)|_{x_\bullet=1} = (1-\lambda)^{-1}
\end{equation}
(as $A_1(1) = 1$, we must take the \emph{positive} square root) and hence $A_{n}|_{x_\bullet=1} = (n-1)!$ for all $n \geq 1$.
This gives a consistency check on our results: the coefficients in the expression for $A_n$ sum to $(n-1)!$, the cardinality of $\aperms^\pm_{n+1}$, while those in~$C_n$ sum to the cardinality of~$\cperms_n$.
Furthermore, inserting \eqref{eq:geom_A} into the generating function~$\mathcal{L}(\lambda)$ in \eqref{eq:linear_gen}, we find
\begin{equation*}
  \mathcal{L}(\lambda)|_{x_\bullet=1}
  = \sum_{n=0}^\infty L_n(1,\dotsc,1) \frac{\lambda^n}{n!}
  = (1-\lambda)^{-2},
\end{equation*}
and thus $L_{n+1}(1,\dotsc,1) = n!$, which is the cardinality of $\perms_n$.

\IdxRanEnd{valley}

\section{Applications}
\label{sec:applications}

The original motivation for this work arose in quantum field theory, in computations related to the probability distribution of measurement outcomes for quantities such as averaged energy densities~\cite{fewster:2012b}.
One actually computes the cumulants $\kappa_n$ ($n \in \NN$) of the distribution: $\kappa_1 = 0$, while for each $n \geq 2$, $\kappa_n$ is given as a sum indexed by circular permutations $\sigma$ of $[n]$ such that $\sigma(1) = 1$ and $\sigma(2) < \sigma(n)$, in which each permutation contributes a term that is a multiplicative function of its run structure:
\begin{equation*}
  \kappa_n = \sum_{\sigma} \Phi(\sigma)
\end{equation*}
where $\Phi(\sigma)$ is a product over the runs of $\sigma$, with each run of length $r$ contributing a factor $y_r$.
Owing to the restriction $\sigma(2) < \sigma(n)$, precisely half of the circular permutations are admitted, and so $\kappa_n=\frac{1}{2}C_n(y_1,y_2,\ldots,y_n)$.
Thus the cumulant generating function is
\begin{align*}
  W(\lambda)
  &\defn \sum_{n=2}^\infty \kappa_n \frac{\lambda^n}{n!} = \frac{1}{2} \int_0^\lambda \dif\mu\, (\lambda - \mu) \mathcal{C}(\mu)|_{x_\bullet=y_\bullet} \\
  &= \frac{1}{2} \int_0^\lambda \dif\mu\, (\lambda - \mu) \mathcal{A}(\mu)|_{x_\bullet=y_\bullet}^2
\end{align*}
and the moment generating function is $\exp W(\lambda)$ in the usual way.
This expression makes sense a formal power series, but also as a convergent series within an appropriate radius of convergence.
The values of $y_n$ depend on the physical quantity involved and the way it is averaged.
In one case of interest
\begin{align*}
  y_n
  &= 8^n \int_{(\RR^+)^{\times n}} \dif k_1\, \dif k_2\, \cdots\, \dif k_n\, k_1 k_2 \,\cdots\, k_n  \exp \left[ -k_1-\left(\sum_{i=1}^{n-1} |k_{i+1}-k_i|\right)-k_{n} \right] \\
  &= 2^n \sum_{r_{n-1}=0}^{2} \sum_{r_{n-2}=0}^{2+r_{n-1}} \cdots \sum_{r_1=0}^{2+r_2} \prod_{k=1}^{n-1} (1+r_k) \\
  &= 2, 24, 568, 20256, 966592, \dotsc \qquad (n \geq 1)
\end{align*}
(the sums of products must be interpreted as an overall factor of unity in the case $n=1$).
Numerical investigation leads to a remarkable identity
\begin{equation*}
  \mathcal{A}(\lambda)|_{x_\bullet=y_\bullet} = \frac{2}{1-12\lambda} \qquad \textrm{(conjectured)}
\end{equation*}
with exact agreement for all terms so far computed (checked up to $n = 65$).
We do not have a proof for this statement, but the conjecture seems fairly secure.
For example, we have shown above that $A_5= x_5 + 7 x_3 x_1^2 + 11 x_2^2 x_1 + 5 x_1^5$; substituting for $x_n$ the values of $y_n$ obtained above, we find $A_5= 995328$ which coincides with the fourth order coefficient in the expansion
\begin{equation*}
  \frac{2}{1 - 12 \lambda}=  2 + 24 \lambda + 576 \frac{\lambda^2}{2!} + 20736 \frac{\lambda^3}{3!} + 995328 \frac{\lambda^4}{4!} + \bigO(\lambda^5).
\end{equation*}
In~\cite{fewster:2012b}, this conjecture was used to deduce
\begin{equation*}
  \exp\big(W(\lambda)\big) = \e^{-\lambda/6}(1-12\lambda)^{-1/72} \qquad \textrm{(conjectured)},
\end{equation*}
which is the moment generating function of a shifted Gamma distribution.
The other generating functions of interest, with these values for the $x_k$ are
\begin{equation*}
  \mathcal{C}(\lambda)|_{x_\bullet=y_\bullet} = \frac{4}{(1-12\lambda)^2},
  \qquad
  \mathcal{L}(\lambda)|_{x_\bullet=y_\bullet} = (1-12\lambda)^{-1/3}
  \qquad \textrm{(conjectured)}.
\end{equation*}
For example, we have $C_5 =  2 x_4 x_1 + 6 x_3 x_2 + 16 x_1^3 x_2 = 165888$ and $L_5 = 122 x_1^5 + 300 x_1^3 x_2 + 142 x_1 x_2^2 + 94 x_1^2 x_3 + 40 x_2 x_3 + 20 x_1 x_4 + 2 x_5 =3727360$, to be compared with the terms of order $\lambda^3$ and $\lambda^5$, respectively, in the expansions
\begin{align*}
  \frac{4}{(1 - 12 \lambda)^2} &= 4 + 96 \lambda + 3456 \frac{\lambda^2}{2!} + 165888 \frac{\lambda^3}{3!} + 995328 \frac{\lambda^4}{4!} + \bigO(\lambda^5), \\
  (1 - 12 \lambda)^{-1/3} &= 1 + 4 \lambda + 64 \frac{\lambda^2}{2!} + 1792 \frac{\lambda^3}{3!} + 71680 \frac{\lambda^4}{4!} + 3727360 \frac{\lambda^5}{5!} + \bigO(\lambda^6).
\end{align*}

A natural question is whether there are other sequences that can be substituted for the $x_k$ to produce generating functions with simple closed forms.
To close, we give three further examples, with the corresponding generating functions computed.
The first has already been encountered in \cref{sec:valleys} and corresponds to the case $x_k = 1$ for all $k \in \NN$.

The second utilizes the alternating Catalan numbers: setting
\begin{equation*}
  x_{2k+1} = \frac{(-1)^k}{k+1} \binom{2k}{k}, \quad(k \geq 0),
  \qquad
  x_{2k} = 0, \quad (k \geq 1)
\end{equation*}
and thus $A_{2k} = 0$, we obtain, again experimentally,
\begin{equation*}
  \mathcal{C}(\lambda) = \mathcal{A}(\lambda) = 1,
  \quad
  \mathcal{L}(\lambda) = \e^{2\lambda}
  \qquad \textrm{(conjectured)}
\end{equation*}
with exact agreement checked up to permutations of length $n = 65$.
For example, one sees easily that with $x_1 = 1$, $x_3 = -1$, $x_5 = 2$ and $x_2 = x_4 = 0$, the expressions $A_k$ and $C_k$ given in \cref{sub:run_atomic,sub:run_circular} vanish for $2 \leq k \leq 6$,
and have $A_1 = C_1 = 1$, likewise, $L_k = 2^k$ for $1 \leq k \leq 6$.

Third, André's classical result on alternating permutations (\cf last and penultimate paragraph of \cref{sub:run_atomic,sub:run_circular} respectively) gives the following: setting
\begin{equation*}
  x_1 = 1
  \quad \text{and} \quad
  x_{k} = 0, \quad (k \geq 2)
\end{equation*}
we have, using \eqref{eq:square} and \eqref{eq:linear_gen},
\begin{equation*}
  \mathcal{A}(\lambda) = \sec\lambda,
  \quad
  \mathcal{C}(\lambda)=\sec^2\lambda,
  \quad
  \mathcal{L}(\lambda) = (\sec\lambda + \tan\lambda)^2.
\end{equation*}
It seems highly likely to us that many other examples can be extracted from the structures we have described.

Moreover, we remark that it is possible to implement a merge-type sorting algorithm, called \emph{natural merge sort} \cite[Chap.~5.2.4]{knuth:1998}, based upon splitting permutations of an ordered set $S$ into its runs, which are ordered (alternatingly in ascending and descending order) sequences $S_i \subset S$.
Repeatedly merging these subsequences, one ultimately obtains an ordered sequence.
For example, first, we split the permutation \perm{542368719} into \perm{542}, \perm{368}, \perm{71} and \perm{9}.
Then, we reverse every second sequence (depending on whether the first or the second sequence is in ascending order): \perm{542} $\mapsto$ \perm{245} and \perm{71} $\mapsto$ \perm{17}.
Depending on the implementation of the merging in the following step, this `reversal' step can be avoided.
Last, we merge similarly to the standard merge sort: \perm{245} $\vee$ \perm{368} $\mapsto$ \perm{234568}, \perm{17} $\vee$ \perm{9} $\mapsto$ \perm{179} and finally \perm{234568} $\vee$ \perm{179} $\mapsto$ \perm{123456789}.
Natural merge sort is a fast sorting algorithm for data with preexisting order.
Using the methods developed above to enumerate permutations by their run structure, it is in principle possible to give average (instead of best- and worst-case) complexity estimates for such an algorithm.

\IdxRanEnd{combinat}

\renewcommand{\afterpartskip}{
\vspace*{4\onelineskip}
\begin{quotation}
  \textit{Is the purpose of theoretical physics to be no more than a cataloging of all the things that can happen when particles interact with each other and separate? Or is it to be an understanding at a deeper level in which there are things that are not directly observable (as the underlying quantized fields are) but in terms of which we shall have a more fundamental understanding?}
  \sourceatright{--- Julian S. Schwinger, ``Quantum Mechanics'' (2001), p. 24 f.}
\end{quotation}
\begin{quotation}
  \textit{First, in order to achieve the greatest possible generality we continue our total boycott of the canonical formalism [...].}
  \sourceatright{--- Bryce S. DeWitt, J. Math. Phys. 3 (1962), p. 1073.}
\end{quotation}
}
\part{Quantum field theory}

\chapter{Locally covariant quantum field theory}
\label{cha:locally_covariant}

\section*{Summary}

In this chapter we will discuss the framework of locally covariant quantum field theory.
In its present form it was introduced in~\cite{brunetti:2003} but many of its central ideas can already be found in earlier publications.
It may be understood as a generalization of the Haag--Kastler axioms~\cite{haag:1964,haag:1996} to curved spacetimes but it also differs in some subtle points because the Haag--Kastler axioms are `more global' (see, for example, \cite{benini:2014,benini:2014a} on the problem of gauge theories in locally covariant QFT).
A generalization of the Haag--Kastler was performed by Dimock whose work~\cite{dimock:1980,dimock:1982,dimock:1992} can be understood as the foundation of modern algebraic QFT on curved spacetime.
Building on the work of Dimock, the paradigm of locally covariant quantum field theory should be seen a culmination of work done by Brunetti, Fredenhagen, Hollands, Kay, Verch, Wald and others on QFT on curved spacetimes, in particular renormalization, \cite{brunetti:1996,brunetti:2000,hollands:2001,hollands:2002,verch:2001,kay:1997} around the turn of the millennium after the discovery of the microlocal spectrum condition~\cite{radzikowski:1996,radzikowski:1996a,brunetti:1996}.

After discussing some general considerations leading to algebraic and locally covariant quantum field theory in the first section (\cref{sec:qft_philosphy}), we will introduce the  the general framework of locally covariant quantum field theory in the second section (\cref{sec:lcqft_framework}).
More details on the locally covariant framework may be found in the original publication \cite{brunetti:2003} or also \cite{sanders:2008,fewster:2012a} among many others.
This is followed by an abstract study of the Borchers--Uhlmann algebra, the commutation relations and the field equation, which will lead to the field algebra, and the Weyl algebra in the third section (\cref{sec:klein_gordon}), we will discuss two free bosonic fields in the locally covariant framework: the scalar field and the Proca field.

\section{General considerations}
\label{sec:qft_philosphy}

Quantum field theory is a very complex subject which cannot easily be defined.
This is partly due to the fact that quantum field theory is not so much a physical theory but rather a language to formulate theories and models.
However, a more important reason is that quantum field theory, even many decades after its inception, has no clear interpretation, \eg, it is often not clear what the physical objects are.
Nevertheless, it can be considered one of the most successful scientific discoveries ever conceived and some predictions made by quantum field theory have been tested with astonishing precision.
For example, the anomalous magnetic moment of the electron has been measured in agreement with theoretical predictions in the parts in a trillion range, see \cite{hanneke:2008,aoyama:2012}.

The relation of QFT to other theories is schematically depicted in the diagram \cref{fig:qft}.
In particular, QFT may be considered as a Lorentz invariant quantum mechanics in the case of infinitely many degrees of freedom.
One can also argue that a consistent reconciliation of quantum mechanics with special relativity (in particular locality) leads invariably to QFT, \ie, fields and an infinite number of degrees of freedom are necessary, see \cite{malament:1996} and also \cite{busch:1999}.

\begin{figure}
  \centering
  \includegraphics{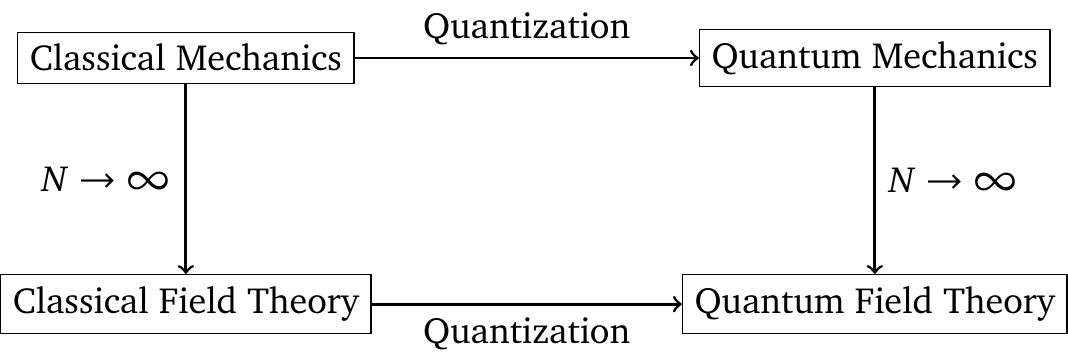}
  \caption{\label{fig:qft} The heuristic relation of quantum field theory with classical mechanics, classical field theory and quantum mechanics.}
\end{figure}

\subsection{Lagrangian QFT}
\IdxMain{lagr-qft}

Quantum field theory is usually formulated in the relatively heuristic approach of the Lagrangian formalism, where, starting from a classical Lagrangian, one imposes the canonical commutation relations between the quantized position and momentum variables.
In analogy to the quantum mechanical harmonic oscillator these yield creation and annihilation `operators' on an abstract representing Hilbert space.
Combining the creation and annihilation operator, one furthermore defines the quantum field.
A specific Hilbert space is then selected by requiring that the annihilation operator annihilates a particular vector in the Hilbert space, the vacuum, so that one obtains the Fock space representation.

Apart from not being mathematically rigorous, the Lagrangian formalism has several conceptual drawbacks.
First, it neglects \latin{a priori} the inequivalent irreducible representations of the canonical commutation relations (as a consequence of the failure of the Stone--von Neumann theorem in infinite dimensions) and instead selects the convenient Fock space representation.
However, it is not obvious what is physical and whether the inequivalent representations are simply mathematical artefacts or physically relevant.
Indeed, the existence of superselection sectors shows that the presence of inequivalent representations is certainly \emph{not} irrelevant.
A closely related issue is described by Haag's theorem \cite[Chap.~II.1.1]{haag:1996} which implies that the standard Fock space representation of the free theory is inequivalent to the that of the interacting theory.

Second, the fundamental entities in the Lagrangian formalism are `operators' at a point and thus neither mathematically not physically meaningful.
Physically, because it would require an infinite amount of energy to localize a field a point.
Mathematically, because a field at a point is not an operator on a Hilbert space but only an operator-valued distribution.
Instead one should consider field which have no sharp localization but are smeared out over a region of spacetime.
That is, the fundamental entities are quantum fields smeared with compactly supported test functions.

Third, the Lagrangian formalism contains global operators, like the particle number operator, which are not operationally meaningful because they cannot be reproduced by measurements in a bounded region of spacetime.
In fact, due to the Reeh--Schlieder theorem, also local number operators cannot exist \cite[Thm.~5.3.2]{haag:1996}.
In this light the common interpretation of QFT in terms of localized particles is very problematic.
On curved spacetimes or for accelerated observers the situation is even more problematic: as shown by the Unruh effect, the particle interpretation appears to be context dependent.

\subsection{Algebraic QFT and locality}
\IdxMain{alg-qft}

In the algebraic approach to quantum field theory, developed by Haag and collaborators, the problems indicated above are addressed in a conceptually simple way:
Rather than taking as observables operators on a Hilbert space, in the algebraic approach one discards the concrete representation of the operators and considers only the algebraic relations satisfied between the operators.
Indeed, the relations between the observables already contain a large part of the physical content of the theory.

The central pillar of algebraic quantum field theory is \IdxMain{princ-locality}\emph{locality}, better described by the German word ``\emph{Nahwirkungsprinzip}''.
Locality means that causally unrelated events do not influence each other and it is implemented in the following way:
To every spacetime region~$U$ we can associate a local algebra of observables $\mathcal{A}(U)$ which can be measured within~$U$.
Consequently, we demand that map $U \mapsto \mathcal{A}(U)$ forms an inductive system, \ie, it satisfies the \IdxMain{isotony-net}\emph{isotony} condition
\begin{equation*}
  U \subset V \implies \mathcal{A}(U) \subset \mathcal{A}(V)
\end{equation*}
or at least that there is an injective homomorphism $\mathcal{A}(U) \to \mathcal{A}(V)$; the correspondence $U \mapsto \mathcal{A}(U)$ for all $U$ is called the \IdxMain{net-algebras}\emph{net of local algebras}.
Further, we require that the local algebras of causally separated, causally convex regions (anti-)commute\IdxMain{princ-causal}
\begin{equation*}
  U \mathbin{\raisebox{1pt}{$\bigtimes$}} V \implies [\mathcal{A}(U), \mathcal{A}(V)] = \{0\}.
\end{equation*}
In the next section it will become clear how these conditions can be consistently imposed on different spacetimes.

Following the choice of words of Haag \cite{haag:1996}, in any concrete case, the \emph{smeared} quantum fields may be seen as a way to `coordinatize' the local algebras.
That is, they provide a map from test functions supported a spacetime region, to the local algebra supported in that region.
From this point of view it seems clear that different quantum fields can lead to equivalent algebras.
The notion of quantum fields might become clearer within the categorical framework to be introduced the next section.

\section{Framework}
\label{sec:lcqft_framework}

\subsection{Background structure}

The physical Universe appears to be well-modelled by a connected, oriented and time-oriented, four-dimensional Lorentzian manifold $(M, g, \pm, u)$, \ie, a spacetime as defined in \cref{sec:causality}.
Moreover, for every possible observer to carry out experiments in a finite region of spacetime one has to require that $J^+(x) \cap J^-(y)$ is compact for all $x, y \in M$.
If we further require that no closed causal curves exist so that time travel is impossible, we have collected all necessary and sufficient conditions for a globally hyperbolic spacetime.
Accordingly we consider as physical spacetimes all globally hyperbolic spacetimes.
Of course we restrict ourselves to globally hyperbolic spacetimes also for technical reasons.
In particular, only on globally hyperbolic spacetimes do we have a good understanding of the Cauchy problem for the wave equation.

To implement simultaneously \IdxMain{princ-cov}\emph{covariance} and the principle of \IdxMain{princ-locality}\emph{locality}, \ie, an observer can conduct experiments in a globally hyperbolic subregion of the Universe and may remain ignorant about processes in the complement of that region, we consider as morphisms between globally hyperbolic spacetimes $M, N$ the isometric embeddings $\psi : M \to N$ that are orientation and time-orientation preserving and whose image~$\psi(M)$ is a causally convex region of~$N$.
We call these morphisms \IdxMain{hyp-emb}\emph{hyperbolic embeddings}; an example and a counter-example are shown in \cref{fig:hyperbolic_embedding}.
If the image of the hyperbolic embedding~$\psi$ contains a neighbourhood of a Cauchy surface in~$N$, we say that it is \IdxMain{cauchy-emb}\emph{Cauchy}.

\begin{figure}
  \centering
  \includegraphics[page=1]{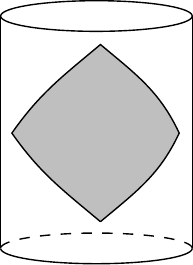}
  \includegraphics[page=2]{fig_einstein_cylinder_2.pdf}
  \includegraphics[page=3]{fig_einstein_cylinder_2.pdf}
  \caption{Example of an hyperbolic embedding and a non-hyperbolic embedding; see also \cref{fig:causally_convex}.} \label{fig:hyperbolic_embedding}
\end{figure}

The set of globally hyperbolic spacetimes as objects with the hyperbolic embeddings as morphisms forms a category denoted~\Loc.\IdxMain{loc-cat}
This category was introduced in~\cite{brunetti:2003} and it is arguably the most fundamental but, as already mentioned in~\cite{brunetti:2003}, not the only possible choice to describe local theories.
In fact, it has been altered in the literature in various ways
\begin{itemize}
  \item to accommodate more background structure by adding to the triple of manifold, metric and time-orientation, which is each object, additional elements like spin-structure \cite{sanders:2010a}, affine, principle and vector bundles \cite{benini:2014,benini:2014a,benini:2014b} or external currents \cite{fewster:2014};
  \item to account for additional symmetry by allowing for more general morphisms like conformal isometries \cite{pinamonti:2009,siemssen:2011,dappiaggi:2013};
  \item to allow for the formulation of theories that are sensitive to the topology of the manifold, \eg, by restricting the set of objects to manifolds that have certain de Rahm cohomology groups \cite{siemssen:2011,dappiaggi:2013}.
\end{itemize}
More recently, it was suggested by Fewster~\cite{fewster:preperation} to consider as objects triples which instead of a metric have a global (co)frame; the morphisms are changed accordingly.
The resulting category is larger and encompasses the original setting via a forgetful functor but the additional structure allows for an interesting discussion of the spin-statistics theorem.

\subsection{Observables}

A theory in this categorical framework is a covariant functor from~\Loc\ into a category whose objects describe physical systems and whose morphisms encode embeddings of physical systems.
In quantum field theory (on curved spacetimes) in the algebraic formulation, physical systems are modelled by ${}^*$-algebras or $C^*$-algebras.
Denote by~\Alg\IdxMain{alg-cat} the category whose objects are unital topological ${}^*$-algebras with morphisms given by the unit-preserving ${}^*$-monomorphisms and call a covariant functor $\func{A} : \Loc \to \Alg$ realizing such an algebra on each background in~\Loc\ a \IdxMain{loc-cov}\emph{locally covariant theory}.
The algebra~$\func{A}(M)$ thus associated to each spacetime $M \in \Loc$ is often called the \IdxMain{alg-obs}\emph{algebra of observables} although in many cases it may contain elements that are not actually physically accessible.

Given two hyperbolic embeddings $\psi_i : M_i \to N$, $i = 1,2$, such that the images~$\psi_i(M_i)$ are causally disjoint in~$N$ (\cf\ \cref{fig:causally_disjoint}), we say that $\func{A}$ is \IdxMain{princ-causal}\emph{causal} if
\begin{equation*}
  [\func{A}(\psi_1) \func{A}(M_1), \func{A}(\psi_2) \func{A}(M_2)] = \{0\}.
\end{equation*}
Causality of~$\func{A}$ is closely related to its tensorial structure as discussed in~\cite{brunetti:2014}.

\begin{figure}
  \centering
  \subbottom[]{
    \includegraphics[page=1]{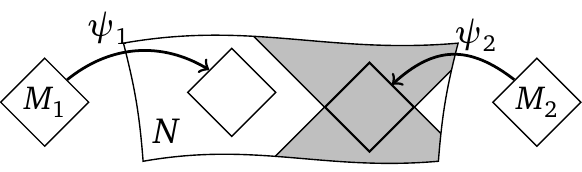}
    \label{fig:causally_disjoint}
  }
  \hspace{1cm}
  \subbottom[]{
    \includegraphics[page=2]{fig_cauchy_embedding.pdf}
    \label{fig:cauchy_embedding}
  }
  \caption{An illustration of (a) two causally disjoint embeddings and (b) a Cauchy embedding.}
\end{figure}

The theory $\func{A}$ obeys the \IdxMain{timeslice}\emph{timeslice axiom} if
\begin{equation*}
  \func{A}(\psi) \func{A}(M) = \func{A}(N)
\end{equation*}
for all Cauchy embeddings $\psi : M \to N$ (\cf\ \cref{fig:cauchy_embedding}).
The timeslice axiom is a prerequisite for the \IdxMain{rel-cauchy-evol}\emph{relative Cauchy evolution}, which describes the response of the physical system to a perturbation of the background structure.

More concretely, let $(M, g, \pm, u)$ and $(M[h], g+h, \pm, u_h)$ be globally hyperbolic spacetimes such that $h$ is a compactly supported symmetric tensor field and $u_h$ is the unique time-orientation that agrees with $u$ outside the support of~$h$.\footnote{Note that a possible `perturbation' is always $M[0]$ but there exists also a neighbourhood $U$ of $0$ in the set of compactly supported symmetric covariant two-tensor fields (with the test function topology) such that $(M[h], g+h)$ is globally hyperbolic for all $h \in U$, see \cite[Thm.~7.2]{beem:1996}.}
Consequently there exist neighbourhoods around two Cauchy surfaces in~$M[h]$, one in the past of $h$ and the other in the future.
We can then find Cauchy morphisms~$\iota^\pm$ and~$\iota[h]^\pm$ from spacetimes $M^\pm \in \obj{Loc}$ into~$M$ and~$M[h]$ as shown in \cref{fig:rce}.
Together these Cauchy morphisms make up the (${}^*$-algebra) homomorphism that is the relative Cauchy evolution map
\begin{equation*}
  \rce[h] \defn \func{A}(\iota^-) \circ \func{A}(\iota[h]^-)^{-1} \circ \func{A}(\iota[h]^+) \circ \func{A}(\iota^+)^{-1}.
\end{equation*}

\begin{figure}
  \centering
  \includegraphics{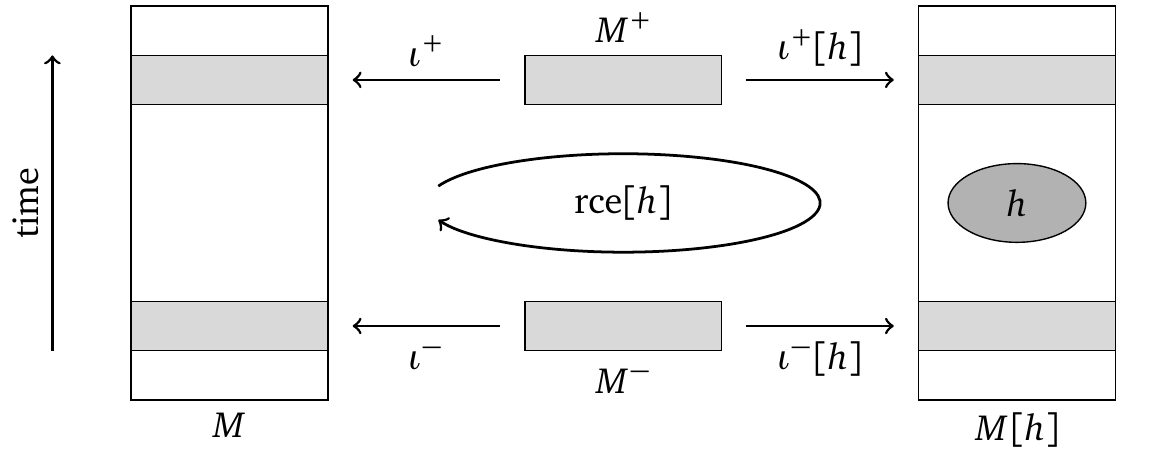}
  \caption{Illustration of the morphisms in the relative Cauchy evolution with the unperturbed background~$M$ on the left and the perturbed background~$M[h]$ on the right.}
  \label{fig:rce}
\end{figure}

It was shown in~\cite{brunetti:2003} that (in an appropriate topology, see \cite{fewster:2012e,fewster:2014} for details),
\begin{equation*}
  - 2\im\, \od{}{\varepsilon}\! \rce[\varepsilon h] A \bigg|_{\varepsilon = 0} = [T(h), A]
\end{equation*}
for any $A \in \func{A}(M)$ and where $T(h) \in \func{A}(M)$ is symmetric and conserved.
Since $T$ is both symmetric and conserved it may be interpreted as a stress-energy tensor \cite{brunetti:2003,fewster:2012,fewster:2012e} and, in fact, in concrete models this interpretation is valid \cite{brunetti:2003,sanders:2010a,benini:2011,fewster:2014b}.

\section{Generalized Klein--Gordon fields}
\label{sec:klein_gordon}

The Klein--Gordon field is usually the first field to be discussed when studying quantum field theory.
We will be no different although we will perform some straightforward generalizations.
Namely, we will quantize fields on natural vector bundles that satisfy a normally hyperbolic equation of motion.
The results below are a generalization of those obtained in~\cite{brunetti:2003} for the scalar Klein--Gordon field.
In principle, further generalizations of the definitions and statements presented below are possible.
For example, one can replace compactly supported $p$-forms by compactly supported sections of `natural' vector bundles, \ie, vector bundles that, like the (co)tangent bundle, are functorially constructed from the geometric structure of the manifold.
However, all these generalizations yield little insight and obfuscate some constructions.
Moreover, the requirements imposed by the usual locally covariant framework make it difficult to find examples that do not just use a standard tensor bundle tensorized with a vector bundle that is independent of the geometry of the spacetime.



\subsection{Borcher--Uhlmann algebra}
\label{sub:kg_borchers_uhlmann}

For every globally hyperbolic spacetime~$M$, let $\func{D}$ be a covariant functor from \Loc\ into the category of closed nuclear locally convex $\CC$-vector spaces such that $\func{D}(M) \subset \Omega_0^p(M, \CC)$ (for fixed $p$ and with the subspace topology) and $\func{D}(\psi) = \pf{\psi}$.

On each spacetime~$M$ we can define a straightforward generalization of the \IdxMain{borchers-uhlmann}\emph{Borchers--Uhlmann algebra}~\cite{borchers:1962,uhlmann:1962,borchers:1972} as the unital topological ${}^*$-algebra
\begin{equation*}
  \func{U}(M) \defn \bigoplus_{n \in \NN_0} \func{D}(M)^{\widehat{\otimes} n}
\end{equation*}
with, \ie, the set of tuples $(f_n)_{n \in \NN_0}$ with $f_n \in \func{D}(M)^{\widehat{\otimes} n}$ such that only a finite number of~$f_n$ is nonzero, together with
\begin{enumerate}[(a)]
  \item
    addition and scalar multiplication is component-wise,
  \item
    multiplication given by the canonical isomorphic embedding
    \begin{equation*}
      \func{D}(M)^{\widehat{\otimes} m} \otimes \func{D}(M)^{\widehat{\otimes} n} \longrightarrow \func{D}(M)^{\widehat{\otimes} (m+n)}
    \end{equation*}
    and extends (anti-)linearly to all of~$\func{U}(M)$ via the canonical embeddings of these spaces into~$\func{U}(M)$,
  \item
    a ${}^*$-operation that acts on~$(f_n) \in \func{U}(M)$ as $(f_n)^* = (f_n^*)$ and
    \begin{equation*}
      f_n^*(x_1, \dotsc, x_n) = \conj{f_n}(x_n, \dotsc, x_1),
    \end{equation*}
  \item
    a topology given by the direct sum topology of the test function topology on each $\func{D}(M)^{\widehat{\otimes} n}$.\footnote{With this topology the algebra is complete and nuclear, and the algebra product is separately continuous.}
\end{enumerate}

Assigning to each globally hyperbolic spacetime the Borchers--Uhlmann algebra~$\func{U}(M)$, we obtain a covariant functor $\func{U} : \Loc \to \Alg$ that maps each object to the algebra and each morphism to the ${}^*$-algebra morphism generated by the natural pushforward~$\pf{\psi}$, \ie,
\begin{equation*}
  \func{U}(\psi) (f_n) = (\pf{\psi} f_n)
  \quad \text{with} \quad
  \pf{\psi} f_n = f_n \circ (\psi^{-1})^{\otimes n}
\end{equation*}
for all $(f_n) \in \func{U}(M)$.

Considering $\func{D}(M)$ as a topological ${}^*$-algebra where the involution is complex conjugation, we can consider it as a functor from~\Loc\ to~\Alg.
The natural transformation $\Phi : \func{D} \naturalto \func{U}$, which for each $M \in \obj{Loc}$ is the canonical map
\begin{equation*}
  \Phi_M: \func{D}(M) \to \func{U}(M),
  \quad
  f \mapsto (0, f, 0, \dotsc),
\end{equation*}
is called the \Idx{quantum-field}\emph{(locally covariant) quantum field} associated to~$\func{U}$.
Observe that every element in~$\func{U}(M)$ is a limit of sums and products of~$\Phi_M$ applied to test functions because $\bigoplus_n \func{D}(M)^{\otimes n}$ is dense in~$\func{U}(M)$.

\subsection{Field equation and commutator}
\label{sub:kg_eq_comm}

The Borchers--Uhlmann algebra carries no dynamical information and may therefore also be called the \IdxMain{off-field-alg}\emph{off-shell field algebra}.
In particular, the theory~$\func{U}$ is neither causal nor does it satisfy the timeslice axiom.
To obtain a causal theory that satisfies the timeslice axiom, we need to implement a \IdxMain{field-eq}\emph{field equation} (an \emph{equation of motion}) that induces a Cauchy evolution of the algebra elements and a commutator that `separates' causally disjoint algebra elements.

Therefore, we assign now to every globally hyperbolic spacetime~$M$ a natural, formally self-adjoint, Green-hyperbolic operator $\mathrm{P}_M : \Omega^p(M, \CC) \to \Omega^p(M, \CC)$ such that
\begin{equation*}
  \pf{\psi} \circ \mathrm{P}_M = \mathrm{P}_N \circ \pf{\psi}
\end{equation*}
for ever hyperbolic embedding $\psi : M \to N$.
Moreover, we define a functor $\func{D}$ as in the previous section.

As discussed in \cref{sec:wave}, associated to the Green-hyperbolic operator~$\mathrm{P}_M$, there exists on each globally hyperbolic manifold a unique causal propagator $\mathrm{G}_M$.

The causal propagator is also called the \Idx{comm-dist}commutator distribution or Pauli--Jordan distribution (see also \cref{sub:causal_propagator}) because it facilitates the definition of a \IdxMain{commutator}\emph{commutator} on the algebra~$\func{U}(M)$.
Namely, let $f, h \in \func{D}(M)$ then we define on~$\func{U}(M)$
\begin{equation}\label{eq:Phi_commutator}
  [\Phi_M(f), \Phi_M(h)] \defn \big( \im G_M(f \otimes h), 0, \dotsc \big).
\end{equation}
Due to the support properties of the causal propagator, this is exactly the right choice if one wants to implement Einstein causality.

The commutator extends to arbitrary elements of~$\func{U}(M)$ in the following way:
First, we notice that the commutator ought to satisfy the Leibniz rule.
Therefore it may be seen as a map
\begin{equation*}
  \func{D}(M)^{\otimes n} \otimes \func{D}(M)^{\otimes m} \longrightarrow \func{D}(M)^{\otimes (n+m-2)}
\end{equation*}
for $n,m \geq 1$, which can be extended (anti-)linearly to $\bigoplus_n \func{D}(M)^{\otimes n}$, a dense subalgebra of~$\func{U}(M)$.
Finally, we can extend the resulting commutator continuously to the whole algebra~$\func{U}(M)$.
Thereby the algebra becomes a Lie algebra.

The commutator is of immense physical importance.
Foremost, it implements causality and manifests the principle of locality.
Moreover, if the centre of the algebra of observables with respect to the commutator is non-trivial, the algebra contains unobservables and one cannot justify calling it `algebra of observables'.
Nevertheless, non-trivial centres in the `algebra of observables' can lead to important non-local observable effects under spacetime embeddings \cite{sanders:2014}.

Note that the commutator~\eqref{eq:Phi_commutator} defined on~$\func{U}(M)$ is degenerate if $\func{D}(M) \cap \ker \mathrm{G}_M$ is non-trivial and thus it leads to an algebra with a non-trivial centre.
This problem will be addressed in the following section, where we introduce the so-called field algebra.

\subsection{Field algebra}
\label{sub:kg_field_algebra}

Then, taking the wave operator and the commutator, we can define the \IdxMain{on-field-alg}\emph{(on shell) field algebra}~$\func{F}(M)$ as the unital topological ${}^*$-algebra given for every $M \in \obj{Loc}$ by the quotient
\begin{equation*}
  \func{F}(M) \defn \func{U}(M) / \func{I}(M),
\end{equation*}
where $\func{I}(M)$ is the completion of the ${}^*$-ideal finitely generated for all $f, h \in \func{D}(M)$ by
\begin{enumerate}[(a)]
  \item
    the wave equation
    \begin{equation*}
      \Phi_M(\mathrm{P}_M f) \sim \Phi_M(0)
    \end{equation*}
  \item
    the commutator relation
    \begin{equation*}
      \Phi_M(f) \Phi_M(h) - \Phi_M(h) \Phi_M(f) \sim [\Phi_M(f), \Phi_M(h)]
    \end{equation*}
\end{enumerate}
The topology of~$\func{F}(M)$ is the quotient topology with respect to~$\func{U}(M)$.

Like for the Borchers--Uhlmann algebra, $\func{F} : \Loc \to \Alg$ defines a functor, where $\func{F}(M)$ is the field algebra and $\func{F}(\psi)$ the ${}^*$-algebra homomorphism
\begin{equation*}
  \func{F}(\psi) [F] =  [\psi_* F]
\end{equation*}
on all $[F] \in \func{F}(M)$, which is naturally induced from~$\func{U}(\psi)$ via the canonical projection $[\,\cdot\,] : \func{U}(M) \to \func{F}(M)$.
That these assignments give indeed a covariant functor, relies on the naturality of all involved operators.
In particular,
\begin{equation*}
  \pf{\psi} (\mathrm{P}_M f) = \mathrm{P}_N (\pf{\psi} f)
  \quad \text{and} \quad
  G_M(f \otimes h) = G_N(\pf{\psi} f \otimes \pf{\psi} h)
\end{equation*}
for all $\psi : M \to N$ and $f, h \in \func{D}(M)$.
Note that the field algebra is a Lie algebra, where the bracket is simply
\begin{equation*}
  \big[ [F], [H] \big] \defn [F H - H F] = \big[ [F, H] \big]
\end{equation*}
for all $[F], [H] \in \func{F}(M)$; the centre is trivial by construction.
Moreover, we can construct a \Idx{quantum-field}quantum field~$\phi$ for~$\func{F}$ as the natural transformation $\phi : \func{D} \naturalto \func{F}$, which is given for each $M \in \obj{Loc}$ by
\begin{equation*}
  \phi_M \defn [\,\cdot\,] \circ \Phi_M.
\end{equation*}
That is, $\phi_M$ is the map $f \mapsto [(0, f, 0, \dotsc)]$ for all test functions $f \in \func{D}(M)$.

The following is a standard result, see \eg\ \cite[Chap.~3.1]{sanders:2008}:

\begin{proposition}\label{prop:causality_time_slice}
  The locally covariant theory~$\func{F}$, given by the field algebra, satisfies both causality and the timeslice axiom.
\end{proposition}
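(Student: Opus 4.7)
The plan is to verify both axioms on the generators $\phi_M(f)$, $f \in \func{D}(M)$, and to extend to the whole algebra using that $\func{U}(M) = \bigoplus_n \func{D}(M)^{\widehat{\otimes} n}$ is spanned by monomials in the $\Phi_M(f)$ and that the quotient projection $\func{U}(M) \to \func{F}(M)$ is a continuous ${}^*$-homomorphism. Hence $\func{F}(M)$ is generated — in the purely algebraic sense — by the $\phi_M(f)$, so it suffices to check causality on elementary commutators and the timeslice axiom on individual generators.

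For causality, fix hyperbolic embeddings $\psi_i : M_i \to N$ with causally disjoint images. For $f_i \in \func{D}(M_i)$ the pushforwards $\pf{\psi_i} f_i$ have causally disjoint supports in $N$, and the support property $\supp(\mathrm{G}_N g) \subset J(\supp g)$ of the causal propagator yields $\mathrm{G}_N(\pf{\psi_1} f_1 \otimes \pf{\psi_2} f_2) = 0$. Thus the elementary commutator
\begin{equation*}
  [\phi_N(\pf{\psi_1} f_1), \phi_N(\pf{\psi_2} f_2)] = \im\, \mathrm{G}_N(\pf{\psi_1} f_1 \otimes \pf{\psi_2} f_2) = 0
\end{equation*}
vanishes, and the Leibniz rule extends this to $[A_1, A_2] = 0$ for all polynomials $A_i$ in the $\phi_N(\pf{\psi_i} f_i)$, i.e., for all elements of $\func{F}(\psi_1)\func{F}(M_1)$ and $\func{F}(\psi_2)\func{F}(M_2)$.

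For the timeslice axiom with a Cauchy embedding $\psi : M \to N$, it suffices to show $\phi_N(h) \in \func{F}(\psi)\func{F}(M)$ for every $h \in \func{D}(N)$, since products of generators are then in the image as well. Because $[\phi_N(h)] = [\phi_N(h')]$ in $\func{F}(N)$ whenever $h - h' \in \mathrm{P}_N \func{D}(N)$ — the kernel of $\mathrm{G}_N$ on test sections, by the exact sequence recalled in \cref{sub:causal_propagator} — this reduces to the following deformation lemma: for each $h \in \func{D}(N)$ there exist $f \in \func{D}(M)$ and $k \in \func{D}(N)$ with $h = \pf{\psi} f - \mathrm{P}_N k$. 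To produce $f$ and $k$, I would use that $\psi$ is Cauchy to select two disjoint Cauchy surfaces $\Sigma_- \subset I^-(\Sigma_+)$ of $N$ both contained in $\psi(M)$; causal convexity of $\psi(M)$ then forces the entire slab $J^+(\Sigma_-) \cap J^-(\Sigma_+)$ into $\psi(M)$. Picking $\chi \in C^\infty(N)$ with $\chi = 0$ on $J^-(\Sigma_-)$ and $\chi = 1$ on $J^+(\Sigma_+)$ and setting $k = -\chi\, \mathrm{G}_\wedge h - (1-\chi)\, \mathrm{G}_\vee h$, a short calculation using $\mathrm{P}_N \mathrm{G}_\vee = \mathrm{P}_N \mathrm{G}_\wedge = \id$ gives
\begin{equation*}
  h + \mathrm{P}_N k = [\mathrm{P}_N, \chi]\, \mathrm{G}_N h,
\end{equation*}
whose support lies in $\supp(\dif\chi) \cap J(\supp h)$, a compact subset of the slab (by global hyperbolicity) and hence of $\psi(M)$. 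The right-hand side is therefore $\pf{\psi} f$ for a unique $f \in \func{D}(M)$; meanwhile $k$ itself is compactly supported, because $\chi\, \mathrm{G}_\wedge h$ is supported in $J^+(\Sigma_-) \cap J^-(\supp h)$ and $(1-\chi)\, \mathrm{G}_\vee h$ in $J^-(\Sigma_+) \cap J^+(\supp h)$, each compact in the globally hyperbolic $N$.

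The main obstacle is, unsurprisingly, the deformation lemma underlying the timeslice axiom; causality is essentially read off from the support property of the causal propagator. What makes the timeslice step work is the tight interplay of the Cauchy property of $\psi$, causal convexity of $\psi(M)$ in $N$, and global hyperbolicity of $N$, which together compactify the supports of $\pf{\psi} f$ and $k$. No subtleties specific to the $p$-form setting intervene: $[\mathrm{P}_N, \chi]$ remains a first-order differential operator supported on $\supp(\dif\chi)$ regardless of the rank of the bundle, so the same cut-off argument used in the scalar case (see, e.g., \cite{dimock:1980,brunetti:2003}) applies here essentially verbatim.
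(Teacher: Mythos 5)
Your proof is correct and follows essentially the same route as the paper's: causality is read off from the support property of the causal propagator, and the timeslice axiom is obtained from the cut-off deformation $h \mapsto [\mathrm{P}_N,\chi]\,\mathrm{G}_N h = h + \mathrm{P}_N k$, which is exactly the paper's $f' = \mathrm{P}_M(\chi\,\mathrm{G}_M f)$ with $f - f' \in \mathrm{P}_M\func{D}(M)$. You merely spell out the support and causal-convexity arguments that the paper leaves implicit.
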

\begin{proof}
  Since $\func{F}(M)$ is the completion of the algebra generated by~$\phi_M$, causality follows immediately from the support properties of the causal propagator~$\mathrm{G}_M$.

  For the timeslice axiom we only need to show that the algebra in the whole spacetime can be reconstructed from the algebra in a causally convex neighbourhood $N \subset M$ of a Cauchy surface; $N$ may be considered as a spacetime in $\Loc$\ with a Cauchy embedding into~$M$.
  Set $\chi \in C^\infty(M)$ such that $\chi = 1$ on $J^+(N) \setminus N$ and and $\chi = 0$ on $J^-(N) \setminus N$.
  For every $f \in \func{D}(M)$ there exists $f' \in \func{D}(M)$ given by
  \begin{equation*}
    f' = \mathrm{P}_M (\chi \mathrm{G}_M f)
  \end{equation*}
  such that $\supp f' \subset N$ and
  \begin{equation*}
    f - f' = \mathrm{P}_M \big( (1 - \chi) \mathrm{G}_{\vee,M} f + \chi \mathrm{G}_{\wedge,M} f \big) \in \ker \mathrm{G}_M \cap \func{D}(M)
  \end{equation*}
  The statement follows again because $\func{F}(M)$ is the completion of the algebra generated by~$\phi_M$.
\end{proof}

If $\omega$ is a state on the Borchers--Uhlmann algebra $\func{U}(M)$ for some spacetime~$M$, it induces a state for the field algebra $\func{F}(M)$ if it also satisfies the commutation relation\Idx{commutator}
\begin{equation*}
  \omega(F H) - \omega(H F) = \omega([F, H])
\end{equation*}
and the equation of motion
\begin{equation*}
  \omega\big( (\id \otimes \dotsm \otimes \mathrm{P}_M \otimes \dotsm \otimes \id) F \big) = 0
\end{equation*}
for all $F, H \in \func{U}(M)$.
In that case, the state on $\func{F}(M)$ is defined by the pushforward of~$\omega$ by $[\,\cdot\,]$.
Conversely, a state on $\func{F}(M)$ always induces a state on $\func{U}(M)$ by the pullback via $[\,\cdot\,]$.

\subsection{Weyl algebra}
\label{sub:kg_weyl_algebra}

The disadvantage of the Borchers--Uhlmann algebra and the field algebra is that they are only ${}^*$-algebras and not $C^*$-algebras and hence cannot generally be represented by an algebra of bounded operators.
However, heuristically speaking, we can exponentiate the field algebra to produce a $C^*$-algebra, the Weyl $C^*$-algebra introduced in \cref{sub:alg_weyl}.

In this section, set
\begin{equation*}
  \func{D}(M) = \Omega^p_0(M, \CC) / \ker \mathrm{G}_M.
\end{equation*}
$\func{D}$ extends to a functor from~$\Loc$ into the category of closed nuclear locally convex $\CC$ vector spaces because $\mathrm{G}_M$ is continuous (so that $\ker \mathrm{G}_M$ is closed) and transforms covariantly under hyperbolic embeddings (\cf\ the previous section).

Then define on every spacetime $M \in \Loc$ the Weyl $C^*$-algebra~$\overline{\func{W}}(M)$ obtained from the Weyl operators $W_M : \func{D}(M) \to \overline{\func{W}}(M)$ with commutator relation\Idx{commutator}
\begin{equation*}
  W_M([f]) W_M([h]) = \exp\big( \tfrac{\im}{2} G_M([f] \otimes [h]) \big) W_M([f + h])
\end{equation*}
for every $[f], [h] \in \func{D}(M)$ and where we denoted also by~$G_M$ the well-defined pushforward of $G_M$ to $\func{D}(M)$ via the canonical quotient map.
Again, it may be shown that $\overline{\func{W}}$ extends to a functor from~$\Loc$ to $\Alg$ by the covariance of the commutator distribution.\footnote{To be precise, it is also necessary to show that the minimal regular norm behaves in a locally covariant way. This follows from the Hahn--Banach theorem.}
The proof that $\overline{\func{W}}$ satisfies both causality and the timeslice axiom is very similar to~\cref{prop:causality_time_slice} and will not be repeated.

\subsection{Scalar Klein--Gordon field}
\label{sub:kg_scalar}

The (free, scalar) \IdxMain{klein-gordon-eq}\emph{Klein--Gordon equation} is
\begin{equation}\label{eq:klein_gordon}
  (\mathord\Box + \xi R + m^2) \varphi = 0,
\end{equation}
where $\varphi \in \mathcal{E}(M)$ is the classical \IdxMain{klein-gordon-field}\emph{Klein--Gordon field} and the parameters $\xi$ and $m \geq 0$ are the \IdxMain{curv-coupl}\emph{curvature coupling} and the \emph{mass}.

One distinguishes in particular between two different curvature couplings: \IdxMain{min-curv-coupl}\emph{minimal coupling} if $\xi = 0$ and \IdxMain{conf-curv-coupl}\emph{conformal coupling} if $\xi = 1/6$.
The reason for naming $\xi = 1/6$ conformal coupling is that, in the massless case $m=0$, \eqref{eq:klein_gordon} is invariant under conformal isometries, see \eg\ \cite{wald:1984,dappiaggi:2006}.
Namely, given a conformal embedding $\psi : M \to N$ with $\pb{\psi} h = \Omega^2 g$, one finds
\begin{equation*}
  \pb{\psi} \left( \mathord\Box + \tfrac{1}{6} R \right) \varphi = \Omega^3 \left( \mathord\Box + \tfrac{1}{6} R \right) \Omega^{-1} \pb{\psi} \varphi,
\end{equation*}
where $\varphi \in \mathcal{E}(M)$.
That is, if $\varphi$ solves the massless conformally coupled Klein--Gordon equation on~$(N, h)$, then $\Omega^{-1} \pb{\psi} \varphi$ solves the massless conformally coupled Klein--Gordon equation on~$(M, g)$.

The field algebra for the Klein--Gordon field can be constructed exactly as outlined above, where we choose
\begin{equation*}
  \mathrm{P}_M = \Box + \xi R + m^2
  \quad \text{and} \quad
  \func{D}_\varphi(M) = \mathcal{D}(M, \CC),
\end{equation*}
and denote the resulting functor~$\func{F}_\varphi$ and the quantum field~$\what{\varphi}$.
The Weyl algebra may be constructed in a similar way.
In case of conformal coupling, the Klein--Gordon field can also be quantized as a \emph{conformally} locally covariant theory~\cite{pinamonti:2009,siemssen:2011}.

\subsection{Proca field}
\label{sub:kg_proca}

The \IdxMain{proca-eq}field equation for the classical \IdxMain{proca-field}\emph{Proca field}~$Z \in \Omega^1(M)$ of mass $m > 0$ is
\begin{equation}\label{eq:proca}
  (\codif \dif + m^2) Z = 0
\end{equation}
and one can almost immediately see that it is not normally hyperbolic.
However, applying the codifferential to this equation we find that $\codif Z = 0$ so that \eqref{eq:proca} is equivalent to
\begin{equation*}
  (\mathord\Box + m^2) Z = 0
\end{equation*}
with the constraint $\codif Z = 0$.

There are two equivalent approaches to quantize this constrained system.
Both rely on the same fact that the exterior derivative~$\dif$ and the codifferential~$\codif$ commute with $(\Box + m^2)$ and thus also with its causal propagator~$\mathrm{G}_M$ \cite{fewster:2003,pfenning:2009}.

For first approach \cite{fewster:2003,dappiaggi:2011b} we notice that \eqref{eq:proca} is pre-normally hyperbolic so that we can construct the causal propagator
\begin{equation*}
  \widetilde{\mathrm{G}}_M = \mathrm{G}_M \circ \big( m^{-2} \dif \codif + \id \big).
\end{equation*}
Accordingly we can perform the construction of the field algebra discussed in \cref{sub:kg_field_algebra} with this propagator.
More precisely, we set
\begin{equation*}
  \mathrm{P}_M = \codif \dif + m^2
  \quad \text{and} \quad
  \func{D}_Z(M) = \Omega^1_0(M, \CC)
\end{equation*}
and follow through with the construction of the field algebra, where we denote the resulting locally covariant theory~$\func{F}_Z$ and the corresponding quantum field~$\what{Z}$.

The second possibility, which is related to the framework developed in~\cite{hack:2013a}, is to consider
\begin{equation*}
  \mathrm{P}_M = \Box + m^2
  \quad \text{and} \quad
  \func{D}'_Z(M) = \{ f \in \Omega^1_0(M, \CC) \mid \codif f = 0 \}
\end{equation*}
Since every section $f \in \func{D}'_Z(M)$ is coclosed, $\mathrm{G}_M f$ solves the Proca equation~\eqref{eq:proca}.
We can then perform the usual construction for the field algebra and denote the corresponding locally covariant theory~$\func{F}'_Z$.

\begin{proposition}
  The locally covariant theories~$\func{F}_Z$ and~$\func{F}'_Z$ are equivalent.
  That is, $\func{F}_Z(M) \simeq \func{F}'_Z(M)$ for every globally hyperbolic~$M$.
\end{proposition}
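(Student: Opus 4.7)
The plan is to construct, for each globally hyperbolic $M$, a natural ${}^*$-isomorphism $\iota_M : \func{F}'_Z(M) \to \func{F}_Z(M)$ induced by the inclusion $\func{D}'_Z(M) \hookrightarrow \func{D}_Z(M) = \Omega^1_0(M,\CC)$, together with an explicit inverse defined via the ``coclosed projection'' $f \mapsto -m^{-2}\codif\dif f$. Naturality in $M$ will be automatic, since the pushforward along hyperbolic embeddings commutes with $\dif$, $\codif$ and the causal propagators.

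First, I would verify that the inclusion descends to a well-defined ${}^*$-homomorphism $\iota_M$. Two identities suffice. For $f \in \func{D}'_Z(M)$ one has $\codif f = 0$, hence $\mathrm{P}_M f = (\Box + m^2) f - \dif\codif f = (\Box+m^2) f$, so the equation-of-motion relations in $\func{F}'_Z(M)$ map into those of $\func{F}_Z(M)$. For $f, g \in \func{D}'_Z(M)$, the identity $\widetilde{G}_M(f \otimes g) - G_M(f \otimes g) = m^{-2} G_M(\dif\codif f \otimes g) = 0$ shows that the two commutator distributions agree on coclosed test sections, so the commutation relations are respected as well. Next, I would establish surjectivity of $\iota_M$ on generators by means of the decomposition $f = f_c + \mathrm{P}_M(m^{-2} f)$ with $f_c \defn -m^{-2}\codif\dif f$, which is manifestly compactly supported and coclosed; modulo the equation-of-motion ideal, $\what{Z}(f)$ coincides with $\iota_M(\what{Z}'(f_c))$.

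The inverse $\pi_M$ would be defined on generators by $\what{Z}(f) \mapsto \what{Z}'(f_c)$ and extended (anti)multiplicatively. The main work --- and the chief obstacle --- is to show that $\pi_M$ factors through $\func{F}_Z(M)$, i.e.\ respects both defining relations. Preservation of the equation of motion reduces, using $\codif^2 = 0$ together with the Weitzenböck-type identity $(\codif\dif)^2 = \Box\,\codif\dif$ on one-forms (a consequence of $\Box = \codif\dif + \dif\codif$ and $\dif^2 = 0$), to the computation $(\mathrm{P}_M f)_c = -m^{-2}(\Box + m^2)\,\codif\dif f$, which is annihilated in $\func{F}'_Z(M)$ because $\codif\dif f$ is coclosed. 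Preservation of the commutator requires the identity
\begin{equation*}
  G_M(f_c \otimes g_c) = \widetilde{G}_M(f \otimes g)
  \qquad \text{for all } f, g \in \Omega^1_0(M,\CC),
\end{equation*}
which I would verify by expanding the left-hand side to $m^{-4} G_M(\codif\dif f \otimes \codif\dif g)$, moving $\codif\dif$ across $\mathrm{G}_M$ using the formal self-adjointness of $\codif\dif$ and the commutativity of $\mathrm{G}_M$ with $\codif$ and $\dif$, and simplifying with $\Box\mathrm{G}_M = -m^2\mathrm{G}_M$; the outcome is $G_M(f\otimes g) + m^{-2} G_M(\codif f \otimes \codif g)$, which matches $\widetilde{G}_M(f\otimes g)$ by direct expansion of the latter. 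Once $\pi_M$ is known to be a well-defined ${}^*$-homomorphism, both compositions $\iota_M \circ \pi_M$ and $\pi_M \circ \iota_M$ reduce on generators to the equation-of-motion relations already used, and hence to the identity.
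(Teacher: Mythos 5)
Your proof is correct and rests on the same two observations as the paper's: modulo the equation of motion $m^2\what{Z}(f) = -\what{Z}(\codif\dif f)$, so both algebras are generated by fields smeared with coclosed sections, on which $\codif\dif + m^2 = \Box + m^2$ and $\widetilde{\mathrm{G}}_M = \mathrm{G}_M$. You go further than the paper only in making the inverse map explicit and verifying the consistency identity $G_M(f_c \otimes g_c) = \widetilde{G}_M(f \otimes g)$ for arbitrary test sections, a check the paper leaves implicit.
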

\begin{proof}
  For every $f \in \func{D}_Z(M)$
  \begin{equation*}
    m^2 \what{Z}(f) = \what{Z} \big(m^2 f - (\codif \dif + m^2) f \big) = -\what{Z} (\codif \dif f)
  \end{equation*}
  and thus $\what{Z} \big( \func{D}_Z(M) \big) = \what{Z} \big( \func{D}'_Z(M) \big)$.
  Then it is easy to check that
  \begin{equation*}
    (\codif \dif + m^2) f = (\Box + m^2) f
    \quad \text{and} \quad
    \widetilde{\mathrm{G}}_M f = \mathrm{G}_M f
  \end{equation*}
  for all $f \in \func{D}'_Z(M)$.
\end{proof}

Similarly, two apparently different Weyl algebras can be constructed for the Proca field and shown to be equivalent.


\chapter{Quantum states}
\label{cha:states}

\IdxRanBegin{state}

\section*{Summary}

In \cref{sub:alg_states} we already introduced some general features of states on ${}^*$-algebras.
In this section we will discuss features important or specific to quantum field theory.

We begin our discussion with the introduction of the $n$-point distributions (\cref{sub:n-point}) associated to (some) states of the algebras defined above: the Borchers--Uhlmann algebra, the field algebra and the Weyl algebra.
Of particular importance are states which satisfy the microlocal spectrum condition to be defined in \cref{sub:microlocal}.
States which satisfy this constraint on the wavefront set are the so-called Hadamard states and their singular part is given by the Hadamard parametrix (\cref{sub:hadamard_parametrix}).

After we introduced these general notions, we will discuss the construction of quantum states on particular spacetimes.
Due to their importance in cosmology and their relative simplicity, we discuss adiabatic and Hadamard states on cosmological spacetimes in \cref{sec:states_cosmology}.

\section{Preliminaries}

Let $(M, g)$ be a globally hyperbolic spacetime and let us consider, as in \ref{sec:klein_gordon}, $p$-form fields.
It is important to notice that none of the results here are fundamentally restricted to the assumption of $p$-form field and can be easily generalized.

\subsection{\texorpdfstring{$n$}{n}-Point distributions}
\label{sub:n-point}

Let $\func{U}(M)$ be the Borchers--Uhlmann algebra of a quantum field theory on~$M$ which is built on a test function space~$\func{D}(M) \subset \Omega_0^p(M)$.
Since
\begin{equation*}
  \func{U}(M) = \bigoplus_{n \in \NN_0} \func{D}(M)^{\mathbin{\widehat{\otimes}} n},
\end{equation*}
the topological dual is of the form
\begin{equation*}
  \func{U}'(M) = \prod_{n \in \NN_0} \func{D}'(M)^{\mathbin{\widehat{\otimes}} n},
\end{equation*}
where we have used the kernel theorem.
In other words, whereas any element of $\func{U}(M)$ can be understood as a polynomial, $\func{U}'(M)$ also contains power series.
It follows that the any state~$\omega$ on~$\func{U}(M)$ is uniquely defined by a family $(\omega_n)_{n \in \NN}$ of \IdxMain{n-point}\emph{$n$-point distributions} (also called $n$-point functions or Wightman functions) $\omega_n \in \func{D}'(M)^{\mathbin{\widehat{\otimes}} n}$.
If we denote by~$\Phi_M$ the quantum field associated to~$\func{U}(M)$, then the $n$-point distributions satisfy
\begin{equation*}
  \omega_n(f_1 \otimes \dotsm \otimes f_n) = \omega\big(\Phi_M(f_1) \,\dotsm\, \Phi_M(f_n)\big)
\end{equation*}
for all $f_1, \dotsc, f_n \in \func{D}(M)$ and $n \in \NN$.

The \IdxMain{trunc-n-point}\emph{connected} or \emph{truncated $n$-point distributions}~$\omega^T_n$ of a state~$\omega$ are defined by the relation
\begin{equation}\label{eq:truncated}
  \omega_n(x_1, \dotsc, x_n) = \sum_{P \in \mathcal{P}_{\!n}} \prod_{r \in P} \omega^T_{\abs{r}}(x_{r(1)}, \dotsc, x_{r(\abs{r})}),
\end{equation}
where $\mathcal{P}_{\!n}$ denotes the (ordered) partitions of the set $\{ 1, \dotsc, n \}$.
Therefore, they can be calculated recursively from the $n$-point distributions.
The first two truncated $n$-point distributions are
\begin{align*}
  \omega^T_1(x_1) & = \omega_1(x_1), \\
  \omega^T_2(x_1, x_2) & = \omega_2(x_1, x_2) - \omega_1(x_1) \omega_1(x_2)
\intertext{and a general recursive formula is given by}
  \omega^T_n(x_1, \dotsc, x_n) & = \omega_n(x_1, \dotsc, x_n) - \sum_{\substack{P \in \mathcal{P}_{\!n}\\\abs{P} > 1}} \prod_{r \in P} \omega^T_{\abs{r}}(x_{r(1)}, \dotsc, x_{r(\abs{r})}).
\end{align*}

Thanks to the close relation of the Borchers--Uhlmann algebra and the field algebra, see the last paragraph of \cref{sub:kg_field_algebra}, the space of states of the field algebra is related to a subspace of the space of states for the Borchers--Uhlmann algebra and a state~$\omega$ on the field algebra also has associated $n$-point distributions.
These $n$-point distributions naturally satisfy the commutation relations\Idx{commutator}
\begin{equation*}\begin{split}
  \omega_n(x_1, \dotsc, x_n) & = \omega_n(x_1, \dotsc, x_{i+1}, x_i, \dotsc, x_n) \\&\quad + \omega_{n-2}(x_1, \dotsc, \what{x}_i, \what{x}_{i+1}, \dotsc, x_n) G_M(x_i, x_{i+1}),
\end{split}\end{equation*}
where the hats denote omitted points, and are weak solutions of the equations of motion\Idx{field-eq}
\begin{equation*}
  \mathrm{P}_M(x_i) \omega_n(x_1, \dotsc, x_i, \dotsc, x_n) = 0
\end{equation*}
for all $i \in [1 \lddots n]$.
Therefore, if we denote by~$\Phi_M$ the quantum field associated to $\func{F}(M)$, the $n$-point distributions satisfy
\begin{equation*}
  \omega_n(f_1 \otimes \dotsm \otimes f_n) = \omega\big(\Phi_M(f_1) \dotsm \Phi_M(f_n)\big)
\end{equation*}
independently of the chosen representatives $f_1, \dotsc, f_n \in \func{D}(M)$ of $[f_1], \dotsc, [f_n]$.

The definition of $n$-point distributions for a state on the Weyl algebra $\func{W}(M)$ is slightly more involved.
$n$-Point distributions in the algebraic sense only exist for strongly regular states as defined in \cref{sub:alg_weyl}, see also \cite{bar:2012}.
In this case they are defined by the relation
\begin{equation*}
  \omega_n(f_1 \otimes \dotsm \otimes f_n) = (-\im)^n \frac{\partial^n}{\partial t_1 \dotsm \partial t_n}\, \omega\big(W_M(t_1 [f_1]) \dotsm W_M(t_n [f_n])\big) \bigg|_{t_\bullet = 0}\!\!.
\end{equation*}
Clearly, the $n$-point distributions of a state on the Weyl algebra satisfy the commutation relation and the equation of motion.

A state~$\omega$ is called \IdxMain{quasi-free}\emph{quasi-free} or \emph{Gaussian} if all its truncated $n$-point distributions vanish for $n \neq 2$, whence it is completely determined by its two-point distribution~$\omega_2$.
For a quasi-free state, all odd $n$-point distributions vanish and all even $n$-point distributions are given by
\begin{equation*}
  \omega_n(x_1, \dotsc, x_n)
  = \sum_{\sigma} \omega_2(x_{\sigma(1)}, x_{\sigma(2)}) \,\dotsm\, \omega_2(x_{\sigma(n-1)}, x_{\sigma(n)}),
\end{equation*}
where the sum is over all ordered pairings, \ie, over all permutations $\sigma \in \mathfrak{S}_n$ such that $\sigma(1) < \sigma(3) < \dotsb < \sigma(n-1)$ and $\sigma(1) < \sigma(2), \dotsc, \sigma(n-1) < \sigma(n)$.

\subsection{Microlocal spectrum condition}
\label{sub:microlocal}

A quasi-free state~$\omega$ satisfies the \IdxMain{micro-spec-cond}\emph{microlocal spectrum condition} \cite{radzikowski:1996,brunetti:1996,sahlmann:2001} if
\begin{equation}\label{eq:spectrum_condition}
  \WF(\omega_2) \subset \big\{ (x, x'; \xi, -\xi') \in \dot{T}^*(M \times M) \;\big|\; (x; \xi) \sim (x'; \xi') \text{ and } \xi \rhd 0 \big\},
\end{equation}
or, in words, the wavefront set of $\omega_2$ is contained in the set of $(x, x'; \xi, -\xi') \in \dot{T}^*(M \times M)$ such that $x, x'$ are connected by a lightlike geodesic $\gamma$ with cotangent $\xi$ at $x$ and $\xi'$ is the parallel transport of $\xi$ to $x'$ along $\gamma$ (in symbols: $(x; \xi) \sim (x'; \xi')$) and $\xi$ is future directed (in symbols: $\xi \rhd 0$).
That is, $(x, x'; \xi, -\xi')$ is contained in the wavefront set if $(x; \xi)$ and $(x'; \xi')$ lie on the same future-directed bicharacteristic strip generated by~$\sigma(\xi) = -g(\xi, \xi)$.

The microlocal spectrum condition can also be generalized to states that are not quasi-free~\cite{brunetti:1996,sanders:2010}.
States that satisfy the microlocal spectrum condition are called \IdxMain{hadamard-state}\emph{(generalized) Hadamard states}.

Let $\mathrm{P} = \Box + B$, where $B$ is a scalar function, the potential, and $G(x, x')$ its causal propagator.
If the kernel of the two-point distribution~$\omega_2$ satisfies the commutator relation (weakly)
\begin{equation*}
  \omega_2(x, x') - \omega_2(x', x) = \im G(x, x')
\end{equation*}
then equality of sets holds in~\eqref{eq:spectrum_condition}.
If, moreover, the two-point distribution is a parametrix of~$\mathrm{P}_M$, \ie a weak bisolution up to smooth terms, then it attains the \IdxMain{hadamard-form}\emph{local\footnote{Also a \emph{global} Hadamard form can be formulated \cite{kay:1991}, but since the discovery of the microlocal spectrum condition this global form has lost its importance. In fact, it was shown in~\cite{radzikowski:1996a} that a state that is everywhere locally of Hadamard form is also globally a Hadamard state.} Hadamard form} in a geodesically convex neighbourhood $U \subset M$
\begin{equation}\begin{split}\label{eq:hadamard_form}
  \omega_2(x, x')
  & = \lim_{\varepsilon \to 0^+} \frac{1}{8 \uppi^2} \bigg( \frac{u(x, x')}{\sigma_\varepsilon(x, x')} + v(x, x') \ln \frac{\sigma_{\varepsilon}(x, x')}{\lambda^2} + w(x, x') \bigg) \\
  & = \mathcal{H}(x, x') + w(x, x'),
\end{split}\end{equation}
where we take the \emph{weak} limit, $x, x' \in U$, $\lambda \in \RR$ is arbitrary and the detailed form of the coefficients $v, w \in \Gamma(\bigwedge^p(TM) \boxtimes \bigwedge^p(TM))$ will be discussed in the next section.
Above we used a `vectorized' van Vleck--Morette determinant
\begin{equation*}
  u(x, x') \defn \Delta^{1/2}(x, x') g^{[a_1^{\vphantom{\prime}}| b_1'}(x, x') \,\dotsm\, g^{|a_p^{\vphantom{\prime}}] b_p'}(x, x'),
\end{equation*}
which is antisymmetrized in the indices~$a_i$ and parallel transported along the geodesic connecting $x$ and $x'$, and the regularized world function
\begin{equation*}
  \sigma_{\varepsilon}(x, x') \defn \sigma(x, x') + \im \varepsilon\, \big(t(x) - t(x')\big) + \tfrac{1}{2} \varepsilon^2
\end{equation*}
with $t$ a smooth time function on $(M, g)$ compatible with the time-orientation.

\subsection{The Hadamard parametrix}
\label{sub:hadamard_parametrix}

The coefficient functions
\begin{equation}\label{eq:hadamard_v_series}
  v(x, x') = \frac{1}{\lambda^2} \sum_{k = 0}^\infty v_k(x, x') \bigg( \frac{\sigma(x, x')}{\lambda^2} \bigg)^k
\end{equation}
in~\eqref{eq:hadamard_form} are called \IdxMain{hadamard-coeff}\emph{Hadamard coefficients} and are another example of bitensors.
Note that the expansion above is an \emph{asymptotic} expansion in terms of the world function $\sigma$ and cannot be expected to converge unless the spacetime is analytic.
Although it can be turned into a convergent series by replacing the series~\eqref{eq:hadamard_v_series} by \cite[Chap.~2.5]{bar:2007}
\begin{equation*}
  \sum_{k = 0}^n v_k(x, x') \sigma(x, x')^k + \!\sum_{k = n+1}^\infty\! v_k(x, x') \chi\big( \alpha_k^{-1} \sigma(x, x') \big)^k
\end{equation*}
for any $n \in \NN$ and some sequence~$(\alpha_k), \alpha_k \in (0,1]$, where $\chi \in C_0^\infty(\RR)$ is $1$ in a neighbourhood of~$0$ (note that we omitted the scale~$\lambda$), this will not concern us any further because we will only ever need a finite number of terms.\footnote{If we inserted this modification into~\eqref{eq:hadamard_form}, the two-point distribution would not be an exact (weak) solution of~$\mathrm{P}$ any more, but only up to a smooth biscalar, \ie, it would only be a parametrix.}
Therefore, we also define the truncated local Hadamard parametrix
\begin{equation*}
  \mathcal{H}_n(x, x') \defn \lim_{\varepsilon \to 0^+} \frac{1}{8 \uppi^2} \bigg( \frac{u(x, x')}{\sigma_\varepsilon(x, x')} + \sum_{k = 0}^n v_k(x, x') \sigma(x, x')^k \ln \frac{\sigma_{\varepsilon}(x, x')}{\lambda^2} \bigg).
\end{equation*}
One can show that there always exists a $n \in \NN_0$ such that
\begin{equation*}
  \lim_{x' \to x} \mathcal{D} \big( \Hadamard(x, x') - \Hadamard_n(x, x') \big) = 0
\end{equation*}
for all differential operators~$\mathcal{D}$ and $n$ depends on the order of~$\mathcal{D}$.

The coefficients~$v_k$ can be recursively calculated by (formally) applying $\mathrm{P}$ to~$\mathcal{H}$;
One then finds the so-called \IdxMain{hadamard-recurs}\emph{Hadamard recursion relations} (\cf\ \cite{ottewill:2011,decanini:2006,fewster:2008})
\begin{subequations}\label{eq:v_rec_relation}\begin{align}
  \lambda^2 \mathrm{P} u & = ( 2 \nabla_\sigma + \sigma\indices{^a_a} - 2 ) v_0, \\
  \lambda^2 \mathrm{P} v_{k-1} & = ( 2 \nabla_\sigma + \sigma\indices{^a_a} + 2k - 2 ) k v_k,
\end{align}\end{subequations}
where we have used the transport operators defined in~\eqref{eq:transport_operators}.
It can be shown that the Hadamard coefficients are symmetric in their arguments~\cite{moretti:2000,friedlander:1975}.
Together with the first term in~\eqref{eq:hadamard_form} the Hadamard coefficients make up the \IdxMain{hadamard-para}\emph{Hadamard parametrix}~$\mathcal{H}(x, x')$, which is therefore completely determined by the differential operator~$\mathrm{P}$ and the geometry of the spacetime.

The covariant expansion of the Hadamard coefficients can be efficiently calculated using the Avramidi method described in \cref{sub:bi_avramidi} using the transport equations~\eqref{eq:v_rec_relation}.
If one is only interested in the coincidence limits, one can directly take the limit in~\eqref{eq:v_rec_relation} to find (omitting necessary Kronecker deltas originating from coincidence limits of parallel propagators)
\begin{align*}
  [v_0] & = \frac{1}{2} [\mathrm{P} u] = \frac{1}{2} \bigg( B - \frac{1}{6} R \bigg), \\
  [v_k] & = \frac{1}{2 (1 - k) k} [\mathrm{P} v_{k-1}], \quad k > 1.
\end{align*}
Note that $[v_0]$ vanishes for a conformally coupled massless scalar field.
The coincidence limit of~$v_1$ cannot be found in this easy way and must be calculated directly.
After a lengthy calculation using pencil and paper or a (fast) calculation using a tensor algebra software, one obtains (again omitting Kronecker deltas)
\begin{equation}\label{eq:hadamard_v1}
  8 [v_1] = B^2 + \tfrac{1}{3} \Box B - \tfrac{1}{3} R B + \tfrac{1}{36} R^2 - \tfrac{1}{90} R_{ab} R^{ab} + \tfrac{1}{90} R_{abcd} R^{abcd} - \tfrac{1}{15} \Box R.
\end{equation}

Different from $v(x, x')$, the symmetric bitensor $w(x, x')$ is not directly determined by the geometry or a differential operator.
Instead the term $w(x, x')$ reflects the freedom in the choice of the state.
Writing the asymptotic expansion
\begin{equation*}
  w(x, x') = \frac{1}{\lambda^2} \sum_{k = 0}^\infty w_k(x, x') \bigg( \frac{\sigma(x, x')}{\lambda^2} \bigg)^k,
\end{equation*}
we notice that the freedom to choose a state is completely encoded in the first coefficient $w_0$ and the remaining coefficients obey the recursion relation \cite{decanini:2006,fewster:2008}
\begin{equation*}\begin{split}
  \lambda^2 \mathrm{P} w_k & = 2 (k+1) \nabla_\sigma w_{k+1} + 2 k (k+1) w_{k+1} + (k+1) w_{k+1} \sigma\indices{^a_a} \\&\quad + 2 \nabla_\sigma v_{k+1} - 2 (2k+1) v_{k+1} + v_{k+1} \sigma\indices{^a_a}.
\end{split}\end{equation*}
A common choice is to set $w_0 = 0$ as in~\cite{wald:1978}.
In any case, $w_0$ must be chosen such that $w$ is symmetric.

\section{Construction of states on cosmological spacetimes}
\label{sec:states_cosmology}

Explicit examples of quantum states are known only on a small class of highly symmetric spacetimes.
Below we will first discuss the so-called Bunch--Davies state \cite{bunch:1978,schomblond:1976,allen:1985}, which can be considered the vacuum state of de Sitter spacetime.
Then we study a construction of states on FLRW spacetimes due to Parker~\cite{parker:1969} called \Idx{adiabatic-state}adiabatic states.
Although adiabatic states are in general not Hadamard, indeed only adiabatic states of infinite order satisfy the microlocal spectrum condition \cite{junker:2002}, they can be considered approximate Hadamard states and have proven to be very useful thanks to their relatively straightforward construction.
Since we will make extensive use of adiabatic states when we discuss the semiclassical Einstein equation on cosmological backgrounds in \cref{cha:einstein_solutions}, they will be treated in some detail below.
Moreover, we will introduce the states of low energy by Olbermann~\cite{olbermann:2007}, which are constructed via a careful Bogoliubov transformation from adiabatic states.

\subsection{Bunch--Davies state}
\label{sub:bunch_davies}

A distinguished Hadamard state for the (massive) scalar field on de Sitter spacetime is the \IdxMain{bunch-davies}\emph{Bunch--Davies state} \cite{bunch:1978,schomblond:1976,allen:1985}.
It is the unique pure, quasi-free Hadamard state invariant under the symmetries of de Sitter spacetime.
Note that equations of motion for the scalar field on de Sitter spacetime are
\begin{equation*}
  \Box \varphi + (12 \xi H^2 + m^2) \varphi = 0,
\end{equation*}
where $H$ is the Hubble constant, $m$ the mass of the scalar field and $\xi$ the curvature coupling, \cf\ \cref{sub:kg_scalar}.
Therefore the curvature coupling~$\xi$ acts like a mass and we set $M^2 = 12 \xi H^2 + m^2$.
The Bunch--Davies state is also a Hadamard state in the limit $M = 0$ but in that case it fails to be invariant under the symmetries of de Sitter spacetime \cite{allen:1985,allen:1987}; below we assume $M > 0$.

Using the function $Z$ defined in~\eqref{eq:deSitter_Z}, the Bunch--Davies state is the quasifree state given by the two-point distribution\footnote{More precisely, one should replace $Z$ by $Z + \im \varepsilon (t(x) - t(x'))$, where $t$ is a smooth time function, and take the limit $\varepsilon \to 0^+$. Note that  ${}_2F_1$ has a branch cut from $1$ to $\infty$.}
\begin{equation}\label{eq:bunch_davies}
  \omega_2(x, x') \defn \omega_2\big(Z(x, x')\big) \defn \frac{M^2 - 2H^2}{8 \uppi \cos(\uppi \nu)}\, {}_2F_1\big( \nu_+, \nu_-; 2; \tfrac{1}{2} (1+Z) \big),
\end{equation}
where ${}_2F_1$ is the analytically continued hypergeometric function and with
\begin{equation*}
  \nu_\pm \defn \frac{3}{2} \pm \nu
  \quad \text{and} \quad
  \nu \defn \sqrt{\frac{9}{4} - \frac{M^2}{H^2}}.
\end{equation*}
We can rewrite~\eqref{eq:bunch_davies} into a form which exhibits the Hadamard nature of the state more clearly.
In fact, using well-known transforms~\cite[Eq.~(15.8.10)]{olver:2010} of the hypergeometric function~${}_2F_1$, one can show
\begin{align*}
  \omega_2(Z) & = \frac{H^2}{8 \uppi^2} (1-Z)^{-1} + \frac{M^2 - 2 H^2}{8 \uppi^2} \Big( \wtilde{v}\big( \tfrac{1}{2}(1-Z) \big) \ln\big( \tfrac{1}{2}(1-Z) \big) + \wtilde{w}\big( \tfrac{1}{2}(1-Z) \big) \Big) \\
\intertext{for $\abs{Z} < 1$ (spacelike separated points) with}
  \wtilde{v}(z) & = {}_2F_1( \nu_+, \nu_-; 2; z), \\
  \wtilde{w}(z) & = \sum_{k=0}^{\infty} \frac{(\nu_+)_k (\nu_-)_k}{k! (k+1)!} \big( \digamma(\nu_+ + k) + \digamma(\nu_- + k) - \digamma(k+1) - \digamma(k+2) \big)\, z^k,
\end{align*}
where $\digamma$ is the digamma-function.

In the cosmological chart of de Sitter spacetime the function~$Z$ attains the simple form~\eqref{eq:deSitter_Z_cosmo} and the spatial sections are flat.
Therefore, one can represent $\omega_2$ as a spatial Fourier transformation with respect to $\vec{x} - \vec{x}'$.
Indeed, using known integrals of (modified) Bessel functions \cite[\S6.672]{gradshteyn:2007}, a lengthy calculation shows \cite{schomblond:1976} (omitting again the $\varepsilon$-prescription)
\begin{equation}\label{eq:bunch_davies_mode}\begin{split}
  \omega_2(x, x') = \frac{H^2 (\tau \tau')^{3/2}}{32 \uppi^2} \int_{\RR^3} \e^{-\uppi \Im\nu} H^{\smash{\mbox{\tiny $(\!1\!)$}}}_\nu(- k \tau) H^{\smash{\mbox{\tiny $(\!2\!)$}}}_{\wbar\nu}(- k \tau')\, \e^{\im \vec{k} \cdot (\vec{x} - \vec{x}')}\, \dif\vec{k},
\end{split}\end{equation}
where $x = (\tau, \vec{x})$ and $x' = (\tau', \vec{x}')$ in the conformal coordinates and $H^{\smash{\mbox{\tiny $(\!1\!)$}}}, H^{\smash{\mbox{\tiny $(\!2\!)$}}}$ are the Hankel functions of first and second kind.

\subsection{Homogeneous and isotropic states}
\label{sub:homogeneous_isotropic_state}

It is usually reasonable to restrict ones attention to states that respect the symmetry of the background spacetime.
Under this assumption, a state on a FLRW spacetime should be both homogeneous and isotropic.
That is, if the state is also quasifree, its two-point distribution needs to satisfy
\begin{equation*}
  \omega_2(x, x') = \omega_2(t, t', \vec{x} - \vec{x}') = \omega_2(\tau, \tau', \vec{x} - \vec{x}'),
\end{equation*}
where $x = (t, \vec{x}) = (\tau, \vec{x})$ and $x' = (t', \vec{x}') = (\tau', \vec{x}')$ with respect to cosmological or conformal time.

Under a certain relatively weak continuity assumption on the two-point distribution (such that they may be represented as bounded operators on a certain Hilbert space and the Riesz representation theorem can be used \cite{luders:1990}), it was shown in \cite{luders:1990,schlemmer:2010} that every \IdxMain{hom-iso-state}\emph{quasifree, homogeneous and isotropic state} for the scalar field is of the form\footnote{Note that we omit here and below the necessary $\varepsilon$-regularization of the integral, where one multiplies the integrand with $\e^{-\varepsilon k}$ and considers the weak limit $\varepsilon \to 0^+$.}
\begin{equation}\begin{split}\label{eq:homogeneous_isotropic_state}
  \omega_2(x, x') = \frac{1}{(2 \uppi)^3 a(\tau) a(\tau')} \int_{\RR^3} & \big( \Xi(k) S_k(\tau) \conj{S}{}_k(\tau') \\&\!\! + (\Xi(k) + 1) \conj{S}{}_{k}(\tau) S_k(\tau') \big)\, \e^{\im \vec{k} \cdot (\vec{x} - \vec{x}')}\, \dif \vec{k},
\end{split}\end{equation}
where $k = \abs{\vec{k}}$ and $\Xi(k)$ is a non-negative (almost everywhere), polynomially bounded function in $L^1(\RR_0^+)$; for pure states $\Xi = 0$.
Moreover, the modes $S_k$ are required to satisfy the \IdxMain{mode-equation}\emph{mode equation} of motion
\begin{equation}\label{eq:mode_equation}
  (\partial_\tau^2 + \omega_k^2) S_k^{\phantom{'}}(\tau) = 0,
  \quad
  \omega_k^2 \defn k^2 + \big( \xi - \tfrac{1}{6} \big) a(\tau)^2 R + a(\tau)^2 m^2,
\end{equation}
and the Wronski-determinant condition\footnote{Imposing the Wronski-determinant condition guarantees that the imaginary part of the two-point distribution is given by half the commutator distribution. It is sufficient to impose this condition at one instance in time.}
\begin{equation}\label{eq:wronski}
  \conj{S}{}_k^{\phantom{'}} S_k' - \conj{S}{}_k' S_k^{\phantom{'}} = \im,
\end{equation}
where both $S_k^{\phantom{'}}$ and $S_k'$ are polynomially bounded in $k$.
States constructed in this manner are in general \emph{not} of Hadamard type.

Two important examples of pure Hadamard states expressible in the mode form above are the \IdxMain{mink-state}\emph{Minkowski vacuum state} on Minkowski spacetime
\begin{equation*}
  \frac{1}{(2 \uppi)^3} \int_{\RR^3} \frac{1}{2 E(k)} \e^{- \im E(k) (t - t')} \e^{\im \vec{k} \cdot (\vec{x} - \vec{x}')}\, \dif\vec{k},
\end{equation*}
with $E(k)^2 = k^2 + m^2$, and the Bunch--Davies state on the cosmological patch of de Sitter spacetime \eqref{eq:bunch_davies_mode}.
Interesting examples of non-pure states are the \emph{approximate KMS states} at inverse temperature $\beta$ for the conformally coupled scalar field \cite{dappiaggi:2011d}
\begin{equation*}
  \frac{1}{(2 \uppi)^3 a(\tau) a(\tau')} \int_{\RR^3} \left( \frac{S_k(\tau) \conj{S}{}_k(\tau')}{\e^{\beta k_F} - 1} + \frac{\conj{S}{}_{k}(\tau) S_k(\tau')}{1 - \e^{- \beta k_F}} \right)\, \e^{\im \vec{k} \cdot (\vec{x} - \vec{x}')}\, \dif \vec{k}
\end{equation*}
with $k_F = \sqrt{k^2 + a(\tau_F)^2 m^2}$ for some `freeze-out' time $\tau_F$.
These states are KMS states if the spacetime admits a global timelike Killing vector field which is a symmetry of the state; they are Hadamard states if the pure state specified by the modes $S_k$ is already a Hadamard state \cite{dappiaggi:2011d}.

Given fixed reference modes~$\chi_k$ that satisfy~\eqref{eq:mode_equation} and~\eqref{eq:wronski}, all other possible mode solutions~$S_k$ can be constructed via a \IdxMain{bogoliubov}\emph{Bogoliubov transformation}, \ie,
\begin{equation*}
  S_k = A(k) \chi_k + B(k) \conj{\chi}_k
  \qquad \text{with} \qquad
  \abs{A(k)}^2 - \abs{B(k)}^2 = 1,
\end{equation*}
where $A(k)$ and $B(k)$ are such that $k \mapsto S_k^{\phantom{'}}$ and $k \mapsto S_k'$ are (essentially) polynomially bounded, measurable functions.
Note that changing $S_k$ by a phase does not affect the state and therefore $B(k)$ can always be chosen to be real.
The choice of $A$ and $B$ thus corresponds to two degrees of freedom, \eg, the phase of $A$ and the modulus of $B$.
If the reference modes~$\chi_k$ specify a pure Hadamard state, one can show that the modes~$S_k$ with the mixing~$\Xi(k)$ specify a Hadamard state as well if and only if (in addition to the conditions above) $k^n B(k)$ and $k^n \Xi(k)$ are in~$L^1(\RR_0^+)$ for all~$n \in \NN$ and $\Arg A - \Arg B$ is measurable~\cite{pinamonti:2011,zschoche:2014}.
An important example of a Bogoliubov transformation of a Hadamard state that (clearly) does \emph{not} give a Hadamard state are the $\alpha$-vacua associated to the Bunch--Davies state for which $A = \sinh \alpha$ and $B = \cosh \alpha$ with $\alpha \in \RR$.

\subsection{Adiabatic states}
\label{sub:adiabatic}
\IdxRanBegin{adiabatic-state}

Any solution of~\eqref{eq:mode_equation} and~\eqref{eq:wronski} is of the form\footnote{The lower limit in the integration is arbitrary as it gives a constant phase.}
\begin{equation}\label{eq:mode_form}
  S_k(\tau) = \frac{\rho_k(\tau)}{\sqrt{2}}\, \e^{\im \theta_k(\tau)}
  \quad \text{with} \quad
  \theta_k(\tau) = \int^{\tau} \rho_k(\eta)^{-2}\, \dif\eta,
\end{equation}
where $\rho_k$ satisfies the differential equation
\begin{equation}\label{eq:rho_diff_eq}
  \rho_k'' = \big( \rho_k^{-4} - \omega_k^2 \big) \rho_k^{\vphantom{'}} = \big( \theta_k'{}^2 - \omega_k^2 \big) \rho_k^{\vphantom{'}}.
\end{equation}
Modes~$W_k = \sigma_k \e^{\im \psi_k} / \sqrt{2}$ of the form~\eqref{eq:mode_form} with arbitrary~$\sigma_k$ that do \emph{not} satisfy the differential equation~\eqref{eq:rho_diff_eq} can be used to specify initial values for solutions~$S_k$ of the mode equation~\eqref{eq:mode_equation}, \ie,
\begin{equation*}
  S_k^{\phantom{'}}(\tau_0) = W_k^{\phantom{'}}(\tau_0),
  \quad
  S_k'(\tau_0) = W_k'(\tau_0)
\end{equation*}
at some initial time~$\tau_0$.
Using an idea of Parker~\cite{parker:1969}, it can be shown that the $W_k$ yield the solution~$S_k$ via a Bogoliubov-like transformation
\begin{equation}\label{eq:W_to_S}
  S_k(\tau) = A(\tau) W_k(\tau) + B(\tau) \conj{W}_k(\tau)
\end{equation}
with time-dependent coefficients given by
\begin{subequations}\label{eq:mode_coefficients}\begin{align}
    A(\tau) & = 1 - \im\! \int_{\tau_0}^\tau\! G(\eta) \left( A(\eta) + B(\eta)\, \e^{-2\im \psi_k(\eta)} \right) \dif\eta, \\
    B(\tau) & = -\!\! \int_{\tau_0}^\tau\! A'(\eta)\, \e^{2\im \psi_k(\eta)}\, \dif\eta
\end{align}\end{subequations}
and $2 G \defn \sigma_k^{-2} - \omega_k^2 \sigma^2 - \sigma_k^{\phantom{'}} \sigma_k''$, where we have suppressed the $k$-dependence of~$A, B$ and~$G$ in all four equations above.
Applying arguments from the analysis of Volterra integrals, it can be shown that ${1-A}$, $A'$, $B$ and~$B'$ have the same large~$k$ behaviour as~$G$ (\cf\ \cite{luders:1990,olbermann:2007}).
As a consequence, the modes~$S_k$ and their derivatives have (almost) the same asymptotic behaviour as the modes~$W_k$ if~$\sigma_k$ looks asymptotically like~$\omega_k^{-1/2}$:

\begin{proposition}
  Suppose $\partial_\tau^l \sigma_k = \partial_\tau^l\big(\omega_k^{-1/2}\big) + \bigO(k^{-9/2})$ for all~$l = 0, 1, 2$ such that $G = \bigO(k^{-m})$ for some~$m \geq 3$.
  The modulus~$\rho_k$ of the modes~$S_k$ satisfies
  \begin{equation*}
    \rho_k = \sigma_k + \bigO\big(k^{-1/2-m}\big),
    \quad
    \partial_\tau^n \rho_k = \partial_\tau^n \sigma_k + \bigO\big(k^{3/2-m}\big)
  \end{equation*}
  for all~$n = 0, 1, 2$.
\end{proposition}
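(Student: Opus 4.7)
The plan is to work directly with the Bogoliubov-like representation $S_k = A W_k + B \bar W_k$ supplied by the text, where by assumption $G = O(k^{-m})$ so, as already noted above the proposition, $1-A$, $A'$, $B$, $B'$ are all $O(k^{-m})$. Since $\rho_k = \sqrt{2}\,|S_k|$, I would first compute
\begin{equation*}
  \rho_k^2 = 2\,|A W_k + B \bar W_k|^2 = (|A|^2+|B|^2)\sigma_k^2 + 2\operatorname{Re}\bigl(A\bar B\,\sigma_k^2 e^{2\im\psi_k}\bigr),
\end{equation*}
and then use the Wronski identity $|A|^2 - |B|^2 = 1$ (inherited from the condition \eqref{eq:wronski} on both $S_k$ and $W_k$) to rewrite this as
\begin{equation*}
  \rho_k^2 - \sigma_k^2 = \sigma_k^2\bigl[\,2|B|^2 + 2\operatorname{Re}(A\bar B\,e^{2\im\psi_k})\bigr].
\end{equation*}
Because $\sigma_k^2 = O(k^{-1})$ and the bracket is $O(k^{-m})$, this gives $\rho_k^2 - \sigma_k^2 = O(k^{-1-m})$, and dividing by $\rho_k + \sigma_k = O(k^{-1/2})$ yields the stated bound $\rho_k - \sigma_k = O(k^{-1/2-m})$ for $n=0$.

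The derivative bounds for $n = 1, 2$ follow by differentiating this identity and tracking powers of $k$. The key observation is that $\psi_k$, being the phase of the ansatz $W_k$ of the form \eqref{eq:mode_form}, satisfies $\psi_k' = \sigma_k^{-2} = O(k)$, so every $\tau$-derivative of $e^{2\im\psi_k}$ brings down a factor $2\im\psi_k' = O(k)$. At the $n$-th derivative, the extremal term in $\partial_\tau^n(\rho_k^2-\sigma_k^2)$ is therefore $\sigma_k^2 A\bar B\,(2\im\psi_k')^n e^{2\im\psi_k} = O(k^{-1})\cdot O(k^{-m})\cdot O(k^n) = O(k^{n-1-m})$, while every other Leibniz term (derivatives landing on $A$, $B$, or $\sigma_k^2$) is at least as small, because $A',B' = O(k^{-m})$ and the hypothesis $\partial_\tau^l\sigma_k = \partial_\tau^l(\omega_k^{-1/2}) + O(k^{-9/2})$ controls $\partial_\tau^l\sigma_k^2$. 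Rearranging $\partial_\tau^n(\rho_k^2 - \sigma_k^2) = 2\rho_k\,\partial_\tau^n\rho_k - 2\sigma_k\,\partial_\tau^n\sigma_k + (\text{lower-order derivative products})$ and using the already-established estimates on $\partial_\tau^l\rho_k - \partial_\tau^l\sigma_k$ for $l<n$ lets me isolate $\partial_\tau^n\rho_k - \partial_\tau^n\sigma_k$; dividing by $\rho_k \sim \omega_k^{-1/2}$ gives $\partial_\tau^n\rho_k - \partial_\tau^n\sigma_k = O(k^{n-1/2-m})$. Taking the weakest of these for $n=0,1,2$ yields the uniform bound $O(k^{3/2-m})$.

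The main obstacle is the bookkeeping in the inductive step: one must expand $\partial_\tau^n(\rho_k^2 - \sigma_k^2)$ via the general Leibniz rule, verify term-by-term that no cross product beats the extremal term $\sigma_k^2 A\bar B(2\im\psi_k')^n e^{2\im\psi_k}$, and ensure that the products $(\partial_\tau^{l}\rho_k)(\partial_\tau^{n-l}\rho_k)$ produced on the left-hand side, when re-expressed using the inductive hypotheses, contribute only at the same order. The asymptotic hypothesis on $\sigma_k$ is used precisely here: it guarantees that $\sigma_k$, its first two derivatives, and $\sigma_k^{-1}$ all obey the clean $k$-power estimates expected from replacing $\sigma_k$ by $\omega_k^{-1/2}$, so that nothing worse than the anticipated $k^{n-1-m}$ factor survives and the induction closes.
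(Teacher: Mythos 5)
Your $n=0$ step is essentially the paper's own (both reduce to $\rho_k^2 = \abs{A+B\e^{-2\im\psi_k}}^2\sigma_k^2 = \sigma_k^2 + \bigO(k^{-1-m})$ and then divide by $\rho_k+\sigma_k \gtrsim k^{-1/2}$), and your $n=1$ step is sound, since only $A'$, $B'$ and $\psi_k'$ enter the first Leibniz expansion and these are controlled. For the derivative bounds, though, you take a genuinely different route from the paper: the paper gets $n=2$ directly from the differential equation $\rho_k'' = (\rho_k^{-4}-\omega_k^2)\rho_k$ combined with the definition $2G = \sigma_k^{-2}-\omega_k^2\sigma_k^2-\sigma_k\sigma_k''$, so that $\rho_k''-\sigma_k'' = (\rho_k^{-3}-\sigma_k^{-3}) - \omega_k^2(\rho_k-\sigma_k) + 2G\sigma_k^{-1} = \bigO(k^{3/2-m})$ follows from the $n=0$ estimate alone, and then obtains $n=1$ by integrating in time (the integration constant vanishes because $\rho_k'(\tau_0)=\sigma_k'(\tau_0)$ by the choice of initial data).

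The gap in your version is at $n=2$. The Leibniz expansion of $\partial_\tau^2\big[\sigma_k^2\big(2\abs{B}^2+2\Re(A\conj{B}\,\e^{2\im\psi_k})\big)\big]$ contains the terms $A''\conj{B}$, $A\conj{B}''$ and $B''\conj{B}$, and your claim that every non-extremal term is controlled ``because $A',B'=\bigO(k^{-m})$'' does not cover them. From $A'=-\im G(A+B\e^{-2\im\psi_k})$ one gets $A''=-\im G'(A+B\e^{-2\im\psi_k})+\dotsb$, and $G'$ involves $\sigma_k'''$, which is \emph{not} constrained by the hypotheses (they only control $\partial_\tau^l\sigma_k$ for $l\le 2$). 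So these individual terms need not be small; they merely cancel. The cancellation becomes visible if you group $S_k=\tfrac{\sigma_k}{\sqrt2}\,\e^{\im\psi_k}C$ with $C\defn A+B\e^{-2\im\psi_k}$: the variation-of-parameters structure gives the identity $A'+B'\e^{-2\im\psi_k}=0$, hence $C'=-2\im\psi_k'B\e^{-2\im\psi_k}$ contains no $G$ at all, and $\partial_\tau^2\abs{C}^2=\bigO(k^{2-m})$ follows using only $\psi_k'$, $\psi_k''$, $B$ and $B'$, which closes your argument. Without that regrouping (or without falling back on the ODE for $\rho_k$ as the paper does), the term-by-term bound as you state it fails.
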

\begin{proof}
  First note that the assumed bounds of~$\sigma_k$ and~$\sigma_k''$ yield
  \begin{equation*}
    2 G = \sigma_k^{-2} - \omega_k^2 \sigma^2 - \sigma_k^{\phantom{'}} \sigma_k'' = \bigO\big(k^{-3}\big),
  \end{equation*}
  which is consistent with the assumption on~$G$.
  Further, recall that ${1-A}$, $A'$, $B$ and~$B'$ are~$\bigO(k^{-m})$, too.
  We then derive from~\eqref{eq:W_to_S} that
  \begin{equation*}
    \rho_k^2 = \abs{A + B\, \e^{- 2\im \psi_k}}^2 \sigma_k^2 = \sigma_k^2 + \bigO\big(k^{-1-m}\big)
  \end{equation*}
  and thus $\rho_k = \sigma_k + \bigO(k^{-1/2-n})$.
  Next we use that~$\rho_k$ satisfies the differential equation~\eqref{eq:rho_diff_eq} to find
  \begin{equation*}
    \rho_k'' - \sigma_k'' = \big( \rho_k^{-4} - \omega_k^2 \big) \rho_k - \big( \sigma_k^{-4} - \omega_k^2 \big) \sigma_k + 2 G \sigma_k^{-1} = \bigO\big(k^{3/2-m}\big),
  \end{equation*}
  whereby we obtain the estimate for~$\rho_k''$ and, after an integration in time, also that for~$\rho_k'$.
\end{proof}

In summary, the asymptotic behaviour of the initial values given by~$\omega_k$ fixes the asymptotic behaviour of the solutions~$\omega_k$.

We can now construct \emph{adiabatic states} as in~\cite{parker:1969,luders:1990} by specifying appropriate initial values for~\eqref{eq:mode_equation} respectively~\eqref{eq:rho_diff_eq}:
Making a WKB-like Ansatz, one finds the adiabatic modes of Parker~\cite{parker:1969}.
Namely, the adiabatic modes~$W_k^{\vphantom{\smash{\mbox{\tiny $(\!n\!)$}}}} = W_k^{\smash{\mbox{\tiny $(\!n\!)$}}}$ of order~$n$ are modes of the form~\eqref{eq:mode_form} with~$\sigma_k^{\vphantom{\smash{\mbox{\tiny $(\!n\!)$}}}} = \sigma_k^{\smash{\mbox{\tiny $(\!n\!)$}}}$, given iteratively via\footnote{The notation used here can be transformed into the usual one by setting $\sigma_k^{\smash{\mbox{\tiny $(\!n\!)$}}} = (\Omega_k^{\smash{\mbox{\tiny $(\!n\!)$}}})^{-1/2}$.}
\begin{equation*}
  (\sigma_k^{\smash{\mbox{\tiny $(\!n + 1\!)$}}})^{-4} \defn \omega_k^2 + \frac{\sigma_k^{\smash{\mbox{\tiny $(\!n\!)$}}}{}{''\!\!\!}}{\sigma_k^{\smash{\mbox{\tiny $(\!n\!)$}}}}
  \quad \text{with} \quad
  (\sigma_k^{\smash{\mbox{\tiny $(\!0\!)$}}})^{-4} \defn \omega_k^2.
\end{equation*}
The adiabatic modes~$W_k^{\smash{\mbox{\tiny $(\!n\!)$}}}$ are then used to specify initial values for the mode equation~\eqref{eq:mode_equation}, \eg, by solving the integral equations~\eqref{eq:mode_coefficients}.

A useful result on the asymptotic behaviour of the adiabatic modes is stated in \cite[Lem.~3.2]{luders:1990}.
Using the fact that~$\omega_k' = \bigO(k^{-1})$, one can easily improve this lemma to obtain for all~$n \in \NN_0$ and~$m \in \NN$
\begin{equation}\label{eq:adiabatic_asymptotics}\begin{aligned}
  \sigma_k^{\smash{\mbox{\tiny $(\!n\!)$}}} & = \bigO\big(k^{-1/2}\big), &
  \varepsilon_k^{\smash{\mbox{\tiny $(\!n\!)$}}} & = \bigO\big(k^{-2(n+1)}\big), \\
  \pd[m]{}{\tau}\, \sigma_k^{\smash{\mbox{\tiny $(\!n\!)$}}} & = \bigO\big(k^{-5/2}\big),\qquad &
  \pd[m]{}{\tau}\, \varepsilon_k^{\smash{\mbox{\tiny $(\!n\!)$}}} & = \bigO\big(k^{-2(n+1)}\big)
\end{aligned}\end{equation}
as~$k \to \infty$ and where~$(\sigma_k^{\smash{\mbox{\tiny $(\!n\!)$}}})^{-4} = (\sigma_k^{\smash{\mbox{\tiny $(\!n - 1\!)$}}})^{-4} (1 + \varepsilon_k^{\smash{\mbox{\tiny $(\!n\!)$}}})$.
The asymptotic behaviour of the coefficients~\eqref{eq:mode_coefficients} for adiabatic states was analyzed in~\cite{luders:1990,olbermann:2007}.
It can be found, using the improved bounds~\eqref{eq:adiabatic_asymptotics}, that they satisfy
\begin{equation*}\begin{aligned}
  1 - A(k, \tau) & = \bigO\big(k^{-2n-3}\big), &
  B(k, \tau) & = \bigO\big(k^{-2n-3}\big), \\
  A'(k, \tau) & = \bigO\big(k^{-2n-3}\big),\qquad &
  B'(k, \tau) & = \bigO\big(k^{-2n-3}\big)
\end{aligned}\end{equation*}
as $k \to \infty$.

These results can be seen as a starting point to show the relation between adiabatic states of a certain order and Hadamard states.
Indeed, one can show \cite{junker:2002} that adiabatic states of infinite order are Hadamard states and that adiabatic states of a finite order satisfy a microlocal spectrum condition on a certain \emph{Sobolev wavefront set}\footnote{Sobolev wavefront sets are very similar to the usual wavefront set, but instead of using smooth functions at the foundation of the definition, one uses functions from a Sobolev set of a certain order.} of the two-point distribution.

\subsection{An adiabatic state for conformal coupling}
\label{sub:adiabatic_zero}
\IdxMain{adiabatic-zero}

Let study the construction of the (adiabatic) states already considered in \cite{pinamonti:2011,pinamonti:2015}, see also \cite{anderson:1985,anderson:1986}.
For a (massive) conformally coupled scalar field ($\xi = 1/6$) the initial values for an adiabatic state of order zero can be taken to be
\begin{equation*}
  \chi_k^{\phantom{'}}(\tau_0) = \frac{1}{\sqrt{2 k_0}}\, \e^{\im k_0 \tau_0},
  \qquad
  \chi_k'(\tau_0) = \frac{\im k_0}{\sqrt{2 k_0}}\, \e^{\im k_0 \tau_0},
\end{equation*}
with $k_0^2 \defn (\Omega_k^{\smash{\mbox{\tiny $(\!0\!)$}}})^2 = k^2 + a(\tau_0)^2 m^2$.
Note that these initial values are essentially a conformal transformation of the modes of the Minkowski vacuum.

It is possible solve the mode equation~\eqref{eq:mode_equation} with these initial values perturbatively.
For this purpose, define the potential $V(\tau) = m^2 (a(\tau)^2 - a(\tau_0)^2)$ and make the recursive Ansatz $\chi_k(\tau) = \sum_n \chi_k^{\smash{\mbox{\tiny $(\!n\!)$}}}(\tau)$ with the recurrence relation
\begin{equation}\label{eq:adiabatic_recc}
  \chi_k^{\smash{\mbox{\tiny $(\!n\!)$}}}{}''(\tau) + k_0^2 \chi_k^{\smash{\mbox{\tiny $(\!n\!)$}}}(\tau) = - V(\tau) \chi_k^{\smash{\mbox{\tiny $(\!n-1\!)$}}}(\tau)
\end{equation}
with initial condition
\begin{equation}\label{eq:initial_adiabatic_zero}
  \chi_k^{\smash{\mbox{\tiny $(\!0\!)$}}}(\tau) \defn \frac{1}{\sqrt{2 k_0}}\, \e^{\im k_0 \tau}.
\end{equation}
The mode equation~\eqref{eq:mode_equation} is then solved as described in the proof of the following proposition:

\begin{proposition}\label{prop:adiabatic_zero}
  The recurrence relation~\eqref{eq:adiabatic_recc} is solved iteratively (for $\tau > \tau_0$) by
  \begin{equation}\label{eq:adiabatic_recc_int}
    \chi_k^{\smash{\mbox{\tiny $(\!n\!)$}}}(\tau) = \int_{\tau_0}^\tau \frac{\sin\big( k_0 (\eta - \tau) \big)}{k_0} V(\eta) \chi_k^{\smash{\mbox{\tiny $(\!n-1\!)$}}}(\eta)\, \dif\eta
  \end{equation}
  and the sum $\chi_k(\tau) = \sum_n \chi_k^{\smash{\mbox{\tiny $(\!n\!)$}}}(\tau)$ converges.
\end{proposition}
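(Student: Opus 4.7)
The proof naturally splits into two tasks: verifying that the integral formula in \eqref{eq:adiabatic_recc_int} indeed solves the inhomogeneous ODE \eqref{eq:adiabatic_recc} with the correct initial conditions, and then establishing convergence of the perturbative series. Neither step presents a real obstacle; the main care is bookkeeping of bounds uniform enough to yield convergence on compact time intervals.

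For the first part, I would note that the operator $\partial_\tau^2 + k_0^2$ has a causal Green's function $G(\tau, \eta) = k_0^{-1} \sin(k_0(\tau - \eta))$, characterized by $G(\eta, \eta) = 0$ and $\partial_\tau G(\tau, \eta)\big|_{\tau = \eta} = 1$. Then the unique solution of $y'' + k_0^2 y = f$ with vanishing initial data at $\tau_0$ is $y(\tau) = \int_{\tau_0}^\tau G(\tau, \eta) f(\eta)\, \dif\eta$. Applying this with $f(\eta) = -V(\eta) \chi_k^{\smash{\mbox{\tiny $(\!n-1\!)$}}}(\eta)$ and using $\sin(k_0(\tau-\eta)) = -\sin(k_0(\eta-\tau))$ gives exactly \eqref{eq:adiabatic_recc_int}. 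Since $\chi_k^{\smash{\mbox{\tiny $(\!n\!)$}}}(\tau_0) = 0 = \chi_k^{\smash{\mbox{\tiny $(\!n\!)$}}}{}'(\tau_0)$ for $n \geq 1$, the initial conditions \eqref{eq:initial_adiabatic_zero} are preserved across the sum, which is what we want.

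For the convergence part, I would proceed by induction to show that on any compact interval $[\tau_0, T]$
\begin{equation*}
  \abs{\chi_k^{\smash{\mbox{\tiny $(\!n\!)$}}}(\tau)} \leq \frac{1}{\sqrt{2 k_0}}\, \frac{1}{n!}\, \bigg( \frac{1}{k_0} \int_{\tau_0}^\tau \abs{V(\eta)}\, \dif\eta \bigg)^n.
\end{equation*}
The base case $n = 0$ is immediate from \eqref{eq:initial_adiabatic_zero}. For the inductive step I would use the elementary bound $\abs{\sin(k_0(\eta - \tau))/k_0} \leq 1/k_0$ inside the integral \eqref{eq:adiabatic_recc_int}, insert the inductive hypothesis, and integrate: the $1/n!$ arises from the standard Volterra-type iteration because $(\int_{\tau_0}^\tau \abs{V} \abs{F}^n \dif\eta) / (n+1) = \abs{F(\tau)}^{n+1}/(n+1)$ when $\abs{F}$ is the antiderivative of $\abs{V}$. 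Summing over $n$ yields
\begin{equation*}
  \sum_{n=0}^\infty \abs{\chi_k^{\smash{\mbox{\tiny $(\!n\!)$}}}(\tau)} \leq \frac{1}{\sqrt{2 k_0}}\, \exp\!\bigg( \frac{1}{k_0} \int_{\tau_0}^\tau \abs{V(\eta)}\, \dif\eta \bigg),
\end{equation*}
which is finite for every fixed $k$ and uniform on compacta; the same estimate applied term-by-term to the derivative series (using $\abs{\cos} \leq 1$) shows that the sum $\chi_k = \sum_n \chi_k^{\smash{\mbox{\tiny $(\!n\!)$}}}$ can be differentiated twice, so that it solves \eqref{eq:mode_equation} with the initial data coming from $\chi_k^{\smash{\mbox{\tiny $(\!0\!)$}}}$, as required.

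There is no real hard step here — the only point that needs a moment of attention is that the bound $1/k_0$ for the Green's function already absorbs one factor of $k_0^{-1}$ at each iteration, which ensures that the exponential growth rate of the series shrinks as $k$ increases; this uniformity in $k$ on the tail of the series is what one would need later if one wished to differentiate the expansion in $k$ or substitute it into the mode sum \eqref{eq:homogeneous_isotropic_state}. If sharper asymptotic control in $k$ is desired (not strictly needed for convergence at fixed $k$), one could alternatively invoke the estimate $\abs{\sin(k_0 t)/k_0} \leq \abs{t}$, which avoids the explicit $k_0^{-n}$ factors and yields the familiar Dyson-like bound $\abs{\chi_k^{\smash{\mbox{\tiny $(\!n\!)$}}}} \leq \norm{V}_\infty^n (\tau - \tau_0)^n / n!$ times the initial datum, and would likewise close the argument.
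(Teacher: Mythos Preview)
Your proof is correct and follows the same route as the paper: the retarded propagator of $\partial_\tau^2 + k_0^2$ gives the integral formula, and a Volterra-type iteration yields the factorial bound and hence absolute convergence. The only cosmetic difference is that the paper records a combined two-parameter estimate mixing both of your alternative bounds ($1/k_0$ and $|\eta-\tau|$) into a single inequality \eqref{eq:estimate_chi_n} with a free index $0\leq l\leq n$, which is then reused later in the thesis.
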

\begin{proof}
  Consider for each~$k$ the retarded propagator of $\partial_\tau^2 + k_0^2$ given by
  \begin{equation*}
    \Delta_{\mathrm{ret},k}(f)(\tau_0, \tau) = \int_{\tau_0}^\tau \frac{\sin\big( k_0 (\tau - \eta) \big)}{k_0} f(\eta)\, \dif\eta, \quad \tau > \tau_0,
  \end{equation*}
  for all~$f \in C^0(I)$, where $I$ is the domain of the conformal time.
  Applying $\Delta_{\mathrm{ret},k}$ to~\eqref{eq:adiabatic_recc}, it can be solved as
  \begin{equation*}
    \chi_k^{\smash{\mbox{\tiny $(\!n\!)$}}}(\tau) = \int_{\tau_0}^\tau \frac{\sin\big( k_0 (\eta - \tau) \big)}{k_0} V(\eta) \chi_k^{\smash{\mbox{\tiny $(\!n-1\!)$}}}(\eta)\, \dif\eta
  \end{equation*}
  for $\tau > \tau_0$.

  The straightforward estimates
  \begin{equation*}
    \abs{\chi_k^{\smash{\mbox{\tiny $(\!n\!)$}}}} \leq \frac{m^2}{k_0} \int_{\tau_0}^\tau \abs[\big]{ V(\eta) \chi_k^{\smash{\mbox{\tiny $(\!n-1\!)$}}}(\eta) }\, \dif\eta,
    \quad
    \abs{\chi_k^{\smash{\mbox{\tiny $(\!n\!)$}}}} \leq m^2 \int_{\tau_0}^\tau (\tau - \eta)\, \abs[\big]{ V(\eta) \chi_k^{\smash{\mbox{\tiny $(\!n-1\!)$}}}(\eta) }\, \dif\eta
  \end{equation*}
  can be iterated using the initial value $\chi_k^{\smash{\mbox{\tiny $(\!0\!)$}}} = (2 k_0)^{-1/2}$ and the standard `trick' of extending the integration of a symmetric function over a time-ordered domain to an integration over a symmetric domain by diving through the appropriate factorial (\cf\ \cite[Prop.~4.4]{pinamonti:2011}).
  Combining the two estimates, this gives
  \begin{equation}\label{eq:estimate_chi_n}
    \abs{\chi_k^{\smash{\mbox{\tiny $(\!n\!)$}}}} \leq \frac{1}{\sqrt{2 k_0}\, n!}\, \bigg( \frac{m^2}{k_0} \int_{\tau_0}^\tau \abs[\big]{V(\eta)}\, \dif\eta \bigg)^l \bigg( m^2 \int_{\tau_0}^\tau (\tau - \eta)\, \abs[\big]{V(\eta)}\, \dif\eta \bigg)^{n-l}
  \end{equation}
  for any $0 \leq l \leq n$.
  Therefore the sum $\chi_k(\tau) = \sum_n \chi_k^{\smash{\mbox{\tiny $(\!n\!)$}}}(\tau)$ converges absolutely.
\end{proof}

Equivalent results can be found in~\cite[Sect.~2.1]{pinamonti:2015}, \cite[Prop.~4.4]{pinamonti:2011} and also \cite{anderson:1985}.
It is clear, that the recurrence relation can be solved in the same for $\tau < \tau_0$ by applying the advanced propagator.

\begin{remark}
  The partial modes~$\chi_k^{\smash{\mbox{\tiny $(\!n\!)$}}}$ can be computed as in \cref{prop:adiabatic_zero} even if the metric (equivalently the scale factor) is not smooth.
  If the scale factor is $C^0$, the resulting mode~$\chi_k$ will be at least $C^2$.
  This relies crucially on the fact that the curvature, which is not well-defined if $a$ is not at least $C^2$, does not enter the mode equation~\eqref{eq:mode_equation}.
\end{remark}

\subsection{States of low energy}
\label{sub:low_energy}
\IdxRanBegin{low-energy}

Let us \emph{define} the \IdxMain{unreg-energy-mode}\emph{(unregularized) energy density per mode}~$S_k$ as
\begin{equation}\label{eq:energy_mode}\begin{split}
  \hat\rho(S_k, \conj{S}_k) & \defn \frac{1}{2 a^4} \Big( S_k' \conj{S}{}_k' + (6 \xi - 1) a H (S_k^{\phantom'} \conj{S}_k^{\phantom'})' \\&\quad + \big( k^2 + a^2 m^2 - (6 \xi - 1) a^2 H^2 \big) S_k^{\phantom'} \conj{S}_k^{\phantom'} \Big).
\end{split}\end{equation}
For now, we will not interpret this quantity in any way and leave its derivation to \cref{sec:semiclassical_friedmann}.

Given reference modes~$\chi_k$ and Bogoliubov coefficients $A, B$, the energy density per mode $S_k = A \chi_k + B \conj{\chi}_k$ is related to the energy density per reference mode~$\chi_k$ by
\begin{equation}\label{eq:energy_density_difference}
  \tfrac{1}{2} \big( \hat\rho(S_k, \conj{S}_k) - \hat\rho(\chi_k, \conj{\chi}_k) \big) = \abs{B}^2 \hat\rho(\chi_k, \conj{\chi}{}_k) + \Re\big(A \conj{B}\, \hat\rho(\chi_k, \chi_k) \big)
\end{equation}
One can now attempt to minimize the energy density per mode by varying the Bogoliubov coefficients and we find that:

\begin{proposition}\label{prop:instant_miminum}
  The energy density per mode at a fixed instance of time is minimal if and only if the Bogoliubov coefficients are given by (up to unitary equivalence)\footnote{Recall that, without loss of generality, we can always choose $B$ to be real and positive such that $A$ is completely determined by its phase.}
  \begin{subequations}\label{eq:minimal_bogoliubov}\begin{align}
    \Arg A(k) & = \uppi - \Arg\big( \hat\rho(\chi_k, \chi_k) \big), \\
    B(k) & = \left( \frac{\hat\rho(\chi_k, \conj{\chi}{}_k)}{2 \sqrt{\hat\rho(\chi_k, \conj{\chi}{}_k){}^2 - \abs{\hat\rho(\chi_k, \chi_k)}{}^2}} - \frac{1}{2} \right)^{1/2} \label{eq:minimal_bogoliubov2}
  \end{align}\end{subequations}
  and $\hat\rho(\chi_k, \conj{\chi}{}_k)^2 > \abs{\hat\rho(\chi_k, \chi_k)}{}^2$.
  The inequality is satisfied for all~$k > 0$ if and only if
  \begin{equation}\label{eq:minimal_conditions}
    \sqrt{1 + \frac{4 m^2}{H^2}} \pm 1 \geq \pm 12 \xi
    \qquad \text{or} \qquad
    H = 0,
  \end{equation}
  \ie, in particular whenever $0 \leq \xi \leq 1/6$.
\end{proposition}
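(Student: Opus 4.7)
The plan is to minimize the right-hand side of \eqref{eq:energy_density_difference} over all admissible Bogoliubov coefficients, then re-express the result in a form that exposes the coefficient structure of the reference modes. Since the expression is invariant under a common phase rotation of $A$ and $B$, we may fix $B \geq 0$ without loss of generality, as noted in \cref{sub:homogeneous_isotropic_state}; the quantity $\hat\rho(\chi_k,\conj\chi_k)$ is then real (inspection of \eqref{eq:energy_mode}) while $\hat\rho(\chi_k,\chi_k)$ is generally complex. The term $\Re(A\conj{B}\,\hat\rho(\chi_k,\chi_k)) = B\Re(A\,\hat\rho(\chi_k,\chi_k))$ is minimized over $\Arg A$ precisely when $A\,\hat\rho(\chi_k,\chi_k)$ is a negative real number, fixing $\Arg A = \uppi - \Arg(\hat\rho(\chi_k,\chi_k))$ and turning the problem into one of minimizing
\begin{equation*}
  f(B) \defn B^2 \hat\rho(\chi_k,\conj\chi_k) - B\sqrt{1+B^2}\,\abs{\hat\rho(\chi_k,\chi_k)}
\end{equation*}
over $B \geq 0$, using $\abs{A}=\sqrt{1+B^2}$.

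Setting $f'(B)=0$ gives $2B\sqrt{1+B^2}\,\hat\rho(\chi_k,\conj\chi_k) = (1+2B^2)\,\abs{\hat\rho(\chi_k,\chi_k)}$. Squaring and substituting $x = 1 + 2B^2$, so that $4B^2(1+B^2) = x^2-1$, this reduces to $x^2\big(\hat\rho(\chi_k,\conj\chi_k)^2 - \abs{\hat\rho(\chi_k,\chi_k)}^2\big) = \hat\rho(\chi_k,\conj\chi_k)^2$, and solving for $B$ yields exactly \eqref{eq:minimal_bogoliubov2}. For this critical point to be an admissible real $B$ one needs $\hat\rho(\chi_k,\conj\chi_k)>0$ and $\hat\rho(\chi_k,\conj\chi_k)^2 > \abs{\hat\rho(\chi_k,\chi_k)}^2$; a second-derivative check (or sign analysis of $f$ at $B \to \infty$ versus $B=0$) confirms it is indeed the unique global minimum.

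For the second claim, the key computation is to show that the difference $\hat\rho(\chi_k,\conj\chi_k)^2 - \abs{\hat\rho(\chi_k,\chi_k)}^2$ is in fact a quantity independent of the chosen mode $\chi_k$. Writing $\chi_k = \rho_k\,\e^{\im\theta_k}/\sqrt{2}$ as in \eqref{eq:mode_form} so that $\theta_k' = \rho_k^{-2}$, a direct but careful computation of the bilinear form \eqref{eq:energy_mode} gives
\begin{equation*}
  2a^4\,\hat\rho(\chi_k,\conj\chi_k) = \tfrac12(\rho_k'^2 + \rho_k^{-2}) + C\rho_k\rho_k' + \tfrac12 E\rho_k^2,
\end{equation*}
with $C \defn (6\xi-1)aH$ and $E \defn k^2 + a^2m^2 - (6\xi-1)a^2H^2$, while $2a^4\,\hat\rho(\chi_k,\chi_k)$ is the analogous quantity with $\rho_k^{-2}$ replaced by $-\rho_k^{-2}$ and augmented by an imaginary part $\rho_k'/\rho_k + C$, all multiplied by $\e^{2\im\theta_k}$. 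The algebraic identity $Z^2 - X^2 - Y^2 = E - C^2$ then collapses everything, giving
\begin{equation*}
  4a^8\bigl(\hat\rho(\chi_k,\conj\chi_k)^2 - \abs{\hat\rho(\chi_k,\chi_k)}^2\bigr) = k^2 + a^2m^2 + 6\xi(1-6\xi)a^2H^2.
\end{equation*}

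Finally, positivity of this expression for every $k>0$ is equivalent to $m^2 + 6\xi(1-6\xi)H^2 \geq 0$, or simply $H=0$. Treating the left side as a quadratic in $\xi$ and applying the quadratic formula gives the admissible interval $1-\sqrt{1+4m^2/H^2} \leq 12\xi \leq 1+\sqrt{1+4m^2/H^2}$, which rearranges to \eqref{eq:minimal_conditions}; in particular the classical range $0 \leq \xi \leq 1/6$ automatically satisfies both bounds. The main obstacle is the polar-form computation of $Z^2-X^2-Y^2$: it is straightforward but requires keeping track of several real and imaginary pieces, and it is this cancellation (the fact that the Wronski-like term $\rho_k'/\rho_k + C$ is exactly what is needed to eliminate all $\rho_k$-dependence) that makes the characterization depend only on the background geometry and not on the state one started from.
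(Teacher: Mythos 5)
Your proof is correct and follows the same route as the paper: minimize first over $\Arg A$ (making the cross term maximally negative), then over $B$ via the critical-point equation of $B^2\hat\rho(\chi_k,\conj{\chi}_k) - B\sqrt{1+B^2}\,\abs{\hat\rho(\chi_k,\chi_k)}$, and finally insert the definition \eqref{eq:energy_mode} to reduce the admissibility condition to $k^2 + a^2\big(m^2 + 6\xi(1-6\xi)H^2\big) > 0$. You merely spell out the steps the paper leaves implicit --- the substitution $x = 1 + 2B^2$ and the polar-form cancellation $Z^2 - X^2 - Y^2 = E - C^2$ showing the discriminant is mode-independent --- and both check out.
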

\begin{proof}
  For fixed $B > 0$, \eqref{eq:energy_density_difference} is minimized by $\Arg A = \uppi - \Arg\big( \hat\rho(\chi_k, \chi_k) \big)$ so that the second summand is maximally negative.
  Therefore, minimizing \eqref{eq:energy_density_difference} is equivalent to finding the minima of
  \begin{equation*}
    B^2 \hat\rho(\chi_k, \conj{\chi}{}_k) + \Re\big(A B \hat\rho(\chi_k, \chi_k) \big) = B^2 \hat\rho(\chi_k, \conj{\chi}{}_k) - \sqrt{1 + B^2} B \abs{\hat\rho(\chi_k, \chi_k)}.
  \end{equation*}
  Differentiating this expression by~$B$, one finds that an extremum exists on the positive real axis only if $\hat\rho(\chi_k, \conj{\chi}{}_k)^2 > \abs{\hat\rho(\chi_k, \chi_k)}{}^2$ and that its locus is given by~\eqref{eq:minimal_bogoliubov2}; it is easy to see that this is indeed an minimum.

  Inserting the definition \eqref{eq:energy_mode} into the condition $\hat\rho(\chi_k, \conj{\chi}{}_k)^2 > \abs{\hat\rho(\chi_k, \chi_k)}{}^2$, we find that it is equivalent to
  \begin{equation*}
    k^2 + a^2 \big( m^2 + 6(1-6\xi) \xi H^2 \big) > 0.
  \end{equation*}
  If this conditions is to hold for all $k > 0$, then \eqref{eq:minimal_conditions} must be satisfied.
\end{proof}

Instead of trying to minimize the energy density per mode at an instant, states of low energy are constructed by minimizing the smeared energy density per mode.
That is, by minimizing
\begin{equation}\label{eq:smeared_energy_density_difference}\begin{split}
  & \frac{1}{2} \int_I f(\tau)^2 \big( \hat\rho(S_k, \conj{S}_k) - \hat\rho(\chi_k, \conj{\chi}{}_k) \big)\, \dif\tau \\&\qquad = \int_I f(\tau)^2 \Big( \abs{B}^2 \hat\rho(\chi_k, \conj{\chi}{}_k) + \Re\big(A \conj{B}\, \hat\rho(\chi_k, \chi_k) \big) \Big)\, \dif\tau
\end{split}\end{equation}
for a fixed smearing function $f \in C_0^\infty(I)$, where $I \subset \RR$ is the domain of the conformal time coordinate.

This minimization was performed for minimally coupled scalar fields in~\cite{olbermann:2007} to find the so-called \IdxMain{low-energy}\emph{states of low energy}.
It can be shown that the states of low energy satisfy the microlocal spectrum condition and thus they are Hadamard states.
The arguments presented in~\cite{olbermann:2007} can be straightforwardly repeated for the conformally coupled scalar field to find states of low energy, which are Hadamard states too.
In both cases the Bogoliubov coefficients are given as in~\eqref{eq:minimal_bogoliubov} with the replacements
\begin{equation*}
  \hat\rho(\chi_k, \conj{\chi}{}_k) \to \int_I f(\tau)^2 \hat\rho(\chi_k, \conj{\chi}{}_k)\, \dif\tau
  \quad \text{and} \quad
  \hat\rho(\chi_k, \chi_k) \to \int_I f(\tau)^2 \hat\rho(\chi_k, \chi_k)\, \dif\tau.
\end{equation*}

There are good reasons to believe the following:

\begin{conjecture}
  States of low energy for arbitrary smearing function, mass and scale factor exist only in the curvature coupling range $0 \leq \xi \leq 1/6$.
  For all such $\xi$ the state satisfies the microlocal spectrum condition.
\end{conjecture}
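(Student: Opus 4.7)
The plan is to treat the two parts of the conjecture separately: first isolate the range of allowed curvature couplings from the minimization problem, then, for $\xi$ in that range, lift the Hadamard property by adapting the argument of Olbermann \cite{olbermann:2007} from the minimally coupled to the general case.

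\emph{Range of $\xi$ (necessity).} The smeared minimization problem is just the instantaneous problem of \cref{prop:instant_miminum} with $\hat\rho(\chi_k,\overline\chi_k)$ and $\hat\rho(\chi_k,\chi_k)$ replaced by their $f^2$-averages. Existence of a minimizing Bogoliubov pair therefore requires
\begin{equation*}
 \Bigl(\int_I f^2\, \hat\rho(\chi_k,\overline\chi_k)\,\dif\tau\Bigr)^{\!2}
 > \Bigl|\int_I f^2\, \hat\rho(\chi_k,\chi_k)\,\dif\tau\Bigr|^{2}
\end{equation*}
for every $k>0$. To force this for \emph{arbitrary} smearings one needs the pointwise inequality $\hat\rho(\chi_k,\overline\chi_k)^2>|\hat\rho(\chi_k,\chi_k)|^2$ to hold at every $\tau$: indeed, if it fails at some $\tau_\star$ for some $k_\star$, choosing $f^2$ to be a narrow approximation of $\delta(\tau-\tau_\star)$ reduces the smeared inequality to the pointwise one and makes the infimum in \eqref{eq:smeared_energy_density_difference} unbounded below through the direction already exhibited in the proof of \cref{prop:instant_miminum}. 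The second half of \cref{prop:instant_miminum} then gives the pointwise algebraic condition $\sqrt{1+4m^2/H^2}\pm 1\geq \pm 12\xi$. Varying $a$ so that the ratio $m^2/H^2(\tau)$ runs through arbitrarily small positive values forces $0\leq \xi\leq 1/6$.

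\emph{Range of $\xi$ (sufficiency of existence).} For $0\leq \xi\leq 1/6$ the same algebraic condition holds at every point, so $\hat\rho(\chi_k,\overline\chi_k)$ is strictly positive and dominates $|\hat\rho(\chi_k,\chi_k)|$ pointwise; integration against $f^2\geq 0$ preserves this strict inequality by the Cauchy--Schwarz-type manipulation used in \cite{olbermann:2007}. The explicit Bogoliubov coefficients $A(k),B(k)$ obtained by the smeared analogue of \eqref{eq:minimal_bogoliubov} are therefore well-defined, measurable, and define a quasifree, homogeneous, isotropic state via \eqref{eq:homogeneous_isotropic_state}.

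\emph{Hadamard property.} Here the main obstacle lies. One has to show that $k\mapsto B(k)$ and $k\mapsto \Xi(k)\equiv 0$ satisfy the rapid-decay conditions recalled in \cref{sub:homogeneous_isotropic_state}, i.e.\ that $k^n B(k)\in L^1(\RR^+_0)$ for every $n\in\NN$, together with measurability of $\mathrm{Arg}\,A-\mathrm{Arg}\,B$. Fixing as reference $\chi_k$ an adiabatic mode of sufficiently high order (see \cref{sub:adiabatic}), the asymptotic estimates \eqref{eq:adiabatic_asymptotics} and the representation \eqref{eq:mode_form} show that
\begin{equation*}
 \hat\rho(\chi_k,\overline\chi_k)=k/a^{4}+\bigO(k^{-1}),
 \qquad
 \hat\rho(\chi_k,\chi_k)=\tfrac{1}{2a^{4}}\rho_k^{2}\,\e^{2\im\theta_k(\tau)}+(\text{lower order}),
\end{equation*}
so that the conformal-coupling correction $(6\xi-1)aH(\chi_k\overline\chi_k)'$ is subleading and does not spoil the asymptotic phase $\theta_k(\tau)=k\int a^{-2}\dif\eta\cdot(1+\bigO(k^{-2}))$. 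Integration by parts against the smooth compactly supported weight $f^2$ in the oscillatory integral defining $\int f^2 \hat\rho(\chi_k,\chi_k)\,\dif\tau$ produces, by the standard non-stationary-phase argument of \cite[Thm.~2]{olbermann:2007}, the required $\bigO(k^{-n})$ decay for every $n$; by \eqref{eq:minimal_bogoliubov2} this transfers to $B(k)$. The new terms proportional to $(6\xi-1)$ generate boundary contributions and additional powers of $aH$ which are themselves smooth and compactly supported in $\tau$, so the same integration-by-parts procedure applies verbatim. Measurability of the phase of $A$ is immediate from the explicit formulae. The most delicate point, which will need care, is to verify that $\rho_k$ is bounded away from zero uniformly in $\tau\in\supp f$ and $k\geq 0$, so that division by $\rho_k^2$ in the phase of $\hat\rho(\chi_k,\chi_k)$ does not obstruct the non-stationary-phase estimate; this uses \eqref{eq:estimate_chi_n} together with the strict positivity of the smeared quadratic form established in the previous paragraph.
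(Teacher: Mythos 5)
The first thing to say is that the paper offers no proof of this statement: it is explicitly labelled a \emph{conjecture}, and the text following it only sketches why one should believe it, deferring exactly the points your proposal must settle. Your treatment of the coupling range is essentially a careful version of that sketch and is sound. The smeared minimisation has the same algebraic structure as the instantaneous one of \cref{prop:instant_miminum} because $A,B$ are $\tau$-independent; the delta-approximation argument correctly converts a strict failure of the pointwise inequality into non-existence for some smearing; and letting $m^2/H^2$ become small over the class of admissible scale factors pins down $0\le\xi\le1/6$, while for such $\xi$ the pointwise positivity of $k^2+a^2(m^2+6(1-6\xi)\xi H^2)$ integrates against $f^2\ge0$ to give existence. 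You should still supply a polynomial-in-$k$ lower bound on $\bigl(\int_I f^2\hat\rho(\chi_k,\conj{\chi}_k)\,\dif\tau\bigr)^2-\bigl|\int_I f^2\hat\rho(\chi_k,\chi_k)\,\dif\tau\bigr|^2$, since otherwise $B(k)$ from \eqref{eq:minimal_bogoliubov2} need not be polynomially bounded and the minimising modes need not define a state of the form \eqref{eq:homogeneous_isotropic_state}.

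The Hadamard step, however, contains a genuine gap, and it is exactly the step the paper calls ``more difficult''. You invoke the criterion of \cref{sub:homogeneous_isotropic_state}, which presupposes that the reference modes define a \emph{pure Hadamard state}; adiabatic modes of any finite order do not. Worse, the claimed all-order decay of $B(k)$ relative to a \emph{fixed} finite-order adiabatic reference cannot be correct: the state of low energy is independent of the chosen reference, so if $B(k)$ decayed faster than any polynomial with respect to two different finite adiabatic orders, those two adiabatic states would differ by a smooth kernel, contradicting the fact that finite-order adiabatic states satisfy only finite-order Sobolev microlocal spectrum conditions \cite{junker:2002}. Concretely, the non-stationary-phase argument stalls: the exact mode amplitude $\rho_k$ of an order-$n$ adiabatic solution carries an oscillatory correction of size $\bigO(k^{-2n-3})$ at frequency $2\theta_k'$, which, inserted into the amplitude of $\int_I f^2\hat\rho(\chi_k,\chi_k)\,\dif\tau$, resonates with the overall phase $\e^{2\im\theta_k}$ and leaves a non-oscillatory remainder decaying only at the polynomial rate set by $n$; each further integration by parts costs a compensating factor of $k$ from differentiating this correction, so no further gain is possible. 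The viable route --- and the one Olbermann actually follows --- is to estimate the Bogoliubov coefficient against adiabatic references of \emph{every} order $n$ and assemble the resulting finite-order Sobolev wavefront statements of \cite{junker:2002} into the full microlocal spectrum condition. Your proposal does not do this, so the conjecture remains open as far as your argument goes.
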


The crucial point in proving this conjecture is to show that \eqref{eq:smeared_energy_density_difference} has a minimum.
This can be shown similar to~\cite{olbermann:2007} for the minimally coupled case and the conformally coupled case.
For other values of~$\xi$ the proof is more difficult, but by continuity it is clear from \cref{prop:instant_miminum} that for some smearing functions states of low energy exist in the interval $\xi \in [0, 1/6]$ but not outside that range.
Namely, if $(f_n)$ is a sequence of functions such that $f_n^2$ converges weakly to the delta distribution, then there exists $N$ such that a state of low energy exists for all $f_m$, $m \geq N$ because a minimum exists for $f^2 = \delta$ by \cref{prop:instant_miminum}.
Once existence is show, one can expect that the state satisfies the microlocal spectrum condition using proofs analogous to those in~\cite{olbermann:2007}.

\IdxRanEnd{low-energy}
\IdxRanEnd{adiabatic-state}

\section{Holographic construction of Hadamard states}
\label{sec:holographic}

In the absence of a global timelike Killing field on a generic globally hyperbolic spacetime it is difficult to find physically well-motivated quantum states.
Therefore, in recent years, a lot of effort was put into the construction of proper Hadamard states on non-trivial spacetimes.
A promising method is the `holographic' construction of Hadamard states on characteristic surfaces introduced in \cite{dappiaggi:2006,moretti:2006a,moretti:2008}.
The holographic method has been applied to construct Hadamard states for the conformally coupled, massless scalar field \cite{dappiaggi:2006,moretti:2006a,moretti:2008}, the Weyl (massless Dirac) field \cite{hack:2010,dappiaggi:2009}, the vector potential \cite{siemssen:2011,dappiaggi:2013} and linearized gravity \cite{benini:2014c} on asymptotically flat spacetimes and cosmological backgrounds \cite{dappiaggi:2009b,dappiaggi:2009a}.
It was also used to construct local Hadamard states in lightcones in~\cite{dappiaggi:2011c}.

Forgetting for a moment the application of the bulk-to-boundary construction to asymptotically flat spacetimes and limiting ourselves to the scalar field, it may be roughly sketched as follows (see also \cite{gerard:2014b}).
Let $(M,g)$ be a globally manifold with a distinguished point~$p$ such that the future lightcone of $p$ satisfies some technical conditions.
The it is possible to construct on the lightcone (without the tip and as a manifold on its own) a positive bidistribution $\lambda$ on all functions on the lightcone that are compactly supported to the future and falls off sufficiently fast to the past, such that the antisymmetric part of $\lambda$ agrees with the pullback of the commutator distribution on the whole spacetime, and the wavefront set of $\lambda$ is of positive frequency with respect to the future-directed lightlike geodesics through~$p$.
This bidistribution has all the necessary properties to define a state for a quantum field theory on the lightcone.
Moreover, taking any compactly supported function in the interior of the lightcone of~$p$, it can be mapped to a function on the lightcone using the advanced propagator and a pullback such that the resulting function on the lightcone is compact towards the future and has good fall-off properties towards the past of the lightcone.
This way one obtains the so-called bulk-to-boundary (projection) map.
Pulling back all functions in the interior of the lightcone to the boundary of the lightcone using this map, one thus finds a state for the scalar field restricted to the interior of the lightcone.
Applying the propagation of singularities theorem it is possible to show that the resulting state satisfies the microlocal spectrum condition.

In a second step, on may construct Hadamard states for conformally invariant scalar field on asymptotically flat spacetimes with globally hyperbolic unphysical spacetimes.
First, one notices that boundary of the conformal completion of an asymptotically flat spacetime in the unphysical spacetimes satisfies all the necessary technical conditions.
Then one can compose the bulk-to-boundary map inside the unphysical spacetime with the (non-unique) conformal transformation associated with the asymptotically flat spacetime, to find a state for the conformally invariant scalar field.
Since conformal transformation leave lightlike geodesics invariant, also this state is of Hadamard form.

In \cite{siemssen:2011,dappiaggi:2013} this construction was generalized to the electromagnetic vector potential.
Also in this more complicated case a bulk-to-boundary construction of Hadamard states was be found, but it involves careful use of the gauge freedom of the vector potential to construct a \emph{positive} state.
Otherwise, the construction remains largely unchanged.

\renewcommand{\afterpartskip}{
  \vspace*{4\onelineskip}
  \begin{quotation}
    \textit{It is shown in a quite general manner that the quantization of a given system implies also the quantization of any other system to which it can be coupled.}
    \sourceatright{--- Bryce S. DeWitt, in ``Gravitation: An Introduction to Current Research'' (1962), p. 272}
  \end{quotation}
  \begin{quotation}
    \textit{Oh gravity, thou art a heartless bitch.}
    \sourceatright{--- Sheldon Cooper, Season 1, Episode 2, The Big Bang Theory}
  \end{quotation}
}
\part{Semiclassical gravity}

\chapter{The semiclassical Einstein equation}
\label{cha:einstein}

\IdxRanBegin{semiclassical-einstein-eq}

\section{Introduction}

The equation
\begin{equation}\label{eq:semiclassical_einstein}
  G_{ab} + \Lambda g_{ab} = \omega(\norder{T_{ab}})
\end{equation}
is called the \IdxMain{semiclassical-einstein-eq}\emph{semiclassical Einstein equation}.\footnote{Remember that we chose units such that $8 \uppi \mathrm{G} = c = 1$.}
It is obtained from the ordinary Einstein equation by replacing the classical stress-energy tensor with the (normal ordered) expectation value of the stress-energy tensor of a quantum field in a suitable quantum state~$\omega$.
Many developments in quantum field theory on curved spacetimes were driven by problems related to the quantum stress-energy tensor.
See also the monographs \cite{birrell:1984,hack:2010,wald:1994,fulling:1989} for an overview of the subject.

The semiclassical Einstein equation is usually understood as an equation that describes physics midway between the classical regime covered by the Einstein equation~\eqref{eq:einstein} and a full-fledged, but still elusive, quantum gravity.
Namely, in the semiclassical Einstein equation one takes into account that the `matter' content of the universe is fundamentally of quantum nature as described by quantum field theory on curved spacetimes, whereas the background structure which is the spacetime is treated on a classical level and is not separately quantized.

On the right-hand side one usually considers only Hadamard states.
The reason for restricting to Hadamard states is that only for Hadamard states the higher moments
\begin{equation*}
  \omega(\norder{T_{ab}(x)} \norder{T_{ab}(x)})
  \quad\text{\etc}
\end{equation*}
can be defined.
This is due to the fact that the $n$-point distributions of a state are distributions and thus they cannot simply be multiplied (\cf\ \cref{sub:ma_pullback}).
Since the two-point distributions of Hadamard states satisfy the microlocal spectrum condition, their wavefront set is contained inside a convex cone in $\dot{T}^*(M \times M)$ and hence powers of the $n$-point distribution are well-defined distributions.
This will be discussed in more detail in \cref{cha:fluctuations}.

\section{The stress-energy tensor}
\IdxRanBegin{stress-energy}

While the left-hand side of the semiclassical Einstein equation remains unchanged with respect to the ordinary Einstein equation, the right-hand side changes quite dramatically.
Namely, the classical stress-energy tensor $T_{a b}$ is replaced by a the expectation value of a quantum observable $\norder{T_{a b}}$ in a certain state $\omega$.
For this expression to be mathematically consistent, we need to require the conservation of the quantum stress-energy tensor, \ie, $\nabla^a \norder{T_{a b}} = 0$.

\subsection{The stress-energy tensor of the Klein--Gordon field}
\label{sub:T_ab_scalar}

We do not aim at discussing the semiclassical Einstein equation in all possible generality.
Instead we restrict our discussion to the semiclassical Einstein equation sourced by a scalar field.
The classical \IdxMain{stress-energy-scalar}\emph{stress-energy tensor of a Klein--Gordon field} $\varphi$ with equation of motion
\begin{equation*}
  \mathrm{P} \varphi = (\Box + \xi R + m^2) \varphi = 0
\end{equation*}
may be written as \cite{hollands:2005}
\begin{equation}\label{eq:classical_T_ab}\begin{split}
  T_{a b} & \defn \tfrac{1}{2} \nabla_a \nabla_b \varphi^2 + \tfrac{1}{4} g_{a b} \Box \varphi^2 - \varphi \nabla_a \nabla_b \varphi + \tfrac{1}{2} g_{a b} g^{c d} \varphi \nabla_c \nabla_d \varphi \\&\quad + \xi (G_{a b} - \nabla_a \nabla_b - g_{a b} \Box) \varphi^2 - \tfrac{1}{2} g_{a b} m^2 \varphi^2.
\end{split}\end{equation}
It may be obtained by varying the classical action of the scalar field with respect to the metric \cite[App.~E]{wald:1984}.
The \IdxMain{stress-energy-quantum}\emph{quantum stress-energy tensor} is obtained from the classical expression \eqref{eq:classical_T_ab} by replacing products of classical fields by Wick products of quantum fields, \ie,
\begin{equation}\label{eq:quantum_T_ab}\begin{split}
  \norder{T_{a b}} & \defn \tfrac{1}{2} \nabla_a \nabla_b \norder{\what\varphi^2} + \tfrac{1}{4} g_{a b} \Box \norder{\what\varphi^2} - \norder{\what\varphi\, \nabla_a \nabla_b \what\varphi} + \tfrac{1}{2} g_{a b} g^{c d} \norder{\what\varphi\, \nabla_c \nabla_d \what\varphi} \\&\quad + \xi (G_{a b} - \nabla_a \nabla_b - g_{a b} \Box) \norder{\what\varphi^2} - \tfrac{1}{2} g_{a b} m^2 \norder{\what\varphi^2}.
\end{split}\end{equation}
This expression is not obviously conserved as
\begin{equation*}
  \nabla^a \norder{T_{a b}} = - \norder{(\nabla_b \what\varphi) (\mathrm{P} \what\varphi)}
\end{equation*}
is not necessarily vanishing even if the Wick square was a solution of the equations of motion.
Nevertheless, either by a judicious choice in the renormalization freedom of $\norder{\varphi^2}$ and $\norder{\varphi \nabla_a \nabla_b \varphi}$ \cite{hollands:2005} or, equivalently, by a redefinition of the quantum stress-energy tensor \cite{moretti:2003}, a conserved quantum stress-energy tensor can be found; here we follow the approach of Hollands and Wald.
While the renormalization freedom can be used to find a conserved stress-energy tensor, it is not possible to impose the equations of motions on a locally covariant normal ordering prescription \cite{hollands:2005}.

\subsection{Renormalization of the stress-energy tensor}
\label{sub:T_ab_renormalization}
\IdxMain{stress-energy-ren}

The \emph{renormalization freedom} of $\norder{\what\varphi^2}$ and $\norder{\what\varphi\, \nabla_a \nabla_b \what\varphi}$ is spanned by $m^2, R$ and
\begin{equation*}\begin{split}
  & g_{a b} m^4,\; g_{a b} m^2 R,\; m^2 R_{a b},\; \nabla_a \nabla_b R,\; g_{a b} \Box R,\; \Box R_{a b},\; g_{a b} R^2,\; \\& R R_{a b},\; R_{a c} {R^c}_b,\; g_{a b} R_{c d} R^{c d},\; R^{c d} R_{c a d b},\; g_{a b} R_{c d e f} R^{c d e f}.
\end{split}\end{equation*}
We have to split this renormalization freedom into two classes: (a) combinations of terms that are conserved and represent a true renormalization freedom, and (b) combinations of terms that are not conserved and need to be fixed to produce a conserved $\norder{T_{a b}}$.

Denote by $I_{a b}$ and $J_{a b}$ the two conserved curvature tensors of derivative order~$4$:
\begin{align*}
  I_{a b} & \defn 2 R R_{a b} - 2 \nabla_a \nabla_b R - \tfrac{1}{2} g_{a b} \left( R^2 + 4 \Box R \right), \\
  J_{a b} & \defn
  2 R^{c d} R_{c a d b} - \nabla_a \nabla_b R - \Box R_{a b} - \tfrac{1}{2} g_{a b} \left( R_{c d} R^{c d} + \Box R \right).
\end{align*}
The following is often stated in the form of a conjecture (\eg, in \cite{wald:1978}):

\begin{proposition}
  $I_{a b}$ and $J_{a b}$ span the whole space of conserved fourth order local curvature tensors.
\end{proposition}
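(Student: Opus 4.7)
The plan is to reduce the proposition to a finite-dimensional linear algebra problem and then solve it. First, I would enumerate a complete basis for the space of symmetric rank-two tensors that can be built covariantly from the metric and the Riemann tensor and have derivative order four in the metric (equivalently, involve at most two derivatives of the curvature or a product of two curvature tensors with no derivatives). Using the algebraic symmetries of the Riemann tensor and the first Bianchi identity, one checks that a complete list is exhausted by the nine tensors already appearing in the renormalization freedom listed above, namely
\begin{equation*}
\nabla_a\nabla_b R,\; g_{ab}\Box R,\; \Box R_{ab},\; R R_{ab},\; g_{ab} R^2,\; R_a{}^{c}R_{bc},\; g_{ab} R_{cd}R^{cd},\; R^{cd}R_{cadb},\; g_{ab} R_{cdef}R^{cdef}.
\end{equation*}
A general candidate conserved tensor is then a linear combination $T_{ab} = \sum_{i=1}^{9} c_i\, T^{(i)}_{ab}$ with constant coefficients $c_i$.

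Next, I would compute the divergence $\nabla^a T_{ab}$ term by term. The relevant tools are the contracted second Bianchi identities $\nabla^a R_{ab}=\tfrac12\nabla_b R$ and $\nabla^a R_{abcd}=\nabla_c R_{bd}-\nabla_d R_{bc}$, together with the commutator of covariant derivatives (which introduces further Riemann factors but does not raise derivative order beyond five). Each divergence is thus a linear combination of a finite set of derivative-order-five covectors, for example $\nabla_b\Box R$, $\Box\nabla_b R$, $R_{ab}\nabla^a R$, $R^{cd}\nabla_b R_{cd}$, $R^{cd}\nabla_c R_{db}$, $R_{bcde}\nabla^c R^{de}$, and so on. Imposing $\nabla^a T_{ab}=0$ and collecting coefficients of each independent derivative-order-five covector yields a homogeneous linear system in the nine unknowns $c_i$.

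I would then solve that linear system and verify that its solution space is two-dimensional, spanned by the coefficient vectors corresponding to $I_{ab}$ and $J_{ab}$. Since both $I_{ab}$ and $J_{ab}$ are already known to be conserved (they arise, up to dimension-dependent factors, as the metric variations of $\int R^2\,\mu_g$ and $\int R_{ab}R^{ab}\,\mu_g$, see \eg~\cite{wald:1984}), it suffices to check that the kernel has dimension at most two.

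The main obstacle is bookkeeping: ensuring that the list of nine derivative-order-four tensors is genuinely exhaustive and, more delicately, that the derivative-order-five covectors used as ``output slots'' for the divergence are linearly independent modulo all applicable Bianchi and commutator identities. A clean way to control this is to invoke the general Lovelock-type theorem that every divergence-free, symmetric, local curvature tensor of a given derivative order arises as the metric Euler--Lagrange tensor of a scalar curvature invariant of the same order. The relevant Lagrangians of order four, modulo total divergences, are exhausted by $R^2$, $R_{ab}R^{ab}$, and $R_{abcd}R^{abcd}$; the first two variations yield $I_{ab}$ and $J_{ab}$, while in four dimensions the Gauss--Bonnet identity makes the variation of $R_{abcd}R^{abcd}$ a linear combination of the first two. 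This shortcut converts the proposition into a two-line consequence of a well-known variational result, and I would include it as the conceptual backbone of the argument while the direct divergence computation sketched above serves as an independent (if laborious) verification.
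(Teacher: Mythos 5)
Your primary argument --- enumerating the nine fourth-order curvature tensors, computing the divergence of a general linear combination using the contracted Bianchi identities, and showing the kernel of the resulting linear system is two-dimensional --- is exactly the route the paper takes (it carries out precisely this computation, following Davies, and exhibits the general conserved combination as a two-parameter family containing $I_{ab}$ and $J_{ab}$). On that count the proposal is sound, and you correctly identify the one delicate point, namely the linear independence of the order-five covectors modulo Bianchi and commutator identities.

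However, the ``shortcut'' you propose as the conceptual backbone is not a two-line consequence of a well-known result. The statement that \emph{every} divergence-free, symmetric, local curvature tensor of a given derivative order arises as the metric Euler--Lagrange tensor of a scalar invariant is an inverse problem of the calculus of variations for natural tensors; it is a deep theorem (and, restricted to fourth order, it is essentially equivalent to the proposition you are trying to prove --- the very statement the paper notes is usually only \emph{conjectured} in the physics literature, \eg\ by Wald). Invoking it here is therefore close to circular, and at best replaces an elementary finite computation by an appeal to a much harder general result. The Gauss--Bonnet observation (that in four dimensions the variation of $R_{abcd}R^{abcd}$ is a combination of $I_{ab}$ and $J_{ab}$) and the variational origin of $I_{ab}$ and $J_{ab}$ are fine as motivation, but the direct divergence computation must carry the full weight of the proof; you should present it as the proof rather than as a ``verification.''
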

\begin{proof}
  It is an easy task to confirm this statement by a direct computation along the lines of \cite{davies:1977}:
  Taking the linear span of all fourth order curvature tensors
  \begin{equation*}\begin{split}
    & \nabla_a \nabla_b R,\; g_{a b} \Box R,\; \Box R_{a b},\; g_{a b} R^2,\; R R_{a b},\; R_{a c} {R^c}_b, \\
    & g_{a b} R_{c d} R^{c d},\; R^{c d} R_{c a d b},\; g_{a b} R_{c d e f} R^{c d e f},
  \end{split}\end{equation*}
  one can show that any covariantly conserved combination $C_{a b}$ must be of the form
  \begin{equation*}\begin{split}
    C_{a b} & = \alpha_1 \nabla_{a} \nabla_{b} R - \alpha_2 g_{a b} \Box R + 2 (\alpha_1 + \alpha_2) \Box R_{a b} - \tfrac{1}{4} (\alpha_1 + 2 \alpha_2) g_{a b} R^2 \\&\quad + (\alpha_1 + 2 \alpha_2) R R_{a b} + (\alpha_1 + \alpha_2) g_{a b} R_{c d} R^{c d} - 4 (\alpha_1 + \alpha_2) R^{c d} R_{c a d b},
  \end{split}\end{equation*}
  \ie, one obtains (for general metrics) a two-dimensional solution space.
  For $\alpha_1 = -2, \alpha_2 = 2$ and $\alpha_1 = -1, \alpha_2 = 1/2$ we recover the tensors $I_{a b}$ and $J_{a b}$, respectively.
\end{proof}

\begin{remark}
  In conformally flat spacetimes (\eg, FLRW spacetimes) the Weyl tensor vanishes and thus the solution space reduces to one dimension as the two tensors become proportional: $I_{a b} = 3 J_{a b}$.
  On the level of traces this proportionality holds for all metrics, namely, ${I^a}_a = 3 {J^a}_a = - 6 \Box R$.
\end{remark}

We therefore find that the \emph{conserved} renormalization freedom of $\norder{T_{a b}}$ is spanned by $m^4 g_{a b}$, $m^2 G_{a b}$, $I_{a b}$, $J_{a b}$.
The remaining terms renormalization parameters are fixed by the requirement of $\norder{T_{ab}}$ to be conserved.

\subsection{Point-splitting regularization of the stress-energy tensor}
\label{sub:T_ab_pointsplit}

Up to the renormalization freedom, a normal ordering prescription for the stress-energy tensor is given by the \IdxMain{point-split-reg}\emph{Hadamard point-splitting method}.
Given two linear (possibly tensorial) differential operators $\mathcal{D}_1, \mathcal{D}_2$, the Hadamard point-splitting method yields the expectation value of $\norder{(\mathcal{D}_1 \what\varphi)(\mathcal{D}_2 \what\varphi)}$ by seperating points, regularizing and then taking the coincidence limit, that is
\begin{equation*}
  \omega\big(\norder{(\mathcal{D}_1 \what\varphi)(\mathcal{D}_2 \what\varphi)}\big) = \lim_{x' \to x} \mathcal{D}^{\vphantom{\prime}}_1 \mathcal{D}'_2 \big( \omega_2(x, x') - \Hadamard(x, x') \big) = [\mathcal{D}^{\vphantom{\prime}}_1 \mathcal{D}'_2 w],
\end{equation*}
where $\mathcal{D}'_2$ acts on $x'$ and is (implicitly) parallel transported during the limit $x' \to x$.

In the Hadamard point-splitting approach the stress-energy tensor in a state $\omega$ of sufficient regularity is thus calculated as
\begin{equation*}\begin{split}
  \omega(\norder{T_{a b}}) & = \frac{1}{8 \uppi^2} \big( \mathcal{T}_{a b} [w] + \mathcal{T}\indices{_{ab}^{\!cd}} [\nabla_c \nabla_d w] \big) + \frac{1}{4 \uppi^2} [v_1] g_{a b} \\&\quad + c_1 m^4 g_{a b} + c_2 m^2 G_{a b} + c_3 I_{a b} + c_4 J_{a b},
\end{split}\end{equation*}
where $\mathcal{T}_{a b}$ and $\mathcal{T}\indices{_{ab}^{\!cd}}$ are the differential operators acting, respectively, on $\norder{\what\varphi^2}$ and $\norder{\what\varphi\, \nabla_a \nabla_b \what\varphi}$ in~\eqref{eq:quantum_T_ab}:
\begin{align*}
  \mathcal{T}_{ab} & \defn \tfrac{1}{2} \nabla_a \nabla_b + \tfrac{1}{4} g_{a b} \Box\!{} + \xi (G_{a b} - \nabla_a \nabla_b - g_{a b} \Box) - \tfrac{1}{2} g_{a b} m^2, \\
  \mathcal{T}\indices{_{ab}^{\!cd}} & \defn - \delta_a^c \delta_b^d + \tfrac{1}{2} g_{a b} g^{c d}.
\end{align*}
Furthermore, $c_i$ are dimensionless (renormalization) constants fixed once for all spacetimes\footnote{$c_1, c_2$ are due to the renormalization of $\norder{\what\varphi^2}$ and $c_3, c_4$ correspond to the renormalization freedom of $\norder{\what\varphi\,\nabla_a\nabla_b\what\varphi}$} and the addition of the Hadamard coefficient $[v_1]$ (see \eqref{eq:hadamard_v1} for an explicit expression) makes the quantum stress-energy tensor conserved because\Idx{hadamard-coeff}
\begin{equation*}
  \lim_{x' \to x} \nabla'_a \mathrm{P}\, \Hadamard(x, x') = -\frac{1}{4 \uppi^2} \nabla_a [v_1].
\end{equation*}
Observe that $c_1 m^4 g_{a b}$ can be interpreted as a renormalization of the cosmological constant and $c_2 m^2 G_{a b}$ corresponds to a renormalization of Newton's gravitational constant $\mathrm{G}$; the remaining two terms have no classical interpretation.

\subsection{Trace of the stress-energy tensor}
\label{sub:T_ab_trace}
\IdxRanBegin{stress-energy-trace}

Because of its simple form, a first step towards analyzing the stress-energy tensor of a scalar field is often the study its trace, which is given by
\begin{equation}\label{eq:quantum_T_ab_trace}
  \norder{T} \defn g^{a b} \norder{T_{a b}} = -m^2 \norder{\what\varphi^2} + 3 \big( \tfrac{1}{6} - \xi \big) \Box \norder{\what\varphi^2} - \norder{\what\varphi\, \mathrm{P} \what\varphi}.
\end{equation}
It follows that the trace of the stress-energy tensor is calculated via point-splitting as
\begin{equation}\label{eq:T_ab_trace_pointsplit}\begin{split}
  \omega(\norder{T}) & = - \Big( m^2 - 3 \big( \tfrac{1}{6} - \xi \big) \Box\! \Big) \frac{1}{8 \uppi^2} [w] + \frac{1}{4 \uppi^2} [v_1] \\&\quad + 4 c_1 m^4 - c_2 m^2 R - (6 c_3 + 2 c_4) \Box R,
\end{split}\end{equation}
where $c_i$ are the same constants as above and we used\Idx{hadamard-coeff}
\begin{equation*}
  \lim_{x' \to x} \mathrm{P}\, \Hadamard(x, x') = -\frac{3}{4 \uppi^2} [v_1].
\end{equation*}

\Cref{eq:quantum_T_ab_trace,eq:T_ab_trace_pointsplit} clearly show the so-called \IdxMain{trace-anom}\emph{trace anomaly} \cite{wald:1978}.
Namely, because the normally ordered quantum field does not satisfy the equations of motion, the massless, conformally coupled scalar field ($m=0$ and $\xi=1/6$) has non-vanishing trace of the stress-energy tensor although it is conformally invariant.
It is not possible to remove the trace anomaly by a judicious choice of the renormalization constants because $[v_1]$ is not a polynomial of~$m^4$, $m^2 R$ and~$\Box R$.
The trace anomaly is a distinct feature of the quantum theory and does not appear in a classical theory because the classical fields are solutions of the equation of motion.

\IdxRanEnd{stress-energy-trace}
\IdxRanEnd{stress-energy}

\section{The semiclassical Friedmann equations}
\label{sec:semiclassical_friedmann}
\IdxRanBegin{semiclassical-friedmann}

On FLRW spacetimes $(M, g)$ the classical Einstein equation~\eqref{eq:einstein} simplifies significantly to the first and second Friedmann equation~\eqref{eq:1st_friedmann} and~\eqref{eq:2nd_friedmann}.\Idx{fst-friedmann}\Idx{snd-friedmann}
Since the left-hand side remains unchanged when crossing over to the semiclassical Einstein equation, also the semiclassical equations must simplify in an analogue way for every state that satisfies the equation.
Whence one obtains the semiclassical Friedmann equations
\begin{subequations}\label{eq:semiclassical_friedmann}\begin{align}
  6 H^2 + \overline{R} & = 2 \omega(\norder{\rho}) + 2 \Lambda, \\
  6 (\dot H + H^2) & = - \omega(\norder{\rho} + 3 \norder{p}) + 2 \Lambda = - \omega(\norder{T} + 2 \norder{\rho}) + 2 \Lambda,
\end{align}\end{subequations}
where the \IdxMain{energy-den-quantum}\emph{quantum energy-density}~$\norder{\rho}$ and the \IdxMain{pressure-quantum}\emph{quantum pressure}~$\norder{p}$ are constructed out of the quantum stress-energy tensor just like their classical analogues are obtained from the classical stress-energy tensor.
Henceforth we will restrict again to flat FLRW spacetimes but similar statements can also be made in the case of elliptic and hyperbolic spatial sections.

States that satisfy the semiclassical Einstein equation need to respect the symmetries of the spacetime.
Therefore, any candidate state for a solution of the semiclassical Friedmann equations must be homogeneous and isotropic.
That is, under reasonable assumptions, it must be a state of the form discussed in \cref{sub:homogeneous_isotropic_state}.
Important examples of homogeneous and isotropic states are the adiabatic states (\cref{sub:adiabatic}) and the states of low energy (\cref{sub:low_energy}).

\subsection{Semiclassical Friedmann equations for the scalar field}

For the scalar field, the energy density and pressure are obtained from~\eqref{eq:quantum_T_ab} and they read
\begin{align}
  \label{eq:energy_density_wick}
  \norder{\rho} & = \Big( \big( \tfrac{1}{2} - \xi \big) \partial_t^2 - \big( \tfrac{1}{4} - \xi \big) \Box\!{} + 3 \xi H^2 + \tfrac{1}{2} m^2 \Big) \norder{\what\varphi^2} - \norder{\what\varphi\, \big( \partial_t^2 - \tfrac{1}{2} \Box\! \big) \what\varphi}, \\
  \notag
  3 \norder{p} & = \Big( \big( \tfrac{1}{2} - \xi \big) \partial_t^2 + \big( \tfrac{1}{4} - 2 \xi \big) \Box\!{} - \xi (6 \dot{H} + 9 H^2) - \tfrac{3}{2} m^2 ) \norder{\what\varphi^2} - \norder{\what\varphi\, \big( \partial_t^2 + \tfrac{1}{2} \Box\! \big) \what\varphi}
\end{align}
with respect to cosmological time~$t$.
The expressions for conformal time~$\tau$ are given by the replacement $\partial_t \mapsto a^{-1}\partial_\tau$.
A short calculation shows that the difference $3 \norder{p} - \norder{\rho}$ agrees with~\eqref{eq:quantum_T_ab_trace}.

The expectation values of~$\norder{\rho}$ and~$\norder{p}$ in a state~$\omega$ can again be calculated via Hadamard point-splitting.
For the energy density this approach yields:
\begin{equation}\label{eq:energy_density_pointsplit}\begin{split}
  \omega(\norder{\rho}) & = \mathcal{P} [w] - \lim_{x' \to x} \big[\big( \partial_t^2 - \tfrac{1}{2} \Box\! \big) w \big] - \frac{1}{4 \uppi^2} [v_1] \\&\quad - c_1 m^4 + 3 c_2 m^2 H^2 - 6 (3 c_3 + c_4) (2 H \ddot{H} - \dot{H}^2 + 6 H^2 \dot{H}),
\end{split}\end{equation}
where we have used the differential operator
\begin{equation*}
  \mathcal{P} \defn \big( \tfrac{1}{2} - \xi \big) \partial_t^2 - \big( \tfrac{1}{4} - \xi \big) \Box\!{} + 3 \xi H^2 + \tfrac{1}{2} m^2
\end{equation*}
and the renormalization constants~$c_i$ are again the same as in \cref{sub:T_ab_pointsplit,sub:T_ab_trace}.
The coincidence limit of the Hadamard coefficient $v_1$ on FLRW spacetimes can be obtain from \eqref{eq:hadamard_v1}:\Idx{hadamard-coeff}
\begin{equation}\label{eq:hadamard_v1_flrw}\begin{split}
  2 [v_1] & = \tfrac{1}{4} m^4 - 3 \big( \tfrac{1}{6} - \xi \big) \big( \dot{H} + 2 H^2 \big) m^2 + 9 \big( \tfrac{1}{6} - \xi \big)^2 \big( \dot{H}^2 + 4 \dot{H} H^2 + 4 H^4 \big) \\&\quad - \tfrac{1}{30} \big( \dot{H} H^2 + H^4 \big) + \tfrac{1}{12} \big( \tfrac{1}{5} - \xi \big) \Box R.
\end{split}\end{equation}
The point-split expression for $\omega(\norder{p})$ will not concern us here and is left as an exercise to the reader.

The next step is to replace the Hadamard point-splitting prescription with something a that is slightly more useful under the given circumstances.

\subsection{Adiabatic regularization}
\label{sub:adiabatic_regularization}
\IdxRanBegin{adiabatic-reg}
\IdxRanBegin{point-split-reg}

An effective means of regularizing in momentum space is provided by the \IdxMain{adiabatic-reg}\emph{adiabatic regularization} \cite{parker:1974,bunch:1980}, which is essentially equivalent to the Hadamard point-splitting regularization discussed above.
In the adiabatic regularization prescription one subtracts from a homogeneous and isotropic state $\omega$ the bidistribution specified by the adiabatic modes $W_k^{\smash{\mbox{\tiny $(\!n\!)$}}}$ of sufficiently high order~$n$.
The usefulness of this prescription lies in the fact that the bidistributions constructed out of the adiabatic modes satisfy the microlocal spectrum condition in the Sobolev sense up to a certain order.
Consequently, the two regularization prescriptions can only differ by local curvature tensors because both the Hadamard parametrix and the adiabatic modes are constructed from the local geometric structure of the spacetime.

It is therefore necessary to find the difference between point-splitting and adiabatic regularization
\begin{equation}\label{eq:poinsplit_adiabatic_difference}
  \lim_{x' \to x} \mathcal{D}\, \bigg( \Hadamard_n(x, x') - \frac{1}{(2 \pi)^3 a(\tau) a(\tau')} \!\int_{\RR^3}\! \conj{W}{}_{\!k}^{\smash{\mbox{\tiny $(\!m\!)$}}}(\tau) W{}_{\!k}^{\smash{\mbox{\tiny $(\!m\!)$}}}(\tau')\, \e^{\im \vec{k} \cdot (\vec{x} - \vec{x}')}\, \dif\vec{k} \bigg)
\end{equation}
for various (bi)differential operators~$\mathcal{D}$ on $C^\infty(M \times M)$ and the minimal orders~$n,m$ depending on the order of~$\mathcal{D}$.
Note that we omitted the necessary $\varepsilon$-regularization in the integrand.

It is helpful to note that the (truncated) Hadamard parametrix on FLRW is spatially isotropic and homogeneous and therefore $\Hadamard(\tau, \vec{x}; \tau', \vec{x}') = \Hadamard(\tau, \tau', \abs{\vec{x} - \vec{x}'})$.
This fact can simplify some computations because the coincidence limit can be taken in two steps: first one takes the limit onto the equal time surface $\tau = \tau'$ and then the spatial coincidence limit $\vec{x}' \to \vec{x}$.
Making efficient use of the equation of motion, this has the advantage that we can replace any higher than first order time derivative in~$\mathcal{D}$ by a spatial derivative and one needs to calculate the temporal coincidence limit only with the differential operators $\partial_\tau$, $\partial'_\tau$ and $\partial^{\vphantom{\prime}}_\tau \partial'_\tau$.
A proof of this statement can be found in~\cite[Chap.~5]{schlemmer:2010}.

The computation~\eqref{eq:poinsplit_adiabatic_difference} can be done in a general and efficent manner with a computer algebra system by combining the method of Avramidi to calculate Hadamard coefficients (\cref{sub:bi_avramidi}), the coordinate expansion of the world function (\cref{sub:bi_coordinate_synge}) and analytic Fourier transformation.
See also \cite{eltzner:2011} for a related approach or \cite{schlemmer:2010} for a different method that makes more efficient use of the symmetries of FLRW spacetimes.

The difference between point-splitting and adiabatic regularization for the Wick square can be calculated in this way as
\begin{equation}\label{eq:adiabatic_wick_square}\begin{split}
  \MoveEqLeft \lim_{x' \to x} \left( \Hadamard_0(x, x') - \frac{1}{(2 \uppi)^3 a^2} \int_{\RR^3} \frac{1}{2 \omega_k}\, \e^{\im \vec{k} \cdot (\vec{x} - \vec{x}')}\, \dif\vec{k} \right) \\
  & = \frac{B}{16 \uppi^2} \Big( 1 - 2 \gamma - \ln\big( \tfrac{1}{2} \lambda^2 B \big) \Big) + \frac{R}{288 \uppi^2},
\end{split}\end{equation}
where we used the potential $B = m^2 + (\xi - \tfrac{1}{6}) R$ and $\gamma$ denotes Euler's constant.

The exact form of the mode subtraction performed in the adiabatic regularization is inessential as long as it has the right $\vec{k}$-asymptotics (\cf\ \cite[Chap.~5]{schlemmer:2010}).
For example, instead of subtracting the adiabatic modes of order zero in~\eqref{eq:adiabatic_wick_square}, one can perform the following subtraction (\cf\ \cite{pinamonti:2015,pinamonti:2011} and \cref{sub:wick_square}):
\begin{equation}\label{eq:adiabatic_wick_square_2}\begin{split}
  \MoveEqLeft \lim_{x' \to x} \left( \Hadamard_0(x, x') - \frac{1}{(2 \uppi)^3 a^2} \int_{\RR^3} \bigg( \frac{1}{2 \omega_k(\eta)} - \frac{\omega_k^2 - \omega_k(\eta)^2}{4 \omega_k(\eta)^3}  \bigg)\, \e^{\im \vec{k} \cdot (\vec{x} - \vec{x}')}\, \dif\vec{k} \right) \\
  & = \frac{B(\eta) a(\eta)^2}{16 \uppi^2 a^2} - \frac{B}{16 \uppi^2} \bigg( 2 \gamma + \ln \frac{\lambda^2 a(\eta)^2 B(\eta)}{2 a^2} \bigg) + \frac{R}{288 \uppi^2},
\end{split}\end{equation}
where $\eta$ is an arbitrary instant in conformal time.
But this realization is much more important once one attempts to calculate the adiabatic subtraction with several derivatives such as the energy density, where higher order adiabatic modes need to be subtracted.
Here one generically finds that the adiabatic regularization involves terms with higher derivatives of the metric that do not become singular in the coincidence limit, \ie, terms which do not need to be subtracted to achieve the regularization.

To find an adiabatic subtraction for other components of the stress-energy tensor one can follow the approach of Bunch~\cite{bunch:1980} and take the second order adiabatic modes but discard all (non-singular) terms involving higher than fourth order time-derivatives of the metric.
Comparing this subtraction scheme with the Hadamard point-splitting regularization for the energy density, we find (see also~\cite{hack:2013})
\begin{align}
  \notag\MoveEqLeft \lim_{x' \to x} \Bigg( \mathop{\mathcal{D}_{\rho}} a(\tau) a(\tau') \Hadamard_1(x, x') - \frac{1}{(2 \uppi)^3 a^2} \int_{\RR^3} \e^{\im \vec{k} \cdot (\vec{x} - \vec{x}')} \bigg( \frac{k}{2 a^2} + \frac{m^2 - 6 (\xi - \tfrac{1}{6}) H^2}{4 k} \\\notag&\quad - \frac{m^4 a^2 + 12 (\xi - \tfrac{1}{6}) m^2 a^2 H^2 + 36 (\xi - \tfrac{1}{6})^2 (6 H^2 \dot{H} - \dot{H}^2 + 2 H \ddot{H})}{16 k^3} \bigg)\, \dif\vec{k} \Bigg) \\\notag
  & = - \frac{1}{4 \uppi^2} [v_1] - \frac{H^4}{960 \uppi^2} + \Big(2 - 2 \gamma + \ln \frac{a^2}{2 \lambda^2} \Big) \frac{m^4}{64 \uppi^2} \\\notag&\quad + \bigg( 1 + 18 \big( \xi - \tfrac{1}{6} \big) \Big( 2 - 2\gamma + \ln \frac{a^2}{2 \lambda^2} \Big) \bigg) \frac{m^2 H^2}{96 \uppi^2} + \frac{3 (\xi - \tfrac{1}{6})^2 H^2 R}{8 \uppi^2} \\\label{eq:regularized_energy}&\quad + \bigg( \frac{1}{17280 \uppi^2} - \frac{\xi - \tfrac{1}{6}}{288 \uppi^2} - \frac{(\xi - \tfrac{1}{6})^2}{32 \uppi^2} \Big( 2 - 2 \gamma + \ln \frac{a^2}{2 \lambda^2} \Big) \bigg) {I_0}^0,
\end{align}
where the (bi)differential operator
\begin{equation*}
  2 a^4 \mathcal{D}_{\rho} = a^2 \big( m^2 + (1 - 6 \xi) H^2 \big) \boxtimes \id{} + 2 (6 \xi - 1) a H \partial_\tau \boxtimes \id{} + \partial_\tau \boxtimes \partial_\tau + \sum_{i = 1}^3 \partial_{x^i} \boxtimes \partial_{x^i}
\end{equation*}
can be derived from~\eqref{eq:energy_density_wick}.
In the case of conformal coupling it is in fact sufficient to work with adiabatic modes of order zero for this computation and many of the terms in the above formula drop out.

Observe that both the point-splitting regularization with the truncated Hadamard parametrix and the adiabatic regularization do not depend on arbitrarily high derivatives in the metric.
Consequently, it is possible to perform both regularization schemes in non-smooth spacetimes.
This will be essential in \cref{cha:einstein_solutions}.

\IdxRanEnd{adiabatic-reg}
\IdxRanEnd{point-split-reg}
\IdxRanEnd{semiclassical-friedmann}
\IdxRanEnd{semiclassical-einstein-eq}


\chapter[Solutions of the semiclassical Einstein equation][Solutions of the semiclassical Einstein equation]{Solutions of the semiclassical Einstein eq.}
\label{cha:einstein_solutions}

\IdxRanBegin{semiclassical-einstein-sol}

\section*{Introduction}

If one wants to attach any physical meaning to the semiclassical Einstein equation~\eqref{eq:semiclassical_einstein}, it is necessary that solutions of this equation exist and that it possesses a well-posed initial value problem.
It is not difficult to show that solutions do indeed exist at least in two very special scenarios: Minkowski spacetime and de Sitter spacetime.
In both cases the semiclassical Einstein equation (or, alternatively, the semiclassical Friedmann equations) can be solved for a specific choice of the renormalization constants.

Solutions of the semiclassical Friedmann equations were investigated already numerically by Anderson in a series of four articles beginning with the massless conformally coupled scalar field \cite{anderson:1983,anderson:1984} and later also considering the massive field \cite{anderson:1985,anderson:1986}.
Anderson discovered a complex landscape of solutions depending on the choice of renormalization constants and studied in particular solutions which show an asymptotically classical behaviour at late times.

More recently, Pinamonti discussed the local existence of solutions to the semiclassical Friedmann equations in so-called null Big Bang (NBB) spacetimes~\cite{pinamonti:2011}, where initial values are specified on the initial lightlike singularity.

In~\cite{pinamonti:2015} the author and Pinamonti proved for the first time the existence of global solutions to the semiclassical Einstein equation coupled to a massive, conformally coupled scalar field in `non-trivial' spacetimes.
More precisely, it was shown that the semiclassical Friedmann equations can solved simultaneously for the spacetime metric (\ie, the scale factor or the Hubble function) and a quantum state from initial values given at some Cauchy surface.
This was achieved by showing existence and uniqueness of local solutions for given initial values and subsequently extending local solutions to a maximal solution that cannot be extended any further because it exists either eternally or reaches a singularity.
In this chapter, a slightly updated version of the results of~\cite{pinamonti:2015} will be presented and complemented with recent numerical results.

In more generality, solving the semiclassical Einstein equations for a given quantum field means the following:

\textit{Given initial values for a spacetime metric and a quantum state prescribed on a three-dimensional Riemannian manifold $(\Sigma, h)$, do there exist a globally hyperbolic manifold $(M, g)$ of which $(\Sigma, h)$ is a Cauchy surface and a state~$\omega$ (preferably a Hadamard state) such that the semiclassical Einstein equation~\eqref{eq:semiclassical_einstein} is fulfilled?}

In a concrete case this problem can be tackled by selecting a class of globally hyperbolic spacetimes that are foliated by the same topological Cauchy surface and a functional that associates to each spacetime in this class a unique states.
For this approach to succeed, it is expected that the mentioned functional must satisfy some minimal regularity conditions with respect to the metric, \eg, continuous differentiability.

\section{Preliminaries}

\subsection{The traced stress-energy tensor}
\Idx{stress-energy-trace}

Recall from \cref{sub:T_ab_trace} that the expectation value of the traced stress-energy tensor for a massive, conformally coupled scalar field reads
\begin{align}
  \label{eq:T_ab_trace_conformal}
  \omega(\norder{T}) & = - m^2 \omega(\norder{\what\varphi^2}) + \frac{1}{4 \uppi^2} [v_1] - (6 c_3 + 2 c_4) \Box R \\
  \notag
  & = - \frac{m^2}{8 \uppi^2} [w] + \frac{1}{4 \uppi^2} [v_1] + 4 c_1 m^4 - c_2 m^2 R - (6 c_3 + 2 c_4) \Box R,
\end{align}
where the Hadamard coefficient~$[v_1]$ is obtained from~\eqref{eq:hadamard_v1_flrw} with $\xi = 1/6$ as
\begin{equation*}
  [v_1] = -\frac{H^2}{60} (\dot H + H^2) - \frac{1}{720} \Box R + \frac{m^4}{8}.
\end{equation*}
Working with the traced stress-energy tensor~$\norder{T}$ simplifies calculations considerably compared to the energy density~$\norder{\rho}$ given by~\eqref{eq:energy_density_pointsplit}.

In order to find solutions of the semiclassical Friedmann equation with the methods discussed here, it is necessary to fix the renormalization constants according to the following rules:

We will choose $c_3, c_4$ in such a way as to cancel higher order derivatives of the metric coming from $[v_1]$.
Following \cite{wald:1977} and \cite[Chap.~4.6]{wald:1994}, this is necessary because we want to have a well-posed initial value problem for a second-order differential equation.
Removing the $\Box R$ term might not be suitable for describing the physics close to the initial Big Bang singularity.
In the Starobinsky model this term is the single term which is considered to drive a phase of rapid expansion close to the Big Bang, see the original paper of Starobinsky~\cite{starobinsky:1980}, its further development \cite{kofman:1985} and also \cite{hack:2010,hack:2013} for a recent analysis.
However, this is surely suitable to describe the physics in the regime where $\Box H \ll H^4$.

Furthermore, remember that changing $c_1$ corresponds to a renormalization of the cosmological constant $\Lambda$, whereas a change of $c_2$ corresponds to a renormalization of the Newton constant $\mathrm{G}$ (\cf\ \cref{sub:T_ab_pointsplit}).
For this reason we choose $c_1$ in such a way that no contribution proportional to $m^4$ is present in $\omega(\norder{T})$ and we set $c_2$ in order to cancel the terms proportional to $m^2 R$ in $\omega(\norder{T})$.
All in all, we fix the renormalization constants as
\begin{equation*}
  4 c_1 = - \frac{1}{32 \uppi^2},
  \quad
  c_2 = \frac{1}{288 \uppi^2}
  \quad \text{and} \quad
  6 c_3 + 2 c_4 = - \frac{1}{2880 \uppi^2}.
\end{equation*}

\subsection{The semiclassical Friedmann equations}
\Idx{semiclassical-friedmann}

We can rewrite the semiclassical Friedmann equations to make use of the simplicity of the traced stress-energy tensor for the conformally coupled scalar field:
Adding the equations~\eqref{eq:semiclassical_friedmann} (for flat FLRW spacetimes) yields
\begin{equation}\label{eq:semiclassical_friedmann_trace}
  - 6 (\dot H + 2 H^2) = \omega(\norder{T}) - 4 \Lambda.
\end{equation}
Since $\norder{T} = 3\norder{p} - \norder{\rho}$, this equation is equivalent to~\eqref{eq:semiclassical_friedmann} if we also specify an initial value $\rho_0 \defn \omega(\norder{\rho})(\tau_0)$ for the expectation value of the energy density at a time $\tau_0$:
\begin{equation}\label{eq:constraint-einstein}
  3 H_0^2 \defn 3 H(\tau_0)^2 = \rho_0 + \Lambda.
\end{equation}
We call \eqref{eq:constraint-einstein} the \emph{constraint equation} because it relates the initial value $H_0$ for the spacetime geometry with the initial value $\rho_0$ for its matter content; these values cannot be fixed independently.

Inserting \eqref{eq:T_ab_trace_conformal} into~\eqref{eq:semiclassical_friedmann_trace} and solving for $\dot H$ we thus find
\begin{equation}\label{eq:friedmann_differential}
  \dot{H} = \frac{1}{H^2_c - H^2} \big( H^4 - 2 H_c^2 H^2 - \tfrac{15}{2} m^4 + 240 \uppi^2 (m^2 \omega(\norder{\what\varphi^2}) + 4 \Lambda) \big),
\end{equation}
which can integrated in conformal time to give
\begin{equation}\label{eq:friedmann_integral}\begin{split}
  H(\tau) = H_0 + \int_{\tau_0}^\tau & \frac{a(\eta)}{H^2_c - H(\eta)^2} \big( H(\eta)^4 - 2 H_c^2 H(\eta)^2 - \tfrac{15}{2} m^4 \\& + 240 \uppi^2 ( m^2 \omega(\norder{\what\varphi^2})(\eta) + 4 \Lambda) \big)\, \dif\eta,
\end{split}\end{equation}
where $H_c^2 \defn 1440 \uppi^2 / (8 \uppi \mathrm{G}) = 180 \uppi / \mathrm{G}$.
This integral equation will be our principal tool to solve the semiclassical Einstein equation.


\subsection{A choice of states}
\Idx{adiabatic-zero}

As discussed in the introduction, a possible approach to solving the semiclassical Einstein equation is to select a class of candidate spacetimes and then for each of these spacetimes a unique state.
Here we restrict ourselves to the semiclassical Friedmann equations as given by~\eqref{eq:semiclassical_friedmann_trace} and~\eqref{eq:constraint-einstein}, \viz, the candidate spacetimes are flat FLRW spacetimes.
It remains to find a functional that associates to each flat FLRW spacetime a suitable state.

It would be desirable to associate to each spacetime a Hadamard state.
In the literature there are a few concrete examples of such states but unfortunately none of them are suitable for our purposes.
On FLRW spacetimes there is the notable construction of states of low energy discussed in \cref{sub:low_energy}, which are also Hadamard.
But the employed construction is based on a smearing of the modes with respect to an extended function of time and \latin{a priori} we do not even know if a solution of~\eqref{eq:friedmann_integral} exists in the future of the initial Cauchy surface.
The holographic constructions of Hadamard states, discussed in \cref{sec:holographic}, requires that the spacetime has certain asymptotic properties which are not under control for generic FLRW spacetimes.

Moreover, for technical reasons to be discussed later, we also have to consider spacetimes with $C^1$ metrics.
But on spacetimes with non-smooth metrics Hadamard states cannot exist.
Instead we will use the construction of adiabatic states of order zero as presented in~\cref{sub:adiabatic_zero}, which is also applicable to spacetimes with low regularity.
The price we have to pay for working with non-Hadamard states is that the solutions of~\eqref{eq:friedmann_integral} are not be smooth spacetimes.

We recall, that the states constructed in~\cref{sub:adiabatic_zero} are of the form
\begin{equation}\label{eq:2pt_adiabatic_zero}
  \omega_2(x, x') = \frac{1}{(2 \uppi)^3 a(\tau) a(\tau')} \int_{\RR^3} \conj{\chi}{}_{k}(\tau) \chi_k(\tau')\, \e^{\im \vec{k} \cdot (\vec{x} - \vec{x}')}\, \dif \vec{k}
\end{equation}
with modes $\chi_k = \sum_n \chi_k^{\smash{\mbox{\tiny $(\!n\!)$}}}(\tau)$ given by
\begin{equation*}
  \chi_k^{\smash{\mbox{\tiny $(\!0\!)$}}}(\tau) = \frac{1}{\sqrt{2 k_0}}\, \e^{\im k_0 \tau},
  \quad
  \chi_k^{\smash{\mbox{\tiny $(\!n\!)$}}}(\tau) = \int_{\tau_0}^\tau \frac{\sin\big( k_0 (\eta - \tau) \big)}{k_0} V(\eta) \chi_k^{\smash{\mbox{\tiny $(\!n-1\!)$}}}(\eta)\, \dif\eta,
\end{equation*}
where $k_0^2 = k^2 + a(\tau_0)^2 m^2$ and $V(\tau) = m^2 (a(\tau)^2 - a(\tau_0)^2)$.
Note that $\chi_k$ and $\omega_2$ can be defined in this way even if the scale factor~$a$ is only $C^1$.
This may be confirmed by taking a closer look at \cref{prop:adiabatic_zero}.

\subsection{Adiabatic regularization of the Wick square}
\label{sub:wick_square}
\IdxRanBegin{adiabatic-reg}

The integral equation~\eqref{eq:friedmann_integral} does not contain the two-point function but instead only its smooth part $w$ in the coincidence limit, \ie, Hadamard point-splitting has to applied to~\eqref{eq:2pt_adiabatic_zero}.

The equation~\eqref{eq:friedmann_integral} that we seek to solve contains the Wick square $\norder{\what\varphi^2}$ in a state~$\omega$ and thus (on a smooth spacetime) we would need to compute
\begin{equation}\label{eq:wick_square_pointsplit}
  \omega(\norder{\what\varphi^2}) = \lim_{x' \to x} \big( \omega_2(x, x') - \Hadamard(x, x') \big) - 4 c_1 m^4 + c_2 m^2 R.
\end{equation}
Since we are on a FLRW spacetime, we can use the method of adiabatic regularization instead (\cref{sub:adiabatic_regularization}) to perform an equivalent subtraction on the level of modes.
The difference of the two approaches is given in~\eqref{eq:adiabatic_wick_square} or, equivalently,~\eqref{eq:adiabatic_wick_square_2}.
It is useful to show directly that this regularization prescription indeed regularizes the two-point distribution~\eqref{eq:2pt_adiabatic_zero}:

\begin{proposition}\label{prop:regularized_state}
  The regularized two-point distribution
  \begin{equation*}
    \omega_2(\tau, \vec{x} - \vec{x}') - \lim_{\varepsilon \to 0^+} \frac{1}{(2 \uppi)^3 a(\tau)^2} \int_{\RR^3} \left( \frac{1}{2 k_0} - \frac{V(\tau)}{4 k_0^3} \right) \e^{\im \vec{k} \cdot (\vec{x} - \vec{x}')} \e^{-\varepsilon k}\, \dif\vec{k},
  \end{equation*}
  with $\omega_2$ given by~\eqref{eq:2pt_adiabatic_zero}, converges in the coinciding point limit for all continuously differentiable scale factors~$a$.
\end{proposition}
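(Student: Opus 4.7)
The strategy is to analyze the large-$k$ behaviour of $\lvert\chi_k(\tau)\rvert^2$ term-by-term using the iterative decomposition $\chi_k = \sum_n \chi_k^{\smash{\mbox{\tiny $(\!n\!)$}}}$ from \cref{prop:adiabatic_zero}, and to show that the subtraction in the proposition cancels precisely the first two orders of the large-$k$ expansion so that what remains is an integrand decaying like $k^{-4}$, which is absolutely integrable on~$\RR^3$ against $k^2\,\dif k$ and hence yields a finite coincidence limit uniformly in $\vec{x}-\vec{x}'$ (whence also in the limit $\vec{x}'\to\vec{x}$, noting that the exponential $\e^{\im \vec{k}\cdot(\vec{x}-\vec{x}')}$ is bounded by $1$).

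First I would observe that $\lvert \chi_k^{\smash{\mbox{\tiny $(\!0\!)$}}}(\tau)\rvert^2 = 1/(2k_0)$, so this term is removed exactly by the first counterterm in the proposition. Next I would compute the first-order cross term
\begin{equation*}
  2\Re\bigl(\conj{\chi}{}_k^{\smash{\mbox{\tiny $(\!0\!)$}}}(\tau)\chi_k^{\smash{\mbox{\tiny $(\!1\!)$}}}(\tau)\bigr)
  = -\frac{1}{2k_0^2}\int_{\tau_0}^{\tau} V(\eta)\sin\bigl(2k_0(\eta-\tau)\bigr)\,\dif\eta,
\end{equation*}
and extract its leading behaviour by one integration by parts. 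Crucially, the boundary term at $\eta=\tau_0$ vanishes because $V(\tau_0)=m^2(a(\tau_0)^2-a(\tau_0)^2)=0$, while the boundary term at $\eta=\tau$ contributes $V(\tau)/(2k_0)$. This yields
\begin{equation*}
  2\Re\bigl(\conj{\chi}{}_k^{\smash{\mbox{\tiny $(\!0\!)$}}}\chi_k^{\smash{\mbox{\tiny $(\!1\!)$}}}\bigr) = \frac{V(\tau)}{4k_0^3} + \bigO\bigl(k_0^{-4}\bigr),
\end{equation*}
where the $\bigO(k_0^{-4})$ remainder is uniformly bounded because $a\in C^1$ implies $V'\in C^0$, giving a second boundary term plus a bounded integral that can be bounded again by IBP. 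Thus the second counterterm $-V(\tau)/(4k_0^3)$ removes precisely this leading contribution.

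The main obstacle will be the apparent presence of further $\bigO(k_0^{-3})$ contributions from $\lvert\chi_k^{\smash{\mbox{\tiny $(\!1\!)$}}}\rvert^2$ and from the cross term $2\Re(\conj{\chi}{}_k^{\smash{\mbox{\tiny $(\!0\!)$}}}\chi_k^{\smash{\mbox{\tiny $(\!2\!)$}}})$, each individually only borderline non-integrable. I expect to show that their $\bigO(k_0^{-3})$ leading parts cancel exactly: a short computation using $F(\tau)\defn\int_{\tau_0}^\tau V(\eta)\,\dif\eta$ gives both $\lvert\chi_k^{\smash{\mbox{\tiny $(\!1\!)$}}}\rvert^2 = F(\tau)^2/(8k_0^3)+\bigO(k_0^{-4})$ (non-oscillating contribution only) and $2\Re(\conj{\chi}{}_k^{\smash{\mbox{\tiny $(\!0\!)$}}}\chi_k^{\smash{\mbox{\tiny $(\!2\!)$}}}) = -F(\tau)^2/(8k_0^3)+\bigO(k_0^{-4})$, where the identity $V(\eta)F(\eta)=\tfrac{1}{2}\tfrac{\dif}{\dif\eta}F(\eta)^2$ is used to evaluate the relevant non-oscillating double integral. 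Together with the bounds $\lvert\chi_k^{\smash{\mbox{\tiny $(\!n\!)$}}}\rvert\leq C_n/(\sqrt{2k_0}\,k_0^n)$ from \eqref{eq:estimate_chi_n}, which control the contribution of $\chi_k^{\smash{\mbox{\tiny $(\!n\!)$}}}$ with $n\geq 3$ by terms of order $k_0^{-4}$ or better, this establishes
\begin{equation*}
  \frac{\lvert\chi_k(\tau)\rvert^2}{a(\tau)^2} - \frac{1}{a(\tau)^2}\biggl(\frac{1}{2k_0}-\frac{V(\tau)}{4k_0^3}\biggr) = \bigO\bigl(k_0^{-4}\bigr)
\end{equation*}
uniformly in $\tau$ on compact sets, so the integrand of the regularized two-point distribution is $\bigO(k^{-4})$ at infinity; integrability on $\RR^3$ and harmless behaviour at $k=0$ (since $k_0\geq a(\tau_0)m>0$ when $m>0$, with a short separate argument needed for $m=0$) then imply convergence of the $\varepsilon\to 0^+$ limit and of the coinciding point limit.
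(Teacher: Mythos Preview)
Your overall architecture is right, but the first-order step contains a genuine gap that breaks the argument under the stated hypothesis $a\in C^1$.

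After one integration by parts you correctly isolate the counterterm $-V(\tau)/(4k_0^3)$, leaving the remainder
\[
  \frac{1}{4k_0^3}\int_{\tau_0}^{\tau}\cos\big(2k_0(\eta-\tau)\big)\,V'(\eta)\,\dif\eta .
\]
You then assert this is $\bigO(k_0^{-4})$ ``by IBP''. A second integration by parts would require $V''$, hence $a''$; for $a$ merely $C^1$ this derivative need not exist. Without it the remainder is only $\bigO(k_0^{-3})$ pointwise in $k$, and $k_0^{-3}$ against $k^2\,\dif k$ is \emph{not} absolutely integrable on~$\RR^3$. So your dominated-convergence argument fails at exactly the regularity the proposition cares about. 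The paper's proof does not attempt a pointwise $k_0^{-4}$ bound here; instead it keeps the $\varepsilon$-regularisation, swaps the $k$- and $\eta$-integrals, and evaluates $\int_{a_0m}^{\infty} k_0^{-1}\cos\big(2k_0(\eta-\tau)\big)\e^{-\varepsilon k}\,\dif k_0$ in terms of the exponential integral $E_1$, which has only a logarithmic singularity at $\eta=\tau$ and is therefore integrable in~$\eta$. This oscillatory-integral step is the missing idea.

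The same regularity obstruction undermines your second-order treatment: extracting and cancelling the putative $F(\tau)^2/(8k_0^3)$ pieces from $\abs{\chi_k^{\smash{\mbox{\tiny $(\!1\!)$}}}}^2$ and $2\Re(\conj{\chi}_k^{\smash{\mbox{\tiny $(\!0\!)$}}}\chi_k^{\smash{\mbox{\tiny $(\!2\!)$}}})$ separately leaves remainders you can only show to be $\bigO(k_0^{-4})$ with more derivatives on~$V$ than you have. The paper avoids this by first combining all three second-order terms into a single double integral with prefactor $k_0^{-3}$; one IBP in the inner variable (using only $V'\in C^0$) then produces an explicit $k_0^{-4}$ prefactor for the whole second-order contribution, after which a crude uniform bound suffices. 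Your higher-order estimate via~\eqref{eq:estimate_chi_n} is fine and matches the paper.
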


\begin{proof}
  We have to show that
  \begin{equation}\label{eq:regularized_state}
    \lim_{\varepsilon \to 0^+} \int_{\RR^3} \left( \abs{\chi_k}^2 - \frac{1}{2 k_0} + \frac{V}{4 k_0^3} \right) \e^{-\varepsilon k}\, \dif\vec{k}
    = \int_{\RR^3} \left( \abs{\chi_k}^2 - \frac{1}{2 k_0} + \frac{V}{4 k_0^3} \right)\, \dif\vec{k}
  \end{equation}
  is finite.
  To this end we expand the product $\abs{\chi_k}^2$ with $\chi_k = \sum_n \chi_k^n$ as
  \begin{equation*}
    \abs{\chi_k}^2 = \sum_{n=0}^\infty \sum_{l=0}^n \chi_k^l\, \conj\chi{}_k^{n-l}
  \end{equation*}
  in terms of the order~$n$.
  Inserting this expansion into~\eqref{eq:regularized_state}, we can prove the statement order by order:

  \paragraph{0th order}
  Since $\chi_k^0\, \conj\chi{}_k^0 = (2 k_0)^{-1}$, the first term in the subtraction exactly cancels the zeroth order term $\abs{\chi_k^0}^2$ in~\eqref{eq:regularized_state}.

  \paragraph{1st order}
  Using an integration by parts, we can rewrite the first order terms as
  \begin{align*}
    (\chi_k^0\, \conj\chi{}_k^1 + \chi_k^1\, \conj\chi{}_k^0)(\tau)
    & = \frac{1}{k_0^2} \int_{\tau_0}^\tau \sin\big( k_0 (\eta - \tau) \big) \cos\big( k_0 (\eta - \tau) \big) V(\eta)\, \dif\eta \\
    & = \frac{1}{2 k_0^2} \int_{\tau_0}^\tau \sin\big( 2 k_0 (\eta - \tau) \big) V(\eta)\, \dif\eta \\
    & = - \frac{V(\tau)}{4 k_0^3} + \frac{1}{4 k_0^3} \int_{\tau_0}^\tau \cos\big( 2 k_0 (\eta - \tau) \big) V'(\eta)\, \dif\eta.
  \end{align*}
  While the first summand $V(\tau) (4 k_0^3)^{-1}$ in the last line is exactly cancelled by the second term in the subtraction in \eqref{eq:regularized_state}, the second summand yields
  \begin{align}
    \notag
    \MoveEqLeft \int_{\RR^3} \frac{1}{4 k_0^3} \left( \int_{\tau_0}^\tau \cos\big( 2 k_0 (\eta - \tau) \big) V'(\eta)\, \dif\eta \right) \e^{-\varepsilon k}\, \dif\vec{k} \\
    \notag
    & = \uppi \int_0^\infty \frac{k^2}{k_0^3} \left( \int_{\tau_0}^\tau \cos\big( 2 k_0 (\eta - \tau) \big) V'(\eta)\, \dif\eta \right) \e^{-\varepsilon k}\, \dif k \\
    \notag
    & = \uppi \int_{a_0 m}^\infty k_0^{-1} \sqrt{1 - a^2 k_0^{-2}}\, \left( \int_{\tau_0}^\tau \cos\big( 2 k_0 (\eta - \tau) \big) V'(\eta)\, \dif\eta \right) \e^{-\varepsilon k}\, \dif k_0 \\
    \label{eq:1st_order_k_integral}
    & = \uppi \int_{\tau_0}^\tau V'(\eta)\, \left( \int_{a_0 m}^\infty k_0^{-1} \cos\big( 2 k_0 (\eta - \tau) \big)\, \e^{-\varepsilon k}\, \dif k_0 \right)\, \dif\eta - R(\tau)
  \end{align}
  for $\varepsilon > 0$.
  Here $R$ is a finite remainder term since it contains terms in the $k_0$-integration which decay at least like $k_0^{-3}$.
  Notice that, in the last equation of the previous formula, thanks to the positivity of $\varepsilon$ we have switched the order in which the $k_0$- and $\eta$-integration are taken.
  We would like to show that the weak limit $\varepsilon\to 0^+$ can be taken before the $\eta$-integration in~\eqref{eq:1st_order_k_integral}.

  To this end it remains to be shown that the $k_0$-integral in~\eqref{eq:1st_order_k_integral} converges in the limit $\varepsilon \to 0^+$ to an integrable function in $[\tau_0, \tau]$.
  First, note that the exponential integral
  \begin{equation}\label{eq:incomplete_gamma}
    E_1(x) = \Gamma(0, x)
    = \int_1^\infty \frac{\e^{-x t}}{t}\, \dif t
    = \int_0^1 \frac{\e^{-x}}{x - \ln (1 - s)}\, \dif s
  \end{equation}
  converges for $x \neq 0, \Re x \geq 0$.
  To show the identity, we used the substitution
  \begin{equation*}
    t = -x^{-1} \ln (1 - s) + 1
  \end{equation*}
  involving a subtle but inconsequential change of the integration contour in the complex plane if $x$ is complex.
  Then we easily see that
  \begin{equation}\label{eq:gamma_k_integral}
    \lim_{\varepsilon \to 0^+} \int_{a_0 m}^\infty k_0^{-1} \e^{\pm 2 \im k_0 (\eta - \tau) - \varepsilon k_0}\, \dif k_0 = E_1\big( \!\pm 2 \im a_0 m (\eta - \tau) \big)
  \end{equation}
  converges for $\eta \neq \tau$.
  This result is related with \eqref{eq:1st_order_k_integral} via
  \begin{equation*}
    \lim_{\varepsilon \to 0^+} \int_{a_0 m}^\infty k_0^{-1} \cos\big( 2 k_0 (\eta - \tau) \big)\, \e^{-\varepsilon k}\, \dif k_0
    = \lim_{\varepsilon \to 0^+} \int_{a_0 m}^\infty k_0^{-1} \cos\big( 2 k_0 (\eta - \tau) \big)\, \e^{-\varepsilon k_0}\, \dif k_0,
  \end{equation*}
  where we have used the boundedness of $k - (k^2 - a_0^2\, m^2)^{1/2}$ and \eqref{eq:gamma_k_integral}.
  Finally, a bound sufficient to see the $\eta$-integrability of the $k_0$-integral in \eqref{eq:1st_order_k_integral} can be obtained from the identity in \eqref{eq:incomplete_gamma}, namely,
  \begin{equation*}
    \abs*{E_1(\im x)}
    = \abs*{\int_0^1 \big( \im x - \ln (1 - s) \big)^{-1}\, \dif s}
    \leq \int_0^1 \big( x^2 + s^2 \big)^{-1/2}\, \dif s
    = \ln\bigg( \frac{1 + \sqrt{1 + x^2}}{x} \bigg).
  \end{equation*}

  \paragraph{2nd order}
  For the second order we calculate
  \begin{align}
    \notag
    \MoveEqLeft (\chi_k^0\, \conj\chi{}_k^2 + \chi_k^1\, \conj\chi{}_k^1 + \chi_k^2\, \conj\chi{}_k^0)(\tau) \\
    \notag
    \begin{split}
      & = \frac{1}{k_0^3} \int_{\tau_0}^\tau \sin\big( k_0 (\eta - \tau) \big) V(\eta) \\
      &\qquad \times \bigg( \int_{\tau_0}^\eta \sin\big( k_0 (\xi - \eta) \big) V(\xi) \cos\big( k_0 (\xi - \tau) \big)\, \dif\xi \\
      &\qquad\quad + \frac{1}{2} \int_{\tau_0}^\tau \sin\big( k_0 (\xi - \tau) \big) V(\xi) \cos\big( k_0 (\xi - \eta) \big)\, \dif\xi \bigg)\, \dif\eta
    \end{split}\\
    \notag
    & = \frac{1}{k_0^3} \int_{\tau_0}^\tau \sin\big( k_0 (\eta - \tau) \big) V(\eta)\, \left( \int_{\tau_0}^\eta \sin\big( k_0 (2 \xi - \eta - \tau) \big) V(\xi)\, \dif\xi \right)\, \dif\eta \\
    \label{eq:2nd_order_final}
    \begin{split}
      & = \frac{1}{2 k_0^4} \int_{\tau_0}^\tau \sin\big( k_0 (\eta - \tau) \big) V(\eta)\, \bigg( \int_{\tau_0}^\eta \cos\big( k_0 (2 \xi - \eta - \tau) \big) V'(\xi)\, \dif\xi \\
      &\qquad - \cos\big( k_0 (\eta - \tau) \big) V(\eta) \bigg)\, \dif\eta,
    \end{split}
  \end{align}
  where we have used integration by parts in the last equality.
  It is easy to obtain a $\vec{k}$-uniform estimate for the integral above and thus the integrability of the second order follows from $\int_{\RR^3} k_0^{-4}\, \dif\vec{k} < \infty$.

  \paragraph{Higher orders}
  For orders $n > 2$ it is sufficient to use the rough estimate from~\eqref{eq:estimate_chi_n}:
  \begin{align}
    \notag
    \abs*{\sum_{n=3}^\infty \sum_{l=0}^n \chi_k^l\, \conj\chi{}_k^{n-l}}(\tau)
    & \leq \frac{1}{2 k_0} \sum_{n=3}^\infty \frac{2^n}{n!} \left( \frac{1}{k_0} \int_{\tau_0}^\tau \abs[\big]{V(\eta)}\, \dif\eta \right)^3 \left( \int_{\tau_0}^\tau (\tau - \eta)\, \abs[\big]{V(\eta)}\, \dif\eta \right)^{n-3} \\
    \label{eq:higher_order_final}
    & \leq \frac{4}{k_0^4} \left( \int_{\tau_0}^\tau \abs[\big]{V(\eta)}\, \dif\eta \right)^3 \exp\left( 2 \int_{\tau_0}^\tau (\tau - \eta)\, \abs[\big]{V(\eta)}\, \dif\eta \right).
  \end{align}
  As above, the integrability of the higher orders follows from $\int_{\RR^3} k_0^{-4}\, \dif\vec{k} < \infty$.

  Note that none of the estimates above depends on higher derivatives of the scale factor.
  Therefore, combining these partial results, we see that the thesis holds true.
\end{proof}

It follows that we can consistently define the renormalized Wick square of the state given by~\eqref{eq:2pt_adiabatic_zero} at conformal time~$\tau$ for every FLRW spacetime with $C^1$ scale factor
\begin{equation}\label{eq:renormalized_wick_square}\begin{split}
  \omega(\norder{\what\varphi^2}) & = \frac{1}{(2 \uppi)^3 a^2} \int_{\RR^3} \left( \abs{\chi_k}^2 - \frac{1}{2 k_0} + \frac{V}{4 k_0^3} \right)\, \dif\vec{k} \\&\qquad + \frac{m^2}{(4 \uppi)^2} \left( \frac{1}{2} - \left(\frac{a_0}{a(\tau)}\right)^2 + 2 \ln\left( \frac{a_0}{a(\tau)} \right) + 2 \ln\left(\frac{\e^{\gamma} m \lambda}{\sqrt{2}}\right) \right),
\end{split}\end{equation}
which coincides with \eqref{eq:wick_square_pointsplit} for smooth spacetimes.
Moreover, we notice that, as a consequence of the previous proposition, it is possible to obtain global estimates for the renormalized Wick square:

\begin{proposition}\label{prop:bounds}
  The renormalized Wick square is bounded on every $a' \in C[\tau_0,\tau_1]$ with $a > 0$ in $[\tau_0,\tau_1]$ for every $\tau_1$ and with $a(\tau_0) = a_0$, namely,
  \begin{equation*}
    \abs*{\omega(\norder{\what\varphi^2})(\tau_1)} \leq C\left(\sup_{[\tau_0,\tau_1]}{a},\sup_{[\tau_0,\tau_1]}{a'}, (\tau_1-\tau_0),\frac{1}{\inf_{[\tau_0,\tau_1]}a } \right) \;
  \end{equation*}
  where $C$ is a finite increasing function.
\end{proposition}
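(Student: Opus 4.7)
The plan is to revisit the order-by-order decomposition that was used in the proof of \cref{prop:regularized_state} and turn each of the convergence estimates appearing there into explicit quantitative bounds depending only on the four quantities listed in the statement. First, I would dispose of the explicit summand in the second line of \eqref{eq:renormalized_wick_square}: the terms $m^2/(4\uppi)^2$, $(a_0/a)^2$ and $\ln(a_0/a)$ are trivially bounded by a continuous function of $\sup_{[\tau_0,\tau_1]} a$ and $1/\inf_{[\tau_0,\tau_1]} a$ (recall $a(\tau_0)=a_0$ is determined by the initial datum, so it may be absorbed into either bound). The remaining $m$- and $\lambda$-dependent constants are harmless. What remains is to control the $\vec{k}$-integral of the combination
\begin{equation*}
  F_k(\tau_1)\defn \abs{\chi_k(\tau_1)}^2 - \frac{1}{2k_0} + \frac{V(\tau_1)}{4k_0^3},
\end{equation*}
and this is the heart of the matter.

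Following the proof of \cref{prop:regularized_state}, I would split $F_k$ according to the order in the Born series $\chi_k = \sum_n \chi_k^{\smash{\mbox{\tiny $(\!n\!)$}}}$. The zeroth order contribution $|\chi_k^{\smash{\mbox{\tiny $(\!0\!)$}}}|^2=(2k_0)^{-1}$ cancels exactly against the first subtraction. For the first order, the integration-by-parts identity from the proof produces the remainder
\begin{equation*}
  \frac{1}{4k_0^3}\int_{\tau_0}^{\tau_1}\cos\bigl(2k_0(\eta-\tau_1)\bigr)\,V'(\eta)\,\dif\eta,
\end{equation*}
whose $\vec{k}$-integral equals (up to a manifestly finite remainder $R(\tau_1)$ of order $k_0^{-3}$) a convolution in $\eta$ against the exponential integral $E_1\bigl(\pm 2\im a_0 m(\eta-\tau_1)\bigr)$. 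Using the explicit bound $|E_1(\im x)|\leq \ln\bigl((1+\sqrt{1+x^2})/x\bigr)$ derived in the proof of \cref{prop:regularized_state}, the inner $k_0$-integral is locally $L^1$ in $\eta$, and the first-order contribution is therefore estimated by
\begin{equation*}
  C_1\,\|V'\|_{C[\tau_0,\tau_1]}\,(\tau_1-\tau_0)\bigl(1+\abs{\ln(a_0 m(\tau_1-\tau_0))}\bigr),
\end{equation*}
with $\|V'\|_{C[\tau_0,\tau_1]}\leq 2m^2\sup a\,\sup|a'|$. This is the delicate step, because one must check that the logarithmic singularity of $E_1$ at the origin is harmless after the $\eta$-integration; I expect this to be the main obstacle, although it is essentially already handled inside the proof of \cref{prop:regularized_state}.

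For the second order I would take the closed expression \eqref{eq:2nd_order_final} and bound the absolute value by $C k_0^{-4}\,\|V\|_{C}\bigl(\|V\|_{C}+\|V'\|_{C}(\tau_1-\tau_0)\bigr)(\tau_1-\tau_0)^2$, which is $\vec{k}$-integrable. For orders $n\geq 3$ the uniform bound \eqref{eq:higher_order_final} from the proof of \cref{prop:regularized_state}, namely $C k_0^{-4}\,\bigl(\|V\|_{C}(\tau_1-\tau_0)\bigr)^3\exp\bigl(2\|V\|_{C}(\tau_1-\tau_0)^2\bigr)$, is already in the required form. Integrating each of these bounds over $\vec{k}$ produces a constant that depends only on $m$ and on $\|V\|_{C}$, $\|V'\|_{C}$, $\tau_1-\tau_0$. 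Expressing $\|V\|_{C}\leq m^2\bigl(\sup a\bigr)^2$ and $\|V'\|_{C}\leq 2m^2\sup a\,\sup|a'|$, and finally dividing by $a(\tau_1)^2$ (which introduces the factor $1/\inf a$), one assembles the prefactor $C(\sup a,\sup|a'|,\tau_1-\tau_0,1/\inf a)$ of the statement. Monotonicity of $C$ in each argument follows because every individual estimate above is monotone increasing in the respective quantity.
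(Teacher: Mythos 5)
Your proposal is correct and follows essentially the same route as the paper, which simply states that the bound is obtained by combining the point-split expression with the adiabatic subtraction and then analyzing \eqref{eq:regularized_state} order by order as in the proof of \cref{prop:regularized_state}; you have fleshed out exactly that order-by-order analysis into quantitative bounds. The only (inconsequential) imprecisions are bookkeeping ones, e.g.\ the power of $(\tau_1-\tau_0)$ in your second-order estimate and the constant $a_0^2$ inside $\norm{V}_{C}$, neither of which affects finiteness or monotonicity of $C$.
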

\begin{proof}
  The proof of this proposition and the explicit value of $C$, can be obtained combining \eqref{eq:wick_square_pointsplit} with \eqref{eq:adiabatic_wick_square_2} and then analyzing the adiabatic subtraction \eqref{eq:regularized_state} order by order as in the proof of the preceding proposition.
\end{proof}

\subsection{Adiabatic regularization of the energy density}

There is another nice feature about the states we have constructed above.
Thanks to the conformal coupling of the scalar field with the curvature, the energy density computed in these states is finite even though these states are (on smooth spacetimes) only adiabatic states of order zero.
This is a crucial feature which permits us to solve the constraint~\eqref{eq:constraint-einstein} as a first step towards solving the semiclassical Einstein equation.

\begin{proposition}
  The energy density~$\rho$ in the state~$\omega$ defined by~\eqref{eq:2pt_adiabatic_zero} at the initial time $\tau = \tau_0$ is finite.
\end{proposition}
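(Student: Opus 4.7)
The strategy is to exploit two special features of the situation: conformal coupling, for which the energy density can be rendered finite via adiabatic regularization already at order zero (as noted below equation~\eqref{eq:regularized_energy}), and the initial time $\tau_0$, at which the modes $\chi_k$ and their first conformal time derivatives $\chi_k'$ coincide identically with those of the zeroth-order adiabatic modes. Because both schemes differ at most by local geometric terms depending on $a(\tau_0)$, $a'(\tau_0)$ and $R(\tau_0)$ (which exist since $a \in C^1$ suffices to control the relevant Hadamard/adiabatic subtraction, as already demonstrated for the Wick square in \cref{prop:regularized_state,prop:bounds}), the finiteness question reduces to a momentum-space convergence estimate at $\tau = \tau_0$.

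First I would observe that the iterative representation of \cref{prop:adiabatic_zero} gives $\chi_k^{\smash{\mbox{\tiny $(\!n\!)$}}}(\tau_0) = 0$ for all $n \geq 1$, so that
\begin{equation*}
  \chi_k(\tau_0) = \tfrac{1}{\sqrt{2k_0}}\, \e^{\im k_0 \tau_0},
  \qquad
  \chi_k'(\tau_0) = \im k_0\, \chi_k(\tau_0),
\end{equation*}
which are exactly the values of the zeroth-order adiabatic modes $W_k^{\smash{\mbox{\tiny $(\!0\!)$}}}$ and their conformal time derivatives at $\tau_0$, with $\Omega_k^{\smash{\mbox{\tiny $(\!0\!)$}}}(\tau_0)^2 = k_0^2 = k^2 + a_0^2 m^2$. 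Consequently the bilinears $\lvert \chi_k(\tau_0)\rvert^2$, $\lvert \chi_k'(\tau_0)\rvert^2$ and $\Re(\chi_k(\tau_0)\conj\chi_k'(\tau_0))$ that enter the energy-density integrand $\hat\rho(\chi_k,\conj\chi_k)$ defined in \eqref{eq:energy_mode} agree, at $\tau_0$, with the corresponding zeroth-order adiabatic expressions.

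Second, I would rewrite $\omega(\norder{\rho})(\tau_0)$ through the mode expansion \eqref{eq:2pt_adiabatic_zero}, applying the bidifferential operator $\mathcal{D}_{\rho}$ from just below \eqref{eq:regularized_energy} and regularizing via the zeroth-order adiabatic subtraction (valid for conformal coupling). Using the identifications above, the integrand becomes
\begin{equation*}
  \hat\rho\big(\chi_k,\conj\chi_k\big)(\tau_0) - \hat\rho\big(W_k^{\smash{\mbox{\tiny $(\!0\!)$}}}, \conj{W}_k^{\smash{\mbox{\tiny $(\!0\!)$}}}\big)(\tau_0) = 0
\end{equation*}
exactly, so that the entire momentum integral collapses to the local geometric contribution exhibited on the right-hand side of \eqref{eq:regularized_energy}, evaluated at $\tau_0$ with $\xi = 1/6$. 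This quantity is manifestly finite, depending only on $a(\tau_0)$, $a'(\tau_0)$, $m$, $\lambda$ and the renormalization constants.

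The main obstacle is bookkeeping: one must verify that the conformal-coupling simplification indeed eliminates all terms in the point-splitting/adiabatic difference that would otherwise depend on $a''$ or higher derivatives (which need not exist in the $C^1$ setting), and that the cancellation of the $\mathcal{O}(k)$, $\mathcal{O}(k^{-1})$ and $\mathcal{O}(k^{-3})$ UV contributions in the momentum integrand is exact at $\tau_0$ and not merely asymptotic. Both points follow from the $n=1,2$ analysis already carried out in the proof of \cref{prop:regularized_state} (where the identities $\chi_k^{\smash{\mbox{\tiny $(\!n\geq 1\!)$}}}(\tau_0) = 0$ played the same role), applied now to the integrand dressed with the first-derivative operators inside $\mathcal{D}_{\rho}$; the higher-order terms are controlled by the uniform bound \eqref{eq:higher_order_final}, which is integrable against $\int_{\RR^3} k_0^{-4}\, \dif\vec k$ and requires no derivatives of $a$ beyond those already present in $V$ and $V'$.
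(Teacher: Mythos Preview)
Your overall strategy matches the paper's: reduce the question to the adiabatically regularized momentum integral at $\tau=\tau_0$ and show it converges. But there is a genuine error in your execution. You conflate the zeroth iterate $\chi_k^{\smash{\mbox{\tiny $(\!0\!)$}}}(\tau)=(2k_0)^{-1/2}\e^{\im k_0\tau}$, built with the \emph{constant} $k_0=\omega_k(\tau_0)$, with the zeroth-order adiabatic mode $W_k^{\smash{\mbox{\tiny $(\!0\!)$}}}(\tau)=(2\omega_k(\tau))^{-1/2}\exp\!\big(\im\!\int^\tau\!\omega_k\big)$, whose amplitude depends on the \emph{time-dependent} $\omega_k(\tau)=\sqrt{k^2+a(\tau)^2m^2}$. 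At $\tau_0$ the \emph{values} agree (up to an irrelevant phase), but the \emph{derivatives} do not: differentiating the amplitude of $W_k^{\smash{\mbox{\tiny $(\!0\!)$}}}$ yields an extra term
\[
  -\frac{\omega_k'(\tau_0)}{2\omega_k(\tau_0)}\,W_k^{\smash{\mbox{\tiny $(\!0\!)$}}}(\tau_0)
  = -\frac{a_0\,a'(\tau_0)\,m^2}{2k_0^2}\,W_k^{\smash{\mbox{\tiny $(\!0\!)$}}}(\tau_0),
\]
which is absent in $\chi_k'(\tau_0)=\im k_0\,\chi_k(\tau_0)$. Hence your claimed identity $\hat\rho(\chi_k,\conj\chi_k)(\tau_0)-\hat\rho(W_k^{\smash{\mbox{\tiny $(\!0\!)$}}},\conj W_k^{\smash{\mbox{\tiny $(\!0\!)$}}})(\tau_0)=0$ is false unless $a'(\tau_0)=0$.

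The paper makes exactly this computation: the difference $\abs{\chi_k'}^2-\abs{W_k^{\smash{\mbox{\tiny $(\!0\!)$}}}{}'}^2$ at $\tau_0$ is nonzero, producing the explicit leftover
\[
  \frac{m^4}{8}\int_0^\infty \frac{a_0^2\,(a')^2}{(k^2+m^2a_0^2)^{5/2}}\,k^2\,\dif k,
\]
which is finite since the integrand behaves like $k^{-3}$ for large $k$. So the right conclusion is not ``exact cancellation plus local geometric terms'' but ``a nonzero, $\bigO(k_0^{-5})$ remainder that is manifestly integrable''. Fix the derivative of $W_k^{\smash{\mbox{\tiny $(\!0\!)$}}}$ and your argument goes through along the same lines as the paper.
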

\begin{proof}
  Following~\cite{hack:2013}, in order to show that $\rho(\tau_0)$ is finite, we just need to show that the adiabatically regularized expression (see also \eqref{eq:regularized_energy} and the subsequent remark)
  \begin{equation}\label{eq:adiabatic-conformal-energy}
    \int_0^\infty \!\Big(
      \big( \abs{\chi_k'}^2 + (k^2 + m^2 a^2) \abs{\chi_k}^2 \big)
      -
      \big( \abs*{W_k^{\smash{\mbox{\tiny $(\!0\!)$}}}{}'}^2 + (k^2 + m^2 a^2) \abs*{W_k^{\smash{\mbox{\tiny $(\!0\!)$}}}}^2 \big)
    \Big)\, k^2\, \dif k
  \end{equation}
  does not diverge at $\tau = \tau_0$.
  Evaluating the expression~\eqref{eq:adiabatic-conformal-energy} at $\tau = \tau_0$ gives
  \begin{equation*}
    \frac{m^4}{8} \int_0^\infty \!\!\frac{a_0^2\, (a')^2}{ (k^2 + m^2 a_0^2)^{5/2}}\, k^2\, \dif k < \infty.
  \end{equation*}
\end{proof}
Notice that the previous proposition only guarantees that the energy density is well-defined at the initial time.
Nevertheless, the conservation equation for the stress-energy tensor permits to state that it is well-defined everywhere.

The expression~\eqref{eq:adiabatic-conformal-energy} coincides with the energy density~$\rho$ of the system up to a conformal rescaling and up to the addition of some finite terms.
Thus, since the energy density~$\rho$ is finite in the considered state, the constraint~\eqref{eq:constraint-einstein} holds, provided a suitable choice of $H(\tau_0)$ and $\Lambda$ is made.
We stress that, if we do not want to alter $\Lambda$, the same result can be achieved adding classical radiation to the energy density of the universe in a suitable state.

We would like to conclude this section with a remark.
In adiabatic states of order zero the expectation values of local fields containing derivatives are usually ill-defined.
Despite this, in the case of conformal coupling and for our choice of initial conditions~\eqref{eq:initial_adiabatic_zero}, the energy density turns out to be well-defined.
This is essentially due to the fact that in the massless conformally coupled case the adiabatic modes of order zero are solutions of the mode equation~\eqref{eq:mode_equation} and in that case the obtained state is the well known conformal vacuum. Hence, the adiabatically regularized energy density vanishes.
In the massive case the states constructed above are not very different than the conformal vacuum and, in particular, the energy density remains finite under that perturbation.

\IdxRanEnd{adiabatic-reg}

\section{Local solutions}
\label{sec:einstein_solutions_local}
\IdxRanBegin{semiclassical-einstein-local}

Our aim is to show the existence and uniqueness of local solutions to the semiclassical Friedmann equation.
In particular, according to the discussion in the introduction, we will analyze the uniqueness and existence of solutions of~\eqref{eq:friedmann_integral}.
Similar to the Picard--Lindelöf theorem, we will use the Banach fixed-point theorem to achieve this goal.
Some results on functional derivatives and the Banach fixed-point theorem are collected in \cref{cha:analysis}, in particular \cref{sec:fixed_point}.

Solving \eqref{eq:friedmann_integral} is equivalent to finding fixed-points of the functional~$F$ defined by
\begin{subequations}\label{eq:functional-einstein}\begin{align}
  \begin{split}
    F(H)(\tau) & \defn H_0 + \int_{\tau_0}^\tau \frac{a(\eta)}{H^2_c - H(\eta)^2} \big( H(\eta)^4 - 2 H_c^2 H(\eta)^2 - \tfrac{15}{2} m^4 \\&\qquad\qquad + 240 \uppi^2 ( m^2 \omega(\norder{\what\varphi^2})(H)(\eta) + 4 \Lambda) \big)\, \dif\eta,
  \end{split} \\
   & \defn H_0 + \int_{\tau_0}^\tau f(H)(\eta)\, \dif\eta.
\end{align}\end{subequations}
Since $\omega(\norder{\what\varphi^2})(H)$ is well-defined for continuous Hubble functions (see also \eqref{eq:renormalized_wick_square}), we select for the Banach space of candidate Hubble functions~$H$ the space\footnote{Until fixed, we take both $\tau_0$ and $\tau_1$ as variable and thus consider a family of Banach spaces.} $C[\tau_0, \tau_1]$, $\tau_0 < \tau_1$, equipped with the uniform norm
\begin{equation*}
  \norm{X}_{C[\tau_0,\tau_1]} \defn \norm{X}_\infty \defn \sup_{\tau \in [\tau_0,\tau_1]} \abs{X(\tau)}.
\end{equation*}
However, once $\tau_0$ and the initial condition $a_0 = a(H)(\tau_0) > 0$ are fixed, we find that
\begin{equation}\label{eq:a_of_H}
  a(H)(\tau) = a_0 \left(1 - a_0 \int_{\tau_0}^\tau H(\eta)\, \dif\eta \right)^{-1},
\end{equation}
as a functional of $H$, is not continuous on $C[\tau_0, \tau_1]$.
But we can find an open subset
\begin{equation}\label{eq:set_U}
  \mathcal{U}[\tau_0,\tau_1] \defn \big\{ H \in C[\tau_0, \tau_1] \;\big|\; \norm{H}_{C[\tau_0,\tau_1]} < \min\{ a(\tau_0)^{-1} (\tau_1 - \tau_0)^{-1}, H_c \} \big\}
\end{equation}
on which $a$ and thus also $V = m^2 (a^2 - a_0^2)$ depend smoothly on $H$.
Indeed, we can show the following:

\begin{proposition}\label{prop:gateaux_diffable}
  The functional
  \begin{equation}\label{eq:functional_f}
    f(H) = \frac{a(H)}{H^2_c - H^2} \big( H^4 - 2 H_c^2 H^2 - \tfrac{15}{2} m^4 + 240 \uppi^2 (m^2 \omega(\norder{\what\varphi^2})(H) + 4 \Lambda) \big)
  \end{equation}
  is continuously differentiable on $\mathcal{U}[\tau_0,\tau_1]$ for arbitrary but fixed $\tau_0, \tau_1$ and $a_0 = a(\tau_0)$.
\end{proposition}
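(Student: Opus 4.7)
The plan is to decompose $f$ as an arithmetic combination of simpler functionals and apply the chain rule for directional derivatives on locally convex spaces (see \cref{sub:der_directional}), since continuous differentiability is preserved under sums, products, quotients with uniformly non-vanishing denominator, and composition. First I would dispose of the elementary building blocks. The polynomial functionals $H \mapsto H^2, H^4$ on the Banach space $C[\tau_0,\tau_1]$ are smooth. The reciprocal $H \mapsto (H_c^2 - H^2)^{-1}$ is continuously differentiable on $\mathcal{U}[\tau_0,\tau_1]$ because the definition of $\mathcal{U}$ ensures $\norm{H}_\infty < H_c$, so the denominator is uniformly bounded away from zero. The constants $-\tfrac{15}{2}m^4$ and $960\uppi^2\Lambda$ are trivial.

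Next I would treat $a(H)$ using the closed form~\eqref{eq:a_of_H}. The map $H \mapsto \int_{\tau_0}^\cdot H(\eta)\,\dif\eta$ is a bounded linear operator $C[\tau_0,\tau_1] \to C[\tau_0,\tau_1]$, hence smooth. Post-composing with the scalar function $x \mapsto a_0(1 - a_0 x)^{-1}$, which is smooth on the half-line $\{x : a_0 x < 1\}$, yields a continuously differentiable functional on $\mathcal{U}[\tau_0,\tau_1]$, since the second condition in~\eqref{eq:set_U} guarantees $\abs{a_0\int_{\tau_0}^\tau H\,\dif\eta} < 1$ uniformly. Thus $V(H) = m^2(a(H)^2 - a_0^2)$ is continuously differentiable in $H$, and one can read off an explicit formula for $\dirD_h a(H)$ and $\dirD_h V(H)$, both of which are uniformly bounded on bounded subsets of $\mathcal{U}$ in terms of $\norm{H}_\infty$, $\norm{h}_\infty$, and $a_0$.

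The main obstacle is the Wick-square functional $H \mapsto \omega(\norder{\what\varphi^2})(H)$ given by~\eqref{eq:renormalized_wick_square}. The modes $\chi_k = \sum_n \chi_k^{\smash{\mbox{\tiny $(\!n\!)$}}}$ depend on $H$ only through $V$, since $k_0^2 = k^2 + a_0^2 m^2$ is fixed. Each $\chi_k^{\smash{\mbox{\tiny $(\!n\!)$}}}$ is defined by a finite iterated Volterra integral~\eqref{eq:adiabatic_recc_int} whose integrand is polynomial in $V$ times a smooth, $H$-independent kernel; the chain rule and Fubini therefore imply that each $\chi_k^{\smash{\mbox{\tiny $(\!n\!)$}}}$, viewed pointwise in $\tau$ and $\vec k$, is continuously differentiable in $H$. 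The hard part is to commute the directional derivative $\dirD_h$ with both the $\vec k$-integration and the summation over $n$ in the regularized expression
\begin{equation*}
  \int_{\RR^3}\!\Big(\abs{\chi_k}^2 - \tfrac{1}{2k_0} + \tfrac{V}{4k_0^3}\Big)\,\dif\vec k.
\end{equation*}

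To justify this interchange I would revisit the order-by-order estimates in the proof of \cref{prop:regularized_state}. For the $0$th- and $1$st-order contributions, differentiating the explicit cancellations shows that the leading singular behaviour in $k_0$ is annihilated by the same subtraction as before (the derivative of a telescoping identity is telescoping), leaving a remainder with the same $k_0^{-1}\ln(\cdot)$-type bound derived from~\eqref{eq:gamma_k_integral}. For $n=2$ and $n\ge 3$ one differentiates~\eqref{eq:2nd_order_final} and the bound~\eqref{eq:higher_order_final} under the $\eta$-integrals: since $\dirD_h V(H)$ and $\dirD_h a(H)$ are bounded uniformly on compact subsets of $\mathcal{U}$ and the Volterra kernels are $H$-independent, the resulting estimates inherit the $k_0^{-4}$ decay (with a constant depending continuously on $\norm{H}_\infty, \norm{h}_\infty, \sup a, \inf a, \tau_1-\tau_0$, in the spirit of \cref{prop:bounds}). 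Uniform $\vec k$-integrability then permits differentiation under the integral and summation sign, producing a jointly continuous expression for $\dirD_h \omega(\norder{\what\varphi^2})(H)$ in $(H,h)$. Combining all pieces via the chain rule yields continuous differentiability of $f$ on $\mathcal{U}[\tau_0,\tau_1]$.
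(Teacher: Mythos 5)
Your proposal is correct and follows essentially the same route as the paper: reduce to the boundedness of $a(H)$ and $(H_c^2-H^2)^{-1}$ granted by the definition of $\mathcal{U}[\tau_0,\tau_1]$, compute $\dirD a$ (and hence $\dirD V$) explicitly, and then differentiate the regularized mode sum order by order, observing that the estimates of \cref{prop:regularized_state} carry over verbatim with $V$, $V'$ replaced by $\dirD V$, $\dirD V'$ so that the $k_0^{-4}$ decay and hence the interchange of derivative, $\vec{k}$-integration and summation are preserved. The only differences are presentational (e.g.\ phrasing $a(H)$ as a composition of a bounded linear map with a smooth scalar function rather than computing its derivative directly).
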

\begin{proof}
  Given \eqref{eq:wick_square_pointsplit}, \eqref{eq:adiabatic_wick_square_2}, \cref{prop:regularized_state} and \cref{prop:bounds}, it is enough to show that $a(H)$ and $(H_c^2 - H^2)^{-1}$ are bounded and that $\omega(\norder{\what\varphi^2})(H)(\tau_0)$ is continuously differentiable.
  The former is assured by the condition $\norm{H}_{C[\tau_0,\tau_1]} < \min\{ a_0^{-1} (\tau_1 - \tau_0)^{-1}, H_c \}$ in the definition of $\mathcal{U}[\tau_0,\tau_1]$.
  For the latter it remains to be shown that the renormalized Wick square \eqref{eq:renormalized_wick_square} is continuously differentiable on $\mathcal{U}[\tau_0,\tau_1]$:

  We start by calculating the functional derivative of the scale factor
  \begin{equation*}
    \dirD a(H; \delta H)(\tau) = a(H)(\tau)^2 \int_{\tau_0}^\tau \delta H(\eta)\, \dif\eta.
  \end{equation*}
  The functional derivatives for $a^{-2}$, $\ln a$, $V$ and $V'$ follow easily.
  In particular we note that all these functions are continuously differentiable on $\mathcal{U}[\tau_0,\tau_1]$ because integration is a continuous operation and $a$ depends smoothly on $H$ in $\mathcal{U}[\tau_0,\tau_1]$.
  Therefore it suffices to analyze the differentiability of the integral \eqref{eq:regularized_state} appearing in the regularized two-point distribution.
  Moreover, within $\chi_k$ only the potential $V$ is (smoothly on $\mathcal{U}[\tau_0,\tau_1]$) dependent on $H$, thus simplifying the computations considerably.\footnote{If we were to work in cosmological time as in~\cite{pinamonti:2011}, we would also have to consider the functional dependence of conformal time on the scale factor.}
  Continuing with the regularized two-point distribution order by order as in \cref{prop:regularized_state}, we have:

  \paragraph{1st order}
  Since
  \begin{equation*}\begin{split}
    & \dirD \left( \chi_k^0\, \conj\chi{}_k^1 + \chi_k^1\, \conj\chi{}_k^0 + \frac{V}{4 k_0^3} \right)(H; \delta H)(\tau) \\
    & \qquad = \frac{1}{4 k_0^3} \int_{\tau_0}^\tau \cos\big( 2 k_0 (\eta - \tau) \big)\, \dirD V'(H; \delta H)(\eta)\, \dif\eta,
  \end{split}\end{equation*}
  we can proceed with the proof as in \cref{prop:regularized_state} with $V'$ replaced by $\dirD V'$ and differentiability follows.

  \paragraph{2nd order}
  As above, this part of the proof can be shown by replacing occurrences of $V$ and $V'$ in \eqref{eq:2nd_order_final} of \cref{prop:regularized_state} with $\dirD V$ and $\dirD V'$ respectively.

  \paragraph{Higher orders}
  For orders $n > 2$ we can again use an estimate similar to~\eqref{eq:estimate_chi_n} to obtain a result analogous to~\eqref{eq:higher_order_final}:
  \begin{multline*}
    \abs*{\dirD \left( \sum_{n=3}^\infty \sum_{l=0}^n \chi_k^l\, \conj\chi{}_k^{n-l} \right)(H; \delta H)(\tau)}
    \leq \frac{4}{k_0^4} \left( \int_{\tau_0}^\tau \abs[\big]{\dirD V(H; \delta H)(\eta)}\, \dif\eta \right)
    \\ \times \left( \int_{\tau_0}^\tau \abs[\big]{V(\eta)}\, \dif\eta \right)^2 \exp\left( 2 \int_{\tau_0}^\tau (\tau - \eta)\, \abs[\big]{V(\eta)}\, \dif\eta \right).
  \end{multline*}

  \noindent In this way we can conclude the proof of the present proposition.
\end{proof}

We can now formulate the main theorem of this chapter:
\begin{theorem}\label{thm:main}\Idx{banach-fixed-point}\Idx{fixed-point-lip}
  Let $(a_0, H_0), a_0 > 0, \abs{H_0} < H_c$, be some initial conditions fixed at $\tau_0$ for~\eqref{eq:friedmann_integral}.
  There is a non-empty interval $[\tau_0,\tau_1]$ and a closed subset $U \subset C[\tau_0,\tau_1]$ on which a unique solution to~\eqref{eq:friedmann_integral} exists.
\end{theorem}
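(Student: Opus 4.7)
The plan is to recast the integral equation~\eqref{eq:friedmann_integral} as a fixed-point problem for $F$ and then apply the Volterra-type machinery developed in \cref{sec:fixed_point}, specifically the combination of \cref{prop:closed,prop:contraction}. \cref{prop:contraction} requires the kernel $f$ of the integral functional to be continuously differentiable on an open set containing a closed set on which $F$ maps into itself. Continuous differentiability of $f$ has already been supplied by \cref{prop:gateaux_diffable}, so the only remaining work is to identify a suitable closed ball inside $\mathcal{U}[\tau_0, \tau_1]$ and to verify the closing condition.

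First I would fix a radius $\delta > 0$ small enough that $\abs{H_0} + \delta < H_c$, which is possible thanks to the hypothesis $\abs{H_0} < H_c$. Then, for any $T > \tau_0$ with
\begin{equation*}
  (\abs{H_0} + \delta)\, a_0\, (T - \tau_0) < 1,
\end{equation*}
the closed ball $V_\delta \defn \{ H \in C[\tau_0, T] \mid \norm{H - H_0}_{C[\tau_0, T]} \leq \delta \}$ lies entirely inside the open set $\mathcal{U}[\tau_0, T]$ of~\eqref{eq:set_U}. On this ball, \cref{prop:bounds} provides a uniform bound on $\omega(\norder{\what\varphi^2})$, while the remaining factors in~\eqref{eq:functional_f} are controlled because $a(H)$ is bounded via~\eqref{eq:a_of_H} and $H_c^2 - H^2 \geq H_c^2 - (\abs{H_0} + \delta)^2 > 0$. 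Hence $f$ is bounded on $V_\delta$.

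At this point \cref{prop:closed}, applied with $f_0$ equal to the constant function $H_0$, furnishes some $\tau_1 \in (\tau_0, T]$ such that $F\bigl(V_\delta\bigr)\restriction_{[\tau_0, \tau_1]} \subset V_\delta \restriction_{[\tau_0, \tau_1]}$. Combined with the continuous differentiability of $f$ on $\mathcal{U}[\tau_0, \tau_1] \supset V_\delta\restriction_{[\tau_0, \tau_1]}$, \cref{prop:contraction} then produces a unique fixed-point of $F$ in this restricted ball, which is the desired local solution of~\eqref{eq:friedmann_integral} on $[\tau_0, \tau_1]$.

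The main subtlety lies in the two qualitatively different constraints defining $\mathcal{U}[\tau_0, T]$: the bound $\norm{H} < a_0^{-1}(T - \tau_0)^{-1}$ keeps $a(H)$ finite and can always be met by shrinking the time interval, but the bound $\norm{H} < H_c$ is structural and cannot be enforced by time-shrinking alone -- it is what forces the hypothesis $\abs{H_0} < H_c$ on the initial data. If $H_0$ approached $H_c$, the denominator in~\eqref{eq:functional_f} would cease to be controlled and the present strategy would break down; this is the essential obstruction. Everything else is a rather direct assembly of the ingredients already prepared in \cref{sec:einstein_solutions_local}, so no deeper analytical difficulty is expected.
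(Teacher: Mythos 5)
Your proposal is correct and follows essentially the same route as the paper's own proof: continuous differentiability of $f$ from \cref{prop:gateaux_diffable}, the closing property from \cref{prop:closed}, and uniqueness of the fixed point from \cref{prop:contraction}. You merely spell out explicitly the construction of the closed ball inside $\mathcal{U}[\tau_0,\tau_1]$ and why $\abs{H_0}<H_c$ is needed, which the paper leaves implicit.
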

\begin{proof}
  In \cref{prop:gateaux_diffable} we showed that $f$ is continuously differentiable on $\mathcal{U}[\tau_0,\tau_1]$ for any $\tau_1$.
  Using \cref{prop:closed}, we can thus find a $\tau_1 > \tau_0$ and a closed subset $U \subset \mathcal{U}[\tau_0,\tau_1]$ such that $F(U) \subset U$.
  It then follows from \cref{prop:contraction} that $F$ has a unique fixed point in $U$.
\end{proof}

Notice that the solution provided by the previous theorem is actually more regular, it is at least differentiable.
Thus the corresponding spacetime is $C^2$ and has well-defined curvature tensors.
The extra regularity is provided by~\eqref{eq:friedmann_integral} and can be easily seen when it is written in its differential form~\eqref{eq:friedmann_differential}.
It might be surprising that the solutions are not smooth, since the procedure to find the solution involves repeated integration, but because the chosen adiabatic state is only guaranteed to be continuous on every spacetime, $H$ is only $C^1$.
Using \cref{cor:smooth}, one can see that a more regular state immediately improves also the regularity of the solution.

\IdxRanEnd{semiclassical-einstein-local}

\section{Global solutions}
\label{sec:einstein_solutions_global}
\IdxRanBegin{semiclassical-einstein-global}

In this section we would like to show that it is always possible to extend a `regular' local solution up to the point where either $H^2$ becomes bigger than $H_c^2$ or $a$ diverges.\footnote{$H^2 = H_c^2$ corresponds to a singularity in the derivative of $H$ in~\eqref{eq:friedmann_differential}.}
To this end we start giving a definition we shall use below.

\begin{definition}\label{def:regular}
  A continuous solution $H_*$ of \eqref{eq:friedmann_integral} in the interval $[\tau_0,\tau_1]$ with initial conditions
  \begin{equation*}
    a(H_*)(\tau_0) = a_0,
    \qquad
    H_*(\tau_0)^2 = H_0^2 = \frac{1}{3} \big( \rho(\tau_0) + \Lambda \big)
  \end{equation*}
  will be called \IdxMain{semiclassical-einstein-regular}\emph{regular}, if no singularity for either $a$, $H_*$ or $H_*'$ is encountered in $[\tau_0,\tau_1]$.
  Namely, $H_*$ must satisfy the following conditions:
  \begin{enumerate}
    \item $\norm*{H_*(\tau)}_{C[\tau_0,\tau_1]} < H_c$,
    \item $a_0 \int_{\tau_0}^{\tau} H_*(\eta)\, \dif\eta < 1$ for every $\tau$ in $[\tau_0,\tau_1]$.
  \end{enumerate}
\end{definition}

We remark that a local solution obtained from \cref{thm:main} is a regular solution.
Henceforth, assume that we have a regular solution $H_*$ as described in the definition.
Notice that condition a) ensures that no singularity in $H_*'$ is met in $[\tau_0,\tau_1]$.
Condition b), on the other hand, ensures that $a$ does not diverge in the interval $[\tau_0,\tau_1]$. Moreover, both a) and b) together imply that $a$ is strictly positive, as can be seen from~\eqref{eq:a_of_H}.

We would like to prove that a regular solution can always be extended in $C[\tau_0,\tau_2]$ for a sufficiently small $\tau_2 - \tau_1 > 0$.
To this end, let us again consider the set
\begin{equation*}
  \mathcal{U}[\tau_1,\tau_2] \defn \Big\{ H \in C[\tau_1, \tau_2] \;\Big|\; \norm{H}_{C[\tau_1,\tau_2]} < \min\big\{ a_1^{-1} (\tau_2 - \tau_1)^{-1}, H_c \big\} \Big\}
\end{equation*}
defined in \eqref{eq:set_U} and where $a_1 \defn a(H_*)(\tau_1)$ is the value assumed by the solution $a(H_*)$ at $\tau_1$.
Now we can give a proposition similar to \cref{prop:gateaux_diffable}, namely:
\begin{proposition}\label{prop:gateaux_diffable_ext}
  Let $H_*$ be a solution of \eqref{eq:friedmann_integral} in $C[\tau_0,\tau_1]$ which is also regular.
  The functional $f(H)$ of~\eqref{eq:functional_f}, when evaluated on regular extensions of $H_*$ in $\mathcal{U}[\tau_1,\tau_2]$, is continuously differentiable for arbitrary $\tau_2 > \tau_1$.
\end{proposition}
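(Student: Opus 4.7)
The plan is to adapt the proof of Proposition~\ref{prop:gateaux_diffable} almost verbatim, the only substantive change being that now the argument of $f$ is a continuous curve on $[\tau_0,\tau_2]$ whose restriction to $[\tau_0,\tau_1]$ is fixed to be $H_*$ while its restriction to $[\tau_1,\tau_2]$ varies in $\mathcal{U}[\tau_1,\tau_2]$. Concretely, for any such extension $H$, I would first check that each of the three ingredients of $f$, namely the scale factor $a(H)$, the factor $(H_c^2-H^2)^{-1}$, and the renormalized Wick square $\omega(\norder{\what\varphi^2})(H)$, remains bounded (and bounded away from the obvious singular loci) as a function on the full interval $[\tau_0,\tau_2]$. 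On $[\tau_0,\tau_1]$ this is immediate from regularity of $H_*$ in the sense of Definition~\ref{def:regular}: condition (a) keeps $H_*^2$ below $H_c^2$ and conditions (a)--(b) together keep $a(H_*)$ bounded and strictly positive. On $[\tau_1,\tau_2]$ the defining inequality of $\mathcal{U}[\tau_1,\tau_2]$ gives both $|H|<H_c$ and, via the closed form analogous to \eqref{eq:a_of_H} started from $a_1=a(H_*)(\tau_1)$, a bounded and strictly positive $a(H)$. Continuous differentiability of $a$ and hence of $V=m^2(a^2-a_0^2)$ as functionals of the extension then follows as in Proposition~\ref{prop:gateaux_diffable}.

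Next I would treat the Wick square, which is the only non-trivial piece. The key observation is that even though the adiabatic modes $\chi_k$ are built by iterating the Volterra integral of Proposition~\ref{prop:adiabatic_zero} starting at the initial time $\tau_0$, their values and their functional dependence on the extension at any $\tau\in[\tau_1,\tau_2]$ may be reorganized by splitting every occurrence of $\int_{\tau_0}^{\tau}$ as $\int_{\tau_0}^{\tau_1}+\int_{\tau_1}^{\tau}$. The first piece depends only on the fixed datum $H_*$ and contributes bounded, smooth initial values at $\tau=\tau_1$; the second piece depends on the extension exactly in the manner analyzed in Proposition~\ref{prop:gateaux_diffable}, where the potential $V$ depends smoothly on $H$ through $a(H)$. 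The order-by-order estimates in the proof of Propositions~\ref{prop:regularized_state} and~\ref{prop:gateaux_diffable} (for the zeroth, first, second and higher orders of the mode expansion, together with the bound~\eqref{eq:estimate_chi_n}) then go through unchanged, the only modification being that constants now depend on $\sup_{[\tau_0,\tau_2]}|V|$, $\sup_{[\tau_0,\tau_2]}|V'|$ and on $\sup_{[\tau_0,\tau_2]}|\dirD V(H;\delta H)|$, all of which are finite thanks to the boundedness established in the previous paragraph and to the regularity of $H_*$. The global boundedness statement of Proposition~\ref{prop:bounds} applies on the full interval $[\tau_0,\tau_2]$ as well.

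The step I expect to require the most care is bookkeeping the functional derivative of $\omega(\norder{\what\varphi^2})(H)(\tau)$ at points $\tau\in[\tau_1,\tau_2]$ when the perturbation $\delta H$ is supported only on $[\tau_1,\tau_2]$: the mode $\chi_k(\tau)$ involves nested integrals over $[\tau_0,\tau]$ of products of $V$, and one has to verify that differentiating in the direction $\delta H$ commutes with the $k$-integration in the regularized expression \eqref{eq:renormalized_wick_square}, uniformly in the fixed data on $[\tau_0,\tau_1]$. This is handled exactly as in Proposition~\ref{prop:gateaux_diffable}: the $k_0^{-3}$ (first order), $k_0^{-4}$ (second order) and $k_0^{-4}$ times an exponential of $\int |V|$ (higher orders) decays make the $k$-integral absolutely convergent and dominated, so Lebesgue's theorem lets us interchange differentiation and integration. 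Combining all of this, $f$ inherits continuous differentiability on the set of regular extensions in $\mathcal{U}[\tau_1,\tau_2]$, as claimed.
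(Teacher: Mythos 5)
Your proposal is correct and follows essentially the same route as the paper: both arguments reduce to rerunning Proposition~\ref{prop:gateaux_diffable}, controlling $a$, $a'$ and the Wick square on the full interval $[\tau_0,\tau_2]$ by combining the regularity of $H_*$ on $[\tau_0,\tau_1]$ with the defining bound of $\mathcal{U}[\tau_1,\tau_2]$ on $[\tau_1,\tau_2]$ (the paper phrases this as taking maxima of the sup-norms over the two subintervals, you phrase it as splitting the Volterra integrals at $\tau_1$, but the content is identical). Your explicit remark that the $[\tau_0,\tau_1]$ contribution to the modes is a fixed, bounded datum on which the functional derivative vanishes is a slightly more careful bookkeeping of the same observation.
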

\begin{proof}
  The proof of this proposition can be obtained exactly as the proof of \cref{prop:gateaux_diffable}.
  However, the estimates we have obtained in \cref{prop:bounds} and the proof of \cref{prop:gateaux_diffable} cannot be applied straightforwardly because the state $\omega$ depends on the initial time $\tau_0$ and the initial datum $a_0$ through the construction described in \cref{sub:adiabatic_zero}.
  Moreover, the estimates of \cref{prop:bounds} depend on the knowledge of $a$ and $a'$ on the whole interval $[\tau_0,\tau_2]$.
  Luckily enough, we know that the solution $H_*$ is regular in $[\tau_0,\tau_1]$, while we know that the extension restricted to $[\tau_1,\tau_2]$ is in the set $\mathcal{U}[\tau_1,\tau_2]$; thus we just need to use the following estimates
  \begin{align*}
    \norm{a}_{C[\tau_0,\tau_2]} & = \max \big\{ \norm{a}_{C[\tau_1,\tau_2]}, \norm{a}_{C[\tau_0,\tau_1]} \big\}, \\
    \norm{a}_{C[\tau_0,\tau_2]}^{-1} & = \max \big\{ \norm{a}^{-1}_{[\tau_1,\tau_2]}, \norm{a}^{-1}_{C[\tau_0,\tau_1]} \big\}, \\
    \norm{a'}_{C[\tau_0,\tau_2]} & = \max \big\{ \norm{a'}_{C[\tau_1,\tau_2]}, \norm{a'}_{C[\tau_0,\tau_1]} \big\}.
  \end{align*}
  With this in mind, we can again use \cref{prop:bounds} to control the boundedness of $\omega(\norder{\what\varphi^2})$.
  Then, making the replacements $\tau_0 \to \tau_1, \tau_1 \to \tau_2$ and $a_0 \to a_1$ at the \emph{appropriate} places in \cref{prop:gateaux_diffable}, one can see that estimates are not substantially influenced and that thesis still holds for $\mathcal{U}[\tau_1,\tau_2]$.
\end{proof}

Notice that it is always possible to fix $\tau_2$ such that $a_1^{-1} (\tau_2 - \tau_1)^{-1} \geq H_c$, whereby $\mathcal{U}[\tau_1,\tau_2]$ becomes the set of \emph{all} possible regular extensions of $H_*$ in $[\tau_1,\tau_2]$.
This guarantees that any extension in $\mathcal{U}[\tau_1,\tau_2]$ is the unique regular extension.

We are now ready to state the main theorem of the present section which can be proven exactly as \cref{thm:main}.
\begin{theorem}\label{thm:extension}
  Consider a solution $H_*(\tau)$ in $C[\tau_0,\tau_1]$ of~\eqref{eq:friedmann_integral}.
  If the solution is regular in $[\tau_0,\tau_1]$, as defined in \cref{def:regular},
  then it is possible to find a $\tau_2 > \tau_1$ such that, the solution $H_*$ can be extended uniquely to $C[\tau_0,\tau_2]$ and the solution is regular therein.
\end{theorem}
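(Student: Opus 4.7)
The plan is to mirror the proof of Theorem 4.1, replacing the initial data $(\tau_0, a_0, H_0)$ with the terminal data $(\tau_1, a_1, H_*(\tau_1))$ of the given regular solution, and to use Proposition 4.3 as the drop-in replacement for Proposition 4.1. Concretely, I would first rewrite \eqref{eq:friedmann_integral} on the interval $[\tau_1,\tau_2]$ as a fixed-point equation $\wtilde{H} = \wtilde{F}(\wtilde{H})$ where $\wtilde{F}(\wtilde{H})(\tau) = H_*(\tau_1) + \int_{\tau_1}^{\tau} f(H_* \cup \wtilde{H})(\eta)\,\dif\eta$, with $H_* \cup \wtilde{H}$ denoting the concatenation in $C[\tau_0,\tau_2]$. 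Because $H_*$ is already fixed on $[\tau_0,\tau_1]$, the functional $\wtilde{F}$ depends only on $\wtilde{H} \in C[\tau_1,\tau_2]$, and its integrand $f$ is precisely the object whose continuous differentiability on $\mathcal{U}[\tau_1,\tau_2]$ was established by Proposition 4.3.

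Next I would invoke Proposition 3.5 with $f_0 \equiv H_*(\tau_1)$ and $V \subset \mathcal{U}[\tau_1,\tau_2]$ a small closed ball around the constant function $H_*(\tau_1)$: since $f$ is bounded on such a ball (by the differentiability plus Proposition 1.14 applied locally), there exists $\tau_2 > \tau_1$ such that $\wtilde{F}(V)\restriction_{[\tau_1,\tau_2]} \subset V\restriction_{[\tau_1,\tau_2]}$. Then Proposition 3.4 applies: $f$ being continuously differentiable on $\mathcal{U}[\tau_1,\tau_2]$ is locally Lipschitz in the sense needed, and so $\wtilde{F}$ admits a unique fixed point $\wtilde{H}_*$ in $V$. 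Concatenating yields the desired extension $H_* \in C[\tau_0,\tau_2]$, which solves \eqref{eq:friedmann_integral} by construction.

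It remains to show the extension is regular in $[\tau_0,\tau_2]$. The two conditions from Definition 4.1 already hold on $[\tau_0,\tau_1]$, so one only has to ensure they persist on $[\tau_1,\tau_2]$. For condition (a), by shrinking $\tau_2$ if necessary we may assume that the radius of the ball $V$ is strictly less than $H_c - \abs{H_*(\tau_1)}$, so that $\norm{H_*}_{C[\tau_0,\tau_2]} < H_c$. For condition (b), note that $a_0 \int_{\tau_0}^{\tau_1} H_*(\eta)\,\dif\eta = 1 - a_0/a_1 < 1$ by hypothesis, and by continuity of the integral in its upper limit we may further shrink $\tau_2 - \tau_1$ so that the added contribution from $[\tau_1,\tau_2]$ keeps this quantity strictly below one. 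Uniqueness of the extension as a \emph{regular} extension follows because, for $\tau_2$ chosen small enough that $a_1^{-1}(\tau_2-\tau_1)^{-1} \geq H_c$, the set $\mathcal{U}[\tau_1,\tau_2]$ already contains every regular candidate, so the uniqueness from Proposition 3.4 within $V$ upgrades to uniqueness among all regular extensions after reducing $V$ to exhaust $\mathcal{U}[\tau_1,\tau_2]$ via a standard open-cover argument.

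The main obstacle I anticipate is not the contraction step itself but the careful bookkeeping needed so that the state $\omega$, whose construction depends globally on the scale factor on $[\tau_0,\tau_2]$, still yields a continuously differentiable $\omega(\norder{\what\varphi^2})(H)$ when only the tail of $H$ is being varied; this is essentially what Proposition 4.3 arranges, and the remainder of the argument is then a routine adaptation of Theorem 4.1.
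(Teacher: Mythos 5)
Your proposal is correct and follows essentially the same route as the paper: restart the fixed-point argument at $\tau_1$ with data $(a_1, H_*(\tau_1))$, use \cref{prop:gateaux_diffable_ext} to get continuous differentiability of $f$ on $\mathcal{U}[\tau_1,\tau_2]$ (this is exactly where the global dependence of the state on $a\restriction_{[\tau_0,\tau_2]}$ is controlled), then apply \cref{prop:closed} and \cref{prop:contraction} to obtain a unique fixed point, with uniqueness among regular extensions secured by choosing $\tau_2$ so that $a_1^{-1}(\tau_2-\tau_1)^{-1}\geq H_c$. Your explicit verification that conditions (a) and (b) of \cref{def:regular} persist on $[\tau_1,\tau_2]$ is a detail the paper leaves implicit in the definition of $\mathcal{U}[\tau_1,\tau_2]$, but it does not change the argument.
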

\begin{proof}
  Thanks to \cref{prop:gateaux_diffable_ext}, $f$ is continuously differentiable on all regular extensions of $H_*$ in $\mathcal{U}[\tau_1,\tau_2]$ for any $\tau_2$ such that $a_1^{-1} (\tau_2 - \tau_1)^{-1} \geq H_c$.
  With the remarks of the proof of \cref{prop:gateaux_diffable_ext} we can use \cref{prop:bounds} to estimate the boundedness of $\omega(\norder{\what\varphi^2})$ and apply \cref{prop:closed} to find a $\tau_2 > \tau_1$ and a closed subset $U \subset \mathcal{U}[\tau_1,\tau_2]$ such that $F(U) \subset U$.
  It then follows from \cref{prop:contraction} that $F$ has a unique fixed point in $U$.
\end{proof}

We study now all possible solutions of~\eqref{eq:friedmann_integral} which are defined on intervals of the form $[\tau_0,\tau)$, which are regular on any closed interval contained in their domain and which enjoy the same initial values $a_0 = a(\tau_0), H_0 = H(\tau_0)$.

\begin{proposition}
  A \IdxMain{semiclassical-einstein-maximal}\emph{maximal} solutions exists; it is unique and regular.
\end{proposition}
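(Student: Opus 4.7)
The plan is to construct the maximal solution by taking the union of all regular solutions sharing the initial data $(a_0,H_0)$ at $\tau_0$, and then to invoke \cref{thm:extension} to rule out any proper extension. Let $\mathcal{S}$ denote the collection of all $H_\alpha : [\tau_0,\tau_\alpha) \to \RR$ that solve \eqref{eq:friedmann_integral}, are regular on every closed subinterval $[\tau_0,\tau] \subset [\tau_0,\tau_\alpha)$, and satisfy $a(H_\alpha)(\tau_0)=a_0$, $H_\alpha(\tau_0)=H_0$. The set $\mathcal{S}$ is non-empty by \cref{thm:main}.

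First I would establish a gluing principle: any two elements $H_\alpha,H_\beta \in \mathcal{S}$ coincide on the overlap of their domains. Let $J \defn \{\tau \in [\tau_0,\min(\tau_\alpha,\tau_\beta)) : H_\alpha(\eta)=H_\beta(\eta) \text{ for all } \eta\in[\tau_0,\tau]\}$. Since both solutions are continuous with the same initial value and the integrand $f$ in \eqref{eq:functional-einstein} is continuous, $J$ is non-empty, closed in $[\tau_0,\min(\tau_\alpha,\tau_\beta))$, and contains a neighbourhood of $\tau_0$ by \cref{thm:main}. To show $J$ is also open, pick $\tau^* \defn \sup J$ assuming for contradiction that $\tau^* <\min(\tau_\alpha,\tau_\beta)$. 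At $\tau^*$ both solutions take the same value (by closedness) and their common restriction to $[\tau_0,\tau^*]$ is regular; applying \cref{thm:extension} starting from $\tau^*$ with the common data yields a unique regular extension on some $[\tau^*,\tau^*+\delta]$, forcing $H_\alpha=H_\beta$ there and contradicting maximality of $\tau^*$.

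With unique gluing in hand, set $\tau_{\max} \defn \sup\{\tau_\alpha : H_\alpha \in \mathcal{S}\}$ and define $H_{\max} : [\tau_0,\tau_{\max}) \to \RR$ by $H_{\max}(\tau) \defn H_\alpha(\tau)$ for any $\alpha$ with $\tau < \tau_\alpha$. The gluing principle ensures $H_{\max}$ is well-defined and solves \eqref{eq:friedmann_integral} on $[\tau_0,\tau_{\max})$. Regularity on any closed $[\tau_0,\tau] \subset [\tau_0,\tau_{\max})$ is inherited from some $H_\alpha \in \mathcal{S}$ that contains that subinterval, so $H_{\max}$ satisfies the regularity conditions of \cref{def:regular} on every closed subinterval of its domain. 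Uniqueness of the maximal solution is then immediate: any other candidate with the same initial data lies in $\mathcal{S}$, hence its domain is contained in $[\tau_0,\tau_{\max})$ and it coincides with $H_{\max}$ there.

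Finally, I would verify maximality by contradiction: suppose $H_{\max}$ extends to a regular solution on some larger interval $[\tau_0,\tau_{\max}+\delta)$ with $\delta>0$. Its restriction to $[\tau_0,\tau_{\max}+\delta/2]$ would then be a regular solution by \cref{def:regular}, so it would belong to $\mathcal{S}$ with right endpoint exceeding $\tau_{\max}$, contradicting the definition of $\tau_{\max}$. The main subtlety I expect is in the openness argument for the gluing principle: one must check that \cref{thm:extension} can genuinely be applied at the putative disagreement point $\tau^*$, which requires that the common restriction to $[\tau_0,\tau^*]$ is regular in the sense of \cref{def:regular}; this follows because both $H_\alpha$ and $H_\beta$ are regular on any closed subinterval and $\tau^*$ lies strictly inside the open interval $[\tau_0,\min(\tau_\alpha,\tau_\beta))$.
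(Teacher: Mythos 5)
Your proposal is correct and follows essentially the same route as the paper: take the union of all regular solutions with the given initial data, define the maximal solution piecewise, and use the uniqueness of regular extensions (via \cref{thm:extension}, which rests on \cref{prop:gateaux_diffable_ext}) to guarantee well-definedness. The only difference is one of detail — the paper compresses the gluing step into a single citation of \cref{prop:gateaux_diffable_ext}, whereas you spell out the sup-of-agreement-set argument explicitly, which is a welcome clarification rather than a different method.
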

\begin{proof}
  Let $\mathcal{S} = \{ I_\alpha, H_\alpha \}_{\alpha \in A}$, with $A \subset \NN$ some index set, be the set of all possible regular solutions $H_\alpha$ with domain $I_\alpha$ for the same initial values.
  By the existence of local solutions $\mathcal{S}$ is not empty.
  We then take the union $I = \bigcup_{\alpha \in A} I_\alpha$ and define $H(\tau) = H_\alpha(\tau)$ for $\tau \in I_\alpha$, which is a well-defined regular solution by \cref{prop:gateaux_diffable_ext}.
  Since every $I$ is a superset of every $I_\alpha$, $H$ is the unique maximal regular solution. 
\end{proof}

As for the solution provided by theorem~\ref{thm:main}, also the maximal solution obtained above correspond to a metric with $C^2$ regularity.

\IdxRanEnd{semiclassical-einstein-global}

\section{Numerical solutions}
\label{sec:numerical_solutions}
\IdxRanBegin{semiclassical-einstein-numerical}

The first problem that one encounters when attempting to treat the semiclassical Einstein equation in a numerical fashion, is the construction of states.
Here, the mode equation~\eqref{eq:mode_equation}, which describes an oscillator with a time-dependent `resonance frequency' $\omega_k^2 = k^2 + a^2 m^2$, has to be solved.
Standard numerical solvers, like the Runge--Kutta method, rely on differentiation and their error scales like a (high-order) derivative of the solution.
However, each derivative of an oscillating function increases the amplitude by a power of the frequency, thus ultimately leading to large errors for quickly oscillating differential equations after a short time span.
This problem can be partially counteracted by choosing ever smaller step sizes in time, but eventually one will encounter a computational barrier.
Another possibility is to look for a non-standard approach to solve the mode equation.
Such methods replace differentiation with integration, but, since the numerical integration of highly oscillatory functions is also a non-trivial problem, this is still an active area of research, see for example \cite{engquist:2009,iserles:2002,iserles:2006,levin:1996,levin:1997}.

One might say, that the large frequency behaviour of the modes is of no relevance when solving the semiclassical Einstein equation because it involves the state only \emph{after} regularization, \ie, after the terms that contribute for large $\omega_k$ have been subtracted.
While this response is, to some degree, certainly true, numerical errors in the solution before the regularization and in the subtraction itself can still accumulate.
Therefore, the issue of the high frequency modes and their regularization has to be carefully addressed in a numeric approach.

Although the perturbative construction of the state used in this chapter (see also  \cref{sub:adiabatic_zero}), the functional~\eqref{eq:friedmann_integral} and the use of the Banach fixed-point theorem in the proof of \cref{thm:main} were not developed with a numerical application in mind, there are reasons why they might be useful also for numerics:
The mode solution are found recursively from~\eqref{eq:adiabatic_recc_int}, an integral equation which avoids the differentiation problem discussed above. 
Moreover, as seen in \cref{prop:regularized_state}, only the first two partial modes $\chi_k^{\smash{\mbox{\tiny $(\!0\!)$}}}$ and $\chi_k^{\smash{\mbox{\tiny $(\!1\!)$}}}$ are affected by the regularization in a well-understood way, so that also the numerical difficulty in the regularization can circumvented.
Nevertheless, also this approach is not without its problems as it involves repeated numerical integration of oscillating functions and therefore it is very slow if naïvely implemented as a Riemann sum, because it requires small time steps.
The other feature of the proof of existence that allows a translation to numerics is the use of the Banach fixed-point theorem.
Namely, we can be assured that an iterated recursive application of~\eqref{eq:friedmann_integral} will converge to a solution, even though we do not know how quickly convergence occurs.

\IdxRanEnd{semiclassical-einstein-numerical}

\section{Outlook}
\label{sec:solutions_outlook}

In this chapter we have studied the backreaction of a quantum massive scalar field conformally coupled with gravity to cosmological spacetimes.
We have given initial conditions at finite time $\tau=\tau_0$ and we have shown that a unique maximal solution exists. The maximal solution either lasts forever or until a spacetime singularity is reached.

In order to obtain this result, we have used a state which looks as much as possible like the vacuum at the initial time.
Notice that it is possible to choose other classes of states without significantly altering the results obtained in this chapter.
In particular, if we restrict ourself to Gaussians pure state which are homogeneous and isotropic, their two-point function takes the form
\begin{equation*}\label{eq:two_point_other}
  \widetilde\omega_2(x, y) = \lim_{\varepsilon \to 0^+} \frac{1}{(2 \uppi)^3} \int_{\RR^3} \frac{\conj\xi_k(\tau_x)}{a(\tau_x)} \frac{\xi{}_k(\tau_y)}{a(\tau_y)}\, \e^{\im \vec{k} \cdot (\vec{x} - \vec{y})} \e^{-\varepsilon k}\, \dif\vec{k},
\end{equation*}
where $\xi_k$ are solutions of \eqref{eq:mode_equation} which enjoy the Wronskian condition \eqref{eq:wronski}.
These $\chi_k$ can then be written as a Bogoliubov transformation of the modes $\chi_k$ studied earlier in this chapter, namely,
\begin{equation*}
  \xi_k = A(k) \chi_k + \conj{B}(k) \conj\chi_k
\end{equation*}
for suitable functions $A$ and $B$.
Then, because of the constraint $\abs{A}^2 - \abs{B}^2 = 1$, the difference
\begin{equation*}
  \widetilde\omega(\norder{\what\varphi^2}) - \omega(\norder{\what\varphi^2}) =
  \lim_{\varepsilon \to 0^+} \frac{1}{(2 \uppi)^3} \frac{2}{a^2} \int_{\RR^3} \left(
    \abs{B}^2\chi_k\conj\chi{}_k + \Re{\left(A B\chi_k\chi_k\right)}
  \right) \e^{-\varepsilon k}\, \dif\vec{k}
\end{equation*}
can be easily controlled employing \eqref{eq:estimate_chi_n} if $\abs{B}$ is sufficiently regular (\eg, if $B(k)$ is in $L^2 \cap L^1$).\footnote{A detailed analysis of this problem is present in \cite{zschoche:2014}.}
With this observation it is possible to obtain again all the estimates used in the proofs of \cref{thm:main,thm:extension}.

In the future, it would be desirable to study the semiclassical equations in more general cases, namely for more general fields, abandoning for example the conformal coupling, and for more general background geometries.
The results presented here cannot straightforwardly be extended to fields which are not conformally coupled to curvature or to spacetimes that are not conformally flat because in that case fourth order derivatives of the metric originating in the conformal anomaly cannot be cancelled by a judicious choice of renormalization parameters, \ie, Wald's fifth axiom \cite{wald:1977} cannot be satisfied.
To still solve the semiclassical Einstein equation with methods similar to those presented here,
a deeper analysis of the states is required, in particular, one needs states of higher regularity.
A preliminary study in this direction can be found in a paper of Eltzner and Gottschalk \cite{eltzner:2011}, where the semiclassical Einstein equation on a FLRW background with non-conformally coupled scalar field is discussed.
The case of backgrounds which are only spherically symmetric is interesting from many perspectives.
Its analysis could give new hints on the problem of semiclassical black hole evaporation and confirm the nice two-dimensional results obtained in \cite{ashtekar:2011}.
Finally, the limit of validity of the employed equation needs to be carefully addressed in the future.

\IdxRanEnd{semiclassical-einstein-sol}


\chapter{Induced semiclassical fluctuations}
\label{cha:fluctuations}

\section*{Introduction}

As described in \cref{cha:einstein}, in semiclassical Einstein gravity one equates a classical quantity, the Einstein tensor, with the expectation value of a quantum observable, the quantum stress-energy tensor, \ie, a quantity with a probabilistic interpretation.
Such a system could make sense only when the fluctuations of the quantum stress-energy tensor can be neglected.
Unfortunately, as also noticed in~\cite{pinamonti:2011}, the variance of quantum unsmeared stress-energy tensor is always divergent even when proper regularization methods are considered.
The situation is slightly better when a smeared stress-energy tensor is analyzed.
In that way, however, the covariance of~\eqref{eq:semiclassical_einstein} gets lost.
A possible way out is to allow for fluctuations also on the left-hand side of~\eqref{eq:semiclassical_einstein}.
This is the point of view we shall assume within this chapter, which is based on an article~\cite{pinamonti:2015a} in collaboration with Pinamonti.

More precisely, we interpret the Einstein tensor as a stochastic field and equate its $n$-point distributions with the symmetrized $n$-point distributions of the quantum stress-energy tensor.
As an application of this (toy) model, we analyze the metric fluctuations induced by a massive, conformally coupled scalar field via the (quantum) stress-energy tensor in the simplest non-trivial spacetime -- de Sitter spacetime.
We find that the potential in a Newtonianly perturbed FLRW spacetime has a almost scale-invariant power spectrum.

These results encourage a comparison with the observation of anisotropies in the cosmic microwave background and their theoretical explanations.
Anisotropies in the angular temperature distribution were predicted by Sachs and Wolfe \cite{sachs:1967} shortly after the discovery of the cosmic microwave background (CMB) by Penzias and Wilson \cite{penzias:1965}.
In their famous paper they discuss what was later coined the \emph{Sachs--Wolfe effect}: The redshift in the microwave radiation caused by fluctuations in the gravitational field and the corresponding matter density fluctuations.
In the standard model of inflationary cosmology the fluctuations imprinted upon the CMB are seeded by quantum fluctuations during inflation \cite{mukhanov:1981,mukhanov:1992}, see also the reviews in~\cite{durrer:2008,ellis:2012}.

The usual computation of the power spectrum of the initial fluctuations produced by single-field inflation can be sketched as follows \cite{bartolo:2004,durrer:2008,ellis:2012}:
First, one introduces a (perturbed) classical scalar field $\varphi + \delta \varphi$, the inflaton field, which is coupled to a (perturbed) expanding spacetime $g + \delta g$.
Then, taking the Einstein equation and the Klein--Gordon equation at first order in the perturbation variables, one constructs an equation of motion for the \emph{Mukhanov--Sasaki variable} $Q = \delta \varphi + \dot\varphi\, H^{-1} \Phi$, where $\Phi$ is the Bardeen potential \cite{bardeen:1980} and $H$ the Hubble constant.
$Q$ is then quantized\footnote{A recent discussion about the quantization of a such system can be found in \cite{eltzner:2013}.} (in the slow-roll approximation) and one chooses as the state of the associated quantum field a Bunch--Davies-like state.
Last, one evaluates the power spectrum $P_Q(k)$ of~$Q$, \ie, the Fourier-transformed two-point distribution of the quantum state, in the super-Hubble regime $k \ll aH$ and obtains an expression of the form\footnote{An alternative definition of the power spectrum is $\mathcal{P}_Q(k) = (2 \uppi)^{-2} k^3 P_Q(k)$.}
\begin{equation}
  \label{eq:powerspectrum_Q}
  P_Q(k) = \frac{A_Q}{k^3}\, \left( \frac{k}{k_0} \right)^{n_s-1},
\end{equation}
where $A_Q$ is the amplitude of the fluctuations, $k_0$ a pivot scale and $n_s$ the spectral index.
Notice the factor of $k^{-3}$ in~\eqref{eq:powerspectrum_Q} which gives the spectrum the `scale-invariant' \IdxMain{harrison-zeldovich}\emph{Harrison--Zel'dovich} form if $n_s = 1$.
Depending on the details of model, $n_s \lesssim 1$ and there is also a possibility for a scale dependence of~$n_s$ -- the `running' of the spectral index $n_s = n_s(k)$.

This result can then be related to the power spectrum of the comoving curvature perturbation~$\mathcal{R}$, which is proportional to~$Q$, and can be compared with observational data.
Assuming adiabatic and Gaussian initial perturbations, the WMAP collaboration finds $n_s = 0.9608 \pm 0.0080$ (at $k_0 = \unit[0.002]{Mpc^{-1}}$) in a model without running spectral index and gravitational waves, excluding a scale-invariant spectrum at~$5\sigma$~\cite{hinshaw:2012}.
Furthermore, the data of WMAP and other experiments can be used to constrain the deviations from a pure Gaussian spectrum, the so called non-Gaussianities, that arise in some inflationary models \cite{bartolo:2004,bennett:2012,maldacena:2003}.

In \cite{parker:2007,agullo:2008,agullo:2009,agullo:2011a} concerns have been raised whether the calculation leading to~\eqref{eq:powerspectrum_Q} and similar calculations are correct: The authors argue that the two-point distribution of the curvature fluctuations has to be regularized and renormalized similarly to what is done in semiclassical gravity.
As a result the power spectrum is changed sufficiently that previously observationally excluded inflation models become realistic again.
On the contrary the authors of \cite{durrer:2009,marozzi:2011} argue that the adiabatic regularization employed in \cite{parker:2007,agullo:2008,agullo:2009,agullo:2011a} is not appropriate for low momentum modes if evaluated at the Horizon crossing and irrelevant for these modes if evaluated at the end of inflation.

A slightly different approach to the calculation of the power spectrum based on \emph{stochastic gravity} can be found in \cite{roura:2000,roura:2008,hu:2008}.
In spirit similar to the approach presented in this chapter, the authors equate fluctuations of the stress-energy tensor with the correlation function of the Bardeen potential.
In the super-Hubble regime they obtain an almost scale-invariant power spectrum.
Moreover, they discuss the equivalence of their stochastic gravity approach with the usual approach of quantizing metric perturbations.

Our approach here is strictly different from the standard one described above.
Instead of quantizing a coupled system of linear inflaton and gravitational perturbations, we aim at extending the semiclassical Einstein equation to describe metric fluctuations via the fluctuations in the stress-energy tensor of a quantum field.

\section{Fluctuations of the Einstein tensor}

Consider now the Einstein tensor as a random field.
Then we could imagine to equate the probability distribution of the Einstein tensor with the probability distribution of the stress-energy tensor.
This suggestion, however, seems largely void without a possibility of actually computing the probability distributions of the stress-energy tensor because, as discussed above, its moments of order larger than one are divergent.

Instead we may approach this idea by equating the hierarchy of $n$-point distributions of the Einstein tensor with that of the stress-energy tensor:
\begin{subequations}
  \label{eq:einstein_prob}
  \begin{align}
    \label{eq:einstein_prob_1}
    \E[\big]{G_{ab}(x_1)} & = \omega\big( \norder{T_{ab}(x_1)} \big), \\
    \label{eq:einstein_prob_2}
    \E[\big]{\delta G_{ab}(x_1)\, \delta G_{c'd'}(x_2)} & = \tfrac{1}{2} \omega\big( \norder{\delta T_{ab}(x_1)}\, \norder{\delta T_{c'd'}(x_2)} + \norder{\delta T_{c'd'}(x_2)}\, \norder{\delta T_{ab}(x_1)} \big), \\
    \intertext{and}
    \label{eq:einstein_prob_n}
    \E{\delta G^{\boxtimes n}} & = \omega\big(\Sym(\norder{\delta T}^{\boxtimes n})\big), \quad n > 1,
  \end{align}
\end{subequations}
where $\omega$ is a Hadamard state and we defined
\begin{equation*}
  \delta G_{ab} \defn G_{ab} - \E{G_{ab}}
  \quad \text{and} \quad
  \norder{\delta T_{ab}} \defn \norder{T_{ab}} - \omega(\norder{T_{ab}}).
\end{equation*}
The symmetrization on the right-hand side is necessary because the classical quantity on left-hand side is invariant under permutation.

We emphasize that we are equating singular objects in~\eqref{eq:einstein_prob_n}.
Having all the $n$-point distributions of the Einstein stochastic tensor, we can easily construct an equation for the moments of the smeared Einstein tensor which equals the moments of a smeared stress-energy tensor by smearing both sides of~\eqref{eq:einstein_prob} with tensor products of a smooth compactly supported function.
This smearing also automatically accounts for the symmetrization in~\eqref{eq:einstein_prob}.

Furthermore we stress that equating moments, obtained smearing both side of~\eqref{eq:einstein_prob}, is not equivalent to equating probability distributions.
Although it is also possible to arrive at a description in terms of moments when coming from a probability distribution, the inverse mapping is not necessarily well-defined.
Successful attempts to construct a probability distribution for smeared stress-energy tensors can be found in~\cite{fewster:2010,fewster:2012b}.

Consider now a quasi-free Hadamard state~$\omega$ of a conformally coupled scalar field~$\varphi$ on a spacetime $(M, \wbar{g})$, the \Idx{background-st}\emph{background spacetime}.
Our aim is to calculate the perturbation of the background spacetime as specified by the
correlation functions on the left-hand side of~\eqref{eq:einstein_prob} due to the fluctuations of the stress-energy in the quantum state~$\omega$ as specified on the right-hand side of~\eqref{eq:einstein_prob}.
In particular we will require that $\omega$ satisfies \eqref{eq:einstein_prob_1} when we identify the Einstein tensor of the background spacetime~$\wwbar{G}_{ab}$ with~$\E{G_{ab}}$ (\cf\ \cref{cha:einstein_solutions} for a discussion of the solutions of the semiclassical Einstein equation in cosmological spacetimes).
Note that by choosing this Ansatz we are completely ignoring any backreaction effects of the fluctuations to the background metric and evaluate the stress-energy tensor on a state specified on the background spacetime.

Later on we consider perturbations of the scalar curvature induced by a `Newtonianly' perturbed FLRW metric.
For this reason it will be sufficient to work with the trace of~\eqref{eq:einstein_prob} (using the background metric) instead of the full equations.
With the definition
\begin{equation*}
  S \defn - \wbar{g}^{ab} G_{ab},
\end{equation*}
such that $R = \E{S}$, the equations \eqref{eq:einstein_prob} simplify to
\begin{subequations}
  \label{eq:einstein_prob_trace}
  \begin{align}
    \label{eq:einstein_prob_trace_1}
    \E[\big]{S(x_1)} & = \frac{m^2}{8 \uppi^2} [w] - \frac{1}{4 \uppi^2} [v_1] + \text{ren. freedom}, \\
    \label{eq:einstein_prob_trace_2}
    \E[\big]{S(x_1)\, S(x_2)} - \E[\big]{S(x_1)}\E[\big]{S(x_2)} & = m^4 \big( \omega_2^2(x_1, x_2) + \omega_2^2(x_2, x_1) \big), \\
    \intertext{and}
    \label{eq:einstein_prob_trace_n}
    \E[\big]{(S - \E{S})^{\boxtimes n}(x_1, \ldots, x_n)} & = 2^n m^{2n} \Sym\bigg( \sum_\Gamma \prod_{i, j}\frac{\omega_2^{\lambda^\Gamma_{ij}}(x_i, x_j)}{\lambda^\Gamma_{ij}!} \bigg),
  \end{align}
\end{subequations}
where the sum is over all directed graphs~$\Gamma$ with~$n$ vertices $1, \dotsc, n$ with two arrows at every vertex directed to a vertex with a larger label.
$\lambda^\Gamma_{ij} \in \{0,1,2\}$ is the number of arrows from~$i$ to~$j$.
If we perform the symmetrization in~\eqref{eq:einstein_prob_trace_n}, we see that the sum is over all acyclical directed graphs with two arrows at every vertex.
For illustration some graphs are shown in \cref{fig:graphs}.

\begin{figure}
  \centering
  \includegraphics{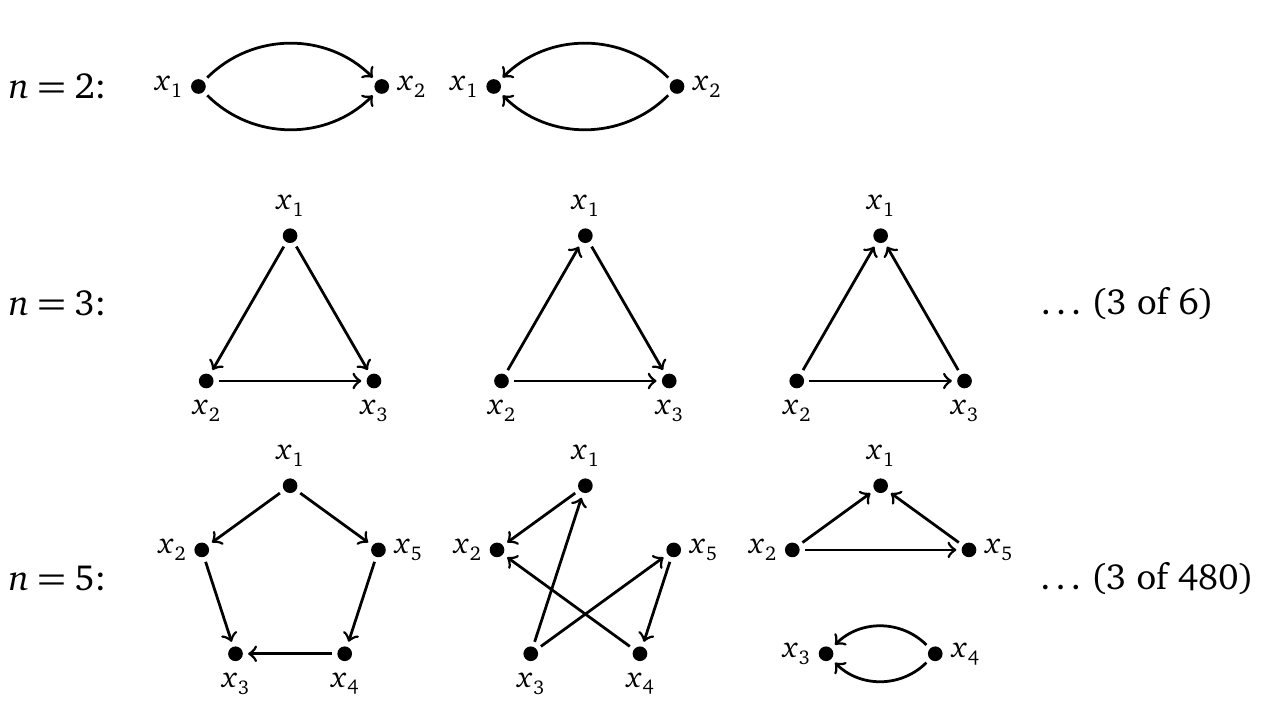}
  \caption{A few graphs illustrating \eqref{eq:einstein_prob_trace_n} for $n=2$, $n=3$ and $n=5$.}
  \label{fig:graphs}
\end{figure}

To obtain~\eqref{eq:einstein_prob_trace_2} and~\eqref{eq:einstein_prob_trace_n}, note that $\norder{\what\varphi^2} - \omega(\norder{\what\varphi^2})$ does not depend on the choice of normal ordering\footnote{Indeed this holds true if we replace $\varphi^2$ with $L \varphi^2$, for any linear operator~$L$.} and thus only \eqref{eq:einstein_prob_trace_1} needs to be renormalized.
Therefore we may choose normal ordering with respect to~$\omega_2$ to see that the combinatorics are equivalent to those in Minkowski space.
Moreover, as $\omega_2$ is a bisolution of the Klein--Gordon equation, the term $\tfrac{1}{3} \norder{\what\varphi\, \mathrm{P} \what\varphi}$ which causes the trace anomaly in~\eqref{eq:einstein_prob_trace_1} (\cf\, \cref{sub:T_ab_trace}) does not contribute to the higher moments.

\section{Fluctuations around a de Sitter spacetime}
\label{sec:fluctuations_around_de_sitter}

We shall now specialize the general discussion presented above to Newtonianly perturbed, exponentially expanding, flat FLRW universes.
That is, the background spacetime $(M, \wbar{g})$ is given in conformal time $\tau < 0$ by the metric tensor
\begin{equation*}
  \label{eq:background_metric}
  \wbar{g} \defn (H \tau)^{-2} (-\dif\tau \otimes \dif\tau + \delta_{ij}\; \dif x^i \otimes \dif x^j)
\end{equation*}
and we consider fluctuations of the scalar curvature derived from metric perturbations of the form
\begin{equation}
  \label{eq:perturbed_metric}
  g \defn (H \tau)^{-2} \big( -(1 + 2 \Phi)\, \dif\tau \otimes \dif\tau + (1 - 2 \Phi)\, \delta_{ij}\, \dif x^i \otimes \dif x^j \big).
\end{equation}

The kind of fluctuations that we consider by choosing \eqref{eq:perturbed_metric} resemble those that are present in single-scalar field inflation in the longitudinal gauge, where there are only `scalar fluctuations' without anisotropic stress (so that the two Bardeen potentials coincide) \cite{ellis:2012,mukhanov:1992}.
Notice that, for classical metric perturbation, these constraints descend from the linearized Einstein equation, however, \latin{a priori} there is no similar constraint in~\eqref{eq:einstein_prob_2}.
Despite these facts, we proceed analyzing the influence of quantum matter on this special kind of metric perturbations and we also refrain from discussing the gauge problem associated to choosing a perturbed spactime; the chosen perturbation potential $\Phi$ is \emph{not} gauge invariant.

We can now calculate the various perturbed curvature tensors and obtain in particular
\begin{equation*}
  S = 12 H^2 (1 - 3 \Phi) + 24 H^2 \tau \partial_\tau \Phi - 6 H^2 \tau^2 \partial_\tau^2 \Phi + 2 H^2 \tau^2 \laplace \Phi + \bigO(\Phi^2)
\end{equation*}
for the trace of the perturbed Einstein tensor, where $\laplace = \partial_{x^1}^2 + \partial_{x^2}^2 + \partial_{x^3}^2$ is the ordinary Laplace operator.
Dropping terms of higher than linear order, this can also be written as
\begin{equation}
  \label{eq:hyp_op}
  S - \E{S} = - 6 H^2 \tau^4 \big( \partial_{\tau} - \tfrac{1}{3} \laplace\! \big) \tau^{-2} \Phi,
\end{equation}
where $\E{S} = 12 H^2$ is nothing but the scalar curvature of the background spacetime.
Notice that, up to a rescaling, the operator on the right-hand side of~\eqref{eq:hyp_op} looks like a wave operator with the characteristic velocity equal to $1/\sqrt{3}$ of the velocity of light.

We can now evaluate the influence of quantum matter fluctuations on the metric fluctuations by inverting the previous hyperbolic operator by means of its retarded fundamental solutions~$\upDelta_{\mathrm{ret}}$ and applying it on both sides of~\eqref{eq:einstein_prob_trace_2} and~\eqref{eq:einstein_prob_trace_n}.
From~\eqref{eq:einstein_prob_trace_2} we can then (formally) obtain the two-point correlation functions of~$\Phi$ (per definition $\E{\Phi} = 0$):
\begin{equation}\label{eq:retarded_applied_to_w}\begin{split}
  \E[\big]{\Phi(x_1)\, \Phi(x_2)} = m^4 \iint_{\RR^8} \upDelta_{\mathrm{ret}}(x_1, y_1) \upDelta_{\mathrm{ret}}(x_2, y_2) & \big( \omega_2^2(y_1, y_2) \\& + \omega_2^2(y_2, y_1) \big)\, \dif^4 y_1\, \dif^4 y_2.
\end{split}\end{equation}
Employing the retarded fundamental solutions in the inversion without adding any solution of~\eqref{eq:hyp_op}, we are implicitly assuming that all the $n$-point distributions of the perturbation potential~$\Phi$ are sourced by quantum fluctuations.
Here we are only interested in evaluating their effect.

\subsection{The squared two-point distribution}

In order to proceed with our analysis, we shall specify the quantum state~$\omega$ for the matter theory.
Following the Ansatz discussed in the preceding section, we choose a quasi-free Hadamard state which satisfies the semiclassical Einstein equation on the background.
In particular, we require that $\omega$ solves \eqref{eq:einstein_prob_trace_1}, namely
\begin{equation*}
  12 H^2 = \frac{m^2}{8 \uppi^2} [w] - \frac{1}{4 \uppi^2} [v_1] + 4 c_1 m^4 - c_2 m^2 R
\end{equation*}
The right-hand side of the previous equation is characterized by three contributions:
The state dependent part $[w]$, the anomaly part $[v_1]$, which takes the simple form (\cf\ Eq.~\eqref{eq:hadamard_v1_flrw})
\begin{equation*}
  [v_1] = -\frac{H^4}{60} + \frac{m^4}{8},
\end{equation*}
and the renormalization freedom $c_1 m^4$ and $c_2 m^2 R$.
Here we set $c_2 = 0$, because we assume the point of view that we have already measured Newton's gravitational constant and do not wish to renormalize it (\cf\ \cref{sub:T_ab_pointsplit}).
That is, we have
\begin{equation*}
  12 H^2 = \frac{m^2}{8 \uppi^2} [w] + \frac{1}{8 \uppi^2} \bigg( \frac{H^4}{15} - \frac{m^4}{2} \bigg) + 4 c_1 m^4.
\end{equation*}
For the semiclassical Einstein equation to hold, we therefore have to require that $[w]$ is a constant.
Then, having fixed $H$ and $m$ (no matter their absolute value), there is always a choice of $c_1$ for which the chosen metric~$\wbar{g}$ and $\omega$ satisfy the semiclassical Einstein equation.
On a de Sitter spacetime these criteria are satisfied by the Bunch--Davis state, \cf\ \cref{sub:bunch_davies}.

In order to evaluate the influence of the quantum matter fluctuations on~$\Phi$ via equation~\eqref{eq:retarded_applied_to_w}, we have to discuss the form of the two-point distribution of the chosen state and its square.
Any Hadamard state on $(M, \wbar{g})$ can be written is equal to the Bunch--Davies state up to smooth terms.
In particular, the two-point distribution $\omega_2$ of every Hadamard state on de Sitter spacetime is of the form
\begin{equation}\label{eq:bunch_davies_expansion}
  \omega_2(x, x') = \lim_{\varepsilon \to 0^+} \frac{H^2}{4 \uppi^2} \frac{\tau \tau'}{(x - x')^2 + 2 \im \varepsilon (\tau - \tau') + \varepsilon^2} + \text{less singular terms},
\end{equation}
where we write $(x - x')^2 \defn -(\tau - \tau')^2 + (\vec{x} - \vec{x}')^2$ and, as always, the limit $\varepsilon \to 0^+$ is a weak limit.
It is no surprise that the leading singularity is conformally related to the two-point distribution of a massless scalar field on Minkowski spacetime, we denote it by $\omega_{\MM}$.
Thus it is also clear that the less singular contributions vanish in the limit of zero mass.

As can be seen in~\eqref{eq:einstein_prob_trace_2}, we need to compute the square of the two-point distribution of the state in question.
For our purposes it will be sufficient to compute the square of the leading singularity in the Hadamard state.\footnote{Note that, in (spatial) momentum space, the leading singularity in~\eqref{eq:bunch_davies_expansion} contributes the smallest inverse power of the momentum~$\vec{k}$; the `less singular terms' correspond to higher inverse powers of~$\vec{k}$. Accordingly, these terms fall off faster for large~$\vec{k}$. This is nothing but the usual relation ship between high momenta and short distances.}
The square of the massless two-point distribution on Minkowski space is
\begin{equation*}
  \omega_{\MM}(x, x')^2 = \lim_{\varepsilon \to 0^+} \left( \frac{1}{4 \uppi^2} \frac{1}{(x - x')^2 + 2 \im \varepsilon (\tau - \tau') + \varepsilon^2} \right)^2.
\end{equation*}
Writing $\omega_{\MM}$ in terms of its spatial Fourier transform, an expression for the spatial Fourier transform of the square of the massless Minkowski vacuum can be obtained as
\begin{equation}
  \label{eq:spectrum_omega2}
  \omega_{\MM}(x, x')^2 = \lim_{\varepsilon \to 0^+} \frac{1}{128 \uppi^5}
  \int_{\RR^3} \e^{\im \vec{k} \cdot (\vec{x} - \vec{x}')}
  \int_{k}^\infty \e^{-\im p (\tau - \tau')} \e^{- \varepsilon p }\, \dif p\, \dif\vec{k}.
\end{equation}
Later on we will use this expression in order to obtain the power spectrum of $\Phi$.

\subsection{Power spectrum of the metric perturbations}

We want to compute the power spectrum $P(\tau, \vec{k})$ of the two-point correlation of~$\Phi$ at the time~$\tau$.
Since both the spacetime and the chosen state are invariant under spatial translation, it can be defined as
\begin{equation*}
  \E[\big]{\Phi(\tau, \vec{x})\, \Phi(\tau, \vec{x}')} \defn \frac{1}{(2 \uppi)^3} \int_{\RR^3} P(\tau,\vec{k})\, \e^{\im \vec{k} \cdot (\vec{x} - \vec{x}')}\, \dif\vec{k}.
\end{equation*}
To obtain $P$, we first need an expression for the retarded operator $\upDelta_{\mathrm{ret}}$ corresponding to~\eqref{eq:hyp_op}:
\begin{align*}
  (\upDelta_{\mathrm{ret}}\,f)(\tau, \vec{x}) & = \frac{1}{(2 \uppi)^3} \int_{\RR^3} \int_{-\infty}^{\tau}
  \what\upDelta_{\mathrm{ret}}(\tau, \tau_1, \vec{k})
  \widehat{f}(\tau_1, \vec{k}) \e^{\im \vec{k} \cdot \vec{x}}\, \dif\tau_1\, \dif\vec{k},
  \quad \text{with}
  \\
  \what\upDelta_{\mathrm{ret}}(\tau, \tau_1, \vec{k}) & \defn
  - \frac{1}{6 H^2} \frac{\tau^2}{\tau_1^4} \frac{\sqrt{3}}{k} \sin\left(k\, (\tau - \tau_1)/\sqrt{3}\right),
\end{align*}
where $f$ is a compactly supported smooth function.
We can then rewrite \eqref{eq:retarded_applied_to_w} in Fourier space to obtain
\begin{equation*}
  P(\tau, \vec{k}) = 2 m^4 \int_{-\infty}^\tau \int_{-\infty}^\tau
  \what\upDelta_{\mathrm{ret}}(\tau, \tau, \vec{k}) \what\upDelta_{\mathrm{ret}}(\tau, \tau', \vec{k})
  \widehat{\omega^2_{BD}}(\tau, \tau', \vec{k})\,
  \dif\tau\, \dif\tau'.
\end{equation*}
Note that the symmetrization of the state is taken care of indirectly by the equal limits of the two integrations.

As discussed above (see \eqref{eq:bunch_davies_expansion} and the following paragraph), we will compute the contribution due to the leading singularity of the Hadamard state:
\begin{equation*}
  P_0(\tau, \vec{k}) \defn 2 H^4 m^4 \int_{-\infty}^\tau \int_{-\infty}^\tau
  \what\upDelta_{\mathrm{ret}}(\tau, \tau, \vec{k})\, \what\upDelta_{\mathrm{ret}}(\tau, \tau', \vec{k})\,
  \tau^2 \tau^{\prime2} \widehat{\omega^2_{\MM}}(\tau_1, \tau', \vec{k})\,
  \dif\tau\, \dif\tau'.
\end{equation*}
We emphasize at this point that, because of the form of~\eqref{eq:spectrum_omega2} and of $\what\upDelta_{\mathrm{ret}}$, no infrared (with respect to~$\vec{k}$) singularity appears in $P_0(\tau,\vec{k})$ at finite $\tau$.
Recall also that the error we are committing, using $P_0(\tau,\vec{k})$ at the place of $P(\tau,\vec{k})$, tends to vanish in the limit of small masses.
Inserting the spectrum of $\omega_{\MM}^2$ obtained in~\eqref{eq:spectrum_omega2} and switching the order in which the integrals are taken (for $\varepsilon > 0$), we can write
\begin{equation}
  \label{eq:P0_with_A}
  P_0(\tau, \vec{k}) =
  \lim_{\varepsilon \to 0^+} \frac{m^4}{16 \uppi^2}
  \int_{k}^\infty
  \frac{1}{k^4}\, \abs*{A\left(\tau, k/\sqrt{3}, p\right)}^2 \e^{-\varepsilon p}\, \dif p,
\end{equation}
where we have introduced the auxiliary function
\begin{equation}
  \label{eq:defA}
  A(\tau, \kappa, p) \defn \int_{-\infty}^\tau
  \frac{\kappa\, \tau^2}{\tau_1^2}
  \sin\big( \kappa\, (\tau - \tau_1) \big)\,
  \e^{-\im p \tau_1}\,
  \dif\tau_1,
\end{equation}
which can also be written in closed form in terms of the generalized exponential integral\footnote{For a definition and various properties of these special functions see \eg\ \cite[Chap.~8]{olver:2010}.} $E_2$ as
\begin{equation*}
  \label{eq:A_with_E2}
  A(\tau, \kappa, p) =
  A(\kappa \tau, p \tau) =
  \frac{\im}{2}\, \kappa \tau\,
  \Big(
    E_2 \big( \im\, (p + \kappa)\, \tau \big)\, \e^{ \im \kappa \tau} -
    E_2 \big( \im\, (p - \kappa)\, \tau \big)\, \e^{-\im \kappa \tau}
  \Big)
\end{equation*}
for $p \geq \kappa > 0$ and by the complex conjugate of this expression if $\kappa > p, \kappa > 0$.
In the following study of the form of the power spectrum $P_0$ the auxiliary function $A$ will be instrumental.

\begin{lemma}
  \label{lem:A}
  For $\abs{p} \neq \kappa > 0$, $A(\tau, \kappa, p)$ has the $\tau$-uniform bound
  \begin{equation}
    \label{eq:bound_A}
    \abs{A} \leq \frac{4\, \kappa^2}{\abs{\kappa^2 - p^2}}.
  \end{equation}
  For large negative times it satisfies the limit
  \begin{equation}
    \label{eq:limit_A}
    \lim_{\tau \to -\infty} \abs{A} = \frac{\kappa^2}{\abs{\kappa^2 - p^2}}.
  \end{equation}
\end{lemma}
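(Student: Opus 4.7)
My plan is to recast $A$ via a Laplace transform identity into a form where both the uniform bound and the asymptotic limit become essentially trivial. First, for $\tau < 0$ I would substitute $\tau_1 = \tau t$ in the defining integral~\eqref{eq:defA} to obtain
\begin{equation*}
  A = \alpha \int_1^\infty \frac{\sin(\alpha(t-1))}{t^2}\, e^{\im\beta t}\, \dif t,
  \qquad \alpha \defn -\kappa\tau,\ \beta \defn -p\tau.
\end{equation*}
The closed form in terms of $E_2$ follows immediately by writing the sine as a pair of exponentials and recognizing $\int_1^\infty t^{-2} e^{-\im q\tau t}\, \dif t = E_2(\im q\tau)$ for each sign of $q$.

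Next I would substitute $s = t - 1$, which yields $A = \alpha\, e^{\im\beta}\int_0^\infty (s+1)^{-2}\sin(\alpha s)\, e^{\im\beta s}\, \dif s$, and then insert the representation $(s+1)^{-2} = \int_0^\infty u\, e^{-u(s+1)}\, \dif u$. Fubini is justified by the elementary estimate $u\, e^{-u(s+1)}$ on the modulus of the integrand, which is absolutely integrable over the quadrant. The inner $s$-integral is an elementary Laplace transform of the sine, and after using the factorization $(u-\im(\alpha+\beta))(u-\im(\beta-\alpha)) = (u-\im\beta)^2 + \alpha^2$ one arrives at
\begin{equation*}
  A = \alpha^2\, e^{\im\beta} \int_0^\infty \frac{u\, e^{-u}}{(u-\im\beta)^2 + \alpha^2}\, \dif u.
\end{equation*}

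For the uniform bound I would show that $|(u-\im\beta)^2 + \alpha^2| \geq |\alpha^2 - \beta^2|$ for every $u \geq 0$. Writing $|(u-\im\beta)^2 + \alpha^2|^2 = (u^2 + \alpha^2 - \beta^2)^2 + 4u^2\beta^2$ and differentiating in $u$ gives $4u(u^2 + \alpha^2 + \beta^2)$, which is strictly positive for $u > 0$, so the minimum is attained at $u = 0$ and equals $(\alpha^2 - \beta^2)^2$. Combined with $\int_0^\infty u\, e^{-u}\, \dif u = 1$, this yields $|A| \leq \alpha^2/|\alpha^2 - \beta^2| = \kappa^2/|\kappa^2 - p^2|$, which is in fact four times sharper than \eqref{eq:bound_A}.

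For the limit I would rescale so only $\tilde\tau = -\tau$ enters, obtaining $A = \kappa^2\, e^{\im p\tilde\tau}\int_0^\infty u\, e^{-u}\, [(u/\tilde\tau - \im p)^2 + \kappa^2]^{-1}\, \dif u$. The integrand converges pointwise as $\tilde\tau \to \infty$ to $u\, e^{-u}/(\kappa^2 - p^2)$, and by the same inequality as above it is dominated by the $\tilde\tau$-independent integrable function $u\, e^{-u}/|\kappa^2 - p^2|$. Dominated convergence then gives $|A| \to \kappa^2/|\kappa^2 - p^2|$. The main subtlety in the whole argument is the Fubini step, but the absolute convergence estimate makes it routine; there is no need for case analysis in $\alpha/\beta$.
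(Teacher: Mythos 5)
Your proof is correct, but it proceeds quite differently from the paper's. The paper integrates by parts twice, using $\e^{-\im p\tau_1}=(\partial_{\tau_1}^2+\kappa^2)\,\e^{-\im p\tau_1}/(\kappa^2-p^2)$, to split $A=\tfrac{\kappa^2}{\kappa^2-p^2}\big(\e^{-\im p\tau}+R\big)$ with a remainder satisfying the crude uniform estimate $\abs{R}\leq 3$ (whence the factor $1+3=4$ in \eqref{eq:bound_A}), and then disposes of $R$ in the limit $\tau\to-\infty$ via the Riemann--Lebesgue lemma. You instead rescale $\tau_1=\tau t$, insert the Schwinger-type representation $(s+1)^{-2}=\int_0^\infty u\,\e^{-u(s+1)}\,\dif u$, and after a routine Fubini step and the Laplace transform of the sine arrive at the exact formula $A=\alpha^2\e^{\im\beta}\int_0^\infty u\,\e^{-u}\,\big((u-\im\beta)^2+\alpha^2\big)^{-1}\dif u$ with $\alpha=-\kappa\tau$, $\beta=-p\tau$; the elementary monotonicity of $u\mapsto\abs{(u-\im\beta)^2+\alpha^2}^2=(u^2+\alpha^2-\beta^2)^2+4u^2\beta^2$ then yields both statements, the bound by $\int_0^\infty u\,\e^{-u}\dif u=1$ and the limit by dominated convergence. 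What your route buys is a strictly sharper uniform bound, $\abs{A}\leq\kappa^2/\abs{\kappa^2-p^2}$, i.e.\ the limit value itself — precisely the improvement the paper only infers from numerical evidence in the remark following the lemma ("the bound for $A$ obtained above is not optimal\dots"). The paper's integration-by-parts argument is shorter to write down and recycles the same manipulation used later for \cref{prop:small_limit}, but it is lossy; yours replaces two asymptotic tools (a crude remainder estimate and Riemann--Lebesgue) by a single exact integral representation. All the individual steps check out: the substitution (including the sign from $\dif\tau_1=\tau\,\dif t$ with $\tau<0$), the Fubini justification via $\int_0^\infty(s+1)^{-2}\dif s<\infty$, the factorization of the denominator, and the translation $\alpha^2-\beta^2=\tau^2(\kappa^2-p^2)$ that makes the final bound $\tau$-independent.
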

\begin{proof}
  Using the fact that
  \begin{equation*}
    \e^{-\im p \tau_1} = \left( \od[2]{}{\tau_1} + \kappa^2 \right) \frac{\e^{-\im p \tau_1}}{\kappa^2 - p^2},
  \end{equation*}
  we can perform two integrations by parts to obtain
  \begin{align*}
    A(\tau, \kappa, p) & = \frac{\kappa^2}{\kappa^2 - p^2} \big( \e^{-\im p \tau} + R(\tau, \kappa, p) \big),
    \quad \text{with}
    \\
    R(\tau, \kappa, p) & \defn \tau^2 \int_{-\infty}^\tau
    \left(
      \frac{4}{\tau_1^3} \cos\big(\kappa\, (\tau - \tau_1)\big)
      +
      \frac{6}{\kappa \tau_1^4} \sin\big(\kappa\, (\tau - \tau_1)\big)
    \right) \e^{-\im p \tau_1}\, \dif\tau_1.
  \end{align*}
  It is now easy to obtain an upper bound for $R$ which is uniform in conformal time, namely $\abs{R} \leq 3$, which then yields the bound \eqref{eq:bound_A}.

  For the second part of the proposition we perform a change of the integration variable to $x = \tau_1 / \tau$:
  \begin{equation*}
    R(\tau, \kappa, p) = - \int_1^{\infty}
    \left(
      \frac{4}{x^3} \cos\big(\kappa \tau\, (1 - x)\big)
      +
      \frac{6}{\kappa\, \tau\, x^4} \sin\big(\kappa \tau\, (1 - x)\big)
    \right) \e^{-\im p \tau x}\, \dif x.
  \end{equation*}
  The contribution proportional to $1/\tau$ in $R$ is bounded by $C(\kappa)/\abs{\tau}$ and thus vanishes in the limit $\tau \to -\infty$.
  Moreover, since $\abs{p} \neq \kappa$ and ${1}/{x^3}$ is $L^1$ on $[1,\infty)$, we can apply the Riemann--Lebesgue lemma and see that this contribution vanishes in the limit $\tau \to -\infty$.
  The remaining part of $\abs{A}$ is $\kappa^2 \abs{\kappa^2 - p^2}^{-1}$, which is independent of $\tau$, and thus the limit \eqref{eq:limit_A} holds true.
\end{proof}

Note that the bound for $A$ obtained above is not optimal.
Numerical integration indicates that $\abs{A}^2$ is monotonically decreasing in $\tau$ and thus bounded by the limit stated in~\eqref{eq:limit_A} (see also \cref{fig:P0_plot}).
Nevertheless, we can use this lemma to derive the following bounds and limits for $P_0$:

\begin{proposition}
  \label{prop:bound_large_limit}
  The leading contribution $P_0$ to the power spectrum of the potential $\Phi$ induced by a conformally coupled massive scalar field in the Bunch--Davis state is bounded by the Harrison--Zel'dovich spectrum uniformly in time, namely
  \begin{equation*}
    \abs[\big]{P_0(\tau, \vec{k})} \leq \frac{16\, C}{\abs{\vec{k}}^3},
    \quad
    C \defn \frac{3 - 2 \sqrt{3}\, \arccoth\!\sqrt{3}}{192 \uppi^2}\, m^4,
  \end{equation*}
  and it tends to the Harrison--Zel'dovich spectrum for $\tau \to -\infty$, \ie,
  \begin{equation*}
    \label{eq:limit_P0}
    \lim_{\tau \to -\infty} P_0(\tau, \vec{k}) = \frac{C}{\abs{\vec{k}}^3}.
  \end{equation*}
\end{proposition}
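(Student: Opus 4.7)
The plan is to invoke the auxiliary bounds on $A$ from \cref{lem:A} and reduce everything to a single elementary $p$-integral that can be evaluated in closed form.

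First I would substitute the uniform bound $|A(\tau,\kappa,p)|^2 \le 16\kappa^4/(\kappa^2-p^2)^2$ of \cref{lem:A} into \eqref{eq:P0_with_A} with $\kappa = k/\sqrt{3}$, where $k\defn|\vec{k}|$. Since the $p$-integration runs over $p\ge k>\kappa$, the denominator $p^2-\kappa^2$ stays bounded away from zero, so the bound is integrable and, crucially, independent of $\tau$ and $\varepsilon$. Dominated convergence then lets me drop the regulator $\e^{-\varepsilon p}$, yielding
\begin{equation*}
  \abs{P_0(\tau,\vec{k})} \le \frac{m^4}{16\uppi^2 k^4}\cdot\frac{16 k^4}{9}\int_k^\infty \frac{\dif p}{(p^2-k^2/3)^2}.
\end{equation*}

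Second, I would evaluate the remaining integral by rescaling $p=ku$ and doing partial fractions on $1/(u-\alpha)^2(u+\alpha)^2$ with $\alpha=1/\sqrt{3}$. The antiderivative is
\begin{equation*}
  \frac{1}{4\alpha^3}\ln\abs*{\frac{u+\alpha}{u-\alpha}} - \frac{u}{2\alpha^2(u^2-\alpha^2)},
\end{equation*}
which vanishes as $u\to\infty$. Evaluating at $u=1$ and recognising $\tfrac{1}{2}\ln\tfrac{\sqrt3+1}{\sqrt3-1}=\arccoth\sqrt3$ gives
\begin{equation*}
  \int_1^\infty \frac{\dif u}{(u^2-1/3)^2} = \frac{9}{4}-\frac{3\sqrt3}{2}\arccoth\sqrt3 = \frac{3\,(3-2\sqrt3\,\arccoth\sqrt3)}{4}.
\end{equation*}
Restoring the prefactors and noting that $1/k^3$ arises from the $u$-substitution produces exactly $16C/k^3$, which is the claimed bound.

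For the large-$\abs\tau$ limit I would repeat the same argument but now replace the bound on $|A|^2$ by the pointwise limit $\kappa^4/(\kappa^2-p^2)^2$ from \eqref{eq:limit_A}, justifying the interchange of the limit with the $p$-integral by dominated convergence using the same $\tau$-uniform bound as before. The factor $16$ coming from the crude bound is replaced by $1$, so the same computation of the resulting integral yields $C/k^3$. The main obstacle is largely bookkeeping — ensuring that the regulator limit $\varepsilon\to 0^+$ and the long-time limit $\tau\to-\infty$ may be exchanged with the $p$-integration; both are handled by the single integrable $\tau$- and $\varepsilon$-uniform majorant supplied by \cref{lem:A}, combined with the fact that the integration contour $p\ge k$ stays strictly away from the resonant point $p=\kappa=k/\sqrt3$, so no principal value or additional care near the singularity is required.
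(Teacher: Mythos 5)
Your proposal is correct and follows essentially the same route as the paper: the uniform bound $\abs{A}\le 4\kappa^2/\abs{\kappa^2-p^2}$ from \cref{lem:A} gives the $\tau$- and $\varepsilon$-uniform integrable majorant for the first estimate, and the pointwise limit \eqref{eq:limit_A} together with dominated convergence gives the $\tau\to-\infty$ limit, with the factor $16$ versus $1$ accounting for the ratio $16C$ versus $C$. The only difference is that you spell out the elementary $p$-integral (substitution, partial fractions, the $\arccoth$ identity) which the paper simply states.
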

\begin{proof}
  The proof can be easily obtained using the $\tau$-uniform estimate \eqref{eq:bound_A} obtained in \cref{lem:A} and computing the integral
  \begin{equation*}
    \abs[\big]{P_0(\tau, \vec{k})}
    \leq \frac{m^4}{\uppi^2} \int_k^\infty \left(\frac{1}{3 p^2 - k^2}\right)^2\, \dif p
    = \frac{3 - 2 \sqrt{3}\, \arccoth\!\sqrt{3}}{12 \uppi^2} \frac{m^4}{k^3}.
  \end{equation*}

  Having shown the first part of the proposition, let us now analyze the limit
  \begin{equation*}
    \lim_{\tau \to -\infty} P_0(\tau, \vec{k}) = \frac{m^4}{16 \uppi^2} \int_k^\infty \frac{1}{k^4}
    \lim_{\tau \to -\infty} \abs*{A\left(\tau, k/\sqrt{3}, p\right)}^2\, \dif p,
  \end{equation*}
  where we have taken the $\tau$-limit before the integral and already evaluated the $\varepsilon$-limit because $\abs{A}^2$ is bounded by an integrable function uniformly in time.
  Inserting the limit \eqref{eq:limit_A} from \cref{lem:A}, we can compute the $p$-integral
  \begin{equation*}
    \lim_{\tau \to -\infty} P_0(\tau, \vec{k})
    = \frac{m^4}{16 \uppi^2} \int_k^\infty (3 p^2 - k^2)^{-2}\, \dif p
    = \frac{3 - 2 \sqrt{3}\, \arccoth\!\sqrt{3}}{192 \uppi^2} \frac{m^4}{k^3},
  \end{equation*}
  thus concluding the proof.
\end{proof}

We can complement the results of \cref{prop:bound_large_limit} with the following observation:
\begin{proposition}
  \label{prop:P0_k3}
  The power spectrum $P_0$ has the form
  \begin{equation*}
    P_0(\tau, \vec{k}) = \frac{\mathcal{P}_0(\abs{\vec{k}} \tau)}{\abs{\vec{k}}^3},
  \end{equation*}
  where $\mathcal{P}_0$ is a function of $\abs{\vec{k}} \tau$ only.
\end{proposition}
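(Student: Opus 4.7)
The plan is to exploit the explicit scaling identity $A(\tau,\kappa,p)=A(\kappa\tau, p\tau)$ recorded just after the definition~\eqref{eq:defA}, combined with the spatial rotation invariance of $P_0$, and to extract the factor $|\vec k|^{-3}$ by a single homogeneity rescaling of the momentum integration variable.

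First I would observe that $P_0(\tau,\vec k)$ depends on $\vec k$ only through $k=|\vec k|$: indeed in \eqref{eq:P0_with_A} the integrand is a function of $k$ alone, inherited from the rotational invariance of the background and of the Bunch--Davies state. Next, I would substitute the scaling identity $A(\tau,k/\sqrt 3,p)=A(k\tau/\sqrt 3,\,p\tau)$ into \eqref{eq:P0_with_A}, so that
\begin{equation*}
  P_0(\tau,\vec k)=\lim_{\varepsilon\to 0^+}\frac{m^4}{16\pi^2}\int_k^\infty\frac{1}{k^4}\,\bigl|A\bigl(k\tau/\sqrt 3,\,p\tau\bigr)\bigr|^2\,\e^{-\varepsilon p}\,\dif p .
\end{equation*}
Then I would make the change of variables $s=p/k$, so that $\dif p=k\,\dif s$ and the lower limit becomes $1$, obtaining
\begin{equation*}
  P_0(\tau,\vec k)=\lim_{\varepsilon\to 0^+}\frac{m^4}{16\pi^2}\,\frac{1}{k^3}\int_1^\infty\bigl|A\bigl(k\tau/\sqrt 3,\,s\,k\tau\bigr)\bigr|^2\,\e^{-\varepsilon k s}\,\dif s .
\end{equation*}
The integrand now depends on $k$ and $\tau$ only through the single combination $\xi\defn k\tau$, which is what we want.

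The last step is to justify taking the $\varepsilon\to 0^+$ limit inside the $s$-integral and to conclude that the resulting function is indeed a function of $\xi$ alone. For this I would invoke the $\tau$-uniform bound \eqref{eq:bound_A} of \cref{lem:A}, which yields $|A(k\tau/\sqrt 3,s k\tau)|^2\le 16 (s^2-1/3)^{-2}/9$ for $s>1/\sqrt 3$ (in particular for $s\ge 1$), an integrable majorant on $[1,\infty)$ independent of $\varepsilon$, $k$ and $\tau$. Dominated convergence then allows passing the limit inside, and we may \emph{define}
\begin{equation*}
  \mathcal{P}_0(\xi)\defn\frac{m^4}{16\pi^2}\int_1^\infty\bigl|A\bigl(\xi/\sqrt 3,\,s\xi\bigr)\bigr|^2\,\dif s ,
\end{equation*}
which is a function of the single variable $\xi=k\tau$. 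This yields $P_0(\tau,\vec k)=\mathcal{P}_0(k\tau)/k^3$ as claimed; no step presents a genuine obstacle since the scaling identity for $A$ does all the work and the convergence is controlled by the estimate already established in \cref{lem:A}.
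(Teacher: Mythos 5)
Your proof is correct and follows essentially the same route as the paper: both rest on the scaling identity $A(\tau,\kappa,p)=A(\kappa\tau,p\tau)$, a homogeneity substitution in the $p$-integral (you use $s=p/k$, the paper uses $x=p\tau$, which is an immaterial difference), and taking the $\varepsilon$-limit inside the integral. Your explicit dominated-convergence justification via the bound \eqref{eq:bound_A} is a welcome addition that the paper's one-line proof leaves implicit.
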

\begin{proof}
  Noting that $A(\tau, \kappa, p)$ is a function of $\kappa\, \tau$ and $p\, \tau$ only and performing the $\varepsilon$-limit in~\eqref{eq:P0_with_A} inside the integral, this can be seen by the substitution $x = p\, \tau$ in~\eqref{eq:P0_with_A}.
\end{proof}

We would like to improve the estimate of $P_0(\tau, \vec{k})$ for $\tau$ close to zero.
Adhering to our previous strategy, we shall first give a new estimate for $A(\tau, k, p)$:

\begin{lemma}
  The auxiliary function $A(\tau, \kappa, p)$ is bounded by
  \begin{equation*}
    \abs[\big]{A(\tau, \kappa, p)} \leq - 2\, \frac{\kappa^2\, \tau}{\abs{p}}, \quad p \neq 0,\quad \tau<0.
  \end{equation*}
\end{lemma}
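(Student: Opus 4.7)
The plan is to extract the factor $1/|p|$ by a single integration by parts against the oscillatory phase $e^{-\im p\tau_1}$, exactly in the spirit of the preceding lemma but now applied directly to the definition~\eqref{eq:defA} without first splitting $\sin$ into exponentials. Setting $u(\tau_1) = \sin(\kappa(\tau-\tau_1))/\tau_1^2$ and $\dif v = e^{-\im p\tau_1}\,\dif\tau_1$, both boundary contributions vanish: at $\tau_1 = \tau$ because $\sin 0 = 0$, and at $\tau_1 \to -\infty$ because $\tau_1^{-2} \to 0$ (as before, the boundary behaviour at $-\infty$ is made rigorous by the same $\varepsilon$-regularization used in \cref{lem:A}). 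The resulting identity takes the form
\begin{equation*}
  A(\tau,\kappa,p) = -\frac{\kappa\tau^2}{\im p}\int_{-\infty}^\tau\!\left[\frac{\kappa\cos(\kappa(\tau-\tau_1))}{\tau_1^2} + \frac{2\sin(\kappa(\tau-\tau_1))}{\tau_1^3}\right]\!e^{-\im p\tau_1}\,\dif\tau_1.
\end{equation*}

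The next step is then simply to bound $|\cos|, |\sin| \leq 1$ and to compute the two elementary integrals $\int_{-\infty}^\tau \tau_1^{-2}\,\dif\tau_1 = -1/\tau$ and $\int_{-\infty}^\tau |\tau_1|^{-3}\,\dif\tau_1 = 1/(2\tau^2)$, both positive for $\tau < 0$. Combining these yields $|A| \leq (\kappa\tau^2/|p|)\bigl(-\kappa/\tau + 1/\tau^2\bigr) = \kappa(1+\kappa|\tau|)/|p|$, and the stated bound $|A| \leq -2\kappa^2\tau/|p|$ then follows in the sub-Hubble regime $\kappa|\tau| \geq 1$, in which $1 + \kappa|\tau| \leq 2\kappa|\tau|$. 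This is the regime in which the new estimate becomes useful: it is finite near $|p| = \kappa$ (where \cref{lem:A} degenerates) and decays like $1/|p|$ at large $|p|$, while in the opposite regime $\kappa|\tau| < 1$ the bound of \cref{lem:A} applies and together they cover all cases needed for the subsequent integration in $p$.

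The main technical subtlety, and the reason this single integration by parts is enough despite the non-integrable singularity of $\sin(\kappa(\tau-\tau_1))/\tau_1^2$ at $\tau_1 = 0$ as $\tau\to 0^-$, lies in the vanishing of the upper boundary term: precisely because $\sin 0 = 0$ rather than $1$, no obstructing residual is produced. An integration by parts against $\dif(-\tau_1^{-1})$ in place of $\dif v = e^{-\im p\tau_1}\dif\tau_1$ would leave the oscillatory phase undifferentiated and fail to supply the desired $1/|p|$ factor, while attempting a direct modulus estimate without any integration by parts produces either a divergent integral at $-\infty$ (if one uses $|\sin| \leq \kappa|\tau-\tau_1|$) or a bound with no $|p|$-decay (if one uses $|\sin| \leq 1$); the IBP against $e^{-\im p\tau_1}$ is the unique choice that simultaneously exploits the oscillation and respects the vanishing of $\sin$ at the upper endpoint.
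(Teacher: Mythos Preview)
Your integration-by-parts step is exactly the paper's, and the resulting identity for $A$ is correct. The gap is in the estimation: you bound $\lvert\sin\rvert\le 1$, which yields $\lvert A\rvert \le \kappa(1+\kappa\lvert\tau\rvert)/\lvert p\rvert$ and hence the stated inequality only under the extra hypothesis $\kappa\lvert\tau\rvert\ge 1$. The lemma, however, is asserted for \emph{all} $\tau<0$ and $p\neq 0$, and the subsequent proposition applies it at fixed $\kappa=k/\sqrt{3}$ and $\tau$ to obtain $\lvert P_0\rvert\le m^4\tau^2/(36\pi^2 k)$; your restriction would leave the regime $k\lvert\tau\rvert<\sqrt{3}$ uncovered by this bound.

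The fix is minimal and is precisely what the paper does: after the IBP, estimate the sine term by $\lvert\sin(\kappa(\tau-\tau_1))\rvert\le \kappa(\tau-\tau_1)$ rather than by $1$. This gives
\[
  \lvert A\rvert \le \frac{\kappa^2\tau^2}{\lvert p\rvert}\int_{-\infty}^{\tau}\!\left(\frac{1}{\tau_1^2}-\frac{2(\tau-\tau_1)}{\tau_1^3}\right)\dif\tau_1
  = \frac{\kappa^2\tau^2}{\lvert p\rvert}\cdot\Bigl(-\tfrac{2}{\tau}\Bigr)
  = -\frac{2\kappa^2\tau}{\lvert p\rvert},
\]
with no restriction on $\kappa\lvert\tau\rvert$. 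You actually mention the bound $\lvert\sin\rvert\le\kappa\lvert\tau-\tau_1\rvert$ in your final paragraph, but dismiss it because it makes a \emph{direct} estimate diverge at $-\infty$; the point is that \emph{after} the IBP the integrand carries an extra factor of $\tau_1^{-1}$, so $(\tau-\tau_1)/\tau_1^3$ is integrable and this is exactly the estimate you want.
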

\begin{proof}
  Recalling the form of $A$ given in~\eqref{eq:defA} and integrating by parts, where we use that $\e^{-\im p \tau_1} = \im\, p^{-1}\, \partial_{\tau_1} \e^{-\im p \tau_1}$, we find
  \begin{equation*}
    A(\tau, \kappa, p) = \frac{\im \kappa^2 \tau^2}{p} \int_{-\infty}^\tau
    \left(
      \frac{1}{\tau_1^2} \cos\big(\kappa\, (\tau - \tau_1)\big) +
      \frac{2}{\kappa \tau_1^3} \sin\big(\kappa\, (\tau - \tau_1)\big)
    \right) \e^{-\im p \tau_1}\, \dif\tau_1.
  \end{equation*}
  We then take the absolute value and estimate the trigonometric functions, which gives us a bound on $A$, namely
  \begin{equation*}
    \abs[\big]{A(\tau, \kappa, p)}
    \leq \frac{\kappa^2 \tau^2}{\abs{p}} \int_{-\infty}^\tau
    \left( \frac{1}{\tau_1^2} - 2\, \frac{\tau - \tau_1}{\tau_1^3} \right)\, \dif\tau_1
    = - 2\, \frac{\kappa^2 \tau}{\abs{p}}.
  \end{equation*}
\end{proof}

Performing the integration in $p$ analogously to the second part of proposition \eqref{prop:bound_large_limit}, the last lemma immediately leads to a corresponding bound for $P_0$:

\begin{proposition}
  \label{prop:small_limit}
  The leading contribution~$P_0$ of the power spectrum of the potential~$\Phi$ satisfies the inequality
  \begin{equation*}
    \abs[\big]{P_0(\tau, \vec{k})} \leq \frac{m^4}{36 \uppi^2} \frac{\tau^2}{\abs{\vec{k}}}
  \end{equation*}
  and therefore, in particular, $P_0(0, \vec{k}) = 0$.
\end{proposition}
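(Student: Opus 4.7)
The plan is to apply the bound from the preceding lemma directly inside the integral representation \eqref{eq:P0_with_A} of $P_0$. The setup has already been arranged so that this is essentially a one-step computation, so I do not expect any real obstacle; the main point is to justify exchanging the $\varepsilon$-limit with the integral.

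First, I would recall that the preceding lemma furnishes, for $\tau<0$ and $p\neq 0$, the pointwise estimate $|A(\tau,\kappa,p)|\leq -2\kappa^2\tau/|p|$. Squaring this with $\kappa = |\vec k|/\sqrt{3}$ gives
\begin{equation*}
  \bigl|A(\tau,|\vec k|/\sqrt{3},p)\bigr|^2 \;\leq\; \frac{4|\vec k|^4\,\tau^2}{9\,p^2},
\end{equation*}
which is integrable in $p$ on $[|\vec k|,\infty)$ and independent of $\varepsilon$. This bound dominates the integrand in \eqref{eq:P0_with_A} uniformly in $\varepsilon\in(0,1]$, so by dominated convergence the limit $\varepsilon\to 0^+$ may be passed inside the integral.

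Then I would simply estimate
\begin{equation*}
  |P_0(\tau,\vec k)| \;\leq\; \frac{m^4}{16\pi^2}\int_{|\vec k|}^\infty \frac{1}{|\vec k|^4}\,\frac{4|\vec k|^4\,\tau^2}{9\,p^2}\,\dif p \;=\; \frac{m^4\,\tau^2}{36\pi^2}\int_{|\vec k|}^\infty \frac{\dif p}{p^2} \;=\; \frac{m^4}{36\pi^2}\,\frac{\tau^2}{|\vec k|},
\end{equation*}
which is the claimed inequality. The second assertion $P_0(0,\vec k)=0$ is immediate from the $\tau^2$ factor, noting that the bound extends by continuity (or by the scaling form $P_0(\tau,\vec k)=\mathcal{P}_0(|\vec k|\tau)/|\vec k|^3$ established in \cref{prop:P0_k3}) to $\tau=0$.
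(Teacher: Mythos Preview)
Your proposal is correct and follows exactly the approach the paper takes: insert the lemma's bound $|A(\tau,\kappa,p)|\leq -2\kappa^2\tau/|p|$ into the integral representation \eqref{eq:P0_with_A} and perform the $p$-integration. Your explicit dominated-convergence justification for passing the $\varepsilon$-limit inside is a useful addition that the paper leaves implicit (relying on the analogous step in \cref{prop:bound_large_limit}).
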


\begin{figure}
  \centering
  \includegraphics{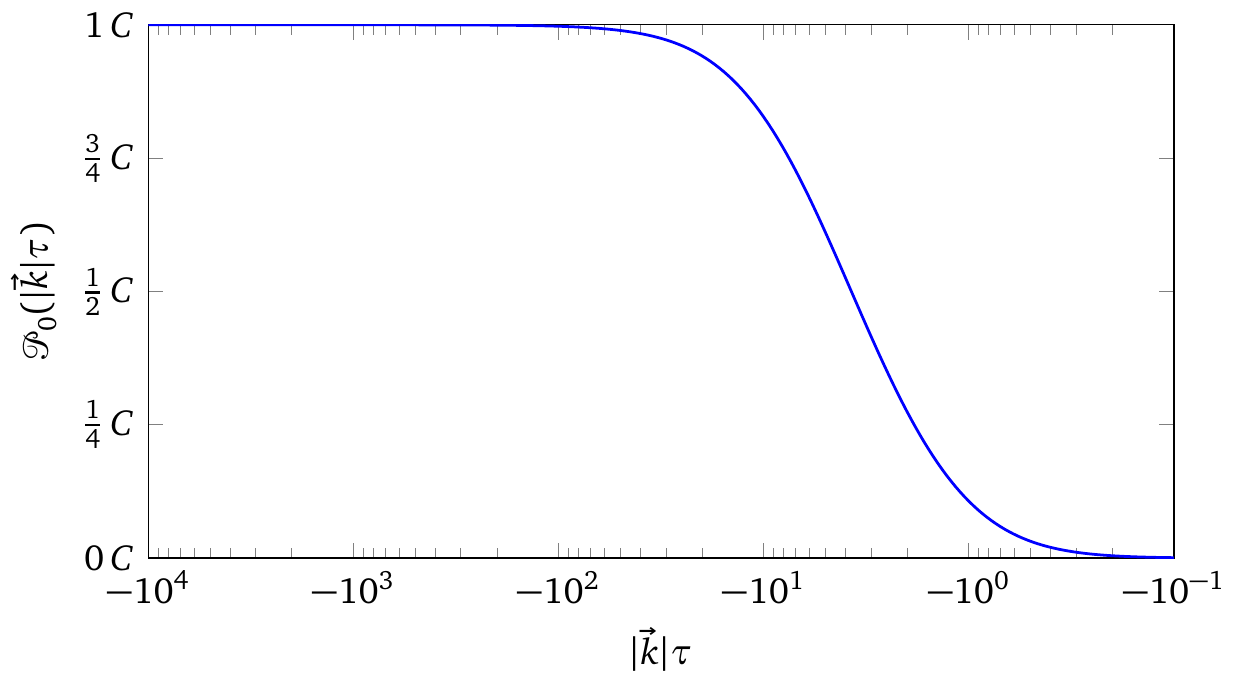}
  \caption{Logarithmic plot of the rescaled power spectrum $\mathcal{P}_0(\abs{\vec{k}} \tau)$, where $C$ is the same proportionality constant as in \cref{prop:bound_large_limit}.}
  \label{fig:P0_plot}
\end{figure}

The rescaled power spectrum $\mathcal{P}_0(\abs{\vec{k}} \tau)$ can be analyzed numerically and a plot is shown in \cref{fig:P0_plot}.
It clearly exhibits the asymptotic behaviour of $P_0$ discussed in \cref{prop:bound_large_limit,prop:small_limit}.
Note that the horizontal axis is logarithmically scaled to highlight the behavior of $\mathcal{P}_0$ for small $\abs{\vec{k}} \tau$, which would be concealed by the fast approach of $\mathcal{P}_0$ to its bound had we used a linear scaling.

In this section we have used the leading singularity\footnote{Recall that considering only the leading singularity in the Bunch--Davis state also corresponds to the limit of vanishing mass.} of the two-point function of the Bunch--Davis state on a de Sitter universe to compute the influence of quantum matter on the power spectrum of the metric perturbation $\Phi$.
We have seen that this results in an almost scale-invariant power spectrum.
We stress that such a singularity is not a special feature of the Bunch--Davis state but is common for every Hadamard state.
Moreover, although our analysis has been done on a de Sitter universe, similar quantum states have been constructed on universes which are asymptotically de Sitter spaces in the past \cite{dappiaggi:2009a,dappiaggi:2009b}.
All these states tend to the Bunch--Davis state for $\tau \to -\infty$ and are of Hadamard form.


\subsection{Non-Gaussianities of the metric perturbations}

It follows from \eqref{eq:einstein_prob_trace_n} that the $n$-point correlation for $\Phi$ will, in general, not vanish.
Also for odd $n$ they will be different from zero and hence $\Phi$ is not a Gaussian random field.
As a first measure of the non-Gaussianity of $\Phi$ one usually calculates its three-point correlation function or the corresponding bispectrum $B$:
\begin{multline*}
  \E[\big]{\Phi(\tau, \vec{x}_1)\, \Phi(\tau, \vec{x}_2)\, \Phi(\tau, \vec{x}_3)} \defn
  \frac{1}{(2 \uppi)^9} \iiint_{\RR^9}
  \delta(\vec{k}_1 + \vec{k}_2 + \vec{k}_3) B(\tau, \vec{k}_1, \vec{k}_2, \vec{k}_3)
  \\ \times
  \e^{\im\, (\vec{k}_1 \cdot \vec{x}_1 + \vec{k}_2 \cdot \vec{x}_2 + \vec{k}_3 \cdot \vec{x}_3)}\,
  \dif\vec{k}_1\, \dif\vec{k}_2\, \dif\vec{k}_3.
\end{multline*}

Assuming nonzero $\vec{k}_1$, $\vec{k}_2$ and $\vec{k}_3$, we will derive the form of the bispectrum $B$ considering (as above) only the contribution due to the leading singularity of the Bunch--Davis state, which we will denote by $B_0$.
We will follow the same steps that lead us to the calculation of the power spectrum in the previous section.
That is, we apply the retarded propagator $\upDelta_{\mathrm{ret}}$ of~\eqref{eq:hyp_op} as in~\eqref{eq:retarded_applied_to_w} to the right-hand side of~\eqref{eq:einstein_prob_trace_n} for $n=3$ to obtain an equation for $\Phi$ and insert for the two-point distribution the conformally rescaled two-point distribution of the massless Minkowski vacuum.
The result can again be expressed in terms of the auxiliary function $A$ defined in~\eqref{eq:defA}:
\begin{equation}\label{eq:B0_with_A}\begin{split}
  B_0(\tau, \vec{k}_1, \vec{k}_2, \vec{k}_3) & = \lim_{\varepsilon \to 0^+} \frac{m^6}{32 \sqrt{3} \vec{k}_1^2 \vec{k}_2^2 \vec{k}_3^2} \int_{\RR^3} \Bigg( \frac{\e^{-\varepsilon \big(\omega_{\vec{p}}(-\vec{k}_1) + \omega_{\vec{p}}(\vec{k}_3) + \abs{\vec{p}}\big)}}{\omega_{\vec{p}}(-\vec{k}_1) \omega_{\vec{p}}(\vec{k}_3) \abs{\vec{p}}} \\
  &\qquad \times A\big(\tau, \kappa_1, \omega_{\vec{p}}(-\vec{k}_1) + \abs{\vec{p}} \big) A\big(\tau, \kappa_3, -\omega_{\vec{p}}(\vec{k}_3) - \abs{\vec{p}} \big) \vphantom{\Bigg)} \\
  &\qquad \times A\big(\tau, \kappa_2, \omega_{\vec{p}}(\vec{k}_3) - \omega_{\vec{p}}(-\vec{k}_1) \big) + \text{permutations} \Bigg)\, \dif\vec{p},
\end{split}\end{equation}
where $\kappa_i \defn \abs{\vec{k}_i} / \sqrt{3}$, $\omega_{\vec{p}}(\vec{k}) \defn \abs{\vec{k} + \vec{p}}$ and the sum is over all permutations of $1, 2, 3$.

We can apply the same bound on $A$ which has been used in the previous section to bound the power spectrum $P_0$ to produce a bound on the integrand of $B_0$ almost everywhere.\footnote{We cannot bound the integrand of $B_0$ everywhere using \eqref{eq:bound_A} because $\abs{\vec{k}_2} / \sqrt{3} \neq \abs[\big]{\omega_{\vec{p}}(\vec{k}_3) - \omega_{\vec{p}}(-\vec{k}_1)}$ (and permutations) does not hold everywhere.}
Nevertheless, the singularity in the integrand in~\eqref{eq:B0_with_A} is integrable, \ie, $B_0$ is bounded.
As a consequence we can perform the limit $\varepsilon \to 0^+$ inside the integral.

\begin{proposition}
  The leading contribution $B_0$ of the bispectrum of the metric perturbation $\Phi$ has the form
  \begin{equation*}
    B_0(\tau, \vec{k}_1, \vec{k}_2, \vec{k}_3) = \frac{\mathcal{B}_0(k_1 \tau, k_2 \tau, k_3 \tau)}{k_1^2\, k_2^2\, k_3^2},
  \end{equation*}
  where $\mathcal{B}_0$ is a function of $k_1 \tau$, $k_2 \tau$, and $k_3 \tau$ only and $k_i = \abs{\vec{k}_i}$.
\end{proposition}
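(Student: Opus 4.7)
My plan is to mimic the proof of Proposition~\ref{prop:P0_k3}, which established the analogous scaling for $P_0$ by the substitution $x = p\tau$ in a one-dimensional integral. Here I have a three-dimensional integral over $\vec{p}$, but the structure is the same: the prefactor $(k_1 k_2 k_3)^{-2}$ is already explicit, so it suffices to show that the integral on the right of~\eqref{eq:B0_with_A} is a function of the three scalar quantities $k_i\tau$ only (times a global $|\tau|^0$ power that I will verify by bookkeeping the Jacobian).

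The key ingredient, already exploited in Proposition~\ref{prop:P0_k3}, is the scaling identity $A(\tau,\kappa,p) = A(\kappa\tau,p\tau)$, which follows immediately from the defining integral~\eqref{eq:defA} by substituting $\tau_1 \mapsto \tau_1/\tau$. Using the remark after~\eqref{eq:B0_with_A} that the $\varepsilon\to 0^+$ limit can be taken inside the $\vec{p}$-integral (because the integrand is dominated by an integrable function), I would first remove the regulator. Then I perform the substitution $\vec{p} = \vec{q}/|\tau|$, under which $\dif\vec{p} = |\tau|^{-3}\dif\vec{q}$, $|\vec{p}|\,\tau = -|\vec{q}|$, and each $\omega_{\vec{p}}(\vec{k})\,\tau = -||\tau|\vec{k}+\vec{q}|$ depends on $\vec{k}\tau$ and $\vec{q}$ only. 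Every factor of the form $\omega_{\vec{p}}(\vec{k})^{-1}$, $|\vec{p}|^{-1}$ then produces a compensating $|\tau|$, so that the three such factors cancel the $|\tau|^{-3}$ from the Jacobian. The arguments of every $A$ occurring in the integrand are, after multiplying each slot by $\tau$, of the form $\kappa_i\tau$ (a scalar in $k_i\tau$) or $(\omega_{\vec{p}}(\vec{k})\pm|\vec{p}|)\tau$ or $(\omega_{\vec{p}}(\vec{k}_3)-\omega_{\vec{p}}(-\vec{k}_1))\tau$, each of which, by the preceding line, is a function of $\vec{k}_i\tau$ and $\vec{q}$ alone.

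After the substitution, the resulting integrand depends on $\vec{k}_1\tau,\vec{k}_2\tau,\vec{k}_3\tau$ and $\vec{q}$, and does not contain $\tau$ independently. Integrating out $\vec{q}$ therefore produces a function of $\vec{k}_1\tau,\vec{k}_2\tau,\vec{k}_3\tau$ only. Finally, rotational invariance of the Lebesgue measure and the on-shell constraint $\vec{k}_1+\vec{k}_2+\vec{k}_3 = 0$ (which enforces coplanarity and determines the triangle up to isometries) reduces the dependence to that on the three magnitudes $k_1\tau,k_2\tau,k_3\tau$, yielding a function $\mathcal{B}_0(k_1\tau,k_2\tau,k_3\tau)$ as claimed.

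The only slightly delicate step is checking that the sign $\tau<0$ does not obstruct the substitution and that the $\vec{q}$-integrability survives it; both follow because the substitution is a diffeomorphism of $\mathbb{R}^3$ with Jacobian $|\tau|^{-3}$, and because the bound on $A$ from Lemma~\ref{lem:A} combined with the integrability of $(\omega\,|\vec{p}|)^{-1}$-type singularities at $\vec{p}=0$ and $\vec{p}=-\vec{k}_i$ is scale-invariant, so integrability in $\vec{p}$ transfers verbatim to integrability in $\vec{q}$. Hence the main obstacle is purely bookkeeping; no new analytic input beyond what was already used for $P_0$ is required.
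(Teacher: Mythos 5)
Your proof is correct and is essentially the paper's own argument: the paper's proof is a one-line appeal to the change of variables $\vec{x} = \tau\,\vec{p}$ in~\eqref{eq:B0_with_A}, exactly the substitution you carry out (you merely spell out the Jacobian bookkeeping, the cancellation of the $|\tau|^{\pm 3}$ factors, and the reduction to magnitudes via rotational invariance and $\vec{k}_1+\vec{k}_2+\vec{k}_3=0$, which the paper leaves implicit).
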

\begin{proof}
  Analogously to \cref{prop:P0_k3}, we note that after a change of variables $\vec{x} = \tau\, \vec{p} $ the integrand in~\eqref{eq:B0_with_A} is a function of $k_1 \tau$, $k_2 \tau$, and $k_3 \tau$ only.
\end{proof}

To finish our discussion about non-Gaussianities, we notice that, although the employed quantum field is a linear one, we obtained a three-point function for $\Phi$ which is similar to the one obtained by Maldacena \cite{maldacena:2003} who has quantized metric perturbations outside the linear approximation.

\section{Outlook}
\label{sec:fluctuations_outlook}

In this chapter the influence of quantum matter fluctuations on metric perturbations over de Sitter backgrounds were analyzed.
We used techniques proper of quantum field theory on curved spacetime to regularize the stress-energy tensor and to compute its fluctuations.
In particular, we interpreted the perturbations of the curvature tensors as the realization of a stochastic field.
We then obtained the $n$-point distributions of such a stochastic field as induced by the $n$-point distributions of a quantum stress tensor by means of semiclassical Einstein equations.

We also noticed that, while the expectation value of the stress-energy tensor is characterized by renormalization ambiguities, this is no longer the case when fluctuations are considered.
Hence the obtained results are independent on the particular regularization used to define the stress tensor.

In order to keep superficial contact with literature on inflation, we investigated perturbations of the scalar curvature generated by a Newtonian metric perturbation, which is related to the standard Bardeen potentials.
However, the considered model is certainly oversimplified to cover any real situation and is not gauge invariant.

Within this model it was possible to recover an almost-Harrison--Zel'dovich power spectrum for the considered metric perturbation.
Furthermore, the amplitude of such a power spectrum depends on the field mass which is a free parameter in our model and can be fixed independently of $H$.
At the same time, since it does not depend on the Hubble parameter of the background metric, this indicates that it is not a special feature of de Sitter space.
At least close to the initial singularity, the obtained result depends only on the form of the most singular part of the two-point function of the considered Bunch--Davis state.
We thus argue that a similar feature is present in every Hadamard state and for backgrounds which are only asymptotically de Sitter in the past.

Finally we notice that, since the stress-energy tensor is not linear in the field, its probability distribution cannot be of Gaussian nature.
Thus we showed that non-Gaussianities arise naturally in this picture.

\bookmarksetupnext{level=-1}


\chapter*{Conclusions}
\addcontentsline{toc}{chapter}{Conclusions}
\label{cha:conclusions}

In this thesis aspects of the backreaction of quantum matter fields on the curvature of spacetime were discussed.
The main results in this direction were discussed in \cref{cha:einstein_solutions,cha:fluctuations}: the existence of local and global solutions of the semiclassical Einstein equation on cosmological spacetimes, and the a coupling the fluctuations of the quantum matter field to a Newtonianly perturbed de Sitter spacetime.
Further results presented are the enumerative combinatorics of the run structure of permutations in \cref{cha:combinatorics} with applications to the moment problem of the Wick square and the stress-energy on Minkowski spacetime.

In each case the problems were not treated in all possible generality, mainly due to the difficulty of constructing Hadamard states on general globally hyperbolic spacetimes but also due to other factors.
Nevertheless, we studied the effects of quantum fields on cosmological spacetimes not only because of their relative simplicity but also because of the relevance in cosmology.
Therefore the first avenue is not always the generalization of results to more general spacetimes, but also the better understanding of possible effects on this restricted class of spacetimes.
For example, we already mentioned in \cref{sec:solutions_outlook} that the results are restricted to the conformally coupled scalar field with a certain choice for the renormalization freedom as other choices can lead to equations involving higher than second-order derivatives of the metric and ask for slightly different approach.
However, it would be desirable to understand this problem also for non-conformal coupling and discuss the full dependence of the solutions to the Einstein equation on the renormalization freedom.

In the case of results on the metric fluctuations induced by quantum matter fluctuations as presented in \cref{cha:fluctuations}, we were even more restrictive and the discussion is mostly based on the special case of a Newtonianly perturbed de Sitter spacetime.
While straightforward generalization of this idea to asymptotically de Sitter spacetimes are possible and were already published in~\cite{dappiaggi:2014}, the next step should be to gain a clearer physical and mathematical motivation of the used equations.
A development in this direction is \cite{drago:2014}, but also this work should only be seen as a first step.
In any case, as soon as one attempts to take into account the fluctuations of the stress-energy tensor one is faced with the limitations of the semiclassical Einstein equation and any attempts to generalize them, even if well-motivated, remains speculative in the absence of an accepted theory of quantum gravity.

On one hand, when attempting to study quantum field theory in a mathematically rigorous fashion one sees even clearer the non-uniqueness of many constructions and one is confronted with many choices:
Are all Hadamard states physically sensible? What topology should be chosen for the algebra of quantum fields? What is the appropriate gauge freedom for the electromagnetic potential on non-contractible spacetimes? Is it reasonable to work with an algebra of unbouded operators such as the field algebra or should one always use a $C^*$-algebra?
There are many more questions of this kind and they require further mathematical and physical investigations but also intuition.
Quantum field theory and the quest for a quantum gravity is and will continue to be not only a research effort and playground of physicists but also one of mathematicians.

On the other hand, many aspects of quantum field theory are now conceptually and mathematically very well understood but only a few models have been studied in all their detail.
In particular interacting quantum fields on curved spacetimes have received relatively little attention given that already free fields are a complicated matter.
Investigations of physically interesting interacting models, using perturbative techniques, are largely absent from the literature and deserve more attention.

For these reasons one can expect that the field of quantum field theory (on curved spacetimes) will remain an interesting field of study for many more years to come.


\chapter*{Acknowledgements}
\addcontentsline{toc}{chapter}{Acknowledgements}
\label{cha:acknowledgments}

First of all, I would like to thank the University of Genova for providing me with the opportunity to carry out this work.
I am especially indebted to my amazing supervisor Nicola Pinamonti, whom I could always seek for advice and guidance, also on fields unrelated to science; only the collaboration with him made much of this thesis possible.

Second, my thanks go to the University of York for their hospitality during a 6 months stay (September 2013 -- March 2014) to collaborate with Chris Fewster.
In many enlightening discussions with him I could broaden my horizon further and complete parts of this work.

Third, I would like to thank Valter Moretti for agreeing to be my examiner and thereby taking upon him the task to read this long thesis.

Furthermore, I would like to thank my friends and colleagues from Genova, York and the Local Quantum Physics community.
Not only am I very grateful for their friendship, unforgettable shared memories and good times far from home, but also countless inspiring conversations on science and other topics.

Last, but not least, my gratitude also goes to my family, especially my parents, who always supported me, even if the do not understand what I am doing.


\backmatter

\printbibheading
\label{cha:bibliography}
\begingroup
\emergencystretch=1em

\section*{Works by the author}
\printbibliography[heading=none,category=own]

\section*{Works by other authors}
\printbibliography[heading=none,notcategory=own]
\endgroup

\begin{theindex}
{\bfseries\sffamily *}\nopagebreak

  \item ${}^*$-algebra\cftdotfill{\cftsectiondotsep} \hyperpage{59--62}
    \subitem category of $\sim $\cftdotfill{\cftsectiondotsep} 
		\hyperpage{107}
    \subitem homomorphism between $\sim $\cftdotfill{\cftsectiondotsep} 
		\hyperpage{59}
    \subitem ideal of $\sim $\cftdotfill{\cftsectiondotsep} 
		\hyperpage{59}
    \subitem involution of a $\sim $\cftdotfill{\cftsectiondotsep} 
		\hyperpage{59}
    \subitem representation of a $\sim $\cftdotfill{\cftsectiondotsep} 
		\hyperpage{60}
    \subitem simple $\sim $\cftdotfill{\cftsectiondotsep} 
		\hyperpage{62}
    \subitem unital $\sim $\cftdotfill{\cftsectiondotsep} 
		\hyperpage{59}
    \subitem Weyl $\sim $\cftdotfill{\cftsectiondotsep} \hyperpage{61}
  \item ${}^*$-homomorphism\cftdotfill{\cftsectiondotsep} 
		\hyperpage{59}
  \item ${}^*$-ideal\cftdotfill{\cftsectiondotsep} \hyperpage{59}
  \item ${}^*$-representation\cftdotfill{\cftsectiondotsep} 
		\hyperpage{60}
    \subitem closed $\sim $\cftdotfill{\cftsectiondotsep} 
		\hyperpage{60}
    \subitem cyclic $\sim $\cftdotfill{\cftsectiondotsep} 
		\hyperpage{60}
    \subitem regular $\sim $\cftdotfill{\cftsectiondotsep} 
		\hyperpage{62}
    \subitem strongly cyclic $\sim $\cftdotfill{\cftsectiondotsep} 
		\hyperpage{60}

  \indexspace
{\bfseries\sffamily A}\nopagebreak

  \item absolutely summable sequence\cftdotfill{\cftsectiondotsep} 
		\hyperpage{58}
  \item abstract index notation\cftdotfill{\cftsectiondotsep} 
		\hyperpage{19}
  \item abstract kernel theorem\cftdotfill{\cftsectiondotsep} 
		\hyperpage{58}
  \item acceleration\cftdotfill{\cftsectiondotsep} \hyperpage{38}
  \item adiabatic regularization\cftdotfill{\cftsectiondotsep} 
		\hyperpage{136--137}, \hyperpage{142--146}
  \item adiabatic state\cftdotfill{\cftsectiondotsep} \hyperpage{119}, 
		\hyperpage{122--127}
    \subitem $\sim $ of order zero\cftdotfill{\cftsectiondotsep} 
		\hyperpage{124}, \hyperpage{141}
  \item adjoint element\cftdotfill{\cftsectiondotsep} \hyperpage{59}
  \item advanced propagator\cftdotfill{\cftsectiondotsep} 
		\hyperpage{79}
  \item algebra of observables\cftdotfill{\cftsectiondotsep} 
		\hyperpage{107}
  \item algebraic quantum field theory\cftdotfill{\cftsectiondotsep} 
		\hyperpage{105}
  \item anisotropic stress\cftdotfill{\cftsectiondotsep} \hyperpage{41}
  \item anti-de Sitter spacetime\cftdotfill{\cftsectiondotsep} 
		\hyperpage{43}
  \item antisymmetric tensor product\cftdotfill{\cftsectiondotsep} 
		\hyperpage{14}
  \item ascending permutation\cftdotfill{\cftsectiondotsep} 
		\hyperpage{85}
  \item atlas\cftdotfill{\cftsectiondotsep} \hyperpage{8}
    \subitem maximal $\sim $\cftdotfill{\cftsectiondotsep} 
		\hyperpage{8}
    \subitem smooth $\sim $\cftdotfill{\cftsectiondotsep} \hyperpage{8}
  \item atom of a permutation\cftdotfill{\cftsectiondotsep} 
		\hyperpage{85}
    \subitem principal $\sim $\cftdotfill{\cftsectiondotsep} 
		\hyperpage{85}
  \item atomic permutation\cftdotfill{\cftsectiondotsep} \hyperpage{85}
    \subitem falling $\sim $\cftdotfill{\cftsectiondotsep} 
		\hyperpage{85}
    \subitem rising $\sim $\cftdotfill{\cftsectiondotsep} 
		\hyperpage{85}
    \subitem run structure of $\sim $\cftdotfill{\cftsectiondotsep} 
		\hyperpage{87--91}
  \item auto-parallel curve\cftdotfill{\cftsectiondotsep} 
		\see{geodesic}{title-1}
  \item Avramidi method\cftdotfill{\cftsectiondotsep} \hyperpage{30}
  \item axiom
    \subitem $\sim $ of causality\cftdotfill{\cftsectiondotsep} 
		\hyperpage{105}, \hyperpage{107}
    \subitem $\sim $ of covariance\cftdotfill{\cftsectiondotsep} 
		\hyperpage{106}
    \subitem $\sim $ of locality\cftdotfill{\cftsectiondotsep} 
		\hyperpage{105, 106}
    \subitem timeslice $\sim $\cftdotfill{\cftsectiondotsep} 
		\hyperpage{107}

  \indexspace
{\bfseries\sffamily B}\nopagebreak

  \item background spacetime\cftdotfill{\cftsectiondotsep} 
		\hyperpage{47}, \hyperpage{155}
  \item Banach fixed-point theorem\cftdotfill{\cftsectiondotsep} 
		\hyperpage{65}, \hyperpage{148}
  \item Banach space\cftdotfill{\cftsectiondotsep} \hyperpage{55}
    \subitem dual of a $\sim $\cftdotfill{\cftsectiondotsep} 
		\hyperpage{57}
  \item Bardeen potentials\cftdotfill{\cftsectiondotsep} \hyperpage{50}
  \item barotropic fluid\cftdotfill{\cftsectiondotsep} \hyperpage{41}
  \item base space\cftdotfill{\cftsectiondotsep} \hyperpage{9}
  \item basis, topological\cftdotfill{\cftsectiondotsep} \hyperpage{52}
    \subitem local\cftdotfill{\cftsectiondotsep} \hyperpage{52}
  \item Bianchi identity
    \subitem first $\sim $\cftdotfill{\cftsectiondotsep} \hyperpage{21}
    \subitem second $\sim $\cftdotfill{\cftsectiondotsep} 
		\hyperpage{21}
  \item bicharacteristic\cftdotfill{\cftsectiondotsep} \hyperpage{78}
    \subitem $\sim $ strip\cftdotfill{\cftsectiondotsep} \hyperpage{78}
  \item biscalar\cftdotfill{\cftsectiondotsep} 
		\see{bitensor field}{title-1}
  \item bitensor field\cftdotfill{\cftsectiondotsep} \hyperpage{15}, 
		\hyperpage{27--33}
  \item Bogoliubov transformation\cftdotfill{\cftsectiondotsep} 
		\hyperpage{121}
  \item Borchers--Uhlmann algebra\cftdotfill{\cftsectiondotsep} 
		\hyperpage{108}
  \item bounded pseudometric space\cftdotfill{\cftsectiondotsep} 
		\hyperpage{54}
  \item Bunch--Davies state\cftdotfill{\cftsectiondotsep} 
		\hyperpage{119}
  \item bundle
    \subitem $\sim $ homomorphism\cftdotfill{\cftsectiondotsep} 
		\hyperpage{9}
    \subitem $\sim $ projection\cftdotfill{\cftsectiondotsep} 
		\hyperpage{9}

  \indexspace
{\bfseries\sffamily C}\nopagebreak

  \item $C^*$-algebra\cftdotfill{\cftsectiondotsep} \hyperpage{59}
    \subitem Weyl $\sim $\cftdotfill{\cftsectiondotsep} \hyperpage{61}
  \item $C^*$-norm\cftdotfill{\cftsectiondotsep} \hyperpage{59}
  \item canonical bilinear form\cftdotfill{\cftsectiondotsep} 
		\hyperpage{56}
  \item category
    \subitem of ${}^*$-algebras\cftdotfill{\cftsectiondotsep} 
		\hyperpage{107}
    \subitem of globally hyperbolic spacetimes\cftdotfill{\cftsectiondotsep} 
		\hyperpage{106}
  \item Cauchy embedding\cftdotfill{\cftsectiondotsep} \hyperpage{106}
  \item Cauchy problem\cftdotfill{\cftsectiondotsep} \hyperpage{81}
  \item Cauchy surface\cftdotfill{\cftsectiondotsep} \hyperpage{40}
  \item causal\cftdotfill{\cftsectiondotsep} \hyperpage{36}
    \subitem $\sim $ future\cftdotfill{\cftsectiondotsep} 
		\hyperpage{36}
    \subitem $\sim $ past\cftdotfill{\cftsectiondotsep} \hyperpage{36}
    \subitem $\sim $ shadow\cftdotfill{\cftsectiondotsep} 
		\hyperpage{36}
  \item causal propagator\cftdotfill{\cftsectiondotsep} \hyperpage{80}
  \item causal structure\cftdotfill{\cftsectiondotsep} 
		\hyperpage{36--37}
  \item causality\cftdotfill{\cftsectiondotsep} \hyperpage{36--40}
  \item causality condition\cftdotfill{\cftsectiondotsep} 
		\hyperpage{39}
  \item causally convex\cftdotfill{\cftsectiondotsep} \hyperpage{39}
  \item causally separating\cftdotfill{\cftsectiondotsep} 
		\hyperpage{36}
  \item characteristic set\cftdotfill{\cftsectiondotsep} \hyperpage{78}
  \item chart\cftdotfill{\cftsectiondotsep} \hyperpage{8}
    \subitem cosmological $\sim $ of de Sitter space\cftdotfill{\cftsectiondotsep} 
		\hyperpage{44}
    \subitem $\sim $ of de Sitter spacetime\cftdotfill{\cftsectiondotsep} 
		\hyperpage{43}
    \subitem $\sim $ of Euclidean space\cftdotfill{\cftsectiondotsep} 
		\hyperpage{16}
    \subitem $\sim $ of Minkowski spacetime\cftdotfill{\cftsectiondotsep} 
		\hyperpage{16}
    \subitem smoothly compatible $\sim $\cftdotfill{\cftsectiondotsep} 
		\hyperpage{8}
  \item Christoffel symbol\cftdotfill{\cftsectiondotsep} \hyperpage{20}
  \item chronological
    \subitem $\sim $ future\cftdotfill{\cftsectiondotsep} 
		\hyperpage{36}
    \subitem $\sim $ past\cftdotfill{\cftsectiondotsep} \hyperpage{36}
  \item chronological condition\cftdotfill{\cftsectiondotsep} 
		\hyperpage{39}
  \item circular permutation\cftdotfill{\cftsectiondotsep} 
		\hyperpage{84}
    \subitem $\sim $ group\cftdotfill{\cftsectiondotsep} \hyperpage{84}
    \subitem run structure of $\sim $\cftdotfill{\cftsectiondotsep} 
		\hyperpage{91--93}
  \item closed ${}^*$-representation\cftdotfill{\cftsectiondotsep} 
		\hyperpage{60}
  \item closed differential form\cftdotfill{\cftsectiondotsep} 
		\hyperpage{24}
  \item closed set\cftdotfill{\cftsectiondotsep} \hyperpage{52}
  \item closure of a set\cftdotfill{\cftsectiondotsep} \hyperpage{52}
  \item coarser topology\cftdotfill{\cftsectiondotsep} \hyperpage{52}
  \item coclosed differential form\cftdotfill{\cftsectiondotsep} 
		\hyperpage{25}
  \item codifferantial\cftdotfill{\cftsectiondotsep} \hyperpage{25}
  \item coexact differential form\cftdotfill{\cftsectiondotsep} 
		\hyperpage{25}
  \item coincidence limit\cftdotfill{\cftsectiondotsep} \hyperpage{27}
  \item combinatorics\cftdotfill{\cftsectiondotsep} \hyperpage{83--100}
  \item commutation coefficient\cftdotfill{\cftsectiondotsep} 
		\hyperpage{17}
  \item commutator\cftdotfill{\cftsectiondotsep} \hyperpage{110--112}, 
		\hyperpage{116}
  \item commutator distribution\cftdotfill{\cftsectiondotsep} 
		\hyperpage{80}, \hyperpage{110}
  \item comoving frame\cftdotfill{\cftsectiondotsep} \hyperpage{37}
  \item compact topological space\cftdotfill{\cftsectiondotsep} 
		\hyperpage{54}
  \item compact-open topology\cftdotfill{\cftsectiondotsep} 
		\hyperpage{56}
  \item compactly supported distribution\cftdotfill{\cftsectiondotsep} 
		\hyperpage{69}
    \subitem wavefront set in a cone\cftdotfill{\cftsectiondotsep} 
		\hyperpage{75}
  \item compactly supported distributional section\cftdotfill{\cftsectiondotsep} 
		\hyperpage{70}
  \item complete pseudometric space\cftdotfill{\cftsectiondotsep} 
		\hyperpage{54}
  \item complete vector field\cftdotfill{\cftsectiondotsep} 
		\hyperpage{13}
  \item components in a frame\cftdotfill{\cftsectiondotsep} 
		\hyperpage{17}
  \item concrete index notation\cftdotfill{\cftsectiondotsep} 
		\hyperpage{18}
  \item conformal
    \subitem $\sim $ embedding\cftdotfill{\cftsectiondotsep} 
		\hyperpage{16}
    \subitem $\sim $ immersion\cftdotfill{\cftsectiondotsep} 
		\hyperpage{16}
    \subitem $\sim $ isometry\cftdotfill{\cftsectiondotsep} 
		\hyperpage{16}
  \item conformal curvature coupling\cftdotfill{\cftsectiondotsep} 
		\hyperpage{112}
  \item conformal Hubble function\cftdotfill{\cftsectiondotsep} 
		\hyperpage{49}
  \item conformal Killing vector field\cftdotfill{\cftsectiondotsep} 
		\hyperpage{22}
  \item conformal time\cftdotfill{\cftsectiondotsep} \hyperpage{47}
    \subitem $\sim $ of de Sitter spacetime\cftdotfill{\cftsectiondotsep} 
		\hyperpage{44}
  \item connected set\cftdotfill{\cftsectiondotsep} \hyperpage{52}
  \item connection\cftdotfill{\cftsectiondotsep} \hyperpage{19}
    \subitem curvature of a $\sim $\cftdotfill{\cftsectiondotsep} 
		\hyperpage{21}
    \subitem dual of a $\sim $\cftdotfill{\cftsectiondotsep} 
		\hyperpage{20}
    \subitem flat $\sim $\cftdotfill{\cftsectiondotsep} \hyperpage{21}
    \subitem Levi-Civita $\sim $\cftdotfill{\cftsectiondotsep} 
		\hyperpage{20}
    \subitem metric $\sim $\cftdotfill{\cftsectiondotsep} 
		\hyperpage{19}
    \subitem pullback of a $\sim $\cftdotfill{\cftsectiondotsep} 
		\hyperpage{20}
    \subitem torsion-free $\sim $\cftdotfill{\cftsectiondotsep} 
		\hyperpage{19}
  \item conservation
    \subitem $\sim $ of energy\cftdotfill{\cftsectiondotsep} 
		\hyperpage{41}
    \subitem $\sim $ of momentum\cftdotfill{\cftsectiondotsep} 
		\hyperpage{41}
  \item continuous\cftdotfill{\cftsectiondotsep} \hyperpage{53}
    \subitem equicontinuous\cftdotfill{\cftsectiondotsep} 
		\hyperpage{56}
    \subitem uniformly $\sim $\cftdotfill{\cftsectiondotsep} 
		\hyperpage{54}
  \item contractible manifold\cftdotfill{\cftsectiondotsep} 
		\hyperpage{24}
  \item contraction\cftdotfill{\cftsectiondotsep} \hyperpage{15}
    \subitem $\sim $ with a metric\cftdotfill{\cftsectiondotsep} 
		\hyperpage{15}
  \item contravariant tensor field\cftdotfill{\cftsectiondotsep} 
		\hyperpage{14}
  \item convergence\cftdotfill{\cftsectiondotsep} \hyperpage{54}
    \subitem $\sim $ pointwise\cftdotfill{\cftsectiondotsep} 
		\hyperpage{55}
    \subitem $\sim $ uniform\cftdotfill{\cftsectiondotsep} 
		\hyperpage{56}
    \subitem uniform $\sim $ in compacta\cftdotfill{\cftsectiondotsep} 
		\hyperpage{56}
  \item convolution\cftdotfill{\cftsectiondotsep} \hyperpage{72}
  \item convolution theorem\cftdotfill{\cftsectiondotsep} 
		\hyperpage{73}
  \item coordinate
    \subitem $\sim $ neighbourhood\cftdotfill{\cftsectiondotsep} 
		\hyperpage{8}
    \subitem $\sim $ vector\cftdotfill{\cftsectiondotsep} 
		\hyperpage{17}
  \item cosmological chart\cftdotfill{\cftsectiondotsep} \hyperpage{44}
  \item cosmological constant\cftdotfill{\cftsectiondotsep} 
		\hyperpage{42}
  \item cosmological spacetime\cftdotfill{\cftsectiondotsep} 
		\see{FLRW spacetime}{title-1}
  \item cosmological time\cftdotfill{\cftsectiondotsep} \hyperpage{46}
  \item cosmology\cftdotfill{\cftsectiondotsep} \hyperpage{44--50}
    \subitem FLRW spacetime\cftdotfill{\cftsectiondotsep} 
		\hyperpage{45}
    \subitem gauge problem\cftdotfill{\cftsectiondotsep} \hyperpage{47}
    \subitem metric perturbation\cftdotfill{\cftsectiondotsep} 
		\hyperpage{49}
  \item cotangent bundle\cftdotfill{\cftsectiondotsep} \hyperpage{13}
    \subitem dual of the $\sim $\cftdotfill{\cftsectiondotsep} 
		\hyperpage{11}
  \item cotangent map\cftdotfill{\cftsectiondotsep} \hyperpage{13}
    \subitem global $\sim $\cftdotfill{\cftsectiondotsep} 
		\hyperpage{13}
  \item cotangent space\cftdotfill{\cftsectiondotsep} \hyperpage{13}
    \subitem dual of the $\sim $\cftdotfill{\cftsectiondotsep} 
		\hyperpage{11}
  \item countable
    \subitem first-$\sim $\cftdotfill{\cftsectiondotsep} \hyperpage{53}
    \subitem second-$\sim $\cftdotfill{\cftsectiondotsep} 
		\hyperpage{52}
  \item covariant derivative
    \subitem $\sim $ along a curve\cftdotfill{\cftsectiondotsep} 
		\hyperpage{21}
    \subitem $\sim $ along a vector field\cftdotfill{\cftsectiondotsep} 
		\hyperpage{19}
    \subitem spatial $\sim $\cftdotfill{\cftsectiondotsep} 
		\hyperpage{38}
    \subitem temporal $\sim $\cftdotfill{\cftsectiondotsep} 
		\hyperpage{38}
  \item covariant expansion\cftdotfill{\cftsectiondotsep} 
		\hyperpage{29}
  \item covariant splitting\cftdotfill{\cftsectiondotsep} 
		\hyperpage{37--39}
  \item covariant tensor field\cftdotfill{\cftsectiondotsep} 
		\hyperpage{14}
  \item covector\cftdotfill{\cftsectiondotsep} \hyperpage{13}
  \item covector field\cftdotfill{\cftsectiondotsep} \hyperpage{13}
    \subitem components of a $\sim $\cftdotfill{\cftsectiondotsep} 
		\hyperpage{17}
    \subitem pullback of a $\sim $\cftdotfill{\cftsectiondotsep} 
		\hyperpage{13}
    \subitem pushforward of a $\sim $\cftdotfill{\cftsectiondotsep} 
		\hyperpage{13}
  \item curvature coupling\cftdotfill{\cftsectiondotsep} 
		\hyperpage{112}
    \subitem conformal $\sim $\cftdotfill{\cftsectiondotsep} 
		\hyperpage{112}
    \subitem minimal $\sim $\cftdotfill{\cftsectiondotsep} 
		\hyperpage{112}
  \item curvature of a connection\cftdotfill{\cftsectiondotsep} 
		\hyperpage{21}
  \item curvature tensor
    \subitem Einstein $\sim $\cftdotfill{\cftsectiondotsep} 
		\hyperpage{42}
    \subitem Ricci $\sim $\cftdotfill{\cftsectiondotsep} \hyperpage{21}
    \subitem Riemann $\sim $\cftdotfill{\cftsectiondotsep} 
		\hyperpage{21}
  \item curve
    \subitem auto-parallel $\sim $\cftdotfill{\cftsectiondotsep} 
		\see{geodesic}{title-1}
    \subitem inextendible $\sim $\cftdotfill{\cftsectiondotsep} 
		\hyperpage{12}
    \subitem integral $\sim $\cftdotfill{\cftsectiondotsep} 
		\hyperpage{12}
    \subitem maximal integrable $\sim $\cftdotfill{\cftsectiondotsep} 
		\hyperpage{13}
    \subitem parametrized $\sim $\cftdotfill{\cftsectiondotsep} 
		\hyperpage{12}
    \subitem velocity of a $\sim $\cftdotfill{\cftsectiondotsep} 
		\hyperpage{12}
  \item cyclic ${}^*$-representation\cftdotfill{\cftsectiondotsep} 
		\hyperpage{60}
  \item cyclic order\cftdotfill{\cftsectiondotsep} \hyperpage{84}
  \item cyclic vector\cftdotfill{\cftsectiondotsep} \hyperpage{60}

  \indexspace
{\bfseries\sffamily D}\nopagebreak

  \item d'Alembert operator\cftdotfill{\cftsectiondotsep} 
		\hyperpage{26}
  \item de Rham cohomology\cftdotfill{\cftsectiondotsep} 
		\hyperpage{24, 25}
    \subitem $\sim $ with compact support\cftdotfill{\cftsectiondotsep} 
		\hyperpage{24, 25}
  \item de Sitter spacetime\cftdotfill{\cftsectiondotsep} 
		\hyperpage{43}
    \subitem chart of $\sim $\cftdotfill{\cftsectiondotsep} 
		\hyperpage{43}
    \subitem conformal time of $\sim $\cftdotfill{\cftsectiondotsep} 
		\hyperpage{44}
    \subitem cosmological chart of $\sim $\cftdotfill{\cftsectiondotsep} 
		\hyperpage{44}
    \subitem metric of $\sim $\cftdotfill{\cftsectiondotsep} 
		\hyperpage{43}
    \subitem world function of $\sim $\cftdotfill{\cftsectiondotsep} 
		\hyperpage{44}
  \item decomposition of a tensor field\cftdotfill{\cftsectiondotsep} 
		\hyperpage{48}
  \item $\delta $-distribution\cftdotfill{\cftsectiondotsep} 
		\hyperpage{77}
  \item dense set\cftdotfill{\cftsectiondotsep} \hyperpage{52}
  \item derivative
    \subitem covariant $\sim $ along a
      \subsubitem $\sim $ curve\cftdotfill{\cftsectiondotsep} 
		\hyperpage{21}
      \subsubitem $\sim $ vector field\cftdotfill{\cftsectiondotsep} 
		\hyperpage{19}
    \subitem directional $\sim $\cftdotfill{\cftsectiondotsep} 
		\hyperpage{62}
    \subitem exterior $\sim $\cftdotfill{\cftsectiondotsep} 
		\hyperpage{23}
    \subitem Fr\'echet $\sim $\cftdotfill{\cftsectiondotsep} 
		\hyperpage{64}
    \subitem G\^ateaux $\sim $\cftdotfill{\cftsectiondotsep} 
		\see{direct. deriv.}{title-1}
    \subitem Lie $\sim $\cftdotfill{\cftsectiondotsep} \hyperpage{11}, 
		\hyperpage{14}
    \subitem spatial covariant $\sim $\cftdotfill{\cftsectiondotsep} 
		\hyperpage{38}
    \subitem temporal covariant $\sim $\cftdotfill{\cftsectiondotsep} 
		\hyperpage{38}
  \item descending permutation\cftdotfill{\cftsectiondotsep} 
		\hyperpage{85}
  \item descent set\cftdotfill{\cftsectiondotsep} \hyperpage{86}
  \item diagonal distribution\cftdotfill{\cftsectiondotsep} 
		\hyperpage{77}
  \item diffeomorphism\cftdotfill{\cftsectiondotsep} \hyperpage{9}
  \item differential form\cftdotfill{\cftsectiondotsep} \hyperpage{23}
    \subitem closed $\sim $\cftdotfill{\cftsectiondotsep} 
		\hyperpage{24}
    \subitem coclosed $\sim $\cftdotfill{\cftsectiondotsep} 
		\hyperpage{25}
    \subitem coexact $\sim $\cftdotfill{\cftsectiondotsep} 
		\hyperpage{25}
    \subitem exact $\sim $\cftdotfill{\cftsectiondotsep} \hyperpage{24}
  \item differential geometry\cftdotfill{\cftsectiondotsep} 
		\hyperpage{7--33}
  \item differential of a smooth map\cftdotfill{\cftsectiondotsep} 
		\hyperpage{11}
  \item differential operator\cftdotfill{\cftsectiondotsep} 
		\hyperpage{17}
    \subitem formally adjoint $\sim $\cftdotfill{\cftsectiondotsep} 
		\hyperpage{25}, \hyperpage{79}
    \subitem formally self-adjoint $\sim $\cftdotfill{\cftsectiondotsep} 
		\hyperpage{80}
    \subitem Green-hyperbolic $\sim $\cftdotfill{\cftsectiondotsep} 
		\hyperpage{79}
    \subitem normally hyperbolic $\sim $\cftdotfill{\cftsectiondotsep} 
		\hyperpage{18}
    \subitem pre-normally hyperbolic $\sim $\cftdotfill{\cftsectiondotsep} 
		\hyperpage{18}
    \subitem principal symbol of a $\sim $\cftdotfill{\cftsectiondotsep} 
		\hyperpage{18}
    \subitem total symbol of a $\sim $\cftdotfill{\cftsectiondotsep} 
		\hyperpage{18}
  \item Dirac $\delta $-distribution\cftdotfill{\cftsectiondotsep} 
		\hyperpage{77}
  \item direct sum topology\cftdotfill{\cftsectiondotsep} 
		\hyperpage{53}
  \item direction
    \subitem regular $\sim $\cftdotfill{\cftsectiondotsep} 
		\hyperpage{73}
    \subitem singular $\sim $\cftdotfill{\cftsectiondotsep} 
		\hyperpage{74}
  \item directional derivative\cftdotfill{\cftsectiondotsep} 
		\hyperpage{62}
  \item discrete topology\cftdotfill{\cftsectiondotsep} \hyperpage{52}
  \item distribution\cftdotfill{\cftsectiondotsep} \hyperpage{69}
    \subitem compactly supported $\sim $\cftdotfill{\cftsectiondotsep} 
		\hyperpage{69}
    \subitem kernel of a $\sim $\cftdotfill{\cftsectiondotsep} 
		\hyperpage{71}
    \subitem multiplication of $\sim $\cftdotfill{\cftsectiondotsep} 
		\hyperpage{76}
    \subitem pullback of a $\sim $\cftdotfill{\cftsectiondotsep} 
		\hyperpage{75}
    \subitem restriction of a $\sim $\cftdotfill{\cftsectiondotsep} 
		\hyperpage{69}
    \subitem Schwartz $\sim $\cftdotfill{\cftsectiondotsep} 
		\see{tempered distribution}{title-1}
    \subitem smooth $\sim $\cftdotfill{\cftsectiondotsep} 
		\hyperpage{73}
    \subitem support of a $\sim $\cftdotfill{\cftsectiondotsep} 
		\hyperpage{69}
    \subitem tempered $\sim $\cftdotfill{\cftsectiondotsep} 
		\hyperpage{69}
    \subitem wavefront set in a cone\cftdotfill{\cftsectiondotsep} 
		\hyperpage{74}
    \subitem wavefront set of a $\sim $\cftdotfill{\cftsectiondotsep} 
		\hyperpage{74}
  \item distributional section\cftdotfill{\cftsectiondotsep} 
		\hyperpage{70}
    \subitem compactly supported $\sim $\cftdotfill{\cftsectiondotsep} 
		\hyperpage{70}
    \subitem restriction of a $\sim $\cftdotfill{\cftsectiondotsep} 
		\hyperpage{70}
    \subitem wavefront set of a $\sim $\cftdotfill{\cftsectiondotsep} 
		\hyperpage{76}
  \item dual
    \subitem $\sim $ frame\cftdotfill{\cftsectiondotsep} \hyperpage{16}
    \subitem $\sim $ metric\cftdotfill{\cftsectiondotsep} 
		\hyperpage{15}
    \subitem $\sim $ of a Banach space\cftdotfill{\cftsectiondotsep} 
		\hyperpage{57}
    \subitem $\sim $ of a connection\cftdotfill{\cftsectiondotsep} 
		\hyperpage{20}
    \subitem $\sim $ of a Fr\'echet space\cftdotfill{\cftsectiondotsep} 
		\hyperpage{57}
    \subitem $\sim $ of a vector bundle\cftdotfill{\cftsectiondotsep} 
		\hyperpage{15}
    \subitem $\sim $ of the cotangent bundle\cftdotfill{\cftsectiondotsep} 
		\hyperpage{11}
    \subitem $\sim $ of the cotangent space\cftdotfill{\cftsectiondotsep} 
		\hyperpage{11}
    \subitem $\sim $ of the tangent bundle\cftdotfill{\cftsectiondotsep} 
		\hyperpage{13}
    \subitem $\sim $ of the tangent space\cftdotfill{\cftsectiondotsep} 
		\hyperpage{13}
    \subitem $\sim $ seminorm\cftdotfill{\cftsectiondotsep} 
		\hyperpage{57}
    \subitem strong $\sim $\cftdotfill{\cftsectiondotsep} 
		\hyperpage{57}
    \subitem topological $\sim $\cftdotfill{\cftsectiondotsep} 
		\hyperpage{56}
    \subitem weak $\sim $\cftdotfill{\cftsectiondotsep} \hyperpage{57}
  \item duality\cftdotfill{\cftsectiondotsep} \hyperpage{56}
    \subitem distributions and test functions\cftdotfill{\cftsectiondotsep} 
		\hyperpage{73}
    \subitem dual vector bundle\cftdotfill{\cftsectiondotsep} 
		\hyperpage{15}
    \subitem musical isomorphisms\cftdotfill{\cftsectiondotsep} 
		\hyperpage{15}
    \subitem Poincar\'e $\sim $\cftdotfill{\cftsectiondotsep} 
		\hyperpage{26}

  \indexspace
{\bfseries\sffamily E}\nopagebreak

  \item Einstein curvature tensor\cftdotfill{\cftsectiondotsep} 
		\hyperpage{42}
  \item Einstein equation\cftdotfill{\cftsectiondotsep} \hyperpage{42}
    \subitem semiclassical $\sim $\cftdotfill{\cftsectiondotsep} 
		\hyperpage{131--137}
    \subitem vacuum solution of $\sim $\cftdotfill{\cftsectiondotsep} 
		\hyperpage{43}
  \item element of a ${}^*$-algebra
    \subitem adjoint $\sim $\cftdotfill{\cftsectiondotsep} 
		\hyperpage{59}
    \subitem normal $\sim $\cftdotfill{\cftsectiondotsep} 
		\hyperpage{59}
    \subitem self-adjoint $\sim $\cftdotfill{\cftsectiondotsep} 
		\hyperpage{59}
    \subitem unitary $\sim $\cftdotfill{\cftsectiondotsep} 
		\hyperpage{59}
  \item embedding\cftdotfill{\cftsectiondotsep} \hyperpage{12}
    \subitem Cauchy $\sim $\cftdotfill{\cftsectiondotsep} 
		\hyperpage{106}
    \subitem conformal $\sim $\cftdotfill{\cftsectiondotsep} 
		\hyperpage{16}
    \subitem hyperbolic-$\sim $\cftdotfill{\cftsectiondotsep} 
		\hyperpage{106}
    \subitem isometric $\sim $\cftdotfill{\cftsectiondotsep} 
		\hyperpage{16}
    \subitem $\sim $ of a submanifold\cftdotfill{\cftsectiondotsep} 
		\hyperpage{12}
  \item energy condition\cftdotfill{\cftsectiondotsep} \hyperpage{41}
    \subitem dominated (DEC)\cftdotfill{\cftsectiondotsep} 
		\hyperpage{41}
    \subitem null (NEC)\cftdotfill{\cftsectiondotsep} \hyperpage{41}
    \subitem strong (SEC)\cftdotfill{\cftsectiondotsep} \hyperpage{42}
    \subitem weak (WEC)\cftdotfill{\cftsectiondotsep} \hyperpage{41}
  \item energy conservation\cftdotfill{\cftsectiondotsep} 
		\hyperpage{41}
  \item energy density\cftdotfill{\cftsectiondotsep} \hyperpage{41}
  \item energy-density
    \subitem $\sim $ per mode\cftdotfill{\cftsectiondotsep} 
		\hyperpage{125}
    \subitem quantum $\sim $\cftdotfill{\cftsectiondotsep} 
		\hyperpage{135}
  \item enumerative combinatorics\cftdotfill{\cftsectiondotsep} 
		\hyperpage{83--100}
  \item equicontinuous\cftdotfill{\cftsectiondotsep} \hyperpage{56}
  \item Euclidean space\cftdotfill{\cftsectiondotsep} \hyperpage{16}
    \subitem chart of $\sim $\cftdotfill{\cftsectiondotsep} 
		\hyperpage{16}
    \subitem metric of $\sim $\cftdotfill{\cftsectiondotsep} 
		\hyperpage{16}
  \item event\cftdotfill{\cftsectiondotsep} \hyperpage{37}
  \item exact differential form\cftdotfill{\cftsectiondotsep} 
		\hyperpage{24}
  \item existence of a fixed-point\cftdotfill{\cftsectiondotsep} 
		\hyperpage{66}
  \item expansion scalar\cftdotfill{\cftsectiondotsep} \hyperpage{38}
  \item expansion tensor\cftdotfill{\cftsectiondotsep} \hyperpage{38}
  \item exponential map\cftdotfill{\cftsectiondotsep} \hyperpage{22}
  \item exterior derivative\cftdotfill{\cftsectiondotsep} 
		\hyperpage{23}
  \item exterior tensor product\cftdotfill{\cftsectiondotsep} 
		\hyperpage{15}

  \indexspace
{\bfseries\sffamily F}\nopagebreak

  \item falling atomic permutation\cftdotfill{\cftsectiondotsep} 
		\hyperpage{85}
  \item fibre of a vector bundle\cftdotfill{\cftsectiondotsep} 
		\hyperpage{9}
  \item field algebra
    \subitem off-shell $\sim $\cftdotfill{\cftsectiondotsep} 
		\hyperpage{109}
    \subitem on-shell $\sim $\cftdotfill{\cftsectiondotsep} 
		\hyperpage{110}
  \item field equation\cftdotfill{\cftsectiondotsep} \hyperpage{109}, 
		\hyperpage{116}
    \subitem Klein--Gordon $\sim $\cftdotfill{\cftsectiondotsep} 
		\hyperpage{112}
    \subitem Proca $\sim $\cftdotfill{\cftsectiondotsep} 
		\hyperpage{113}
  \item field operator\cftdotfill{\cftsectiondotsep} \hyperpage{62}
  \item final topology\cftdotfill{\cftsectiondotsep} \hyperpage{53}
  \item finer topology\cftdotfill{\cftsectiondotsep} \hyperpage{52}
  \item first Bianchi identity\cftdotfill{\cftsectiondotsep} 
		\hyperpage{21}
  \item first Friedmann equation\cftdotfill{\cftsectiondotsep} 
		\hyperpage{46}, \hyperpage{134}
  \item first-countable\cftdotfill{\cftsectiondotsep} \hyperpage{53}
  \item fixed-point
    \subitem existence of a $\sim $\cftdotfill{\cftsectiondotsep} 
		\hyperpage{66}
    \subitem regularity of a $\sim $\cftdotfill{\cftsectiondotsep} 
		\hyperpage{66}
    \subitem uniqueness of a $\sim $\cftdotfill{\cftsectiondotsep} 
		\hyperpage{66}
  \item fixed-point theorem\cftdotfill{\cftsectiondotsep} 
		\hyperpage{65--68}
    \subitem Banach $\sim $\cftdotfill{\cftsectiondotsep} 
		\hyperpage{65}, \hyperpage{148}
    \subitem Lipschitz continuity criterium\cftdotfill{\cftsectiondotsep} 
		\hyperpage{67--68}, \hyperpage{148}
  \item flat connection\cftdotfill{\cftsectiondotsep} \hyperpage{21}
  \item flat $\flat $\cftdotfill{\cftsectiondotsep} \hyperpage{15}
  \item flow of a vector field\cftdotfill{\cftsectiondotsep} 
		\hyperpage{13}
    \subitem local $\sim $\cftdotfill{\cftsectiondotsep} \hyperpage{13}
  \item FLRW spacetime\cftdotfill{\cftsectiondotsep} \hyperpage{45}
    \subitem conformal metric of a $\sim $\cftdotfill{\cftsectiondotsep} 
		\hyperpage{47}
    \subitem conformal time of a $\sim $\cftdotfill{\cftsectiondotsep} 
		\hyperpage{47}
    \subitem cosmological time of a $\sim $\cftdotfill{\cftsectiondotsep} 
		\hyperpage{46}
    \subitem metric of a $\sim $\cftdotfill{\cftsectiondotsep} 
		\hyperpage{47}
    \subitem perturbed metric of a $\sim $\cftdotfill{\cftsectiondotsep} 
		\hyperpage{49}
    \subitem world function of $\sim $\cftdotfill{\cftsectiondotsep} 
		\hyperpage{32}
  \item fluid
    \subitem barotropic $\sim $\cftdotfill{\cftsectiondotsep} 
		\hyperpage{41}
    \subitem perfect $\sim $\cftdotfill{\cftsectiondotsep} 
		\hyperpage{41}
  \item foliation\cftdotfill{\cftsectiondotsep} \hyperpage{27}
    \subitem leaves of a $\sim $\cftdotfill{\cftsectiondotsep} 
		\hyperpage{27}
  \item formally
    \subitem $\sim $ adjoint\cftdotfill{\cftsectiondotsep} 
		\hyperpage{25}, \hyperpage{79}
    \subitem $\sim $ self-adjoint\cftdotfill{\cftsectiondotsep} 
		\hyperpage{80}
  \item Fourier transform\cftdotfill{\cftsectiondotsep} \hyperpage{71}
    \subitem convolution theorem\cftdotfill{\cftsectiondotsep} 
		\hyperpage{73}
    \subitem inverse $\sim $\cftdotfill{\cftsectiondotsep} 
		\hyperpage{72}
    \subitem Plancherel--Parseval identity\cftdotfill{\cftsectiondotsep} 
		\hyperpage{72}
  \item Fr\'chet space
    \subitem smooth sections\cftdotfill{\cftsectiondotsep} 
		\hyperpage{69}
  \item Fr\'echet derivative\cftdotfill{\cftsectiondotsep} 
		\hyperpage{64}
  \item Fr\'echet space\cftdotfill{\cftsectiondotsep} \hyperpage{55}, 
		\hyperpage{64}
    \subitem dual of a $\sim $\cftdotfill{\cftsectiondotsep} 
		\hyperpage{57}
    \subitem rapidly decreasing functions\cftdotfill{\cftsectiondotsep} 
		\hyperpage{69}
    \subitem smooth functions\cftdotfill{\cftsectiondotsep} 
		\hyperpage{68}
  \item frame
    \subitem comoving $\sim $\cftdotfill{\cftsectiondotsep} 
		\hyperpage{37}
    \subitem components in a $\sim $\cftdotfill{\cftsectiondotsep} 
		\hyperpage{17}
    \subitem dual $\sim $\cftdotfill{\cftsectiondotsep} \hyperpage{16}
    \subitem local $\sim $\cftdotfill{\cftsectiondotsep} \hyperpage{16}
    \subitem orthogonal $\sim $\cftdotfill{\cftsectiondotsep} 
		\hyperpage{17}
    \subitem unitary $\sim $\cftdotfill{\cftsectiondotsep} 
		\hyperpage{17}
  \item Friedmann equation
    \subitem first $\sim $\cftdotfill{\cftsectiondotsep} \hyperpage{46}, 
		\hyperpage{134}
    \subitem second $\sim $\cftdotfill{\cftsectiondotsep} 
		\hyperpage{46}, \hyperpage{134}
    \subitem semiclassical $\sim $\cftdotfill{\cftsectiondotsep} 
		\hyperpage{134--137}, \hyperpage{140}
  \item Frobenius' theorem\cftdotfill{\cftsectiondotsep} \hyperpage{26}
    \subitem global $\sim $\cftdotfill{\cftsectiondotsep} 
		\hyperpage{27}
  \item funct
    \subitem topologies\cftdotfill{\cftsectiondotsep} 
		\hyperpage{55--56}
  \item function
    \subitem $\sim $ on a manifold\cftdotfill{\cftsectiondotsep} 
		\hyperpage{9}
    \subitem rapidly decreasing $\sim $\cftdotfill{\cftsectiondotsep} 
		\hyperpage{69}
    \subitem space of $\sim $\cftdotfill{\cftsectiondotsep} 
		\hyperpage{55}
    \subitem space of continuous $\sim $\cftdotfill{\cftsectiondotsep} 
		\hyperpage{53}
    \subitem test $\sim $\cftdotfill{\cftsectiondotsep} \hyperpage{69}
  \item future
    \subitem causal $\sim $\cftdotfill{\cftsectiondotsep} 
		\hyperpage{36}
    \subitem chronological $\sim $\cftdotfill{\cftsectiondotsep} 
		\hyperpage{36}

  \indexspace
{\bfseries\sffamily G}\nopagebreak

  \item G\^ateaux derivative\cftdotfill{\cftsectiondotsep} 
		\see{direct. deriv.}{title-1}
  \item gauge invariant tensor field\cftdotfill{\cftsectiondotsep} 
		\hyperpage{47}
  \item gauge problem in cosmology\cftdotfill{\cftsectiondotsep} 
		\hyperpage{47}
  \item Gauss--Codacci equation\cftdotfill{\cftsectiondotsep} 
		\hyperpage{39}
  \item Gaussian state\cftdotfill{\cftsectiondotsep} 
		\see{quasi-free state}{title-1}
  \item general relativity\cftdotfill{\cftsectiondotsep} 
		\hyperpage{40--44}
  \item geodesic\cftdotfill{\cftsectiondotsep} \hyperpage{22}
  \item geodesic distance\cftdotfill{\cftsectiondotsep} \hyperpage{27}
  \item geodesically
    \subitem $\sim $ complete\cftdotfill{\cftsectiondotsep} 
		\hyperpage{22}
    \subitem $\sim $ convex\cftdotfill{\cftsectiondotsep} 
		\hyperpage{22}
    \subitem $\sim $ starshaped\cftdotfill{\cftsectiondotsep} 
		\hyperpage{22}
  \item geometry
    \subitem differential $\sim $\cftdotfill{\cftsectiondotsep} 
		\hyperpage{7--33}
    \subitem Lorentzian $\sim $\cftdotfill{\cftsectiondotsep} 
		\hyperpage{35--50}
  \item global cotangent map\cftdotfill{\cftsectiondotsep} 
		\hyperpage{13}
  \item global tangent map\cftdotfill{\cftsectiondotsep} \hyperpage{11}
  \item global trivialization\cftdotfill{\cftsectiondotsep} 
		\hyperpage{9}
  \item globally hyperbolic spacetime\cftdotfill{\cftsectiondotsep} 
		\hyperpage{40}
    \subitem category of $\sim $\cftdotfill{\cftsectiondotsep} 
		\hyperpage{106}
  \item GNS construction\cftdotfill{\cftsectiondotsep} \hyperpage{60}
  \item graph topology\cftdotfill{\cftsectiondotsep} \hyperpage{60}
  \item Green's operator
    \subitem advanced $\sim $\cftdotfill{\cftsectiondotsep} 
		\see{advanced prop.}{title-1}
    \subitem retarded $\sim $\cftdotfill{\cftsectiondotsep} 
		\see{retarded prop.}{title-1}
  \item Green-hyperbolic\cftdotfill{\cftsectiondotsep} \hyperpage{79}

  \indexspace
{\bfseries\sffamily H}\nopagebreak

  \item H\"ormander topology\cftdotfill{\cftsectiondotsep} 
		\hyperpage{74}
  \item Hadamard
    \subitem $\sim $ coefficients\cftdotfill{\cftsectiondotsep} 
		\hyperpage{118}, \hyperpage{134, 135}
    \subitem $\sim $ form\cftdotfill{\cftsectiondotsep} \hyperpage{117}
    \subitem $\sim $ parametrix\cftdotfill{\cftsectiondotsep} 
		\hyperpage{118}
    \subitem $\sim $ point-splitting regularization\cftdotfill{\cftsectiondotsep} 
		\hyperpage{133}, \hyperpage{136--137}
    \subitem $\sim $ recursion relations\cftdotfill{\cftsectiondotsep} 
		\hyperpage{118}
    \subitem $\sim $ state\cftdotfill{\cftsectiondotsep} 
		\hyperpage{117}
  \item Hahn--Banach theorem\cftdotfill{\cftsectiondotsep} 
		\hyperpage{55}
  \item Harrison--Zel'dovich spectrum\cftdotfill{\cftsectiondotsep} 
		\hyperpage{154}
  \item Hausdorff space\cftdotfill{\cftsectiondotsep} \hyperpage{52}
  \item Hodge star\cftdotfill{\cftsectiondotsep} \hyperpage{25}
  \item homeomorphism\cftdotfill{\cftsectiondotsep} \hyperpage{53}
  \item homogeneous and isotropic
    \subitem $\sim $ spacetime\cftdotfill{\cftsectiondotsep} 
		\see{FLRW spacetime}{title-1}
    \subitem $\sim $ state\cftdotfill{\cftsectiondotsep} 
		\hyperpage{121}
  \item Hubble
    \subitem conformal $\sim $ function\cftdotfill{\cftsectiondotsep} 
		\hyperpage{49}
    \subitem $\sim $ constant\cftdotfill{\cftsectiondotsep} 
		\hyperpage{43}
    \subitem $\sim $ function\cftdotfill{\cftsectiondotsep} 
		\hyperpage{46}
  \item hyperbolic differential operator
    \subitem Green-$\sim $\cftdotfill{\cftsectiondotsep} \hyperpage{79}
    \subitem normally $\sim $\cftdotfill{\cftsectiondotsep} 
		\hyperpage{18}
    \subitem pre-normally $\sim $\cftdotfill{\cftsectiondotsep} 
		\hyperpage{18}
  \item hyperbolic-embedding\cftdotfill{\cftsectiondotsep} 
		\hyperpage{106}

  \indexspace
{\bfseries\sffamily I}\nopagebreak

  \item immersion\cftdotfill{\cftsectiondotsep} \hyperpage{12}
    \subitem conformal $\sim $\cftdotfill{\cftsectiondotsep} 
		\hyperpage{16}
    \subitem isometric $\sim $\cftdotfill{\cftsectiondotsep} 
		\hyperpage{16}
    \subitem $\sim $ of a submanifold\cftdotfill{\cftsectiondotsep} 
		\hyperpage{12}
  \item index notation
    \subitem abstract $\sim $\cftdotfill{\cftsectiondotsep} 
		\hyperpage{19}
    \subitem concrete $\sim $\cftdotfill{\cftsectiondotsep} 
		\hyperpage{18}
  \item index of a metric\cftdotfill{\cftsectiondotsep} \hyperpage{16}
  \item inextendible curve\cftdotfill{\cftsectiondotsep} \hyperpage{12}
  \item initial topology\cftdotfill{\cftsectiondotsep} \hyperpage{53}
  \item injective tensor product\cftdotfill{\cftsectiondotsep} 
		\hyperpage{58}
  \item integrable
    \subitem Lebesgue $\sim $\cftdotfill{\cftsectiondotsep} 
		\hyperpage{71}
    \subitem square-$\sim $\cftdotfill{\cftsectiondotsep} 
		\hyperpage{71}
  \item integrable plane field\cftdotfill{\cftsectiondotsep} 
		\hyperpage{26}
  \item integral
    \subitem $\sim $ curve\cftdotfill{\cftsectiondotsep} \hyperpage{12}
    \subitem $\sim $ manifold\cftdotfill{\cftsectiondotsep} 
		\hyperpage{26}
  \item integration on manifolds\cftdotfill{\cftsectiondotsep} 
		\hyperpage{24}
  \item involution
    \subitem $\sim $ of a ${}^*$-algebra\cftdotfill{\cftsectiondotsep} 
		\hyperpage{59}
    \subitem $\sim $ of a plane field\cftdotfill{\cftsectiondotsep} 
		\hyperpage{26}
  \item isometric
    \subitem $\sim $ embedding\cftdotfill{\cftsectiondotsep} 
		\hyperpage{16}
    \subitem $\sim $ immersion\cftdotfill{\cftsectiondotsep} 
		\hyperpage{16}
  \item isometry\cftdotfill{\cftsectiondotsep} \hyperpage{16}, 
		\hyperpage{53}
    \subitem conformal $\sim $\cftdotfill{\cftsectiondotsep} 
		\hyperpage{16}

  \indexspace
{\bfseries\sffamily K}\nopagebreak

  \item kernel\cftdotfill{\cftsectiondotsep} \hyperpage{71}
    \subitem properly supported $\sim $\cftdotfill{\cftsectiondotsep} 
		\hyperpage{71}
    \subitem regular $\sim $\cftdotfill{\cftsectiondotsep} 
		\hyperpage{71}
    \subitem semiregular $\sim $\cftdotfill{\cftsectiondotsep} 
		\hyperpage{71}
  \item kernel theorem
    \subitem abstract $\sim $\cftdotfill{\cftsectiondotsep} 
		\hyperpage{58}
    \subitem Schwartz $\sim $\cftdotfill{\cftsectiondotsep} 
		\hyperpage{70}
  \item Killing vector field\cftdotfill{\cftsectiondotsep} 
		\hyperpage{22}
    \subitem conformal $\sim $\cftdotfill{\cftsectiondotsep} 
		\hyperpage{22}
  \item Klein--Gordon field\cftdotfill{\cftsectiondotsep} 
		\hyperpage{112}
    \subitem field equation\cftdotfill{\cftsectiondotsep} 
		\hyperpage{112}
    \subitem stress-energy tensor of the $\sim $\cftdotfill{\cftsectiondotsep} 
		\hyperpage{132}
  \item Koszul formula\cftdotfill{\cftsectiondotsep} \hyperpage{20}

  \indexspace
{\bfseries\sffamily L}\nopagebreak

  \item Lagrangian quantum field theory\cftdotfill{\cftsectiondotsep} 
		\hyperpage{104}
  \item Laplace--{de Rham} operator\cftdotfill{\cftsectiondotsep} 
		\hyperpage{26}
  \item leaves of a foliation\cftdotfill{\cftsectiondotsep} 
		\hyperpage{27}
  \item Lebesgue integrable\cftdotfill{\cftsectiondotsep} 
		\hyperpage{71}
  \item Levi-Civita connection\cftdotfill{\cftsectiondotsep} 
		\hyperpage{20}
  \item Lie derivative\cftdotfill{\cftsectiondotsep} \hyperpage{11}, 
		\hyperpage{14}
  \item lightlike\cftdotfill{\cftsectiondotsep} \hyperpage{36}
  \item linear permutation\cftdotfill{\cftsectiondotsep} \hyperpage{83}
    \subitem $\sim $ group\cftdotfill{\cftsectiondotsep} \hyperpage{83}
    \subitem run structure of $\sim $\cftdotfill{\cftsectiondotsep} 
		\hyperpage{93}
  \item Lipschitz constant\cftdotfill{\cftsectiondotsep} \hyperpage{64}
  \item local flow of a vector field\cftdotfill{\cftsectiondotsep} 
		\hyperpage{13}
  \item local frame\cftdotfill{\cftsectiondotsep} \hyperpage{16}
  \item local section\cftdotfill{\cftsectiondotsep} \hyperpage{10}
  \item local trivialization\cftdotfill{\cftsectiondotsep} 
		\hyperpage{9}
  \item locally compact topological space\cftdotfill{\cftsectiondotsep} 
		\hyperpage{54}
  \item locally convex space\cftdotfill{\cftsectiondotsep} 
		\hyperpage{55}, \hyperpage{64}
    \subitem Banach space\cftdotfill{\cftsectiondotsep} \hyperpage{55}
    \subitem Fr\'echet space\cftdotfill{\cftsectiondotsep} 
		\hyperpage{55}, \hyperpage{64}
    \subitem nuclear $\sim $\cftdotfill{\cftsectiondotsep} 
		\hyperpage{58}
    \subitem tensor product\cftdotfill{\cftsectiondotsep} 
		\hyperpage{57--58}
  \item locally covariant QFT\cftdotfill{\cftsectiondotsep} 
		\hyperpage{107}
  \item locally Lipschitz\cftdotfill{\cftsectiondotsep} \hyperpage{64}
  \item Lorentzian
    \subitem $\sim $ manifold\cftdotfill{\cftsectiondotsep} 
		\hyperpage{16}
    \subitem $\sim $ metric\cftdotfill{\cftsectiondotsep} 
		\hyperpage{16}
  \item Lorentzian geometry\cftdotfill{\cftsectiondotsep} 
		\hyperpage{35--50}
  \item Lorentzian manifold
    \subitem causal structure of a $\sim $\cftdotfill{\cftsectiondotsep} 
		\hyperpage{36--37}
    \subitem covariant splitting of a $\sim $\cftdotfill{\cftsectiondotsep} 
		\hyperpage{37--39}
  \item $L^p$ space\cftdotfill{\cftsectiondotsep} \hyperpage{71}

  \indexspace
{\bfseries\sffamily M}\nopagebreak

  \item Mackey topology\cftdotfill{\cftsectiondotsep} \hyperpage{57}
  \item manifold
    \subitem contractible $\sim $\cftdotfill{\cftsectiondotsep} 
		\hyperpage{24}
    \subitem integral $\sim $\cftdotfill{\cftsectiondotsep} 
		\hyperpage{26}
    \subitem Lorentzian $\sim $\cftdotfill{\cftsectiondotsep} 
		\hyperpage{16}
    \subitem pseudo-Riemannian $\sim $\cftdotfill{\cftsectiondotsep} 
		\hyperpage{15}
    \subitem Riemannian $\sim $\cftdotfill{\cftsectiondotsep} 
		\hyperpage{16}
    \subitem smooth $\sim $\cftdotfill{\cftsectiondotsep} \hyperpage{8}
    \subitem strongly causal $\sim $\cftdotfill{\cftsectiondotsep} 
		\hyperpage{40}
    \subitem topological $\sim $\cftdotfill{\cftsectiondotsep} 
		\hyperpage{8}
  \item maximal atlas\cftdotfill{\cftsectiondotsep} \hyperpage{8}
  \item maximal integrable curve\cftdotfill{\cftsectiondotsep} 
		\hyperpage{13}
  \item metric\cftdotfill{\cftsectiondotsep} 
		\seealso{pseudometric}{title-1}, \hyperpage{53}
    \subitem dual $\sim $\cftdotfill{\cftsectiondotsep} \hyperpage{15}
    \subitem index of a $\sim $\cftdotfill{\cftsectiondotsep} 
		\hyperpage{16}
    \subitem Lorentzian $\sim $\cftdotfill{\cftsectiondotsep} 
		\hyperpage{16}
    \subitem $\sim $ of a FLRW spacetime\cftdotfill{\cftsectiondotsep} 
		\hyperpage{47}
    \subitem $\sim $ of de Sitter spacetime\cftdotfill{\cftsectiondotsep} 
		\hyperpage{43}
    \subitem $\sim $ of Euclidean space\cftdotfill{\cftsectiondotsep} 
		\hyperpage{16}
    \subitem $\sim $ of Minkowski spacetime\cftdotfill{\cftsectiondotsep} 
		\hyperpage{16}
    \subitem $\sim $ on a vector bundle\cftdotfill{\cftsectiondotsep} 
		\hyperpage{15}
    \subitem Riemannian $\sim $\cftdotfill{\cftsectiondotsep} 
		\hyperpage{16}
  \item metric connection\cftdotfill{\cftsectiondotsep} \hyperpage{19}
  \item metric space\cftdotfill{\cftsectiondotsep} 
		\seealso{pseudometric space}{title-1}, \hyperpage{53}
  \item microlocal analysis\cftdotfill{\cftsectiondotsep} 
		\hyperpage{68--78}
  \item microlocal spectrum condition\cftdotfill{\cftsectiondotsep} 
		\hyperpage{117}
  \item minimal curvature coupling\cftdotfill{\cftsectiondotsep} 
		\hyperpage{112}
  \item minimal regular norm\cftdotfill{\cftsectiondotsep} 
		\hyperpage{61}
  \item Minkowski spacetime\cftdotfill{\cftsectiondotsep} 
		\hyperpage{16}, \hyperpage{43}
    \subitem chart of $\sim $\cftdotfill{\cftsectiondotsep} 
		\hyperpage{16}
    \subitem metric of $\sim $\cftdotfill{\cftsectiondotsep} 
		\hyperpage{16}
  \item Minkowski vacuum state\cftdotfill{\cftsectiondotsep} 
		\hyperpage{121}
  \item mixed state\cftdotfill{\cftsectiondotsep} \hyperpage{60}
  \item mode equation\cftdotfill{\cftsectiondotsep} \hyperpage{121}
  \item momentum conservation\cftdotfill{\cftsectiondotsep} 
		\hyperpage{41}
  \item momentum density\cftdotfill{\cftsectiondotsep} \hyperpage{41}
  \item multiplication of distributions\cftdotfill{\cftsectiondotsep} 
		\hyperpage{76}
  \item musical isomorphisms\cftdotfill{\cftsectiondotsep} 
		\hyperpage{15}
    \subitem flat $\flat $\cftdotfill{\cftsectiondotsep} \hyperpage{15}
    \subitem sharp $\sharp $\cftdotfill{\cftsectiondotsep} 
		\hyperpage{15}

  \indexspace
{\bfseries\sffamily N}\nopagebreak

  \item $n$-point distribution\cftdotfill{\cftsectiondotsep} 
		\hyperpage{115}
    \subitem truncated $\sim $\cftdotfill{\cftsectiondotsep} 
		\hyperpage{116}
  \item neighbourhood\cftdotfill{\cftsectiondotsep} \hyperpage{52}
    \subitem coordinate $\sim $\cftdotfill{\cftsectiondotsep} 
		\hyperpage{8}
  \item net of local algebras\cftdotfill{\cftsectiondotsep} 
		\hyperpage{105}
    \subitem isotony\cftdotfill{\cftsectiondotsep} \hyperpage{105}
  \item norm\cftdotfill{\cftsectiondotsep} \seealso{seminorm}{title-1}, 
		\hyperpage{55}
    \subitem minimal regular $\sim $\cftdotfill{\cftsectiondotsep} 
		\hyperpage{61}
    \subitem operator $\sim $\cftdotfill{\cftsectiondotsep} 
		\hyperpage{60}
  \item normal element\cftdotfill{\cftsectiondotsep} \hyperpage{59}
  \item normal topology\cftdotfill{\cftsectiondotsep} \hyperpage{74}
  \item normally hyperbolic\cftdotfill{\cftsectiondotsep} 
		\hyperpage{18}
  \item normed space\cftdotfill{\cftsectiondotsep} \hyperpage{55}
  \item nuclear space\cftdotfill{\cftsectiondotsep} \hyperpage{58}
    \subitem quotient of a $\sim $\cftdotfill{\cftsectiondotsep} 
		\hyperpage{58}
    \subitem subspace of a $\sim $\cftdotfill{\cftsectiondotsep} 
		\hyperpage{58}

  \indexspace
{\bfseries\sffamily O}\nopagebreak

  \item off-shell field algebra\cftdotfill{\cftsectiondotsep} 
		\hyperpage{109}
  \item on-shell field algebra\cftdotfill{\cftsectiondotsep} 
		\hyperpage{110}
  \item one-form\cftdotfill{\cftsectiondotsep} 
		\see{covector field}{title-1}
  \item one-line permutation notation\cftdotfill{\cftsectiondotsep} 
		\hyperpage{83}
  \item open set\cftdotfill{\cftsectiondotsep} \hyperpage{52}
  \item operator norm\cftdotfill{\cftsectiondotsep} \hyperpage{60}
  \item orientable\cftdotfill{\cftsectiondotsep} \hyperpage{24}
  \item orientation\cftdotfill{\cftsectiondotsep} \hyperpage{24}
    \subitem negative $\sim $\cftdotfill{\cftsectiondotsep} 
		\hyperpage{24}
    \subitem positive $\sim $\cftdotfill{\cftsectiondotsep} 
		\hyperpage{24}
    \subitem time-$\sim $\cftdotfill{\cftsectiondotsep} \hyperpage{36}
  \item orthogonal frame\cftdotfill{\cftsectiondotsep} \hyperpage{17}

  \indexspace
{\bfseries\sffamily P}\nopagebreak

  \item parallel
    \subitem $\sim $ propagator\cftdotfill{\cftsectiondotsep} 
		\hyperpage{29}
    \subitem $\sim $ transport\cftdotfill{\cftsectiondotsep} 
		\hyperpage{21}
  \item parallel propagator
    \subitem transport equation of the $\sim $\cftdotfill{\cftsectiondotsep} 
		\hyperpage{29}
  \item parametrized curve\cftdotfill{\cftsectiondotsep} \hyperpage{12}
  \item past
    \subitem causal $\sim $\cftdotfill{\cftsectiondotsep} 
		\hyperpage{36}
    \subitem chronological $\sim $\cftdotfill{\cftsectiondotsep} 
		\hyperpage{36}
  \item Pauli--Jordan distribution\cftdotfill{\cftsectiondotsep} 
		\see{commutator distribution}{title-1}
  \item peak of a permutation\cftdotfill{\cftsectiondotsep} 
		\hyperpage{86}
  \item perfect fluid\cftdotfill{\cftsectiondotsep} \hyperpage{41}
  \item permutation
    \subitem ascending $\sim $\cftdotfill{\cftsectiondotsep} 
		\hyperpage{85}
    \subitem atom of a $\sim $\cftdotfill{\cftsectiondotsep} 
		\hyperpage{85}
    \subitem atomic $\sim $\cftdotfill{\cftsectiondotsep} 
		\hyperpage{85}
    \subitem circular $\sim $\cftdotfill{\cftsectiondotsep} 
		\hyperpage{84}
    \subitem descending $\sim $\cftdotfill{\cftsectiondotsep} 
		\hyperpage{85}
    \subitem linear $\sim $\cftdotfill{\cftsectiondotsep} 
		\hyperpage{83}
    \subitem one-line notation\cftdotfill{\cftsectiondotsep} 
		\hyperpage{83}
    \subitem peak of a $\sim $\cftdotfill{\cftsectiondotsep} 
		\hyperpage{86}
    \subitem principal atom of a $\sim $\cftdotfill{\cftsectiondotsep} 
		\hyperpage{85}
    \subitem two-line notation\cftdotfill{\cftsectiondotsep} 
		\hyperpage{83}
    \subitem valley of a $\sim $\cftdotfill{\cftsectiondotsep} 
		\hyperpage{86}, \hyperpage{95--97}
  \item permutation group
    \subitem circular $\sim $\cftdotfill{\cftsectiondotsep} 
		\hyperpage{84}
    \subitem linear $\sim $\cftdotfill{\cftsectiondotsep} 
		\hyperpage{83}
  \item perturbation
    \subitem scalar $\sim $\cftdotfill{\cftsectiondotsep} 
		\hyperpage{49}
    \subitem tensor $\sim $\cftdotfill{\cftsectiondotsep} 
		\hyperpage{49}
    \subitem vector $\sim $\cftdotfill{\cftsectiondotsep} 
		\hyperpage{49}
  \item perturbed metric of a FLRW spacetime\cftdotfill{\cftsectiondotsep} 
		\hyperpage{49}
  \item physical spacetime\cftdotfill{\cftsectiondotsep} \hyperpage{47}
  \item Plancherel--Parseval identity\cftdotfill{\cftsectiondotsep} 
		\hyperpage{72}
  \item plane field\cftdotfill{\cftsectiondotsep} \hyperpage{26}
    \subitem integrable $\sim $\cftdotfill{\cftsectiondotsep} 
		\hyperpage{26}
    \subitem involution of a $\sim $\cftdotfill{\cftsectiondotsep} 
		\hyperpage{26}
  \item Poincar\'e duality\cftdotfill{\cftsectiondotsep} \hyperpage{26}
  \item point-splitting regularization\cftdotfill{\cftsectiondotsep} 
		\hyperpage{133}, \hyperpage{136--137}
  \item pointwise convergence\cftdotfill{\cftsectiondotsep} 
		\hyperpage{55}
  \item pre-normally hyperbolic\cftdotfill{\cftsectiondotsep} 
		\hyperpage{18}
  \item pressure\cftdotfill{\cftsectiondotsep} \hyperpage{41}
    \subitem quantum $\sim $\cftdotfill{\cftsectiondotsep} 
		\hyperpage{135}
  \item pressure-free matter\cftdotfill{\cftsectiondotsep} 
		\hyperpage{41}
  \item principal atom of a permutation\cftdotfill{\cftsectiondotsep} 
		\hyperpage{85}
  \item principal symbol\cftdotfill{\cftsectiondotsep} \hyperpage{18}
  \item principle
    \subitem $\sim $ of causality\cftdotfill{\cftsectiondotsep} 
		\hyperpage{105}, \hyperpage{107}
    \subitem $\sim $ of covariance\cftdotfill{\cftsectiondotsep} 
		\hyperpage{106}
    \subitem $\sim $ of locality\cftdotfill{\cftsectiondotsep} 
		\hyperpage{105, 106}
    \subitem timeslice axiom\cftdotfill{\cftsectiondotsep} 
		\hyperpage{107}
  \item Proca field\cftdotfill{\cftsectiondotsep} \hyperpage{113}
    \subitem field equation\cftdotfill{\cftsectiondotsep} 
		\hyperpage{113}
  \item product topology\cftdotfill{\cftsectiondotsep} \hyperpage{53}
  \item projected symmetric trace-free part\cftdotfill{\cftsectiondotsep} 
		\hyperpage{38}
  \item projective tensor product\cftdotfill{\cftsectiondotsep} 
		\hyperpage{57}
  \item propagation of singularities\cftdotfill{\cftsectiondotsep} 
		\hyperpage{78}
  \item propagator
    \subitem advanced $\sim $\cftdotfill{\cftsectiondotsep} 
		\hyperpage{79}
    \subitem causal $\sim $\cftdotfill{\cftsectiondotsep} 
		\hyperpage{80}
    \subitem parallel $\sim $\cftdotfill{\cftsectiondotsep} 
		\hyperpage{29}
    \subitem retarded $\sim $\cftdotfill{\cftsectiondotsep} 
		\hyperpage{79}
  \item proper map\cftdotfill{\cftsectiondotsep} \hyperpage{54}
  \item proper time\cftdotfill{\cftsectiondotsep} \hyperpage{36}
  \item properly supported kernel\cftdotfill{\cftsectiondotsep} 
		\hyperpage{71}
  \item pseudo-Riemannian manifold\cftdotfill{\cftsectiondotsep} 
		\hyperpage{15}
    \subitem conformal symmetry of a $\sim $\cftdotfill{\cftsectiondotsep} 
		\hyperpage{23}
    \subitem symmetry of a $\sim $\cftdotfill{\cftsectiondotsep} 
		\hyperpage{23}
  \item pseudometric\cftdotfill{\cftsectiondotsep} 
		\seealso{metric}{title-1}, \hyperpage{53}
  \item pseudometric space\cftdotfill{\cftsectiondotsep} 
		\seealso{metric space}{title-1}, \hyperpage{53}
    \subitem bounded $\sim $\cftdotfill{\cftsectiondotsep} 
		\hyperpage{54}
    \subitem complete $\sim $\cftdotfill{\cftsectiondotsep} 
		\hyperpage{54}
    \subitem completion of a $\sim $\cftdotfill{\cftsectiondotsep} 
		\hyperpage{54}
  \item pullback
    \subitem $\sim $ of a connection\cftdotfill{\cftsectiondotsep} 
		\hyperpage{20}
    \subitem $\sim $ of a covector field\cftdotfill{\cftsectiondotsep} 
		\hyperpage{13}
    \subitem $\sim $ of a distribution\cftdotfill{\cftsectiondotsep} 
		\hyperpage{75}
    \subitem $\sim $ of a section\cftdotfill{\cftsectiondotsep} 
		\hyperpage{10}
    \subitem $\sim $ of a vector bundle\cftdotfill{\cftsectiondotsep} 
		\hyperpage{10}
    \subitem $\sim $ of a vector field\cftdotfill{\cftsectiondotsep} 
		\hyperpage{12}
  \item pure state\cftdotfill{\cftsectiondotsep} \hyperpage{60}
  \item pushforward
    \subitem $\sim $ of a covector field\cftdotfill{\cftsectiondotsep} 
		\hyperpage{13}
    \subitem $\sim $ of a section\cftdotfill{\cftsectiondotsep} 
		\hyperpage{10}
    \subitem $\sim $ of a vector field\cftdotfill{\cftsectiondotsep} 
		\hyperpage{12}

  \indexspace
{\bfseries\sffamily Q}\nopagebreak

  \item quantum energy-density\cftdotfill{\cftsectiondotsep} 
		\hyperpage{135}
  \item quantum field\cftdotfill{\cftsectiondotsep} \hyperpage{109}, 
		\hyperpage{111}
  \item quantum field theory
    \subitem algebraic $\sim $\cftdotfill{\cftsectiondotsep} 
		\hyperpage{105}
    \subitem Lagrangian $\sim $\cftdotfill{\cftsectiondotsep} 
		\hyperpage{104}
    \subitem locally covariant $\sim $\cftdotfill{\cftsectiondotsep} 
		\hyperpage{107}
  \item quantum pressure\cftdotfill{\cftsectiondotsep} \hyperpage{135}
  \item quantum state\cftdotfill{\cftsectiondotsep} 
		\see{state}{title-1}
  \item quantum stress-energy tensor\cftdotfill{\cftsectiondotsep} 
		\hyperpage{132}
  \item quasi-free state\cftdotfill{\cftsectiondotsep} \hyperpage{116}
  \item quotient bundle\cftdotfill{\cftsectiondotsep} \hyperpage{10}
  \item quotient topology\cftdotfill{\cftsectiondotsep} \hyperpage{53}

  \indexspace
{\bfseries\sffamily R}\nopagebreak

  \item rapidly decreasing function\cftdotfill{\cftsectiondotsep} 
		\hyperpage{69}
  \item Raychaudhuri
    \subitem $\sim $ equation\cftdotfill{\cftsectiondotsep} 
		\hyperpage{43}
    \subitem $\sim $ scalar\cftdotfill{\cftsectiondotsep} 
		\hyperpage{38}
  \item regular ${}^*$-representation\cftdotfill{\cftsectiondotsep} 
		\hyperpage{62}
  \item regular direction\cftdotfill{\cftsectiondotsep} \hyperpage{73}
  \item regular kernel\cftdotfill{\cftsectiondotsep} \hyperpage{71}
  \item regularity of a fixed-point\cftdotfill{\cftsectiondotsep} 
		\hyperpage{66}
  \item regularization
    \subitem adiabatic $\sim $\cftdotfill{\cftsectiondotsep} 
		\hyperpage{136--137}, \hyperpage{142--146}
    \subitem Hadamard point-splitting $\sim $\cftdotfill{\cftsectiondotsep} 
		\hyperpage{133}, \hyperpage{136--137}
  \item relative Cauchy evolution\cftdotfill{\cftsectiondotsep} 
		\hyperpage{107}
  \item renormalization of the stress-energy tensor\cftdotfill{\cftsectiondotsep} 
		\hyperpage{132}
  \item rest space\cftdotfill{\cftsectiondotsep} \hyperpage{37}
  \item restriction of a
    \subitem $\sim $ distribution\cftdotfill{\cftsectiondotsep} 
		\hyperpage{69}
    \subitem $\sim $ distributional section\cftdotfill{\cftsectiondotsep} 
		\hyperpage{70}
  \item retarded propagator\cftdotfill{\cftsectiondotsep} 
		\hyperpage{79}
  \item Ricci
    \subitem $\sim $ curvature scalar\cftdotfill{\cftsectiondotsep} 
		\hyperpage{21}
    \subitem $\sim $ curvature tensor\cftdotfill{\cftsectiondotsep} 
		\hyperpage{21}
  \item Riemann curvature tensor\cftdotfill{\cftsectiondotsep} 
		\hyperpage{21}
  \item Riemannian
    \subitem $\sim $ manifold\cftdotfill{\cftsectiondotsep} 
		\hyperpage{16}
    \subitem $\sim $ metric\cftdotfill{\cftsectiondotsep} 
		\hyperpage{16}
  \item rising atomic permutation\cftdotfill{\cftsectiondotsep} 
		\hyperpage{85}
  \item run\cftdotfill{\cftsectiondotsep} \hyperpage{86}
  \item run structure\cftdotfill{\cftsectiondotsep} \hyperpage{86}
    \subitem $\sim $ of atomic permutation\cftdotfill{\cftsectiondotsep} 
		\hyperpage{87--91}
    \subitem $\sim $ of circular permutation\cftdotfill{\cftsectiondotsep} 
		\hyperpage{91--93}
    \subitem $\sim $ of linear permutation\cftdotfill{\cftsectiondotsep} 
		\hyperpage{93}

  \indexspace
{\bfseries\sffamily S}\nopagebreak

  \item scalar field\cftdotfill{\cftsectiondotsep} 
		\see{Klein--Gordon field}{title-1}
  \item scalar perturbation\cftdotfill{\cftsectiondotsep} 
		\hyperpage{49}
  \item Schwartz distribution\cftdotfill{\cftsectiondotsep} 
		\see{tempered distribution}{title-1}
  \item Schwartz kernel theorem\cftdotfill{\cftsectiondotsep} 
		\hyperpage{70}
  \item secant numbers\cftdotfill{\cftsectiondotsep} \hyperpage{91}
  \item second Bianchi identity\cftdotfill{\cftsectiondotsep} 
		\hyperpage{21}
  \item second Friedmann equation\cftdotfill{\cftsectiondotsep} 
		\hyperpage{46}, \hyperpage{134}
  \item second-countable\cftdotfill{\cftsectiondotsep} \hyperpage{52}
  \item section\cftdotfill{\cftsectiondotsep} \hyperpage{10}
    \subitem components of a $\sim $\cftdotfill{\cftsectiondotsep} 
		\hyperpage{17}
    \subitem local $\sim $\cftdotfill{\cftsectiondotsep} \hyperpage{10}
    \subitem pullback of a $\sim $\cftdotfill{\cftsectiondotsep} 
		\hyperpage{10}
    \subitem pushforward of a $\sim $\cftdotfill{\cftsectiondotsep} 
		\hyperpage{10}
    \subitem test $\sim $\cftdotfill{\cftsectiondotsep} \hyperpage{70}
  \item self-adjoint element\cftdotfill{\cftsectiondotsep} 
		\hyperpage{59}
  \item semiclassical Einstein equation\cftdotfill{\cftsectiondotsep} 
		\hyperpage{131--137}
    \subitem solution of the $\sim $\cftdotfill{\cftsectiondotsep} 
		\hyperpage{139--151}
      \subsubitem $\sim $ global\cftdotfill{\cftsectiondotsep} 
		\hyperpage{148--150}
      \subsubitem $\sim $ local\cftdotfill{\cftsectiondotsep} 
		\hyperpage{146--148}
      \subsubitem $\sim $ maximal\cftdotfill{\cftsectiondotsep} 
		\hyperpage{149}
      \subsubitem $\sim $ numerical\cftdotfill{\cftsectiondotsep} 
		\hyperpage{150}
      \subsubitem $\sim $ regular\cftdotfill{\cftsectiondotsep} 
		\hyperpage{148}
  \item semiclassical Friedmann equations\cftdotfill{\cftsectiondotsep} 
		\hyperpage{134--137}, \hyperpage{140}
  \item seminorm\cftdotfill{\cftsectiondotsep} \seealso{norm}{title-1}, 
		\hyperpage{54}
    \subitem dual $\sim $\cftdotfill{\cftsectiondotsep} \hyperpage{57}
    \subitem family of $\sim $\cftdotfill{\cftsectiondotsep} 
		\hyperpage{55}
    \subitem separating $\sim $\cftdotfill{\cftsectiondotsep} 
		\hyperpage{55}
  \item semiregular kernel\cftdotfill{\cftsectiondotsep} \hyperpage{71}
  \item separating seminorm\cftdotfill{\cftsectiondotsep} 
		\hyperpage{55}
  \item set
    \subitem closed $\sim $\cftdotfill{\cftsectiondotsep} 
		\hyperpage{52}
    \subitem closure of a $\sim $\cftdotfill{\cftsectiondotsep} 
		\hyperpage{52}
    \subitem connected $\sim $\cftdotfill{\cftsectiondotsep} 
		\hyperpage{52}
    \subitem dense $\sim $\cftdotfill{\cftsectiondotsep} \hyperpage{52}
    \subitem open $\sim $\cftdotfill{\cftsectiondotsep} \hyperpage{52}
  \item shadow, causal $\sim $\cftdotfill{\cftsectiondotsep} 
		\hyperpage{36}
  \item sharp $\sharp $\cftdotfill{\cftsectiondotsep} \hyperpage{15}
  \item shear tensor\cftdotfill{\cftsectiondotsep} \hyperpage{38}
  \item $\sigma $-compact topological space\cftdotfill{\cftsectiondotsep} 
		\hyperpage{54}
  \item simple ${}^*$-algebra\cftdotfill{\cftsectiondotsep} 
		\hyperpage{62}
  \item singular direction\cftdotfill{\cftsectiondotsep} \hyperpage{74}
    \subitem set of $\sim $\cftdotfill{\cftsectiondotsep} 
		\hyperpage{74}
    \subitem set of localized $\sim $\cftdotfill{\cftsectiondotsep} 
		\hyperpage{74}
  \item singular support\cftdotfill{\cftsectiondotsep} \hyperpage{73}
  \item smooth
    \subitem $\sim $ distribution\cftdotfill{\cftsectiondotsep} 
		\hyperpage{73}
  \item smooth atlas\cftdotfill{\cftsectiondotsep} \hyperpage{8}
  \item smooth functions\cftdotfill{\cftsectiondotsep} \hyperpage{68}
  \item smooth manifold\cftdotfill{\cftsectiondotsep} \hyperpage{8}
  \item smooth sections\cftdotfill{\cftsectiondotsep} \hyperpage{69}
  \item smoothly compatible charts\cftdotfill{\cftsectiondotsep} 
		\hyperpage{8}
  \item space
    \subitem Euclidean $\sim $\cftdotfill{\cftsectiondotsep} 
		\hyperpage{16}
    \subitem Hausdorff $\sim $\cftdotfill{\cftsectiondotsep} 
		\hyperpage{52}
    \subitem locally convex $\sim $\cftdotfill{\cftsectiondotsep} 
		\hyperpage{55}, \hyperpage{64}
    \subitem nuclear $\sim $\cftdotfill{\cftsectiondotsep} 
		\hyperpage{58}
    \subitem $\sim $ of compactly supported distributional sections\cftdotfill{\cftsectiondotsep} 
		\hyperpage{70}
    \subitem $\sim $ of compactly supported distributions\cftdotfill{\cftsectiondotsep} 
		\hyperpage{69}
    \subitem $\sim $ of continuous functions\cftdotfill{\cftsectiondotsep} 
		\hyperpage{53}
    \subitem $\sim $ of distributional sections\cftdotfill{\cftsectiondotsep} 
		\hyperpage{70}
    \subitem $\sim $ of distributions\cftdotfill{\cftsectiondotsep} 
		\hyperpage{69}
    \subitem $\sim $ of functions\cftdotfill{\cftsectiondotsep} 
		\hyperpage{55}
      \subsubitem topologies\cftdotfill{\cftsectiondotsep} 
		\hyperpage{55--56}
    \subitem $\sim $ of $L^p$ functions\cftdotfill{\cftsectiondotsep} 
		\hyperpage{71}
    \subitem $\sim $ of rapidly decreasing functions\cftdotfill{\cftsectiondotsep} 
		\hyperpage{69}
    \subitem $\sim $ of smooth functions\cftdotfill{\cftsectiondotsep} 
		\hyperpage{68}
    \subitem $\sim $ of smooth sections\cftdotfill{\cftsectiondotsep} 
		\hyperpage{69}
    \subitem $\sim $ of tempered distributions\cftdotfill{\cftsectiondotsep} 
		\hyperpage{69}
    \subitem $\sim $ of test functions\cftdotfill{\cftsectiondotsep} 
		\hyperpage{69}
    \subitem $\sim $ of test sections\cftdotfill{\cftsectiondotsep} 
		\hyperpage{70}
    \subitem total $\sim $\cftdotfill{\cftsectiondotsep} \hyperpage{9}
  \item spacelike\cftdotfill{\cftsectiondotsep} \hyperpage{36}
  \item spacetime\cftdotfill{\cftsectiondotsep} \hyperpage{39}
    \subitem anti-de Sitter $\sim $\cftdotfill{\cftsectiondotsep} 
		\hyperpage{43}
    \subitem background $\sim $\cftdotfill{\cftsectiondotsep} 
		\hyperpage{47}, \hyperpage{155}
    \subitem cosmological $\sim $\cftdotfill{\cftsectiondotsep} 
		\see{FLRW spacetime}{title-1}
    \subitem de Sitter $\sim $\cftdotfill{\cftsectiondotsep} 
		\hyperpage{43}
    \subitem $\sim $ FLRW\cftdotfill{\cftsectiondotsep} \hyperpage{45}
    \subitem globally hyperbolic $\sim $\cftdotfill{\cftsectiondotsep} 
		\hyperpage{40}
    \subitem homogeneous and isotropic $\sim $\cftdotfill{\cftsectiondotsep} 
		\see{FLRW spacetime}{title-1}
    \subitem Minkowski $\sim $\cftdotfill{\cftsectiondotsep} 
		\hyperpage{16}, \hyperpage{43}
    \subitem physical $\sim $\cftdotfill{\cftsectiondotsep} 
		\hyperpage{47}
  \item spatial covariant derivative\cftdotfill{\cftsectiondotsep} 
		\hyperpage{38}
  \item square-integrable\cftdotfill{\cftsectiondotsep} \hyperpage{71}
  \item state\cftdotfill{\cftsectiondotsep} \hyperpage{60}, 
		\hyperpage{115}
    \subitem adiabatic $\sim $\cftdotfill{\cftsectiondotsep} 
		\hyperpage{119}, \hyperpage{122--127}
    \subitem Bunch--Davies $\sim $\cftdotfill{\cftsectiondotsep} 
		\hyperpage{119}
    \subitem Gaussian $\sim $\cftdotfill{\cftsectiondotsep} 
		\see{quasi-free state}{title-1}
    \subitem Hadamard $\sim $\cftdotfill{\cftsectiondotsep} 
		\hyperpage{117}
    \subitem homogeneous and isotropic $\sim $\cftdotfill{\cftsectiondotsep} 
		\hyperpage{121}
    \subitem Minkowski vacuum $\sim $\cftdotfill{\cftsectiondotsep} 
		\hyperpage{121}
    \subitem mixed $\sim $\cftdotfill{\cftsectiondotsep} \hyperpage{60}
    \subitem $\sim $ of low energy\cftdotfill{\cftsectiondotsep} 
		\hyperpage{125--127}
    \subitem pure $\sim $\cftdotfill{\cftsectiondotsep} \hyperpage{60}
    \subitem quasi-free $\sim $\cftdotfill{\cftsectiondotsep} 
		\hyperpage{116}
    \subitem strongly regular $\sim $\cftdotfill{\cftsectiondotsep} 
		\hyperpage{62}
  \item Stokes' theorem\cftdotfill{\cftsectiondotsep} \hyperpage{25}
  \item stress-energy tensor\cftdotfill{\cftsectiondotsep} 
		\hyperpage{40}, \hyperpage{131--134}
    \subitem $\sim $ of the Klein--Gordon field\cftdotfill{\cftsectiondotsep} 
		\hyperpage{132}
    \subitem quantum $\sim $\cftdotfill{\cftsectiondotsep} 
		\hyperpage{132}
    \subitem renormalization of the $\sim $\cftdotfill{\cftsectiondotsep} 
		\hyperpage{132}
    \subitem trace of $\sim $\cftdotfill{\cftsectiondotsep} 
		\hyperpage{41}, \hyperpage{134}, \hyperpage{140}
  \item strong
    \subitem $\sim $ dual\cftdotfill{\cftsectiondotsep} \hyperpage{57}
    \subitem $\sim $ topology\cftdotfill{\cftsectiondotsep} 
		\hyperpage{57}
  \item strongly causal
    \subitem $\sim $ at a point\cftdotfill{\cftsectiondotsep} 
		\hyperpage{40}
    \subitem $\sim $ manifold\cftdotfill{\cftsectiondotsep} 
		\hyperpage{40}
  \item strongly cyclic ${}^*$-representation\cftdotfill{\cftsectiondotsep} 
		\hyperpage{60}
  \item strongly regular state\cftdotfill{\cftsectiondotsep} 
		\hyperpage{62}
  \item submanifold
    \subitem embedding of a $\sim $\cftdotfill{\cftsectiondotsep} 
		\hyperpage{12}
    \subitem immersion of a $\sim $\cftdotfill{\cftsectiondotsep} 
		\hyperpage{12}
    \subitem tangent space of a $\sim $\cftdotfill{\cftsectiondotsep} 
		\hyperpage{12}
  \item subspace topology\cftdotfill{\cftsectiondotsep} \hyperpage{53}
  \item summable sequence\cftdotfill{\cftsectiondotsep} \hyperpage{58}
    \subitem absolutely $\sim $\cftdotfill{\cftsectiondotsep} 
		\hyperpage{58}
  \item summation convention\cftdotfill{\cftsectiondotsep} 
		\hyperpage{17}
  \item support
    \subitem $\sim $ of a distribution\cftdotfill{\cftsectiondotsep} 
		\hyperpage{69}
    \subitem singular $\sim $\cftdotfill{\cftsectiondotsep} 
		\hyperpage{73}
  \item symmetric tensor product\cftdotfill{\cftsectiondotsep} 
		\hyperpage{14}
  \item symmetry of a manifold\cftdotfill{\cftsectiondotsep} 
		\hyperpage{23}
    \subitem $\sim $ conformal\cftdotfill{\cftsectiondotsep} 
		\hyperpage{23}
  \item Synge bracket\cftdotfill{\cftsectiondotsep} \hyperpage{27}
  \item Synge's rule\cftdotfill{\cftsectiondotsep} \hyperpage{27}
  \item Synge's world function\cftdotfill{\cftsectiondotsep} 
		\hyperpage{28}
    \subitem transport equation of $\sim $\cftdotfill{\cftsectiondotsep} 
		\hyperpage{28}

  \indexspace
{\bfseries\sffamily T}\nopagebreak

  \item tangent bundle\cftdotfill{\cftsectiondotsep} \hyperpage{11}
    \subitem dual of the $\sim $\cftdotfill{\cftsectiondotsep} 
		\hyperpage{13}
  \item tangent map\cftdotfill{\cftsectiondotsep} \hyperpage{11}
    \subitem global $\sim $\cftdotfill{\cftsectiondotsep} 
		\hyperpage{11}
  \item tangent numbers\cftdotfill{\cftsectiondotsep} \hyperpage{93}
  \item tangent space\cftdotfill{\cftsectiondotsep} \hyperpage{11}
    \subitem dual of the $\sim $\cftdotfill{\cftsectiondotsep} 
		\hyperpage{13}
    \subitem $\sim $ of a submanifold\cftdotfill{\cftsectiondotsep} 
		\hyperpage{12}
  \item tempered distribution\cftdotfill{\cftsectiondotsep} 
		\hyperpage{69}
  \item temporal covariant derivative\cftdotfill{\cftsectiondotsep} 
		\hyperpage{38}
  \item tensor field\cftdotfill{\cftsectiondotsep} \hyperpage{14}
    \subitem components of a $\sim $\cftdotfill{\cftsectiondotsep} 
		\hyperpage{17}
    \subitem contravariant $\sim $\cftdotfill{\cftsectiondotsep} 
		\hyperpage{14}
    \subitem covariant $\sim $\cftdotfill{\cftsectiondotsep} 
		\hyperpage{14}
    \subitem decomposition of a $\sim $\cftdotfill{\cftsectiondotsep} 
		\hyperpage{48}
    \subitem gauge invariant $\sim $\cftdotfill{\cftsectiondotsep} 
		\hyperpage{47}
    \subitem projected symmetric trace-free part of a $\sim $\cftdotfill{\cftsectiondotsep} 
		\hyperpage{38}
  \item tensor perturbation\cftdotfill{\cftsectiondotsep} 
		\hyperpage{49}
  \item tensor product\cftdotfill{\cftsectiondotsep} \hyperpage{14}
    \subitem antisymmetric $\sim $\cftdotfill{\cftsectiondotsep} 
		\hyperpage{14}
    \subitem exterior $\sim $\cftdotfill{\cftsectiondotsep} 
		\hyperpage{15}
    \subitem injective $\sim $ topology\cftdotfill{\cftsectiondotsep} 
		\hyperpage{58}
    \subitem projective $\sim $ topology\cftdotfill{\cftsectiondotsep} 
		\hyperpage{57}
    \subitem symmetric $\sim $\cftdotfill{\cftsectiondotsep} 
		\hyperpage{14}
  \item test function\cftdotfill{\cftsectiondotsep} \hyperpage{69}
    \subitem $\sim $ topology\cftdotfill{\cftsectiondotsep} 
		\hyperpage{69}
  \item test section\cftdotfill{\cftsectiondotsep} \hyperpage{70}
  \item time
    \subitem conformal $\sim $\cftdotfill{\cftsectiondotsep} 
		\hyperpage{47}
    \subitem cosmological $\sim $\cftdotfill{\cftsectiondotsep} 
		\hyperpage{46}
    \subitem $\sim $ function\cftdotfill{\cftsectiondotsep} 
		\hyperpage{36}
    \subitem -orientation\cftdotfill{\cftsectiondotsep} \hyperpage{36}
    \subitem proper $\sim $\cftdotfill{\cftsectiondotsep} 
		\hyperpage{36}
  \item timelike\cftdotfill{\cftsectiondotsep} \hyperpage{36}
  \item timeslice axiom\cftdotfill{\cftsectiondotsep} \hyperpage{107}
  \item topological
    \subitem $\sim $ $*$-algebra\cftdotfill{\cftsectiondotsep} 
		\see{$*$-algebra}{title-1}
    \subitem $\sim $ basis\cftdotfill{\cftsectiondotsep} \hyperpage{52}
    \subitem $\sim $ dual\cftdotfill{\cftsectiondotsep} \hyperpage{56}
    \subitem $\sim $ local
      \subsubitem $\sim $ topological local basis\cftdotfill{\cftsectiondotsep} 
		\hyperpage{52}
    \subitem $\sim $ manifold\cftdotfill{\cftsectiondotsep} 
		\hyperpage{8}
  \item topological space\cftdotfill{\cftsectiondotsep} \hyperpage{52}
    \subitem compact $\sim $\cftdotfill{\cftsectiondotsep} 
		\hyperpage{54}
    \subitem locally compact $\sim $\cftdotfill{\cftsectiondotsep} 
		\hyperpage{54}
    \subitem $\sigma $-compact $\sim $\cftdotfill{\cftsectiondotsep} 
		\hyperpage{54}
  \item topological vector space\cftdotfill{\cftsectiondotsep} 
		\hyperpage{54}
    \subitem convex $\sim $\cftdotfill{\cftsectiondotsep} 
		\hyperpage{54}
    \subitem locally convex $\sim $\cftdotfill{\cftsectiondotsep} 
		\hyperpage{55}, \hyperpage{64}
  \item topology\cftdotfill{\cftsectiondotsep} \hyperpage{52--58}
    \subitem coarser $\sim $\cftdotfill{\cftsectiondotsep} 
		\hyperpage{52}
    \subitem compact-open $\sim $\cftdotfill{\cftsectiondotsep} 
		\hyperpage{56}
    \subitem direct sum $\sim $\cftdotfill{\cftsectiondotsep} 
		\hyperpage{53}
    \subitem discrete $\sim $\cftdotfill{\cftsectiondotsep} 
		\hyperpage{52}
    \subitem final $\sim $\cftdotfill{\cftsectiondotsep} \hyperpage{53}
    \subitem finer $\sim $\cftdotfill{\cftsectiondotsep} \hyperpage{52}
    \subitem graph $\sim $\cftdotfill{\cftsectiondotsep} \hyperpage{60}
    \subitem H\"ormander $\sim $\cftdotfill{\cftsectiondotsep} 
		\hyperpage{74}
    \subitem initial $\sim $\cftdotfill{\cftsectiondotsep} 
		\hyperpage{53}
    \subitem injective tensor product $\sim $\cftdotfill{\cftsectiondotsep} 
		\hyperpage{58}
    \subitem Mackey $\sim $\cftdotfill{\cftsectiondotsep} 
		\hyperpage{57}
    \subitem normal $\sim $\cftdotfill{\cftsectiondotsep} 
		\hyperpage{74}
    \subitem $\sim $ of pointwise convergence\cftdotfill{\cftsectiondotsep} 
		\hyperpage{55}
    \subitem $\sim $ of test functions\cftdotfill{\cftsectiondotsep} 
		\hyperpage{69}
    \subitem $\sim $ of uniform convergence\cftdotfill{\cftsectiondotsep} 
		\hyperpage{56}
      \subsubitem $\sim $ in compacta\cftdotfill{\cftsectiondotsep} 
		\hyperpage{56}
    \subitem product $\sim $\cftdotfill{\cftsectiondotsep} 
		\hyperpage{53}
    \subitem projective tensor product $\sim $\cftdotfill{\cftsectiondotsep} 
		\hyperpage{57}
    \subitem quotient $\sim $\cftdotfill{\cftsectiondotsep} 
		\hyperpage{53}
    \subitem strong $\sim $\cftdotfill{\cftsectiondotsep} 
		\hyperpage{57}
    \subitem subspace $\sim $\cftdotfill{\cftsectiondotsep} 
		\hyperpage{53}
    \subitem trivial $\sim $\cftdotfill{\cftsectiondotsep} 
		\hyperpage{52}
    \subitem uniform operator $\sim $\cftdotfill{\cftsectiondotsep} 
		\hyperpage{60}
    \subitem weak $\sim $\cftdotfill{\cftsectiondotsep} \hyperpage{57}
  \item torsion-free connection\cftdotfill{\cftsectiondotsep} 
		\hyperpage{19}
  \item total space\cftdotfill{\cftsectiondotsep} \hyperpage{9}
  \item total symbol\cftdotfill{\cftsectiondotsep} \hyperpage{18}
  \item trace anomaly\cftdotfill{\cftsectiondotsep} \hyperpage{134}
  \item transition map\cftdotfill{\cftsectiondotsep} \hyperpage{8}
  \item transport
    \subitem $\sim $ operator\cftdotfill{\cftsectiondotsep} 
		\hyperpage{30}
    \subitem parallel $\sim $\cftdotfill{\cftsectiondotsep} 
		\hyperpage{21}
  \item transport equation\cftdotfill{\cftsectiondotsep} \hyperpage{30}
    \subitem $\sim $ of Synge's world function\cftdotfill{\cftsectiondotsep} 
		\hyperpage{28}
    \subitem $\sim $ of the parallel propagator\cftdotfill{\cftsectiondotsep} 
		\hyperpage{29}
    \subitem $\sim $ of the van Vleck--Morette det\cftdotfill{\cftsectiondotsep} 
		\hyperpage{29}
  \item trivial topology\cftdotfill{\cftsectiondotsep} \hyperpage{52}
  \item trivialization
    \subitem global $\sim $\cftdotfill{\cftsectiondotsep} \hyperpage{9}
    \subitem local $\sim $\cftdotfill{\cftsectiondotsep} \hyperpage{9}
  \item truncated $n$-point distribution\cftdotfill{\cftsectiondotsep} 
		\hyperpage{116}
  \item two-line permutation notation\cftdotfill{\cftsectiondotsep} 
		\hyperpage{83}

  \indexspace
{\bfseries\sffamily U}\nopagebreak

  \item uniform convergence\cftdotfill{\cftsectiondotsep} 
		\hyperpage{56}
    \subitem $\sim $ in compacta\cftdotfill{\cftsectiondotsep} 
		\hyperpage{56}
  \item uniform operator topology\cftdotfill{\cftsectiondotsep} 
		\hyperpage{60}
  \item uniformly continuous\cftdotfill{\cftsectiondotsep} 
		\hyperpage{54}
  \item uniqueness of a fixed-point\cftdotfill{\cftsectiondotsep} 
		\hyperpage{66}
  \item unital ${}^*$-algebra\cftdotfill{\cftsectiondotsep} 
		\hyperpage{59}
  \item unitary element\cftdotfill{\cftsectiondotsep} \hyperpage{59}
  \item unitary frame\cftdotfill{\cftsectiondotsep} \hyperpage{17}

  \indexspace
{\bfseries\sffamily V}\nopagebreak

  \item valley of a permutation\cftdotfill{\cftsectiondotsep} 
		\hyperpage{86}, \hyperpage{95--97}
  \item van Vleck--Morette determinant\cftdotfill{\cftsectiondotsep} 
		\hyperpage{28}
    \subitem transport equation of the $\sim $\cftdotfill{\cftsectiondotsep} 
		\hyperpage{29}
  \item vector\cftdotfill{\cftsectiondotsep} \hyperpage{11}
    \subitem coordinate $\sim $\cftdotfill{\cftsectiondotsep} 
		\hyperpage{17}
    \subitem cyclic $\sim $\cftdotfill{\cftsectiondotsep} 
		\hyperpage{60}
  \item vector bundle\cftdotfill{\cftsectiondotsep} \hyperpage{9}
    \subitem antisymm. tensor product of a $\sim $\cftdotfill{\cftsectiondotsep} 
		\hyperpage{14}
    \subitem dual of a $\sim $\cftdotfill{\cftsectiondotsep} 
		\hyperpage{15}
    \subitem exterior tensor product of a $\sim $\cftdotfill{\cftsectiondotsep} 
		\hyperpage{15}
    \subitem fibre of a $\sim $\cftdotfill{\cftsectiondotsep} 
		\hyperpage{9}
    \subitem $\sim $ homomorphism\cftdotfill{\cftsectiondotsep} 
		\hyperpage{9}
    \subitem metric on a $\sim $\cftdotfill{\cftsectiondotsep} 
		\hyperpage{15}
    \subitem $\sim $ projection\cftdotfill{\cftsectiondotsep} 
		\hyperpage{9}
    \subitem pullback of a $\sim $\cftdotfill{\cftsectiondotsep} 
		\hyperpage{10}
    \subitem quotient of a $\sim $\cftdotfill{\cftsectiondotsep} 
		\hyperpage{10}
    \subitem subbundle of a $\sim $\cftdotfill{\cftsectiondotsep} 
		\hyperpage{10}
    \subitem symmetric tensor product of a $\sim $\cftdotfill{\cftsectiondotsep} 
		\hyperpage{14}
    \subitem tensor product of a $\sim $\cftdotfill{\cftsectiondotsep} 
		\hyperpage{14}
  \item vector field\cftdotfill{\cftsectiondotsep} \hyperpage{11}
    \subitem complete $\sim $\cftdotfill{\cftsectiondotsep} 
		\hyperpage{13}
    \subitem components of a $\sim $\cftdotfill{\cftsectiondotsep} 
		\hyperpage{17}
    \subitem conformal Killing $\sim $\cftdotfill{\cftsectiondotsep} 
		\hyperpage{22}
    \subitem flow of a $\sim $\cftdotfill{\cftsectiondotsep} 
		\hyperpage{13}
    \subitem Killing $\sim $\cftdotfill{\cftsectiondotsep} 
		\hyperpage{22}
    \subitem local flow of a $\sim $\cftdotfill{\cftsectiondotsep} 
		\hyperpage{13}
    \subitem pullback of a $\sim $\cftdotfill{\cftsectiondotsep} 
		\hyperpage{12}
    \subitem pushforward of a $\sim $\cftdotfill{\cftsectiondotsep} 
		\hyperpage{12}
  \item vector perturbation\cftdotfill{\cftsectiondotsep} 
		\hyperpage{49}
  \item vector subbundle\cftdotfill{\cftsectiondotsep} \hyperpage{10}
  \item velocity of a curve\cftdotfill{\cftsectiondotsep} 
		\hyperpage{12}
  \item volume form\cftdotfill{\cftsectiondotsep} \hyperpage{24}
    \subitem $\sim $ induced by a metric\cftdotfill{\cftsectiondotsep} 
		\hyperpage{24}
  \item vorticity tensor\cftdotfill{\cftsectiondotsep} \hyperpage{38}

  \indexspace
{\bfseries\sffamily W}\nopagebreak

  \item wave
    \subitem $\sim $ equation\cftdotfill{\cftsectiondotsep} 
		\hyperpage{78}
    \subitem $\sim $ operator\cftdotfill{\cftsectiondotsep} 
		\hyperpage{18}
  \item wavefront set
    \subitem $\sim $ of a distribution\cftdotfill{\cftsectiondotsep} 
		\hyperpage{74}
    \subitem $\sim $ of a distributional section\cftdotfill{\cftsectiondotsep} 
		\hyperpage{76}
  \item weak
    \subitem $\sim $ dual\cftdotfill{\cftsectiondotsep} \hyperpage{57}
    \subitem $\sim $ topology\cftdotfill{\cftsectiondotsep} 
		\hyperpage{57}
  \item Weyl
    \subitem $\sim $ ${}^*$-algebra\cftdotfill{\cftsectiondotsep} 
		\hyperpage{61}
    \subitem $\sim $ $C^*$-algebra\cftdotfill{\cftsectiondotsep} 
		\hyperpage{61}
    \subitem $\sim $ generator\cftdotfill{\cftsectiondotsep} 
		\hyperpage{61}
  \item world function\cftdotfill{\cftsectiondotsep} \hyperpage{28}
    \subitem $\sim $ of de Sitter spacetime\cftdotfill{\cftsectiondotsep} 
		\hyperpage{44}
    \subitem $\sim $ of FLRW spacetime\cftdotfill{\cftsectiondotsep} 
		\hyperpage{32}
  \item world line\cftdotfill{\cftsectiondotsep} \hyperpage{36}
    \subitem acceleration\cftdotfill{\cftsectiondotsep} \hyperpage{38}
    \subitem expansion scalar\cftdotfill{\cftsectiondotsep} 
		\hyperpage{38}
    \subitem expansion tensor\cftdotfill{\cftsectiondotsep} 
		\hyperpage{38}
    \subitem rest space\cftdotfill{\cftsectiondotsep} \hyperpage{37}
    \subitem shear tensor\cftdotfill{\cftsectiondotsep} \hyperpage{38}
    \subitem vorticity tensor\cftdotfill{\cftsectiondotsep} 
		\hyperpage{38}

\end{theindex}

\end{document}